\numberwithin{equation}{section}
\newtheorem*{proposition*}{Proposition}
\newtheorem*{theorem*}{Theorem}
\newtheorem*{conjecture*}{Conjecture}
\newtheorem*{claim*}{Claim}
\newtheorem*{lemma*}{Lemma}
\newtheorem*{corollary*}{Corollary}
\newtheorem{theorem}{Theorem}[section]
\newtheorem{proposition}[theorem]{Proposition}
\newtheorem{lemma}[theorem]{Lemma}
\newtheorem{corollary}[theorem]{Corollary}
\newtheorem*{definition*}{Definition}
\newtheorem*{assumption*}{\mathcal{A}ssumption}
\newtheorem*{remark*}{Remark}
\newtheorem{remark}{Remark}[section]
\newcommand{\la}{\langle}
\newcommand{\ra}{\rangle}
\newcommand{\R}{\mathbb{R}}
\newcommand{\s}{\mathbb{S}}
\newcommand{\C}{\mathbb{C}}
\newcommand{\Z}{\mathbb{Z}}
\newcommand{\N}{\mathbb{N}}
\newcommand{\snabla}{\slashed{\nabla}}
\newcommand{\Sc}{\textrm{\ding{34}}}
\newcommand{\Lbar}{\underline{L}}
\newcommand{\biggg}{\bBigg@{4}}
\newcommand{\Biggg}{\bBigg@{5}}
\begin{document}

\title{Late-time asymptotics for geometric wave equations with inverse-square potentials}
\author{Dejan Gajic\thanks{dejan.gajic@uni-leipzig.de}}
\affil{\small Institut f\"ur Theoretische Physik, Universit\"at Leipzig, Br\"uderstrasse 16, 04103 Leipzig, Deutschland}
\date{}
\renewcommand\Affilfont{\itshape\small}
\maketitle
\begin{abstract}
We introduce a new, physical-space-based method for deriving the precise leading-order late-time behaviour of solutions to geometric wave equations on asymptotically flat spacetime backgrounds and apply it to the setting of wave equations with asymptotically inverse-square potentials on Schwarzschild black holes. This provides a useful toy model setting for introducing methods that are applicable to more general linear and nonlinear geometric wave equations, such as wave equations for electromagnetically charged scalar fields, wave equations on extremal Kerr black holes and geometric wave equations in even space dimensions, where existing proofs for deriving precise late-time asymptotics might not apply. The method we introduce relies on exploiting the spatial decay properties of time integrals of solutions to derive the existence and precise genericity properties of asymptotic late-time tails and obtain sharp, uniform decay estimates in time.
\end{abstract}
\tableofcontents

\section{Introduction}
The leading-order late-time behaviour of solutions to geometric wave equations of the form:
\begin{equation}
\label{eq:wavewithoutpot}
\square_{g_M}\phi=0,
\end{equation}
with $\square_{g_M}$ denoting the geometric wave operator with respect to the Schwarzschild spacetime metric with mass $M$:
\begin{equation}
\label{eq:schwmetrictr}
g_M=-\left(1-\frac{2M}{r}\right)dt^2+\left(1-\frac{2M}{r}\right)^{-1}dr^2+r^2(d\theta^2+\sin^2\theta d\varphi^2),
\end{equation}
is dictated by inverse-polynomially decaying tails with coefficients and rates that are connected to the existence of conserved charges \emph{at infinity}. These conserved charges are generalizations of the \emph{Newman--Penrose constants} introduced in \cite{NP1,np2}.\footnote{More precisely, the conserved charges are quantities that are evaluated at future null infinity and they are linear in the radiation field of $\phi$ and appropriate rescaled derivatives; see \S\ref{sec:compzeropot}.} In \cite{paper1,paper2, aagprice}, a physical-space based mechanism was developed for exploiting the conserved charges to obtain precise leading-order late-time asymptotics and demonstrate the existence of inverse-polynomially decaying tails (the corresponding decay rates are referred to as ``Price's law'', as they were first suggested in \cite{Price1972}). This mechanism has been shown to be robust under the replacement of $g_M$ with more general asymptotically flat spacetimes, like the Kerr black hole spacetimes \cite{aagkerr} and more general equations, like the Teukolsky equations of linearized gravity \cite{mazhang21, mazhang21b}. 

Besides elucidating the late-time properties of gravitational radiation at infinity in the setting of the Einstein equations of general relativity, understanding the existence and decay properties of late-time tails is necessary for determining the singularity properties in dynamical black hole interiors and addressing the \emph{strong cosmic censorship conjecture} in general relativity; see for example \cite{LukSbierski2016, LukOh2016a, LukOh2016b, sbierski22} and references therein.

The mechanism described above, which relies on the existence of conserved charges at infinity, still applies when we consider geometric wave equations with a potential $V$,
\begin{equation}
\label{eq:wavewithpot}
\square_{g_M}\phi=V\phi,
\end{equation}
where $V$ denotes a time-independent function satisfying $|V|\lesssim r^{-p}$ for $p>2$ with respect to a suitable radial coordinate $r$.

In the present paper, we consider instead potentials with $p=2$:
\begin{align}
\label{eq:wavewithinvsqpot}
\square_{g_M}\phi=&\:V_{\alpha}\phi,\\
\label{eq:scalecrit}
V_{\alpha}(r)\sim&\: \alpha r^{-2}\quad (r\to \infty),
\end{align}
with the additional global assumption that $V_{\alpha}(r)\geq -\frac{1}{4}r^{-2}$,\footnote{In fact, the methods of the present paper can be extended to a larger class of potentials $V$ with $\alpha>-\frac{1}{4}$ that do not need to satisfy $V(r)\geq -\frac{1}{4}r^{-2}$ globally; see assumption \ref{assm:B} in \S \ref{sec:classeq} for more details.} which requires in particular the restriction to $\alpha>-\frac{1}{4}$. Potentials satisfying \eqref{eq:scalecrit} are \emph{asymptotically scale-critical}.\footnote{Here, scale-critical means that the potential has the property that for $\phi$ solutions to  \eqref{eq:wavewithpot} the rescaled function $\phi_{\lambda}(t,r,\theta,\varphi)~=\lambda^{-2}\phi(\lambda t, \lambda r, \theta,\varphi)$ with $\lambda\in \R\setminus \{0\}$, is also a solution to \eqref{eq:wavewithpot}. The potential $V(r)=\alpha r^{-2}$ would therefore be exactly scale-critical. Inverse-square asymptotics for $V$ can also be thought of as ``critical'' in the following sense: when $\alpha<0$, the operator $\Delta-V$ on $\R^3$ has an infinite discrete spectrum if $V(r)\sim \alpha r^{-p}$ with $p<2$, whereas the discrete spectrum is finite if $p=2$ and $\alpha>-\frac{1}{4}$ or if $p>2$; see Theorem XIII.6 of \cite{reedsimoniv}.} We will moreover assume that $\alpha\neq l(l+1)$ for any $l\in \N_0$ (the case where $\alpha=l(l+1)$ is discussed in  \S \ref{sec:compzeropot}).

\textbf{For wave equations with asymptotically inverse-square potentials $V_{\alpha}$, the mechanism relying on the existence of conserved charges along future null infinity \emph{breaks down}.} The aim of the present paper is to develop a new physical-space based mechanism for deriving late-time asymptotics in a general setting of geometric wave equations on asymptotically flat spacetimes with inverse-square potentials that avoids the use of conserved charges at infinity. 

The study of late-time asymptotics for \eqref{eq:wavewithinvsqpot} is motivated by the wave equation for electromagnetically charged scalar fields (on black hole spacetimes), as well as the wave equation on extremal Kerr black hole spacetimes and wave equations on asymptotically flat spacetimes in even space dimensions. In all of these settings, we encounter analogous difficulties to those faced in the study of  \eqref{eq:wavewithinvsqpot} when trying to derive the leading-order term in the late-time asymptotics of solutions to the corresponding equations. This motivates the need for the development of sufficiently robust mathematical tools, which are introduced in the present paper; see \S \ref{sec:motivation} for more details.

To simplify the exposition, we will restrict the consideration in the present paper to Schwarzschild spacetime backgrounds, but the main methods and philosophy apply also when considering more general asymptotically flat spacetimes. The proofs in the paper will moreover be entirely self-contained. We state now the first version of the main result:

\begin{theorem}[First version]
\label{thm:rough}
Let $\phi$ be a solution to \eqref{eq:wavewithinvsqpot} arising from suitably regular and decaying initial data. Then:
\begin{enumerate}[label=\emph{(\roman*)}]
\item\emph{(Late-time asymptotics)} There exist constants $C=C(V_{\alpha},M)>0$ and $\nu=\nu(\alpha)>0$ such that for a suitable hyperboloidal\footnote{The level sets of $\tau$ intersect the event horizon to the future of the bifurcation sphere and are asymptotically hyperboloidal, i.e. they can be thought of as intersecting future null infinity appropriately.}time coordinate $\tau\geq 0$, a radial coordinate $r$ and spherical coordinates $(\theta,\varphi)$, we can estimate:
\begin{align}
\label{eq:mainrough}\nonumber
\left|\phi(\tau,r,\theta,\varphi)-Q_{\alpha}[\phi]w_0(r) f(\tau,r) \right|\leq &\: C D[\phi] \cdot r^{-\frac{1}{2}+\frac{\beta_0}{2}}\cdot f(\tau,r) \cdot (\tau+M)^{-\nu}, \quad \textnormal{where} \\ 
f(\tau,r)=&\: (\tau+2r)^{-\frac{1}{2}-\frac{\beta_0}{2}}(\tau+M)^{-\frac{1}{2}-\frac{\beta_0}{2}},\\ \nonumber
\beta_0=&\: \sqrt{1+4\alpha},
\end{align}
with $Q_{\alpha}[\phi]$ a constant that can be explicitly determined from initial data, $w_0$ denoting the stationary, spherically symmetric solution to \eqref{eq:wavewithinvsqpot} that has the asymptotic behaviour $w_0(r)\sim r^{-\frac{1}{2}+\frac{\beta_0}{2}}$ as $r\to \infty$ and $D[\phi]$ an initial data norm.
\item\emph{(Genericity of late-time tails)} The constant $Q_{\alpha}[\phi]$ is non-vanishing for an open and dense subset $\mathcal{G}$ of initial data, with respect to an appropriate weighted $L^1$-type initial data norm, even when restricting to compactly supported initial data, and the complement $\mathcal{G}^c$ has codimension 1.
\end{enumerate}
\end{theorem}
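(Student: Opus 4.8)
The plan is to decouple the two assertions of the theorem: (i) constructing the leading-order tail with an explicit coefficient $Q_\alpha[\phi]$ and a quantitative remainder, and (ii) showing that $Q_\alpha[\phi]$ is a non-degenerate linear functional on initial data. For (i), I would first decompose $\phi$ into spherical harmonics and note, using the assumption $\alpha \neq l(l+1)$, that the angular Laplacian eigenvalues shift the effective radial problem so that the $\ell=0$ mode carries the slowest decay; the key radial quantity is the index $\beta_0 = \sqrt{1+4\alpha}$, which governs the indicial roots $r^{-\frac{1}{2}\pm\frac{\beta_0}{2}}$ of the stationary operator near infinity, and $w_0$ is precisely the decaying/subdominant stationary solution normalized to $r^{-\frac12+\frac{\beta_0}{2}}$. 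The heart of the physical-space mechanism advertised in the introduction is to work not with $\phi$ itself but with its \emph{time integral} $\phi^{(1)}(\tau,r,\theta,\varphi) = \int_0^\tau \phi\, d\tau' + (\text{data correction})$, which again solves \eqref{eq:wavewithinvsqpot} (since $V_\alpha$ is time-independent) and which has \emph{better} spatial decay properties; iterating this once or twice produces a hierarchy $\phi^{(k)}$ whose spatial weights improve at each step, and the obstruction to infinite iteration — the piece that cannot be integrated away — is exactly what pins down the constant $Q_\alpha[\phi]$.

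Concretely, I would: (1) set up a hyperboloidal foliation $\tau$ and prove $r^p$-weighted energy/hierarchy estimates (in the spirit of Dafermos--Rodnianski, adapted to the inverse-square potential via the global lower bound $V_\alpha \geq -\frac14 r^{-2}$, which is what keeps the relevant energy currents coercive and is the reason $\alpha > -\frac14$ is needed); (2) use these to establish that $\phi$ and its time integrals decay in $\tau$ with rates dictated by $\beta_0$, and that the time integral $\phi^{(1)}$ has a well-defined, $r$-weighted limit profile as one approaches null infinity; (3) identify $Q_\alpha[\phi]$ with that limiting quantity — morally the analogue of the failed Newman--Penrose constant, here finite rather than conserved — and express it explicitly as a weighted integral of the initial data (plus a contribution from the horizon flux, which one shows is lower order); (4) run the comparison between $\phi$ and the explicit model profile $Q_\alpha[\phi]\, w_0(r) f(\tau,r)$ by feeding the difference back into the same hierarchy and absorbing it into the $(\tau+M)^{-\nu}$ error, with $\nu = \nu(\alpha)$ coming from the gap between consecutive rates in the $\beta_0$-hierarchy. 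The function $f(\tau,r)$ is designed so that $w_0(r) f(\tau,r)$ reproduces the correct behaviour in all three regimes (near the horizon $r\sim M$, along the wave zone $\tau\sim r$, and at late times $\tau\gg r$).

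For part (ii), the genericity statement, I would treat $Q_\alpha[\phi]$ as an explicit linear functional $\phi \mapsto \int (\text{explicit weight})\cdot(\text{initial data})$ obtained in step (3). Linearity is immediate from linearity of the equation and of the time-integral construction. Non-triviality: I exhibit one compactly supported datum with $Q_\alpha \neq 0$ — e.g.\ using the explicit weight, which near infinity behaves like a fixed power of $r$ of definite sign, so that a bump of definite sign cannot integrate to zero — hence the kernel is a proper closed subspace. With respect to the weighted $L^1$-type norm chosen so that $Q_\alpha$ is bounded, $\mathcal G = \{Q_\alpha[\phi]\neq 0\}$ is the complement of a closed hyperplane, hence open and dense, and $\mathcal G^c$ has codimension $1$; density restricted to compactly supported data follows because compactly supported data are dense in the norm and, given any datum, an arbitrarily small compactly supported perturbation (translating a small bump to where the weight is nonzero) moves $Q_\alpha$ off zero.

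The main obstacle I expect is step (2)--(3): precisely because the conserved-charge-at-infinity mechanism \emph{breaks down} for $p=2$ (the relevant Newman--Penrose-type quantity is no longer conserved along null infinity but decays or drifts), I cannot simply read off the tail coefficient from a conservation law. Instead I must show that the time-integrated solution $\phi^{(1)}$ still possesses a \emph{finite} limit at null infinity with a quantitative rate of approach, and that this limit is genuinely captured by initial data rather than contaminated by the bulk evolution — i.e.\ that the "loss" in the would-be conservation law is itself controlled by the $r^p$-hierarchy and is lower order. Making the weighted estimates for $\phi^{(1)}$ sharp enough to see the leading term, uniformly across the horizon, wave-zone, and late-time regimes simultaneously, and tracking the exact $\beta_0$-dependent powers through the iteration, is where the real work lies.
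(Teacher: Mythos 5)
Your overall skeleton does match the paper's strategy (hyperboloidal foliation, $r^p$-hierarchies, time integrals, the tail coefficient as an $r$-weighted limit of a time integral expressed as a weighted integral of the data, subtraction of an explicit profile, genericity via a bounded linear functional), but two of your central claims are off in ways that would block the argument as written. First, the direction of the mechanism is inverted: time integration makes spatial decay \emph{worse}, not better. Starting from compactly supported data, $X(rT^{-1}\phi_0)|_{\Sigma_0}$ decays only like $r^{-\frac12-\frac{\beta_0}{2}}$, and each further integration loses roughly another power; it is exactly this degradation, quantified by the coefficient $\mathfrak{I}_{\beta_0}[\phi]$ in the $r$-expansion of the $(n_{\beta_0}+1)$-fold time integral on $\Sigma_0$ (with $n_{\beta_0}=\lfloor\frac{\beta_0+1}{2}\rfloor$, so ``once or twice'' is not enough in general), that caps the admissible $r$-weights and hence the attainable $\tau$-decay. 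The quantities whose effective weights \emph{improve} are the $T$-derivatives, via commutation with $T$, $rL$ and angular operators. Moreover, since integrability in $\tau$ is not known a priori, the time-integral data must be \emph{constructed} on $\Sigma_0$ by inverting the degenerate elliptic operator $\mathcal{L}$; the coefficient is then read off at $\tau=0$ by integrating an ODE in $r$, so there is no horizon-flux or bulk ``drift'' term to control, and the propagation-along-$\mathcal{I}^+$ picture you flag as the main obstacle in your last paragraph is precisely what the method is designed to avoid.

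Second, step (1) is not a routine Dafermos--Rodnianski adaptation secured by $V_\alpha\geq-\tfrac14 r^{-2}$. For $-\tfrac14<\alpha\leq-\tfrac{3}{16}$ the standard local integrated energy decay estimate with an unweighted initial energy on the right-hand side is provably false (a contradiction follows from the very tail you aim to prove), so the Morawetz step must be run for the renormalized quantities $\check{\phi}_{\ell}=w_{\ell}^{-1}\phi_{\ell}$, whose equation has no zeroth-order terms, and the data norm must carry growing $r$-weights; in addition, for $\alpha<0$ the $r^p$-hierarchy only holds for $1-\beta_0<p<1+\beta_0$, which as $\beta_0\downarrow0$ gives almost no decay by the usual pigeonhole argument. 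The sharp rates your step (4) needs are recovered only by taking $p=1$, commuting with $T$ and $rL$ (using the wave equation to trade $T$-derivatives for $r$-weights), and running an interpolation inside the dyadic argument; without these ingredients the remainder cannot beat the main term $Q_{\alpha}[\phi]\,w_0 f$. Finally, it does not suffice to ``identify'' $Q_{\alpha}[\phi]$ with the limiting quantity: the precise proportionality constant (a ratio of Gamma factors) is fixed by demanding cancellation of the leading $r$-asymptotics of the $(n_{\beta_0}+1)$-fold time integral of the difference, and since the model profile is not an exact solution when $M>0$, all estimates for the difference must be re-derived with the inhomogeneity $(\square_{g_M}-V_{\alpha})\Phi$ on the right-hand side. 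Your genericity argument in (ii) is essentially the paper's and is fine.
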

We refer to Theorem \ref{thm:precise} for a more precise version of the above theorem, including precise expressions for the initial data quantities $Q_{\alpha}[\phi]$ and $D[\phi]$, as well as the exponent $\nu$, and moreover, refined late-time asymptotics when projecting $\phi$ to spherical harmonics of degree $\geq \ell$ with $\ell\in \N$. The quantities $Q_{\alpha}[\phi]$ can \emph{a posteriori} be interpreted as new conserved charges at infinity at the level of time integrals of $\phi$; see Remark \ref{rm:consvlaws}.

Theorem \ref{thm:rough} implies in particular that solutions to \eqref{eq:wavewithinvsqpot} have late-time asymptotics with rates that depend on the coefficient $\alpha$ in the following way in terms of standard Schwarzschild coordinates $(t,r,\theta,\varphi)$: for $r_0\geq 2M$,
\begin{align*}
 \phi|_{\{r=r_0\}}\sim &\: (t+r_*(r_0))^{-(1+\sqrt{1+4\alpha})}\quad  (t+r_*(r_0)\to \infty),\\
r\phi|_{\mathcal{I}^+=\{t+r_*=\infty, t-r_*<\infty\}}\sim &\: (t-r_*)^{-\frac{1}{2}(1+\sqrt{1+4\alpha})}\quad (t-r_*\to \infty),
\end{align*}
with $t$ the standard Schwarzschild time coordinate and $r_*(r)$ the tortoise coordinate on Schwarzschild, which satisfies $\frac{dr_*}{dr}=(1-\frac{2M}{r})^{-1}$. 
\subsection{A comparison with more rapidly decaying potentials and Price's law for late-time tails}
\label{sec:compzeropot}
It was first shown in  \cite{paper2,paper1} that in the setting of \eqref{eq:wavewithoutpot} (i.e.\ the wave equation on Schwarzschild without a potential), \eqref{eq:mainrough} holds for conformally regular initial data, with $\alpha=0$ (in which case $w_0\equiv 1$) and $Q_{\alpha}[\phi]=4 I_0[\phi]$, where $I_0[\phi]$ is the Newman--Penrose charge of $\phi$, defined as follows:
\begin{equation*}
 I_0[\phi]=\lim_{r\to \infty}\int_{\s^2}r^2(\partial_t+\partial_{r_*})(r\phi)(\tau=0,r,\theta,\varphi)\,\sin \theta d\theta d\varphi.
\end{equation*}
The above limit is in fact well-defined for all $\tau\in \R$ and is moreover \emph{conserved in $\tau$}.

In the case $I_0[\phi]=0$; for example, when considering compactly supported initial data on $\tau=0$, Theorem~\ref{thm:rough} still holds with $\alpha=0$ if we replace $f(\tau,r)$ with $\partial_{\tau}f(\tau,r)$. In this case, $Q_{\alpha}[\phi]=4 I_0^{(1)}[\phi]$, with $I^{(1)}_0[\phi]$ the \emph{time inverted Newman--Penrose charge}, which was first defined in \cite{paper2} and is in fact equal to $I_0[\partial_{\tau}^{-1}\phi]$, where $\partial_{\tau}^{-1}\phi$ is the \emph{time integral of $\phi$}, defined as follows:
\begin{equation*}
\partial_{\tau}^{-1}\phi(\tau,r,\theta,\varphi)=-\int^{\infty}_{\tau} \phi(s,r,\theta,\varphi)\,ds.
\end{equation*}

In both the $I_0[\phi]\neq 0$ and $I_0[\phi]=0$ cases, the mechanism developed in \cite{paper2,paper1} for deriving late-time asymptotics relies on the conservation property of a Newman--Penrose charge at infinity, which is then propagated to the rest of the spacetime using appropriate decay estimates in time for $\phi$.

When considering projections of $\phi$ onto all spherical harmonics $Y_{\ell m}(\theta,\varphi)$ of degree $\ell$, denoted $\phi_{ \ell}$, the Laplacian on the unit round sphere $\slashed{\Delta}_{\s^2}$, which appears in the wave operator $\square_g$, satisfies the following property:
\begin{equation*}
\slashed{\Delta}_{\s^2}\phi_{\ell }=-\ell(\ell+1)\phi_{\ell },
\end{equation*}
so we can interpret the term $r^{-2}\slashed{\Delta}_{\s^2} \phi_{\ell}$ appearing in $\square_{g_{M}}\phi_{\ell}$ as a potential term $-V_{\alpha}\phi_{\ell}$, where $V_{\alpha}(r)=\alpha r^{-2}$ with $\alpha=\ell(\ell+1)$ and $\ell\in \N_0$.

In the setting of \eqref{eq:wavewithoutpot}, the late-time asymptotics for $\phi_{\ell}$ (and in fact for $\phi_{\geq \ell}$, the projections of $\phi$ onto all spherical harmonics of degree $\geq \ell$), were obtained in \cite{aagprice}; see also earlier work on $\ell$-dependent decay estimates \cite{dssprice} and on late-time asymptotics in compact regions in space \cite{hintzprice}. 

These late-time asymptotics agree with Theorem \ref{thm:rough} if we take $\alpha=\ell(\ell+1)$ and we replace $w_0$ with $P_{\ell}(r-M)$, the $\ell$-th Legendre polynomial. In this case, $Q_{\alpha}[\phi]$ will be proportional to $I_{\ell}[\phi](\theta,\varphi)$, with $I_{\ell}[\phi](\theta,\varphi)$ the $\ell$-th Newman--Penrose charge, which is defined as a limit along $\tau=0$ as $r\to \infty$ of appropriate linear combinations of higher-order powers of derivatives $r^2(\partial_t+\partial_{r_*})$ acting on $r\phi$; see \cite{gk22} for explicit expressions for $I_{\ell}[\phi]$. Moreover, for compactly supported data ($I_{\ell}[\phi]=0$), $f(\tau,r)$ is replaced with $\partial_{\tau}f(\tau,r)$ and $Q_{\alpha}[\phi]$ is then proportional to $I_{\ell}[\partial_{\tau}^{-1}\phi]$. 

When considering \eqref{eq:wavewithinvsqpot} with $\alpha=\ell(\ell+1)$, where $\ell\in \N_0$, the conservation of $\ell$-th Newman--Penrose charges at infinity can therefore still be exploited to derive late-time asymptotics.

The numerology appearing in the decay rates of asymptotic inverse-polynomial tails in the setting of \eqref{eq:wavewithoutpot} (Price's law) dates back to heuristic work in \cite{Price1972}. The presence of inverse-polynomial late-time tails for \eqref{eq:wavewithoutpot} with sufficiently rapidly decaying data can be attributed to the non-vanishing of the mass $M$, which appears as a factor in the expressions for $I_{\ell}[\partial_{\tau}^{-1}\phi]$.\footnote{On a $3+1$-dimensional Minkowski spacetime background, which is locally isometric to Schwarzschild with $M=0$, the \emph{strong Huygens principle} holds and there are no late-time tails in the case of compactly supported initial data; instead, $\phi$ is compactly supported in $\tau$.}

We note also the late-time asymptotics derived in \cite{keh21b,keh21c,gk22}, where initial data are considered with various different $r$-asymptotics that arise from more physically motivated settings  and are connected to the failure of peeling of curvature components for the Einstein equations. Such data require the use of \emph{modified} Newman--Penrose charges to obtain late-time asymptotics.

In the setting of the present paper, where we consider \eqref{eq:wavewithinvsqpot} with $\alpha\neq l(l+1)$ for any $l\in \N_0$, we can no longer exploit conserved charges for $\phi$ at infinity to obtain late-time asymptotics, so the methods involving Newman--Penrose charges do not apply, thus motivating the need for a new, more robust method. We refer to \S \ref{sec:proofideas} for an outline of this method.

When considering wave equations with non-trivial potentials, an important difference between potentials satisfying $V(r)\sim \alpha r^{-2-\delta}$, with $\delta>0$ and asymptotically inverse-square potentials $V(r)\sim \alpha r^{-2}$, with $\alpha\neq l(l+1)$ for any $l\in \N_0$, is that in the former case, the methods involving Newman--Penrose charges \emph{can} still be applied as sketched above in the $V\equiv 0$ setting to obtain late-time asymptotics with decay rates that are independent of $\alpha$. For example, in the former case we would expect $\phi|_{r=r_0}\sim \tau^{-\min\{3,2+\delta\}}$ for compactly supported initial data. When $V(r)\sim \alpha r^{-2}$, we instead see:  $\phi|_{r=r_0}\sim \tau^{-1-\sqrt{1+4\alpha}}$, so the leading-order coefficient $\alpha$ appears in the decay rate, in contrast with the case where $V(r)\sim \alpha r^{-2-\delta}$.

Note moreover that in contrast with the $V\equiv 0$ setting, both $V(r)\sim \alpha r^{-2}$ with $\alpha\neq l(l+1)$, $l\in \N_0$, and $V(r)\sim \alpha r^{-2-\delta}$ with $\delta>0$ and $\delta\in \R_+\setminus \N_0$, will lead to time integrals $\partial_\tau^{-1}\phi$ that are \emph{not} conformally smooth at null infinity, even if we start with smooth, compactly supported initial data. That is to say, time integrals will only have \emph{finite regularity} with respect to the coordinate $x=\frac{1}{r}$ at $x=0$ ($r=\infty$). While this is perhaps not so surprising for $V(r)\sim \alpha r^{-2-\delta}=\alpha x^{2+\delta}$, which itself has only finite regularity at $x=0$, it is \emph{a priori} perhaps not so obvious that there should also be a breakdown in conformal regularity when, for example, $V(r)= \alpha r^{-2}=\alpha x^2$, with $\alpha\neq l(l+1)$, $l\in \N_0$, which \emph{is} conformally smooth at $x=0$.

\subsection{Inverse-square potentials as a toy model}
\label{sec:motivation}
The main motivation for considering geometric wave equations with inverse-square potentials in the present paper is to provide a simplified setting for tackling some of the difficulties that show up when studying the late-time behaviour and stability properties of solutions to the charged scalar field equation on black hole backgrounds and solutions to the (uncharged) wave equation on extremal Kerr backgrounds. The methods of the present paper can moreover be used to determine the late-time asymptotics of geometric wave equations corresponding to asymptotically flat spacetimes with even space dimensions. We briefly outline these different settings below.
\paragraph{Charged scalar fields}
A natural generalization of the equation $\square_{g_{M}}\phi=0$ to the setting of the Einstein--Maxwell equations is the equation:
	\begin{align*}
	\sum_{0\leq \alpha,\beta \leq 3}(g_{M,e}^{-1})^{\alpha \beta} D_{\alpha} D_{\beta} \phi=&\:0,\quad \textnormal{with}\\ \nonumber
	D_{\mu} =&\: \nabla_{\mu}+ iq A_{\mu},\\
	dA=&\:F_{RN}=\frac{e}{r^2}dt\wedge dr,
	\end{align*}
	where $e\in \R$ is the electromagnetic charge appearing in the subextremal Reissner--Nordstr\"om black hole metric $g_{M,e}$, $q\in \R\setminus \{0\}$ is the electromagnetic charge carried by the scalar field $\phi$ and  $F_{RN}$ is a solution to the vacuum Maxwell equations $d ^\star F_{RN}=0$ with respect to  $g_{M,e}$. 
	
	The above equations show up in the linearization of the coupled system of Einstein--Maxwell-charged scalar field equations around the Reissner--Nordstr\"om solution, and they model the gravitational properties of electromagnetic waves coupled to charged scalar field matter when keeping the Maxwell field and background spacetime metric fixed. After choosing an appropriate electromagnetic gauge $A$, the equation for $\phi$ will resemble \eqref{eq:wavewithpot} with an inverse-square potential $V(r)\sim \alpha r^{-2}$ where $
	\alpha<0$ depends on the scalar field charge $q$, and hence, when determining the late-time asymptotics of $\phi$ for the charged scalar field equation, we also encounter the difficulties faced in the setting of the wave equation on Schwarzschild with inverse-square potential. In particular, one cannot exploit conserved charges at null infinity to obtain late-time asymptotics. 
	
	In upcoming work \cite{gvdm22b}, the methods developed in the present paper are applied to obtain the precise late-time asymptotics of charged scalar fields on Reissner--Nordstr\"om spacetimes under the assumption $q^2e^2<\frac{1}{4}$, which is analogous to the assumption $\alpha>-\frac{1}{4}$, and, in particular, the decay rates first suggested in \cite{hp98b} are validated for this range of charge parameters. In \cite{gvdm22}, a different method is applied for deriving results on late-time asymptotics for charged scalar fields on Minkowski spacetime backgrounds that relies on the special conformal geometric properties of the Minkowski spacetime. We refer also to earlier work \cite{vdm18} on decay estimates for the coupled Maxwell--charged scalar field system with small scalar field charge on a fixed Reissner--Nordstr\"om background.
	
	Determining the late-time asymptotics of charged scalar fields on Reissner--Nordstr\"om is an important first step towards determining the late-time behaviour of gravitational radiation along the event horizon of dynamical black holes with charged matter, which is closely connected to the singularity properties in the corresponding black hole interiors and features some interesting new phenomena that are not seen in the uncharged setting; see for example \cite{vdm19,kvdm21}.
	
	\paragraph{Extremal black holes} The properties of solutions to the geometric wave equation on an extremal Kerr spacetime,
	\begin{equation*}
	\square_{g_{EK}}\phi=0,
	\end{equation*}
	near the event horizon are closely related to the properties near null infinity, provided $\phi$ is assumed to be axisymmetric. In particular, axisymmetric solutions also satisfy conservation laws along the event horizon of extremal Kerr black holes, which are directly related to the conservation laws for Newman--Penrose charges along future null infinity; see \cite{couch,hm2012, bizon2012}. The presence of such conservation laws can not only be used to derive late-time asymptotics, see \cite{paper4}, but it is also the origin of the existence of axisymmetric asymptotic \emph{instabilities} on extremal Kerr black holes \cite{aretakis3, aretakis4}.
	
	When considering \underline{non}-axisymmetric solutions, however, both the above relation between null infinity and the event horizon, and the corresponding conservation laws, break down. If we restrict to azimuthal modes, which are eigenfunctions of the axial vector field $\partial_{\varphi}$ with eigenvalues $im$, where $m\in \Z$ is the azimuthal number, the wave equation near the event horizon instead resembles the equation for a charged scalar field with charge $q\sim m$ near infinity. Therefore, the philosophy underlying the methods developed in the present paper, which relates (conformal) regularity of time integrals at $\tau=0$ to decay rates in time, can also be applied to the setting of extremal Kerr. In upcoming work \cite{gajEK}, we derive the late-time asymptotics for azimuthal modes on extremal Kerr and obtain as a corollary the existence of \emph{azimuthal instabilities}, which are stronger than the axisymmetric instabilities mentioned above. We refer to \cite{zimmerman1} for earlier heuristic work on azimuthal instabilities that is consistent with the results in \cite{gajEK}, and \cite{costa20} for a proof of mode stability on extremal Kerr.
	
	\paragraph{Late-time tails of waves in even space dimensions}
	The wave equation on $n+1$-dimensional Schwarzschild spacetimes with $n\geq 3$, which are also known as the Schwarzschild--Tangherlini spacetimes, see \cite{tan63}, takes the following form in standard Schwarzschild-type coordinates:
	\begin{equation*}
	0=-D^{-1}\partial_t^2\phi+r^{-(n-1)}\partial_r(Dr^{n-1}\partial_r\phi)+r^{-2}\slashed{\Delta}_{\s^{n-1}}\phi,
	\end{equation*}
	with $D=1-\frac{2M}{r^{n-2}}$ and $\slashed{\Delta}_{\s^{n-1}}$ the Laplacian with respect to the unit round $n-1$-sphere metric; see for example \cite{volker1, moschidis1}. Introducing the rescaled variable $\psi=r^{\frac{n-1}{2}}\phi$,\footnote{When evaluated at future null infinity, $\psi$ corresponds to the \emph{Friedlander radiation field} in $n$ space dimensions.} the above equation is equivalent to:
	\begin{equation*}
	0=-D^{-1}\partial_t^2\psi+ \partial_r(D\partial_r\psi)-\frac{D}{4r^2}(n-1)(n-3)\psi-\frac{n-1}{2r}\frac{dD}{dr}\psi+r^{-2}\slashed{\Delta}_{\s^{n-1}}\psi.
	\end{equation*}
	 The term $-\frac{D}{4r^2}(n-1)(n-3)\psi$ can be interpreted as a potential term:
	\begin{equation*}
	-V_{\alpha}(r)\psi,
	\end{equation*}
 with $V_{\alpha}(r)=\alpha r^{-2}+O_{\infty}(r^{-n})$. When $n=2k+3$, with $k\in \N_0$, we have that $\alpha=k(k+1)$. Hence, one can obtain late-time asymptotics and sharp decay estimates for $\phi$ by exploiting the existence of conserved charges at infinity that are analogous to those for fixed spherical harmonic projections $\phi_{\ell}$ in the $n=3$ case; see the discussion in \S \ref{sec:compzeropot}.
	
	When $n=2k+4$, with $k\in \N_0$, we have that $\alpha\neq l(l+1)$ for any $l\in \N_0$, and one can no longer make use of conservation laws at infinity. In the case of the $n+1$-dimensional wave equation on Minkowski ($M=0$), this is reflected in the fact that the strong Huygens principle does not hold in even space dimensions $n$. The method for deriving late-time asymptotics introduced in the present paper, however, does remain applicable in the case of even space dimensions, and it can be used to derive a higher-dimensional analogue of Price's law (in both even and odd space dimensions).
	
	Note that when considering the $n+1$-dimensional wave equation on Minkowski with $n=2$, and taking $\psi=r^{\frac{1}{2}}\phi$, we see an effective potential term with potential $V_{\alpha}(r)=\alpha r^{-2}$ with $\alpha=-\frac{1}{4}$. As the methods in the present paper rely on the restriction $\alpha>-\frac{1}{4}$, they do not apply directly in this setting. It would therefore be an interesting problem to extend the methods to include the case $\alpha=-\frac{1}{4}$. See \cite{dk16} for related work on sharp decay estimates for wave equations with potentials on the real (half-)line corresponding to the choice $\alpha=-\frac{1}{4}$, which are derived by using Fourier transforms that are adapted to the potential.\\
	\\
	
	We finally note that wave equations with inverse-square potentials also show up in the study of wave equations on manifolds with asymptotically conical ends; see \cite{sss10a,sss10b}.
\subsection{Outline of paper}
Here, we give an outline of the structure of the remainder of the paper.

In \S\S\ref{sec:geometry}--\ref{sec:main}, we give precise formulations of the main results, after introducing the preliminary geometric objects and notations needed to understand the formulations. We moreover introduce the key ideas and the main steps of the proofs here. More precisely:
\begin{itemize}
\item In \S \ref{sec:geometry} we derive preliminary geometric properties of the Schwarzschild spacetime backgrounds that will play a role in the remainder of the paper. We moreover introduce the main notational conventions.
\item Then in \S \ref{sec:prelimwave}, we introduce precisely the class of wave equations with potentials that will be analyzed.
\item We give precise formulations of the main theorems of the paper in \S \ref{sec:main} and already give a sketch of their proofs, emphasizing the key new ideas introduced in the present paper.
\end{itemize}

The full proofs of the results stated in \S \ref{sec:main} are then carried out in \S\S \ref{sec:boundedness}--\ref{sec:latetimeasymell}. These sections are structured as follows:

\begin{itemize}
\item In \S \ref{sec:boundedness}, we derive energy boundedness estimates.
\item We derive integrated energy decay estimates in \S \ref{sec:iled}.
\item In \S \ref{sec:rp}, we derive (commuted) $r$-weighted energy estimates.
\item Then in \S \ref{sec:decaytimeder}, we combine the estimates in \S\S \ref{sec:boundedness}--\ref{sec:rp} to derive energy decay-in-time estimates for time derivatives.
\item We construct time integrals and develop an elliptic theory of time inversion in \S \ref{sec:timinv}.
\item The energy decay-in-time estimates in \S \ref{sec:decaytimeder} are combined with the time-inversion theory in \S \ref{sec:timinv} to derive the main result of the paper in \S \ref{sec:latetimeasymp}: uniform, pointwise, decay-in-time estimates from which we can read off the precise late-time asymptotics.
\item We conclude in \S \ref{sec:latetimeasymell} by showing how the results in \S \ref{sec:latetimeasymp} also give late-time asymptotics for solutions supported on spherical harmonics with higher angular momenta $\ell$.
\end{itemize}

\subsection{Acknowledgements}
The author thanks Leonhard Kehrberger for comments on an earlier version of the manuscript.

\section{Preliminaries: geometry and foliations}
\label{sec:geometry}
In this section, we introduce the main geometric concepts and notation that will be used throughout the paper.
\subsection{Spacetime metric and foliations}
\label{sec:metric}

Let $M>0$ and consider the manifold-with-boundary $(\mathcal{M}_M,g_M)$ with:
\begin{align*}
 \mathcal{M}_M=&\:\R_v\times [2M,\infty)_r\times \s^2_{\theta,\varphi},
  \end{align*}
equipped with a coordinate chart $(v,r,\theta,\varphi)$ and the metric:
\begin{align*}
g_M=&\: - Ddv^2+2dvdr +r^2(d\theta^2+\sin^2\theta d\varphi^2),\\
D(r)=&\: 1-\frac{2M}{r}.
\end{align*}
The coordinates $(\theta,\varphi)$ are standard spherical coordinates on the 2-sphere $\s^2$ that degenerate at a line connecting the two poles. The spacetimes $(\mathcal{M}_M,g_M)$ with $M>0$ are called \emph{Schwarzschild black hole exteriors}.

We define the functions
\begin{align*}
t=&\:v-r_*,\\
u=&\:v-2r_*,\quad \textnormal{with}\\
r_*=&\:\int_{3M}^r D^{-1}(r')\,dr'.
\end{align*}
Note that the level sets of the coordinate function $v$ define ingoing null hypersurfaces, whereas the level sets of $u$ define outgoing null hypersurfaces. Furthermore, by considering coordinates $(t,r,\theta,\varphi)$ coordinates we recover the standard form of the Schwarzschild metric in the subset $\R_t\times (2M,\infty)_r\times \s^2_{\theta,\varphi}$; see \eqref{eq:schwmetrictr}.

Let $\tilde{h}:[2M,\infty)\to  (0,2]$ be a smooth function such that
\begin{equation*}
2-\tilde{h}D> 0.
\end{equation*}
Define moreover $h:[2M,\infty)\to \R_{+}$ as follows:
\begin{equation*}
h(r):=2-\tilde{h}(r) D(r).
\end{equation*}
Consider the time function $\tau$:
\begin{equation*}
\tau=v-v_0-\int_{2M}^r \tilde{h}\,dr',
\end{equation*}
where $v_0>0$.

We then have that
\begin{equation*}
(g_M)^{-1}(d\tau,d\tau)=-\tilde{h}h<0,
\end{equation*}
so the level sets of $\tau$,
\begin{equation*}
\Sigma_{\tau'}=\{\tau=\tau'\},
\end{equation*}
are spacelike hypersurfaces.

We further restrict to an \emph{asymptotically hyperboloidal} foliation $\Sigma_{\tau}$ by imposing
\begin{align*}
h(r)=&\:h_0r^{-2}+O_{\infty}(r^{-3}),
\end{align*}
with $h_0$ a constant, which is necessarily positive, by the previous assumption on $\tilde{h}$.

Here we applied the following variation of ``big-O notation'': we write $f(r)= O_k(r^{-q})$ for a suitably regular function $f: [2M,\infty)\to \R$, $q\in \R$ and $k\in \N_0$, if there exists a constant $C>0$ such that $|\partial_r^l f|\leq C r^{-q-l}$ for all $0\leq l\leq k$. If $f=O_k(r^{-q})$ for all $k\in \N_0$, we write $f=O_{\infty}(r^{-q})$ and we write $O(r^{-q})=~O_0(r^{-q})$.

We denote with
\begin{equation*}
\mathcal{R}=\bigcup_{\tau\in [0,\infty)}\Sigma_{\tau}
\end{equation*}
the spacetime subset of $\mathcal{M}_M$ that will be relevant for the analysis in this paper.

Note that $\Sigma_{\tau}=F_{\tau}(\Sigma_0)$, where $F_{\tau}$ denotes the flow along the vector field $T=\partial_v$, which is the standard timelike Killing vector field in Schwarzschild. From the Killing property of $T$, it follows that the hypersurfaces $\Sigma_{\tau}$ are all isometric. We will moreover denote $\Sigma:=\Sigma_0$.

The intersections $\Sigma_{\tau}\cap \{r=r'\}$ are 2-spheres, which we will denote as follows:
\begin{equation*}
S^2_{\tau,r'}:=\Sigma_{\tau}\cap \{r=r'\}.
\end{equation*}

The null coordinate $u$ takes on the following values along $\Sigma_{\tau}$,
\begin{equation*}
u|_{\Sigma_{0}}(r)=v|_{\Sigma_{0}}(r)-2r_*=\tau+\left[v_0+\int_{2M}^{3M}\tilde{h}\,dr'\right]-\int_{3M}^r h\,D^{-1}dr',
\end{equation*}
so
\begin{align*}
\lim_{r\to \infty}u|_{\Sigma_{\tau}}(r)=&\:\tau+u_0, \quad \textnormal{with}\\
u_0=&\:\left[v_0+\int_{2M}^{3M}\tilde{h}\,dr'\right]-\int_{3M}^{\infty} h\,D^{-1}dr',
\end{align*}
where the above expression for $u_0$ is well-defined because $h$ is integrable in $r$, by the assumptions on the $r$-decay of $h$ (asymptotical hyperbolicity).

The triple $(r,\theta,\varphi)$ defines a coordinate chart on the hypersurfaces $\Sigma_{\tau}$, which can be complemented with the time function $\tau$ to obtain the coordinate chart $(\tau, r,\theta,\varphi)$ on $\mathcal{R}$. We can express $g_M$ as follows in these coordinates:
\begin{equation*}
g_M=-Dd\tau^2-2(1-h)d\tau dr+ h\tilde{h} dr^2+r^2(d\theta^2+\sin^2\theta d\varphi^2).
\end{equation*}
We consider now the rescaled metric $\widehat{g}_M=r^{-2}g_M$ and employ the coordinate $x=\frac{1}{r}$ to express:
\begin{equation*}
\widehat{g}_M=- x^2 d\tau^2+2(1-h)d\tau dx+r^2h \tilde{h}dx^2+d\theta^2+\sin^2\theta d\varphi^2.
\end{equation*}
By assumption, $r^2h: [0,(2M)^{-1})_x \to\R_{+}$ defines a smooth function of $x$, so we can smoothly\footnote{The smoothness of the extension is actually not necessary to be able to apply the methods in this paper, since we only make use of \underline{finite} regularity properties.} extend the Lorentzian manifold-with-boundary $([0,\infty)_{\tau}\times (0,\frac{1}{2M}]_x\times \s^2,\widehat{g}_M)$ to $([0,\infty)_{\tau}\times [0,\frac{1}{2M}]_x\times \s^2,\widehat{g}_M)$, where we denote:
\begin{equation*}
\widehat{\mathcal{R}}=\left[0,\frac{1}{2M}\right]_x\times \s^2.
\end{equation*}
We similarly denote with $\widehat{\Sigma}_{\tau}$ the union of $\Sigma_{\tau}$ with the appropriate level set of $\tau$ in the extended manifold-with-boundary $\widehat{\mathcal{R}}$.

We denote 
\begin{equation*}
\mathcal{I}^+=\{x=0\}\subset \widehat{\mathcal{R}}
\end{equation*}
and refer to this hypersurface as \emph{future null infinity}.

Note that we can express $T=\partial_{\tau}$. It will be convenient to denote moreover 
\begin{equation*}
X=\partial_{r}
\end{equation*}
in $(\tau,r,\theta,\varphi)$ coordinates. In $(v,r,\theta,\varphi)$ coordinates, on the other hand, we can express: $T=\partial_v$, $X=\partial_r+\tilde{h}\partial_v$.

Note that
\begin{align}
\label{eq:LbarXT}
\underline{L}=&\:-\frac{D}{2}X+\frac{D}{2}\tilde{h}T\quad \textnormal{and}\\
\label{eq:LXT}
L=&\:T-\underline{L}=\frac{D}{2}X+\frac{1}{2}hT
\end{align}
define ingoing and outgoing null vector fields, respectively.\footnote{In $(u,v,\theta,\varphi)$ coordinates, we can express $\underline{L}=\partial_u$ and $L=\partial_v$.}

We also denote 
\begin{equation}
\label{eq:defZ}
Z=L-\underline{L}=DX+(h-1)T.
\end{equation}

Finally, we let $\mathbf{N}$ be a smooth vector field, such that $\mathbf{N}=T$ for $r\geq 4M$ and $\mathbf{N}=\mathbf{n}_{\tau}$ for $r\leq 3M$,  where $\mathbf{n}_{\tau}$ denotes the normal vector field to $\Sigma_{\tau}$ 

\begin{figure}[H]
	\begin{center}
\includegraphics[scale=0.75]{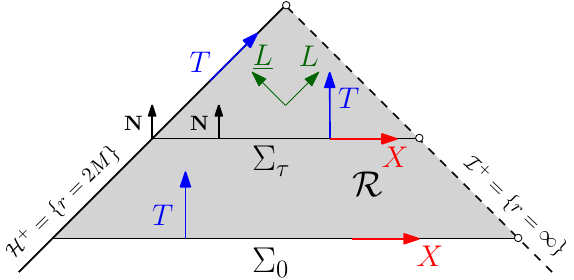}
\end{center}
\caption{A 2-dimensional representation of the spacetime $\mathcal{M}_{M}$, with the hypersurfaces $\Sigma_{\tau}$ and the shaded region depicting $\mathcal{R}$. Each point in the picture represents a round 2-sphere of radius $r$ and the hypersurface $\mathcal{I}^+$, which represents the points $(\tau,\infty,\theta,\varphi)$, is depicted at a finite distance.}
	\label{fig:penrose}
\end{figure}

\subsection{Properties of functions on $\s^2$}
\label{sec:sphere}
Let $Y_{\ell m}(\theta,\varphi)$ denote spherical harmonics on $\s^2$, labelled by an angular momentum $\ell\in \N_0$ and an azimuthal number $m\in \Z$, $|m|\leq \ell$. We can define projection operators $\pi_{\ell}: C^{\infty}(\Sigma_{\tau})\to C^{\infty}(\Sigma_{\tau})$ as follows:
\begin{equation*}
(\pi_{\ell}f)(r,\theta,\varphi):=\sum_{m=-\ell}^{\ell} \left[\int_{\s^2}f (r,\theta',\varphi')Y_{\ell m}(\theta',\varphi')\,\sin \theta' d\theta' d\varphi'\right] Y_{\ell m}(\theta,\varphi)
\end{equation*}
and extend $\pi_{\ell}$ as a bounded linear operator to $L^2(\Sigma_{\tau})$ (with respect to the standard volume form on $\Sigma_{\tau}$ given the induced metric on $\Sigma_{\tau}$).

We denote the $f_{\ell}=\pi_{\ell}f$ and refer to $f_{\ell}$ as a spherical harmonic mode. Since the functions $Y_{\ell m}$ make up an orthonormal basis on $L^2(\s^2)$, we can express:
\begin{equation*}
f=\sum_{\ell=0}^{\infty}f_{\ell}.
\end{equation*}
We moreover denote for $\ell_0\in \N_0$, $f_{\geq \ell_0}=\sum_{\ell=\ell_0}^{\infty}f_{\ell}$.

Define the angular momentum operators $\Omega_i$, $i=1,2,3$ as follows:
\begin{align*}
\Omega_1=&-\sin \varphi \partial_{\theta}-\cot\theta \cos\varphi \partial_{\varphi},\\
\Omega_2=&\:\cos \varphi\partial_{\theta}-\cot\theta \sin \varphi \partial_{\varphi},\\
\Omega_3=&\:\partial_{\varphi}.
\end{align*}
It is straightforward to show that $[\Omega_i,\slashed{\Delta}_{\s^2}]=0$, where $\slashed{\Delta}_{\s^2}(\cdot)=\frac{1}{\sin \theta}\partial_{\theta}(\sin \theta \partial_{\theta}(\cdot))+\frac{1}{\sin^2 \theta}\partial_{\varphi}^2(\cdot)$ denotes the Laplacian on the unit round sphere $\s^2$.

We denote with $H^k(\s^2)$ the $k$-th order $L^2$-based Sobolev spaces on $\s^2$ with respect to the volume form $d\sigma:=\sin \theta d\theta d\varphi$.
\begin{lemma}
Let $k\in \N$, $\ell\in \N_0$ and $f\in H^k(\s^2)$. Then:
\begin{align}
\label{eq:sphere1}
\int_{\s^2} |\snabla_{\s^2}f|^2\,d\sigma=&\: \sum_{|I|=1} \int_{\s^2} (\Omega^{I}f)^2\,d\sigma,\\
\label{eq:sphere1b}
\frac{1}{k!}  \sum_{|I|=k} \int_{\s^2} (\Omega^{I}f)^2\,d\sigma\leq \int_{\s^2} |\snabla_{\s^2}^kf|^2\,d\sigma\leq &\: \sum_{|I|=k} \int_{\s^2} (\Omega^{I}f)^2\,d\sigma,\\
\label{eq:sphere2}
\int_{\s^2} |\snabla_{\s^2}^kf_{\geq \ell}|^2\,d\sigma\geq&\: (\ell(\ell+1))^k \int_{\s^2}f_{\geq \ell}^2\,d\sigma,\\
\label{eq:sphere3}
\int_{\s^2} |\snabla_{\s^2}^kf_{\ell}|^2\,d\sigma=&\: (\ell(\ell+1))^k \int_{\s^2}f_{\ell}^2\,d\sigma,
\end{align}
where we used multi-index notation $\Omega^{I}=\Omega^{I_1}\ldots \Omega^{I_n}$ for $I=(I_1,\ldots,I_n)$.

Furthermore, if $f\in H^{2k}(\s^2)$, then
\begin{equation}
\label{eq:sphere4}
 \int_{\s^2} (\slashed{\Delta}_{\s^2}^kf)^2\,d\sigma=\int_{\s^2} |\snabla_{\s^2}^{2k}f|^2\,d\sigma.
\end{equation}
and if $f\in H^2(\s^2)$, then there exist a numerical constant $C>0$, such that
\begin{equation}
\label{eq:sphere5}
||f||_{L^{\infty}(\s^2)}\leq C||f||_{H^2(\s^2)}.
\end{equation}
\end{lemma}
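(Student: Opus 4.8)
The plan is to prove the five identities/estimates of the Lemma essentially by reducing everything to the spectral decomposition of $\slashed{\Delta}_{\s^2}$ into spherical harmonics, together with the representation of $\slashed{\Delta}_{\s^2}$ in terms of the $\Omega_i$. The starting point is the classical Casimir-type identity $\slashed{\Delta}_{\s^2}=\sum_{i=1}^3 \Omega_i^2$ (valid away from the coordinate singularity, hence everywhere by density/continuity), which gives $-\int_{\s^2}f\,\slashed{\Delta}_{\s^2}f\,d\sigma=\sum_{i=1}^3\int_{\s^2}(\Omega_i f)^2\,d\sigma$ after one integration by parts; combined with $\int_{\s^2}|\snabla_{\s^2}f|^2\,d\sigma=-\int_{\s^2}f\,\slashed{\Delta}_{\s^2}f\,d\sigma$ this is precisely \eqref{eq:sphere1}. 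For \eqref{eq:sphere3} and \eqref{eq:sphere2} I would expand $f=\sum_{\ell,m}c_{\ell m}Y_{\ell m}$, use $\slashed{\Delta}_{\s^2}Y_{\ell m}=-\ell(\ell+1)Y_{\ell m}$ and integration by parts $k$ times to get $\int|\snabla^k f_\ell|^2 = \int f_\ell(-\slashed{\Delta}_{\s^2})^k f_\ell = (\ell(\ell+1))^k\|f_\ell\|_{L^2}^2$ when $k$ is even, and a similar computation pairing $\snabla^k$ with $\snabla^k$ when $k$ is odd; then \eqref{eq:sphere2} follows by summing over $\ell'\ge\ell$ and using monotonicity of $\ell'(\ell'+1)$. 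Identity \eqref{eq:sphere4} is the special case of the integration-by-parts computation with $k$ even written in the form $\int(\slashed{\Delta}^k f)^2=\int f\,\slashed{\Delta}^{2k}f=\int|\snabla^{2k}f|^2$.

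The estimate \eqref{eq:sphere1b} is the one requiring genuine (if standard) work: the upper bound comes from iterating the Bochner/Weitzenböck-type commutation of $\snabla_{\s^2}$ with the $\Omega_i$ — each $\Omega_i$ is a Killing field on $(\s^2,\sg)$, so $\snabla_{\s^2}$ and $\Omega_i$ commute up to lower-order curvature terms, and one expands $|\snabla^k f|^2$ pointwise, integrates by parts, and bounds each term by $\sum_{|I|=k}\int(\Omega^I f)^2$; the factor-of-$k!$ loss in the lower bound is the combinatorial price of reconstructing a full set of $k$-th covariant derivatives from the non-commuting family $\{\Omega^I\}$. I expect this to be the main obstacle: keeping track of the curvature corrections and the combinatorics cleanly, though on $\s^2$ everything is explicit and the constant $1/k!$ is deliberately wasteful so as not to need a sharp count. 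In practice one proves it by induction on $k$, using \eqref{eq:sphere1} (the $k=1$ case) as the base and commuting one $\Omega_i$ through at each step.

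Finally, the Sobolev embedding \eqref{eq:sphere5} follows from the general Sobolev embedding $H^s(\mathcal{N})\hookrightarrow L^\infty(\mathcal{N})$ on a compact $2$-manifold for $s>1$, applied with $s=2$; equivalently, one can argue spectrally: writing $f=\sum_{\ell,m}c_{\ell m}Y_{\ell m}$ and using $\|Y_{\ell m}\|_{L^\infty}\lesssim \ell^{1/2}$ together with the $2\ell+1$ multiplicity, $\|f\|_{L^\infty}\le \sum_{\ell}(2\ell+1)^{1/2}(\sup_m|c_{\ell m}|)\ell^{1/2}\lesssim \big(\sum_\ell (1+\ell(\ell+1))^2\sum_m|c_{\ell m}|^2\big)^{1/2}\big(\sum_\ell \ell^2(2\ell+1)(1+\ell(\ell+1))^{-2}\big)^{1/2}$ by Cauchy–Schwarz, and the second factor converges while the first is comparable to $\|f\|_{H^2(\s^2)}$ by \eqref{eq:sphere4} and \eqref{eq:sphere3}. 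Either route gives a numerical constant $C$, independent of $f$, which is all that is claimed. Throughout, the only subtlety worth flagging is that the coordinate vector fields $\Omega_i$ and the operator $\slashed{\Delta}_{\s^2}$ degenerate at the poles $\theta\in\{0,\pi\}$, so all integration-by-parts manipulations should first be justified for smooth $f$ (where there is no boundary term since $\s^2$ is closed and the apparent singularity is a coordinate artifact), and then extended to $H^k(\s^2)$ by density.
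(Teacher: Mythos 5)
Your treatment of \eqref{eq:sphere1}, \eqref{eq:sphere2}, \eqref{eq:sphere3}, \eqref{eq:sphere4} and \eqref{eq:sphere5} is correct and essentially the paper's argument: decompose into spherical harmonics, use $\slashed{\Delta}_{\s^2}f_{\ell}=-\ell(\ell+1)f_{\ell}$ and integrate by parts, with \eqref{eq:sphere1} coming from the relation between the $\Omega_i$ and $\slashed{\Delta}_{\s^2}$ (equivalently the pointwise identity $\sum_i(\Omega_i f)^2=|\snabla_{\s^2}f|^2$), and \eqref{eq:sphere5} being the standard $H^2(\s^2)\hookrightarrow L^{\infty}(\s^2)$ embedding, which the paper simply quotes.

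Where you diverge is \eqref{eq:sphere1b}, and this is also the one place where your plan is not actually carried out: you propose a Bochner/Weitzenb\"ock-type induction, commuting $\snabla_{\s^2}$ past the Killing fields $\Omega_i$ and absorbing curvature corrections, and you yourself flag the bookkeeping and the ``reconstruction'' step behind the $1/k!$ lower bound as the unresolved obstacle. The paper avoids all of this by exploiting $[\Omega_i,\slashed{\Delta}_{\s^2}]=0$: restrict to a single mode $f_{\ell}$; since each $\Omega_i$ preserves the degree-$\ell$ eigenspace, repeated application of \eqref{eq:sphere1} gives the exact identity $\sum_{|J_k|=1}\cdots\sum_{|J_1|=1}\int_{\s^2}(\Omega^{J_k}\cdots\Omega^{J_1}f_{\ell})^2\,d\sigma=(\ell(\ell+1))^k\int_{\s^2}f_{\ell}^2\,d\sigma$, which by \eqref{eq:sphere3} is exactly $\int_{\s^2}|\snabla_{\s^2}^k f_{\ell}|^2\,d\sigma$. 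Both inequalities in \eqref{eq:sphere1b} then reduce to a purely combinatorial comparison between the multi-index sum $\sum_{|I|=k}$ and the fully ordered iterated sum (they differ by a factor of at most $k!$), with no curvature terms anywhere; summing over $\ell$ is legitimate because $\Omega^{I}f_{\ell}$ and $\Omega^{I}f_{\ell'}$ lie in orthogonal eigenspaces. So your route for \eqref{eq:sphere1b} is not wrong in spirit, but as written it leaves precisely the hard step (the quantitative lower bound with constant $1/k!$) as an assertion; if you adopt the mode-restriction trick, that step becomes trivial, and I would recommend rewriting that part of your argument accordingly rather than pursuing the commutator estimates.
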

\begin{proof}
The estimates \eqref{eq:sphere2}--\eqref{eq:sphere4} follow simply by using that $f=\sum_{\ell\in \N_0}f_{\ell}$, restricting to $f_{\ell}$, making use of the identity $\slashed{\Delta}_{\s^2}f_{\ell}=-\ell(\ell+1)f_{\ell}$ and integrating by parts.

We obtain \eqref{eq:sphere1} by applying the expressions for $\Omega_i$ above. For \eqref{eq:sphere1b}, it is convenient to restrict to $f_{\ell}$. Using \eqref{eq:sphere1}, we obtain:
\begin{equation*}
\sum_{|I|= k} \int_{\s^2} (\Omega^{I}f_{\ell})^2\,d\sigma\leq \sum_{|J_k|=1}\ldots \sum_{|J_1|= 1} \int_{\s^2} (\Omega^{J_k}\ldots \Omega^{J_1}f_{\ell})^2\,d\sigma=(\ell(\ell+1))^k \int_{\s^2}f_{\ell}^2\,d\sigma.
\end{equation*}
Similarly,
\begin{equation*}
(\ell(\ell+1))^k \int_{\s^2}f_{\ell}^2\,d\sigma= \sum_{|\beta_k|=1}\ldots \sum_{|J_1|= k} \int_{\s^2} (\Omega^{J_k}\Omega^{J_1}f_{\ell})^2\,d\sigma\leq k! \sum_{|I|= k} \int_{\s^2} (\Omega^{I}f_{\ell})^2\,d\sigma.
 \end{equation*}
We then apply \eqref{eq:sphere3} and sum over $\ell$  to obtain \eqref{eq:sphere1b}. The inequality \eqref{eq:sphere5} is a standard Sobolev inequality on $\s^2$, so the proof will be omitted. 
\end{proof}

\subsection{A Hardy inequality}
Throughout the paper, we will frequently make use of the following Hardy inequality:
\begin{lemma}[Hardy inequality]
Let $f\in C^1([a,b])$ with $a,b\in \R_{\geq 0}$ such that $a<b$. Then for $q\in \R\setminus \{-1\}$:
\begin{align}
\label{eq:hardy}
\int_a^b x^q |f|^2(x)\,dx\leq \frac{2}{q+1}\left[b^{q+1}|f|^2(b)-a^{q+1}|f|^2(a)\right]+\frac{4}{(q+1)^2}\int_a^b x^{q+2}\left|\frac{df}{dx}\right|^2(x)\,dx.
\end{align}
\end{lemma}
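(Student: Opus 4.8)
The statement to prove is the Hardy inequality:
\begin{align*}
\int_a^b x^q |f|^2(x)\,dx\leq \frac{2}{q+1}\left[b^{q+1}|f|^2(b)-a^{q+1}|f|^2(a)\right]+\frac{4}{(q+1)^2}\int_a^b x^{q+2}\left|\frac{df}{dx}\right|^2(x)\,dx.
\end{align*}

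The plan is to start from the elementary identity $x^q = \frac{1}{q+1}\frac{d}{dx}(x^{q+1})$, valid since $q \neq -1$, and integrate by parts in the left-hand integral. Concretely, write $\int_a^b x^q |f|^2\,dx = \frac{1}{q+1}\int_a^b \frac{d}{dx}(x^{q+1})\,|f|^2\,dx$, and move the derivative onto $|f|^2$, picking up the boundary term $\frac{1}{q+1}[x^{q+1}|f|^2]_a^b$ and the bulk term $-\frac{1}{q+1}\int_a^b x^{q+1}\,\frac{d}{dx}(|f|^2)\,dx = -\frac{2}{q+1}\int_a^b x^{q+1} f\,\frac{df}{dx}\,dx$. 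This requires $f \in C^1$, which is assumed, and that $x^{q+1}$ makes sense on $[a,b]$; if $a = 0$ and $q+1 < 0$ there is a subtlety, but for $q > -1$ the boundary term at $a=0$ vanishes and for $q < -1$ one should read $a > 0$, so I would note this is harmless (or implicitly assume $x^q$ integrable, which is what makes the statement meaningful in the first place).

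Next I would estimate the bulk term $-\frac{2}{q+1}\int_a^b x^{q+1} f\,\frac{df}{dx}\,dx$ using Cauchy--Schwarz with a carefully chosen weight split: write $x^{q+1} = x^{q/2} \cdot x^{q/2+1}$ and bound $\big|\frac{2}{q+1}\int_a^b x^{q/2}f \cdot x^{q/2+1}\frac{df}{dx}\,dx\big| \leq \frac{2}{|q+1|}\left(\int_a^b x^q|f|^2\,dx\right)^{1/2}\left(\int_a^b x^{q+2}\big|\frac{df}{dx}\big|^2\,dx\right)^{1/2}$. Denoting $A = \int_a^b x^q|f|^2\,dx$ and $B = \int_a^b x^{q+2}\big|\frac{df}{dx}\big|^2\,dx$ and $\mathrm{BT} = \frac{1}{q+1}[b^{q+1}|f|^2(b) - a^{q+1}|f|^2(a)]$, I then have $A \leq \mathrm{BT} + \frac{2}{|q+1|}\sqrt{A}\sqrt{B}$. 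Applying Young's inequality $\frac{2}{|q+1|}\sqrt{A}\sqrt{B} \leq \frac12 A + \frac{2}{(q+1)^2}B$ absorbs half of $A$ to the left-hand side, giving $\frac12 A \leq \mathrm{BT} + \frac{2}{(q+1)^2}B$, i.e. $A \leq 2\,\mathrm{BT} + \frac{4}{(q+1)^2}B$, which is exactly the claimed inequality.

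I don't anticipate a genuine obstacle here — this is the standard one-line Hardy inequality proof via integration by parts plus Cauchy--Schwarz plus Young. The only point requiring a little care is the choice of constant in Young's inequality so that precisely $\frac12 A$ (and not more) is absorbed, yielding the sharp-looking factors $2$ and $4/(q+1)^2$ in the final bound; this forces using $\varepsilon = \frac12$ in $2\sqrt{A}\sqrt{B} \leq \varepsilon A + \varepsilon^{-1}B$ applied to $\frac{1}{|q+1|}\cdot 2\sqrt{A}\sqrt{B}$. A secondary cosmetic point is handling the absolute value $|q+1|$ versus $q+1$: since everything enters squared in the final term and the boundary term already carries the signed $\frac{1}{q+1}$, the result is stated correctly with $q+1$ throughout, and one just notes $\frac{1}{(q+1)^2} = \frac{1}{|q+1|^2}$.

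\begin{proof}
Since $q\neq -1$, we have $x^q=\frac{1}{q+1}\frac{d}{dx}(x^{q+1})$. Integrating by parts (using $f\in C^1([a,b])$, and noting that the integral on the left is finite, which is implicit in the statement),
\begin{align*}
\int_a^b x^q|f|^2(x)\,dx=\frac{1}{q+1}\left[b^{q+1}|f|^2(b)-a^{q+1}|f|^2(a)\right]-\frac{2}{q+1}\int_a^b x^{q+1}f(x)\frac{df}{dx}(x)\,dx.
\end{align*}
Write $A=\int_a^b x^q|f|^2(x)\,dx$, $B=\int_a^b x^{q+2}\left|\frac{df}{dx}\right|^2(x)\,dx$ and $\mathrm{BT}=\frac{1}{q+1}\left[b^{q+1}|f|^2(b)-a^{q+1}|f|^2(a)\right]$. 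Factoring $x^{q+1}=x^{\frac{q}{2}}\cdot x^{\frac{q}{2}+1}$ and applying the Cauchy--Schwarz inequality to the last integral gives
\begin{align*}
A\leq \mathrm{BT}+\frac{2}{|q+1|}\left(\int_a^b x^{q}|f|^2(x)\,dx\right)^{\frac12}\left(\int_a^b x^{q+2}\left|\frac{df}{dx}\right|^2(x)\,dx\right)^{\frac12}=\mathrm{BT}+\frac{2}{|q+1|}\sqrt{A}\,\sqrt{B}.
\end{align*}
By Young's inequality, $\frac{2}{|q+1|}\sqrt{A}\,\sqrt{B}\leq \frac12 A+\frac{2}{(q+1)^2}B$, so that $A\leq \mathrm{BT}+\frac12 A+\frac{2}{(q+1)^2}B$. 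Absorbing $\frac12 A$ into the left-hand side and multiplying by $2$ yields
\begin{align*}
\int_a^b x^q|f|^2(x)\,dx\leq \frac{2}{q+1}\left[b^{q+1}|f|^2(b)-a^{q+1}|f|^2(a)\right]+\frac{4}{(q+1)^2}\int_a^b x^{q+2}\left|\frac{df}{dx}\right|^2(x)\,dx,
\end{align*}
which is \eqref{eq:hardy}.
\end{proof}
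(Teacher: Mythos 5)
Your proof is correct and follows essentially the same route as the paper: both integrate the total derivative $\tfrac{d}{dx}(x^{q+1}|f|^2)$ (equivalently, integrate by parts against $\tfrac{1}{q+1}\tfrac{d}{dx}(x^{q+1})$) and then absorb half of $\int_a^b x^q|f|^2\,dx$ into the left-hand side. The only cosmetic difference is that the paper absorbs via a pointwise weighted Young inequality, whereas you use Cauchy--Schwarz followed by Young on the resulting numbers; the constants come out identically.
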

\begin{proof}
Integrate $\frac{d}{dx}(x^{q+1}|f|^2)$, apply the fundamental theorem of calculus and apply a (weighted) Young's inequality to estimate:
\begin{equation*}
\left|2 x^{q+1}f \frac{df}{dx}\right|\leq \frac{|q+1|}{2} x^q|f|^2+\frac{2}{|q+1|}x^{q+2}\left|\frac{df}{dx}\right|^2.
\end{equation*}
\end{proof}
\subsection{Additional notation}
We denote with $d\mu_{\tau}$ the standard volume form corresponding to the induced metric $i^*_{\Sigma_{\tau}}g_M$ along $\Sigma_{\tau}$, with $i_{\Sigma_{\tau}}: \Sigma_{\tau} \xhookrightarrow{}\mathcal{R}$ the inclusion map.

Throughout the paper we will frequently use $C$ or $c$ to denote positive constants depending only on $M,h$ and $V_{\alpha}$. In order to make the notation more compact, we will not keep precise track of constants and we will instead make use of the following ``algebra of constants'' when constants $c$ or $C$ appear in an equality:
\begin{equation*}
C+C=C\cdot C=C, c+c=c\cdot c=c.
\end{equation*}
We will moreover use the notations $f\lesssim g$ and $f\sim g$, with $f,g$ non-negative functions, to mean respectively:
\begin{equation*}
f\leq Cg,\quad c g\leq f \leq C g.
\end{equation*}

\section{Preliminaries: wave equations}
\label{sec:prelimwave}
In this section, we introduce the wave equations that are studied in the present paper, and we derive their global existence, uniqueness and regularity properties.
\subsection{Class of wave equations with potentials}
\label{sec:classeq}
Let $V_{\alpha}: [2M,\infty)\to \R$ be a smooth function, such that:
\begin{align*}
r^2V_{\alpha}(r)=&\:\alpha+O_{\infty}(r^{-1}),
\end{align*}
with $\alpha>-\frac{1}{4}$.

We additionally make the following assumption:
\begin{enumerate}[label=\Alph*)]
\item
\label{assm:A}
$V_{\alpha}$ satisfies the condition:
\begin{equation*}
V_{\alpha}(r)>-\frac{1}{4r^2}\quad \textnormal{for all $r\geq 2M$}.
\end{equation*}
\end{enumerate}
In fact, it is possible to consider the alternative assumption:
\begin{enumerate}[label=\Alph*')]
\item 
\label{assm:B}
The operator 
\begin{equation*}
\mathcal{L}_{\ell}=\frac{d}{dr}\left(Dr^2 \frac{d}{dr} (\cdot)\right)-[r^2V_{\alpha}+\ell(\ell+1)](\cdot)
\end{equation*}
has \emph{no zero-energy resonances} for any $\ell\in \N_0$: there do not exist smooth solutions $w_{\ell}$ to the ODE $\mathcal{L}_{\ell}w_{\ell}=0$ that satisfy outgoing boundary conditions, i.e.\ that behave asymptotically like the most rapidly decaying branch of ODE solutions as $r\to \infty$.
\end{enumerate}

Assumption \ref{assm:A} ensures coercivity of the energy corresponding to the vector field $T$; see Proposition \ref{prop:ebound}. In Lemma \ref{lm:propertiesw} below, it is shown that assumption \ref{assm:A} implies assumption \ref{assm:B}. For the sake of convenience, we will assume the stronger assumption \ref{assm:A} throughout the rest of the paper, as it allows us to employ more straightforward energy estimates.

In this article, we will focus on the equations:
\begin{equation}
\label{eq:waveeq}
\square_{g_M} \phi=V_{\alpha}\phi
\end{equation}
in $\mathcal{R}$, with $\square_{g_M}$ denoting the Laplace--Beltrami operator with respect to $g_M$.

We introduce the following rescaled function that will play an important role in the remainder of the article:
\begin{equation*}
\psi:=r \phi.
\end{equation*}
The limit $\psi|_{\mathcal{I}^+}$ is known as the \emph{(Friedlander) radiation field}.

In the lemma below we denote various different forms of the wave equation with potential \eqref{eq:waveeq}.
\begin{lemma}
Let $\phi$ be a solution to \eqref{eq:waveeq}. Then:
\begin{equation}
\label{eq:waveeq1}
0=X(Dr^2 X\phi)-2 r^2(1-h)XT\phi-r^2h\tilde{h}T^2\phi+\slashed{\Delta}_{\s^2}\phi-r^2V_{\alpha}\phi+\left(\frac{d}{dr}\left(r^2h\right)-2r\right)T\phi.
\end{equation}
Furthermore, \eqref{eq:waveeq} is equivalent to the following equations for $\psi$:
\begin{align}
\label{eq:waveeq2}
0=&\: X(DX\psi)-2(1-h)XT\psi-h\tilde{h}T^2\psi+r^{-2}\slashed{\Delta}_{\s^2}\psi+\frac{dh}{dr}T\psi-V_{\alpha}\psi-D' r^{-1}\psi,\\
\label{eq:waveeq2a}
0=&\: -4\underline{L} L\psi+Dr^{-2}\slashed{\Delta}_{\s^2}\psi-D(V_{\alpha}+D'r^{-1})\psi,
\end{align}
and we can express in $(\tau,x,\theta,\varphi)$ coordinates:
\begin{equation}
\label{eq:waveeq3}
0=\partial_x(Dx^2 \partial_x\psi)+2(1-h)\partial_x \partial_{\tau}\psi-r^2D^{-1}h(2-h)\partial_{\tau}^2\psi+\slashed{\Delta}_{\s^2}\psi+\frac{dh}{dr}r^2\partial_{\tau}\psi-r^2V_{\alpha}\psi-D' r \psi.
\end{equation}
\end{lemma}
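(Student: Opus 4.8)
The statement to prove is the lemma giving the various equivalent forms \eqref{eq:waveeq1}--\eqref{eq:waveeq3} of the wave equation $\square_{g_M}\phi = V_\alpha\phi$.

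\medskip

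\textbf{Plan of proof.} The entire lemma is a bookkeeping computation, so the plan is to start from a single clean expression for $\square_{g_M}$ and then convert between coordinate systems and rescalings. First I would record the formula for the Laplace--Beltrami operator, $\square_{g_M}\phi = \frac{1}{\sqrt{|\det g_M|}}\partial_\mu\bigl(\sqrt{|\det g_M|}\,(g_M^{-1})^{\mu\nu}\partial_\nu\phi\bigr)$, in the $(\tau,r,\theta,\varphi)$ chart, using the explicit form $g_M=-Dd\tau^2-2(1-h)d\tau dr+h\tilde h\,dr^2+r^2(d\theta^2+\sin^2\theta\,d\varphi^2)$ already derived in \S\ref{sec:metric}. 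A direct computation gives $\det g_M = -r^4\sin^2\theta\,(h\tilde h D+(1-h)^2)$; using $h=2-\tilde h D$ one checks $h\tilde h D+(1-h)^2 = h\tilde h D + (\tilde h D-1)^2 = \tilde h D(h+\tilde h D) - 2\tilde h D + 1 = 2\tilde h D - 2\tilde h D + 1 = 1$, so $\sqrt{|\det g_M|}=r^2\sin\theta$. This is the key simplification that makes the volume form coordinate-independent of $(h,\tilde h)$. Inverting the $2\times 2$ block in $(\tau,r)$ then gives $(g_M^{-1})^{\tau\tau}=-h\tilde h$, $(g_M^{-1})^{\tau r}=-(1-h)$, $(g_M^{-1})^{rr}=D$, together with $(g_M^{-1})^{\theta\theta}=r^{-2}$, $(g_M^{-1})^{\varphi\varphi}=r^{-2}\sin^{-2}\theta$. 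Plugging into the Laplace--Beltrami formula and multiplying through by $r^2$ (using $T=\partial_\tau$, $X=\partial_r$) yields \eqref{eq:waveeq1} after collecting the first-order terms into $\bigl(\frac{d}{dr}(r^2h)-2r\bigr)T\phi$; I would double-check the sign and the coefficient of the $XT$ cross term, since that is the most error-prone part.

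\medskip

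Next, for \eqref{eq:waveeq2} I would substitute $\phi = r^{-1}\psi$ into \eqref{eq:waveeq1} and carry out the Leibniz expansions: $X(Dr^2 X(r^{-1}\psi)) = X\bigl(DrX\psi - D\psi\bigr) = r X(DX\psi) + DX\psi - D'\psi - ... $, etc., and similarly for the $T$-derivative terms, then divide by $r$. The $r^{-2}\slashed\Delta$ term and the potential term pass through directly. The $D'r^{-1}\psi$ term emerges from the interaction of the $X(Dr^2X\cdot)$ piece with the first-order term $(\frac{d}{dr}(r^2h)-2r)T(\cdot)$ acting on $r^{-1}\psi$; tracking the algebra carefully here is what produces the stated coefficient. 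Equation \eqref{eq:waveeq2a} then follows from \eqref{eq:waveeq2} by rewriting $DX\psi$ and the $\partial_\tau$-terms in terms of the null frame: from \eqref{eq:LbarXT}--\eqref{eq:LXT} one has $\underline L L = (\frac{D}{2}\tilde h T - \frac{D}{2}X)(\frac{1}{2}hT+\frac{D}{2}X)$, and expanding this against \eqref{eq:waveeq2}, using the relation $h=2-\tilde h D$ to kill the unwanted terms, should reduce the principal part to $-4\underline L L\psi + Dr^{-2}\slashed\Delta\psi$; the lower-order terms combine into $-D(V_\alpha+D'r^{-1})\psi$. Finally \eqref{eq:waveeq3} is just \eqref{eq:waveeq2} rewritten with $x=1/r$, using $\partial_r = -x^2\partial_x$, so $X(DX\psi) = x^2\partial_x(Dx^2\partial_x\psi)$; since $D=1-2Mx\to$ smooth in $x$ and $h=h_0x^2+O(x^3)$ one gets $x^2\partial_x(x^2\partial_x\psi)+$(corrections), and after multiplying by $r^2=x^{-2}$ and reorganizing, the stated form with $r^2 h(2-h)=r^2 h\tilde h D + ...$—here I would use $h\tilde h = h\tilde h$ and note $r^2 h(2-h)$ appears because $2-h=\tilde h D$, so $h(2-h)=h\tilde h D$, consistent with the $-h\tilde h T^2\psi$ coefficient in \eqref{eq:waveeq2} after the $r^2$ rescaling, modulo the $D$ factor which I should check is absorbed correctly.

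\medskip

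\textbf{Main obstacle.} There is no conceptual difficulty; the only real risk is algebraic sign/coefficient errors, particularly (i) in the cross term $-2r^2(1-h)XT\phi$ of \eqref{eq:waveeq1} and whether the factor is $1-h$ or $h-1$, and (ii) in correctly producing the $-D'r^{-1}\psi$ and $-D'r\psi$ terms upon rescaling by $r^{\pm1}$, since these arise only from cancellations among several contributions. The way to make this robust is to verify the identity $h\tilde h D + (1-h)^2 = 1$ once and for all (so the volume form and the $(g_M^{-1})^{\tau\tau}$, $(g_M^{-1})^{\tau r}$ entries are pinned down), and then to check each displayed equation against the others in a known special case—e.g.\ the choice $\tilde h\equiv 0$, $h\equiv 2$ (the ingoing Eddington--Finkelstein gauge $\tau=v$), where \eqref{eq:waveeq1}--\eqref{eq:waveeq3} should reduce to the familiar $v$-coordinate forms with $T=\partial_v$, $X=\partial_r$. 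I would present the proof by deriving \eqref{eq:waveeq1} in detail, then indicating the rescaling $\phi=r^{-1}\psi$ and the null-frame substitution as routine computations leading to \eqref{eq:waveeq2}, \eqref{eq:waveeq2a}, \eqref{eq:waveeq3}.
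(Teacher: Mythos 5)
Your proposal is correct and takes essentially the same route as the paper, whose proof consists precisely of writing $\square_{g_M}$ via the Laplace--Beltrami formula $\frac{1}{\sqrt{-\det g_M}}\partial_{\alpha}(\sqrt{-\det g_M}(g_M^{-1})^{\alpha\beta}\partial_{\beta}\cdot)$ in the $(\tau,r,\theta,\varphi)$ chart and then using the vector-field relations \eqref{eq:LbarXT}--\eqref{eq:defZ}; your key simplifications ($h\tilde{h}D+(1-h)^2=1$, hence $\sqrt{-\det g_M}=r^2\sin\theta$, and the inverse-metric components $-h\tilde h$, $-(1-h)$, $D$) are exactly what make that computation go through, and the rescaling $\phi=r^{-1}\psi$ and null-frame substitution you outline do yield \eqref{eq:waveeq2} and \eqref{eq:waveeq2a} as claimed.
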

\begin{proof}
The equations are straightforward computations, which follow from writing:
\begin{equation*}
\square_{g_M}(\cdot)=\frac{1}{\sqrt{-\det g_M}}\sum_{0\leq \alpha,\beta\leq 3}\partial_{\alpha}\left(\sqrt{-\det g_M}(g^{-1}_M)^{\alpha \beta} \partial_{\beta}(\cdot)\right)
\end{equation*}
and making use of the relations between the different vector fields in \eqref{eq:LbarXT}--\eqref{eq:defZ}. See for example\\ \cite{paper1}[Appendix A.1] for similar computations.
\end{proof}

\subsection{Global existence, uniqueness and regularity at null infinity}
We establish below global existence and uniqueness for the initial value problem for \eqref{eq:waveeq} with initial data on $\Sigma_0$. We moreover show that regularity of the initial data with respect to the conformal manifold $\widehat{\mathcal{R}}$ is propagated.
\begin{proposition}
Let $k\in \N_0$. 
\begin{enumerate}[label=\emph{(\roman*)}]
\item \label{item:existence1}
Consider the initial data functions $(\phi_i,\phi'_i)\in H^{k+1}_{\rm loc}(\Sigma_0)\times H^{k}_{\rm loc}(\Sigma_0)$.  There exists a unique solution solution $\phi$ to \eqref{eq:waveeq} in $\mathcal{R}$ (in a weak sense), such that for all $T_*\geq 0$:
\begin{align*}
\phi\in &\:C^0([0,T_*], H^{k+1}_{\rm loc}(\Sigma_0)) \cap C^1([0,T_*], H^{k}_{\rm loc}(\Sigma_0)),\\
\phi|_{\Sigma_0}=&\:\phi_i,\\
T\phi|_{\Sigma_0}=&\:\phi'_i.
\end{align*}
Here the Sobolev spaces are defined with respect to the standard volume form on $\Sigma_0$.
\item \label{item:existence2} If we moreover assume that $(r\phi_i,r\phi'_i)\in H^{k+1}(\widehat{\Sigma}_0)\times H^{k}(\widehat{\Sigma}_0)$. Then, for all $\tau\geq 0$,
\begin{equation}
\label{eq:reghoderphi}
(\psi|_{\Sigma_{\tau}},T\psi|_{\Sigma_{\tau}})\in  H^{k+1}(\widehat{\Sigma}_{\tau})\times H^{k}(\widehat{\Sigma}_{\tau}).
\end{equation}
Here the Sobolev spaces are defined with respect to the standard volume form on $\widehat{\Sigma}_0$.
\end{enumerate}
\end{proposition}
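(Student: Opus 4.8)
The plan is to establish \ref{item:existence1} by a standard energy-method argument adapted to the hyperboloidal foliation, and then upgrade to \ref{item:existence2} by commuting with vector fields tangent to $\widehat{\Sigma}_\tau$ in the conformally rescaled picture. For \ref{item:existence1}, I would first work locally: on each compact subregion $\mathcal{R}\cap\{r\leq r_0\}$ the equation \eqref{eq:waveeq1} is a linear, strictly hyperbolic second-order PDE with smooth coefficients, the level sets $\Sigma_\tau$ are spacelike (as established in \S\ref{sec:metric}, since $(g_M)^{-1}(d\tau,d\tau)=-\tilde h h<0$), and the event horizon $\{r=2M\}$ is null and ingoing, so no boundary condition is needed there. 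Thus finite-speed-of-propagation plus the classical existence theory for the characteristic/spacelike initial value problem (Cauchy--Kovalevskaya-type approximation, Galerkin, or Leray's hyperbolic theory) gives a unique solution in $C^0H^{k+1}_{\rm loc}\cap C^1 H^k_{\rm loc}$ on each such subregion; uniqueness and the domain-of-dependence property let me patch these together into a global solution on $\mathcal{R}$. The energy estimate driving this is the $T$-energy estimate using that $T=\partial_v$ is Killing and $\mathbf{N}$-energy near the horizon, closed with Grönwall on the lower-order terms coming from $V_\alpha$ and the first-order coefficients $\frac{d}{dr}(r^2h)-2r$ in \eqref{eq:waveeq1}; these require no coercivity assumption for a local-in-$\tau$ bound, so \ref{assm:A} is not actually needed here, only for the global uniform boundedness in \S\ref{sec:boundedness}.

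For \ref{item:existence2}, the key observation is that regularity at $\mathcal{I}^+=\{x=0\}$ should be measured in the conformal picture using equation \eqref{eq:waveeq3}, which has coefficients smooth up to and including $x=0$ (this is exactly the point of the smooth extension of $\widehat g_M$ noted in \S\ref{sec:metric}, using that $r^2 h$ and $r^2 V_\alpha$ extend smoothly in $x$). The idea is to commute \eqref{eq:waveeq3} with the vector fields $\{T=\partial_\tau,\ x\partial_x,\ \Omega_i\}$, all of which are tangent to $\widehat{\Sigma}_\tau$ and have coefficients regular at $x=0$, and to run a $T$-energy estimate for the commuted equations. The principal part $\partial_x(x^2\partial_x\cdot)+2(1-h)\partial_x\partial_\tau(\cdot)-r^2h(2-h)\partial_\tau^2(\cdot)$ degenerates at $x=0$, but this degeneration is precisely of the type for which a weighted energy — the natural $T$-energy associated to \eqref{eq:waveeq3} on $\widehat{\Sigma}_\tau$, which at $x=0$ reduces to $\int_{\s^2}(|\partial_\tau\psi|^2 + |\snabla_{\s^2}\psi|^2)\,d\sigma$ plus transversal $\partial_x$-flux — controls the $H^k(\widehat{\Sigma}_\tau)$ norms after also using the elliptic structure of the equation in $x$ near $x=0$ to trade a $\partial_x^2$ derivative for $\partial_\tau$-derivatives, angular derivatives and lower order terms. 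Iterating this commutation $k$ times and Grönwall-ing closes \eqref{eq:reghoderphi}.

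The main obstacle I anticipate is the degeneracy of the principal symbol of \eqref{eq:waveeq3} at $x=0$ combined with the fact that $\partial_x$ (rather than $x\partial_x$) is the transversal derivative measured by $H^{k}(\widehat\Sigma_\tau)$: one cannot simply commute with $\partial_x$ freely, since $[\partial_x, \partial_x(x^2\partial_x\cdot)]$ produces a term of the same differential order as the principal part with an unfavourable weight. The way around this is the standard one for conformal regularity proofs at null infinity: use the equation itself to express $\partial_x^2\psi$ in terms of $x\partial_x\psi$, $\partial_\tau\psi$, $\partial_\tau\partial_x\psi$, $\snabla^2_{\s^2}\psi$ and $\psi$, so that all needed $x$-derivatives beyond the first are recovered algebraically from $\tau$- and angular-derivatives, which ARE controlled by the commuted energies; then one only ever commutes with the "good" fields $\partial_\tau$, $\Omega_i$ (and at most one $x\partial_x$ at the lowest level, handled by a separate Hardy-type estimate \eqref{eq:hardy} near $x=0$). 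The bookkeeping of the weights in $x$ is the technical heart of the argument, but there is nothing conceptually new; a clean way to present it is to state the commuted energy hierarchy as a single inequality and verify the error terms have the right signs and weights, deferring the routine coefficient computations.
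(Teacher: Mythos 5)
Part \emph{(i)} of your proposal is fine and is essentially the paper's argument: local-in-$\tau$ energy estimates with a timelike multiplier (constants allowed to depend on $T_*$, so no coercivity assumption on $V_\alpha$ is needed), a functional-analytic existence step, and standard regularity improvement, patched via the domain of dependence.

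For part \emph{(ii)} your starting point — work with \eqref{eq:waveeq3}, whose coefficients are regular up to $x=0$ on $\widehat{\mathcal{R}}$ — is the same as the paper's, but the energy you propose to run does not close, and this is a genuine gap rather than bookkeeping. In the conformal metric one has $\widehat{g}_M(\partial_\tau,\partial_\tau)=-x^2$, so $T=\partial_\tau$ is \emph{null} at $\mathcal{I}^+$; consequently the $T$-energy flux through $\widehat{\Sigma}_\tau$ associated with \eqref{eq:waveeq3} controls the transversal derivative only with the degenerate weight $x^2$, i.e.\ $\int x^2(\partial_x\psi)^2\,d\sigma dx$, not the $\int(\partial_x\psi)^2\,d\sigma dx$ required for $H^{k+1}(\widehat{\Sigma}_\tau)$. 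Commuting with the tangential fields $T,\,x\partial_x,\,\Omega_i$ does not repair this: every commuted $T$-energy carries the same $x^2$ weight on its $\partial_x$-derivative, and the Hardy inequality \eqref{eq:hardy} leaves a persistent two-powers-of-$x$ deficit. Your fallback of ``using the elliptic structure of the equation in $x$'' also fails exactly where it is needed: the coefficient of $\partial_x^2$ in \eqref{eq:waveeq3} is $x^2$, so at $x=0$ the equation is not elliptic in $x$; it degenerates into a transport equation along $\mathcal{I}^+$ for $\partial_x\psi$ (the $2(1-h)\partial_x\partial_\tau\psi$ term dominates), and in any case the missing control is already at the level of the \emph{first} transversal derivative, which no trade of $\partial_x^2$ can supply.

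The fix is simpler than your workaround and is what the paper's proof does: since only estimates with constants depending on $T_*$ are claimed, use a multiplier that is uniformly timelike for $\widehat{g}_M$ up to and including $x=0$ (e.g.\ the normal $\mathbf{n}_\tau$ to $\widehat{\Sigma}_\tau$, or an extension of $\mathbf{N}$ — note the paper's sketch integrates $\mathbf{N}\phi\,\square_g\phi$, not $T\phi\,\square_g\phi$). Because the $\widehat{\Sigma}_\tau$ remain uniformly spacelike at $x=0$ (indeed $\widehat{g}_M^{-1}(d\tau,d\tau)\to -2h_0<0$) and $\widehat{g}_M$ is a regular Lorentzian metric on the manifold-with-boundary, this energy is coercive over \emph{all} first derivatives, including $\partial_x\psi$ with a non-degenerate weight; the fluxes through the null boundaries $\mathcal{H}^+$ and $\mathcal{I}^+$ have favourable signs, and the bulk terms from the (bounded) deformation tensor and lower-order coefficients are absorbed by Gr\"onwall. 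With such a non-degenerate energy your worry about commuting with $\partial_x$ also evaporates: the commutator $[\partial_x,\partial_x(x^2\partial_x\cdot)]=2x\partial_x^2+2\partial_x$ has bounded coefficients on $\widehat{\mathcal{R}}$ and its contribution is controlled by the commuted energy itself, so one may commute with $\partial_\tau$, $\partial_x$ and $\Omega_i$ freely and close the $H^{k+1}\times H^k$ propagation.
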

\begin{proof}
Part \emph{\ref{item:existence1}} follows from standard global existence and uniqueness arguments for linear wave equations, so we will only sketch the proof: 
\begin{enumerate}[label=\arabic*.]
\item We establish first an energy estimate in $\bigcup_{\tau \in [0,T_*]} \Sigma_{\tau}$, with constants depending on $T_*$, by integrating by parts $\mathbf{N}\phi\square_g\phi$ and then repeating the argument with higher-order energies, by integrating by parts $(-1)^k\mathbf{N}\slashed{\Delta}_{\s^2}^k X^j T^l\phi \square_g  X^j T^l\phi$. 

\item We apply a functional analytic argument, using the above energy estimates to establish the existence of weak solutions; see for example \S I.3 of \cite{sogge},
\item We improve the regularity of the solutions; see for example \S I.3 of \cite{sogge}.
\end{enumerate}

In order to obtain part \emph{\ref{item:existence2}}, we instead work with the equation \eqref{eq:waveeq3} and apply the arguments for global existence and uniqueness as above, using that  the coefficients in \eqref{eq:waveeq3} are sufficiently regular with respect to the differentiable structure on the manifold-with-boundary $\widehat{\mathcal{R}}$.
\end{proof}

\subsection{Definition of stationary weight functions}
\label{sec:weightfunctions}
The following number will play a fundamental role in the decay rates and late-time asymptotics of $\phi$:
\begin{equation*}
\beta_{\ell}=\sqrt{1+4\alpha+4\ell(\ell+1)}.
\end{equation*}
The number $\beta_{\ell}$ arises from the asymptotic behaviour of stationary solutions to \eqref{eq:waveeq}, as demonstrated in Lemma \ref{lm:propertiesw} below, where we derive the existence, uniqueneness and $r\to \infty$ asymptotics of stationary solutions.
\begin{lemma}
\label{lm:propertiesw}
Let $V_{\alpha}$ satisfy assumption \ref{assm:A}. There exists a unique smooth solution $w_{\ell}: [2M,\infty) \to (0,\infty)$ to
\begin{equation}
\label{eq:odew}
(Dr^2 w_{\ell}')'-[r^2V_{\alpha}+\ell(\ell+1)]w_{\ell}=0,
\end{equation}
such that 
\begin{equation}
\label{eq:asymptw}
w_{\ell}(r)= r^{-\frac{1}{2}+\frac{1}{2}\beta_{\ell}}+O_{\infty}(r^{-\frac{1}{2}-\frac{1}{2}\beta_{\ell}})+O_{\infty}(r^{-\frac{3}{2}+\frac{1}{2}\beta_{\ell}}).
\end{equation}
The functions $w_{\ell}\cdot Y_{\ell m}$ with $|m|\leq \ell$ are stationary solutions to \eqref{eq:waveeq} (i.e.\ solutions with vanishing $T$-derivative).
\end{lemma}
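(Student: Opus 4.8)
The plan is to analyze the ODE \eqref{eq:odew} as a second-order linear ODE with a regular singular point at $r=\infty$ (equivalently $x=1/r=0$) and an ordinary point structure on $[2M,\infty)$, and to use the assumption \ref{assm:A} (positivity of the potential term) to pin down which of the two solution branches gives a \emph{positive} solution. First I would rewrite \eqref{eq:odew} near infinity using $x=1/r$. Since $D=1-2M/r$, the operator $(Dr^2w_\ell')'$ becomes, after the change of variables, a perturbation of $\frac{d}{dx}(x^2\frac{d}{dx}\cdot)$ up to smooth (in $x$) coefficients, and the zeroth-order term is $[r^2V_\alpha+\ell(\ell+1)] = \alpha+\ell(\ell+1)+O_\infty(x)$. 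Thus the indicial equation for a solution behaving like $r^{s}=x^{-s}$ is obtained from the leading balance $\frac{d}{dr}(r^2\frac{d}{dr}r^s) = s(s+1)r^s$ against $(\alpha+\ell(\ell+1))r^s$, giving $s(s+1)=\alpha+\ell(\ell+1)$, i.e. $s_{\pm}=-\tfrac12\pm\tfrac12\sqrt{1+4\alpha+4\ell(\ell+1)}=-\tfrac12\pm\tfrac12\beta_\ell$. Since $\alpha\neq l(l+1)$ for any $l\in\N_0$, one checks $\beta_\ell$ is not an odd... more to the point, the two indicial roots differ by $\beta_\ell$, which is not a nonnegative integer under our standing assumptions (this should be verified: $\beta_\ell\in\Z$ would force $\alpha+\ell(\ell+1)=\tfrac14(\beta_\ell^2-1)$ to be of the form $m(m+1)$, contradicting $\alpha\neq l(l+1)$ after re-indexing), so Frobenius theory produces two linearly independent solutions, one $\sim r^{-\frac12+\frac12\beta_\ell}$ and one $\sim r^{-\frac12-\frac12\beta_\ell}$, each with a \emph{convergent} expansion in powers of $x$; collecting the first correction term (which sits at order $x$ relative to the leading power, coming from the $O_\infty(x)$ coefficients and the $2M/r$ in $D$) yields exactly the error terms $O_\infty(r^{-\frac32+\frac12\beta_\ell})$ and $O_\infty(r^{-\frac12-\frac12\beta_\ell})$ displayed in \eqref{eq:asymptw}; I would note that these two candidate error exponents are $\beta_\ell$ apart, so the expansion of the large branch is $r^{-\frac12+\frac12\beta_\ell}(1+O(x)+O(x^{\beta_\ell}))$, matching the statement. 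This gives existence of \emph{a} solution $w_\ell$ with the stated asymptotics, unique up to scaling once we normalize the leading coefficient to $1$.

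The substantive part is showing this $w_\ell$ is everywhere \emph{positive} on $[2M,\infty)$ and is the \emph{unique} (normalized) smooth such solution — and this is where assumption \ref{assm:A} enters and where I expect the main obstacle. The idea is a maximum-principle / ODE-comparison argument: write \eqref{eq:odew} as $(p w_\ell')' = q w_\ell$ with $p=Dr^2>0$ on $(2M,\infty)$ and $q=r^2V_\alpha+\ell(\ell+1)$, where assumption \ref{assm:A} gives $q > -\tfrac14+\ell(\ell+1)\geq -\tfrac14$, and in particular $q\geq 0$ when $\ell\geq 1$; for $\ell=0$ one only has $q>-\tfrac14$, so positivity of $q$ can fail and the argument is more delicate. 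I would argue as follows: take the solution $w_\ell$ selected by the decaying-from-above-at-infinity normalization (the branch $\sim r^{-\frac12+\frac12\beta_\ell}$, which is the larger branch), which is positive for large $r$. Suppose it has a first zero at some $r_0>2M$ (coming in from infinity). On $(r_0,\infty)$, $w_\ell>0$. If $q\geq 0$ there, $(pw_\ell')'\geq 0$, so $pw_\ell'$ is nondecreasing; combined with the decay $w_\ell\to 0$ and $w_\ell>0$ at infinity, a Sturm-type argument (or directly integrating $(pw_\ell')'=qw_\ell\geq0$ from $r_0$ and using that $pw_\ell'$ cannot be eventually positive while $w_\ell\to0^+$) forces $w_\ell'<0$ on $(r_0,\infty)$ near $r_0$, but then $w_\ell$ crossing zero at $r_0$ from positive values with negative slope means $w_\ell<0$ just past... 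I need to be careful with directions; the cleanest route is a Sturm comparison: since $q > -\tfrac14 r^{-2}\cdot r^2$ — wait, more precisely assumption \ref{assm:A} says $V_\alpha > -\tfrac1{4r^2}$, i.e. $r^2V_\alpha > -\tfrac14$, so $q=r^2V_\alpha+\ell(\ell+1) > \ell(\ell+1)-\tfrac14$. The relevant fact is that this is exactly the threshold at which the associated energy/Hardy form $\int (Dr^2|w'|^2 + q|w|^2)$ stays coercive, which is used in Proposition \ref{prop:ebound}. I would leverage that coercivity: any nontrivial solution that decays at infinity (i.e. has finite weighted energy) and vanishes at an interior point $r_0$ would, by integrating $w_\ell\cdot(\text{ODE})$ over $(r_0,\infty)$ and discarding nonnegative boundary terms, yield $\int_{r_0}^\infty (Dr^2|w_\ell'|^2 + q|w_\ell|^2)\,dr = 0$, which by the Hardy-inequality-based positivity (the same inequality underlying \ref{assm:A}) forces $w_\ell\equiv0$, a contradiction. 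Hence $w_\ell$ has no interior zero and, being positive at infinity, is positive throughout $(2M,\infty)$; smoothness and positivity at $r=2M$ follow since $r=2M$ is a regular point of the ODE in $v$-type coordinates (the coefficient $D$ vanishes there but $Dr^2w_\ell'$ extends smoothly — this needs the standard observation that \eqref{eq:odew} is a regular ODE across the horizon). Uniqueness among solutions with the prescribed leading asymptotics is immediate: the difference of two such solutions is a solution that is $o(r^{-\frac12+\frac12\beta_\ell})$, hence must be a multiple of the \emph{other}, subdominant branch $\sim r^{-\frac12-\frac12\beta_\ell}$; but — and here is a second point requiring the no-resonance/positivity input — a solution purely of subdominant type decays strictly faster than $r^{-1/2}$, so it lies in the weighted energy space, and the same energy argument (it would have to vanish at the horizon or at infinity in a way forcing triviality) shows it is zero; alternatively one notes \eqref{eq:asymptw} only determines $w_\ell$ up to adding the subdominant branch, and then normalizes by declaring the subdominant coefficient to be fixed by the requirement of smoothness/boundedness at $r=2M$, which is exactly the content of assumption \ref{assm:B} (no zero-energy resonance), consistent with the remark in the text that \ref{assm:A}$\Rightarrow$\ref{assm:B}.

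Finally, for the last sentence — that $w_\ell\cdot Y_{\ell m}$ solves \eqref{eq:waveeq} — I would simply substitute $\phi = w_\ell(r)Y_{\ell m}(\theta,\varphi)$ into the form \eqref{eq:waveeq1} of the equation. Since $T\phi=0$, all $T$- and $XT$-terms drop, and what remains is $X(Dr^2X\phi)+\slashed\Delta_{\s^2}\phi - r^2V_\alpha\phi = Y_{\ell m}\big[(Dr^2w_\ell')' - r^2V_\alpha w_\ell\big] + w_\ell\,\slashed\Delta_{\s^2}Y_{\ell m} = Y_{\ell m}\big[(Dr^2w_\ell')' - (r^2V_\alpha+\ell(\ell+1))w_\ell\big] = 0$ by \eqref{eq:odew} and $\slashed\Delta_{\s^2}Y_{\ell m}=-\ell(\ell+1)Y_{\ell m}$. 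This is a one-line verification once the ODE is in hand.

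I expect the genuine difficulty to be the positivity/uniqueness step for $\ell=0$, where $q$ need not be nonnegative and one cannot run a naive maximum principle — there one must really use the Hardy inequality \eqref{eq:hardy} (with the borderline weight that makes the constant $\tfrac14$ sharp) to convert assumption \ref{assm:A} into coercivity of $\int (Dr^2|w'|^2+qw^2)$ on functions decaying at infinity, and then argue that an interior zero or a subdominant-type solution would make this quadratic form vanish on a nontrivial function. Handling the boundary behavior at $r=2M$ carefully (making sure no uncontrolled boundary term appears there, which is fine since $D(2M)=0$) and at $r=\infty$ (the boundary terms decay because the large branch is still $o(r^{-1/2})$ only marginally — here the strict inequality in \ref{assm:A} and the non-integer nature of $\beta_\ell$ matter) is the delicate bookkeeping.
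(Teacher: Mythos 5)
Your overall skeleton (two branches at $r=\infty$, Hardy/assumption \ref{assm:A} coercivity, direct substitution for the last claim) contains the right ingredients, but as written the argument has genuine gaps, and the most serious one is at the horizon. The point $r=2M$ is \emph{not} a regular point of \eqref{eq:odew}: since $Dr^2=r(r-2M)$, it is a regular singular point with indicial roots $0,0$, so next to the bounded branch there is a second solution behaving like $\log(r-2M)$ (your observation that $Dr^2w_\ell'$ extends continuously is true for that singular solution as well, so it rules out nothing). Consequently a solution selected only by its behaviour at infinity is generically log-singular at $r=2M$, and your Frobenius step does not yet produce the object of Lemma \ref{lm:propertiesw}; likewise ``unique up to scaling once we normalize the leading coefficient'' is incorrect — \eqref{eq:asymptw} fixes $w_\ell$ only up to adding a multiple of the subdominant branch, as you acknowledge later. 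The real content of the lemma is that this one-parameter family contains exactly one horizon-regular member, and both existence and uniqueness of that member are equivalent to the absence of a zero-energy resonance; you quote \ref{assm:A}$\Rightarrow$\ref{assm:B} from the text, but in the paper that implication is itself a consequence of the appendix machinery (Proposition \ref{prop:keyestw} and Corollary \ref{eq:exuniqode}) that you are replacing, so in your route you must prove it, e.g.\ via the identity $\int_{2M}^{\infty}\bigl(Dr^2(w')^2+[r^2V_{\alpha}+\ell(\ell+1)]w^2\bigr)dr=0$ for a putative resonance (there the flux terms genuinely vanish: at infinity by subdominant decay, at $r=2M$ by regularity and $D(2M)=0$) combined with \eqref{eq:hardy} in the variable $r-2M$. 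The paper sidesteps this matching problem entirely: it corrects the explicit ansatz $v_\ell=r^{-\frac12+\frac12\beta_\ell}+b_0r^{-\frac32+\frac12\beta_\ell}$ by solving $\mathcal{L}_{w_\ell}u=\mathcal{L}_{w_\ell}v_\ell=O_{\infty}(r^{-\frac32+\frac12\beta_\ell})$ globally on $[2M,\infty)$ by Lax--Milgram in weighted spaces whose coercivity encodes assumption \ref{assm:A}, so regularity up to the horizon, uniqueness, and the size of the correction all come from one estimate.

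Two further steps would fail as written. First, Frobenius theory requires analytic coefficients, whereas $r^2V_{\alpha}=\alpha+O_{\infty}(r^{-1})$ is only a smoothness/derivative-bound hypothesis, so there is no convergent series in $x=1/r$; you need an asymptotic ODE (variation of parameters/fixed-point) argument to construct the branches with $O_{\infty}$ error control. Moreover your claim that $\beta_\ell$ cannot be a nonnegative integer is both unnecessary (the lemma assumes only \ref{assm:A}, under which e.g.\ $\alpha=0$ gives $\beta_\ell=2\ell+1$) and incorrectly justified even under the standing assumptions ($\alpha=4$, $\ell=1$ gives $\beta_1=5$ although $4\neq l(l+1)$), so the resonant case must be admitted and handled (it is harmless for the dominant branch once $\beta_\ell>1$, but that needs saying). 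Second, in your positivity argument the boundary term at infinity does not vanish: for the dominant branch $Dr^2w_\ell w_\ell'\sim r^{\beta_\ell}\to\infty$ and $\int^{\infty}Dr^2(w_\ell')^2dr=\infty$, and your justification that the large branch is ``$o(r^{-1/2})$'' is backwards ($r^{-\frac12+\frac12\beta_\ell}$ decays more slowly than $r^{-1/2}$ and grows when $\beta_\ell>1$). The clean fix — which again presupposes horizon regularity — is to integrate $w_\ell\,\mathcal{L}_{w_\ell}w_\ell$ over the \emph{compact} interval $[2M,r_0]$ between the horizon (flux vanishes since $w_\ell,w_\ell'$ are bounded and $D(2M)=0$) and the putative zero $r_0$ (flux vanishes since $w_\ell(r_0)=0$), and then use \eqref{eq:hardy} with $x=r-2M$ together with $r^2V_{\alpha}>-\tfrac14$ to conclude $w_\ell\equiv0$ on $[2M,r_0]$, contradicting the asymptotics; this is in the spirit of, but not identical to, the paper's sketch. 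Your final verification that $w_\ell Y_{\ell m}$ is a stationary solution of \eqref{eq:waveeq} is correct and matches the paper.
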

\begin{proof}
See Appendix \ref{app:weightfunction}.
\end{proof}
\begin{remark}
When $V_{\alpha}\equiv 0$, we have that $w_{\ell}(r)=P_{\ell}(r-M)$, with $P_{\ell}(y)$ the $\ell$-th Legendre polynomial in $y$. Furthermore, in this case $w_{\ell}(r)\sim r^{\ell}$ as $r\to \infty$.
\end{remark}
We introduce the following \emph{renormalization} of the spherical harmonic modes $\phi_{\ell}$:
\begin{equation*}
\check{\phi}_{\ell}= w_{\ell}^{-1} \phi_{\ell},
\end{equation*}
In the lemma below, we obtain equations for the renormalized functions $\check{\phi}_{\ell}$
\begin{lemma}
\label{lm:eqrenorm}
Let $\phi$ be a solution to \eqref{eq:waveeq}. Then $\check{\phi}_{\ell}$ satisfies the equations:
\begin{align}
\label{eq:checkpsi1}
0=&\:Z^2\check{\phi}_{\ell}+2(r^{-1}+Dw_{\ell}'w_{\ell}^{-1}) Z\check{\phi}_{\ell}-T^2\check{\phi}_{\ell},\\
\label{eq:checkpsi2}
0=&\:X(Dr^2w_{\ell}^2X\check{\phi}_{\ell})-2r^2w_{\ell}^2(1-h)XT\check{\phi}_{\ell}-h\tilde{h}r^2w_{\ell}^2T^2\check{\phi}_{\ell}-(w_{\ell}^2r^2(1-h))'T\check{\phi}_{\ell}.
\end{align}
\end{lemma}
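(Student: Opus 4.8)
The statement is Lemma~\ref{lm:eqrenorm}: given a solution $\phi$ of \eqref{eq:waveeq}, its renormalized spherical harmonic mode $\check\phi_\ell = w_\ell^{-1}\phi_\ell$ satisfies \eqref{eq:checkpsi1} and \eqref{eq:checkpsi2}. The plan is purely computational: substitute $\phi_\ell = w_\ell \check\phi_\ell$ into the mode-projected version of the wave equation and use the stationary ODE \eqref{eq:odew} satisfied by $w_\ell$ to cancel all terms in which $\check\phi_\ell$ appears undifferentiated. The key structural point driving everything is that $w_\ell\cdot Y_{\ell m}$ is a stationary solution of \eqref{eq:waveeq} (Lemma~\ref{lm:propertiesw}), so conjugating the spatial part of the wave operator by $w_\ell$ must produce an operator with \emph{no zeroth-order term}; the two displayed identities are just two coordinate incarnations of that fact.

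\textbf{Step 1 (mode projection).} First I would project \eqref{eq:waveeq} onto the $\ell$-th spherical harmonic, replacing $\slashed{\Delta}_{\s^2}\phi_\ell$ by $-\ell(\ell+1)\phi_\ell$. Using the form \eqref{eq:waveeq1}, this gives
\begin{equation*}
0 = X(Dr^2 X\phi_\ell) - 2r^2(1-h)XT\phi_\ell - r^2 h\tilde h\, T^2\phi_\ell - [\ell(\ell+1)+r^2 V_\alpha]\phi_\ell + \Big(\tfrac{d}{dr}(r^2 h)-2r\Big)T\phi_\ell.
\end{equation*}

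\textbf{Step 2 (proof of \eqref{eq:checkpsi2}).} Substitute $\phi_\ell = w_\ell\check\phi_\ell$. For the first term, $X(Dr^2 X(w_\ell\check\phi_\ell)) = X\big(Dr^2 w_\ell X\check\phi_\ell\big) + X\big(Dr^2 w_\ell' \check\phi_\ell\big)$; expanding the second piece and comparing against the target expression $X(Dr^2 w_\ell^2 X\check\phi_\ell)$, one collects the residual zeroth-order-in-$\check\phi_\ell$ terms. These are, after using $Xw_\ell = w_\ell'$ (since $w_\ell$ is stationary, so $Tw_\ell=0$ and $X=\partial_r$ acting on it is just $w_\ell'$), precisely $\big[(Dr^2 w_\ell')' - (\ell(\ell+1)+r^2 V_\alpha)w_\ell\big]\check\phi_\ell$, which vanishes by \eqref{eq:odew}. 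The remaining first-order terms in $\check\phi_\ell$ are bookkeeping: the cross terms from $X(Dr^2 w_\ell X\check\phi_\ell)$ combine with $-2r^2 w_\ell(1-h)XT\check\phi_\ell$ and the $T\check\phi_\ell$ terms, and one checks they reorganize into the claimed $-2r^2 w_\ell^2(1-h)XT\check\phi_\ell - h\tilde h r^2 w_\ell^2 T^2\check\phi_\ell - (w_\ell^2 r^2(1-h))'T\check\phi_\ell$ after multiplying through by $w_\ell$ and using the product rule $(w_\ell^2 r^2(1-h))' = 2w_\ell w_\ell' r^2(1-h) + w_\ell^2(r^2(1-h))'$ together with $\tilde h = (2-h)/D$, i.e. $h\tilde h + \cdots$ identities from \S\ref{sec:metric}. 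I expect the matching of the $T\phi_\ell$-coefficient $\tfrac{d}{dr}(r^2h)-2r$ against the $w_\ell$-conjugated first-order terms to be the fiddliest bit, but it is forced once the zeroth-order cancellation is in place.

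\textbf{Step 3 (proof of \eqref{eq:checkpsi1}).} This is the null-frame version. Starting from \eqref{eq:waveeq2a}, $-4\underline L L\psi + Dr^{-2}\slashed\Delta_{\s^2}\psi - D(V_\alpha + D'r^{-1})\psi = 0$, project onto the $\ell$-mode and write $\psi_\ell = r\phi_\ell = rw_\ell\check\phi_\ell$. Using $Z = L - \underline L$ and $T = L + \underline L$ (from \eqref{eq:LbarXT}--\eqref{eq:defZ}, so $4\underline L L = T^2 - Z^2$), one rewrites the principal part as $(T^2 - Z^2)\psi_\ell$. Conjugating by $rw_\ell$: the zeroth-order terms in $\check\phi_\ell$ assemble into $\big[Z^2(rw_\ell) - Dr^{-2}\ell(\ell+1)(rw_\ell) + Dr(V_\alpha + D'r^{-1})(rw_\ell)\big]w_\ell^{-1}r^{-1}\cdot(\text{something})$, and since $Z$ acting on the stationary function $rw_\ell$ reduces to $DX(rw_\ell) = D(rw_\ell)'$, this zeroth-order bracket is again a multiple of the ODE \eqref{eq:odew} and vanishes. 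What survives is $Z^2\check\phi_\ell - T^2\check\phi_\ell$ plus first-order terms with coefficient $2\cdot\frac{(rw_\ell)'}{rw_\ell}Z\check\phi_\ell = 2(r^{-1} + w_\ell' w_\ell^{-1})Z\check\phi_\ell$, which is exactly \eqref{eq:checkpsi1} (note $Z$ contains no $T$-derivative hitting $w_\ell$, and $[Z,T]$ terms contribute only lower order, absorbed into the stated first-order term).

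\textbf{Main obstacle.} There is no conceptual obstacle; the whole lemma is a conjugation identity. The one place demanding care is ensuring the \emph{first-order} terms really do close up in the stated form — in particular verifying that no spurious zeroth-order term in $\check\phi_\ell$ or stray $T w_\ell$-type term appears, which relies essentially on $w_\ell$ being $\tau$-independent so that $T$ commutes past $w_\ell$ and $X w_\ell$, $Z w_\ell$ reduce to ordinary $r$-derivatives. I would double-check the metric identities $h\tilde h D = h(2-h)$ and $(1-h)$-coefficient algebra from \S\ref{sec:metric} when reconciling \eqref{eq:checkpsi2}'s $XT$ and $T^2$ coefficients, since those are where sign errors tend to creep in. Once \eqref{eq:odew} is invoked to kill the zeroth-order part, the rest is routine; I would present only the substitution, the invocation of \eqref{eq:odew}, and the final reorganization, relegating the product-rule expansions to ``a direct computation.''
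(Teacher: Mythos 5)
Your overall strategy coincides with the paper's, which proves the lemma in one line: substitute $\phi_{\ell}=w_{\ell}\check{\phi}_{\ell}$ into \eqref{eq:waveeq1}, use the ODE \eqref{eq:odew} to cancel the zeroth-order terms (this is your Step 2, and it is fine), and then obtain \eqref{eq:checkpsi1} from \eqref{eq:checkpsi2} by means of \eqref{eq:defZ}. Your Step 3 instead conjugates the null-form equation \eqref{eq:waveeq2a} directly, using $4\underline{L}L=T^2-Z^2$ (exact, since $[T,Z]=0$ because the coefficients of $Z$ are $\tau$-independent); this is a legitimate and arguably cleaner variant of the same conjugation argument, so there is no difference of substance in approach.

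There is, however, one concrete slip in Step 3, precisely at the point you flag as delicate. Since $Z=DX+(h-1)T$ and $rw_{\ell}$ is stationary, $Z(rw_{\ell})=D\,(rw_{\ell})'$ — as you yourself state — so the surviving first-order term is
\begin{equation*}
2\,\frac{Z(rw_{\ell})}{rw_{\ell}}\,Z\check{\phi}_{\ell}\;=\;2D\left(r^{-1}+w_{\ell}'w_{\ell}^{-1}\right)Z\check{\phi}_{\ell},
\end{equation*}
carrying a factor $D$, and not $2(r^{-1}+w_{\ell}'w_{\ell}^{-1})Z\check{\phi}_{\ell}$ as you then write; dropping the $D$ contradicts your own preceding sentence. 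The same factor appears if one follows the paper's route: substituting $DX\check{\phi}_{\ell}=Z\check{\phi}_{\ell}+(1-h)T\check{\phi}_{\ell}$ into \eqref{eq:checkpsi2} and multiplying by $D r^{-2}w_{\ell}^{-2}$, all $ZT$-, $T$- and excess $T^2$-contributions cancel and one is left with $0=Z^2\check{\phi}_{\ell}+2D(r^{-1}+w_{\ell}'w_{\ell}^{-1})Z\check{\phi}_{\ell}-T^2\check{\phi}_{\ell}$. Note that the displayed \eqref{eq:checkpsi1} omits this $D$, whereas the form in which the identity is actually used in the proof of Proposition \ref{prop:intencheckpsi} (where the bulk coefficient is $2fDr^{-1}+2fDw_{\ell}'w_{\ell}^{-1}-\tfrac{1}{2}D\frac{df}{dr}$) carries it; so rather than silently matching the printed formula you should derive the coefficient honestly, record the factor of $D$, and remark that it is harmless in the application since $D>0$ for $r>2M$ and can be factored out of the positivity condition. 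Apart from this, your computation is correct and complete modulo the routine expansions you defer.
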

\begin{proof}
Apply \eqref{eq:waveeq1} in combination with \eqref{eq:odew} to obtain \eqref{eq:checkpsi2}. Then \eqref{eq:checkpsi1} follows by using \eqref{eq:defZ}. \end{proof}

\begin{remark}
The choice of renormalization $\check{\phi}_{\ell}$ is made so that the equations satisfied by $\check{\phi}_{\ell}$ contain only $T$- and $X$-derivatives of $\check{\phi}_{\ell}$, i.e.\ they do not feature zeroth-order terms in $\check{\phi}_{\ell}$.
\end{remark}

\section{The main results and ideas}
\label{sec:main}
In this section, we give a precise formulation of the main results, and we outline the key new ideas that play a role in the corresponding proofs.
\subsection{Precise statements of the main results}
\label{sec:precise}

\paragraph{Relevant initial data quantities} In order to give precise statements of the main results, we introduce the following initial data quantities: recall that
\begin{equation*}
\beta_{\ell}=\sqrt{1+4\alpha+4\ell(\ell+1)}.
\end{equation*}
 Then the assumption that $\alpha>-\frac{1}{4}$ and $\alpha \neq l(l+1)$ for any $l\in \N_0$, implies that $\beta_{\ell} \neq 2l+1$ for any $l\in \N_0$. We moreover introduce:
 \begin{equation*}
 n_{\beta_{\ell}}=\left\lfloor \frac{\beta_{\ell}+1}{2}\right\rfloor, 
 \end{equation*}
and denote for $N\in \N_0$ and $\delta>0$:\footnote{Here, $d\tau^{\sharp}$ is the vector field dual to the 1-form $d\tau$, with respect to the metric $g_M$.}
\begin{align*}
\mathfrak{I}_{\beta_{0}}[\phi]=&\frac{1-\beta_{0}}{2\beta_{0}} \int_{\Sigma_0} \left[r(d\tau)^{\sharp}(r\phi)+\frac{dh}{dr} r^2\phi-r(1-h)X(r\phi) \right]w_{0} d\sigma dr,\\
\mathfrak{I}_{\beta_{\ell}}[\phi](\theta,\varphi)=&\frac{1-\beta_{\ell}}{2\beta_{\ell}} \sum_{|m|\leq \ell}\int_{\Sigma_0} \left[r (d\tau)^{\sharp }(r\phi)+\frac{dh}{dr} r^2\phi-r(1-h)X(r\phi) \right]\cdot Y_{\ell m}w_{\ell} d\sigma dr\cdot  Y_{\ell m}(\theta,\varphi),\\
\Phi_{\ell}[\phi]=&\:\frac{(-1)^{n_{\beta_{\ell}}}n_{\beta_{\ell}}!}{2^{n_{\beta_{\ell}}-\beta_{\ell}} \pi} \frac{\Gamma\left(\frac{\beta_{\ell}+1}{2}\right) \Gamma\left(\frac{\beta_{\ell}-1}{2}-n_{\beta_{\ell}}\right) }{\Gamma(\beta_{\ell}-n_{\beta_{\ell}})}\mathfrak{I}_{\beta_{\ell}}[\phi],\\
E_{\mathbf{N}}[\phi]=&\: \int_{\Sigma_{\tau}}r^2|X\phi|^2+h\tilde{h} r^2|T\phi|^2+|\snabla_{\s^2}\phi|^2+|\phi|^2\,d\sigma dr,\\
\mathbf{D}_{\beta_0,N,\delta}[\phi]= &\:\sum_{n\leq N+2}\left(||((r+1)^2(rX)^nX(r\phi_0)||_{L^{\infty}(\Sigma_0)}^2+||(rX)^n\phi_0||_{L^{\infty}(\Sigma_0)}^2+||r(rX)^nT\phi_0||_{L^{\infty}(\Sigma_0)}^2\right)\\
&+\sum_{n_1+n_2\leq N+2}\int_{\Sigma_0\cap\{r\geq R\}} r^{3-\delta} |\snabla_{\s^2}^{n_1}X(rX)^{n_2} \widehat{\psi}_{\geq 1}|^2+r^{-1-\delta} |\snabla_{\s^2}^{n_1}(rX)^{n_2} T\widehat{\psi}_{\geq 1}|^2\,d\sigma dr\\
&+ \sum_{\substack{0\leq n_1+n_2+n_3\leq 2N+3\\ n_1+n_2\leq N+2}}\int_{\Sigma_{0}\cap \{r\geq R\}}  r^{\min\{3\beta_0,1\}-\delta}|L(rL)^{n_2}T^{n_3}\widehat{\psi}_0|^2+r^{1-\delta}|\snabla_{\s^2}^{n_1}L(rL)^{n_2}T^{n_3-1}{\psi}_{\geq 1}|^2\,d\sigma dr\\
&+ \sum_{\substack{0\leq n_1+n_2\leq 3N+5\\ n_1\leq N+2}}E_{\mathbf{N}}[\snabla_{\s^2}^{n_1}T^{n_2}\widehat{\phi}]+\mathfrak{I}_{\beta_0}^2[\phi]\qquad (\beta_0<1),\\
\mathbf{D}_{\beta_{\ell},N,\delta}[\phi_{\geq \ell}]=&\: \sum_{n\leq N+3n_{\beta_{\ell}}+2}\Biggl( ||(r+1)^{\frac{3}{2}+\frac{\beta_\ell}{2}}(rX)^nX(r\phi_{\ell})||_{L^{\infty}(\Sigma_0)}^2\\
&+||r^{\frac{1}{2}+\frac{\beta_{\ell}}{2}}(rX)^n\phi_{\ell}||_{L^{\infty}(\Sigma_0)}^2+||r^{\frac{3}{2}+\frac{\beta_{\ell}}{2}}(rX)^nT\phi_{\ell}||_{L^{\infty}(\Sigma_0)}^2\Biggr)\\
&+\sum_{n_1+n_2\leq N+2n_{\beta}+2}\int_{\Sigma_{0}\cap \{r\geq R\}} r^{2+\min\{2n_{\beta_{\ell}}+1,\beta_{\ell+1}\}-\delta}|\snabla_{\s^2}^{n_1}X(rX)^{n_2}\widehat{\psi}_{\geq \ell+1}|^2\\
 &+r^{-2+\min\{2n_{\beta_{\ell}}+1,\beta_{\ell+1}\}-\delta}|\snabla_{\s^2}^{n_1}X(rX)^{n_2}T\widehat{\psi}_{\geq \ell+1}|^2\,d\sigma dr\\
&+\sum_{\substack{0\leq n_1+n_2+n_3\leq 2N+3n_{\beta_{\ell}}+5\\ n_1+n_2\leq N+2n_{\beta_{\ell}}+2}}\int_{\Sigma_{0}\cap \{r\geq R\}} r|\snabla_{\s^2}^{n_1}L(rL)^{n_2}T^{n_3}\widehat{\psi}_{\ell}|^2\\
&+ r^{\min\{\beta_{\ell+1}-2n_{\beta_{\ell}},1\}-\delta}|\snabla_{\s^2}^{n_1}L(rL)^{n_2}T^{n_3}{\psi}_{\geq \ell+1}|^2+r^{2+\min\{\beta_{\ell+1}-2n_{\beta_{\ell}},0\}-\delta}|\snabla_{\s^2}^{n_1}L(rL)^{n_2}T^{n_3}{\psi}_{\geq \ell+1}|^2\,d\sigma dr\\
&+ \sum_{\substack{0\leq n_1+n_2\leq 3N+5n_{\beta_{\ell}}+5\\ n_1\leq N+2n_{\beta_{\ell}}+2}}E_{\mathbf{N}}[\snabla_{\s^2}^{n_1}T^{n_2}\widehat{\phi}_{\geq \ell}]+\mathfrak{I}_{\beta_{\ell}}^2[\phi_{\geq \ell}]\qquad (\beta_0>1\quad \textnormal{or}\quad \ell\geq 1)\quad \textnormal{where},\\
\widehat{\phi}_{\geq \ell}(\tau,r,\theta,\varphi)=&\:\phi_{\geq \ell}(\tau,r,\theta,\varphi)-\Phi^{(\beta_{\ell})}(\tau,r,\theta,\varphi)v\quad \textnormal{with}\\
\Phi^{(\beta_{\ell})}(\tau,r,\theta,\varphi)=&\:\Phi_{\ell} [\phi](\theta,\varphi)w_{\ell} (r)(\tau+2r)^{-\frac{1}{2}-\frac{\beta_{\ell}}{2}}(\tau+M)^{-\frac{1}{2}-\frac{\beta_{\ell}}{2}}.
\end{align*}

\paragraph{Main theorem} We can now give precise formulations of the main results.
\begin{theorem}
\label{thm:precise}
Let $\delta>0$ be arbitrarily small, $N\in \N_0$ and let $\phi$ be a solution to \eqref{eq:waveeq} with initial data satisfying $\sum_{0\leq k\leq 2}\mathbf{D}_{\beta_0,N,\delta}[\snabla_{\s^2}^k\phi]<~\infty$.
\begin{enumerate}[label=\emph{(\roman*)}]
\item \label{item:mainthm1} \emph{(Leading-order late-time asymptotics)} There exists a constant $C=C(M,h,V_{\alpha},N,\delta)>0$ such that:
\begin{align}
\label{eq:thmasympbg1}
&||T^N\phi- \Phi_0[\phi] w_0(r)(T^Nf)(\tau,r)||_{L^{\infty}(S^2_{\tau,r})}\\ \nonumber
\leq&\: Cr^{-\frac{1}{2}+\frac{1}{2}\beta_0}f(\tau,r)(\tau+M)^{-\frac{1}{2}\min\{2\beta_0, 1-\beta_0+n_{\beta_0},\beta_1-\beta_0\}-N+\delta}\sqrt{\sum_{0\leq k\leq 2}\mathbf{D}_{\beta_0,N,\delta}[\snabla_{\s^2}^k\phi]}, \\
&f(\tau,r)=(\tau+2r)^{-\frac{1}{2}-\frac{1}{2}\beta_0}(\tau+M)^{-\frac{1}{2}-\frac{1}{2}\beta_0}.
\end{align}
\item \label{item:mainthm2} \emph{(Refined late-time asymptotics for higher spherical harmonics)} Let $\phi_{\geq \ell}$ be a solution to \eqref{eq:waveeq} supported on spherical harmonics of degree $\geq \ell$, with initial data satisfying $\sum_{0\leq k\leq 2}\mathbf{D}_{\beta_{\ell},N,\delta}[\snabla_{\s^2}^k\phi_{\geq \ell}]<\infty$. Then there exists a constant $C=C(M,h,V_{\alpha},N,\delta,\ell)>0$ such that:
\begin{align}
\label{eq:thmasympbgell}
&||T^N\phi_{\geq \ell}- \Phi_{\ell}[\phi_{\geq \ell}] w_{\ell}(r)(T^N f_{\ell})(\tau,r)||_{L^{\infty}(S^2_{\tau,r})}\\ \nonumber
\leq&\: Cr^{-\frac{1}{2}+\frac{1}{2}\beta_{\ell}}f_{\ell}(\tau,r)(\tau+M)^{-\min\{n_{\beta_{\ell}}-\frac{1}{2}(\beta_{\ell}-1),\frac{1}{2}(\beta_{\ell+1}-\beta_{\ell})\}-N+\delta}\sqrt{\sum_{0\leq k\leq 2}\mathbf{D}_{\beta_{\ell},N,\delta}[\snabla_{\s^2}^k\phi_{\geq \ell}]},\\
&f_{\ell}(\tau,r)=(\tau+2r)^{-\frac{1}{2}-\frac{1}{2}\beta_{\ell}}(\tau+M)^{-\frac{1}{2}-\frac{1}{2}\beta_{\ell}}.
\end{align}

\item \label{item:mainthm3} \emph{(Genericity of late-time tails)} Let $(\phi_i,\phi'_i)=(\phi,T\phi)|_{\Sigma_0}$ denote initial data for $\phi$ on $\Sigma_0$ and consider:
\begin{equation*}
\mathcal{D}=\left\{((\phi_i)_{\geq \ell},(\phi_i)_{\geq \ell}')\: \textnormal{measurable}\,|\, \sum_{0\leq k\leq 2}\mathbf{D}_{\beta_{\ell},0,\delta}[\snabla_{\s^2}^k\phi_{\geq \ell}]<\infty\right\}.
\end{equation*}
Then the subset
\begin{equation*}
\mathcal{G}=\left\{((\phi_i)_{\geq \ell},(\phi_i)_{\geq \ell}') \in \mathcal{D}\,|\, \mathfrak{I}_{\beta_{\ell}}[\phi] \neq 0\right\}
\end{equation*}
is open and dense with respect to the weighted $L^1$-type norm:
\begin{equation*}
||((\phi_i)_{\geq \ell},(\phi_i)_{\geq \ell}')||_{\mathcal{D}}=\int_{\Sigma_0}\left[r|X(r(\phi_i)_{\geq \ell})|+r^{-1}|(\phi_i)_{\geq \ell}|+|(\phi'_i)_{\geq \ell}|\right]w_{\ell}\,d\sigma dr
\end{equation*}
and the complement $\mathcal{G}^c$ has codimension $2\ell+1$. The above statements holds also if we replace $\mathcal{D}$ above with the smaller data set $C_c^{\infty}(\Sigma_0)\times C_c^{\infty}(\Sigma_0)$.

The estimates \eqref{eq:thmasympbg1} and \eqref{eq:thmasympbgell} therefore give the precise leading-order late-time behaviour of $\phi_{\geq \ell}$ with $\ell\in \N_0$, proving existence of inverse-polynomial late-time tails, if the initial data lie in $\mathcal{G}$, which is generic in the above sense.
\item \label{item:mainthm4} \emph{(Data of time integrals)} For each $\ell\in \N_0$, the $(n_{\beta_{\ell}}+1)$-th order time integrals
\begin{equation*}
\begin{split}
T^{-1-n_{\beta_{\ell}}}\phi_{\geq \ell}(\tau,r,\theta,\varphi)=&(-1)^{1+n_{\beta_{\ell}}}\int_{\tau}^{\infty}\int_{\tau_1}^{\infty}\ldots \int_{\tau_{n_{\beta_{\ell}}}}^{\infty}\phi_{\geq \ell}(s,r,\theta,\varphi)\,ds d\tau_{n_{\beta_{\ell}}}\ldots d\tau_1\\
=&\:- \frac{1}{n_{\beta_{\ell}}!}\int_{\tau}^{\infty} (\tau-s)^{n_{\beta_{\ell}}}\phi(s,r,\theta,\varphi)\,ds.
\end{split}
\end{equation*}
are well-defined and satisfy:
\begin{equation*}
\lim_{r \to \infty} r^{\frac{1}{2}+\frac{1}{2}\beta_{\ell}-n_{\beta_{\ell}}}X(rT^{-1-n_{\beta_{\ell}}}\phi_{\ell})(0,r,\theta,\varphi)=\mathfrak{I}_{\beta_{\ell}}[\phi](\theta,\varphi).
\end{equation*}
\end{enumerate}
\end{theorem}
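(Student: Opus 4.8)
\textbf{Proof proposal for Theorem \ref{thm:precise}\ref{item:mainthm4}.}

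The plan is to reduce the claim to a statement purely about the time-inverted quantities and then read off the boundary value at $r=\infty$ from an ODE analysis. First I would establish that $T^{-1-n_{\beta_\ell}}\phi_{\geq \ell}$ is well-defined: this requires knowing that $\phi_{\geq\ell}$ and its first $n_{\beta_\ell}$ time derivatives decay in $\tau$ fast enough along $\Sigma_\tau$ that the iterated integrals converge. This is exactly the kind of quantitative decay-in-time estimate for $T^k\phi$ that is produced in \S\ref{sec:decaytimeder} (energy decay for time derivatives), combined with the pointwise estimates of \S\ref{sec:latetimeasymp}; concretely one needs $\phi_{\geq\ell}|_{\{r=r_0\}}$ to decay faster than $\tau^{-n_{\beta_\ell}-1}$, which is guaranteed since $1+\beta_\ell > 2 n_{\beta_\ell}$ by the definition $n_{\beta_\ell}=\lfloor(\beta_\ell+1)/2\rfloor$ together with $\beta_\ell\neq 2l+1$. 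The identity equating the iterated integral with the single integral against $(\tau-s)^{n_{\beta_\ell}}/n_{\beta_\ell}!$ is the standard Cauchy formula for repeated integration, valid once the convergence is in hand.

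Next I would identify the equation satisfied by $\chi_\ell := T^{-1-n_{\beta_\ell}}\phi_\ell$. Since $T$ commutes with $\square_{g_M}$ and with $V_\alpha$, and since $T^{-1}$ is a genuine antiderivative in $\tau$ (the boundary term at $\tau=\infty$ vanishes by the decay just established), $\chi_\ell$ again solves \eqref{eq:waveeq} projected to mode $\ell$. The point is then to evaluate it on the initial slice $\Sigma_0$: there, one sets $\tau=0$ and considers the function $r\mapsto (r\chi_\ell)(0,r,\theta,\varphi)$. Using the renormalization $\check\chi_\ell = w_\ell^{-1}\chi_\ell$ and equation \eqref{eq:checkpsi2}, or working directly with \eqref{eq:waveeq2} for $r\chi_\ell$, the restriction to $\Sigma_0$ is governed — modulo lower-order-in-$r$ terms controlled by the weighted norms in $\mathbf{D}_{\beta_\ell,N,\delta}$ — by the stationary ODE \eqref{eq:odew}, whose two branches behave like $r^{-\frac12+\frac12\beta_\ell}$ and $r^{-\frac12-\frac12\beta_\ell}$ as $r\to\infty$. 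The quantity $r^{\frac12+\frac12\beta_\ell - n_{\beta_\ell}}X(r\chi_\ell)$ is precisely designed to extract the coefficient of the subdominant-branch contribution that is generated by the repeated time integration: each application of $T^{-1}$ shifts the $r$-weight of the radiation field in a controlled way (this is the mechanism by which time integrals fail to be conformally regular at $\mathcal{I}^+$, as discussed in \S\ref{sec:compzeropot}), and after $n_{\beta_\ell}+1$ integrations the relevant limit is finite and equals the stated integral. The explicit constant $\tfrac{1-\beta_\ell}{2\beta_\ell}$ in the definition of $\mathfrak{I}_{\beta_\ell}$, together with the $\sum_{|m|\le\ell}Y_{\ell m}$ structure, comes from tracking the Wronskian-type normalization between the two ODE branches and the projection onto $Y_{\ell m}$.

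The main obstacle I anticipate is controlling the limit $r\to\infty$ rigorously: one must show that the error terms — coming from the $O_\infty(r^{-1})$ corrections to $r^2 V_\alpha$, from the $D'r^{-1}$ term in \eqref{eq:waveeq2}, and from the non-stationarity of $\chi_\ell$ (its $T$-derivatives are not zero, only decaying) — all contribute strictly lower-order terms in $r$ along $\Sigma_0$, so that the designated weighted derivative converges. This is where the finite-regularity-at-$x=0$ phenomenon must be handled with care: $r\chi_\ell$ is only $H^{k+1}(\widehat\Sigma_0)$ for finite $k$, and the weight $r^{\frac12+\frac12\beta_\ell - n_{\beta_\ell}}$ is tuned to the threshold regularity, so the argument needs the full strength of the $r$-weighted energy hierarchy of \S\ref{sec:rp} (applied to $L$-derivatives of $\psi$) to bound $X(r\chi_\ell)$ up to the precise power of $r$ claimed, with remainder $o(1)$ as $r\to\infty$. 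Once that weighted-Hardy/trace bootstrap is set up, identifying the limit with $\mathfrak{I}_{\beta_\ell}[\phi]$ is a matter of integrating the ODE from a large radius inward and matching, which reproduces the displayed integral expression after an integration by parts that moves the $X$-derivative off $r\chi_\ell$ and onto $w_\ell$.
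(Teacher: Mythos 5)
Your reduction of well-definedness to decay-in-time plus the Cauchy formula for repeated integration is essentially the paper's Corollaries \ref{cor:equivdeftimeint} and \ref{cor:equivdeftimeintell}, and your instinct to work with $\check{\chi}_\ell=w_\ell^{-1}T^{-1-n_{\beta_\ell}}\phi_\ell$ is also how the paper proceeds. But your extraction of the limit at $r=\infty$ has a genuine gap. You describe the restriction to $\Sigma_0$ as ``governed by the stationary ODE modulo lower-order-in-$r$ terms'' coming from the non-stationarity of $\chi_\ell$, to be absorbed via the $r^p$-hierarchy of \S\ref{sec:rp}; this misreads the structure. After the $w_\ell$-renormalization, \eqref{eq:checkpsi2} restricted to $\Sigma_0$ is an \emph{exact} first-order ODE in $r$, $X\bigl(Dr^2w_\ell^2X\check{\chi}_\ell\bigr)\big|_{\Sigma_0}=w_\ell F_\ell[T\chi_\ell]\big|_{\Sigma_0}$, and the ``non-stationary'' terms are not errors to be estimated away: their integral against $w_\ell$ is precisely what generates the limit $\mathfrak{I}_{\beta_\ell}[\phi]$. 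The paper (Proposition \ref{prop:preciselargertimeint}) integrates this ODE explicitly, using the boundary conditions $Dr^2w_\ell^2X\check{\chi}_\ell\to 0$ at $r=2M$ and $\check{\chi}_\ell\to 0$ as $r\to\infty$, which there come from the elliptic time-inversion construction (Proposition \ref{prop:timeint}, Corollary \ref{cor:regmultipletimeinv}); since you instead define the time integral by the convergent $\tau$-integral, you would still need to justify these two conditions independently, as they fix the constants of integration and hence the value of the limit.

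More seriously, for $n_{\beta_\ell}\geq 1$ (every $\ell\geq 1$, and $\ell=0$ with $\alpha>0$) your final step of ``integrating the ODE from a large radius inward and matching, after an integration by parts onto $w_\ell$'' does not go through as stated: the source at the last stage is $F_\ell[T^{-n_{\beta_\ell}}\phi]$, and the integrals $\int_{2M}^{\infty}w_\ell F_\ell[T^{-k}\phi]\,dr$ \emph{diverge} for $k\geq 1$ (the integrand grows like $r^{k-1}$), so the limit cannot be read off from a single pairing against $w_\ell$. What is needed is the induction of Proposition \ref{prop:preciselargertimeint}(ii): compute the exact leading $r$-asymptotics of each intermediate datum $T^{-k}\check{\phi}_\ell|_{\Sigma_0}$, insert it into the next source, and track the accumulating factors $(2k-1-\beta_\ell)$ and $(k-\beta_\ell)$; this bookkeeping is exactly where the constant relating the limit to $\int_{2M}^{\infty} w_\ell F_\ell[\phi]\,dr$ (equivalently, the prefactor in the definition of $\mathfrak{I}_{\beta_\ell}$) comes from, and it is absent from your sketch. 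Your appeal to the $r$-weighted energy hierarchy and a ``Wronskian-type normalization'' does not substitute for this computation, which is purely an ODE/asymptotics argument on the initial slice and does not require the energy estimates at all.
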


Part \emph{\ref{item:mainthm1}} of Theorem \ref{thm:precise} is proved as part of Proposition \ref{prop:mainlinftyest} and part \emph{\ref{item:mainthm2}} is proved as part of Proposition \ref{prop:mainlinftyestell}. We conclude part \emph{\ref{item:mainthm3}} by using that $\mathfrak{I}_{\beta_{\ell}}[\phi_{\ell m}]$ with $|m|\leq \ell$, defined via integrals on $\Sigma_0$, are well-defined as linearly independent bounded linear functionals on the initial data space $\mathcal{D}$ with respect to the norm $||(\cdot,\cdot)||_{\mathcal{D}}$. Their vanishing immediately implies that $\mathcal{G}^c$ has codimension $2\ell+1$ and is closed. Density follows easily by letting $(\phi_i,\phi_i')\in \mathcal{G}^c$ and taking a sequence of smooth compactly supported initial data $(\phi_i^{(k)},{\phi'}_i^{(k)})$ such that $\mathfrak{I}_{\beta_{\ell}}[\phi^{(k)}_{\geq \ell}]\neq 0$ and $||((\phi_i^{(k)})_{\geq \ell},({\phi'}_i^{(k)})_{\geq \ell})||_{\mathcal{D}}\to 0$, and then considering $(\phi_i+\phi_i^{(k)},\phi_i+{\phi'}_i^{(k)})$ with $(\phi_i,\phi'_i)\in \mathcal{G}^c$. Part \emph{\ref{item:mainthm4}} follows from a combination of Proposition \ref{prop:preciselargertimeint} and Corollary \ref{cor:equivdeftimeintell}.

\begin{remark}
\label{rm:consvlaws}
Although the proof of Theorem \ref{thm:precise} does not make use of conserved charges along $\mathcal{I}^+$, one can \emph{a posteriori} identify the following conserved quantities in $\tau$ along $\mathcal{I}^+$ at the level of (multiple) time integrals:
\begin{align*}
r^{\frac{1}{2}+\frac{1}{2}\beta_{\ell}}X(rT^{-1-n_{\beta_{\ell}}}\phi_{\ell})|_{\mathcal{I}^+}(\tau,\theta,\varphi)=&\:\mathfrak{I}_{\beta_{\ell}}[\phi](\theta,\varphi)\quad \textnormal{if $\beta_{\ell}<3$},\\
r^{\frac{1}{2}+\frac{1}{2}\beta_{\ell}- n_{\beta_{\ell}}}X((r^2X)^{n_{\beta_{\ell}}}rT^{-1-n_{\beta_{\ell}}}\phi_{\ell})|_{\mathcal{I}^+}(\tau,\theta,\varphi)=&\:2^{-n_{\beta_{\ell}} +1}(3-\beta_{\ell})\cdot \ldots\cdot (2n_{\beta_{\ell}}-1-\beta_{\ell})\mathfrak{I}_{\beta_{\ell}}[\phi](\theta,\varphi)\quad \textnormal{if $\beta_{\ell}>3$}.
\end{align*}
The proof of the above conservation property is omitted in the present paper, but it follows in a straightforward manner by taking the $T$-derivative of the expressions on the left-hand side above and applying \eqref{eq:waveeq} together with the decay-in-time estimates of Theorem \ref{thm:precise}.
\end{remark}

\begin{remark}
The methods leading to a proof of Theorem \ref{thm:precise} also apply to the setting of initial data with slower decay towards future null infinity, which can lower the conformal regularity of time integrals and lead to late-time tails with \underline{slower} decay. For examples illustrating this fact in the setting of \eqref{eq:wavewithoutpot}, see \cite{keh21b,keh21c,gk22}. 

Conversely, for initial data in $\mathcal{G}^c$, the time integrals have higher conformal regularity at infinity and the methods will still apply, provided we consider additional time integrals, which translate into inverse polynomial late-time tails with \underline{faster} decay. It is therefore expected that the subset of initial data which does not lead to \emph{any} inverse polynomial late-time tail in the late-time asymptotics is in fact of infinite codimension.
\end{remark}

\subsection{Overview of proof and summary of new ideas}
\label{sec:proofideas}
In this section, we list the main steps of the proof of Theorem \ref{thm:precise} and we give a sketch of the relevant new ideas and tools that are developed in the present paper.
\paragraph{\emph{\underline{Step 0:} Energy boundedness}}
Given vector fields $V,W$, we denote with
\begin{equation*}
\mathbb{T}[\phi](V,W)=V(\phi)W(\phi)-\frac{1}{2}g_M(V,W)\left[(g^{-1}_M)^{\alpha \beta}\partial_{\alpha}\phi\partial_{\beta}\phi+V_{\alpha}\phi^2\right]
\end{equation*}
the energy-momentum tensor corresponding to the equation \eqref{eq:waveeq}.

It follows from integrating $T\phi(\square_g-V_{\alpha})\phi$ by parts in a suitable spacetime slab and using assumption \ref{assm:A} on the potential that the following energy quantity that degenerates at the event horizon $r=2M$ is coercive and uniformly bounded in time:
\begin{equation*}
\int_{\Sigma_{\tau}} r^2\mathbb{T}(T,\mathbf{n}_{\tau})[\phi]\,d\sigma dr \sim \int_{\Sigma_{\tau}}\left(1-\frac{2M}{r}\right)(X(r\phi))^2+(T\phi)^2+|\snabla_{\s^2}\phi|^2 +\phi^2\,d\sigma dr.
\end{equation*}
To remove the degenerate factor $1-\frac{2M}{r}$ on the right-hand side above, we apply the red-shift estimates introduced in \cite{redshift}. \textbf{Step 0} is carried out in detail in \S \ref{sec:boundedness}.

\paragraph{\emph{\underline{Step 1:} Integrated energy decay estimates}}
The first significant step towards passing from energy boundedness to energy decay estimates in $\tau$ is to establish appropriate control of the integral in time of appropriate energy quantities, i.e.\ we wish to control
\begin{equation*}
\int_{0}^{\infty} \int_{\Sigma_{\tau}}r^2\mathbb{T}(\mathbf{N},\mathbf{n}_{\tau})[\phi]+\phi^2\,d\sigma dr d\tau\sim \int_{0}^{\infty} \int_{\Sigma_{\tau}} (X(r\phi))^2+(T\phi)^2+|\snabla_{\s^2}\phi|^2 +\phi^2\,d\sigma dr d\tau.
\end{equation*}
\begin{itemize}
\item \textbf{Differences with $\square_{g_M}\phi=0$}

When taking $V_{\alpha}\equiv 0$ in \eqref{eq:waveeq}, it is possible to control the above quantity locally as follows for any $R>2M$:
\begin{equation}
\label{eq:ileda0}
\int_{0}^{\infty} \int_{\Sigma_{\tau}\cap \{r\leq R\}} r^2\mathbb{T}(\mathbf{N},\mathbf{n}_{\tau})[\phi]+\phi^2\,d\sigma dr d\tau \lesssim \int_{\Sigma_0} r^2\mathbb{T}(\mathbf{N},\mathbf{n}_{0})[\phi]+r^2\mathbb{T}(\mathbf{N},\mathbf{n}_{0})[T\phi]\,d\sigma dr,
\end{equation}
where the extra $T$-derivative on the right-hand side is necessary due to the presence of trapped null geodesics at the photon sphere at $r=3M$ (see for example \cite{janpaper}). While it is possible to prove \eqref{eq:ileda0} for $\alpha\geq 0$, the $\alpha<0$ is significantly different. In fact, it follows from Theorem \ref{thm:precise} that a uniform estimate of the form \eqref{eq:ileda0} cannot possibly hold for $-\frac{1}{4}<\alpha\leq -\frac{3}{16}$, or equivalently, $0<\beta_0\leq \frac{1}{2}$.

To see this, we consider for the sake of simplicity spherically symmetric, smooth and compactly supported data on $\Sigma_0$ and take $-\frac{1}{4}<\alpha\leq -\frac{3}{16}$. Then, by \emph{\ref{item:mainthm4}} of Theorem \ref{thm:precise}, $r^{\frac{1}{2}+\frac{1}{2}\beta_{0}}X(rT^{-1}\phi)|_{\Sigma_0}$ attains a finite, non-zero limit as $r\to \infty$, which is equal to $\mathfrak{I}_{\beta_0}[\phi]$. In particular,
\begin{align}
\label{eq:contradictioniniten}
\int_{\Sigma_0\cap \{r\leq R\}} r^2\mathbb{T}(\mathbf{N},\mathbf{n}_{0})[T^{-1}\phi]+r^2\mathbb{T}(\mathbf{N},\mathbf{n}_{0})[T(T^{-1}\phi)]\,d\sigma dr<&\:\infty.
\end{align}

Suppose now that \eqref{eq:ileda0} holds also for $\phi$ arising from initial data in the closure of smooth compactly supported functions under the energy norm on the right-hand side of \eqref{eq:ileda0}. Then, by a density argument, using \eqref{eq:contradictioniniten}, the estimate \eqref{eq:ileda0} would also hold with $\phi$ replaced by $T^{-1}\phi$, so we would in particular have that
\begin{equation*}
\int_{0}^{\infty}(T^{-1}\phi)^2|_{r=r_0}\,d\tau<\infty
\end{equation*}
for any $2M\leq r_0<\infty$.\footnote{We control the integral of $\phi^2$ along $r=r_0$ by integrating $\partial_r(\chi \phi^2)$ from $r=r_0$ to $r=r_0+M$, with $\chi$ a smooth cut-off function which vanishes for $r\geq r_0+M$ such that $\chi(r_0)=1$, and applying \eqref{eq:ileda0}. } However, integrating the late-time asymptotics in \emph{\ref{item:mainthm1}} of Theorem \ref{thm:precise} in time to obtain late-time asymptotics for $T^{-1}\phi$, we also have that
\begin{equation*}
 T^{-1}\phi|_{r=r_0}= -\frac{1}{\beta_0}\Phi_0[\phi]w_0(r_0) (\tau+M)^{-\beta_0}+O((\tau+M)^{-\beta_0-\nu})
 \end{equation*}
 for some $\nu>0$ and therefore
\begin{equation*}
\int_{0}^{\infty}(T^{-1}\phi)^2|_{r=r_0}\,d\tau\gtrsim \mathfrak{I}_{\beta_0}[\phi] \int_{0}^{\infty}(\tau+M)^{-2\beta_0}\,d\tau=\infty \quad \textnormal{when $\beta_0\leq \frac{1}{2}$},
\end{equation*}
which is a contradiction. The above failure of \eqref{eq:ileda0} to hold is related to the fact that the $V_{\alpha}$-term in \eqref{eq:waveeq} cannot be dealt with by simply treating it as an inhomogeneity in the derivation of \eqref{eq:ileda0} for $V_{\alpha}\equiv 0$.

\textbf{We avoid the contradiction above by adding growing $r$-weights to the right-hand side of \eqref{eq:ileda0}:} when $\alpha<0$ (or equivalently, $\beta_0<1$), we prove in particular the inequality:
\begin{equation}
\label{eq:iledaneg}
\int_{0}^{\infty} \int_{\Sigma_{\tau}\cap \{r\leq R\}} \phi^2+r^2\mathbb{T}(\mathbf{N},\mathbf{n}_{\tau})[\phi]\,d\sigma dr d\tau\lesssim \int_{\Sigma_0} r^{p}(X(r\phi))^2+r^{2}\mathbb{T}(\mathbf{N},\mathbf{n}_{0})[\phi]+r^2\mathbb{T}(\mathbf{N},\mathbf{n}_{0})[T\phi]\,d\sigma dr,
\end{equation}
with $p>1-\beta_0$. One may easily verify that the right-hand side of \eqref{eq:iledaneg} blows up for the time integral $T^{-1}\phi$ that was introduced in the contradiction argument above, so the numerology of the decay rates in the late-time tails for $T^{-1}\phi$ that follows Theorem \ref{thm:precise} does not lead to a contradiction with \eqref{eq:iledaneg}. For the decay mechanism in \textbf{Step 2} below, it will in fact be sufficient to apply the above estimate with $p=1$.

\item \textbf{A robust physical space method for deriving integrated energy decay}

To prove \eqref{eq:iledaneg}, we use a new argument that is sufficiently robust so as to apply to the general class of potentials $V_{\alpha}$. This argument provides moreover an alternative to existing methods of deriving \eqref{eq:ileda0} when $V_{\alpha}\equiv 0$. We decompose $\phi$ into spherical harmonic modes $\phi_{\ell}$ and consider two cases:
\begin{enumerate}[label=\Alph*)]
\item \label{item:introalphacaseA} $\alpha\leq 0$ and $\ell=0$,
\item \label{item:introalphacaseB} $\alpha>0$ or $\ell\geq 1$.
\end{enumerate}
We obtain an integrated energy estimate in the case \ref{item:introalphacaseA} by considering the renormalized quantity
\begin{equation*}
\check{\phi}_0=w_0^{-1}\phi_0,
\end{equation*}
which as shown in Lemma \ref{lm:eqrenorm} satisfies a wave equation \underline{with no zeroth order terms}. 

In view of the late-time asymptotics of Theorem \ref{thm:precise}, this quantity will have the property that its derivatives, both in the $T$ and $X$ directions, will decay \emph{faster} than $\check{\phi}_0$ itself. This is reflected in the fact that we are able to obtain the following integrated energy estimate:
\begin{equation}
\label{eq:iledcheck0}
\int_{0}^{\infty} \int_{\Sigma_{\tau}\cap \{r\leq R\}} (T\check{\phi}_0)^2+(X\check{\phi}_0)^2\,d\sigma dr d\tau \lesssim \int_{\Sigma_0} r^2\mathbb{T}(N,\mathbf{n}_{0})[\phi]\,d\sigma dr,
\end{equation}
which, in contrast with \eqref{eq:iledaneg}, has \underline{no} extra term with growing $r$-weights on the right-hand side. We obtain \eqref{eq:iledcheck0} by multiplying both sides of \eqref{eq:checkpsi1} with the vector field multiplier $r^{-q}Z\check{\phi}_0$, where $q>0$ is taken suitably large. The estimate \eqref{eq:iledcheck0} can then be applied in combination with $r$-weighted energy estimates (see Step 2 below) to conclude \eqref{eq:iledaneg}.

In case \ref{item:introalphacaseB}, there is no need to \emph{couple} the integrated energy estimates for $\check{\psi}_{0}$ with the $r$-weighted energy estimates; it is in fact possible to obtain integrated energy estimates involving only the $T$-energy norm at the level of initial data. 

We first split up the spherical harmonics as follows:
\begin{enumerate}[label=(\roman*)]
\item \label{item:introalphacaseB1} $\ell\leq \ell_0$,
\item \label{item:introalphacaseB2} $\ell>\ell_0$,
\end{enumerate}
with $\ell_0$ sufficiently large. In case \ref{item:introalphacaseB2}, the angular derivatives dominate the terms involving $V_{\alpha}$ by the Poincar\'e inequality on $\s^2$, so standard Morawetz estimates can be used with a straightforward vector field multiplier: we multiply both sides of \eqref{eq:waveeq2a} with $f(r)Z\psi+\frac{D}{2}\frac{df}{dr}\psi$, where $f(r)=-r^{-1}(r-3M)$ and integrate by parts. It is only in this case that the presence of trapped null geodesics at $r=3M$ plays a role. 

In case \ref{item:introalphacaseB1}, we consider again the renormalized quantities
\begin{equation*}
\check{\phi}_{\ell}=w_{\ell}^{-1}\phi_{\ell}.
\end{equation*} 
As in case \ref{item:introalphacaseA} above, these are quantities that will have both $T$ and $X$ derivatives that decay \emph{faster} in time than the quantity itself. In particular, we can show that
\begin{equation}
\label{eq:iledcheckel}
\int_{0}^{\infty} \int_{\Sigma_{\tau}\cap \{r\leq R\}} (T\check{\phi}_{\ell})^2+(X\check{\phi}_{\ell})^2\,d\sigma dr d\tau\leq C_{\ell}\int_{\Sigma_0} r^2\mathbb{T}(T,\mathbf{n}_{0})[\phi_{\ell}]\,d\sigma dr
\end{equation}
by multiplying both sides of \eqref{eq:checkpsi1} with the vector field multiplier $r^{-q}Z\check{\phi}_{\ell}$, with $q>0$ sufficiently large.

Finally, we combine \eqref{eq:iledcheckel} with standard Morawetz/integrated energy decay estimates involving straightforward multipliers in a region $r\geq R$, where $R>2M$ is sufficiently large: that is to say, we simply multiply \eqref{eq:waveeq2a} with $-Z\psi$ and integrate by parts in $\{r\geq R\}$, using that the boundary terms along $r=R$ are already controlled via \eqref{eq:iledcheckel}.

We note that in all cases, \textbf{the consideration of the renormalized quantities $\check{\phi}_{\ell}$ allows us to circumvent the need for delicately chosen vector field multipliers}. So even in the $V_{\alpha}\equiv 0$ setting, this approach provides a cleaner alternative to previous proofs of integrated energy decay estimates on spherically symmetric black hole spacetimes.

In previous, standard approaches for deriving integrated energy decay estimates (in the $V_{\alpha}\equiv 0$ setting), the case $0<\ell\leq \ell_0$ was treated by considering $\psi_{\ell}$ in combination with more delicate, $\ell$-dependent choices of multipliers $f_{\ell}(r)Z\psi_{\ell}$. One may contrast this with the approach in the present paper outlined above, where we instead consider $\check{\phi}_{\ell}$ in combination with the simple multipliers of the form $r^{-q}Z\check{\phi}_{\ell}$ and where the $\ell$-dependence, as well as the properties of the potential $V_{\alpha}$, are already dealt with via the estimates for $w_{\ell}$.

\textbf{Step 1} is carried out in detail in \S \ref{sec:iled}.
\end{itemize}

\paragraph{\emph{\underline{Step 2:} A decay mechanism for time derivatives via a hierarchy of $r$-weighted energy estimates}}
Equipped with an integrated energy estimate (restricted to $r\leq R$), we will obtain energy decay in time by considering a hierarchy of $r$-weighted energy estimates in the region $r\geq R$, for suitably large $R$. These take the schematic form:

\begin{equation}
\label{eq:rpschem}
\begin{split}
\int_{\Sigma_{\tau_2}\cap \{r\geq R\}} r^{p} (L\psi)^2\,d\sigma dr+\:&\int_{\tau_1}^{\tau_2}\left[ \int_{\Sigma_{\tau}\cap \{r\geq R\}} r^{p-1} (L\psi)^2+(2-p)r^{p-3}(\phi^2+|\snabla_{\s^2}\phi|^2)\,d\sigma dr\right] d\tau\\
\lesssim &\: \int_{\Sigma_{\tau_1}\cap \{r\geq R\}} r^{p} (L\psi)^2\,d\sigma dr+\ldots .
\end{split}
\end{equation}

In the case $\alpha\geq 0$, \eqref{eq:rpschem} holds for $0<p\leq 2$ and it first appeared in \cite{newmethod}. These $r$-weighted energy estimates can be translated into energy decay in time via the following principle: restrict to $p\geq 1$, then
\begin{equation*}
\textbf{energy decay rate in time}\:=\:\mathbf{1}+\textbf{range of $p$}.
\end{equation*}
For example, if $p\in [1,2]$, then the energy decay rate will be 2. This principle follows from an application of the mean-value theorem along an appropriate dyadic sequence of times (``the pigeonhole principle'').

When $\alpha<0$, however, the $V_{\alpha}$-term contributes to leading-order with a bad sign and, after applying a Hardy inequality, \eqref{eq:rpschem} is only valid for the smaller range $1-\beta_0<p<1+\beta_0$.  As $\beta_0\downarrow 0$, the range of allowed $p$ therefore decreases to 0, so according to the principle above we are only above to prove energy decay with a rate smaller than $1+\beta_0$. To obtain the desired sharp decay in time, \textbf{it will turn out to be sufficient to consider just \eqref{eq:rpschem} with $p=1$}, provided we also commute with $T$, $rL$ and angular derivatives.

Indeed, we apply \eqref{eq:waveeq2a} together with $T=L+\underline{L}$ to relate the $T$-derivative to the remaining derivatives as follows: for $r>R$
\begin{equation*}
r^{p+1}(LT\psi)^2\lesssim \sum_{n_1+n_2\leq 1}r^{p-1}|\snabla_{\s^2}^{n_1}L (rL)^{n_2}\psi|^2+r^{p-1}|\snabla_{\s^2}^{n_1+1}(rL)^{n_2}\psi|^2.
\end{equation*}

Via the above, we can in fact control
\begin{equation*}
\int_{\tau_1}^{\tau_2}\int_{\Sigma_{\tau}\cap \{r\geq R\}} r^{p+1} (LT\psi)^2\,d\sigma dr d\tau,
\end{equation*}
using that \eqref{eq:rpschem} also holds under commutation with angular derivatives and $rL$. \emph{We therefore effectively gain of two powers in $r$, at the expense of replacing $\psi$ by $T\psi$, which translates into two extra powers in the energy decay rate in time according to the principle above.} The existence of longer $r$-weighted hierarchies for $T\psi$ after commutation with $rL$ were first exploited in \cite{volker1,moschidis1} and played an important role in \cite{paper1,aagkerr,aagprice}. In the present paper, we additionally need to apply an interpolation argument to compensate for the fact that for every commutation with $T$ we only gain \underline{one} extra estimate in the $r$-weighted hierarchy, but we want to gain \underline{two} powers in the corresponding energy decay rate. We derive energy decay with the rate $\tau^{-3}$ for $T\phi$ and, by repeating this strategy, energy decay of $T^N\phi$ with the rate $\tau^{-1-2N}$, and we need to consider energy norms involving in particular higher-order $rL$-derivatives of $\psi$ itself.

\textbf{Step 2} is carried out in detail in \S \S\ref{sec:rp}--\ref{sec:decaytimeder}.

\paragraph{\emph{\underline{Step 3:} Sharp decay via time integrals and elliptic estimates}}
While \textbf{Step 2} allows us to obtain arbitrarily high polynomial decay rates in time (given sufficiently regular initial data), we are only able to make such statements for $T^N\phi$ with appropriately high $N$ and not for the original solution $\phi$. To remedy this, we will act (formally) with the operator $T^{-N}$. In other words, we consider the time integrals:
\begin{equation*}
\begin{split}
T^{-N}\phi(\tau,r,\theta,\varphi)=&(-1)^{N}\int_{\tau}^{\infty}\int_{\tau_1}^{\infty}\ldots \int_{\tau_{N-1}}^{\infty}\phi(s,r,\theta,\varphi)\,ds d\tau_{N-1}\ldots d\tau_1\\
=&\:- \frac{1}{(N-1)!}\int_{\tau}^{\infty} (\tau-s)^{N-1}\phi(s,r,\theta,\varphi)\,ds,
\end{split}
\end{equation*}
which appear also in Theorem \ref{thm:precise}. The objects $T^{-N}\phi$ are well-defined if $\phi$ decays suitably fast in time. Furthermore, $T(T^{-N}\phi)=T^{-N+1}\phi$ and, by the fact that $\square_{g_M}$ commutes with $T$ (due to the stationarity property of the Schwarzschild metric), $T^{-N}\phi$ are (formally) solutions to \eqref{eq:waveeq} if $\phi$ is a solution.\footnote{Note that in the setting of time-dependent wave operators or nonlinear wave equations, where one cannot appeal to stationarity, the definition of $T^{-N}\phi$ still makes sense, provided $\phi$ decays suitably fast in time. However, in this case, $T^{-N}\phi$ will not satisfy the same wave equation as $\phi$; it will instead satisfy a modified wave equation.}

Rather than working directly with the above definition of $T^{-N}\phi$, however, which requires \emph{a priori} sufficient time decay in the definition, we instead determine the initial data $(T^{-N}\phi|_{\Sigma_0}, T(T^{-N}\phi)|_{\Sigma_0})$ that evolve in time to form the solution $T^{-N}\phi$. We consider first the case $N=1$. Then $T(T^{-1}\phi)|_{\Sigma_0}=\phi|_{\Sigma_0}$, so it remains only to find $T^{-1}\phi|_{\Sigma_0}$. We find $T^{-1}\phi|_{\Sigma_0}$ by rearranging \eqref{eq:waveeq} to obtain:
\begin{align*}
\mathcal{L}\phi=&\:F[T\phi],\quad \textnormal{with}\\
\mathcal{L}(\cdot)=&\:X(Dr^2X(\cdot))+\slashed{\Delta}_{\s^2}(\cdot)-r^2V_{\alpha}(\cdot),\\
F[(\cdot) ]=&\:2r(1-h)X(r (\cdot))-\frac{dh}{dr}r^2 (\cdot)+r^2h\tilde{h}T (\cdot).
\end{align*}
We therefore construct $T^{-1}\phi|_{\Sigma_0}$ by solving the equation $\mathcal{L}(T^{-1}\phi|_{\Sigma_0})=F[\phi|_{\Sigma_0}]$, where the right-hand side is known, as it is determined by the initial data for $\phi$. This is a degenerate elliptic equation (with a degeneracy at the horizon).  We obtain a solution by constructing $\mathcal{L}^{-1}$ on appropriate energy spaces. Given sufficiently rapidly decaying initial data we can repeat this procedure to obtain $T^{-n}\phi$, for $n\leq N$, where the exponent $N$ is determined by the requirement that $T^{-N}\phi|_{\Sigma_0}$ still has finite $r$-weighted energies with appropriate powers in $r$ so as to be able to apply Step 2.\footnote{Strictly speaking, we will encounter the situation where $T^{-N}\phi|_{\Sigma_0}$ has an infinite $p=1$ $r$-weighted energy. To obtain an sharp energy decay rate, we will then need to approach $T^{-N}\phi|_{\Sigma_0}$ with a sequence of compactly supported functions and apply an interpolation argument to convert the extra growth in $r$ of the initial data to extra growth in $\tau$ compared to the case of a finite $p=1$ energy.} It turns out that we will need to take $N=n_{\beta_0}+1$ here. For $\beta_0<1$, we therefore take $N=1$.

We see that \textbf{the total number of $r$-weighted estimates in the hierarchy is determined by the $r$-asymptotics of $T^{-n_{\beta_0}-1}\phi|_{\Sigma_0}$.} In other words, the $r$-asymptotics of sufficiently many time integrals form the bottleneck for extending further the hierarchy of $r$-weighted estimates from \textbf{Step 2} and obtaining faster decay in time. These $r$-asymptotics in turn reflect conformal regularity properties at null infinity. Even if the initial data for $\phi$ are smooth and compactly supported, the initial data for its time integrals will generically have finite polynomial decay in $r$ as $r\to \infty$, which then determines the decay rates appearing in the late-time asymptotics. Starting with initial data that decay less rapidly in $r$, simply means that fewer time integrals might be needed to form a bottleneck, which implies slower decay in time.

In summary, with the above method involving $r$-weighted energy estimates for time derivatives, we are able to make the following principle \emph{quantitative}:
\begin{equation}
\label{timeintprincip}
\textbf{faster $r$-decay of (multiple) time integrals at infinity} \implies \textbf{faster $\tau$-decay}.
\end{equation}
\textbf{Step 3} is carried out in \S \ref{sec:timinv}.

From \textbf{Step 2} and \textbf{Step 3} it is already possible to obtain \emph{almost-sharp decay in time} i.e.\ uniform decay-in-time estimates with a rate that is $\epsilon$ smaller than the sharp rate, with $\epsilon>0$ arbitrarily small. However, this is not the approach taken in the present paper, where we instead establish sharp decay-in-time estimates directly and moreover identify the leading-order term in the late-time asymptotics \emph{at the same time}; see \textbf{Step 4} below.

The ($r$-weighted) energy decay can be used to obtain pointwise decay for $r \phi$ and $\sqrt{r}\phi$. To obtain faster decay for $\phi$ and $r^{-q}\phi$, with appropriate $q>0$, i.e.\ to exchange powers of $r$ for powers of $\tau^{-1}$, we prove faster decay for energies with negative $r$-weights by applying elliptic estimates for the equation $\mathcal{L}[\phi]=F[T\phi]$ that also played a role in constructing time integral data in \textbf{Step 3}. See also \cite{aagkerr, aagprice} for details on the use of elliptic estimates to control energies with negative $r$-weights for wave equations without a potential.

\paragraph{\emph{\underline{Step 4:} Exact leading-order asymptotics and late-time tails via subtraction}}
Rather than deriving the almost-sharp decay estimates of $\phi$, we directly derive sharp decay, and in fact, we directly derive the precise leading-order asymptotics in $\tau^{-1}$ by proving uniform decay-in-time estimates for the difference function:
\begin{equation*}
\widehat{\phi}=\phi-\Phi^{(\beta_0)},
\end{equation*}
where $\Phi^{(\beta_0)}(\tau,r)$ corresponds to the leading-order late-time behaviour of $\phi$ that we want to prove. We will show that $\widehat{\phi}$ decays \emph{faster} than $\Phi^{(\beta_0)}$, which implies that the function $\Phi^{(\beta_0)}$ indeed determines completely the leading-order late-time behaviour. 

More precisely, we take
\begin{align*}
\Phi^{(\beta_{0})}(\tau,r)=\Phi_{0}w_{0} (r)(\tau+1+2r)^{-\frac{1}{2}-\frac{\beta_{0}}{2}}(\tau+1)^{-\frac{1}{2}-\frac{\beta_{0}}{2}},
\end{align*}
with $\Phi_0\in \R$. If $\Phi^{(\beta_0)}(\tau,r)$ has the above form, then the expression $(\square_{g_M}-V_{\alpha})\widehat{\phi}$, while not zero, will decay sufficiently fast in $\tau$ and $r$, so that the methods of \textbf{Step 2} still apply even with $\phi$ replaced by $\widehat{\phi}$. The favourable decay properties of $(\square_{g_M}-V_{\alpha})\Phi^{(\beta_{0})}$ are related to the fact that when $M=0$, $\Phi^{(\beta_{0})}$ is a (spherically symmetric) solution to
\begin{equation*}
(\square-\alpha r^{-2}) \Phi^{(\beta_{0})}=0.
\end{equation*}
with $\square$ the standard wave operator on Minkowski spacetime. This observation also plays an important role in the analysis in \cite{gvdm22}.

In order to show that $\widehat{\phi}$ decays faster than the expected sharp behaviour, we need to fix an appropriate choice for the constant $\Phi_0\in \R$. It turns out that by fixing
\begin{equation*}
\Phi_0=\frac{(-1)^{n_{\beta_{0}}}n_{\beta_{0}}!}{2^{n_{\beta_{0}}-\beta_{0}} \pi} \frac{\Gamma\left(\frac{\beta_{0}+1}{2}\right) \Gamma\left(\frac{\beta_{0}-1}{2}-n_{\beta_{0}}\right) }{\Gamma(\beta_{0}-n_{\beta_{0}})}\mathfrak{I}_{\beta_{0}}[\phi]
\end{equation*}
and considering the time integral $T^{-1-n_{\beta_0}}\Phi^{(\beta_{0})}$, we obtain that
\begin{equation*}
X(rT^{-1-n_{\beta_0}}\Phi^{(\beta_{0})})|_{\Sigma_0}\sim \mathfrak{I}_{\beta_{0}}[\phi] r^{-\frac{1}{2}-\frac{1}{2}\beta_{0}+n_{\beta_0}} \quad (r\to \infty),
\end{equation*}
which is the same asymptotic behaviour as that of $X(rT^{-1-n_{\beta_0}}\phi)|_{\Sigma_0}$ (see \emph{\ref{item:mainthm4}} of Theorem \ref{thm:precise}). Therefore, the difference $X(rT^{-1-n_{\beta_0}}\widehat{\phi})|_{\Sigma_0}$ \underline{must decay faster in $r$} than $X(rT^{-1-n_{\beta_0}}\phi)|_{\Sigma_0}$ and we are able to, via \textbf{Step 2} and \textbf{Step 3}, get more decay in $\tau$ for $\widehat{\phi}$ compared to the sharp decay of $\phi$, adhering to the principle \eqref{timeintprincip}.

\textbf{Step 4} is carried out in \S \ref{sec:latetimeasymp}.

\paragraph{\emph{\underline{Step 5:} Refined estimates for higher-order spherical modes}}
In order to obtain the leading-order late-time behaviour of $\phi_{\geq \ell}$, with faster decay in $\tau$, we note the following rough correspondence:
\begin{equation*}
\square_{g_M}\phi_{\geq \ell}=V_{\alpha} \phi_{\geq \ell} \quad \longleftrightarrow \quad \square_{g_M}\phi_{\geq 0}=V_{\alpha+\ell(\ell+1)} \phi_{\geq 0}.
\end{equation*}
We obtain this by simply rearranging:
\begin{equation*}
r^{-2}\slashed{\Delta}_{\s^2}-V_{\alpha}= r^{-2}(\slashed{\Delta}_{\s^2}+\ell(\ell+1))-(V_{\alpha}+\ell(\ell+1)r^{-2})
\end{equation*}
and noting that the angular operator $-(\slashed{\Delta}_{\s^2}+\ell(\ell+1))$ has non-negative eigenvalues when restricted to functions projected on spherical harmonics with degree $\geq \ell$, just like the operator $-\slashed{\Delta}_{\s^2}$ acting on functions without any restriction to spherical harmonics. From this, it follows that the estimates described in \textbf{Step 4} apply, where $\phi_{\geq \ell}$ with the above modified angular operator takes on the role of $\phi_{\geq 0}$ with the standard operator $\slashed{\Delta}_{\s^2}$ and $\alpha+\ell(\ell+1)$ replaces $\alpha$, which is equivalent to $\beta_{\ell}$ replacing $\beta_0$ everywhere.

\textbf{Step 5} is carried out in more detail in \S \ref{sec:latetimeasymell}.

\subsubsection{Comparison with previous physical space methods}
While not included in the present paper, the strategy described in Steps 1--5 above applies also to the setting of the equation
\begin{equation*}
\square_{g_M}\phi=V\phi,
\end{equation*}
where $|V|\leq r^{-2-\delta}$, $\delta>0$ (including $V\equiv 0$) and to potentials $V_{\alpha}$ with $\alpha=l(l+1)$, $l\in \N$. Additionally, \textbf{Steps} \textbf{2}--\textbf{5} apply also when $g_M$ is replaced by suitable, more general asymptotically flat metrics $g$ and an inverse-square (or more rapidly decaying potential), assuming that analogues of energy boundedness and integrated energy estimates (see \textbf{Steps 0} and \textbf{1}) hold. We outline here briefly the key ideas in the case $V\equiv 0$ and $g=g_M$.

One can obtain the leading-order late-time behaviour of $\phi$ for sufficiently rapidly decaying data by considering
\begin{equation*}
T^{-1}\widehat{\phi}=T^{-1}\phi-\Phi_{0}[\phi](\tau+1+2r)^{-1}(\tau+1)^{-1},
\end{equation*}
where
\begin{equation*}
\Phi_0[\phi]=4 I_0^{(1)}[\phi]
\end{equation*}
with $I_0^{(1)}[\phi]$ the \emph{time-inverted Newman--Penrose constant} of $\phi$ or equivalently, the (standard) Newman--Penrose constant of $T^{-1}\phi$; see \S \ref{sec:compzeropot}.\footnote{In particular, $I_0^{(1)}[\phi]$ is well-defined for initial data with vanishing Newman--Penrose constant. If the Newman--Penrose constant $I_0[\phi]$ is instead non-vanishing, we instead consider $\widehat{\phi}=\phi-\Phi_0(\tau+1+2r)^{-1}(\tau+1)^{-1}$ and take $\Phi_0=4I_0[\phi]$.} 

One can show that $T^{-1}\widehat{\phi}$ decays with a better rate than the (expected) sharp decay rate of $T^{-1}\phi$. For this, we construct $T^{-2}{\phi}$ and we show that the $r$-weighted energy estimates from \textbf{Step 2} with $N=2$ apply when we replace $\phi$ with $T^{-2}\phi$ and the bottleneck for the decay rate arises from the initial data asymptotics of $T^{-2}\phi$, which feature $I_0^{(1)}[\phi]$ in the leading-order asymptotic term as $r\to \infty$. Hence, by taking $\Phi_0[\phi]$ to be an appropriate multiple of $I_0^{(1)}[\phi]$, $T^{-2}\widehat{\phi}$ will decay faster in $r$ at $\tau=0$ than $T^{-2}\phi$ and, consequently, we can show that $\widehat{\phi}$ decays with a faster rate than the sharp decay rate for $\phi$.

The above procedure stands in contrast with the method introduced in \cite{paper2} where late-time asymptotics are instead obtained by propagating the conservation law for the quantity
\begin{equation*}
r^2X(rT^{-1}\phi_0)|_{\mathcal{I}^+}(\tau)
\end{equation*}
in $\tau$ to a suitable region near infinity and then extending the late-time asymptotics towards the horizon. The latter method requires in particular commuting the $r$-weighted estimates of \textbf{Step 2} with vector fields that behave to leading-order like $r^2L$, whereas the method presented in the present paper involves instead an additional application of $T^{-1}$.  The application of $T^{-1}$ and $r^2L$ each result in a decrease of the sharp decay rate in time of the $L^{\infty}(\Sigma_{\tau})$-norm of $r\phi$ by exactly one power. However, as the method developed in the present paper does not require the existence of conservation laws along $\mathcal{I}^+$, it is more widely applicable.

\subsection{Other related works}
In this section, we mention further works that are related to the present paper, but did not feature in the discussion in \S \ref{sec:compzeropot}.

\paragraph{Wave equations on asymptotically flat black hole spacetimes}
The physical-space method for proving decay estimates in the present paper involves important techniques that were introduced in the literature in the setting of wave equations on asymptotically flat black hole spacetimes, most notably, the red-shift estimates of \cite{redshift}, the method of $r$-weighted estimates introduced in \cite{newmethod}, the integrated energy estimates or Morawetz estimates in \cite{blu1, redshift} and the hierarchy of elliptic estimates from \cite{aagkerr, aagprice}. We note that many of these ideas are sufficiently robust so that they can been applied in much more general settings: in the study of wave equations on Kerr spacetimes, see \cite{part3, aagkerr} and references therein, as well as in the study of the Teukolsky equations and linearized gravity; see for example the recent developments \cite{Dafermos2016,dhr-teukolsky-kerr, ma20, shlcosta20, mazhang21b}, and even in the setting of the nonlinear Einstein equations \cite{klainerman17,DHRT21}.

\paragraph{Fourier-analytic methods}
	The method outlined in \S \ref{sec:proofideas} is based entirely in physical space. One can alternatively approach the question of late-time asymptotics and the existence of inverse-polynomial late-time tails in Fourier space. 
	
	The corresponding inverse-polynomial decay rates are related to the regularity of the resolvent operator at the zero time-frequency. The resolvent operator can be viewed as the inverse of the operator $\square_{g_S}-V_{\alpha}$ with time derivatives $T$ (with respect to an asymptotically hyperboloidal time slicing) replaced by multiplication with $-i \omega$ with $\omega\in \C$ denoting the frequency with respect to the choice of time function. 
	
	This fact was exploited in a heuristic setting in \cite{leaver} to arrive at late-time tails for the wave equation without a potential on Schwarzschild. In \cite{hintzprice}, a mathematically rigorous argument was provided to infer the late-time leading-order behaviour on Schwarzschild and Kerr spacetimes, as well as on the Minkowski spacetime with a potential satisfying $|V|\lesssim r^{-3}$, from a careful analysis of the resolvent operator in a neighbourhood of $\omega=0$. We note that for the special case $\omega=0$, the resolvent operator coincides with the inverse of $\mathcal{L}$ appearing in the construction of time-integral data in \textbf{Step 3} from \S \ref{sec:proofideas}. See also earlier work \cite{tataru3} where sharp decay estimates are obtained via the same philosophy in a general setting of stationary spacetimes, assuming a priori appropriate integrated energy estimates. We point the reader to  \cite{hintzprice} for a detailed overview of further references to results pertaining to low-energy resolvent estimates.
		
	A similar point of view was taken in \cite{other1,dssprice} where the behaviour near the zero time-frequency of a resolvent operator is studied using techniques adapted to the framework of the wave equation in one space dimension with appropriate potential, in order to derive refined decay-in-time estimates for fixed spherical harmonic modes $\phi_{\ell}$ on Schwarzschild with $V_{\alpha}\equiv 0$. Within a similar one-dimensional setting, let us also mention \cite{ds10,ch15}, where sharp decay estimates are obtained for wave equations on the real line with potential $V$ satisfying $|V|\lesssim |x|^{-p}$, where $p>2$, and \cite{csst08,dk16}, where the $p=2$ case is considered. See also \cite{sch21}, where the low-energy behaviour of Schr\"odinger operators derived from \cite{csst08} is used to obtain sharp decay-in-time estimates for the one-dimensional wave equation with a model inverse-square potential satisfying assumption \ref{assm:B} in \S \ref{sec:classeq}, featuring decay rates that are analogous to the rates appearing in the precise leading-order asymptotics in Theorem \ref{thm:precise}. We refer to \cite{sch07,sch21} for a further overview of literature pertaining to time-decay estimates for one-dimensional wave equations with inverse-polynomially decaying potentials.
	
	\paragraph{Exactly inverse-square potentials on Minkowski}
	In \cite{pst03,bpst03,bpst04}, the time-decay properties for wave equations with the singular potentials $V_{\alpha}(r)=\alpha r^{-2}$, where $\alpha>-\frac{1}{4}$, on $3+1$-dimensional Minkowski spacetime backgrounds (i.e.\ with respect to the standard wave operator), as well as natural higher-dimensional analogues, were studied extensively. In these works, weighted $L^2$ estimates and Strichartz estimates were derived and applied to the analysis of several nonlinear problems. A different approach for obtaining decay-in-time estimates is provided in \cite{gvdm22}, exploiting instead a connection of the conformal structure of Minkowski with $1+1$-dimensional anti-de Sitter spacetimes to derive the leading-order late-time asymptotics from local existence theory for wave equations on anti-de Sitter spacetimes and obtaining the analogue of Theorem \ref{thm:precise} for the wave equation with singular potential $V_{\alpha}(r)=\alpha r^{-2}$ on the $3+1$-dimensional Minkowski spacetime.

\section{Uniform energy boundedness}
\label{sec:boundedness}
We start by deriving uniform boundedness for energy quantities that degenerate at the event horizon at $r=2M$. We define the \emph{$T$-energy} of $\phi$ as follows:
\begin{equation*}
E_T[\phi](\tau):=\int_{\Sigma_{\tau}}Dr^2(X\phi)^2+h\tilde{h} r^2(T\phi)^2+|\snabla_{\s^2}\phi|^2+\phi^2\,d\sigma dr.
\end{equation*}
Observe that $E_T[\phi]$ degenerates at $r=2M$ due to the presence of $D(r)=1-\frac{2M}{r}$.

We moreover define the \emph{renormalized $T$-energy} for $\phi_{\ell}$ as follows:
\begin{equation*}
\check{E}_T[\phi_{\ell}](\tau):=\int_{\Sigma_{\tau}}Dw_{\ell}^2r^2(X\check{\phi}_{\ell})^2+h\tilde{h} r^2w^2_{\ell}(T\check{\phi}_{\ell})^2\,d\sigma dr,
\end{equation*}
with the weight functions $w_{\ell}$ and the renormalizations $\check{\phi}_{\ell}$ defined in \S \ref{sec:weightfunctions}. We will moreover denote $E_T[\phi]:=~E_T[\phi](0)$ and $\check{E}_T[\phi_{\ell}]:=\check{E}_T[\phi_{\ell}](0)$.
\begin{proposition}
\label{prop:ebound}
Let $\psi\in C^{\infty}(\mathcal{R})\cap C^1 (\widehat{\mathcal{R}})$  be a solution to \eqref{eq:waveeq}. Then there exists a constant $C=C(V_{\alpha})>0$ such that for all $0\leq \tau_1<\tau_2$:
\begin{align}
\label{eq:ebound}
E_T[\phi](\tau_2)+\int_{\mathcal{H}^+\cup \mathcal{I}^+\cap\{\tau_1\leq \tau\leq \tau_2\}}(T\phi)^2 r^2\,d\sigma d\tau\leq &\: C E_T[\phi](\tau_1),\\
\label{eq:checkebound}
\check{E}_T[\phi_{\ell}](\tau_2)+\int_{\mathcal{H}^+\cup \mathcal{I}^+\cap\{\tau_1\leq \tau\leq \tau_2\}}(T\check{\phi}_{\ell})^2 w_{\ell}^2r^2\,d\sigma d\tau= &\: \check{E}_T[\phi_{\ell}](\tau_1).
\end{align}
Furthermore, there exists a constant $C=C(V_{\alpha})>0$ such that for all $\tau\geq 0$:
\begin{equation}
\label{eq:Tenrel}
\check{E}_T[\phi_{\ell}](\tau)\leq C E_T[\phi_{\ell}](\tau).
\end{equation}
\end{proposition}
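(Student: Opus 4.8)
The plan is to establish \eqref{eq:ebound} and \eqref{eq:checkebound} by the standard energy-current method for the Killing vector field $T=\partial_\tau$, and then to deduce \eqref{eq:Tenrel} by a purely algebraic comparison of the two energy densities, using the $r\to\infty$ asymptotics and the positivity of $w_\ell$ from Lemma~\ref{lm:propertiesw}. For \eqref{eq:ebound}, let $\mathbb{T}[\phi]$ be the energy--momentum tensor of \eqref{eq:waveeq} and set $J^T_\mu=\mathbb{T}_{\mu\nu}[\phi]T^\nu$; since $T$ is Killing for $g_M$ and $V_\alpha=V_\alpha(r)$ is $T$-invariant, one finds $\nabla^\mu J^T_\mu=(\square_{g_M}\phi-V_\alpha\phi)\,T\phi-\tfrac12\phi^2\,T(V_\alpha)=0$ along solutions. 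Applying the divergence theorem on the slab $\bigcup_{\tau_1\le\tau\le\tau_2}\Sigma_\tau$ — whose boundary consists of $\Sigma_{\tau_1}$, $\Sigma_{\tau_2}$ and the portions of $\mathcal{H}^+$ and $\mathcal{I}^+$ between them — produces an identity equating the flux through $\Sigma_{\tau_2}$ together with the fluxes through $\mathcal{H}^+$ and $\mathcal{I}^+$ to the flux through $\Sigma_{\tau_1}$. The latter two boundary fluxes are, up to positive constants, $\int (T\phi)^2 r^2\,d\sigma d\tau$ (using $g_M(T,T)=0$ on $\mathcal{H}^+$, and near $\mathcal{I}^+$ the asymptotically hyperboloidal behaviour of $\Sigma_\tau$, where $r^2(T\phi)^2$ is read as $(T\psi)^2|_{x=0}$), hence are nonnegative and may be kept on the left.

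It then remains to show that the flux $\int_{\Sigma_\tau}r^2\mathbb{T}(T,\mathbf{n}_\tau)[\phi]\,d\sigma dr$ is equivalent to $E_T[\phi](\tau)$. Expanding $r^2\mathbb{T}(T,\mathbf{n}_\tau)[\phi]$ in $(\tau,r,\theta,\varphi)$ coordinates yields the terms $Dr^2(X\phi)^2$, $h\tilde{h}r^2(T\phi)^2$, $|\snabla_{\s^2}\phi|^2$ and one comparable to $r^2V_\alpha\phi^2$. The upper bound $\mathrm{flux}\lesssim E_T[\phi]$ is immediate since $|r^2V_\alpha|\lesssim 1$. For the lower bound I would decompose $\phi$ into spherical modes: for $\ell\ge 1$, assumption~\ref{assm:A} gives the pointwise inequality $r^2V_\alpha+\ell(\ell+1)>\ell(\ell+1)-\tfrac14>0$, which bounds $|\snabla_{\s^2}\phi_\ell|^2+r^2V_\alpha\phi_\ell^2$ below by a multiple of $|\snabla_{\s^2}\phi_\ell|^2+\phi_\ell^2$; for $\ell=0$ one combines $r^2V_\alpha>-\tfrac14$ with the Hardy inequality \eqref{eq:hardy} to absorb $\int\phi_0^2$, the boundary contribution at $r=\infty$ being controlled by the boundedness of $\psi=r\phi$ and that at $r=2M$ by the horizon flux. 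With this coercivity in hand, \eqref{eq:ebound} follows. This coercivity is the one genuinely non-formal step — it is exactly where assumption~\ref{assm:A} is needed, to prevent $r^2V_\alpha\phi^2$ from overwhelming the gradient energy when $\alpha<0$ — and I expect it to be the main difficulty.

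For \eqref{eq:checkebound}, the renormalized quantity $\check{\phi}_\ell$ satisfies \eqref{eq:checkpsi2}, a wave equation in divergence form with \emph{no zeroth-order term} and with all coefficients independent of $\tau$. The key point is that multiplying \eqref{eq:checkpsi2} by $T\check{\phi}_\ell$ and integrating over $\Sigma_\tau$ produces an \emph{exact} divergence identity: the first-order terms generated (upon integration by parts in $r$) by the $XT\check{\phi}_\ell$ and $T\check{\phi}_\ell$ contributions in \eqref{eq:checkpsi2} are $\pm(r^2w_\ell^2(1-h))'(T\check{\phi}_\ell)^2$ and cancel, so no bulk term remains. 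Integrating over the slab, the conserved flux through $\Sigma_\tau$ is exactly $\check{E}_T[\phi_\ell](\tau)$ and the horizon and null-infinity fluxes are exactly $\int(T\check{\phi}_\ell)^2 w_\ell^2 r^2\,d\sigma d\tau$; hence \eqref{eq:checkebound} holds as an equality, with no constant.

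Finally, \eqref{eq:Tenrel} is algebraic. Writing $\phi_\ell=w_\ell\check{\phi}_\ell$ with $w_\ell=w_\ell(r)$, and recalling $X=\partial_r$, $T=\partial_\tau$, one has $w_\ell X\check{\phi}_\ell=X\phi_\ell-w_\ell'w_\ell^{-1}\phi_\ell$ and $w_\ell T\check{\phi}_\ell=T\phi_\ell$, so
\begin{equation*}
\check{E}_T[\phi_\ell](\tau)=\int_{\Sigma_\tau}Dr^2\bigl(X\phi_\ell-w_\ell'w_\ell^{-1}\phi_\ell\bigr)^2+h\tilde{h}r^2(T\phi_\ell)^2\,d\sigma dr.
\end{equation*}
The second term is $\le E_T[\phi_\ell](\tau)$, and $\bigl(X\phi_\ell-w_\ell'w_\ell^{-1}\phi_\ell\bigr)^2\le 2(X\phi_\ell)^2+2(w_\ell'w_\ell^{-1})^2\phi_\ell^2$, so it is enough to show that $Dr^2(w_\ell'w_\ell^{-1})^2$ is bounded on $[2M,\infty)$. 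By Lemma~\ref{lm:propertiesw}, $w_\ell$ is smooth and strictly positive on $[2M,\infty)$, so $w_\ell'w_\ell^{-1}$ is continuous there; and differentiating the asymptotics \eqref{eq:asymptw} (using $\beta_\ell>0$, so the leading term dominates both in $w_\ell$ and after differentiation) gives $w_\ell'w_\ell^{-1}=\tfrac{\beta_\ell-1}{2}\,r^{-1}+o(r^{-1})$ as $r\to\infty$. Hence $Dr^2(w_\ell'w_\ell^{-1})^2\le C$ for large $r$ (since $D\le 1$), while on a compact subinterval it is continuous and vanishes at $r=2M$. Therefore $\int Dr^2(w_\ell'w_\ell^{-1})^2\phi_\ell^2\le C\int\phi_\ell^2\le C\,E_T[\phi_\ell]$, and combining the estimates yields \eqref{eq:Tenrel}.
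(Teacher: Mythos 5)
Your overall route is the paper's: the $T$-multiplier identity (your covariant current computation is the same as the paper's direct multiplication of \eqref{eq:waveeq1} by $-T\phi$), the exact divergence identity for $\check{\phi}_{\ell}$ obtained from \eqref{eq:checkpsi2}, where the two $(r^2w_\ell^2(1-h))'(T\check{\phi}_\ell)^2$ terms indeed cancel, and the purely algebraic comparison for \eqref{eq:Tenrel} via $w_\ell X\check{\phi}_\ell=X\phi_\ell-w_\ell'w_\ell^{-1}\phi_\ell$ and the asymptotics \eqref{eq:asymptw} (note only that the constant you produce depends on $\ell$ through $w_\ell'w_\ell^{-1}\sim\tfrac{\beta_\ell-1}{2}r^{-1}$, which is consistent with how \eqref{eq:Tenrel} is used later). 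Your coercivity step for $\ell\ge 1$ via the spherical Poincar\'e inequality is fine and is a harmless variant of the paper, which does not decompose into modes.

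The genuine gap is in the $\ell=0$ coercivity, which you yourself identify as the crux. As described, you apply \eqref{eq:hardy} in the variable $r$ on $[2M,\infty)$; this yields $\int\phi_0^2\lesssim\int r^2(X\phi_0)^2$ plus boundary terms, and the weight $r^2(X\phi_0)^2$ \emph{cannot} be absorbed into the degenerate flux term $Dr^2(X\phi_0)^2$ near the horizon, precisely where $r^2V_\alpha\phi_0^2$ may be as negative as $-\tfrac14\phi_0^2$. Moreover, the remark that the $r=2M$ boundary contribution is ``controlled by the horizon flux'' is not meaningful: the horizon flux is a spacetime integral of $(T\phi)^2$ and cannot control $\phi_0^2(2M)$ on a fixed slice. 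The correct (and the paper's) move is to apply \eqref{eq:hardy} with $q=0$ in the shifted variable $x=r-2M$: the boundary term at $x=0$ vanishes identically because of the factor $a^{q+1}$, the term at infinity vanishes since $(r-2M)\phi_0^2\le \psi_0^2/r\to 0$ by $\psi\in C^1(\widehat{\mathcal{R}})$, and one obtains the sharp inequality $\int\phi_0^2\,dr\le 4\int (r-2M)^2(X\phi_0)^2\,dr=4\int D^2r^2(X\phi_0)^2\,dr$ with $D^2r^2\le Dr^2$. Since $r^2V_\alpha>-\tfrac14$ pointwise and $r^2V_\alpha\to\alpha>-\tfrac14$, the infimum of $r^2V_\alpha$ is strictly above $-\tfrac14$, and this strict room is exactly what lets you absorb the negative potential term into $\tfrac12\int Dr^2(X\phi_0)^2$ and still retain a positive multiple of it with which (via the same Hardy inequality) to recover the $\phi^2$ term in $E_T$. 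With this replacement your argument closes; without it the key inequality does not.
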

\begin{proof}
We first multiply both sides of \eqref{eq:waveeq1} with $-T\phi$ and integrate by parts to obtain:
\begin{equation*}
0=\int_{\tau_1}^{\tau_2}\int_{\Sigma_{\tau}}X\left(-Dr^2 X\phi T\phi+r^2(1-h)(T\phi)^2\right)+T\left(\frac{1}{2}h\tilde{h}r^2(T\phi)^2+\frac{1}{2}Dr^2(X\phi)^2+\frac{1}{2} r^2V_{\alpha}\phi^2+\frac{1}{2}|\snabla_{\s^2}\phi|^2\right)\,d\sigma dr d\tau.
\end{equation*}
If we do not have the non-negativity property $V_{\alpha}\geq 0$, we apply \eqref{eq:hardy} with $q=0$, together with $\psi\in C^{1}(\widehat{\mathcal{R}})$ to estimate
\begin{equation*}
\int_{\Sigma_{\tau}} \frac{1}{2}r^2V_{\alpha}\phi^2\,d\sigma dr\geq 4 \min\{\inf_{r'\in [2M,\infty)} r'^2V_{\alpha}(r'),0\}\int_{\Sigma_{\tau}}\frac{1}{2}D^2r^2(X\phi)^2\,d\sigma dr,
\end{equation*}
so coercivity of $E_T[\phi](\tau)$ follows from assumption \ref{assm:A}, which allow us to absorb the right-hand side above, and we obtain \eqref{eq:ebound}.

In order to obtain \eqref{eq:checkebound}, we instead multiply both sides of \eqref{eq:checkpsi2} with $-T\check{\phi}_{\ell}$ and integrate by parts to obtain:
\begin{equation*}
0=\int_{\tau_1}^{\tau_2}\int_{\Sigma_{\tau}}X\left(-Dw_{\ell}^2r^2 X\check{\phi}_{\ell} T\check{\phi}_{\ell}+r^2(1-h)w_{\ell}^2(T\check{\phi}_{\ell})^2\right)+T\left(Dw_{\ell}^2r^2(X\check{\phi}_{\ell})^2+\frac{1}{2}h\tilde{h}r^2w_{\ell}^2(T\check{\phi}_{\ell})^2\right)\,d\sigma dr d\tau.
\end{equation*}
Finally, \eqref{eq:Tenrel} follows straightforwardly from the definition of $\check{\phi}_{\ell}$.
\end{proof}

\begin{remark}
We only invoked the global assumption $r^2V_{\alpha}>-\frac{1}{4}$ (assumption \ref{assm:A}) to obtain \eqref{eq:ebound}. Energy boundedness of the renomalized energy \eqref{eq:checkebound} does \underline{not} require this assumption.
\end{remark}
\section{Integrated energy estimates}
\label{sec:iled}
In this section, we derive the integrated energy estimates that are necessary for establishing pointwise energy decay estimates in time. Though the main results of this paper invoke the assumption that $\alpha\neq 0$, the estimates in this section hold also for $\alpha=0$ and they provide an alternative way of obtaining integrated energy estimates for \eqref{eq:wavewithoutpot}, compared to existing methods in the literature. We start by deriving \emph{local} integrated energy estimates for the renormalized quantities $\check{\phi}_{\ell}$.
\begin{proposition}
\label{prop:intencheckpsi}
Let $\phi\in C^{\infty}(\mathcal{R})\cap C^1 (\widehat{\mathcal{R}})$ be a solution to \eqref{eq:waveeq}. Let $R>2M$ be arbitrarily large. Then there exists a constant $C=C(\ell,V_{\alpha},h,R)>0$ such that for all $0\leq \tau_1<\tau_2\leq \infty$:
\begin{align}
\label{eq:checkiled}
\int_{\tau_1}^{\tau_2}&\int_{\Sigma_{\tau}\cap\{r\leq R\}} (DX\check{\phi}_{\ell})^2+(T\check{\phi}_{\ell})^2\,d\sigma drd\tau+ \int_{\tau_1}^{\tau_2}\int_{\s^2}(X\check{\phi}_{\ell})^2+(T\check{\phi}_{\ell})^2|_{r=R}\,d\sigma d\tau\\ \nonumber
\leq &\: C \check{E}_T[\phi_{\ell}](\tau_1),\\
\label{eq:checkiled2}
\int_{\tau_1}^{\tau_2}&\int_{\Sigma_{\tau}} r^{-1-\delta}(DX\check{\phi}_{\ell})^2+r^{-1-\delta}(T\check{\phi}_{\ell})^2\,d\sigma drd\tau\\ \nonumber
\leq &\: C \check{E}_T[\phi_{\ell}](\tau_1)
\end{align}
\end{proposition}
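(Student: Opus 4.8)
The plan is to deduce both \eqref{eq:checkiled} and \eqref{eq:checkiled2} from a single Morawetz-type multiplier identity for the renormalized equation \eqref{eq:checkpsi1}, whose crucial feature is the absence of a zeroth-order term in $\check{\phi}_{\ell}$. Write $\chi:=\check{\phi}_{\ell}$, $a:=r^{-1}+w_{\ell}'w_{\ell}^{-1}$ and $G:=(Z\chi)^2+(T\chi)^2$, and multiply \eqref{eq:checkpsi1} by $\zeta(r)\,Z\chi$ for a radial weight $\zeta>0$ to be chosen. Using $[T,Z]=0$, the identities $Z\chi\cdot Z^2\chi=\tfrac12 Z((Z\chi)^2)$ and $-Z\chi\cdot T^2\chi=-T(Z\chi\,T\chi)+\tfrac12 Z((T\chi)^2)$, and the divergence structure $Z(F)=\partial_r(DF)+\partial_{\tau}((h-1)F)-D'F$ in $(\tau,r)$ coordinates (recall $Z=DX+(h-1)T$, $X=\partial_r$, $T=\partial_{\tau}$), one obtains the pointwise identity $\partial_r\!\left(\tfrac12 D\zeta\,G\right)+\partial_{\tau}\!\left(\tfrac12(h-1)\zeta\,G-\zeta\,Z\chi\,T\chi\right)=\tfrac12(D\zeta)'\,G-2a\zeta\,(Z\chi)^2$. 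Diagonalising the $\tau$-current gives $\tfrac12(h-1)\zeta G-\zeta Z\chi T\chi=\zeta\big(h(\underline{L}\chi)^2-(2-h)(L\chi)^2\big)$; since $h\in(0,2]$ this is a bounded quadratic form, and because $L\chi=\tfrac{D}{2}X\chi+\tfrac{h}{2}T\chi$ behaves like $X\chi$ near $\mathcal{I}^+$ (with the $(T\chi)^2$-contributions always carrying an extra factor $h$), both the $\tau=\tau_1$ and $\tau=\tau_2$ fluxes are controlled by $\check{E}_T[\phi_{\ell}]$ on the respective slices — the weight $r^2w_{\ell}^2$ appearing in $\check{E}_T$ dominating $\zeta$ for the choices below — so that, using \eqref{eq:checkebound} for $\tau_2$, the net $\tau$-flux contribution is $\lesssim\check{E}_T[\phi_{\ell}](\tau_1)$.

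For \eqref{eq:checkiled} I would take $\zeta=r^{-q}$ and integrate the identity over $\{r\le R\}\cap\{\tau_1\le\tau\le\tau_2\}$ against $d\sigma\,dr\,d\tau$. The $r$-flux at $r=2M$ vanishes since $D(2M)=0$, and the $r$-flux at $r=R$ produces, up to $R$-dependent constants, the boundary term $\int_{\tau_1}^{\tau_2}\int_{\s^2}(X\chi)^2+(T\chi)^2|_{r=R}$ on the left of \eqref{eq:checkiled}. For the bulk one computes $-\tfrac12(Dr^{-q})'=\tfrac12 r^{-q-1}\big(q-\tfrac{2M(q+1)}{r}\big)\gtrsim q\,r^{-q-1}$ uniformly on any fixed collar $\{r_0\le r\le R\}$ with $r_0>2M$, once $q$ is large; since $a$ is smooth on $[2M,\infty)$ with $a(r)\sim\tfrac{1+\beta_{\ell}}{2}r^{-1}>0$ as $r\to\infty$, the term $2a\zeta(Z\chi)^2$ is lower order on $[r_0,R]$, so for $q$ sufficiently large (depending on $\ell,V_{\alpha},h,R$ through $w_{\ell}$) the bulk is coercive, $\gtrsim_R(DX\chi)^2+(T\chi)^2$, there (using $(DX\chi)^2=(Z\chi-(h-1)T\chi)^2\lesssim G$). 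On the remaining collar $\{2M\le r\le r_0\}$ the bulk is — structurally — not coercive, because $D\zeta$ must increase from $0$ and hence $(D\zeta)'>0$ there; this region is absorbed using the red-shift estimate of \S\ref{sec:boundedness} (cf.\ \cite{redshift}), which bounds $\int_{\tau_1}^{\tau_2}\int_{\Sigma_{\tau}\cap\{r\le r_0\}}(X\chi)^2+(T\chi)^2\,d\sigma dr d\tau$ by $\check{E}_T[\phi_{\ell}](\tau_1)$ together with a bulk term supported in a thin layer outside $r_0$, which is then swallowed by the coercive Morawetz bulk. (The sign $a(2M)=\tfrac{1}{2M}\big(1+4M^2V_{\alpha}(2M)+\ell(\ell+1)\big)>0$, a consequence of assumption \ref{assm:A}, makes the matching of the two estimates clean.) This proves \eqref{eq:checkiled}.

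For \eqref{eq:checkiled2} I would fix $R_0$ large. On $\{r\le R_0\}$ the bound is immediate from \eqref{eq:checkiled} with $R=R_0$, since $r^{-1-\delta}\le(2M)^{-1-\delta}$. On $\{r\ge R_0\}$ I would apply the same identity with $\zeta=r^{-\delta}$, integrated over $\{r\ge R_0\}\cap\{\tau_1\le\tau\le\tau_2\}$: for $R_0$ large one has both $-\tfrac12(Dr^{-\delta})'\gtrsim_{\delta}r^{-1-\delta}$ and $a>0$ on $\{r\ge R_0\}$, so the bulk is coercive, $\gtrsim r^{-1-\delta}\big((DX\chi)^2+(T\chi)^2\big)$; the inner $r$-flux at $r=R_0$ is exactly of the type already bounded in \eqref{eq:checkiled} (with $R=R_0$); the $r$-flux at $\mathcal{I}^+$ vanishes by the assumed regularity of $\psi$ on $\widehat{\mathcal{R}}$ (or, if preferred, by approximating the data by functions supported away from $\mathcal{I}^+$ and passing to the limit); and the $\tau$-fluxes are controlled as in the first paragraph. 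Adding the two contributions gives \eqref{eq:checkiled2}.

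The main obstacle is the near-horizon collar in \eqref{eq:checkiled}: the multiplier $\zeta Z\chi$ cannot by itself produce a coercive bulk at $r=2M$, so it must be combined with the red-shift estimate, and the various boundary contributions (at $r=r_0$, at $r=R$, and at $\tau=\tau_1,\tau_2$) must be arranged so that the whole estimate closes with only $\check{E}_T[\phi_{\ell}](\tau_1)$ on the right. Everything away from the horizon — in particular the deduction of \eqref{eq:checkiled2} — is then routine once the identity above is in hand.
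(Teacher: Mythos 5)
Your multiplier is the right one, and your treatment of the $\tau$-fluxes, of the region away from the horizon, and of \eqref{eq:checkiled2} is in line with the paper; the genuine gap is the near-horizon step. In your form of the identity (with $r$-current $\tfrac{1}{2}D\zeta\,G$) the bulk does, as you say, lose coercivity on a collar $2M\leq r\lesssim 2M(1+1/q)$, and you propose to absorb this with ``the red-shift estimate of \S\ref{sec:boundedness}''. No such estimate is proved there: that section contains only the degenerate $T$-energy boundedness, and the paper's red-shift statement is Proposition \ref{prop:redshift}, which appears \emph{after} the present proposition and whose proof uses \eqref{eq:fullied}, which itself rests on \eqref{eq:checkiled} -- so invoking it here is circular. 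More fundamentally, any red-shift estimate needs a \emph{non-degenerate} energy at $\tau_1$ as data (the transversal derivative on $\mathcal{H}^+$ cannot be recovered from an energy degenerating there), whereas the inequality to be proved allows only the degenerate renormalized energy $\check{E}_T[\phi_{\ell}](\tau_1)$ on the right-hand side; your argument would therefore at best yield a weaker statement with a strictly stronger data norm. Note also that \eqref{eq:checkiled} asks only for $(DX\check{\phi}_{\ell})^2$ in the bulk, so the non-degenerate control of $X\check{\phi}_{\ell}$ that a red-shift multiplier provides is not even required.

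The missing idea is how the single multiplier already yields a bulk coercive up to $r=2M$: rearrange the identity (equivalently, divide the first multiplier identity by $D$) so that the $r$-current is $\tfrac{f}{2}\big[(Z\check{\phi}_{\ell})^2+(T\check{\phi}_{\ell})^2\big]$ with \emph{no} factor of $D$. Then, for $f=r^{-q}$ with $q$ large (depending on $w_{\ell}$), the bulk $\big[2f(r^{-1}+w_{\ell}'w_{\ell}^{-1})-\tfrac{1}{2}f'\big](Z\check{\phi}_{\ell})^2-\tfrac{1}{2}f'(T\check{\phi}_{\ell})^2$ is positive on all of $[2M,\infty)$ -- there is no bad collar -- and it controls $(DX\check{\phi}_{\ell})^2+(T\check{\phi}_{\ell})^2$ up to the horizon. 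The two prices are payable from Proposition \ref{prop:ebound} alone: the $r=2M$ boundary term becomes $\tfrac{f}{2}\big[1+(h-1)^2\big](T\check{\phi}_{\ell})^2$ on $\mathcal{H}^+$, which is exactly the horizon flux controlled by the conservation identity \eqref{eq:checkebound}; and the $\tau$-currents, despite the apparent $1/D$ created by the division, carry an overall factor of $D$ -- this is the role of the algebraic identity for $(1-h)(Z\check{\phi}_{\ell})^2+(1-h)(T\check{\phi}_{\ell})^2+2T\check{\phi}_{\ell}Z\check{\phi}_{\ell}$ in the paper's proof -- and so are bounded by $\check{E}_T[\phi_{\ell}](\tau_i)\leq C\,\check{E}_T[\phi_{\ell}](\tau_1)$. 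With this arrangement your remaining steps (the $r=R$ flux term and the choice $f=r^{-\delta}$ for \eqref{eq:checkiled2}) go through as you describe. (As an aside, \eqref{eq:checkpsi1} as printed is missing a factor $D$ on the first-order term -- compare the $D$'s in the paper's first multiplier identity -- but this does not affect the comparison: the collar issue and its resolution are the same either way.)
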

\begin{proof}
Let $f:[2M,\infty)\to \R$ be a differentiable function, to be determined later. We multiply both sides of \eqref{eq:checkpsi1} with $f(r)Z\check{\phi}_{\ell}$ and apply the Leibniz rule to obtain:
\begin{equation*}
\begin{split}
0=&\: \frac{1}{2} f Z((Z\check{\phi}_{\ell})^2)+2f(r^{-1}+Dw_{\ell}'w_{\ell}^{-1})(Z\check{\phi}_{\ell})^2-f T^2\check{\phi}_{\ell}Z\check{\phi}_{\ell}\\
=&\:Z\left(\frac{f}{2}(Z\check{\phi}_{\ell})^2+\frac{f}{2}(T\check{\phi}_{\ell})^2\right)-T\left(f T\check{\phi}_{\ell}Z\check{\phi}_{\ell}\right)+\left[2fDr^{-1}+2fDw_{\ell}'w_{\ell}^{-1}-\frac{D}{2}\frac{df}{dr}\right](Z\check{\phi}_{\ell})^2\\
&-\frac{1}{2}D\frac{df}{dr}(T\check{\phi}_{\ell})^2.
\end{split}
\end{equation*}
The above equation is equivalent to:
\begin{equation*}
\begin{split}
0=&\:X\left(\frac{f}{2}(Z\check{\phi}_{\ell})^2+\frac{f}{2}(T\check{\phi}_{\ell})^2\right)-T\left(\frac{f}{2}(1-h)D^{-1}(Z\check{\phi}_{\ell})^2+\frac{f}{2}(1-h)D^{-1}(T\check{\phi}_{\ell})^2+fD^{-1}T\check{\phi}_{\ell}Z\check{\phi}_{\ell}\right)\\
&+\left[2fD^{-1}r^{-1}+2fw_{\ell}'w_{\ell}^{-1}-\frac{1}{2}\frac{df}{dr}\right](Z\check{\phi}_{\ell})^2-\frac{1}{2}\frac{df}{dr}(T\check{\phi}_{\ell})^2.
\end{split}
\end{equation*}
By \eqref{eq:asymptw}, it follows that
\begin{equation*}
2D^{-1}r^{-1}+2w_{\ell}'w_{\ell}^{-1}=(1+\beta_{\ell})r^{-1}+O_{\infty}(r^{-1-\beta_{\ell}})+O_{\infty}(r^{-2}),
\end{equation*}
so by taking $f(r)=r^{-q}$ with $q>0$ suitably large, depending on $w_{\ell}$, we obtain the following inequality for all $r\in [2M,\infty)$:
\begin{equation*}
2f D^{-1}r^{-1}+2fw_{\ell}'w_{\ell}^{-1}-\frac{1}{2}\frac{df}{dr}>0.
\end{equation*}
After integrating over $\bigcup_{\tau\in [\tau_1,\tau_2]}\Sigma_{\tau}$, the contribution of the total $T$-derivative results moreover in boundary terms at $\Sigma_{\tau_i}$, $i=1,2$, which can be estimated via Proposition \ref{prop:ebound} after applying the identity:
\begin{multline*}
D^{-1}(1-h)(Z\check{\phi}_{\ell})^2+(1-h)D^{-1}(T\check{\phi}_{\ell})^2+2D^{-1}T\check{\phi}_{\ell}Z\check{\phi}_{\ell}\\
=(1-h)D(X\check{\phi}_{\ell})^2-(2-h)(1-h)hD^{-1}(T\check{\phi}_{\ell})^2+2(2-h)h X\check{\phi}_{\ell}T\check{\phi}_{\ell}\\
=(1-h)D(X\check{\phi}_{\ell})^2-(1-h)h\tilde{h}(T\check{\phi}_{\ell})^2+2h \tilde{h}DX\check{\phi}_{\ell}T\check{\phi}_{\ell},
\end{multline*}
which follows from \eqref{eq:defZ}. We conclude that \eqref{eq:checkiled} holds.

In order to obtain \eqref{eq:checkiled2}, we subsequently take $q=\delta$, with $\delta>0$ arbitrarily small and combine the resulting estimate with the already established local integrated energy estimate \eqref{eq:checkiled}.
\end{proof}

In order to pass from the integrated energy estimates for $\check{\phi}_{\ell}$ in Proposition \ref{prop:intencheckpsi} to an integrated energy estimate for $\phi$, we need to consider separately the cases:
\begin{enumerate}[label=\Alph*)]
\item \label{item:iledcase1}
$\alpha\leq 0$ \underline{and} $\ell=0$\quad \textnormal{and}
\item  \label{item:iledcase2}
$\alpha>0$ \underline{or} $\ell\geq 1$.
 \end{enumerate}

\subsubsection{$\alpha\leq 0$ and $\ell=0$}

In this section, we consider the case $\alpha\leq 0$ and restrict to zeroth angular frequency modes $\phi_0$.
\begin{proposition}
\label{prop:intestaneg}
Let $1-\beta_{0}<p<1+\beta_{0}$. Then, for $R>2M$ suitably large depending on $M$ and $V_{\alpha}$, there exists a constant $C=C(M,h,R,V_{\alpha})>0$ such that
\begin{equation}
\label{eq:rweightest1}
\begin{split}
\int_{\Sigma_{\tau_2}\cap\{r\geq R\}}& r^p(L\psi_0)^2\,d\sigma dr+\int_{\tau_1}^{\tau_2}\int_{\Sigma_{\tau}\cap\{r\geq R\}} r^{p-1}(L\psi_0)^2+r^{p-3}\psi_0^2+r^{p-3}h_0(T\psi_0)^2\,d\sigma dr\\
&+\int_{\tau_1}^{\tau_2}\int_{\s^2} r^{p-2}\psi_0^2\Big|_{r=R}\,d\sigma d\tau\\
\leq &\:C\int_{\Sigma_{\tau_1}\cap\{r\geq R\}}r^p(L\psi_0)^2\,d\sigma dr+CE_T[\phi_0](\tau_1).
\end{split}
\end{equation}
\end{proposition}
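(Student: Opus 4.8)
The plan is to prove \eqref{eq:rweightest1} by the $r^p$-weighted energy method of \cite{newmethod}, applied to the $\ell=0$ projection of \eqref{eq:waveeq2a}, in which the angular term drops:
\[
-4\underline{L}L\psi_0 = D(V_{\alpha}+D'r^{-1})\psi_0,\qquad r^2(V_{\alpha}+D'r^{-1}) = \alpha + O_{\infty}(r^{-1}).
\]
First I would multiply both sides by $\chi\, r^p L\psi_0$, with $\chi=\chi(r)$ a cutoff vanishing for $r\leq R$ and equal to $1$ for $r\geq 2R$, and rewrite the identity (via the Leibniz rule and \eqref{eq:LbarXT}--\eqref{eq:defZ}, or equivalently starting from the $(\tau,x)$-form \eqref{eq:waveeq3}) as a spacetime divergence $\partial_{\tau}(\cdot)+X(\cdot)=-(\text{bulk})$. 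Integrating over $\bigcup_{\tau\in[\tau_1,\tau_2]}\big(\Sigma_{\tau}\cap\{r\geq R\}\big)$ and applying the divergence theorem yields: the flux $\int_{\Sigma_{\tau_2}\cap\{r\geq R\}}r^p(L\psi_0)^2\,d\sigma dr$ (up to lower-order terms controlled by Proposition \ref{prop:ebound}); a nonnegative flux at $\mathcal{I}^+$, discarded once $p<1+\beta_0\leq 2$ (note $\alpha\leq 0$ forces $\beta_0\leq 1$); the $\chi'$-terms and $\{r=R\}$-flux, treated below; and the initial flux $\int_{\Sigma_{\tau_1}\cap\{r\geq R\}}r^p(L\psi_0)^2\,d\sigma dr$ on the right.

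The heart of the proof is the sign of the bulk. Its principal term is $p\int\!\!\int r^{p-1}(L\psi_0)^2$, nonnegative for $p>0$, together with a lower-order $h_0$-weighted term $\sim\int\!\!\int r^{p-3}h_0(T\psi_0)^2$ forced by the hyperboloidal foliation. Transferring one $L$-derivative off the zeroth-order term onto its coefficient produces $-\tfrac14\tfrac{d}{dr}\!\big(r^pD(V_{\alpha}+D'r^{-1})\big)\psi_0^2 = -\tfrac14\alpha(p-2)r^{p-3}\psi_0^2 + O(r^{p-4})\psi_0^2$, which for $\alpha<0$ and $p<2$ has the \emph{wrong} sign. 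The decisive step is to absorb it into the good term via the Hardy inequality \eqref{eq:hardy} with $q=p-3$ (legitimate since $p\neq 2$), trading $\int r^{p-3}\psi_0^2$ for $\tfrac{4}{(2-p)^2}\int r^{p-1}(X\psi_0)^2$ plus boundary contributions at $r=R$ and at $\mathcal{I}^+$ (the latter vanishing for $p<2$); then, using $\tfrac{D}{2}X\psi_0=L\psi_0-\tfrac{h}{2}T\psi_0$ and $h\sim h_0r^{-2}$, one passes from $(X\psi_0)^2$ back to $(L\psi_0)^2$, the $T\psi_0$-contribution carrying the harmless weight $r^{p-5}$. After this, the net coefficient of $\int\!\!\int r^{p-1}(L\psi_0)^2$ is a positive multiple of $p-\tfrac{-4\alpha}{2-p}$, which is $>0$ exactly when $p(2-p)>-4\alpha$, i.e.\ when $(p-1)^2<1+4\alpha=\beta_0^2$, i.e.\ when $1-\beta_0<p<1+\beta_0$ --- precisely the hypothesis. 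Choosing $R$ large enough that all $O(r^{-1})$-relative errors (from $r^2V_{\alpha}-\alpha$, from $h-h_0r^{-2}$, from the $\chi'$-terms, and from the cross terms above) are dominated by the good bulk on $\{r\geq R\}$, one absorbs these and recovers the bulk and $\int_{\tau_1}^{\tau_2}\int_{\s^2}r^{p-2}\psi_0^2|_{r=R}\,d\sigma d\tau$ terms on the left of \eqref{eq:rweightest1}.

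Two points require care. First, the terms supported near $r=R$ (the $\chi'$-terms, the $\{r=R\}$-flux not kept on the left, and the Hardy boundary term at $r=R$) must be bounded by $E_T[\phi_0](\tau_1)$ alone, since no integrated energy decay estimate is available yet --- indeed the present proposition is an ingredient in proving one. This works because, after using $\psi_0=r\phi_0$ and $X\psi_0=\phi_0+rX\phi_0$ to re-express them, such terms are dominated by the flux of the $T$-energy current through the timelike hypersurface $\{r=R\}$ over $[\tau_1,\tau_2]$, and that flux is in turn $\lesssim E_T[\phi_0](\tau_1)$: apply the $T$-energy identity of Proposition \ref{prop:ebound} in $\{2M\leq r\leq R\}\cap\{\tau_1\leq\tau\leq\tau_2\}$ and discard the nonnegative horizon flux. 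Second --- and this is the main obstacle --- the absorption of the sign-indefinite $r^{p-3}\psi_0^2$ bulk term when $\alpha<0$ is genuinely borderline: after the Hardy inequality it is of exactly the same order in $r$ as the good $r^{p-1}(L\psi_0)^2$ term, with no slack, so the argument closes only on the open interval $1-\beta_0<p<1+\beta_0$ and the Hardy constants $\tfrac{4}{(2-p)^2}$, $\tfrac{2}{2-p}$ must be tracked exactly; the remaining work is a careful but routine accounting of lower-order error terms in $\{r\geq R\}$.
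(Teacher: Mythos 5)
Your bulk computation is essentially the paper's: the multiplier $r^pL\psi_0$, the Hardy inequality with the sharp constant, and the absorption condition $\tfrac{|\alpha|}{2-p}<\tfrac14 p\iff(p-1)^2<\beta_0^2$ all match \eqref{eq:rpid} and its treatment. The gap is in how you handle everything anchored near $r=R$, which is where the actual difficulty of the proposition sits. You propose to bound the cutoff terms, the Hardy boundary term and the zeroth-order contributions at $r=R$ by the flux of the $T$-energy current through the timelike hypersurface $\{r=R\}$, and to bound that flux by $E_T[\phi_0](\tau_1)$ from the $T$-energy identity in $\{2M\leq r\leq R\}$. This step is not valid: the $T$-energy flux through a timelike hypersurface is a cross term of the schematic form $T\phi_0\cdot X\phi_0$, it has no sign, so the interior energy identity does not bound its absolute value; and even if it did, it would not control the \emph{squares} $\psi_0^2$, $(X\psi_0)^2$, $(T\psi_0)^2$ integrated over $\{r=R\}\times[\tau_1,\tau_2]$, nor the spacetime integrals over the annulus $\{R\leq r\leq 2R\}$ where your $\chi'$-terms live. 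These are genuinely integrated-decay-type quantities; if they followed from energy boundedness by such soft means, the difficulties described in Step 1 of \S\ref{sec:proofideas} would not exist. The paper closes precisely this point using Proposition \ref{prop:intencheckpsi}, which \emph{is} already available at this stage (your parenthetical ``no integrated energy decay estimate is available yet'' misreads the structure): the renormalized quantity $\check{\phi}_0=w_0^{-1}\phi_0$ obeys an equation with no zeroth-order term, and \eqref{eq:checkiled}, \eqref{eq:checkiled2} together with \eqref{eq:Tenrel} give $\int_{\tau_1}^{\tau_2}\int_{\s^2}[(X\check{\phi}_0)^2+(T\check{\phi}_0)^2]|_{r=R}\,d\sigma d\tau\lesssim E_T[\phi_0](\tau_1)$ plus a global $r^{-1-\delta}$-weighted bound. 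The wrong-signed boundary term $\sim|\alpha|\,r^{p-2}\psi_0^2|_{r=R}$ is then absorbed into the good $(L\psi_0)^2$ flux at $r=R$ via the pointwise inequality $(L\psi_0)^2\geq[\tfrac18(1+\beta_0)^2+O(r^{-1})]r^{-2}\psi_0^2-C\epsilon^{-1}r^2w_0^2[(X\check{\phi}_0)^2+(T\check{\phi}_0)^2]$, using $(1+\beta_0)^2-2|\alpha|>0$; this is also how $\int_{\tau_1}^{\tau_2}\int_{\s^2}r^{p-2}\psi_0^2|_{r=R}$ ends up on the \emph{left} of \eqref{eq:rweightest1}, something your cutoff (vanishing at $r=R$) cannot produce without a further argument of exactly this kind.

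Two smaller but real omissions: the $L$-multiplier identity \eqref{eq:rpid} does \emph{not} generate a positive bulk term $r^{p-3}h_0(T\psi_0)^2$ (rewriting $\underline{L}$ in terms of $X,T$ only changes the flux structure), so to place that term on the left the paper needs the second multiplier $-r^{p-2}\underline{L}\psi_0$ and the identity \eqref{eq:rpidLbar}, whose bulk $(2-p)Dr^{p-3}(\underline{L}\psi_0)^2$ combines with the $(L\psi_0)^2$ control; your claim that the foliation ``forces'' this term is incorrect. Likewise, the $r^{p-5}(T\psi_0)^2$ error produced when converting $(X\psi_0)^2$ back to $(L\psi_0)^2$ is a spacetime integral and is not ``harmless'' by weight-counting alone: it is bounded via \eqref{eq:checkiled2} after renormalizing by $w_0$, which is what forces the (automatically satisfied) restriction $p<3-\beta_0$. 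In short, the sharp bulk absorption you describe is right, but without the renormalized estimates of Proposition \ref{prop:intencheckpsi} the boundary and error terms cannot be closed, and the argument as written does not prove \eqref{eq:rweightest1}.
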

\begin{proof}
We will multiply both sides of \eqref{eq:waveeq2a} with $-r^pL\psi$:
\begin{equation}
\label{eq:rpid}
\begin{split}
0=&\: 2r^p\underline{L} ((L\psi)^2)-Dr^{p-2}L\psi\slashed{\Delta}_{\s^2}\psi+\frac{1}{2}D(V_{\alpha}r^p+D'r^{p-1})L(\psi^2)\\
=&\:\underline{L}\left[2r^p(L\psi)^2\right]+L\left[\frac{1}{2}D(V_{\alpha}r^p+D'r^{p-1})\psi^2+\frac{1}{2}Dr^{p-2}|\snabla_{\s^2}\psi|^2\right]\\
&+p D r^{p-1}(L\psi)^2-\frac{1}{4}D\frac{d}{dr}\left(DV_{\alpha}r^p\right)\psi^2-\frac{1}{4}D\frac{d}{dr}\left(D r^{p-2}\right)|\snabla_{\s^2}\psi|^2\\
=&\:D X\left[-r^p(L\psi)^2+\frac{1}{4}D(V_{\alpha}r^p+D'r^{p-1})\psi^2+\frac{1}{4}Dr^{p-2}|\snabla_{\s^2}\psi|^2\right]\\
&+DT\left[\tilde{h}r^p(L\psi)^2+\frac{h}{4}D(V_{\alpha}r^p+D'r^{p-1})\psi^2+\frac{h}{4}Dr^{p-2}|\snabla_{\s^2}\psi|^2\right]\\
&+p D r^{p-1}(L\psi)^2-\frac{1}{4}D\frac{d}{dr}\left(DV_{\alpha}r^p\right)\psi^2-\frac{1}{4}D\frac{d}{dr}\left(D r^{p-2}\right)|\snabla_{\s^2}\psi|^2.
\end{split}
\end{equation}
We integrate the above equality over $\bigcup_{\tau\in [\tau_1,\tau_2]}\Sigma_{\tau}\cap\{r\geq R\}$, using that $\psi=\psi_0$, and we apply \eqref{eq:ebound} in combination with \eqref{eq:checkiled} and \eqref{eq:Tenrel} to obtain:
\begin{equation*}
\begin{split}
\int_{\Sigma_{\tau_2}\cap\{r\geq R\}}& \tilde{h}r^p(L\psi_0)^2\,d\sigma dr+\int_{\tau_1}^{\tau_2}\int_{\s^2}r^p(L\psi_0)^2-\frac{1}{4}(\alpha +O_{\infty}(r^{-1}))r^{p-2}\psi_0^2\Big|_{r=R}\,d\sigma d\tau\\
&+\int_{\tau_1}^{\tau_2}\int_{\Sigma_{\tau}\cap\{r\geq R\}} p  r^{p-1}(L\psi_0)^2+\frac{1}{4}(2-p) r^{p-3}[\alpha+O_{\infty}(r^{-1})] \psi_0^2\,d\sigma dr d\tau\\
\leq &\:C\int_{\Sigma_{\tau_1}\cap\{r\geq R\}}\tilde{h}r^p(L \psi_0)^2\,d\sigma dr+CE_T[\phi_0](\tau_1).
\end{split}
\end{equation*}

Note that by \eqref{eq:hardy}, we can estimate
\begin{equation}
\label{eq:auxestal0}
\begin{split}
\int_{\tau_1}^{\tau_2}&\int_{\Sigma_{\tau}\cap\{r\geq R\}}  \frac{1}{4}(2-p) r^{p-3}(|\alpha|+O_{\infty}(r^{-1})) \psi_0^2\,d\sigma dr d\tau\leq  \frac{|\alpha|}{2-p}\int_{\tau_1}^{\tau_2}\int_{\Sigma_{\tau}\cap\{r\geq R\}}  r^{p-3}(1+C r^{-1})(X\psi_0)^2\,d\sigma dr d\tau\\
&+\frac{1}{2}\int_{\tau_1}^{\tau_2}\int_{\s^2} (|\alpha|r^{p-2}+O_{\infty}(r^{p-3}))\psi_0^2\,d\sigma d\tau\Big|_{r=R},
\end{split}
\end{equation}
for some constant $C=C(M,p,V_{\alpha})>0$.

We can further apply Young's inequality to estimate for $\epsilon>0$ arbitrarily small:
\begin{equation*}
	\int_{\tau_1}^{\tau_2}\int_{\Sigma_{\tau}\cap\{r\geq R\}} \frac{1}{4}p r^{p-1}(D X\psi_0)^2\,d\sigma dr d\tau \leq \int_{\tau_1}^{\tau_2}\int_{\Sigma_{\tau}\cap\{r\leq R\}} (1+\epsilon) pr^{p-1}(L\psi_0)^2+\frac{1}{4}(1+\epsilon^{-1})h^2r^{p-1}(T\psi_0)^2\,d\sigma dr d\tau.
\end{equation*}
By \eqref{eq:checkiled2} and \eqref{eq:Tenrel}, we can estimate
\begin{equation*}
\int_{\tau_1}^{\tau_2}\int_{\Sigma_{\tau}\cap\{r\geq R\}} h^2r^{p-1}(T\psi_0)^2\,d\sigma dr d\tau \leq C E_T[\phi_0](\tau_1),
\end{equation*}
provided we restrict to $p-5<-2-\beta_0$, or equivalently $p<3-\beta_0$.

We can therefore absorb the spacetime integral on the right-hand side of \eqref{eq:auxestal0} into the left-hand side if we take
\begin{equation*}
\frac{|\alpha|}{2-p}<\frac{1}{4}p,
\end{equation*}
which is equivalent to $1-\beta_0<p<1+\beta_0$. Note that $3-\beta_0\leq 1+\beta_0$ for $\alpha\leq 0$.


It remains to group and estimate the various boundary terms on $\{r=R\}$. It is here that we will make use of Proposition \ref{prop:intencheckpsi}.

We apply Young's inequality with $\epsilon>0$ arbitrarily small together with \eqref{eq:asymptw} to obtain:
\begin{equation*}
\begin{split}
(L\psi)^2=&\:2(L(r w_0 \check{\phi}_0))^2\geq 2(1-\epsilon)\frac{D^2}{4}\left(\frac{d(rw_0)}{dr}\right)^2\check{\phi}_0^2-2(\epsilon^{-1}-1)r^2w_0^2(L\check{\phi}_0)^2\\
\geq &\: \left[\frac{1}{8}(1+\beta_0)^2+O_{\infty}(r^{-1})\right]r^{-2}\psi^2- 2(\epsilon^{-1}-1) r^2w_0^2(L\check{\phi}_0)^2,
\end{split}
\end{equation*}
so, for $\epsilon>0$ arbitrarily small, we can take $R>2M$ suitably large so that we can estimate
\begin{equation}
\label{eq:auxrpaneg}
\begin{split}
\int_{\Sigma_{\tau_2}\cap\{r\geq R\}}& r^p(L\psi_0)^2\,d\sigma dr+\int_{\tau_1}^{\tau_2}\int_{\s^2}\frac{1}{8}\left((1-\epsilon)(1+\beta_0)^2-2|\alpha|\right)r^{p-2}\psi_0^2\Big|_{r=R}\,d\sigma d\tau\\
&+\int_{\tau_1}^{\tau_2}\int_{\Sigma_{\tau}\cap\{r\geq R\}} \left(\frac{1}{4}p  -\frac{|\alpha|}{2-p}+O_{\infty}(r^{-1})\right)r^{p-1}(L\psi_0)^2\,d\sigma dr d\tau\\
\leq &\:C\int_{\Sigma_{\tau_1}\cap\{r\geq R\}}r^p(L\psi_0)^2\,d\sigma dr+C\epsilon^{-1}\int_{\tau_1}^{\tau_2}\int_{\s^2} w_0^2r^{p-2}\left[(X\check{\phi}_0)^2+(T\check{\phi}_0)^2\right]\,d\sigma d\tau+CE_T[\phi_0](\tau_1).
\end{split}
\end{equation}
Note that
\begin{equation*}
(1+\beta_0)^2-2|\alpha|=1+\beta_0^2-2|\alpha|+2\beta_0>2(1-3|\alpha|)>0,
\end{equation*}
so for $1-\beta_0<p<1+\beta_0$ and $\epsilon>0$ suitably small, the terms on the left-hand side of \eqref{eq:auxrpaneg} are non-negative definite.

To obtain also control of a weighted spacetime integral of $\underline{L}\psi_0$, we multiply both sides of \eqref{eq:waveeq2a} with $-r^{p-2}\underline{L}\psi$ to obtain:
\begin{equation}
\label{eq:rpidLbar}
\begin{split}
0=&\:L\left[2r^{p-2}(\underline{L}\psi)^2\right]+\underline{L}\left[\frac{1}{2}D(V_{\alpha}r^{p-2}+D'r^{p-3})\psi^2+\frac{1}{2}Dr^{p-4}|\snabla_{\s^2}\psi|^2\right]\\
&+(2-p) D r^{p-3}(\underline{L}\psi)^2+\frac{1}{4}D\frac{d}{dr}\left(DV_{\alpha}r^{p-2}\right)\psi^2+\frac{1}{4}D\frac{d}{dr}\left(D r^{p-4}\right)|\snabla_{\s^2}\psi|^2\\
=&\:D X\left[r^{p-2}(\underline{L}\psi)^2-\frac{1}{4}D(V_{\alpha}r^{p-2}+D'r^{p-3})\psi^2-\frac{1}{4}Dr^{p-4}|\snabla_{\s^2}\psi|^2\right]\\
&+DT\left[hr^{p-2}(\underline{L}\psi)^2+\frac{\tilde{h}}{4}D(V_{\alpha}r^{p-2}+D'r^{p-3})\psi^2+\frac{\tilde{h}}{4}Dr^{p-4}|\snabla_{\s^2}\psi|^2\right]\\
&+(2-p) D r^{p-3}(\underline{L}\psi)^2+\frac{1}{4}D\frac{d}{dr}\left(DV_{\alpha}r^{p-2}\right)\psi^2+\frac{1}{4}D\frac{d}{dr}\left(D r^{p-4}\right)|\snabla_{\s^2}\psi|^2.
\end{split}
\end{equation}
We integrate the above equality with $\psi$ replaced by $\psi_0$ in spacetime, combine it with \eqref{eq:auxrpaneg} and \eqref{eq:auxestal0}, using that $\alpha>-\frac{1}{4}$ and $1-\beta_0<p<1+\beta_0$, and we apply \eqref{eq:checkiled} and \eqref{eq:Tenrel} to conclude that \eqref{eq:rweightest1} holds.
\end{proof}

\begin{remark}
Note that the range $(1-\beta_{0},1+\beta_0)$ for $p$ in Proposition \ref{prop:intestaneg} can be smaller than the range $(0,2]$ appearing in standard $r$-weighted energy estimates of \cite{newmethod}. This smaller range is caused by the fact that when $\beta_0<1$ (or equivalently, $\alpha<1$), there are additional zeroth order terms for $\psi$ that come with a bad sign and have to be absorbed by the terms with a good sign via the Hardy inequality \eqref{eq:hardy} (which breaks down when $\beta_0=0$).
\end{remark}
\begin{corollary}
\label{cor:iedaneg}
Let $1-\beta_{0}<p<1+\beta_{0}$. Then there exists a constant $C=C(M,V_{\alpha},h,R,p)>0$ such that for all $0\leq \tau_1<\tau_2$:
\begin{equation}
\label{eq:ied1}
\begin{split}
\int_{\Sigma_{\tau_2}}& Dr^p(X(r\phi_0))^2\,d\sigma dr+\int_{\tau_1}^{\tau_2}\int_{\Sigma_{\tau}} r^{p-1} \left[(L(r\phi_0))^2+   h_0 (T\phi_0)^2+\phi_0^2 \right]\,d\sigma drd\tau\\
\leq &\:CE[\phi_0](\tau_1	)+C\int_{\Sigma_{\tau_1}} Dr^p(X(r\phi_0))^2\,d\sigma dr.
\end{split}
\end{equation}
\end{corollary}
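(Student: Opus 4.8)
The plan is to glue together the far-region $r$-weighted estimate \eqref{eq:rweightest1} of Proposition \ref{prop:intestaneg} with the local integrated energy estimates \eqref{eq:checkiled}--\eqref{eq:checkiled2} of Proposition \ref{prop:intencheckpsi} for the renormalized variable $\check{\phi}_0$, together with the comparison \eqref{eq:Tenrel}, the energy boundedness \eqref{eq:ebound}--\eqref{eq:checkebound}, and a red-shift estimate near $r=2M$. Fix $p$ with $1-\beta_0<p<1+\beta_0$; since $\alpha\leq 0$ we have $\beta_0=\sqrt{1+4\alpha}\in(0,1]$, hence $0<p<2$. Throughout, write $\psi_0=r\phi_0=rw_0\check{\phi}_0$, and note that on the bounded region $\{r\leq R\}$ the functions $r$, $w_0$, $w_0^{-1}$ and their first $r$-derivatives are bounded above and below by positive constants. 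We estimate the two sides of \eqref{eq:ied1} separately on $\{r\geq R\}$ and on $\{r\leq R\}$.

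On $\{r\geq R\}$ the bulk terms of \eqref{eq:ied1} are an immediate rewriting of those of \eqref{eq:rweightest1}, using $L(r\phi_0)=L\psi_0$, $r^{p-1}\phi_0^2=r^{p-3}\psi_0^2$ and $r^{p-1}(T\phi_0)^2=r^{p-3}(T\psi_0)^2$; for the slice term, since $D\geq c>0$ and $DX\psi_0=2L\psi_0-hT\psi_0$ by \eqref{eq:LXT}, we get $Dr^p(X(r\phi_0))^2\lesssim r^p(L\psi_0)^2+h^2r^p(T\psi_0)^2$, whose first term is bounded by the slice term of \eqref{eq:rweightest1} and whose second, by $h\sim h_0r^{-2}$ and $p<2$, is $\lesssim r^{p-2}(T\phi_0)^2$ and integrates to $\lesssim E_T[\phi_0](\tau_2)\leq CE_T[\phi_0](\tau_1)$ by \eqref{eq:ebound}. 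On $\{r\leq R\}$, for the slice term $X\psi_0=(rw_0)'\check{\phi}_0+rw_0X\check{\phi}_0$ gives $Dr^p(X(r\phi_0))^2\lesssim D\check{\phi}_0^2+D(X\check{\phi}_0)^2$, whose two contributions are controlled by $E_T[\phi_0](\tau_2)$ and by $\check{E}_T[\phi_0](\tau_2)=\check{E}_T[\phi_0](\tau_1)$ respectively, via \eqref{eq:ebound}, \eqref{eq:checkebound} and \eqref{eq:Tenrel}. For the bulk terms (where $r^{p-1}\sim 1$), $L=\tfrac{D}{2}X+\tfrac{h}{2}T$ gives $(L(r\phi_0))^2\lesssim D^2(X\check{\phi}_0)^2+(T\check{\phi}_0)^2+\check{\phi}_0^2$ and $(T\phi_0)^2\lesssim(T\check{\phi}_0)^2$, and since $D^2(X\check{\phi}_0)^2\leq(DX\check{\phi}_0)^2$ the bulk of \eqref{eq:checkiled} controls everything except the zeroth-order term $\int_{\tau_1}^{\tau_2}\int_{\Sigma_\tau\cap\{r\leq R\}}\check{\phi}_0^2\,d\sigma drd\tau$.

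The main obstacle is precisely this zeroth-order bulk term, which \eqref{eq:checkiled} does not see (the equation \eqref{eq:checkpsi1} for $\check{\phi}_0$ has no zeroth-order term). I would recover it from the timelike-boundary flux $\int_{\tau_1}^{\tau_2}\int_{\s^2}r^{p-2}\psi_0^2|_{r=R}\,d\sigma d\tau$, which sits on the left of \eqref{eq:rweightest1} and, by the argument of the previous paragraph applied on the slice $\Sigma_{\tau_1}$, is itself bounded by the right-hand side of \eqref{eq:ied1}. Using the pointwise identity $\check{\phi}_0(\tau,r)^2=\check{\phi}_0(\tau,R)^2-2\int_r^R\check{\phi}_0X\check{\phi}_0\,dr'$ for $2M\leq r\leq R$, integrating in $r$ over $[2M,R]$ and over $\s^2\times[\tau_1,\tau_2]$, and absorbing the resulting $\check{\phi}_0^2$ term via a weighted Young inequality, reduces the claim to controlling $\int_{\tau_1}^{\tau_2}\int_{\Sigma_\tau\cap\{r\leq R\}}(X\check{\phi}_0)^2\,d\sigma drd\tau$ \emph{non-degenerately}. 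On $\{3M\leq r\leq R\}$ this follows from \eqref{eq:checkiled} since $D\geq\tfrac13$ there; on $\{2M\leq r\leq 3M\}$, where \eqref{eq:checkiled} only gives the degenerate $(DX\check{\phi}_0)^2$, I would invoke the red-shift estimate of \cite{redshift} (cf.\ \S\ref{sec:boundedness}) applied to \eqref{eq:checkpsi1}: since that equation carries no zeroth-order term, a red-shift multiplier near $r=2M$ yields non-degenerate control of $\int_{\tau_1}^{\tau_2}\int_{\{2M\leq r\leq 3M\}}(X\check{\phi}_0)^2+(T\check{\phi}_0)^2\,d\sigma drd\tau$ in terms of the (non-degenerate) energy $E[\phi_0](\tau_1)$ and the degenerate bulk \eqref{eq:checkiled} on the adjacent annulus $\{3M\leq r\leq 4M\}$. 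Collecting all of the above with the Young parameters chosen small yields \eqref{eq:ied1}. I expect this last step — the passage from the degenerate \eqref{eq:checkiled} to a non-degenerate estimate at the horizon via red-shift, feeding the Hardy/fundamental-theorem-of-calculus argument that produces the zeroth-order bulk term — to be the only genuinely non-routine point; everything else is bookkeeping with the vector-field identities of \S\ref{sec:metric}.
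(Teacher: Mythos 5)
Your overall architecture (far region via \eqref{eq:rweightest1}, near region via \eqref{eq:checkiled}, \eqref{eq:ebound}, \eqref{eq:checkebound}, \eqref{eq:Tenrel}, plus the conversions between $\psi_0$, $\phi_0$ and $\check{\phi}_0$ on $\{r\leq R\}$) matches the paper, and you correctly identify the zeroth-order bulk term $\int_{\tau_1}^{\tau_2}\int_{\Sigma_\tau\cap\{r\leq R\}}\check{\phi}_0^2$ as the crux. The problem is the mechanism you propose for it. Your fundamental-theorem-of-calculus/Young step forces you to control $\int_{\tau_1}^{\tau_2}\int_{\{2M\leq r\leq 3M\}}(X\check{\phi}_0)^2$ \emph{non-degenerately}, and any red-shift multiplier argument that supplies this necessarily places the non-degenerate energy (transversal derivatives at the horizon) at time $\tau_1$ on the right-hand side: the degenerate $T$-energy cannot control $(X\check{\phi}_0)^2|_{\Sigma_{\tau_1}}$ at $r=2M$. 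So your argument proves \eqref{eq:ied1} only with $E_{\mathbf{N}}[\phi_0](\tau_1)$ in place of $E[\phi_0](\tau_1)$, which is strictly weaker than what is claimed and used downstream: \eqref{eq:fullied} is stated with $E_T[\phi](\tau_1)$ and is deduced from this corollary, and the paper's red-shift estimate, Proposition \ref{prop:redshift}, is itself proved using \eqref{eq:fullied}; hence if your red-shift input is that proposition the argument is circular, and if it is an independent local red-shift for \eqref{eq:checkpsi1} it still imports the non-degenerate initial-data norm.

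The idea you are missing, which makes the detour unnecessary, is to apply the Hardy inequality \eqref{eq:hardy} in the variable $x=r-2M$ with $q=0$ on $[0,R-2M]$: the weight $x^{q+1}=r-2M$ vanishes at the horizon, so the horizon boundary term drops out, and the derivative term carries the weight $(r-2M)^2=D^2r^2$, i.e.\ exactly $r^2(DX\check{\phi}_0)^2$, the degenerate quantity already controlled by \eqref{eq:checkiled}. This yields $\int_{\tau_1}^{\tau_2}\int_{\Sigma_\tau\cap\{r\leq R\}}\check{\phi}_0^2\,d\sigma dr d\tau\lesssim \int_{\tau_1}^{\tau_2}\int_{\Sigma_\tau\cap\{r\leq R\}}r^2(DX\check{\phi}_0)^2\,d\sigma dr d\tau+\int_{\tau_1}^{\tau_2}\int_{\s^2}(R-2M)\,\check{\phi}_0^2\big|_{r=R}\,d\sigma d\tau$, and the $r=R$ flux is controlled by the $r^{p-2}\psi_0^2|_{r=R}$ term on the left of \eqref{eq:rweightest1}, exactly as you intended. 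With this replacement no non-degenerate horizon control is needed at all, and the rest of your bookkeeping goes through, reproducing the paper's proof with only the degenerate $T$-energy on the right-hand side.
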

\begin{proof}
By \eqref{eq:hardy}, we can estimate
\begin{equation*}
\int_{\tau_1}^{\tau_2}\int_{\Sigma_{\tau}\cap\{r\leq R\}}\check{\phi}_0^2\,d\sigma dr d\tau\leq 4\int_{\tau_1}^{\tau_2}\int_{\Sigma_{\tau}\cap\{r\leq R\}}r^2(DX\check{\phi}_0)^2\,d\sigma drd\tau+2 \int_{\tau_1}^{\tau_2}\int_{\s^2}(r-2M) \check{\phi}_0^2\Big|_{r=R}\,d\sigma d\tau.
\end{equation*}
Furthermore, there exists a constant $C=C(M,V_{\alpha},h,R)>0$ such that
\begin{equation*}
\int_{\Sigma_{\tau}\cap\{r\leq R\}}(DX\psi_0)^2+(T\psi_0)^2+\psi^2\,d\sigma dr d\tau\leq C \int_{\Sigma_{\tau}\cap\{r\leq R\}}(DX\check{\phi}_0)^2+(T\check{\phi}_0)^2+\check{\phi}_0^2\,d\sigma dr d\tau.
\end{equation*}
By combining the above equation with \eqref{eq:ebound}, \eqref{eq:checkiled} and \eqref{eq:rweightest1} we obtain \eqref{eq:ied1}.
\end{proof}

\subsubsection{$\alpha > 0$ or $\ell\geq 1$: integrated energy estimates for $\phi_{\leq \ell_0}$}
In this section we consider simultaneously the case $\alpha>0$ and the case $\alpha\leq 0$ with $\phi$ restricted to angular frequencies $\ell\geq 1$. We can then make use of the following positivity property:
\begin{equation*}
\alpha+\ell(\ell+1)>0.
\end{equation*}
We will additionally split $\phi$ as follows:
\begin{equation*}
\phi=\phi_{\leq \ell_0}+\phi_{>\ell_0},
\end{equation*}
where $\ell_0$ will be taken suitably large (depending on $V_{\alpha}$). When considering $\phi_{\leq \ell_0}$, we will make use of Proposition \ref{prop:intencheckpsi}.
\begin{proposition}
\label{prop:aposlbound}
Let $\alpha>0$, or let $\phi=\phi_{\geq 1}$. Then there exists a constant $C=C(M,h,\alpha,R,\ell_0)>0$ such that for all $0\leq \tau_1\leq \tau_2\leq \infty$:
\begin{equation}
\label{eq:iedapos}
\begin{split}
\int_{\tau_1}^{\tau_2}& \int_{\Sigma_{\tau}\cap \{r\leq R\}} (X\phi_{\leq \ell_0})^2+(T\phi_{\leq \ell_0})^2+\phi_{\leq \ell_0}^2\,d\sigma dr d\tau\\
\leq &\: C E_T[\phi](\tau_1).
\end{split}
\end{equation}
\end{proposition}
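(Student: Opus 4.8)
The plan is to combine the renormalized integrated energy estimate for $\check{\phi}_{\ell}$ from Proposition~\ref{prop:intencheckpsi} (which is available for every fixed $\ell$) with an estimate that trades $\check{\phi}_{\ell}$-control back for $\phi_{\ell}$-control, and then sum the finitely many modes $\ell \le \ell_0$. For each fixed $\ell$ with $\alpha + \ell(\ell+1) > 0$, the renormalized quantity $\check{\phi}_\ell = w_\ell^{-1}\phi_\ell$ satisfies \eqref{eq:checkpsi1}, which has no zeroth-order term, and Proposition~\ref{prop:intencheckpsi} gives
\begin{equation*}
\int_{\tau_1}^{\tau_2}\int_{\Sigma_{\tau}\cap\{r\leq R\}} (DX\check{\phi}_{\ell})^2+(T\check{\phi}_{\ell})^2\,d\sigma dr d\tau + \int_{\tau_1}^{\tau_2}\int_{\s^2}(X\check{\phi}_{\ell})^2+(T\check{\phi}_{\ell})^2\big|_{r=R}\,d\sigma d\tau \leq C\,\check{E}_T[\phi_{\ell}](\tau_1),
\end{equation*}
and by \eqref{eq:Tenrel} the right-hand side is bounded by $C\,E_T[\phi_\ell](\tau_1)$. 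So the remaining work is: (i) upgrade the degenerate weight $DX$ to $X$ in the region $r \le R$ away from the horizon, or rather handle the horizon separately using the red-shift estimate already invoked in Step~0; (ii) recover a zeroth-order term $\check{\phi}_\ell^2$ in the bulk via a Hardy inequality anchored at the boundary sphere $r=R$ (exactly as in the proof of Corollary~\ref{cor:iedaneg}); and (iii) translate $(X\check{\phi}_\ell, T\check{\phi}_\ell, \check{\phi}_\ell)$-control in $\{r \le R\}$ into $(X\phi_\ell, T\phi_\ell, \phi_\ell)$-control, which is immediate since $w_\ell$ and $w_\ell^{-1}$ and $w_\ell'$ are smooth and bounded above and below on the compact interval $[2M,R]$, so that $\phi_\ell = w_\ell \check{\phi}_\ell$, $X\phi_\ell = w_\ell X\check{\phi}_\ell + w_\ell' \check{\phi}_\ell$, $T\phi_\ell = w_\ell T\check{\phi}_\ell$.

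Concretely, I would first apply the Hardy inequality \eqref{eq:hardy} on $[2M,R]$ (with $q=0$, say) to bound $\int_{\Sigma_\tau \cap \{r \le R\}} \check{\phi}_\ell^2\,d\sigma dr$ by $C\int_{\Sigma_\tau\cap\{r\le R\}} (X\check{\phi}_\ell)^2\,d\sigma dr$ plus a boundary term at $r=R$ of the form $C\int_{\s^2} \check{\phi}_\ell^2|_{r=R}\,d\sigma$; the latter, after integrating in $\tau$, is controlled by the boundary flux term in \eqref{eq:checkiled} together with $E_T[\phi_\ell](\tau_1)$ (one controls $\check{\phi}_\ell^2|_{r=R}$ by integrating $\partial_r(\chi \check{\phi}_\ell^2)$ from $r=R$ inward with a cutoff $\chi$, picking up $X\check{\phi}_\ell$ and $\check{\phi}_\ell$ terms that are already in the bulk estimate). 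Then, since on $\{r_0 \le r \le R\}$ for any $r_0 > 2M$ the factor $D$ is bounded below, the degenerate term $(DX\check{\phi}_\ell)^2$ controls $(X\check{\phi}_\ell)^2$ there; near the horizon $\{2M \le r \le r_0\}$ I invoke the red-shift vector field estimate of \cite{redshift} (as in Step~0), which gives control of the nondegenerate energy flux through $\Sigma_\tau \cap \{r \le r_0\}$ integrated in $\tau$ in terms of the degenerate bulk plus $E_T[\phi](\tau_1)$ — this is the standard way the horizon degeneracy is removed and it applies mode by mode. Putting these together yields \eqref{eq:iedapos} for a single mode $\phi_\ell$, with a constant depending on $\ell$. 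Finally, summing over the finitely many $\ell \in \{0,1,\dots,\ell_0\}$ (using orthogonality of spherical harmonics, so that $\sum_{\ell \le \ell_0} E_T[\phi_\ell](\tau_1) \le E_T[\phi](\tau_1)$, and noting that in the $\alpha \le 0$ case the $\ell = 0$ mode is excluded by hypothesis since we assumed $\phi = \phi_{\ge 1}$, while in the $\alpha > 0$ case all $\ell \ge 0$ satisfy $\alpha + \ell(\ell+1) > 0$) gives the stated estimate with $C = C(M,h,\alpha,R,\ell_0)$.

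The main obstacle I anticipate is bookkeeping the boundary terms at $r=R$ and at the horizon consistently: one must make sure that the boundary flux term $\int_{\tau_1}^{\tau_2}\int_{\s^2}\big((X\check{\phi}_\ell)^2 + (T\check{\phi}_\ell)^2\big)|_{r=R}$ appearing on the \emph{left} of \eqref{eq:checkiled} is genuinely available with a good sign and is not needed with the opposite sign elsewhere — here it is used to feed the Hardy boundary term and the zeroth-order recovery, which is fine since it appears as a nonnegative left-hand-side quantity. A secondary subtlety is that Proposition~\ref{prop:intencheckpsi} is stated for $\check{\phi}_\ell$, which presumes $\alpha + \ell(\ell+1) > 0$ implicitly through Lemma~\ref{lm:propertiesw} (the weight $w_\ell$ is defined and positive under assumption~\ref{assm:A}), so no additional positivity input is needed beyond what is already assumed; the positivity $\alpha + \ell(\ell+1) > 0$ is only used later for the complementary range $\ell > \ell_0$ and the $r$-weighted estimates, not here. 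Thus the proof of \eqref{eq:iedapos} itself is essentially a packaging of Proposition~\ref{prop:intencheckpsi}, the Hardy inequality, the red-shift estimate, and finite summation, with the renormalization $\check{\phi}_\ell \leftrightarrow \phi_\ell$ being harmless on the compact region $\{r \le R\}$.
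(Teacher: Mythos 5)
There is a genuine gap, and it sits exactly at the heart of the proposition: the recovery of the zeroth-order term. Your plan is to get $\int\int_{\{r\le R\}}\check{\phi}_\ell^2$ from the Hardy inequality \eqref{eq:hardy} anchored at $r=R$, and to feed the resulting boundary term $\int_{\tau_1}^{\tau_2}\int_{\s^2}\check{\phi}_\ell^2|_{r=R}$ by integrating $\partial_r(\chi\check{\phi}_\ell^2)$ inward, "picking up $X\check{\phi}_\ell$ and $\check{\phi}_\ell$ terms that are already in the bulk estimate." But \eqref{eq:checkiled} contains \emph{no} zeroth-order term at all: its bulk is only $(DX\check{\phi}_\ell)^2+(T\check{\phi}_\ell)^2$ and its $r=R$ flux is only $(X\check{\phi}_\ell)^2+(T\check{\phi}_\ell)^2$. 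So your trace argument produces an annulus integral of $\check{\phi}_\ell^2$, which is (part of) the very quantity you are trying to bound, and there is no small constant anywhere to absorb it; schematically you get $A\lesssim B+D$ and $B\lesssim A+D$ with order-one constants, which does not close. Indeed a function that is nearly constant in $r$ on $[2M,R]$ and slowly varying in $\tau$ makes $(DX\check{\phi}_\ell)^2+(T\check{\phi}_\ell)^2$ arbitrarily small while $\check{\phi}_\ell^2$ stays of order one, so no combination of Proposition \ref{prop:intencheckpsi}, \eqref{eq:hardy} and trace inequalities confined to the compact region can produce \eqref{eq:iedapos}. This is also why your remark that the positivity $\alpha+\ell(\ell+1)>0$ "is only used later, not here" is a red flag: if the proposition followed without that positivity, the same packaging would prove the $\ell=0$, $\alpha\le 0$ analogue, contradicting the failure of \eqref{eq:ileda0} for $-\tfrac14<\alpha\le-\tfrac{3}{16}$ explained in \S\ref{sec:proofideas} — and the obstruction there is precisely the $\phi^2$ term.

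What the paper actually does, and what your proposal is missing, is a multiplier estimate in the \emph{far} region: multiply \eqref{eq:waveeq2a} by $fZ\psi$ with $f\equiv-1$ and integrate over $\{r\ge R\}$. The far-region bulk is nonnegative because $\alpha+\ell(\ell+1)>0$ (via \eqref{eq:sphere2}), and the boundary contribution at $r=R$ contains $(L\psi_\ell)^2+(\underline{L}\psi_\ell)^2-\tfrac12 Dr^{-2}\big(\ell(\ell+1)+\alpha+O(r^{-1})\big)\psi_\ell^2$; using the asymptotics \eqref{eq:asymptw} of $w_\ell$ and the identity $\tfrac18(1+\beta_\ell)^2-\tfrac12\big(\alpha+\ell(\ell+1)\big)=\tfrac14(1+\beta_\ell)>0$, this boundary term is coercive in $r^{-2}\psi_\ell^2|_{r=R}$ up to boundary terms in $(X\check{\phi}_\ell)^2+(T\check{\phi}_\ell)^2$ at $r=R$ — and those are exactly what \eqref{eq:checkiled} supplies. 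That is the mechanism generating the zeroth-order control, both at $r=R$ and (then, by the Hardy step as in Corollary \ref{cor:iedaneg}) in $\{r\le R\}$; it cannot be bypassed. A secondary problem is your horizon step: Proposition \ref{prop:redshift} requires the nondegenerate energy $E_{\mathbf{N}}$ (plus an $r$-weighted term) on the right-hand side, so invoking the red-shift is incompatible with keeping only $E_T[\phi](\tau_1)$ on the right of \eqref{eq:iedapos}; the paper does not use the red-shift in this proposition, and nondegeneracy at the horizon is only introduced later, in Corollary \ref{cor:nondegiled}, at the price of $E_{\mathbf{N}}$ initial data.
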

\begin{proof}
Let $f:[2M,\infty) \to \R$ be a differentiable function, which will be specified later. We multiply both sides of \eqref{eq:waveeq2a} with $fZ\psi= fL\psi-f\underline{L}\psi$:
\begin{equation}
\label{eq:auxiledapos}
\begin{split}
0=&\: -4fL\psi \underline{L} L\psi+4f\underline{L}\psi L\underline{L} \psi+Dr^{-2}fZ\psi\slashed{\Delta}_{\s^2}\psi-fZ\psi D(V_{\alpha}+D'r^{-1})\psi\\
=&\:\underline{L}(-2f(L\psi)^2)+L(2f(\underline{L}\psi)^2)+Z\left(-\frac{f}{2}Dr^{-2}|\snabla_{\s^2}\psi|^2-\frac{f}{2}D(V_{\alpha}+D'r^{-1})\psi^2\right)\\
&-D\frac{df}{dr}((L\psi)^2+(\underline{L}\psi)^2)+\frac{1}{2}D\frac{d}{dr}(fDr^{-2})|\snabla_{\s^2}\psi|^2+\frac{1}{2}D\frac{d}{dr}(f(V_{\alpha}+D'r^{-1}))\psi^2+\textnormal{div}_{\s^2}(\ldots)\\
=&\:DX\left[f(\underline{L}\psi)^2+f(L\psi)^2-\frac{f}{2}Dr^{-2}|\snabla_{\s^2}\psi|^2-\frac{f}{2}D(V_{\alpha}+D'r^{-1})\psi^2\right]\\
&+DT\left[-f\tilde{h}(L\psi)^2+fhD(D^{-1}\underline{L}\psi)^2+(1-h)\frac{f}{2}\left(r^{-2}|\snabla_{\s^2}\psi|^2+(V_{\alpha}+D'r^{-1})\psi^2\right)\right]\\
&-D\frac{df}{dr}((L\psi)^2+(\underline{L}\psi)^2)+\frac{1}{2}D\frac{d}{dr}(fD(V_{\alpha}+D'r^{-1}))\psi^2+\frac{1}{2}D\frac{d}{dr}(fDr^{-2})|\snabla_{\s^2}\psi|^2+\textnormal{div}_{\s^2}(\ldots),
\end{split}
\end{equation}
where we use the schematic notation $\textnormal{div}_{\s^2}(\ldots)$ to denote divergences with respect to $\s^2$, which vanish after integration over $\s^2$.

We take $f\equiv -1$, integrate over $\{r\geq R\}$, where $R>2M$ will be chosen suitably large and apply \eqref{eq:ebound} to obtain:
\begin{equation*}
\begin{split}
\int_{\tau_1}^{\tau_2}& \int_{\Sigma_{\tau}\cap \{r\geq R\}} (\alpha r^{-3}+O_{\infty}(r^{-4})) \psi^2+(1+O_{\infty}(r^{-1}))r^{-3}|\snabla_{\s^2}\psi|^2\,d\sigma dr d\tau\\
&+\int_{\tau_1}^{\tau_2}\int_{\s^2}(\underline{L}\psi)^2+(L\psi)^2-\frac{1}{2}Dr^{-2}|\snabla_{\s^2}\psi|^2-\frac{1}{2}Dr^{-2}(\alpha+O_{\infty}(r^{-1}))\psi^2\Big|_{r=R}\,d\sigma d\tau\\
\leq &\: C E_T[\phi](\tau_1)+C \eta \sum_{\ell\leq \ell_0}\int_{\tau_1}^{\tau_2}\int_{\s^2} r^{-2}\psi_{\ell}^2\Big|_{r=R}\,d\sigma d\tau .
\end{split}
\end{equation*}
We will now restrict to $\phi=\phi_{\ell}$ and observe that the spacetime integral on the left-hand side above is non-negative definite, using \eqref{eq:sphere2} in combination with $\alpha+\ell(\ell+1)>0$. We rewrite the $\{r=R\}$ boundary terms by first considering separately each $\phi_{\ell}$, with $\ell\leq \ell_0$: by Young's inequality with $\epsilon>0$ arbitrarily small together with \eqref{eq:asymptw}, we obtain:
\begin{equation*}
\begin{split}
(L\psi_{\ell})^2+(\underline{L}\psi_{\ell})^2=&\:(L	(r w_{\ell} \check{\phi}_{\ell}))^2+(\Lbar (r w_{\ell}\check{\phi}_{\ell}))^2\geq\frac{1-\epsilon}{2}D^2\left(\frac{d(rw_{\ell})}{dr}\right)^2\check{\phi}_{\ell}^2+(1-\epsilon^{-1})r^2w_{\ell}^2\left((L\check{\phi}_{\ell})^2+(\underline{L}\check{\phi}_{\ell})^2\right)\\
\geq &\: \frac{1-\epsilon}{8}\left[(1+\beta_{\ell})^2 +O_{\infty}(r^{-1})\right]r^{-2}\psi_{\ell}^2-(\epsilon^{-1}-1)r^2w_{\ell}^2\left((L\check{\phi}_{\ell})^2+(\underline{L}\check{\phi}_{\ell})^2\right).
\end{split}
\end{equation*}
Note that
\begin{equation*}
	\frac{1}{8}(1+\beta_{\ell})^2-\frac{1}{2}\ell(\ell+1)-\frac{1}{2}\alpha=\frac{1}{4}(1+\beta_{\ell}),
\end{equation*}
so there exists a constant $C=C(M,h,\alpha,R)>0$ such that
\begin{equation*}
\begin{split}
\int_{\tau_1}^{\tau_2}& \int_{\Sigma_{\tau}} (\alpha+\ell(\ell+1)) r^{-3} \psi_{\ell}^2\,d\sigma dr d\tau+\int_{\tau_1}^{\tau_2}\int_{\s^2}\frac{1-\epsilon}{4}(1+\beta_{\ell})r^{-2}\psi^2\Big|_{r=R}\,d\sigma d\tau\\
\leq &\: C E_T[\phi](\tau_1)+C \epsilon^{-1}\int_{\tau_1}^{\tau_2}\int_{\s^2}w_{\ell}^2(X\check{\phi}_{\ell})^2+w_{\ell}^2(T\check{\phi}_{\ell})^2\Big|_{r=R}\,d\sigma d\tau.
\end{split}
\end{equation*}
We can estimate the remaining $\{r=R\}$ term on the right-hand side above and sum $\ell$ from $\ell=0$ (if $\alpha>0$) or $\ell=1$ (if $\alpha\leq 0$) to $\ell=\ell_0$ by applying \eqref{eq:checkiled}.
\end{proof}
\begin{remark}
In contrast with the $\alpha<0$, $\ell=0$ setting, we are able to obtain via \eqref{eq:iedapos} a local integrated energy estimate with only the energy $E_T[\phi]$ appearing on the right-hand side of \eqref{eq:iedapos}, i.e.\\ \underline{without any growing weights in $r$} in front the the terms in the integral defining $E_T[\phi]$; see also the discussion in Step 1 of \S \ref{sec:proofideas}.
\end{remark}

\begin{remark}
The combination of \eqref{eq:checkiled} with the above proposition provides also a straightforward way of obtaining a local integrated energy (Morawetz) estimate in the setting of \eqref{eq:wavewithoutpot} in a way that is uniform for bounded angular frequencies $0\leq \ell\leq \ell_0$ and does not require any delicate choices of multiplier functions.
\end{remark}

\subsubsection{$\alpha > 0$ or $\ell\geq 1$: integrated energy estimates for $\phi_{> \ell_0}$}
In this section, we consider the high angular frequency part $\phi_{> \ell_0}$, where $\ell_0\in \N_0$ will be taken sufficiently large. Since $\ell_0$ is large, the presence of the $V_{\alpha}$ potential in \eqref{eq:waveeq} does not play an important role and we can apply standard arguments from the setting of the geometric wave equation on Schwarzschild without potential (i.e.\ $V_{\alpha}\equiv 0$) to obtain an integrated local energy decay estimate (see for example \S 4.11 of \cite{lecturesMD}). For the sake of completeness, we include them below.
\begin{proposition}
\label{prop:aposllarge}
Let $R>2M$ be arbitrarily large. Then there exists a suitably large $\ell_0\in \N$ and a constant $C=C(M,V_{\alpha},h,\ell_0,R)>0$, such that for all $0\leq \tau_1\leq \tau_2\leq \infty$:
	\begin{equation}
\label{eq:iedaposhighfreq}
\begin{split}
\int_{\tau_1}^{\tau_2}& \int_{\Sigma_{\tau}\cap \{r\leq R\}} D(D^{-1}Z\phi_{\geq \ell_0})^2+\left(1-3Mr^{-1}\right)^2\left(|\snabla_{\s^2}\phi_{\geq \ell_0}|^2+(T\phi_{\geq \ell_0})^2\right)\,d\sigma dr d\tau\\
\leq &\: C E_T[\phi](\tau_1).
\end{split}
\end{equation}
\end{proposition}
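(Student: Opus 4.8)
The plan is to run the standard Morawetz (virial) estimate for the wave equation on Schwarzschild, now carrying the potential $V_{\alpha}$ along as a perturbation that is rendered harmless because $\ell_0$ is large. Concretely, I would start from \eqref{eq:waveeq2a} and multiply both sides by the current $f(r)Z\psi+\frac{D}{2}\frac{df}{dr}\psi$, with $f$ as in the classical construction on Schwarzschild, i.e.\ $f(r)=-r^{-1}(r-3M)$ (possibly corrected by a lower-order term; cf.\ \S 4.11 of \cite{lecturesMD}), so that $f$ vanishes simply at the photon sphere $r=3M$. Using the Leibniz rule together with \eqref{eq:LbarXT}--\eqref{eq:defZ}, the resulting identity can be written as a spacetime divergence $DX[\cdots]+DT[\cdots]$ plus a bulk term. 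Integrating over $\bigcup_{\tau\in[\tau_1,\tau_2]}\Sigma_{\tau}$, the choice of $f$ makes the bulk bounded below by a constant times the spacetime integral on the left-hand side of \eqref{eq:iedaposhighfreq} — but with $(X\phi_{\geq\ell_0})^2$ replaced by its horizon-degenerate version $D(X\phi_{\geq\ell_0})^2$ — minus an indefinite term of the form $C\int_{\tau_1}^{\tau_2}\int_{\Sigma_{\tau}}r^{-1}\phi_{\geq\ell_0}^2\,d\sigma dr d\tau$ collecting the contributions of $V_{\alpha}$, of the $D'r^{-1}\psi$ term in \eqref{eq:waveeq2a}, and of the lower-order piece of the multiplier.

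Absorbing the indefinite $\phi^2$-term is exactly where the restriction to high angular frequencies enters. By the Poincar\'e inequality \eqref{eq:sphere2}, $\int r^{-1}\phi_{\geq\ell_0}^2\leq(\ell_0(\ell_0+1))^{-1}\int r^{-1}|\snabla_{\s^2}\phi_{\geq\ell_0}|^2$, so away from a fixed window $|r-3M|<\epsilon_0$, where $(1-3Mr^{-1})^2$ is bounded below, this is absorbed into the coercive angular part of the bulk once $\ell_0$ is large. Inside the window, where the angular term degenerates, I would instead use a localized Hardy inequality \eqref{eq:hardy}, which trades the $\phi^2$-error for a multiple $\sim\epsilon_0^2$ of the non-degenerate $(X\phi)^2$-part of the bulk (absorbed by taking $\epsilon_0$ small) plus fixed-$r$ boundary terms at $r=3M\pm\epsilon_0$, which lie in the region where the angular term is non-degenerate and are hence again controlled by \eqref{eq:sphere2}; the parameters are fixed in the order $\epsilon_0$ small (depending on $M,h,V_{\alpha}$), then $\ell_0$ large (depending on $\epsilon_0$). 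The same largeness of $\ell_0$ makes the effective radial potential $D\big((\alpha+\ell(\ell+1))r^{-2}+O(r^{-3})\big)$ positive for all $r\geq 2M$, so there is no bound-state obstruction. The boundary terms generated at $\Sigma_{\tau_1},\Sigma_{\tau_2}$ by the total $T$-derivative are bounded, after using \eqref{eq:LbarXT}--\eqref{eq:defZ}, by $C\big(E_T[\phi](\tau_1)+E_T[\phi](\tau_2)\big)$, hence by $CE_T[\phi](\tau_1)$ via the energy boundedness estimate \eqref{eq:ebound} of Proposition \ref{prop:ebound}; the fluxes along $\mathcal{H}^+$ and $\mathcal{I}^+$ from the total $X$-derivative have a favourable sign for the above $f$ (or, near $\mathcal{H}^+$, are absorbed by adding a small multiple of the red-shift current of \cite{redshift}). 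Finally, the bulk so obtained degenerates like $D$ at the event horizon; to upgrade it to the non-degenerate form in \eqref{eq:iedaposhighfreq} I would combine it with the red-shift estimate of \cite{redshift}, exactly as in \S\ref{sec:boundedness}.

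The only genuine obstacle is the trapping at the photon sphere: no admissible choice of $f$ removes the degeneracy $(1-3Mr^{-1})^2$ in front of $|\snabla_{\s^2}\phi_{\geq\ell_0}|^2$ — this is the footprint of the trapped null geodesics at $r=3M$, and it is precisely why the sharp decay rates later require commuting this estimate with $T$ and exploiting the full $r$-weighted hierarchy of \S\S\ref{sec:rp}--\ref{sec:decaytimeder}. Everything else is a routine adaptation of the well-known $V_{\alpha}\equiv 0$ Morawetz argument on Schwarzschild; the only genuinely new bookkeeping is the verification, sketched above, that the $V_{\alpha}$-generated terms are subleading once the angular-momentum threshold $\ell_0$ is taken large.
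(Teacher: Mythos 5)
Your argument is essentially the paper's own proof: the same multiplier $f(r)Z\psi+\frac{D}{2}\frac{df}{dr}\psi$ with $f(r)=-r^{-1}(r-3M)$, absorption of the $V_{\alpha}$- and lower-order $\psi^2$-terms for $\ell\geq \ell_0$ via the Poincar\'e inequality \eqref{eq:sphere2} with $\ell_0$ large, and control of the $\Sigma_{\tau_1},\Sigma_{\tau_2}$ boundary terms through \eqref{eq:ebound}. The only deviations are minor bookkeeping: your explicit $\epsilon_0$-window/localized-Hardy treatment near $r=3M$ (where the fixed-$r$ sphere terms at $r=3M\pm\epsilon_0$ still require the standard averaging-in-$r$ step rather than \eqref{eq:sphere2} alone) replaces the paper's direct absorption, and you invoke the red-shift current where the paper instead adds the current \eqref{eq:auxiledapos} with $f=r^{-1}$ to capture $(Z\psi)^2+(T\psi)^2$ — note that the red-shift route puts the non-degenerate energy $E_{\mathbf{N}}[\phi](\tau_1)$, rather than only $E_T[\phi](\tau_1)$, on the right-hand side, consistent with how these estimates are actually combined later in Corollary \ref{cor:nondegiled}.
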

\begin{proof}
Let $f:[2M,\infty) \to \R$ be a differentiable function, which will be specified later.  We  multiply both sides of \eqref{eq:waveeq2a} with $\frac{D}{2}\frac{df}{dr}\psi$ to obtain:
\begin{equation}
\label{eq:zerothoderILED}
\begin{split}
0=&\: -2D\frac{df}{dr}\psi \underline{L} L\psi+\frac{1}{2}D^2r^{-2}\frac{df}{dr}\psi\slashed{\Delta}_{\s^2}\psi-\frac{1}{2}D^2\frac{df}{dr}(V_{\alpha}+D'r^{-1})\psi^2\\
=&\: \underline{L}\left( -2 \frac{df}{dr_*}\psi L\psi\right)+2\frac{df}{dr_*}\underline{L}\psi L\psi-\frac{1}{2}\frac{d^2f}{dr_*^2}L(\psi^2)-\frac{1}{2}Dr^{-2}\frac{df}{dr_*}\left[(r^2V_{\alpha}+D'r)\psi^2+|\snabla_{\s^2}\psi|^2\right]+\textnormal{div}_{\s^2}(\ldots)\\
=&\:-\frac{D}{2}X\left[ -2 \frac{df}{dr_*}\psi L\psi-\frac{1}{2}\frac{d^2f}{dr_*^2}\psi^2\right]+\frac{D}{2}T\left[ -2\tilde{h}\frac{df}{dr_*}\psi L\psi-\frac{h}{2D}\frac{d^2f}{dr_*^2}\psi^2\right]\\
&-\frac{D}{2}r^{-2}D\frac{df}{dr}\left[(r^2V_{\alpha}+D'r)\psi^2+|\snabla_{\s^2}\psi|^2\right]+\frac{1}{4}\frac{d^3f}{dr^3_*}\psi^2+2D\frac{df}{dr}L\psi \underline{L}\psi+\textnormal{div}_{\s^2}(\ldots).
\end{split}
\end{equation}
By combining with \eqref{eq:auxiledapos}, we obtain
\begin{equation*}
\begin{split}
0=&\:X\left[f(\underline{L}\psi)^2+f(L\psi)^2-\frac{f}{2}Dr^{-2}|\snabla_{\s^2}\psi|^2-\frac{f}{2}D(V_{\alpha}+D'r^{-1})\psi^2   + \frac{df}{dr_*}\psi L\psi+\frac{1}{4}\frac{d^2f}{dr_*^2}\psi^2\right]\\
&+T\left[-f\tilde{h}(L\psi)^2+fhD(D^{-1}\underline{L}\psi)^2+(1-h)\frac{f}{2}\left(r^{-2}|\snabla_{\s^2}\psi|^2+(V_{\alpha}+D'r^{-1})\psi^2\right) -\tilde{h}\frac{df}{dr_*}\psi L\psi-\frac{h}{4D}\frac{d^2f}{dr_*^2}\psi^2\right]\\
&-\frac{df}{dr}(Z\psi)^2+\frac{f}{2}\frac{d}{dr}(D(V_{\alpha}+D'r^{-1}))\psi^2+\frac{1}{4D}\frac{d^3f}{dr_*^3}\psi^2+\frac{f}{2}\frac{d}{dr}(Dr^{-2})|\snabla_{\s^2}\psi|^2+\textnormal{div}_{\s^2}(\ldots).
\end{split}
\end{equation*}
Note that
\begin{equation*}
	\frac{d}{dr}(Dr^{-2})=-2r^{-4}(r-3M),
\end{equation*}
so by taking $f(r_*)=-\arctan(r_*-r_*(3M))$, using that $ \frac{df}{dr}(r)=-\frac{1}{D(r)}(1+(r_*(r)-r_*(3M))^2))^{-1}$, $\frac{1}{4D}\frac{d^3f}{dr_*^3}|_{r=3M}=\frac{1}{2D(3M)}>0$ and $\frac{f}{2}\frac{d}{dr}(Dr^{-2})\geq cr^{-3}(1-3Mr^{-1})^2$, and by applying \eqref{eq:sphere2}, we can estimate for $\phi_{\geq \ell_0}$, with $\ell_0$ suitably large:
\begin{multline*}
	\int_{\s^2}-\frac{df}{dr}(Z\psi_{\geq \ell_0})^2+\left[\frac{f}{2}\frac{d}{dr}(D(V_{\alpha}+D'r^{-1}))+\frac{1}{4D}\frac{d^3f}{dr_*^3}\right]\psi_{\geq \ell_0}^2+\frac{f}{2}\frac{d}{dr}(Dr^{-2})|\snabla_{\s^2}\psi_{\geq \ell_0}|^2\,d\sigma \\
	\geq c \int_{\s^2}D^{-1}r^{-2}(Z\psi_{\geq \ell_0})^2+r^{-3}\left(1-\frac{3M}{r}\right)^2 |\snabla_{\s^2}\psi_{\geq \ell_0}|^2+r^{-3}|\psi_{\geq \ell_0}|^2\,d\sigma.
\end{multline*}

We obtain \eqref{eq:iedaposhighfreq} after integrating in spacetime, applying \eqref{eq:ebound} to estimate the boundary term and then applying \eqref{eq:auxiledapos} with $f(r)=-(1-3Mr^{-1})^3$ to control also a spacetime integral of $(1-3Mr^{-1})^2(T\psi_{\geq \ell_0})^2$.
\end{proof}

\subsection{Non-degenerate (integrated) energy estimates}
In this section, we  will derive a non-degenerate integrated energy estimate for all $\alpha>-\frac{1}{4}$. We combine first the integrated energy estimates from the previous section to state integrated energy estimates that are valid for all $\alpha>-\frac{1}{4}$, but which degenerate at $r=2M$.
\begin{corollary}[Degenerate integrated energy estimates]
\label{cor:degiled}
For $R>2M$ suitably large, there exists a constant $C=C(M,h,V_{\alpha},R)>0$ such that for all $0\leq \tau_1\leq \tau_2\leq \infty$:
\begin{align}
\label{eq:fullied}
\int_{\Sigma_{\tau_2}\cap\{r\geq R\}}&r(X(r\phi))^2\,d\sigma dr+\int_{\tau_1}^{\tau_2}\int_{\Sigma_{\tau}} [\phi^2+(1-3Mr^{-1})^2\left(r^2(DX\phi)^2+|\snabla_{\s^2}\phi|^2+h_0(T\phi)^2\right)]\,d\sigma dr d\tau\\ \nonumber
\leq &\:C\int_{\Sigma_{\tau_1}\cap\{r\geq R\}}r(X(r\phi))^2\,d\sigma dr+CE_T[\phi](\tau_1),\\
\label{eq:fulliedlossder}
\int_{\Sigma_{\tau_2}\cap\{r\geq R\}}& r(X(r\phi))^2\,d\sigma dr+\int_{\tau_1}^{\tau_2}\int_{\Sigma_{\tau}} \left[r^2(DX\phi)^2+\phi^2+|\snabla_{\s^2}\phi|^2+h_0(T\phi)^2\right]\,d\sigma dr d\tau\\\nonumber
\leq &\:C\int_{\Sigma_{\tau_1}\cap\{r\geq R\}}r(X(r\phi))^2+r(X(rT\phi))^2\,d\sigma dr+CE_T[\phi](\tau_1)+C E_T[T\phi](\tau_1)
\end{align}
\end{corollary}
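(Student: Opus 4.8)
The plan is to glue together the integrated energy estimates already established on the two complementary regions $\{r\leq R\}$ and $\{r\geq R\}$. On $\{r\leq R\}$ the required control is furnished, mode by mode, by Corollary \ref{cor:iedaneg} for $\phi_0$ when $\alpha\leq 0$, by Proposition \ref{prop:aposlbound} for the modes $1\leq \ell\leq \ell_0$ (and for all $\ell\leq \ell_0$ when $\alpha>0$, using $\alpha+\ell(\ell+1)>0$), and by Proposition \ref{prop:aposllarge} for $\phi_{>\ell_0}$ with $\ell_0$ fixed as there. Concretely I would decompose $\phi$ into $\phi_0$, the intermediate modes $1\leq\ell\leq\ell_0$ and $\phi_{>\ell_0}$ when $\alpha\leq 0$, and into $\phi_{\leq \ell_0}+\phi_{>\ell_0}$ when $\alpha>0$; by $L^2(\s^2)$-orthogonality all the quadratic integrated quantities in the statement split over these pieces, and summing is harmless because the constant in Proposition \ref{prop:aposllarge} is independent of the (high) angular parameter and the remaining sums are finite. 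On $\{r\geq R\}$ I would instead run the $r$-weighted multiplier identity \eqref{eq:rpid} with $p=1$: for $\phi_0$, $\alpha\leq 0$ this is precisely Proposition \ref{prop:intestaneg}, while for the modes with $\ell\geq 1$, or for all modes when $\alpha>0$, the zeroth-order and angular contributions in \eqref{eq:rpid} carry a favourable sign by $\alpha+\ell(\ell+1)>0$ and \eqref{eq:sphere2}, so the same computation applies with no smallness restriction on $p$ and yields, on $\{r\geq R\}$, control of the flux $\int_{\Sigma_{\tau_2}\cap\{r\geq R\}}r(L\psi)^2$, of a bulk integral of $r^{-1}(L\psi)^2+r^{-3}(\phi^2+|\snabla_{\s^2}\phi|^2)$, and of a boundary term on $\{r=R\}$; the latter is absorbed via the renormalized local estimate \eqref{eq:checkiled}, exactly as in the proof of Proposition \ref{prop:intestaneg}.

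It then remains to pass from this $L\psi,\Lbar\psi$-level information to the $X\phi,T\phi,\snabla_{\s^2}\phi,\phi$ formulation of the statement. Using the relations \eqref{eq:LbarXT}--\eqref{eq:defZ} together with $X(r\phi)=\phi+rX\phi$ one has, on $[2M,\infty)$, $r^2(DX\phi)^2\lesssim D^2(X(r\phi))^2+\phi^2\lesssim (L\psi)^2+h^2r^2(T\phi)^2+\phi^2$; since $h^2r^2$ is bounded on $[2M,\infty)$ and decays at infinity, the surplus $(T\phi)^2$ is dominated by $h_0(T\phi)^2$ and, near infinity, by the $T$-energy density, so the bulk term $r^2(DX\phi)^2$ is controlled. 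Near the horizon, where this $r^2(DX\phi)^2$-control degenerates, the $\phi^2$-bulk term is recovered by applying the Hardy inequality \eqref{eq:hardy} on $[2M,R]$ together with the local estimate \eqref{eq:checkiled} for $\check{\phi}_{\ell}$, exactly as in the proof of Corollary \ref{cor:iedaneg}. All boundary contributions on $\Sigma_{\tau_1},\Sigma_{\tau_2}$ produced by the multiplier identities are bounded by $E_T[\phi](\tau_1)$ via the energy boundedness estimate \eqref{eq:ebound}, and the flux on $\Sigma_{\tau_2}\cap\{r\geq R\}$ is converted from $r(L\psi)^2$ to $r(X(r\phi))^2$ at the price of a controlled $r^{-1}(T\phi)^2$-flux, again bounded by $E_T[\phi](\tau_1)$. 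This establishes \eqref{eq:fullied}.

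For \eqref{eq:fulliedlossder} the only additional point is to remove the degenerate weight $(1-3Mr^{-1})^2$ in front of $|\snabla_{\s^2}\phi|^2$, which vanishes at the photon sphere $r=3M$. On $\{|r-3M|\geq M/10\}$ this weight is bounded below and \eqref{eq:fullied} already supplies $|\snabla_{\s^2}\phi|^2$; on the complementary bounded neighbourhood of $r=3M$ one invokes the classical loss-of-derivative device at trapping. Since $\square_{g_M}$ commutes with $T$, the estimate \eqref{eq:fullied} holds also for $T\phi$; one then uses \eqref{eq:waveeq2a} to write $\slashed{\Delta}_{\s^2}\psi=4D^{-1}r^2\Lbar L\psi+r^2(V_{\alpha}+D'r^{-1})\psi$, multiplies by a cut-off $\chi$ supported in $\{|r-3M|\leq M/5\}$ and equal to $1$ on $\{|r-3M|\leq M/10\}$, and integrates $\int_{\tau_1}^{\tau_2}\int_{\Sigma_{\tau}}\chi\psi_{\geq 1}\slashed{\Delta}_{\s^2}\psi_{\geq 1}$ by parts once in the $\Lbar$-direction (no boundary term in $r$ appears). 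Combined with the identity $\int_{\s^2}|\snabla_{\s^2}\psi_{\geq 1}|^2=-\int_{\s^2}\psi_{\geq 1}\slashed{\Delta}_{\s^2}\psi_{\geq 1}$, this reduces the remaining integral near $r=3M$ to bilinear expressions in first-order derivatives of $\psi_{\geq 1}$ and of $T\psi_{\geq 1}$, which are controlled there by the non-degenerate bulk of \eqref{eq:fullied} applied to $\phi$ and to $T\phi$ (note $(L\psi)^2\lesssim r^2(DX\phi)^2+r^2(T\phi)^2+\phi^2$ on $\{|r-3M|\leq M/5\}$), while the resulting $\Sigma_{\tau_1},\Sigma_{\tau_2}$-boundary terms are bounded by $E_T[\phi](\tau_1)+ME_T[T\phi](\tau_1)$; the $\ell=0$ part contributes no angular derivative. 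Combining with \eqref{eq:fullied} gives \eqref{eq:fulliedlossder}.

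The bulk of the work is bookkeeping: matching the weights produced by the three regime-specific estimates and by the $p=1$ hierarchy identity across $\{r=R\}$, and tracking which degeneracies survive the vector-field conversions ($r=2M$ throughout, $r=3M$ only in \eqref{eq:fullied}). The genuinely non-formal step, and the one I expect to be the main obstacle, is the photon-sphere argument for \eqref{eq:fulliedlossder}: one must verify that every term produced by integrating the second-order quantity $\Lbar L\psi$ by parts against $\chi\psi_{\geq 1}$ near $r=3M$ is either a first-order quantity already dominated by the degenerate bulk of \eqref{eq:fullied} for $\phi$ and $T\phi$, or a $\Sigma_{\tau_i}$-boundary term dominated by the ($T$-commuted) $T$-energy.
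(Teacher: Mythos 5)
Your treatment of \eqref{eq:fullied} is essentially the paper's own proof: the same case split ($\alpha\leq 0$, $\ell=0$ handled by Corollary \ref{cor:iedaneg} with $p=1$; otherwise Propositions \ref{prop:aposlbound} and \ref{prop:aposllarge} in $\{r\leq R\}$ combined with the $p=1$ multiplier identity \eqref{eq:rpid} in $\{r\geq R\}$), the only difference being cosmetic (you absorb the $\{r=R\}$ boundary terms via \eqref{eq:checkiled}, the paper uses an averaging argument), and the conversions between $L\psi,\Lbar\psi$ and $X\phi,T\phi,\phi$ that you describe are routine. That half is fine.

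For \eqref{eq:fulliedlossder} you depart from the paper, and your route has a genuine gap. The paper removes the degeneracy at $r=3M$ by a second-order, slice-wise elliptic estimate: squaring the rearranged equation \eqref{eq:waveeq1} and integrating by parts on each $\Sigma_{\tau}$ gives \eqref{eq:standardelliptic}, which converts $|\snabla_{\s^2}^{2}\phi|^2$ (hence $|\snabla_{\s^2}\phi|^2$, by \eqref{eq:sphere2}) into $(XT\phi)^2,(T^2\phi)^2,\ldots$ near the photon sphere, and it is \eqref{eq:fullied} applied to $T\phi$ that controls these; this is exactly where the loss of a derivative enters. In your argument, pairing $\chi\psi_{\geq 1}$ with $\slashed{\Delta}_{\s^2}\psi_{\geq 1}=4D^{-1}r^2\Lbar L\psi_{\geq 1}+r^2(V_{\alpha}+D'r^{-1})\psi_{\geq 1}$ and integrating by parts once in the $\Lbar$-direction produces only terms quadratic in $\psi_{\geq 1}$ and its first derivatives ($\Lbar\psi\,L\psi$, $\psi\,L\psi$, $\psi^2$, and $\Sigma_{\tau_i}$-terms of the form $\psi\,L\psi$): no $T\phi$-level quantity ever actually appears, despite your claim that \eqref{eq:fullied} for $T\phi$ is invoked. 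If the absorptions you describe closed, you would therefore have proved \eqref{eq:fulliedlossder} with the terms $r(X(rT\phi))^2$ and $M E_T[T\phi](\tau_1)$ deleted from the right-hand side, i.e.\ a non-degenerate integrated estimate for $|\snabla_{\s^2}\phi|^2$ across the photon sphere with no derivative loss. This is precisely what the trapping obstruction forbids \cite{janpaper}, and it is the reason those extra terms appear in the statement; an argument that proves too much must contain an error.

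The error sits in the absorption of $\Lbar\psi\,L\psi$ near $r=3M$: bounding it requires non-degenerate spacetime control of $(T\phi)^2$ at the photon sphere for the \emph{high} angular frequencies. But pairing the equation with $\chi\psi$ is exactly the virial-type identity showing that, modulo radial-derivative, zeroth-order and $\Sigma_{\tau_i}$-boundary terms (all controllable without loss), the bulk integrals of $\chi|\snabla_{\s^2}\psi|^2$ and of $\chi(T\psi)^2$ control each other. So your computation does not prove the non-degenerate angular estimate; it merely reduces it to the non-degenerate $(T\phi)^2$ estimate at $r=3M$, which for $\phi_{>\ell_0}$ is subject to the same trapping obstruction and hence is exactly as strong as the statement being proven — invoking it at the $\phi$-level is circular. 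For bounded angular frequencies $\ell\leq\ell_0$ such control is available (with $\ell_0$-dependent constants, via Proposition \ref{prop:aposlbound}), but that is not where the difficulty lies. To close \eqref{eq:fulliedlossder} you need a step that genuinely trades angular derivatives at $r=3M$ for $T$-commuted quantities — the paper's elliptic estimate \eqref{eq:standardelliptic} combined with \eqref{eq:fullied} for $T\phi$ is the intended such step — and the single $\Lbar$-integration by parts cannot substitute for it.
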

\begin{proof}
We consider separately the cases 1) $\alpha\leq 0$ and $\phi=\phi_0$, and 2) $\alpha>0$ or $\phi=\phi_{\geq 1}$.

In case 1), \eqref{eq:fullied} and \eqref{eq:fulliedlossder} follow directly from Corollary \ref{cor:iedaneg} with $p=1$. In fact,  \eqref{eq:fullied} holds without a degeneracy at $r=3M$.

In case 2), we need to complement the estimates Proposition \ref{prop:aposlbound} and \ref{prop:aposllarge} with additional $r$-weighted estimates in the region $r\geq R$, with $R>3M$ arbitrarily large. We therefore integrate \eqref{eq:rpid} with $p=1$ and combine it with \eqref{eq:ebound} to estimate terms without a good sign on $\Sigma_{\tau_2}$ and Proposition \ref{prop:aposlbound} and \ref{prop:aposllarge}, together with a standard averaging argument, to estimate boundary terms at $r=R$ and obtain \eqref{eq:fullied}.

We now turn to \eqref{eq:fulliedlossder}. By the Killing property of $T$, $[\square_{g_M}-V_{\alpha},T]=0$, so \eqref{eq:fullied} also applies with $\phi$ replaced by $T\phi$. We therefore obtain immediately:
\begin{multline}
\label{eq:auxremovedegiled}
\int_{\Sigma_{\tau_2}\cap\{r\geq R\}} r(X(r\phi))^2\,d\sigma dr+\int_{\tau_1}^{\tau_2}\int_{\Sigma_{\tau}} \left[r^2(DX\phi)^2+\phi^2+h_0(T\phi)^2+(1-3Mr^{-1})^2|\snabla_{\s^2}\phi|^2\right]\,d\sigma dr d\tau\\
\leq C\int_{\Sigma_{\tau_1}\cap\{r\geq R\}}r(X(r\phi))^2+r(X(rT\phi))^2\,d\sigma dr+CE_T[\phi](\tau_1)+C E_T[T\phi](\tau_1).
\end{multline}
It remains to remove the degenerate factor in front of $|\snabla_{\s^2}\phi|^2$. We apply \eqref{eq:zerothoderILED} with $f(r)=-\chi(r)r$, where $\chi(r)$ is a smooth cut-off function satisfying $\chi(r)=1$ for $(3-\frac{1}{4})M\leq r\leq (3+\frac{1}{4})M$ and $\chi(r)=0$ for $r\leq (3-\frac{1}{2})M$ and $r\geq (3+\frac{1}{2})M$. We then obtain:
\begin{multline*}
\int_{\tau_1}^{\tau_2}\int_{\Sigma_{\tau}\cap \{(3-\frac{1}{4})M\leq r\leq (3+\frac{1}{4})M\} } |\snabla_{\s^2}\phi|^2\,d\sigma dr d\tau\\
\leq C\int_{\tau_1}^{\tau_2}\int_{\Sigma_{\tau}} \left[r^2(DX\phi)^2+\phi^2+h_0(T\phi)^2+(1-3Mr^{-1})^2|\snabla_{\s^2}\phi|^2\right]\,d\sigma dr d\tau+CE_T[\phi](\tau_2)+C E_T[\phi](\tau_1).
\end{multline*}
We conclude \eqref{eq:fulliedlossder} by applying \eqref{eq:auxremovedegiled} and \eqref{eq:ebound}.
\end{proof}

In order to remove the factors $D$ in the (integrated) energies that appear in \eqref{eq:fullied} and which correspond to degeneracies at $r=2M$, we apply the \emph{red-shift mechanism}, developed in \cite{redshift, lecturesMD}. We first introduce the \emph{non-degenerate energy} $E_{\mathbf{N}}[\phi](\tau)$:
\begin{equation*}
E_{\mathbf{N}}[\phi](\tau):=\int_{\Sigma_{\tau}}r^2(X\phi)^2+h\tilde{h} r^2(T\phi)^2+|\snabla_{\s^2}\phi|^2+\phi^2\,d\sigma dr.
\end{equation*}
We will moreover denote $E_{\mathbf{N}}[\phi]:=E_{\mathbf{N}}[\phi](0)$.
\begin{proposition}[Red-shift estimate]
\label{prop:redshift}
For $\delta>0$, there exists a constant $C=C(M,h,V_{\alpha},\delta)>0$ such that
\begin{equation}
\label{eq:redshift}
\begin{split}
\int_{\Sigma_{\tau_2}\cap\{r\leq (2+\delta)M\}}& \left(D^{-1}\underline{L}\psi\right)^2+r^{-2}|\snabla_{\s^2}\psi|^2+r^{-2}\psi^2\,d\sigma dr+ \int_{\tau_1}^{\tau_2}\int_{\Sigma_{\tau}\cap\{r\leq (2+\delta)M\}}\left(D^{-1}\underline{L}\psi\right)^2 \,d\sigma dr d\tau\\
\leq&\: C\int_{\Sigma_{\tau_1}\cap\{r\geq R\}}r(X(r\phi))^2\,d\sigma dr+CE_{\mathbf{N}}[\phi](\tau_1)
\end{split}
\end{equation}
\begin{proof}
We multiply both sides of \eqref{eq:waveeq2a} with $D^{-2}\underline{L}\psi$ to obtain
\begin{equation*}
\begin{split}
0=&\: 4D^{-1}\underline{L} \psi  D^{-1}L\underline{L}\psi-r^{-2} D^{-1}\underline{L}\psi \slashed{\Delta}_{\s^2}\psi+ D^{-1}\underline{L}\psi(V_{\alpha}+D'r^{-1})\psi\\
=&\: 2L\left[\left(D^{-1}\underline{L}\psi\right)^2\right]+2\frac{dD}{dr}\left(D^{-1}\underline{L}\psi\right)^2+\frac{1}{2}D^{-1}\underline{L}\left[r^{-2}|\snabla_{\s^2}\psi|^2+(V_{\alpha}+r^{-1} D')\psi^2\right]-\frac{1}{2}r^{-3}|\snabla_{\s^2}\psi|^2\\
&+\frac{1}{4}\frac{d}{dr}\left(V_{\alpha}+D'r^{-1}\right)\psi^2+\textnormal{div}_{\s^2}(\ldots)\\
=&\: DX \left[\left(D^{-1}\underline{L}\psi\right)^2 \right]+T\left[h \left(D^{-1}\underline{L}\psi\right)^2+\frac{1}{4}\tilde{h}r^{-2}|\snabla_{\s^2}\psi|^2+ \frac{1}{4}\tilde{h}(V_{\alpha}+r^{-1} D')\psi^2\right]-\frac{1}{4}X\left[r^{-2}|\snabla_{\s^2}\psi|^2+(V_{\alpha}+r^{-1} D')\psi^2\right]\\
&+2\frac{dD}{dr}\left(D^{-1}\underline{L}\psi\right)^2-\frac{1}{2}r^{-3}|\snabla_{\s^2}\psi|^2+\frac{1}{4}\frac{d}{dr}\left(V_{\alpha}+D'r^{-1}\right)\psi^2+\textnormal{div}_{\s^2}(\ldots)\\
=&\: X \left[D \left(D^{-1}\underline{L}\psi\right)^2-\frac{1}{4}r^{-2}|\snabla_{\s^2}\psi|^2-\frac{1}{4}\left(V_{\alpha}+r^{-1} D'\right)\psi^2 \right]+T\left[h \left(D^{-1}\underline{L}\psi\right)^2+\frac{1}{4}\tilde{h}r^{-2}|\snabla_{\s^2}\psi|^2+ \frac{1}{4}\tilde{h}(V_{\alpha}+r^{-1} D')\psi^2\right]\\
&+2\frac{dD}{dr}\left(D^{-1}\underline{L}\psi\right)^2-\frac{1}{2}r^{-3}|\snabla_{\s^2}\psi|^2+\frac{1}{4}\frac{d}{dr}\left(V_{\alpha}+D'r^{-1}\right)\psi^2+\textnormal{div}_{\s^2}(\ldots).
\end{split}
\end{equation*}
Now, we integrate the above equation in $\{2M\leq r\leq (2+\delta)M\}$, with $\delta>0$ suitably small to guarantee $V_{\alpha}+r^{-1} D' >0$. All the resulting boundary terms are then non-negative definite. Hence, we obtain for $\kappa=\frac{dD}{dr}|_{r=2M}=\frac{1}{2M}$:
\begin{equation}
\label{eq:redshiftest}
\begin{split}
&\int_{\Sigma_{\tau_2}\cap\{r\leq (2+\delta)M\}}h \left(D^{-1}\underline{L}\psi\right)^2+\frac{1}{4}\tilde{h}r^{-2}|\snabla_{\s^2}\psi|^2+ \frac{1}{4}\tilde{h}(V_{\alpha}+r^{-1} D')\psi^2\,d\sigma dr\\
&+\frac{2}{1+\frac{1}{2}\delta}\kappa \int_{\tau_1}^{\tau_2}\int_{\Sigma_{\tau}\cap\{r\leq (2+\delta)M\}}\left(D^{-1}\underline{L}\psi\right)^2 \,d\sigma dr d\tau\\
\leq &\: C\int_{\tau_1}^{\tau_2}\int_{\s^2} r^{-2}|\snabla_{\s^2}\psi|^2+r^{-2}\psi^2\Bigg|_{r=(2+\delta)M}\,d\sigma d\tau+C \int_{\tau_1}^{\tau_2}\int_{\Sigma_{\tau}\cap\{r\leq (2+\delta)M\}} r^{-3}|\snabla_{\s^2}\psi|^2+r^{-3}\psi^2\,d\sigma dr d\tau\\
&+\int_{\Sigma_{\tau_1}\cap\{r\leq (2+\delta)M\}}h \left(D^{-1}\underline{L}\psi\right)^2+\frac{1}{4}\tilde{h}r^{-2}|\snabla_{\s^2}\psi|^2+ \frac{1}{4}\tilde{h}(V_{\alpha}+r^{-1} D')\psi^2\,d\sigma dr.
\end{split}
\end{equation}
We conclude that \eqref{eq:redshift} holds by combining the above equation with \eqref{eq:ebound}, \eqref{eq:hardy}, \eqref{eq:fullied} and a standard averaging argument to estimate the $r=(2+\delta)M$ boundary terms.
\end{proof}

\begin{corollary}[Non-degenerate integrated energy estimates]
\label{cor:nondegiled}
For $R>2M$ suitably large, there exists a constant $C=C(M,h,V_{\alpha},R)>0$ such that
\begin{align}
\label{eq:fulliednondeg}
\int_{\Sigma_{\tau_2}\cap\{r\geq R\}}& r(X(r\phi))^2\,d\sigma dr+\int_{\tau_1}^{\tau_2}\int_{\Sigma_{\tau}} (1-3Mr^{-1})^2\left[r^2(X\phi)^2+|\snabla_{\s^2}\phi|^2+h_0(T\phi)^2\right]+\phi^2\,d\sigma dr d\tau\\ \nonumber
\leq &\:C\int_{\Sigma_{\tau_1}\cap\{r\geq R\}}r(X(r\phi))^2\,d\sigma dr+CE_{\mathbf{N}}[\phi](\tau_1),\\
\label{eq:fulliednondeglossder}
\int_{\Sigma_{\tau_2}\cap\{r\geq R\}}& r(X(r\phi))^2\,d\sigma dr+\int_{\tau_1}^{\tau_2}\int_{\Sigma_{\tau}} \left[r^2(X\phi)^2+\phi^2+|\snabla_{\s^2}\phi|^2+h_0(T\phi)^2\right]\,d\sigma dr d\tau\\ \nonumber
\leq &\:C\int_{\Sigma_{\tau_1}\cap\{r\geq R\}}r(X(r\phi))^2+r(X(rT\phi))^2\,d\sigma dr+CE_{\mathbf{N}}[\phi](\tau_1)+CE_T[T\phi](\tau_1).
\end{align}
Furthermore, for $N\in \N_0$ and $3M<R_0<R_1$, there exists a constant $C=C(M,h,V_{\alpha},R_0,R_1,N)>0$ such that
\begin{equation}
\label{eq:hoiled}
\begin{split}
\sum_{n_1+n_2+n_3\leq N+1} \int_{\tau_1}^{\tau_2}&\int_{\Sigma_{\tau}\cap\{R_0\leq r\leq R_1\}} |\snabla_{\s^2}^{n_1}X^{n_2}T^{n_3}\phi|^2\,d\sigma dr d\tau\leq C\sum_{n\leq N}\int_{\Sigma_{\tau_1}\cap\{r\geq R\}}r(X(rT^n\phi))^2\,d\sigma dr+CE_T[T^n\phi](\tau_1).
\end{split}
\end{equation}
\end{corollary}
\end{proposition}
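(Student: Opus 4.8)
The plan is to combine the degenerate estimates of Corollary~\ref{cor:degiled} with the red-shift estimate of Proposition~\ref{prop:redshift} to remove the degeneracy of \eqref{eq:fullied}--\eqref{eq:fulliedlossder} at $r=2M$, and then, for \eqref{eq:hoiled}, to commute with $T$ and run an elliptic bootstrap in the strictly interior region $\{R_0\leq r\leq R_1\}$.

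For \eqref{eq:fulliednondeg}: on $\{r\geq (2+\delta)M\}$ the factor $D$ is bounded below, so \eqref{eq:fullied} already gives the claimed non-degenerate spacetime bound there, and it remains only to control $\int_{\tau_1}^{\tau_2}\int_{\Sigma_\tau\cap\{2M\leq r\leq (2+\delta)M\}}r^2(X\phi)^2$. Using \eqref{eq:LbarXT} we have $D^{-1}\underline{L}=-\tfrac12 X+\tfrac12\tilde h T$, whence $(X\psi)^2\lesssim (D^{-1}\underline{L}\psi)^2+(T\psi)^2$, and since $rX\phi=X\psi-\phi$ and $r$ is bounded on this region, $r^2(X\phi)^2\lesssim (D^{-1}\underline{L}\psi)^2+r^2(T\phi)^2+\phi^2$. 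The spacetime integral of $(D^{-1}\underline{L}\psi)^2$ over $\{r\leq(2+\delta)M\}$ is controlled by the right-hand side of \eqref{eq:redshift}, while $\int_{\tau_1}^{\tau_2}\int r^2(T\phi)^2+\phi^2$ over this bounded region is controlled by the bulk terms $h_0(T\phi)^2+\phi^2$ of \eqref{eq:fullied}; noting $E_T[\phi]\leq E_{\mathbf{N}}[\phi]$ and that the data terms that appear are exactly those on the right-hand side of \eqref{eq:fulliednondeg}, the estimate follows. All the remaining terms of \eqref{eq:fulliednondeg} (the $\Sigma_{\tau_2}$ boundary term on $\{r\geq R\}$ and the bulk $\phi^2$, $(1-3Mr^{-1})^2|\snabla_{\s^2}\phi|^2$, $h_0(T\phi)^2$) are literally present in \eqref{eq:fullied}. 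Estimate \eqref{eq:fulliednondeglossder} is obtained identically, starting from \eqref{eq:fulliedlossder} instead (the red-shift estimate \eqref{eq:redshift} introduces no degeneracy at $r=3M$).

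For \eqref{eq:hoiled}, fix $3M<R_0<R_1$. Since $[\square_{g_M}-V_\alpha,T]=0$, the estimate \eqref{eq:fullied} holds with $\phi$ replaced by $T^n\phi$ for each $0\leq n\leq N$, and in the compact range $R_0\leq r\leq R_1$ (where $D$ and $1-3Mr^{-1}$ are bounded below) this controls the spacetime integral of $|\snabla_{\s^2}T^n\phi|^2+|XT^n\phi|^2+|T^{n+1}\phi|^2+|T^n\phi|^2$ by the right-hand side of \eqref{eq:hoiled}. This already handles every $\snabla_{\s^2}^{n_1}X^{n_2}T^{n_3}\phi$ with $n_1\leq 1$ and $n_2\leq 1$. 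For the remaining mixed derivatives I would use elliptic estimates on each slice, derived exactly as \eqref{eq:standardelliptic}: for a cutoff $\chi$ equal to $1$ on $[R_0,R_1]$ and supported in a slightly larger interval, rearranging \eqref{eq:waveeq1} to isolate $X(Dr^2X(\cdot))+\slashed{\Delta}_{\s^2}(\cdot)$, squaring, integrating by parts in $X$ and on $\s^2$ and applying \eqref{eq:sphere4} gives
\begin{equation*}
\int_{\Sigma_\tau\cap\{R_0\leq r\leq R_1\}} (X^2\phi)^2+|\snabla_{\s^2}X\phi|^2+|\snabla_{\s^2}^2\phi|^2\,d\sigma dr\lesssim \int_{\Sigma_\tau\cap\{R_0-\epsilon\leq r\leq R_1+\epsilon\}} (XT\phi)^2+(T^2\phi)^2+(X\phi)^2+(T\phi)^2+\phi^2\,d\sigma dr.
\end{equation*}
Differentiating \eqref{eq:waveeq1} $j$ times in $X$ and re-isolating the top-order term produces analogous estimates bounding $j+2$ spatial derivatives in terms of $\leq j+1$ spatial derivatives together with at most two $T$-derivatives; commuting these with $T^k$ and inducting on $n_1+n_2+n_3\leq N+1$ with the $T$-count increasing as the spatial count decreases, one checks that every $\snabla_{\s^2}^{n_1}X^{n_2}T^{n_3}\phi$ with $n_1+n_2+n_3\leq N+1$ is bounded, after integration in $\tau$, by spacetime integrals of $\snabla_{\s^2}^{n_1'}XT^{n'}\phi$ and $\snabla_{\s^2}^{n_1'}T^{n'}\phi$ with $n_1'\leq1$ and $n'\leq N$, hence by the right-hand side of \eqref{eq:hoiled}.

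The main obstacle is this last step: one must verify that the elliptic induction never forces more than $N$ commutations of \eqref{eq:fullied} with $T$ — so that only $E_T[T^n\phi]$ and the $r$-weighted data of $T^n\phi$ with $n\leq N$ enter the right-hand side — and that the repeated use of the elliptic estimates costs no $\tau$-decay, which holds because $\{R_0\leq r\leq R_1\}$ is strictly interior, away from both $r=2M$ and $r=3M$, so no degenerate weights intervene. The conversion of the red-shift quantity into control of $X\phi$ near the horizon in the first part is routine given \eqref{eq:LbarXT}.
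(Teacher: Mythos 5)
Your argument is correct and follows essentially the same route as the paper: the non-degenerate estimates \eqref{eq:fulliednondeg}--\eqref{eq:fulliednondeglossder} are obtained by combining \eqref{eq:fullied}, respectively \eqref{eq:fulliedlossder}, with the red-shift bound \eqref{eq:redshift} (your pointwise conversion $r^2(X\phi)^2\lesssim (D^{-1}\underline{L}\psi)^2+(T\phi)^2+\phi^2$ near the horizon being exactly how the combination is effected), while \eqref{eq:hoiled} follows from \eqref{eq:fullied} commuted with $T^n$, $n\leq N$, restricted to $\{R_0\leq r\leq R_1\}$ together with elliptic estimates of the type \eqref{eq:standardelliptic} trading spatial for $T$-derivatives. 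Your bookkeeping that each elliptic reduction lowers the spatial order while raising the $T$-order so that the total stays $\leq N+1$, hence only $E_T[T^n\phi]$ and the $r$-weighted data of $T^n\phi$ with $n\leq N$ appear, is the correct justification of the derivative count in \eqref{eq:hoiled}.
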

\begin{proof}
We obtain \eqref{eq:fulliednondeg} and \eqref{eq:fulliednondeglossder} by combining \eqref{eq:fullied} and \eqref{eq:fulliedlossder} with \eqref{eq:redshift}.

The inequality \eqref{eq:hoiled} follows by applying \eqref{eq:fullied}, restricted to $R_0\leq r\leq R_1$ and then apply standard elliptic estimates to express higher-order $X$ and $\snabla_{\s^2}$ derivatives in terms of higher-order $T$ derivatives.
\end{proof}

Applying the red-shift estimate, we can moreover derive a non-degenerate energy boundedness estimate:
\begin{corollary}[Non-degenerate energy estimate]
\label{cor:redshifebound}
There exists a constant $C=C(M,h,V_{\alpha})>0$ such that for all $0<\tau_1\leq \tau_2$
\begin{equation}
\label{eq:nondegebound}
E_{\mathbf{N}} [\phi](\tau_2)\leq CE_{\mathbf{N}} [\phi](\tau_1).
\end{equation}
\end{corollary}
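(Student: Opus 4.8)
The plan is to upgrade the degenerate $T$-energy boundedness \eqref{eq:ebound} to the non-degenerate energy $E_{\mathbf N}$ by inserting the red-shift estimate \eqref{eq:redshift} in a neighbourhood of the event horizon. The first observation is that away from the horizon the two energies are comparable: for $r\geq (2+\delta)M$ one has $D(r)\geq \tfrac{\delta}{2+\delta}>0$, and the integrands of $E_{\mathbf N}[\phi](\tau)$ and $E_T[\phi](\tau)$ differ only through the weight ($1$ versus $D$) multiplying $r^2(X\phi)^2$. Hence there is a constant $C=C(\delta)>0$ with
\[
E_{\mathbf N}[\phi](\tau)\;\leq\; C\,E_T[\phi](\tau)+C\int_{\Sigma_{\tau}\cap\{r\leq (2+\delta)M\}}r^2(X\phi)^2\,d\sigma dr ,
\]
so it suffices to bound the right-hand side at $\tau=\tau_2$. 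The first term is handled by \eqref{eq:ebound} together with $E_T\leq E_{\mathbf N}$, giving $E_T[\phi](\tau_2)\leq C E_T[\phi](\tau_1)\leq C E_{\mathbf N}[\phi](\tau_1)$.

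\noindent For the near-horizon term I would convert the $X$-derivative into the ingoing null derivative $\underline{L}$ that is controlled by the red-shift current. From \eqref{eq:LbarXT} we have $X\psi=-2D^{-1}\underline{L}\psi+\tilde h\,T\psi$, while in the $(\tau,r,\theta,\varphi)$ chart $X=\partial_r$ and $\psi=r\phi$, so $rX\phi=X\psi-r^{-1}\psi$. Since $r$ is bounded above and below on $\{2M\leq r\leq (2+\delta)M\}$ and $h\tilde h$ is bounded below there ($h(2M)=2$, $\tilde h>0$), we obtain the pointwise bound
\[
r^2(X\phi)^2\;\lesssim\;\big(D^{-1}\underline{L}\psi\big)^2+(T\psi)^2+r^{-2}\psi^2\;\lesssim\;\big(D^{-1}\underline{L}\psi\big)^2+h\tilde h\,r^2(T\phi)^2+r^{-2}\psi^2 .
\]
Integrating over $\Sigma_{\tau_2}\cap\{r\leq (2+\delta)M\}$: the $\big(D^{-1}\underline{L}\psi\big)^2$ and $r^{-2}\psi^2$ contributions are precisely two of the terms bounded on the left-hand side of the red-shift estimate \eqref{eq:redshift}, and the $h\tilde h r^2(T\phi)^2$ contribution is dominated by $E_T[\phi](\tau_2)\leq C E_{\mathbf N}[\phi](\tau_1)$ via \eqref{eq:ebound}. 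Combining with the previous paragraph yields $E_{\mathbf N}[\phi](\tau_2)\leq C E_{\mathbf N}[\phi](\tau_1)$, with $C$ independent of $\tau_1,\tau_2$. (The weighted flux $\int_{\Sigma_{\tau_1}\cap\{r\geq R\}}r(X(r\phi))^2\,d\sigma dr$ appearing on the right of \eqref{eq:redshift} is either absent, in the cases $\alpha>0$ or $\ell\geq1$ where the local integrated-energy estimates \eqref{eq:iedapos}, \eqref{eq:iedaposhighfreq} hold without $r$-weights, or is itself propagated by the non-degenerate integrated-energy estimate \eqref{eq:fulliednondeg}; in either case it does not disturb the uniformity of the constant.)

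\noindent The point I expect to be the real content, rather than a routine calculation, is exactly the uniformity of the constant in $\tau_2-\tau_1$: a naive iteration of a short-time estimate would produce an exponentially growing constant, and it is only because the red-shift estimate \eqref{eq:redshift} already incorporates the (integrated) local energy decay of \S\ref{sec:iled} — so that the near-horizon error terms generated by the red-shift multiplier are integrable in $\tau$ rather than merely bounded at each slice — that the argument closes with a fixed $C$. All the remaining steps are bookkeeping with the identities \eqref{eq:LbarXT} and the comparison of weights.
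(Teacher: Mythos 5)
There is a genuine gap, and it sits exactly where you put your parenthetical remark. Your argument controls the near-horizon part of $E_{\mathbf N}[\phi](\tau_2)$ by invoking Proposition \ref{prop:redshift}, i.e.\ estimate \eqref{eq:redshift}, whose right-hand side is $C\int_{\Sigma_{\tau_1}\cap\{r\geq R\}}r(X(r\phi))^2\,d\sigma dr+CE_{\mathbf N}[\phi](\tau_1)$. The weighted flux $\int_{\Sigma_{\tau_1}\cap\{r\geq R\}}r(X(r\phi))^2\,d\sigma dr$ is \emph{not} bounded by $E_{\mathbf N}[\phi](\tau_1)$ (it carries an extra growing power of $r$ relative to the $r^2(X\phi)^2+\phi^2$ terms in $E_{\mathbf N}$), so what your chain of inequalities actually yields is $E_{\mathbf N}[\phi](\tau_2)\leq C E_{\mathbf N}[\phi](\tau_1)+C\int_{\Sigma_{\tau_1}\cap\{r\geq R\}}r(X(r\phi))^2\,d\sigma dr$, which is strictly weaker than \eqref{eq:nondegebound}. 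Your two proposed repairs do not close this: restricting to $\alpha>0$ or $\ell\geq 1$ discards precisely the case $\alpha\leq 0$, $\ell=0$ that is the main point of the paper (there the Morawetz input \eqref{eq:fullied} behind \eqref{eq:redshift} genuinely needs the $r$-weighted flux), and ``propagating'' the weighted flux by \eqref{eq:fulliednondeg} still leaves it sitting on the data side at $\tau_1$, where the corollary does not allow it. The pointwise conversion $r^2(X\phi)^2\lesssim (D^{-1}\underline L\psi)^2+h\tilde h\,r^2(T\phi)^2+r^{-2}\psi^2$ near the horizon is fine; the problem is purely which red-shift estimate you feed it into.

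The paper avoids this entirely by never using \eqref{eq:redshift} here: it works with the local red-shift identity \eqref{eq:redshiftest}, whose error terms (the $r=(2+\delta)M$ boundary flux and the near-horizon zeroth-order/angular terms) are bounded slice-wise by $E_T$, adds $\int_{\tau_1}^{\tau_2}E_T[\phi]\,d\tau$ to both sides to make the bulk term coercive for $E_{\mathbf N}$, uses only the $T$-energy boundedness \eqref{eq:ebound}, and then closes with a Gronwall-type differential inequality $\frac{d}{d\tau}E_{\mathbf N}+bE_{\mathbf N}\leq CE_T[\phi](\tau_1)$. Note also that your closing heuristic is off: the uniformity of the constant does not come from the error terms being integrable in $\tau$ (they are not --- $\int_{\tau_1}^{\tau_2}E_T\,d\tau$ grows linearly in $\tau_2-\tau_1$), but from the exponential damping supplied by the coercive red-shift bulk term, which absorbs a merely bounded-in-time source. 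If you rerun your argument with \eqref{eq:redshiftest} and this Gronwall step in place of \eqref{eq:redshift}, it closes and reproduces the paper's proof.
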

\begin{proof}
By adding to both sides of \eqref{eq:redshiftest} the spacetime integral $\int_{\tau_1}^{\tau_2}E_T[\phi](\tau)\,d\tau$, we obtain that for some $b>0$ we have for all $0\leq \tau_1\leq \tau_2$:
\begin{equation*}
E_{\mathbf{N}}[\phi](\tau_2)+b \int_{\tau_1}^{\tau_2}E_{\mathbf{N}}(\tau)\,d\tau\leq C \int_{\tau_1}^{\tau_2}E_T[\phi](\tau)\,d\tau+E_{\mathbf{N}}[\phi](\tau_1).
\end{equation*}

We have that $\int_{\tau_1}^{\tau_2}E_T[\phi](\tau)\,d\tau \leq C(\tau_2-\tau_1)E_T[\phi](\tau_1)$ as a result of \eqref{eq:ebound}, so after: 1) rearranging terms, 2) dividing by $\tau_2-\tau_1$ and 3) taking the limit $\tau_2\downarrow \tau_1$, we obtain for all $\tau\geq 0$ the following inequality: for all $\tau\geq \tau_1$
\begin{equation*}
\frac{d}{d\tau}E_{\mathbf{N}}[\phi](\tau)+bE(\tau)\leq CE_T[\phi](\tau_1),
\end{equation*}
which is equivalent to
\begin{equation*}
\frac{d}{d\tau}(e^{b(\tau-\tau_1)}E_{\mathbf{N}}[\phi](\tau))\leq Ce^{b(\tau-\tau_1)}E_T[\phi](\tau_1)
\end{equation*}
from which \eqref{eq:nondegebound} follows after integrating in $\tau$.
\end{proof}
\section{Higher-order $r$-weighted energy estimates}
\label{sec:rp}
In this section, we will derive $r$-weighted energy estimates for weighted higher-order derivatives of $\psi$ of the form $(rL)^k\psi$, with $k\in \N_0$. These will form the main tool for establishing time decay of higher-order $T$-derivatives in the next section.
\begin{lemma}
Let $N\in \N_0$ and let $\phi\in C^{\infty}(\mathcal{R})\cap C^1 (\widehat{\mathcal{R}})$ be a solution to \eqref{eq:waveeq}. Then
\begin{equation}
\label{eq:waveqcommrL}
\begin{split}
 0=&\:-4\underline{L} L(rD^{-1}L)^N\psi+Dr^{-2}[\slashed{\Delta}_{\s^2}-\alpha+O_{\infty}(r^{-1})](rD^{-1}L)^N\psi-2N[r^{-1}+O_{\infty}(r^{-2})]L(rD^{-1}L)^N\psi\\
 &-\frac{N}{2}Dr^{-2}\slashed{\Delta}_{\s^2}(rD^{-1}L)^{N-1}\psi+N\sum_{n=0}^N O_{\infty}(r^{-2})(rD^{-1}L)^n\psi+N(N-1)\sum_{n=0}^{N-2} O_{\infty}(r^{-2})\slashed{\Delta}_{\s^2}(rD^{-1}L)^n\psi.
 \end{split}
\end{equation}
\end{lemma}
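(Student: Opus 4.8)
The plan is to prove \eqref{eq:waveqcommrL} by induction on $N$, using \eqref{eq:waveeq2a} as the base case $N=0$. Indeed, for $N=0$ the claimed identity reduces to
\begin{equation*}
0=-4\underline{L}L\psi+Dr^{-2}[\slashed{\Delta}_{\s^2}-\alpha+O_{\infty}(r^{-1})]\psi,
\end{equation*}
which is exactly \eqref{eq:waveeq2a} after writing $D(V_{\alpha}+D'r^{-1})=Dr^{-2}(r^2V_{\alpha}+D'r)=Dr^{-2}(\alpha+O_{\infty}(r^{-1}))$, using $r^2V_{\alpha}=\alpha+O_{\infty}(r^{-1})$ and $D'=2Mr^{-2}=O_{\infty}(r^{-2})$ so that $D'r=O_{\infty}(r^{-1})$. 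The inductive step is to apply the operator $rD^{-1}L$ to \eqref{eq:waveqcommrL} for $N$ and commute it through to the left, keeping careful track of the commutator terms.

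\textbf{Key commutator computations.} The heart of the argument is to compute, in $(\tau,r,\theta,\varphi)$ coordinates where $L=\frac{D}{2}X+\frac{1}{2}hT$ and $\underline{L}=-\frac{D}{2}X+\frac{D}{2}\tilde{h}T$, the commutators of $rD^{-1}L$ with each operator appearing on the right-hand side of \eqref{eq:waveqcommrL}: namely with $\underline{L}L$, with multiplication by $Dr^{-2}$ and by the various $O_{\infty}(r^{-k})$ coefficient functions, and with $\slashed{\Delta}_{\s^2}$ (which commutes, since $L$ involves no angular derivatives). Writing $P=rD^{-1}L$, one needs $[P,\underline{L}L]$, which after using $[\underline{L},L]$ (a first-order operator with coefficients that are $O_{\infty}(r^{-1})$ relative to $L$, computable from \eqref{eq:LbarXT}--\eqref{eq:LXT}) produces the crucial term $-2r^{-1}L(PL^{N}\psi)$-type contribution accounting for the $-2N[r^{-1}+O_{\infty}(r^{-2})]L(rD^{-1}L)^N\psi$ term, as well as the half-power term $-\frac{N}{2}Dr^{-2}\slashed{\Delta}_{\s^2}(rD^{-1}L)^{N-1}\psi$ arising from hitting the $Dr^{-2}\slashed{\Delta}_{\s^2}$ coefficient with $P$ and producing $r\cdot \partial_r(Dr^{-2})\cdot D^{-1}\cdot\frac{D}{2}=O(r^{-2})$ times a lower-order $(rD^{-1}L)^{N-1}$ term; the combinatorial factors $N$ and $N(N-1)$ come from the fact that at step $N$ we generate one new term of each type from the principal terms and inherit $O(1)$-many already-present lower-order terms. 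I would organize the bookkeeping by noting that every commutator $[P,\cdot]$ lowers the $L$-order by at most one while the coefficients stay in the same $O_{\infty}(r^{-k})$ class or better, so the schematic structure of the right-hand side is stable under the induction, and only the explicit coefficients of the first three (principal and subprincipal) terms need to be tracked precisely.

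\textbf{Main obstacle.} The hard part is the bookkeeping: showing that after applying $P=rD^{-1}L$ one can \emph{reorganize} the resulting $\approx 2N$ commutator terms back into precisely the schematic form displayed, in particular verifying that the coefficient of $L(rD^{-1}L)^N\psi$ is exactly $-2N[r^{-1}+O_{\infty}(r^{-2})]$ and the coefficient of $Dr^{-2}\slashed{\Delta}_{\s^2}(rD^{-1}L)^{N-1}\psi$ is exactly $-\frac{N}{2}$, rather than some other constant. This requires carefully computing $P(Dr^{-2})=rD^{-1}\cdot\frac{D}{2}X(Dr^{-2})+rD^{-1}\cdot\frac{1}{2}hT(Dr^{-2})=\frac{r}{2}\partial_r(Dr^{-2})+O_{\infty}(r^{-1})=\frac{r}{2}(D'r^{-2}-2Dr^{-3})+\ldots=-Dr^{-2}+O_{\infty}(r^{-2})$, which supplies the factor $-1$ that, combined with the $N$ terms produced when $P$ lands on any of the $N$ copies of $L$ inside $(rD^{-1}L)^N$ applied to the $\slashed{\Delta}_{\s^2}$-term, and the analogous computation $P$ acting on the $-2[r^{-1}+\ldots]L$-coefficient plus the commutator $[P, -4\underline{L}L]$ yielding a $-2r^{-1}L(\cdot)$ contribution, assembles the stated coefficients. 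I would present this by first doing the computation for $N=1$ explicitly to fix the pattern, then carrying out the general inductive step schematically, invoking the "algebra of constants" only for the genuinely lower-order $O_{\infty}(r^{-2})$ remainder terms and being fully explicit for the three leading terms.
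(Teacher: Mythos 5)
Your overall plan (induction on $N$ with base case \eqref{eq:waveeq2a}, then commuting a first-order operator through the identity while tracking the three leading coefficients exactly) is in the same family as the paper's argument, but the mechanism you describe for the inductive step has concrete errors that prevent the induction from closing as stated. First, $[\underline{L},L]=0$ here ($\underline{L}=\partial_u$, $L=\partial_v$ in double-null coordinates), so it cannot be the source of the $-2N[r^{-1}+O_{\infty}(r^{-2})]L(rD^{-1}L)^N\psi$ term; the relevant quantities are $L(rD^{-1})=\tfrac{1}{2}+O_{\infty}(r^{-1})$ and $\underline{L}(rD^{-1})=-\tfrac{1}{2}+O_{\infty}(r^{-1})$, which are $O(1)$, not $O_{\infty}(r^{-1})$. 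Second, and more seriously, your claim that every commutator with $P=rD^{-1}L$ lowers the $L$-order, so that ``the schematic structure is stable'', is false: $[P,\underline{L}L]$ contains the second-order pieces $\bigl(\tfrac{1}{2}+O_{\infty}(r^{-1})\bigr)LL-\bigl(\tfrac{1}{2}+O_{\infty}(r^{-1})\bigr)\underline{L}L$, so applying $P$ to $-4\underline{L}L(rD^{-1}L)^N\psi$ leaves behind a term $\approx +2\,\underline{L}L(rD^{-1}L)^N\psi$, which is not of the form \eqref{eq:waveqcommrL} at level $N+1$. To remove it you must invoke the level-$N$ identity a second time, as an equation; it is precisely this substitution that contributes $+\tfrac{1}{2}Dr^{-2}\slashed{\Delta}_{\s^2}(rD^{-1}L)^N\psi$, which, combined with the $-1$ coming from $P$ hitting the coefficient $Dr^{-2}$ of the principal angular term and the inherited $-\tfrac{N}{2}$, yields the stated $-\tfrac{N+1}{2}$. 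Your accounting, which attributes the subprincipal angular term only to $P(Dr^{-2})$ plus inheritance, gives the recursion $c_{N+1}=c_N-1$, i.e.\ the wrong coefficient $-N$ instead of $-\tfrac{N}{2}$, and in addition leaves the unabsorbable $\underline{L}L$-remainder (and, unless the exact function coefficients are retained so that the $\tfrac{D'}{2r}$-terms cancel, also $O_{\infty}(r^{-3})\slashed{\Delta}_{\s^2}(rD^{-1}L)^N\psi$ remainders for which the stated schematic form has no slot, since the angular sum only runs up to $(rD^{-1}L)^{N-2}$).

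The paper sidesteps all of this by splitting the step: first multiply the level-$N$ identity by the \emph{function} $rD^{-1}$ (no commutators), rewrite $rD^{-1}\underline{L}L(\cdot)=\underline{L}\bigl(rD^{-1}L(\cdot)\bigr)-\underline{L}(rD^{-1})L(\cdot)$ and $L(rD^{-1}L)^N\psi=Dr^{-1}(rD^{-1}L)^{N+1}\psi$, and only then apply $L$, using $[L,\underline{L}]=0$ and $[L,\slashed{\Delta}_{\s^2}]=0$; the only commutator terms are then $L$ hitting radial coefficient functions, which are genuinely lower order, and the coefficients assemble transparently as $-2-2N$ and $-\tfrac{1}{2}-\tfrac{N}{2}$. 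Your route can be repaired, since applying $rD^{-1}L$ differs from the paper's $L\circ(rD^{-1}\cdot)$ exactly by $L(rD^{-1})$ times the level-$N$ identity, which vanishes; but as written the proposal misses the required second use of the induction hypothesis and the attendant cancellations, so it does not yet yield \eqref{eq:waveqcommrL} with the stated coefficients.
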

\begin{proof}
We will prove the identity via induction. Note first of all that the $N=0$ case follows from \eqref{eq:waveeq2a}. Now suppose \eqref{eq:waveqcommrL} holds. Then we can multiply both sides of the equation by $rD^{-1}$ and rearrange some terms to obtain:
\begin{equation*}
\begin{split}
 0=&\:-4\underline{L} (rD^{-1}L)^{N+1}\psi+r^{-1}[\slashed{\Delta}_{\s^2}-\alpha+O_{\infty}(r^{-1})](rD^{-1}L)^N\psi-2(N+1)[r^{-1}+O_{\infty}(r^{-2})](rD^{-1}L)^{N+1}\psi\\
 &-\frac{N}{2}r^{-1}\slashed{\Delta}_{\s^2}(rD^{-1}L)^{N-1}\psi+N\sum_{n=0}^N O_{\infty}(r^{-1})(rD^{-1}L)^n\psi+N(N-1)\sum_{n=0}^{N-2} O_{\infty}(r^{-1})\slashed{\Delta}_{\s^2}(rD^{-1}L)^n\psi.
 \end{split}
\end{equation*}
Then, act with $L$ on both sides to obtain:
\begin{equation*}
\begin{split}
 0=&\:-4\underline{L} L(rD^{-1}L)^{N+1}\psi+Dr^{-2}[\slashed{\Delta}_{\s^2}-\alpha+O_{\infty}(r^{-1})](rD^{-1}L)^{N+1}\psi-[2(N+1)r^{-1}+O_{\infty}(r^{-2})](rD^{-1}L)^{N+1}\psi\\
 &-\frac{N+1}{2}Dr^{-2}\slashed{\Delta}_{\s^2}(rD^{-1}L)^{N}\psi+\sum_{n=0}^{N+1} O_{\infty}(r^{-2})(rD^{-1}L)^n\psi+N\sum_{n=0}^{N-1} O_{\infty}(r^{-2})\slashed{\Delta}_{\s^2}(rD^{-1}L)^n\psi.
 \end{split}
\end{equation*}
Hence, \eqref{eq:waveqcommrL} holds also with $N$ replaced by $N+1$ and the induction argument is complete.
\end{proof}

\begin{proposition}[Higher-order $r$-weighted energy estimates]
\label{prop:horp}
Let $N\in \N_0$ and let $\phi\in C^{\infty}(\mathcal{R})\cap C^1 (\widehat{\mathcal{R}})$ be a solution to \eqref{eq:waveeq}. Let $\max\{1-\beta_0,0\}<p<\min\{1+\beta_0,2\}$. Then, for $R>3M$ suitably large, there exists a constant $C=C(M,h,N,R,\alpha,p)>0$ such that
\begin{equation}
\label{eq:horp}
\begin{split}
\sum_{0\leq n_1+n_2\leq N}&\int_{\Sigma_{\tau_2}\cap\{r\geq R\}} r^{p} (L(rL)^{n_1}T^{n_2}\psi)^2\,d\sigma dr\\
&+\int_{\tau_1}^{\tau_2} \int_{\Sigma_{\tau}\cap\{r\geq R\}} r^{p-1} (L(rL)^{n_1}T^{n_2}\psi)^2+r^{p-3} |\snabla_{\s^2}(rL)^{n_1}T^{n_2}\psi|^2+r^{p-3} ((rL)^{n_1}T^{n_2}\psi)^2\,d\sigma dr d\tau\\
\leq & C \sum_{0\leq n_1+n_2\leq N}\int_{\Sigma_{\tau_1}\cap\{r\geq R\}} r^{p} (L(rL)^{n_1}T^{n_2}\psi)^2\,d\sigma dr+C\sum_{n\leq N}\int_{\Sigma_{\tau_1}\cap\{r\geq R\}}r(XT^n\psi)^2\,d\sigma dr\\
&+C\sum_{0\leq n\leq N}E_T[T^n\phi](\tau_1).
\end{split}
\end{equation}
\end{proposition}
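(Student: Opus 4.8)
\textbf{Proof strategy for Proposition \ref{prop:horp}.}

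The plan is to prove \eqref{eq:horp} by a double induction: an outer induction on the total number of commutations $N$ and, for each fixed $N$, an induction that peels off one $T$-commutation at a time so that the problem is reduced to establishing the estimate for the pure $(rL)^{n_1}$-commuted quantities. The base case $N=0$ is essentially the $r$-weighted estimate that already appears implicitly in the proof of Proposition \ref{prop:intestaneg} (see \eqref{eq:rpid} and \eqref{eq:rweightest1}) together with the high-angular-frequency estimates of Propositions \ref{prop:aposlbound}--\ref{prop:aposllarge}; the only new point is to record it in the stated form, with the spacetime bulk term $r^{p-3}(|\snabla_{\s^2}\psi|^2+\psi^2)$ on the left, which for $p<2$ comes with the favourable sign after the Hardy inequality \eqref{eq:hardy}, exactly as in \eqref{eq:auxestal0}. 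For $N\geq 1$, the key structural input is the commuted equation \eqref{eq:waveqcommrL}: it tells us that $(rD^{-1}L)^N\psi$ solves the $N=0$ wave equation up to lower-order error terms that are either (a) of the form $O_\infty(r^{-2})$ times $(rD^{-1}L)^n\psi$ or $\slashed{\Delta}_{\s^2}(rD^{-1}L)^n\psi$ with $n<N$, or (b) the ``borderline'' term $-\tfrac{N}{2}Dr^{-2}\slashed{\Delta}_{\s^2}(rD^{-1}L)^{N-1}\psi$, or (c) the first-order transport term $-2N[r^{-1}+O_\infty(r^{-2})]L(rD^{-1}L)^N\psi$.

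Concretely, I would multiply \eqref{eq:waveqcommrL} by $-r^p L(rD^{-1}L)^N\psi$ and integrate over $\bigcup_{\tau\in[\tau_1,\tau_2]}\Sigma_\tau\cap\{r\geq R\}$, mimicking the computation in \eqref{eq:rpid}. The main term reproduces the good boundary term $\int_{\Sigma_{\tau_2}} r^p(L(rL)^N\psi)^2$ and the good bulk terms $\int r^{p-1}(L(rL)^N\psi)^2 + \int r^{p-3}(|\snabla_{\s^2}(rL)^N\psi|^2 + ((rL)^N\psi)^2)$, with the sign of the last coming from integrating by parts the $\slashed\Delta_{\s^2}$ contribution and from the Hardy inequality \eqref{eq:hardy} to handle the $\alpha r^{p-3}((rL)^N\psi)^2$ term (here the restriction $\max\{1-\beta_0,0\}<p<\min\{1+\beta_0,2\}$ is precisely what makes this absorption work, by the same arithmetic as in Proposition \ref{prop:intestaneg}: one needs $\tfrac{|\alpha|}{2-p}<\tfrac14 p$ when $\alpha<0$, and $p<2$ for the $\snabla_{\s^2}$ term). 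The transport error (c) is handled by Young's inequality: $r^p \cdot r^{-1}L(rL)^N\psi \cdot L(rL)^N\psi$ is bounded by $\epsilon r^{p-1}(L(rL)^N\psi)^2$ plus a controlled remainder, so it can be absorbed into the good bulk term after choosing $\epsilon$ small — this is why the coefficient structure $r^{-1}+O_\infty(r^{-2})$, rather than something growing, matters. The lower-order errors (a) and (b) are multiplied by $r^p L(rL)^N\psi$ and estimated, again by Young, in terms of $\epsilon r^{p-1}(L(rL)^N\psi)^2$ plus terms of the form $r^{p-1}\cdot r^{-2}\cdot |\snabla_{\s^2}^{\leq 2}(rL)^n\psi|^2$ with $n\leq N-1$, i.e. $r^{p-3}|\snabla_{\s^2}^{\leq 2}(rL)^n\psi|^2$; crucially these already appear, integrated in spacetime, on the \emph{left-hand side} of the estimate \eqref{eq:horp} at induction level $N-1$ (after one also commutes with $\snabla_{\s^2}$, using that \eqref{eq:waveqcommrL} commutes with the angular Laplacian up to the same schematic error structure, and that $\snabla_{\s^2}$ commutes with $L,\underline L,X$). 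Hence they are absorbed by the inductive hypothesis.

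To bring in the commutations with $T$: since $T$ commutes with $\square_{g_M}-V_\alpha$ (Schwarzschild stationarity), each $T^{n_2}\psi$ again solves \eqref{eq:waveeq}, so \eqref{eq:waveqcommrL} applies verbatim to $(rD^{-1}L)^{n_1}T^{n_2}\psi$; running the above argument with $T^{n_2}\psi$ in place of $\psi$ and summing over $n_1+n_2\leq N$ yields the full statement, the boundary and initial-data terms on the right being exactly the ones listed, with $E_T[T^n\phi](\tau_1)$ absorbing the contribution of the region $r\leq R$ through the degenerate integrated energy estimate \eqref{eq:fullied} (or \eqref{eq:fulliednondeg}) and the flux terms at $r=R$ controlled by an averaging/pigeonhole argument in $R$ as in the proof of Corollary \ref{cor:degiled}. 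Finally, one converts back and forth between $(rD^{-1}L)^n$ and $(rL)^n$ using that $D^{-1}=1+O_\infty(r^{-1})$, which only changes lower-order coefficients. \textbf{The main obstacle} I anticipate is bookkeeping the error hierarchy so that it genuinely closes: one must verify that the $O_\infty(r^{-2})\slashed\Delta_{\s^2}(rD^{-1}L)^n\psi$ errors with $n\leq N-2$ and the borderline $r^{-2}\slashed\Delta_{\s^2}(rD^{-1}L)^{N-1}\psi$ term, once multiplied by the multiplier and Young-split, land in spacetime norms that are \emph{strictly weaker} (in $r$-weight or derivative count) than the quantities being estimated at the current induction step — otherwise the induction is circular. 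The $r^{-2}$ gain in the potential-type errors versus the $r^{p-3}=r^{p-1}\cdot r^{-2}$ weight of the good bulk term is exactly the margin that makes this work, provided one is careful that commuting with $rL$ does not consume this margin; this is the delicate point and the reason the proposition is stated with the bulk terms $r^{p-3}|\snabla_{\s^2}(rL)^{n_1}T^{n_2}\psi|^2$ and $r^{p-3}((rL)^{n_1}T^{n_2}\psi)^2$ explicitly on the left at every level.
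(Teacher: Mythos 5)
Your overall strategy coincides with the paper's: commute with $rD^{-1}L$ and $T$ via \eqref{eq:waveqcommrL}, multiply by $-r^pL(rD^{-1}L)^N\psi$ near and beyond $r=R$, integrate, use the Hardy inequality \eqref{eq:hardy} under the stated restriction on $p$, and close by induction in $N$, with $T$-commutations handled through stationarity. One justification, however, is wrong as written though repairable: the transport term $-2N[r^{-1}+O_{\infty}(r^{-2})]L(rD^{-1}L)^N\psi$ cannot be ``absorbed after choosing $\epsilon$ small'' --- paired with the multiplier it produces exactly $2Nr^{p-1}(L(rD^{-1}L)^N\psi)^2$, of the same order as the good bulk term but with coefficient $2N$, which is not small relative to $p<2$. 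It is harmless only because its sign is favourable: in the paper's identity \eqref{eq:horpidv2} it \emph{adds} to the bulk coefficient, giving $p+2N+O_{\infty}(r^{-1})$, a fact your argument never invokes. (Similarly, the near-$r=R$ errors involve derivatives of order up to $N+1$, so the degenerate estimate \eqref{eq:fullied} alone does not suffice; the paper uses the elliptic-upgraded interior estimate \eqref{eq:hoiled}, which is what lets the right-hand side contain only $E_T[T^n\phi]$ and $r(XT^n\psi)^2$.)

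The genuine gap is your treatment of the borderline term $-\tfrac{N}{2}Dr^{-2}\slashed{\Delta}_{\s^2}(rD^{-1}L)^{N-1}\psi$, which you yourself flag as the delicate point. A direct Young split against $r^pL(rD^{-1}L)^N\psi$ leaves the spacetime integral of $r^{p-3}(\slashed{\Delta}_{\s^2}(rD^{-1}L)^{N-1}\psi)^2$, i.e.\ \emph{two} angular derivatives of the level-$(N-1)$ quantity, whereas the induction hypothesis \eqref{eq:horp} at level $N-1$ only controls $r^{p-3}|\snabla_{\s^2}(rL)^{n_1}T^{n_2}\psi|^2$, one angular derivative; so the induction does not close. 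Your proposed remedy --- commuting the whole hierarchy with $\snabla_{\s^2}$ --- would place angular-commuted norms on the right-hand side and hence prove a statement strictly weaker than Proposition \ref{prop:horp}, whose right-hand side contains no angular commutations. The paper instead exploits the structure of this specific term: integrating by parts on the sphere and in the $L$-direction (using $L\snabla_{\s^2}(rD^{-1}L)^{N-1}\psi=Dr^{-1}\snabla_{\s^2}(rD^{-1}L)^{N}\psi$), it rewrites the term as a total $L$-derivative of $\tfrac{N}{4}D\chi r^{p-2}\snabla_{\s^2}(rD^{-1}L)^{N-1}\psi\cdot\snabla_{\s^2}(rD^{-1}L)^{N}\psi$ plus the \emph{good-signed} bulk $\tfrac{N}{4}D^2\chi r^{p-3}|\snabla_{\s^2}(rD^{-1}L)^{N}\psi|^2$ plus an error $O_{\infty}(r^{p-3})\snabla_{\s^2}(rD^{-1}L)^{N-1}\psi\cdot\snabla_{\s^2}(rD^{-1}L)^{N}\psi$ carrying only one angular derivative on each factor, which Young plus the induction hypothesis do control. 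Relatedly, the place where the $T$-commuted lower-level estimate is actually needed is the Hardy step, via $r^{p-1}h^2(T(rD^{-1}L)^{N}\psi)^2\lesssim r^{p-3}(L(rD^{-1}L)^{N-1}T\psi)^2$; your sketch does not identify this, though it is consistent with summing over $n_1+n_2\leq N$.
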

\begin{proof}
Let $\chi: [M,\infty)\to \R_{\geq 0}$ be a smooth cut-off function, such that $\chi(r)=1$ for $r\geq R$ and $\chi(r)=0$ for $r\leq 3M+\frac{R-3M}{2}$.

We multiply both sides of \eqref{eq:waveqcommrL} with $-\chi r^p L(rD^{-1}L)^{N}\psi$ and apply the Leibniz rule to obtain:
\begin{equation}
\begin{split}
\label{eq:horpid}
0=&\:2\chi r^p \underline{L} ((L(rD^{-1}L)^N\psi)^2)-D\chi r^{p-2}[\slashed{\Delta}_{\s^2}-\alpha+O_{\infty}(r^{-1})](rD^{-1}L)^N\psi L(rD^{-1}L)^{N}\psi\\
&+2N[r^{p-1}+O_{\infty}(r^{p-2})](L(rD^{-1}L)^N\psi)^2+\frac{N}{2}D\chi r^{p-2}\slashed{\Delta}_{\s^2}(rD^{-1}L)^{N-1}\psi L(rD^{-1}L)^{N}\psi\\
&+N\sum_{n=0}^N \chi O_{\infty}(r^{p-2})(rD^{-1}L)^n\psi L(rD^{-1}L)^{N}\psi+N(N-1)\sum_{n=0}^{N-2} \chi O_{\infty}(r^{p-2})\slashed{\Delta}_{\s^2}(rD^{-1}L)^n\psi L(rD^{-1}L)^{N}\psi\\
=&\:\underline{L} \left[2\chi r^p (L(rD^{-1}L)^N\psi)^2\right]+L\left[\frac{1}{2}D \chi r^{p-2}\left(|\snabla_{\s^2}(rD^{-1}L)^{N}\psi|^2+(\alpha+O_{\infty}(r^{-1}))((rD^{-1}L)^{N}\psi)^2\right) \right]\\
&+ \chi r^{p-1}\left[p+2N+O_{\infty}(r^{-1})\right](L(rD^{-1}L)^N\psi)^2+\frac{1}{4}(2-p)\chi  r^{p-3}(1+O_{\infty}(r^{-1})) |\snabla_{\s^2}(rD^{-1}L)^{N}\psi|^2\\
&+\frac{1}{4}(2-p)\chi  r^{p-3}(\alpha+O_{\infty}(r^{-1}))((rD^{-1}L)^{N}\psi)^2++\frac{N}{2}D\chi r^{p-2}\slashed{\Delta}_{\s^2}(rD^{-1}L)^{N-1}\psi L(rD^{-1}L)^{N}\psi\\
&+N\sum_{n=0}^N \chi O_{\infty}(r^{p-2})(rD^{-1}L)^n\psi L(rD^{-1}L)^{N}\psi+N(N-1)\sum_{n=0}^{N-2} \chi O_{\infty}(r^{p-2})\slashed{\Delta}_{\s^2}(rD^{-1}L)^n\psi L(rD^{-1}L)^{N}\psi\\
&+\textnormal{Err}_{\chi,N}+\textnormal{div}_{\s^2}(\ldots).
\end{split}
\end{equation}
where $\textnormal{Err}_{\chi,N+1}$ denotes all terms multiplied by derivatives of $\chi$ and squares of derivatives of $\phi$ up to order $N+1$. We can further rewrite \eqref{eq:horpid} as follows:

\begin{equation}
\label{eq:horpidv2}
\begin{split}
0=&\:  DX\left[\frac{1}{4} \chi r^{p-2}\left(|\snabla_{\s^2}(rD^{-1}L)^{N}\psi|^2+(\alpha+O_{\infty}(r^{-1}))((rD^{-1}L)^{N}\psi)^2\right) -\chi r^p (L(rD^{-1}L)^N\psi)^2\right]\\
&+DT\left[\chi \tilde{h} r^p (L(rD^{-1}L)^N\psi)^2+\frac{h}{4D} \chi r^{p-2}\left(|\snabla_{\s^2}(rD^{-1}L)^{N}\psi|^2+(\alpha+O_{\infty}(r^{-1}))((rD^{-1}L)^{N}\psi)^2\right) \right]\\
&+ \chi Dr^{p-1}\left[p+2N+O_{\infty}(r^{-1})\right](L(rD^{-1}L)^N\psi)^2+\frac{1}{4}(2-p)\chi  r^{p-3}(1+O_{\infty}(r^{-1})) |\snabla_{\s^2}(rD^{-1}L)^{N}\psi|^2\\
&+\frac{1}{4}(2-p)D\chi  r^{p-3}(\alpha+O_{\infty}(r^{-1}))((rD^{-1}L)^{N}\psi)^2+\frac{N}{2}D\chi r^{p-2}\slashed{\Delta}_{\s^2}(rD^{-1}L)^{N-1}\psi L(rD^{-1}L)^{N}\psi\\
&+ND\sum_{n=0}^N \chi O_{\infty}(r^{p-2})(rD^{-1}L)^n\psi L(rD^{-1}L)^{N}\psi+N(N-1)D\sum_{n=0}^{N-2} \chi O_{\infty}(r^{p-2})\slashed{\Delta}_{\s^2}(rD^{-1}L)^n\psi L(rD^{-1}L)^{N}\psi\\
&+\textnormal{Err}_{\chi,N+1}+\textnormal{div}_{\s^2}(\ldots).
\end{split}
\end{equation}

We will prove \eqref{eq:horp} by induction. We first show that \eqref{eq:horp} holds for $N=0$. We integrate \eqref{eq:horpidv2} with $N=0$. We apply \eqref{eq:hardy} to estimate
\begin{equation*}
\begin{split}
\int_{\tau_1}^{\tau_2}\int_{\Sigma_{\tau}}& \frac{1}{4}(2-p)D\chi  r^{p-3}(\max\{-\alpha,0\}+O_{\infty}(r^{-1}))((rD^{-1}L)^{N}\psi)^2 \,d\sigma dr d\tau\\
\leq &\: \int_{\tau_1}^{\tau_2}\int_{\Sigma_{\tau}}\chi  r^{p-3}((2-p)^{-1}\max\{-\alpha,0\}+C r^{-1})r^{p-1}(X(rD^{-1}L)^{N}\psi)^2 \,d\sigma dr d\tau\\
+& \int_{\tau_1}^{\tau_2}\int_{\Sigma_{\tau}}\textnormal{Err}_{\chi,N}\,d\sigma dr d\tau\\
\leq &\: \int_{\tau_1}^{\tau_2}\int_{\Sigma_{\tau}}4\chi  ((2-p)^{-1}\max\{-\alpha,0\}+C r^{-1})r^{p-1}(L(rD^{-1}L)^{N}\psi)^2+C r^{p-1}h^2 (T (rD^{-1}L)^{N}\psi)^2\,d\sigma dr d\tau\\
&+ \int_{\tau_1}^{\tau_2}\int_{\Sigma_{\tau}}\textnormal{Err}_{\chi,N}\,d\sigma dr d\tau.
\end{split}
\end{equation*}
We can absorb the first term on the very right-hand side above into the spacetime integral of $\chi Dp r^{p-1} (L(rD^{-1}L)^{N}\psi)^2$ provided that $R>3M$ is chosen suitably large and moreover, when $\alpha<0$, $1-\beta_0<p<1+\beta_0$. The second term on the very right-hand side, which is only present in the $M>0$ case, can be estimated using \eqref{eq:fullied} when $N=0$. We moreover use \eqref{eq:hoiled} to estimate $\textnormal{Err}_{\chi,N+1}$, so that we are left with  \eqref{eq:horp} for $N=0$.

Now suppose we have already established \eqref{eq:horp} for $N$ replaced with $N-1$. We will show that \eqref{eq:horp} holds also for $N$. We integrate \eqref{eq:horpidv2} again and apply \eqref{eq:hardy} as above. In this case, however, we estimate the integral of $r^{p-1}h^2 (T (rD^{-1}L)^{N}\psi)^2$ by observing that
\begin{equation*}
r^{p-1}h^2 (T (rD^{-1}L)^{N}\psi)^2\leq C r^{p-3}(L (rD^{-1})^{N-1}T\psi)^2
\end{equation*}
and applying the induction hypothesis with $\psi$ replaced by $T\psi$, using that $[\square_g,T]=0$ by the Killing property of $T$. We moreover estimate
\begin{equation*}
\begin{split}
ND&\sum_{n=0}^N \chi O_{\infty}(r^{p-2})(rD^{-1}L)^n\psi L(rD^{-1}L)^{N}\psi+N(N-1)D\sum_{n=0}^{N-2} \chi O_{\infty}(r^{p-2})\slashed{\Delta}_{\s^2}(rD^{-1}L)^n\psi L(rD^{-1}L)^{N}\psi\\
\leq&\: \epsilon \chi r^{p-1}(L(rD^{-1}L)^{N}\psi)^2 +\epsilon \chi r^{p-3}|\snabla_{\s^2}(rD^{-1}L)^{N}\psi|^2+C\chi  \epsilon^{-1} \sum_{n=0}^N r^{p-3}((rD^{-1}L)^n\psi )^2\\
&+C \chi \sum_{n=0}^{N-2} \chi O_{\infty}(r^{p-1})|\snabla_{\s^2}(rD^{-1}L)^n\psi|^2+\textnormal{div}_{\s^2}(\ldots).
\end{split}
\end{equation*}
We can absorb the terms with a factor $\epsilon$ into $\chi D(p+2N) r^{p-1} (L(rD^{-1}L)^{N}\psi)^2$ and we apply the induction hypothesis together with \eqref{eq:hardy} to absorb also the terms in the sums above.

We are left with estimating the spacetime integral of $\frac{N}{2}D\chi r^{p-2}\slashed{\Delta}_{\s^2}(rD^{-1}L)^{N-1}\psi L(rD^{-1}L)^{N}\psi$. We first apply the Leibniz rule to obtain:
\begin{equation*}
\begin{split}
\frac{N}{2}&D\chi r^{p-2}\slashed{\Delta}_{\s^2}(rD^{-1}L)^{N-1}\psi L(rD^{-1}L)^{N}\psi=-L\left[\frac{N}{4}D\chi r^{p-2}\snabla_{\s^2}(rD^{-1}L)^{N-1}\psi\cdot \snabla_{\s^2}(rD^{-1}L)^{N}\right]\\
&+\frac{N}{4}D^2\chi r^{p-3}|\snabla_{\s^2}(rD^{-1}L)^{N}\psi|^2+O_{\infty}(r^{p-3})\snabla_{\s^2}(rD^{-1}L)^{N-1}\psi\cdot \snabla_{\s^2}(rD^{-1}L)^{N}\psi.
\end{split}
\end{equation*}
Note that the second term on the right-hand side has a good sign, whereas the third term can be estimated by applying Young's inequality together with the induction hypothesis.

The boundary integral terms along $\Sigma_{\tau_2}$ that are generated in the above estimates either have a good sign or can be estimated by using the induction hypothesis.
\end{proof}

\section{Decay-in-time estimates for time derivatives}
\label{sec:decaytimeder}
The goal of this section is to apply Proposition \ref{prop:horp} in order to obtain a hierarchy of integrated $r$-weighted energy estimates for time derivatives $T^k\phi$, with $k\in \N_0$, and convert this hierarchy into energy decay via a suitable use of the mean value theorem. We will show that larger values of $k$ correspond to longer hierarchies and faster decay in time.

In the lemma below we show that we can express energies involving $T$ derivatives in terms of energies involving $L$ and $\snabla_{\s^2}$ derivatives
\begin{lemma}
Let $R>3M$. For all $p\in \R$, we have that there exists a constant $C=C(R)>0$, such that in $\{r\geq R\}$:
\begin{equation}
\label{eq:convTintorX}
\begin{split}
\sum_{n_1+n_2\leq N} &\int_{\Sigma\cap\{r\geq R\}} r^{p+1}|\snabla_{\s^2}^{n_1} L T (rL)^{n_2}\psi|^2\,d\sigma dr\\
\leq  &\: C\sum_{\substack{ n_1+n_2\leq N+1}}\int_{\Sigma\cap\{r\geq R\}} r^{p-1}|\snabla_{\s^2}^{n_1} L (rL)^{n_2}\psi|^2+ r^{p-3}|\snabla_{\s^2}^{n_1+1} (rL)^{n_2}\psi|^2\,d\sigma dr.
\end{split}
\end{equation}
\end{lemma}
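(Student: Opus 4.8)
The idea is to use the wave equation \eqref{eq:waveeq2a} (and its commuted versions \eqref{eq:waveqcommrL}) to trade a $T$-derivative of $\psi$ for derivatives that are ``purely'' of $L$- or $\slashed{\nabla}_{\s^2}$-type, at the cost of a gain of two powers of $r$ in the pointwise (on $\Sigma$) bound. First I would recall $T = L + \underline{L}$, so that $LT\psi = L^2\psi + L\underline{L}\psi$, and then invoke \eqref{eq:waveeq2a} to rewrite $-4\underline{L}L\psi = -Dr^{-2}\slashed{\Delta}_{\s^2}\psi + D(V_\alpha + D'r^{-1})\psi$, i.e.
\begin{equation*}
L\underline{L}\psi = \tfrac14 Dr^{-2}\slashed{\Delta}_{\s^2}\psi - \tfrac14 D(V_\alpha + D'r^{-1})\psi .
\end{equation*}
Hence $LT\psi = L(rL)\psi\cdot r^{-1} - L(rL)^0\psi\cdot(\text{lower order}) + \tfrac14 Dr^{-2}\slashed{\Delta}_{\s^2}\psi + O_\infty(r^{-2})\psi$, and since $L^2\psi = r^{-1}L(rL)\psi - r^{-1}(Lr)\,L\psi \cdot(\dots)$ — more precisely $rL\cdot(rL)^{n_2}\psi$-type terms rearrange so that $L(rL)^{n_2}\psi = r^{-1}(rL)^{n_2+1}\psi + (\text{lower-order in the hierarchy})$ up to the $O_\infty(r^{-1})$ corrections coming from $Lr = D/2 = 1 + O_\infty(r^{-1})$. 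Squaring the resulting identity for $LT\psi$ and multiplying by $r^{p+1}$ then produces, on the right, terms of the form $r^{p-1}|L(rL)^{n_2}\psi|^2$ (from the $L^2\psi$ piece), $r^{p-3}|\slashed{\Delta}_{\s^2}\psi|^2$ (from the $L\underline{L}\psi$ piece, using $r^{p+1}\cdot r^{-4} = r^{p-3}$), and genuinely lower-order terms $r^{p+1-4}|\psi|^2 \leq r^{p-3}|\psi|^2$ absorbed by Hardy or trivially. Using $|\slashed{\Delta}_{\s^2}\psi|^2 \lesssim |\slashed{\nabla}_{\s^2}^2\psi|^2$ (a consequence of \eqref{eq:sphere4}) converts the $\slashed{\Delta}_{\s^2}$-term into a $\slashed{\nabla}_{\s^2}^{n_1+1}(rL)^{n_2}\psi$-type term on the right-hand side of \eqref{eq:convTintorX}.

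The second step is to commute with $\slashed{\nabla}_{\s^2}^{n_1}$ and with $(rL)^{n_2}$ and iterate. Since $[\slashed{\nabla}_{\s^2},L]$, $[\slashed{\nabla}_{\s^2},\underline{L}]$, $[\slashed{\Delta}_{\s^2},\slashed{\nabla}_{\s^2}]$ all act tangentially on $\s^2$ with smooth coefficients (and $\slashed{\Delta}_{\s^2}$ commutes with the $\Omega_i$, cf.\ \S\ref{sec:sphere}), applying angular derivatives to the identity for $LT\psi$ produces only terms of the same schematic form with $n_1$ raised. Commuting with $rL$ is governed exactly by the structure already displayed in \eqref{eq:waveqcommrL}: each application of $rD^{-1}L$ generates the principal term plus $O_\infty(r^{-1})$-weighted lower-order terms in the hierarchy $(rD^{-1}L)^n\psi$, $n \leq N$, together with one $\slashed{\Delta}_{\s^2}$-commutator term $\tfrac{N}{2}Dr^{-2}\slashed{\Delta}_{\s^2}(rD^{-1}L)^{N-1}\psi$. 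Because $D^{-1} = 1 + O_\infty(r^{-1})$ on $\{r\geq R\}$, the weights $(rD^{-1}L)^n$ and $(rL)^n$ are interchangeable in the estimate up to constants depending on $R$, so every such term is of the form allowed on the right of \eqref{eq:convTintorX}. A clean way to organize this is a downward induction on $N$: assuming the estimate for all multi-indices with $n_1+n_2 \leq N-1$, establish it for $n_1 + n_2 = N$ by applying the $LT\psi$-identity and absorbing the lower-order hierarchy terms via the inductive hypothesis (together with \eqref{eq:hardy} to handle the $r^{p-3}|(rL)^n\psi|^2$-type terms that appear with zeroth-order weights, if one wishes to phrase the final RHS purely in terms of first-order-in-$L$ quantities — though in fact \eqref{eq:convTintorX} already allows $r^{p-3}|\slashed{\nabla}_{\s^2}^{n_1+1}(rL)^{n_2}\psi|^2$ on the right, which covers these).

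The main obstacle — really the only subtlety — is bookkeeping: keeping precise track of which $O_\infty(r^{-1})$ factors appear where, so that after multiplying by $r^{p+1}$ and squaring, every error term genuinely lands at weight $\leq r^{p-1}$ on $|L(\cdots)|^2$-terms or $\leq r^{p-3}$ on $|\slashed{\nabla}_{\s^2}(\cdots)|^2$-terms and $|\psi|^2$-terms, rather than leaking a half-power. This is handled by noting that $D = 1 + O_\infty(r^{-1})$, $V_\alpha = \alpha r^{-2} + O_\infty(r^{-3})$, $D'r^{-1} = O_\infty(r^{-3})$, so the ``dangerous'' coefficient in front of $\psi$ in $L\underline{L}\psi$ is $O(r^{-2})$, giving $r^{p+1}\cdot r^{-4}|\psi|^2 = r^{p-3}|\psi|^2$, which is acceptable; and that the top-order term $r^{-1}(rL)\psi$ in $L\psi$ contributes $r^{p+1}\cdot r^{-2}|L(rL)^{n_2}\psi|^2 \cdot (\text{really } r^{p-1})$ after accounting for the extra $r^{-1}$ already built into writing $L(rL)^{n_2+1}\psi = r^{-1}(rL)^{n_2+1}\psi + \dots$. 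No energy estimates or integration by parts are needed here — \eqref{eq:convTintorX} is a pointwise (fibrewise in $r$, then integrated over $\Sigma\cap\{r\geq R\}$) algebraic inequality — so the proof is short once the algebra is set up, and Young's inequality suffices for the cross terms generated by squaring.
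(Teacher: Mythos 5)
Your proposal follows essentially the same route as the paper's proof: write $T=L+\underline{L}$, use the $rL$-commuted wave equation \eqref{eq:waveqcommrL} (equivalently \eqref{eq:waveeq2a} together with its commutations) to replace $\underline{L}L(rL)^{n_2}\psi$ by $O_{\infty}(r^{-2})$-weighted $\slashed{\Delta}_{\s^2}$- and lower-order terms, and convert $\slashed{\Delta}_{\s^2}$ into $\snabla_{\s^2}^2$ via \eqref{eq:sphere4}, all pointwise with Young's inequality and no integration by parts. The only looseness — the zeroth-order $r^{p-3}\psi^2$-type contribution from the potential, which you dispatch with ``Hardy or trivially'' — is shared with the paper itself, which likewise glosses over it and in practice carries such terms along as extra energy contributions when the lemma is applied (cf.\ \eqref{eq:Tderrweight}).
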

\begin{proof}
We use that $T=L+\underline{L}$ to estimate in $\{r\geq R\}$:
\begin{equation*}
\begin{split}
 r^{p+1}|\snabla_{\s^2}^{n_1} L T (rD^{-1}L)^{n_2}\psi|^2\leq &\: Cr^{p-1}|\snabla_{\s^2}^{n_1}  (r D^{-1}L)^{n_2+1}\psi|^2+Cr^{p-1}|\snabla_{\s^2}^{n_1}  (rD^{-1}L)^{n_2}\psi|^2\\
 &+ r^{p+1}|\snabla_{\s^2}^{n_1} \underline{L} L (rD^{-1}L)^{n_2}\psi|^2.
 \end{split}
\end{equation*}
Then we apply \eqref{eq:waveqcommrL} to estimate further:
\begin{equation*}
 r^{p+1}|\snabla_{\s^2}^{n_1} \underline{L} L (rD^{-1}L)^{n_2}\psi|^2\leq C \sum_{k=0}^{n_2} r^{p-3}|\snabla_{\s^2}^{n_1}\slashed{\Delta}_{\s^2} (rD^{-1}L)^{k}\psi|^2+C \sum_{k=0}^{n_2+1} r^{p-3}|\snabla_{\s^2}^{n_1} (rD^{-1}L)^{k}\psi|^2.
\end{equation*}
The inequality \eqref{eq:convTintorX} then follows by combining the above with \eqref{eq:sphere4}.
\end{proof}

\begin{proposition}
\label{prop:edecaytimeder}
Let $N\in \N$ and let $\delta>0$ be arbitrarily small. Then for $R>3M$ suitably large, there exists a constant $C=~C(M,h,\alpha,N,R,\delta)~>~0$ such that for all $\tau\geq 0$:
\begin{equation}
\label{eq:mainedcay}
\begin{split}
E_{\mathbf{N}}[T^N\phi](\tau)\leq &\:C(1+\tau)^{-1-2N+\delta} \biggg[\sum_{\substack{0\leq n_1+n_2+n_3\leq 2N\\ n_1+n_2\leq N}}\int_{\Sigma_{0}\cap \{r\geq R\}} r|\snabla_{\s^2}^{n_1}L(rL)^{n_2}T^{n_3}\psi|^2\,d\sigma dr\\
&+ \sum_{\substack{0\leq n_1+n_2\leq 3N\\ n_1\leq N}}E_{\mathbf{N}}[\snabla_{\s^2}^{n_1}T^{n_2}\phi](0)\biggg].
\end{split}
\end{equation}
and moreover,
\begin{equation}
\label{eq:mainrwedcay}
\begin{split}
\int_{\Sigma_{\tau}\cap \{r\geq R\}} r^2(LT^N\psi)\,d\sigma dr\leq &\: C(1+\tau)^{1-2N+\delta}\biggg[\sum_{\substack{0\leq n_1+n_2+n_3\leq 2N-1\\ n_1+n_2\leq N}}\int_{\Sigma_{0}\cap \{r\geq R\}} r|\snabla_{\s^2}^{n_1}L(rL)^{n_2}T^{n_3}\psi|^2\,d\sigma dr\\
&+ \sum_{\substack{0\leq n_1+n_2\leq 3N-1\\ n_1\leq N}}E_{\mathbf{N}}[\snabla_{\s^2}^{n_1}T^{n_2}\phi](0)\biggg].
\end{split}
\end{equation}
\end{proposition}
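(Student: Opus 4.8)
The plan is to derive \eqref{eq:mainedcay} and \eqref{eq:mainrwedcay} by converting the hierarchy of $r$-weighted estimates from Proposition \ref{prop:horp} into decay in $\tau$ via the pigeonhole (dyadic mean-value) argument, and then feeding the decay back in with the help of \eqref{eq:convTintorX} to gain two powers of $\tau$ for each commutation with $T$. I would induct on $N$. For the base case $N=1$: start from \eqref{eq:horp} with $p=1$ (commuted with up to $N$-fold $T$ and $rL$), which controls $\int_{\tau_1}^{\tau_2}\int_{\Sigma_\tau\cap\{r\ge R\}} (L(rL)^{n_1}T^{n_2}\psi)^2 + r^{-2}(\cdots)\,d\sigma dr d\tau$ plus a boundary term at $\tau_2$ with weight $r^1$. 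Combined with the non-degenerate integrated energy estimate \eqref{eq:fulliednondeg}--\eqref{eq:hoiled} for the region $r\le R$ (applied to $T^{n}\phi$), this gives $\int_{\tau_1}^\infty E_{\mathbf N}[\ldots](\tau)\,d\tau \lesssim (\text{data})$, and energy boundedness \eqref{eq:nondegebound} then yields $E_{\mathbf N}[T\phi](\tau)\lesssim (1+\tau)^{-1}(\text{data})$ after a dyadic-sequence argument. To improve this to $(1+\tau)^{-3+\delta}$ one uses that $r^{p+1}(LT\psi)^2$ is controlled by lower-weighted quantities via \eqref{eq:convTintorX}, so effectively the $p=1$ estimate \emph{commuted once with $T$} behaves like a $p=3$ estimate for $\psi$ itself: but since $\beta_0$ may be small we cannot literally run the hierarchy up to $p=3$, so we instead interpolate — and that is the crux described below.

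The key step I expect to be the main obstacle is the \textbf{interpolation argument that turns one extra rung of the $r$-weighted hierarchy into \emph{two} extra powers of $\tau$-decay}. The naive pigeonhole principle gives ``decay rate $=1+$ range of $p$,'' and since $p$ only ranges over $(\max\{1-\beta_0,0\},\min\{1+\beta_0,2\})$, for small $\beta_0$ this by itself is useless. The trick (following the philosophy of \cite{volker1,moschidis1,paper1,aagkerr,aagprice}, but needing a genuinely new twist here) is: after commuting \eqref{eq:horp} with $T$ we control $\int_{\tau_1}^{\tau_2}\int_{\Sigma_\tau\cap\{r\ge R\}} r^{p+1}(LT\psi)^2\,d\sigma dr\,d\tau$ for $p\in(\max\{1-\beta_0,0\},\min\{1+\beta_0,2\})$ via \eqref{eq:convTintorX}, i.e. we have a $p\in(\text{up to }3)$-type flux for $T\psi$. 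But the price is that for $T\psi$ we only have \emph{one} rung at $p+2$ rather than the full ladder from $p=1$ to $p=3$. So we cannot directly pigeonhole between $p=1$ and $p=3$. Instead, I would: (i) run pigeonhole with the $p=1$ flux for $T\psi$ to get $E_{\mathbf N}[T\psi](\tau)\lesssim (1+\tau)^{-1}$; (ii) separately bound the $r^2$-weighted energy $\int_{\Sigma_\tau\cap\{r\ge R\}} r^{2}(LT\psi)^2$ — which is non-increasing-modulo-lower-order by the $p=2$(-ish) estimate — by a constant; (iii) interpolate between the $L^\infty_\tau$ bound on the $r^2$-energy and the integrability in $\tau$ of the $p=1$ flux using a weighted Grönwall/ODE argument applied to a combination like $\int_{\Sigma_\tau} r^{2}(LT\psi)^2 + \text{(boundary corrections)}$, obtaining that the $p=1$ flux itself decays, not just is integrable. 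Iterating this bootstrap, each additional application gains one power of $\tau$ beyond the previous, and since we have effectively two extra rungs of $r$-weight per $T$-commutation (from $p=1$ up to $p\approx 3$) we gain $2$ powers: $E_{\mathbf N}[T^N\phi]\lesssim(1+\tau)^{-1-2N+\delta}$. The $\delta$ loss is exactly the cushion absorbing the fact that $\min\{1+\beta_0,2\}$ may be $<2$, strictly speaking forcing us to use $p=2-\delta$ or to run the interpolation with a slightly degraded exponent; this is where care is needed to make the bookkeeping of powers close.

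Concretely, I would carry out the steps in this order. \textbf{Step 1:} Fix $R>3M$ large (so Propositions \ref{prop:horp}, \ref{prop:aposlbound}, \ref{prop:aposllarge}, and Corollary \ref{cor:nondegiled} all apply) and combine \eqref{eq:fulliednondeg}/\eqref{eq:hoiled} with \eqref{eq:horp} at $p=1$, commuted with $\snabla_{\s^2}$, $rL$ and $T$ up to the stated orders, to get the master spacetime estimate: $\int_{\tau_1}^{\tau_2}\big(E_{\mathbf N}[T^{n}\phi](\tau) + \int_{\Sigma_\tau\cap\{r\ge R\}} r^{-2}|\snabla_{\s^2}^{n_1}(rL)^{n_2}T^{n_3}\psi|^2 + (L(rL)^{n_1}T^{n_2}\psi)^2\big)d\tau + [\text{flux at }\tau_2] \lesssim [\text{data}]$ for the appropriate multi-index ranges, with the data norm built from $E_{\mathbf N}[\snabla_{\s^2}^{n_1}T^{n_2}\phi](0)$ and $\int_{\Sigma_0\cap\{r\ge R\}}r(\cdots)^2$ as in the right-hand sides of \eqref{eq:mainedcay}--\eqref{eq:mainrwedcay}. \textbf{Step 2:} Dyadic-sequence / mean value argument: on each dyadic interval $[\tau_k,\tau_{k+1}]$ with $\tau_{k+1}\sim 2\tau_k$, find a good slice where the integrand is $\lesssim$ (interval length)$^{-1}\times$(value at $\tau_k$); use energy boundedness \eqref{eq:nondegebound} to propagate from the good slice to all later $\tau$; this gives the $N=0$-level decay and, after using \eqref{eq:convTintorX} to trade the extra $T$ and $rL$ commutations for $r$-weight, the $(1+\tau)^{-1}$ and $(1+\tau)^{+1}$ bounds for $E_{\mathbf N}[T\phi]$ and $\int r^2(LT\psi)^2$. \textbf{Step 3:} The interpolation bootstrap described above, iterated $N$ times, raising the exponents by $2$ per step and tracking the $\delta$-loss; the induction hypothesis at level $N-1$ supplies the decay of all the lower-order $T$-energies that appear as error terms when commuting \eqref{eq:waveqcommrL} (those errors carry extra powers of $r^{-1}$, hence are harmless). \textbf{Step 4:} Read off \eqref{eq:mainedcay} from the $r\le R$ part (via \eqref{eq:hoiled} and \eqref{eq:nondegebound}) plus the $r\ge R$ flux decay, and read off \eqref{eq:mainrwedcay} as the $r^2$-weighted flux at the endpoint with one fewer power of $\tau$-decay (equivalently, two fewer rungs used). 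The main obstacle, again, is Step 3: making the interpolation quantitatively give exactly $-1-2N$ (not $-1-N$) while only losing $\delta$, which forces the precise choice of intermediate $r$-weights $p\in(\max\{1-\beta_0,0\},\min\{1+\beta_0,2\})$ and a careful Grönwall estimate on the mixed energy-flux functional.
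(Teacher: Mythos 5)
Your proposal follows essentially the same route as the paper: the $p=1$ $r$-weighted hierarchy of Proposition \ref{prop:horp} commuted with $T$, $rL$ and angular derivatives, the conversion estimate \eqref{eq:convTintorX} to trade each $T$-commutation for two effective powers of $r$, the dyadic pigeonhole combined with \eqref{eq:fulliednondeglossder}, \eqref{eq:hoiled} and \eqref{eq:nondegebound}, and an iterated interpolation yielding two extra powers of $\tau$ per $T$-commutation up to a $\delta$-loss, with induction on $N$. The one difference is at the crux you single out: the paper implements the interpolation not by a Gr\"onwall-type argument but by splitting each spatial integral at $r=R+\tau_i^{\gamma}$ (inner region bounded by $(1+\tau_i)^{\gamma}$ times the unweighted flux, outer region by $(1+\tau_i)^{-\gamma}$ times the $r^2$-weighted flux controlled through \eqref{eq:Tderrweightint}) and then iterating over $\gamma=2^{-k}$, so the gains per pass halve rather than adding a full power each time --- which is where the $\delta$-loss actually originates, rather than only from the restricted range $p<\min\{1+\beta_0,2\}$ as your bookkeeping suggests.
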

\begin{proof}
We first consider the case $N=0$. Let $\{\tau_i\}$ be a dyadic sequence with $\tau_i\to \infty$ as $i\to \infty$. Then we can apply the integrated energy estimate \eqref{eq:fulliednondeglossder} to conclude that
\begin{equation}
\label{eq:firstestrp}
\begin{split}
\int_{\Sigma_{\tau_{i+1}}\cap \{r\geq R\}}& r(L\psi)^2\,d\sigma dr+\int_{\tau_i}^{\tau_{i+1}} E_{\mathbf{N}}[\phi](\tau)\,d\tau\\
\leq&\: C \int_{\Sigma_{\tau_i}\cap \{r\geq R\}} r(L\psi)^2+r(LT\psi)^2\,d\sigma dr+CE_{\mathbf{N}}[\phi](\tau_i)+C E_T[T\phi](\tau_i).
\end{split}
\end{equation}

Hence, after applying the mean-value theorem, we conclude that there exists a sequence $\{\tau_i'\}$ with $\tau_i\leq ~\tau_i\leq \tau_{i+1}$ along which
\begin{equation*}
E_{\mathbf{N}}[\phi](\tau_i')\leq C (1+\tau_i')^{-1}\left[ \int_{\Sigma_{\tau_i}\cap \{r\geq R\}} r(L\psi)^2+ r(LT\psi)^2\,d\sigma dr+E_{\mathbf{N}}[\phi](\tau_i)+C E_T[T\phi](\tau_i)\right].
\end{equation*}
By applying \eqref{eq:nondegebound}, we then get that for all $\tau\geq 0$:
\begin{equation}
\label{eq:firstedecay}
E_{\mathbf{N}}[\phi](\tau)\leq C (1+\tau)^{-1}\left[ \int_{\Sigma_{0}\cap \{r\geq R\}} r(L\psi)^2+r(LT\psi)^2\,d\sigma dr+E_{\mathbf{N}}[\phi](0)+C E_T[T\phi](0)\right].
\end{equation}
Now let $N=1$ and replace $\psi$ with $T\psi$. We then proceed to estimate further: let $\{\tau_i\}$ be a dyadic sequence and $\gamma\geq 0$, then
\begin{equation}
\label{eq:gammasplit}
\begin{split}
\int_{\tau_i}^{\tau_{i+1}}\int_{\Sigma_{\tau}\cap \{r\geq R\}} r(LT\psi)^2\,d\sigma dr d\tau=&\:\int_{\tau_i}^{\tau_{i+1}}\int_{\Sigma_{\tau}\cap \{r\geq R\}\cap \{r\leq R+{\tau_i}^{\gamma}\}} r(LT\psi)^2\,d\sigma dr d\tau\\
&+\int_{\tau_i}^{\tau_{i+1}}\int_{\Sigma_{\tau}\cap \{r\geq R+{\tau_i}^{\gamma}\}} r(LT\psi)^2\,d\sigma dr d\tau\\
\leq &\:C(1+\tau_i)^{\gamma} \int_{\tau_i}^{\tau_{i+1}}\int_{\Sigma_{\tau}\cap \{r\geq R\}} (LT\psi)^2\,d\sigma dr d\tau\\
&+C(1+\tau_i)^{-\gamma} \int_{\tau_i}^{\tau_{i+1}}\int_{\Sigma_{\tau}\cap \{r\geq R\}} r^2(LT\psi)^2\,d\sigma dr d\tau.
\end{split}
\end{equation}

Subsequently, we apply \eqref{eq:convTintorX} with $N=1$ to estimate further
\begin{equation}
\label{eq:Tderrweight}
\begin{split}
 \int_{\Sigma_{\tau}\cap \{r\geq R\}} r^2(LT\psi)^2\,d\sigma dr\leq &\:C \int_{\Sigma_{\tau}\cap \{r\geq R\}} (L(rL)\psi)^2\,d\sigma dr\\
 &+C\sum_{k=0}^1\int_{\Sigma_{\tau}\cap \{r\geq R\}}  r^2|\snabla_{\s^2}^kX\phi|^2+|\snabla_{\s^2}^k\phi|^2+|\snabla_{\s^2}^{k+1}\phi|^2+ r^2h\tilde{h} |\snabla_{\s^2}^{k}T\phi|^2\,d\sigma dr.
 \end{split}
\end{equation}

We can therefore integrate \eqref{eq:Tderrweight} in $\tau$ and apply \eqref{eq:horp} with $N=1$ to obtain:
\begin{equation}
\label{eq:Tderrweightint}
\begin{split}
 \int_{\tau_i}^{\tau_{i+1}}&\int_{\Sigma_{\tau}\cap \{r\geq R\}} r^2(LT\psi)^2\,d\sigma dr d\tau\leq C\sum_{n_1+n_2\leq1}\int_{\Sigma_{\tau_i}\cap \{r\geq R\}} r(L(rL)^{n_1}T^{n_2}\psi)^2\,d\sigma dr+CE_T[\phi](\tau_i)+CE_T[T\phi](\tau_i)\\
 &+C\sum_{k=0}^1\int_{\tau_i}^{\tau_{i+1}}\int_{\Sigma_{\tau}\cap \{r\geq R\}}  r^2|\snabla_{\s^2}^kX\phi|^2+|\snabla_{\s^2}^k\phi|^2+|\snabla_{\s^2}^{k+1}\phi|^2+ r^2h\tilde{h} |\snabla_{\s^2}^{k}T\phi|^2\,d\sigma dr d\tau.
 \end{split}
\end{equation}
When $k=0$, we can estimate the second integral on the right-hand side of \eqref{eq:Tderrweightint} by applying \eqref{eq:fulliednondeg}. When $k>0$, we use the property $[\Omega_i,\square_{g_M}-V_{\alpha}]=0$ to directly apply \eqref{eq:fulliednondeg} to $\phi$ replaced with $\Omega_j\phi$, $j=1,2,3$. Then we can estimate additional angular derivatives in terms of $\Omega_j$-derivatives via \eqref{eq:sphere1b}.

Subsequently, we apply the mean-value theorem and choose $\gamma=0$ to obtain the existence of $\tau_i\leq \tau_i'\leq \tau_{i+1}$ such that
\begin{equation*}
\begin{split}
\int_{\Sigma_{\tau_i'}\cap \{r\geq R\}} r(LT\psi)^2\,d\sigma dr d\tau\leq&\: C(1+\tau_i)^{-1}\sum_{0\leq n_1+n_2+n_3\leq 1}\int_{\Sigma_{\tau_{i}}\cap \{r\geq R\}} r|\snabla_{\s^2}^{n_1} L(rL)^{n_2}T^{n_3}\psi|^2\,d\sigma dr\\
&+C(1+\tau_i)^{-1} \left[E_T[\phi](\tau_{i})+E_T[T\phi](\tau_{i})+E_T[\snabla_{\s^2}\phi](\tau_{i})\right].
 \end{split}
\end{equation*}
Applying \eqref{eq:horp} with $p=1$ and $N=0$, allows us to infer that in fact for all $\tau\geq 0$, we have that
\begin{equation}
\label{eq:rpwithgrowrweightb}
\begin{split}
\int_{\Sigma_{\tau}\cap \{r\geq R\}} r(LT\psi)^2\,d\sigma dr d\tau\leq&\: C(1+\tau)^{-1}\sum_{0\leq n_1+n_2+n_3\leq 1}\int_{\Sigma_{0}\cap \{r\geq R\}} r|\snabla_{\s^2}^{n_1} L(rL)^{n_2}T^{n_3}\psi|^2\,d\sigma dr\\
&+C(1+\tau)^{-1} \left[E_T[\phi](0)+E_T[T\phi](0)+E_T[\snabla_{\s^2}\phi](0)\right].
 \end{split}
\end{equation}

We can use \eqref{eq:rpwithgrowrweightb} to estimate the right-hand side of \eqref{eq:firstestrp} further and $\phi$ is replaced by $T\phi$. We then obtain:
\begin{equation}
\label{eq:edecaygamma0}
E_{\mathbf{N}}[T\phi](\tau)\leq C(1+\tau)^{-2}\left[\sum_{\substack{0\leq n_1+n_2+n_3\leq 2 \\ n_1+n_2\leq 1}}\int_{\Sigma_{0}\cap \{r\geq R\}} r|\snabla_{\s^2}^{n_1}L(rL)^{n_2}T^{n_3}\psi|^2\,d\sigma dr+ \sum_{\substack{0\leq n_1+n_2\leq 3\\n_1\leq 1}}E_{\mathbf{N}}[\snabla_{\s^2}^{n_1}T^{n_2}\phi](0)\right].
\end{equation}
Furthermore, we can split
\begin{equation*}
 \int_{\Sigma_{\tau}\cap \{r\geq R\}} r^2(LT\psi)^2\,d\sigma dr= \int_{\Sigma_{\tau}\cap \{R\leq r\leq R+\tau\}} r^2(LT\psi)^2\,d\sigma dr+ \int_{\Sigma_{\tau}\cap \{r\geq R+\tau\}} r^2(LT\psi)^2\,d\sigma dr
\end{equation*}
and estimate
\begin{equation*}
\int_{\Sigma_{\tau}\cap \{R\leq r\leq R+\tau\}} r^2(LT\psi)^2\,d\sigma dr\leq C(1+\tau) \int_{\Sigma_{\tau}\cap \{r\geq R\}} r(LT\psi)^2\,d\sigma dr
\end{equation*}
and, applying \eqref{eq:Tderrweight} with $R$ replaced by $R+\tau$, we can moreover estimate
\begin{equation*}
\int_{\Sigma_{\tau}\cap \{ r\geq R+\tau\}} r^2(LT\psi)^2\,d\sigma dr\leq C(1+\tau)^{-1} \int_{\Sigma_{\tau}}r (L(rL)\psi)^2\,d\sigma dr+CE_T[\phi](\tau)+CE_T[\snabla_{\s^2}\phi](\tau).
\end{equation*}
Hence,
\begin{equation}
\label{eq:rpwithgrowrweight}
\begin{split}
\int_{\Sigma_{\tau}\cap \{r\geq R\}} r^2(LT\psi)^2\,d\sigma dr\leq &\:C(1+\tau)^{-1}\Bigg[\sum_{\substack{0\leq n_1+n_2+n_3\leq 1 \\ n_1+n_2\leq 1}}\int_{\Sigma_{0}\cap \{r\geq R\}} r|\snabla_{\s^2}^{n_1}L(rL)^{n_2}T^{n_3}\psi|^2\,d\sigma dr\\
&+ \sum_{\substack{0\leq n_1+n_2\leq 2\\n_1\leq 1}}E_{\mathbf{N}}[\snabla_{\s^2}^{n_1}T^{n_2}\phi](0)\Bigg]+ C(1+\tau) \int_{\Sigma_{\tau}\cap \{r\geq R\}} r(LT\psi)^2\,d\sigma dr.
\end{split}
\end{equation}

Equipped with \eqref{eq:edecaygamma0} and \eqref{eq:rpwithgrowrweightb}, we can now make a better choice of $\gamma$ to improve the decay rate of $E_{\mathbf{N}}[T\phi](\tau)$: consider \eqref{eq:gammasplit} with $\gamma=\frac{1}{2}$ to obtain for all $\tau_i\leq \tau \leq \tau_{i+1}$:
\begin{equation*}
\begin{split}
\int_{\tau_i}^{\tau_{i+1}}\int_{\Sigma_{\tau}\cap \{r\geq R\}}& r(LT\psi)^2\,d\sigma dr d\tau\leq C(1+\tau_i)^{-1+\frac{1}{2}}\Bigg[\int_{\Sigma_{\tau_i}\cap \{r\geq R\}} r(LT\psi)^2\,d\sigma dr+CE_{\mathbf{N}}[T\phi](\tau_i)\Bigg]\\
&+C(1+\tau_i)^{-1-\frac{1}{2}}\sum_{n_1+n_2+n_3\leq 1}\int_{\Sigma_{\tau_{i}}\cap \{r\geq R\}} r|\snabla_{\s^2}^{n_1} L(rL)^{n_2}T^{n_3}\psi|^2\,d\sigma dr\\
&+C(1+\tau_i)^{-1-\frac{1}{2}}  \left[E_T[\phi](\tau_{i})+E_T[T\phi](\tau_{i})+E_T[\snabla_{\s^2}\phi](\tau_{i})\right]\\
\leq &\: C(1+\tau_i)^{-\frac{3}{2}}\left[\sum_{\substack{0\leq n_1+n_2+n_3\leq 1 \\ n_1+n_2\leq 1}}\int_{\Sigma_{0}\cap \{r\geq R\}} r|\snabla_{\s^2}^{n_1}L(rL)^{n_2}T^{n_3}\psi|^2\,d\sigma dr+ \sum_{\substack{0\leq n_1+n_2\leq 2\\n_1\leq 1}}E_{\mathbf{N}}[\snabla_{\s^2}^{n_1}T^{n_2}\phi](0)\right].
 \end{split}
\end{equation*}
and hence
\begin{equation*}
\begin{split}
E_{\mathbf{N}}[T\phi](\tau)\leq &\:C(1+\tau)^{-\frac{5}{2}}\left[\sum_{\substack{0\leq n_1+n_2+n_3\leq 2 \\ n_1+n_2\leq 1}}\int_{\Sigma_{\tau_i}\cap \{r\geq R\}} r|\snabla_{\s^2}^{n_1}L(rL)^{n_2}T^{n_3}\psi|^2\,d\sigma dr+C \sum_{\substack{0\leq n_1+n_2\leq 3\\n_1\leq 1}}E_{\mathbf{N}}[\snabla_{\s^2}^{n_1}T^{n_2}\phi](0)\right].
\end{split}
\end{equation*}
We can now continue this process iteratively by choosing $\gamma=2^{-k}$ and obtain:
\begin{align*}
\int_{\Sigma_{\tau}\cap \{r\geq R\}}& r(LT\psi)^2\,d\sigma dr d\tau\leq C(1+\tau)^{-2+\frac{1}{2^k}}\Bigg[\sum_{\substack{0\leq n_1+n_2+n_3\leq 1 \\ n_1+n_2\leq 1}}\int_{\Sigma_{\tau_i}\cap \{r\geq R\}} r|\snabla_{\s^2}^{n_1}L(rL)^{n_2}T^{n_3}\psi|^2\,d\sigma dr\\
&+C \sum_{\substack{0\leq n_1+n_2\leq 2\\n_1\leq 1}}E_{\mathbf{N}}[\snabla_{\s^2}^{n_1}T^{n_2}\phi](0)\Bigg],\\
E_{\mathbf{N}}[T\phi](\tau)\leq &\:C(1+\tau)^{-3+\frac{1}{2^k}}\left[\sum_{\substack{0\leq n_1+n_2+n_3\leq 2 \\ n_1+n_2\leq 1}}\int_{\Sigma_{\tau_i}\cap \{r\geq R\}} r|\snabla_{\s^2}^{n_1}L(rL)^{n_2}T^{n_3}\psi|^2\,d\sigma dr+C \sum_{\substack{0\leq n_1+n_2\leq 3\\n_1\leq 1}}E_{\mathbf{N}}[\snabla_{\s^2}^{n_1}T^{n_2}\phi](0)\right].
\end{align*}
Given $\delta>0$, we can therefore take $k$ suitably charge to obtain \eqref{eq:mainedcay} for $N=1$ and \eqref{eq:mainrwedcay} follows by additionally applying \eqref{eq:rpwithgrowrweight}. Then, we apply \eqref{eq:rpwithgrowrweight} to obtain moreover \eqref{eq:mainrwedcay} for $N=1$.

The general $N>1$ case proceeds inductively. We further estimate the first term on the right-hand side above after replacing $\phi$ with $T\phi$. In general, we split
\begin{equation}
\label{eq:gammasplitN}
\begin{split}
\sum_{n_1+n_2+n_3\leq N-1}&\int_{\tau_i}^{\tau_{i+1}}\int_{\Sigma_{\tau}\cap \{r\geq R\}} r|\snabla_{\s^2}^{n_1}L(rL)^{n_2}T^{n_3+1}\psi|^2\,d\sigma dr d\tau\\
\leq &\:C\sum_{1\leq n_1+n_2+n_3\leq N-1}(1+\tau_i)^{\gamma} \int_{\tau_i}^{\tau_{i+1}}\int_{\Sigma_{\tau}\cap \{r\geq R\}}  |\snabla_{\s^2}^{n_1}L(rL)^{n_2}T^{n_3+1}\psi|^2\,d\sigma dr d\tau\\
&+C\sum_{1\leq n_1+n_2+n_3\leq N}(1+\tau_i)^{-\gamma} \int_{\tau_i}^{\tau_{i+1}}\int_{\Sigma_{\tau}\cap \{r\geq R\}} r^2 |\snabla_{\s^2}^{n_1}L(rL)^{n_2}T^{n_3}\psi|^2\,d\sigma dr d\tau
\end{split}
\end{equation}
after applying \eqref{eq:convTintorX}. As we pass from $N$ to $N+1$, we apply the mean-value theorem an additional time and then optimize $\gamma$ as above in order to increase the decay rates in  \eqref{eq:mainedcay} and \eqref{eq:mainrwedcay} by 2.
\end{proof}

When $\alpha\geq 0$, we will moreover make use of the following stronger decay estimates:
\begin{proposition}
\label{prop:edecaytimeder2}
Let  $N\in \N$ and let $\alpha\geq 0$ or let $\alpha<0$ and $\phi=\phi_{\geq 1}$. Then for $R>3M$ suitably large, there exists a constant $C=C(M,h,V_{\alpha},N,R)>0$ such that for all $\tau\geq0$:
\begin{equation}
\label{eq:mainedcay2}
\begin{split}
E_{\mathbf{N}}[T^N\phi](\tau)\leq &\:C(1+\tau)^{-2-2N}\Bigg[\sum_{\substack{0\leq n_1+n_2+n_3\leq 2N+1\\ n_1+n_2\leq N}}\int_{\Sigma_{0}\cap \{r\geq R\}} r^2|\snabla_{\s^2}^{n_1}L(rL)^{n_2}T^{n_3}\psi|^2\,d\sigma dr\\
&+ \sum_{\substack{0\leq n_1+n_2\leq 3N+1\\ n_1\leq N}}E_{\mathbf{N}}[\snabla_{\s^2}^{n_1}T^{n_2}\phi](0)\Bigg].
\end{split}
\end{equation}
and moreover,
\begin{equation}
\label{eq:mainrwedcay2}
\begin{split}
\int_{\Sigma_{\tau}\cap \{r\geq R\}} r^2(LT^N\psi)\,d\sigma dr\leq &\: C(1+\tau)^{-2N}\Bigg[\sum_{\substack{0\leq n_1+n_2+n_3\leq 2N\\ n_1+n_2\leq N}}\int_{\Sigma_{0}\cap \{r\geq R\}} r^2|\snabla_{\s^2}^{n_1}L(rL)^{n_2}T^{n_3}\psi|^2\,d\sigma dr\\
&+ \sum_{\substack{0\leq n_1+n_2\leq 3N\\ n_1\leq N}}E_{\mathbf{N}}[\snabla_{\s^2}^{n_1}T^{n_2}\phi](0)\Bigg].
\end{split}
\end{equation}
\end{proposition}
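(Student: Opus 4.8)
The plan is to follow the architecture of the proof of Proposition~\ref{prop:edecaytimeder}, exploiting the one structural simplification available in the present regime: when $\alpha\geq 0$, or when $\alpha<0$ but $\phi=\phi_{\geq 1}$, the potential term $-\alpha r^{-2}$ (respectively, the angular term, after projecting away $\ell=0$) carries a favourable sign. This has two consequences. First, the restriction $\max\{1-\beta_0,0\}<p<\min\{1+\beta_0,2\}$ in Proposition~\ref{prop:horp} becomes simply $0<p<2$, and the endpoint $p=2$ is reached by the usual argument: at $p=2$ the spacetime terms $r^{p-3}|\snabla_{\s^2}(rL)^{n_1}T^{n_2}\psi|^2$ and $r^{p-3}((rL)^{n_1}T^{n_2}\psi)^2$ in \eqref{eq:horpidv2} acquire the coefficient $2-p=0$ and drop out, but the principal spacetime term $\int\!\!\int r(L(rL)^{n_1}T^{n_2}\psi)^2$ and the boundary flux $\int_{\Sigma_{\tau_2}}r^2(L(rL)^{n_1}T^{n_2}\psi)^2$ survive. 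Second, by Propositions~\ref{prop:aposlbound} and \ref{prop:aposllarge} the local integrated energy estimate holds with only $E_T[\phi]$ on the right-hand side, hence --- after the redshift step, cf.\ Corollary~\ref{cor:nondegiled} and \eqref{eq:nondegebound} --- with only $E_{\mathbf{N}}[\phi]$, and with \emph{no} growing $r$-weights. Thus the $r$-weighted hierarchy now genuinely has length two ($p$ ranging over $[1,2]$), which is the source of the extra power of $\tau$ and the absence of the $\delta$-loss present in Proposition~\ref{prop:edecaytimeder}.

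First I would treat the case $N=0$. Integrating the $p=2$ member of the hierarchy over a dyadic slab $[\tau_i,\tau_{i+1}]$ and applying the mean-value theorem produces a sequence $\tau_i\le\tau_i'\le\tau_{i+1}$ along which $\int_{\Sigma_{\tau_i'}\cap\{r\ge R\}}r(L\psi)^2\lesssim(1+\tau_i)^{-1}(\cdots)$; combining with the $p=1$ rung, the weight-free integrated energy estimate \eqref{eq:fulliednondeg}--\eqref{eq:fulliednondeglossder}, and the non-degenerate energy boundedness \eqref{eq:nondegebound}, one upgrades this to $E_{\mathbf{N}}[\phi](\tau)\lesssim(1+\tau)^{-2}(\cdots)$ for all $\tau$, which is \eqref{eq:mainedcay2} with $N=0$; the decay of the $p=2$ flux gives \eqref{eq:mainrwedcay2} with $N=0$. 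Note that here two successive applications of the pigeonhole principle --- one at $p=2$, one at $p=1$ --- are available, with no interpolation needed.

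Next I would carry out the $T$-commutation and induct on $N$ exactly as in the proof of Proposition~\ref{prop:edecaytimeder}. The mechanism is to trade one $T$-derivative for two powers of $r$ via \eqref{eq:waveqcommrL} and \eqref{eq:convTintorX}: schematically, $r^{p+1}|\snabla_{\s^2}^{n_1}LT(rL)^{n_2}\psi|^2$ is dominated by $r^{p-1}$- and $r^{p-3}$-weighted quantities built from $(rL)$- and $\snabla_{\s^2}$-derivatives of $\psi$, which are controlled by the $(rL,\snabla_{\s^2})$-commuted hierarchy of Proposition~\ref{prop:horp}, now available for $p$ up to $2$. Because the full range $p\in[1,2]$ is available \emph{after} commutation, running the dyadic pigeonhole --- with the $r$-localization organized by the splitting \eqref{eq:gammasplitN} adapted to the $r^p$-weight --- yields an \emph{exact} gain of two powers in the decay rate per $T$-commutation, rather than the interpolated, $\delta$-lossy gain of Proposition~\ref{prop:edecaytimeder}. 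Iterating, replacing $\phi$ by $T\phi$ at each stage and using the induction hypothesis to absorb error and lower-order terms, produces \eqref{eq:mainedcay2} and \eqref{eq:mainrwedcay2} for all $N$, with the multi-index ranges growing precisely as stated.

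The main obstacle --- as in Proposition~\ref{prop:edecaytimeder} --- is the bookkeeping of the commuted hierarchy: one must verify that Proposition~\ref{prop:horp} really is available at every commutation order $n_1+n_2\le N$ with $p$ up to the endpoint $2$, that the conversion \eqref{eq:convTintorX} feeds consistently from one rung into the next, and that the error terms generated in \eqref{eq:horpidv2} (the $O_\infty(r^{p-2})$ lower-order terms and the cross term $\snabla_{\s^2}(rL)^{N-1}\psi\cdot L(rL)^N\psi$) are absorbable using the induction hypothesis together with the Hardy inequality \eqref{eq:hardy}. These are the same structural difficulties already handled in Proposition~\ref{prop:edecaytimeder}; the genuinely new input is solely the removal of the $p$-range restriction afforded by the favourable sign of the potential/angular term, which in turn removes the need for the $\gamma=2^{-k}$ interpolation and hence the $\delta$-loss.
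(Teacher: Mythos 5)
Your proposal follows essentially the same route as the paper: it starts from the almost-identical $N=0$ argument (the extended hierarchy $0<p\leq 2$ from Proposition \ref{prop:horp}, available because the potential/angular term has a favourable sign, combined with the non-degenerate integrated energy estimate and two pigeonhole steps), and then inducts on $N$ exactly as in Proposition \ref{prop:edecaytimeder} via \eqref{eq:convTintorX} and \eqref{eq:Tderrweightint}, with the full range $p\in[1,2]$ after commutation replacing the $\gamma$-interpolation and thereby removing the $\delta$-loss. This matches the paper's proof, so no further comment is needed.
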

\begin{proof}
As in the general $\alpha$ case, we have that \eqref{eq:firstedecay} holds. When $\alpha>0$ or $\phi=\phi_{\geq 1}$, we moreover have that \eqref{eq:horp} holds for $0< p\leq 2$. Taking $p=2$ and $N=0$, we can therefore further estimate:
\begin{equation*}
\int_{\tau_i}^{\tau_{i+1}} \int_{\Sigma_{\tau}\cap \{r\geq R\} }r(L\psi)^2\,d\sigma dr d\tau\leq C  \int_{\Sigma_{\tau_i}\cap \{r\geq R\}} r^2(L\psi)^2\,d\sigma dr+CE_{\mathbf{N}}[\phi](\tau_i).
\end{equation*}
Applying the mean-value theorem, we can therefore improve \eqref{eq:firstedecay} to obtain:
\begin{equation*}
E_{\mathbf{N}}[\phi](\tau)\leq C (1+\tau)^{-2}\left[ \int_{\Sigma_{0}\cap \{r\geq R\}} r^2(L\psi)^2+r(LT\psi)^2\,d\sigma dr+E_{\mathbf{N}}[\phi](0)+ E_T[T\phi](0)+ E_T[T^2\phi](0)\right],
\end{equation*}
which implies \eqref{eq:mainedcay2} with $N=0$. We moreover obtain from \eqref{eq:horp} with $p=2$ that
\begin{equation*}
\int_{\Sigma_{\tau}\cap \{r\geq R\}} r^2(L\psi)^2\,d\sigma dr\leq C \left[ \int_{\Sigma_{0}\cap \{r\geq R\}} r^2(L\psi)^2\,d\sigma dr+E_T[\phi](0)\right],
\end{equation*}
which implies \eqref{eq:mainrwedcay2} with $N=0$.

To obtain the estimates with $N\geq 1$, we then proceed as in the proof of Proposition \ref{prop:edecaytimeder}, by repeatedly applying \eqref{eq:convTintorX}. When $N=1$, for example, we use \eqref{eq:Tderrweightint} to apply the mean-value theorem once more and increase the decay rate of $E_{\mathbf{N}}[T\phi](\tau)$ from $-2$ to $-3$. Applying \eqref{eq:horp} with $N=1$ together with the mean-value theorem again then results in a decay rate of $-4$.

Similarly, \eqref{eq:Tderrweight}, together with \eqref{eq:horp} with $N=1$ and $p=1$ and $p=2$ allows us to obtain \eqref{eq:mainrwedcay2} with $N=1$. The general $N$ case then follows in a straightforward manner.
\end{proof}

\begin{remark}
In Propositions \ref{prop:edecaytimeder} and \ref{prop:edecaytimeder2} there is a loss of total number of derivatives on the right-hand side compared to the left-hand side. This is explained by the loss of derivatives in the integrated energy estimate in \eqref{eq:fulliedlossder} due the presence of trapped null geodesics. 
\end{remark}

\section{Time-inversion theory}
\label{sec:timinv}
In this section, we will show that, given a solution $\phi$ to \eqref{eq:waveeq} that arises from appropriate initial data, we can construct initial data for \eqref{eq:waveeq} that lead to a solution $\widetilde{\phi}$ satisfying the property that $T\widetilde{\phi}=\phi$. We will refer to $\widetilde{\phi}$ as the \emph{time integral} of $\phi$ (see the statement of Proposition \ref{prop:timeint} for a precise definition). Formally,
\begin{equation*}
\widetilde{\phi}(\tau,r,\theta,\varphi)=-\int_{\tau}^{\infty} \phi(\tau',r,\theta,\varphi)\,d\tau',
\end{equation*}
but since we have not yet established integrability of $\phi$ in $\tau$, we will instead construct $\widetilde{\phi}$ from initial data by considering an appropriate elliptic problem (and afterwards justify the above integral).

First, we develop the necessary elliptic theory in Section \ref{sec:elliptictheory}, which we then apply in Section \ref{sec:constrtimeint} to construct the time integral initial data.

\subsection{Elliptic theory}
\label{sec:elliptictheory}
We introduce the differential operator $\mathcal{L}$:
\begin{equation*}
\mathcal{L}u:=X(Dr^2Xu)+\slashed{\Delta}_{\s^2}u-r^2V_{\alpha}u.
\end{equation*}
\begin{proposition}
\label{prop:mainesttimeinv}
Let $u\in C^{\infty}_c(\Sigma)$. Let $-\beta_0<q<\beta_0$. Then there exists a constant $C=C(M,V_{\alpha},q,h)>0$, such that
\begin{equation}
\label{eq:mainellipticMpos}
\int_{\Sigma} r^{q}\left[r^2(Xu)^2+|\snabla_{\s^2}u|^2+u^2\right]\,d\sigma dr\leq C \int_{\Sigma} r^{q}(\mathcal{L}u)^2\,d\sigma dr.
\end{equation}
\end{proposition}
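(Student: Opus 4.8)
The plan is to prove the weighted elliptic estimate \eqref{eq:mainellipticMpos} by an integration-by-parts argument against a carefully weighted multiplier, reducing to a Hardy-type inequality near infinity and a standard coercivity argument near the horizon. First I would multiply the identity $\mathcal{L}u = X(Dr^2Xu) + \slashed{\Delta}_{\s^2}u - r^2V_\alpha u$ by $r^{q-2}u$ (or, more robustly, by $r^{q}u$ after rewriting $\mathcal{L}$ in terms of $\psi = ru$ and $X(r\,\cdot\,)$) and integrate over $\Sigma$. After integrating by parts in $X$ and on $\s^2$, the left-hand side generates the terms $\int r^{q}[r^2(Xu)^2 + |\snabla_{\s^2}u|^2]\,d\sigma dr$ together with a zeroth-order term whose coefficient involves $r^2V_\alpha + \tfrac{1}{2}X(Dr^2) \cdot (\text{weight derivative})$; the precise bookkeeping of the weight $r^q$ interacting with $D = 1 - 2M/r$ produces a coefficient that, to leading order as $r\to\infty$, behaves like $-\alpha - \tfrac14(q^2-1)$ or similar. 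The point of the restriction $-\beta_0 < q < \beta_0$, i.e.\ $q^2 < 1 + 4\alpha = \beta_0^2$, is precisely that this leading coefficient stays strictly positive (equivalently $\tfrac14(1-q^2) + \alpha > 0$), so the zeroth-order term can be kept with a favorable sign rather than absorbed.

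The key steps, in order, would be: (1) rewrite $\mathcal{L}$ acting on $u = r^{-1}\psi$ in a form where the natural energy $\int r^{q}[r^2(Xu)^2 + u^2]$ appears cleanly — this is essentially \eqref{eq:waveeq2} with $T$-derivatives dropped; (2) multiply by the weighted multiplier and integrate by parts, being careful that for $u \in C_c^\infty(\Sigma)$ all boundary terms at $r = 2M$ and $r = \infty$ vanish (compact support kills the $r=\infty$ term, and at $r=2M$ the factor $D$ degenerates but $Xu$ is bounded, so $Dr^2Xu \cdot u \to 0$); (3) identify the bulk terms: the good terms $\int r^{q+2}(Xu)^2$, $\int r^q|\snabla_{\s^2}u|^2$, and the zeroth-order term $\int c(r) r^{q} u^2$ where $c(r) = \tfrac14(1-q^2) + \alpha + O(r^{-1})$ near infinity; (4) for large $r$ the positivity of the limiting constant gives coercivity of $\int r^q u^2$ directly; (5) near the horizon and in any compact region where $c(r)$ might fail to be positive, absorb the (bounded) bad contribution using assumption \ref{assm:A} together with the Hardy inequality \eqref{eq:hardy} — one estimates $\int_{\{r\le R\}} r^q u^2 \lesssim \int_{\{r\le R\}} r^{q+2}(Xu)^2 + (\text{boundary at }r=R)$, and the boundary term at an intermediate $r=R$ is in turn controlled by the already-established large-$r$ estimate via a standard averaging/cutoff argument, or alternatively absorbed because there the $r^{q+2}(Xu)^2$ term dominates; (6) apply Cauchy--Schwarz with a small parameter on the right-hand side $\int r^{q-2}(\mathcal{L}u) \cdot r^{q+2}u$ to close.

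The main obstacle I expect is Step (5): handling the region where the zeroth-order coefficient $c(r)$ is \emph{not} pointwise positive. As $r\to\infty$ it is positive by the choice of $q$, but for $r$ of order $M$, and especially as $\alpha \downarrow -\tfrac14$ (so $\beta_0 \downarrow 0$ and the allowed range of $q$ shrinks to a point), the pointwise sign of $c(r)$ may be negative on a compact set, and one cannot simply drop it. The resolution is to not insist on a pointwise sign but instead to use the full Hardy inequality \eqref{eq:hardy} with exponent $q$: write $\int_a^b r^q u^2 \le \tfrac{2}{q+1}[b^{q+1}u^2(b) - a^{q+1}u^2(a)] + \tfrac{4}{(q+1)^2}\int_a^b r^{q+2}(Xu)^2$, so that the negative part of $\int c(r) r^q u^2$ is bounded by a small multiple of $\int r^{q+2}(Xu)^2$ plus a boundary term at finite $r$; the latter is controlled by integrating $X(\chi r^{q+1}u^2)$ over a unit-length interval against a cutoff $\chi$ and using $\mathcal{L}u$ on the right. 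Here one must also invoke assumption \ref{assm:A}, $r^2V_\alpha > -\tfrac14$, to guarantee that after this Hardy absorption the net coefficient of $\int r^{q+2}(Xu)^2$ remains strictly positive uniformly in $r \ge 2M$; this is the analogue of the coercivity of the $T$-energy in Proposition \ref{prop:ebound}, and it is exactly the place where the global (not merely asymptotic) hypothesis on $V_\alpha$ is needed. A secondary minor point is the degeneracy of $D$ at $r=2M$: since we only claim control of $\int r^q[r^2(Xu)^2 + \cdots]$ with the \emph{non-degenerate} weight $r^2(Xu)^2$ rather than $Dr^2(Xu)^2$, one recovers the non-degenerate norm near the horizon by a short elliptic (interior) estimate, treating $\slashed{\Delta}_{\s^2}u - r^2V_\alpha u - \mathcal{L}u$ as a bounded source — analogous to \eqref{eq:standardelliptic} — which is routine once the degenerate estimate is in hand.
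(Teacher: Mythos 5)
Your far-field computation is essentially right and in fact mirrors what the paper does in Appendix \ref{app:weightfunction} (Proposition \ref{prop:keyestw}): multiplying by $-r^{q}u$ on $\{r\geq R\}$ and borrowing through the Hardy inequality \eqref{eq:hardy} produces exactly the net coefficient $\tfrac14(1-q^2)+\alpha$, whose positivity is the condition $|q|<\beta_0$. The genuine gap is in your step (5), the bounded-$r$ region. There the weighted multiplier generates the extra zeroth-order coefficient $-\tfrac{q}{2}r^{-q}X(Dr^{q+1})=-\tfrac{q(q+1)}{2}+\tfrac{Mq^2}{r}$, and assumption \ref{assm:A} only bounds $r^2V_\alpha$ below by $-\tfrac14$; nothing forces $r^2V_\alpha$ to be near its asymptotic value $\alpha$ until $r$ is large. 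For an admissible potential that hovers near $-\tfrac14 r^{-2}$ over a long intermediate range $[A,B]$ (with $A\gg M$, so $D\approx1$ and your degenerate/non-degenerate distinction is irrelevant there), the zeroth-order deficit is $\approx \tfrac14+\tfrac{q(q+1)}{2}$, while the sharp Hardy capacity of the gradient term $\int r^{q+2}(Xu)^2$ is $\tfrac{(q+1)^2}{4}$; testing with a Hardy quasi-optimizer supported in $[A,B]$ shows the weighted form is \emph{negative} by roughly $\tfrac{q^2}{4}\int r^q u^2$ for every $q\neq0$. So the claim that "after this Hardy absorption the net coefficient remains strictly positive uniformly in $r\geq 2M$" is false in general: assumption \ref{assm:A} yields coercivity only of the unweighted form (this is Proposition \ref{prop:ebound}), and your scheme, which needs coercivity of the $r^q$-weighted form, cannot close by adjusting small parameters or cutoffs alone. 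This is precisely why the paper takes a different route: it decomposes into spherical harmonics, renormalizes each low harmonic by the positive stationary solution $w_\ell$ of Lemma \ref{lm:propertiesw} so that $w_\ell\mathcal{L}u_\ell=X(Dr^2w_\ell^2X\check u_\ell)$ has \emph{no} zeroth-order term at all, and then uses the first-order multiplier $r^{-q}X\check u_\ell$; the global spectral information (no zero-energy resonances) enters through the positivity and asymptotics of $w_\ell$, not through a pointwise sign of a coefficient. High harmonics are handled separately by expanding $(\mathcal{L}u_{\geq\ell_0})^2$ and using largeness of $\ell_0$.

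A second, smaller but still real, issue is your treatment of the horizon degeneracy. The multiplier identity only yields $\int Dr^{q+2}(Xu)^2$, and recovering the non-degenerate $\int r^{q+2}(Xu)^2$ near $r=2M$ is not a routine interior elliptic estimate: interior regularity for $X(Dr^2X\cdot)+\slashed{\Delta}_{\s^2}$ degenerates exactly at $\{D=0\}$, and \eqref{eq:standardelliptic} trades tangential derivatives for $T$-derivatives, which are unavailable here. The paper obtains the non-degenerate control from redshift-type \emph{first-order} multipliers: for the low harmonics the bulk coefficient $r-M+\tfrac{q}{2}(r-2M)+w_\ell^{-1}w_\ell' r(r-2M)$ equals $M>0$ at $r=2M$, and for the high harmonics the multiplier $r\chi Xu_{\geq\ell_0}\mathcal{L}u_{\geq\ell_0}$ exploits $(Dr^2)'-Dr=r>0$. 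Your sketch would need an analogous horizon-adapted multiplier; as written, both the intermediate-region absorption and the horizon non-degeneracy are unproven, and the first of these is not repairable within the pointwise-sign-plus-Hardy framework you propose.
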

\begin{proof}

We split $u=u_{\leq \ell_0-1}+u_{\geq \ell_0}$, where $\ell_0\in \N$ will be chosen suitably large.\\
\\
\underline{$u_{\leq \ell_0-1}$}:\\
\\ 
Consider first $u_{\ell}$, with $0\leq \ell\leq \ell_0-1$. The constants appearing in the estimates for $u_{\ell}$ with $\ell\leq \ell_0-1$ below, will depend on $\ell_0$.
 
We consider the weight functions $w_{\ell}$ from Lemma \ref{lm:propertiesw} and define:
\begin{equation*}
\check{u}_{\ell}:=w_{\ell}^{-1}u_{\ell}.
\end{equation*}
Then
\begin{equation*}
\begin{split}
w_{\ell}\mathcal{L} u_{\ell}=w_{\ell}\mathcal{L}(w_{\ell}\check{u}_{\ell})=&\:w_{\ell}X(Dr^2 X(w_{\ell}\check{u}_{\ell}))-[\ell(\ell+1)+r^2V_{\alpha}]w_{\ell} ^2\check{u}_{\ell}\\
=&\: w_{\ell}^2 X(Dr^2 X \check{u}_{\ell})+2\frac{d w_{\ell}}{dr} X\check{u}_{\ell}+\check{u}_{\ell}w_{\ell}\left[\frac{d}{dr}\left(Dr^2\frac{dw_{\ell}}{dr}\right)-r^2V_{\alpha}w_{\ell} \right]\\
=&\: X(Dr^2 w_{\ell}^2 X\check{u}_{\ell}),
\end{split}
\end{equation*}
where we applied \eqref{eq:odew} to arrive at the last equality. 

Then it follows from multiple applications of the Leibniz rule that:
\begin{equation*}
\begin{split}
r^{-q}X\check{u}_{\ell}\cdot w_{\ell}\mathcal{L} u_{\ell}=&\:r^{-q}X\check{u}_{\ell} Dr^2 w_{\ell}^2X^2\check{u}_{\ell}+ (Dr^2w_{\ell}^2)' r^{-q} (X\check{u}_{\ell})^2\\
=&\: X\left(\frac{1}{2} Dr^{2-q}w_{\ell}^2  (X\check{u}_{\ell})^2\right)+\left[ r-M +\frac{q}{2}(r-2M) +w_{\ell}^{-1}\frac{dw_{\ell}}{dr}r(r-2M )\right]r^{-q}w_{\ell}^2(X\check{u}_{\ell})^2.
\end{split}
\end{equation*}
We can moreover apply \eqref{eq:asymptw} to estimate
\begin{equation*}
\left|w_{\ell}^{-1}\frac{dw_{\ell}}{dr}-\frac{1}{2}(\beta_{\ell}-1)r^{-1}\right|\leq C(r^{-\frac{1}{2}-\frac{1}{2}\beta_{\ell}}+r^{-\frac{3}{2}+\frac{1}{2}\beta_{\ell}}).
\end{equation*}

We apply a weighted Young's inequality to estimate
\begin{equation*}
\left|r^{-q}X\check{u}_{\ell}\cdot w_{\ell}\mathcal{L} u_{\ell}\right|\leq \epsilon r^{1-q}w_{\ell}^2(X\check{u}_{\ell})^2+\frac{1}{4\epsilon}r^{-1-q}(\mathcal{L}u_{\ell})^2.
\end{equation*}

We first take $q>q_{\ell_0}$, with $q_{\ell_0}$ suitably large, so that in $\{r\leq R\}$, with $3M<R<\infty$ arbitrarily large,
\begin{equation*}
r-M +\frac{q_{\ell_0}}{2}(r-2M) +w_{\ell}^{-1}\frac{dw_{\ell}}{dr}r(r-2M )>0.
\end{equation*}
Then we integrate over  $\Sigma\cap\{r\leq R\}$ to obtain
\begin{equation}
\label{eq:aneg1}
\int_{\s^2}(X\check{u}_{\ell})^2\Big|_{r=R}\,d\sigma+\int_{\Sigma\cap \{r\leq R\}} (X\check{u}_{\ell})^2\,d\sigma dr\leq C \int_{\Sigma\cap \{r\leq R\}} (\mathcal{L}u_{\ell})^2\,d\sigma dr.
\end{equation}

Let $\epsilon>0$ be arbitrarily small. We then consider the region $\{r\geq R\}$, with $R>3M$ suitably large so that
\begin{equation*}
r-M+\frac{q}{2}(r-2M)+w_{\ell}^{-1}\frac{dw_{\ell}}{dr}>\left(\frac{1}{2}+\frac{1}{2}q+\frac{1}{2}\beta_{\ell}-\epsilon\right)r.
\end{equation*}
Given any $q>-1-\beta_{\ell}$, we can therefore take $\epsilon>0$ suitably small, integrate over $\Sigma$ and apply \eqref{eq:aneg1} to obtain
\begin{equation}
\label{eq:keyeqellipticcheck}
\int_{\Sigma} r^{\beta_{\ell}-q}(X\check{u}_{\ell})^2\,d\sigma dr\leq C \int_{\Sigma} r^{-1-q}(\mathcal{L}u_{\ell})^2\,d\sigma dr.
\end{equation}
By restricting further to $-1-\beta_{\ell}<q<-1+\beta_{\ell}$ and applying \eqref{eq:hardy} together with compactness of the support of $u_{\ell}$, we obtain
\begin{equation*}
\int_{\Sigma} r^{\beta_{\ell}-q-2}\check{u}_{\ell}^2+r^{\beta_{\ell}-q}(X\check{u}_{\ell})^2\,d\sigma dr\leq C \int_{\Sigma} r^{-1-q}(\mathcal{L}u_{\ell})^2\,d\sigma dr
\end{equation*}
and hence,
\begin{equation*}
\int_{\Sigma} r^{-q-1}{u}_{\ell}^2+r^{-q+1}(X{u}_{\ell})^2\,d\sigma dr\leq C \int_{\Sigma} r^{-q-1}(\mathcal{L}u_{\ell})^2\,d\sigma dr.
\end{equation*}
\\
\underline{$u_{\geq \ell_0}$}:\\
\\

Consider now $u_{\geq \ell_0}$. We have that
\begin{equation*}
(\mathcal{L}u_{\geq \ell_0})^2= (X(Dr^2Xu_{\geq \ell_0}))^2+((\slashed{\Delta}_{\s^2}-r^2V_{\alpha})(u_{\geq \ell_0}))^2+2X(Dr^2Xu_{\geq \ell_0})(\slashed{\Delta}_{\s^2}-r^2V_{\alpha})(u_{\geq \ell_0}).
\end{equation*}
Let $-1-\beta_0<q<-1+\beta_0$. Then we integrate by parts to obtain:
\begin{equation*}
\begin{split}
\int_{\Sigma} 2r^{-q}X(Dr^2Xu_{\geq \ell_0})(\slashed{\Delta}_{\s^2}-r^2V_{\alpha})(u_{\geq \ell_0})\,d\sigma dr=&\:\int_{\Sigma} X(2r^{-q}Dr^2Xu_{\geq \ell_0}(\slashed{\Delta}_{\s^2}-r^2V_{\alpha})(u_{\geq \ell_0}))\,d\sigma dr\\
&+\int_{\Sigma} 2Dr^{2-q}(|\snabla_{\s^2}Xu_{\geq \ell_0}|^2+r^2V_{\alpha}(Xu_{\geq \ell_0})^2)\,d\sigma dr\\
&+\int_{\Sigma} 2qDr^{1-q}Xu_{\geq \ell_0}(\slashed{\Delta}_{\s^2}-r^2V_{\alpha})(u_{\geq \ell_0})\,d\sigma dr\\
&+\int_{\Sigma} 2Dr^{2-q}Xu_{\geq \ell_0}\frac{d}{dr}(r^2V_{\alpha})u_{\geq \ell_0}\,d\sigma dr.
\end{split}
\end{equation*}
We apply Young's inequality to estimate:
\begin{equation*}
\begin{split}
Xu_{\geq \ell_0}&\left[2qDr^{1-q}(\slashed{\Delta}_{\s^2}-r^2V_{\alpha})u_{\geq \ell_0}+2Dr^{2-q}\frac{d}{dr}(r^2V_{\alpha})u_{\geq \ell_0}\right]\\
\leq &\: 2(Cr^{-2}+q^2)D r^{2-q}(Xu_{\geq \ell_0})^2+\frac{1}{2}D r^{-q}((\slashed{\Delta}_{\s^2}-r^2V_{\alpha})(u_{\geq \ell_0}))^2+\frac{1}{2}Dr^{-q}u_{\geq \ell_0}^2,
\end{split}
\end{equation*}
for some suitably large $C>0$.

We apply \eqref{eq:sphere2} to infer that:
\begin{equation*}
\int_{\Sigma} 2(Cr^{-2}+q^2)D r^{2-q}(Xu_{\geq \ell_0})^2\,d\sigma dr\leq \int_{\Sigma} 2(\ell_0(\ell_0+1))^{-1}(Cr^{-2}+q^2)D r^{2-q}|\snabla_{\s^2}Xu_{\geq \ell_0}|^2\,d\sigma dr
\end{equation*}
and moreover
\begin{equation*}
\int_{\s^2}|\snabla_{\s^2}u_{\geq \ell_0}|^2+r^2V_{\alpha}u_{\geq \ell_0}^2\,d\sigma \geq \int_{\s^2}(\ell_0(\ell_0+1)+r^2V_{\alpha})u_{\geq \ell_0}^2\,d\sigma,
\end{equation*}
so for $\ell_0>0$ suitably large depending on $q$ and $V_{\alpha}$, we can absorb all unsigned terms in the integral of $r^{-q}(\mathcal{L}u_{\geq \ell_0})^2$ over $\Sigma$ to obtain: for all $-1-\beta_0<q<-1+\beta_0$,
\begin{equation}
\label{eq:ellipticauxgeqL}
\begin{split}
\int_{\Sigma}& r^{-q} (X(Dr^2Xu_{\geq \ell_0}))^2+r^{2-q}D |\snabla_{\s^2}Xu_{\geq \ell_0}|^2+r^{2-q}D(Xu_{\geq \ell_0})^2+r^{-q}(\slashed{\Delta}_{\s^2}u_{\geq \ell_0})^2+r^{-q}u_{\geq \ell_0}^2\,d\sigma d\sigma\\
\leq&\: C \int_{\Sigma}r^{-q}(\mathcal{L} u_{\geq \ell_0})^2\,d\sigma dr.
\end{split}
\end{equation}

In order to remover the degenerate factors $D$ that appear in front of the $(Xu_{\geq L})^2$ term, we additionally consider:
\begin{equation*}
\begin{split}
r \chi X u_{\geq \ell_0} \mathcal{L} u_{\geq \ell_0}=&\: X\left(\frac{1}{2}Dr^3 \chi (Xu_{\geq \ell_0})^2\right)+\frac{1}{2}r\left((Dr^2)'-Dr\right)\chi (Xu_{\geq \ell_0})^2-\frac{1}{2}X(r\chi |\snabla_{\s^2}u_{\geq \ell_0}|^2+\chi r^3V_{\alpha}u_{\geq \ell_0}^2)\\
&+\frac{1}{2}\chi (|\snabla_{\s^2}u_{\geq \ell_0}|^2+\frac{d}{dr}(r^3V_{\alpha})u_{\geq \ell_0}^2)-\frac{1}{2}Dr^3\frac{d\chi}{dr} (Xu_{\geq \ell_0})^2+r \frac{d\chi}{dr} ( |\snabla_{\s^2}u_{\geq \ell_0}|^2+ r^2V_{\alpha}u_{\geq \ell_0}^2)\\
&+\textnormal{div}_{\s^2}(\ldots),
\end{split}
\end{equation*}
with $\chi:[2M,\infty)\to\R_{>0}$ a smooth cut-off function, such that $\chi=1$ for $r\leq r_0$ and $\chi=0$ for $r>r_1$, where $2M<r_0<r_1<\infty$. Note first of all that $(Dr^2)'-Dr=r$ and
\begin{equation*}
\int_{\s^2} |\snabla_{\s^2}u_{\geq \ell_0}|^2+\frac{d}{dr}(r^3V_{\alpha})u_{\geq \ell_0}^2\,d\sigma \geq \frac{1}{2}\int_{\s^2} |\snabla_{\s^2}u_{\geq \ell_0}|^2+u_{\geq \ell_0}^2\,d\sigma
\end{equation*}
by \eqref{eq:sphere2} and $\ell_0$ suitably large. Furthermore, we can apply Young's inequality to estimate:
\begin{equation*}
|r \chi X u_{\geq \ell_0} \mathcal{L} u_{\geq \ell_0}|\leq r^2(X u_{\geq \ell_0})^2+(\mathcal{L} u_{\geq \ell_0})^2.
\end{equation*}
We now integrate $r \chi X u_{\geq \ell_0} \mathcal{L} u_{\geq \ell_0}$ over $\Sigma$ and apply the above estimates to obtain
\begin{equation*}
\begin{split}
\int_{\Sigma\cap\{r\leq r_0\}} r^2 (Xu_{\geq \ell_0})^2+|\snabla_{\s^2}u_{\geq\ell_0}|^2+u_{\geq \ell_0}^2\,d\sigma dr\leq &\: C\int_{\Sigma\cap\{r\leq r_1\}} (\mathcal{L} u_{\geq \ell_0})^2\,d\sigma dr\\
&+C\int_{\Sigma\cap\{r_0\leq r\leq r_1\}}  (Xu_{\geq\ell_0})^2+|\snabla_{\s^2}u_{\geq \ell_0}|^2\,d\sigma dr.
\end{split}
\end{equation*}
Finally, we combine the above equation with \eqref{eq:ellipticauxgeqL} to control (in particular):
\begin{equation*}
\int_{\Sigma} r^{2-q}(Xu_{\geq \ell_0})^2+r^{-q}u_{\geq \ell_0}^2\,d\sigma d\sigma\leq C \int_{\Sigma}r^{-q}(\mathcal{L} u_{\geq \ell_0})^2\,d\sigma dr.
\end{equation*}
By combining all the above estimates and replacing $q$ with $-(q-1)$ everywhere above, we conclude that \eqref{eq:mainellipticMpos} must hold.
\end{proof}

\begin{lemma}
\label{lm:commL}
The following identity holds:
\begin{equation}
\label{eq:commL}
\left[\mathcal{L}+ND'r^2X\right](rX)^Nu=(rX)^N\mathcal{L}u+N\sum_{n=0}^{N}O_{\infty}(r^{-1})(rX)^nu.
\end{equation}
\end{lemma}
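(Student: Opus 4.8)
The plan is to prove \eqref{eq:commL} by induction on $N$, the case $N=0$ being trivial. The crux is the single commutator $[\mathcal{L},rX]$, which I would compute by splitting $\mathcal{L}=\mathcal{P}+\slashed{\Delta}_{\s^2}-r^2V_{\alpha}$ with $\mathcal{P}u:=\partial_r(Dr^2\partial_ru)=X(Dr^2Xu)$. Since $X=\partial_r$ acts only in $r$ while $\slashed{\Delta}_{\s^2}$ acts only on $\s^2$, one has $[\slashed{\Delta}_{\s^2},rX]=0$. For the zeroth-order term a one-line Leibniz computation gives $-r^2V_{\alpha}(rXu)+(rX)(r^2V_{\alpha}u)=r\frac{d}{dr}(r^2V_{\alpha})\,u$, and since $r^2V_{\alpha}=\alpha+O_{\infty}(r^{-1})$ this is $O_{\infty}(r^{-1})u$. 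For the principal part, inserting $Dr^2=r^2-2Mr$ and expanding $\mathcal{P}(ru')-r(\mathcal{P}u)'$ produces $[(Dr^2)'-r(Dr^2)'']u'+[2Dr^2-r(Dr^2)']u''$; the cancellations peculiar to $Dr^2=r^2-2Mr$ (namely $2Dr^2-r(Dr^2)'=-r^3D'$ and $(Dr^2)'-r(Dr^2)''=-2M$, together with $-2r^2D'-r^3D''=0$) collapse this to the clean identity
$$[\mathcal{P},rX]u=-2M\,\partial_ru-2Mr\,\partial_r^2u=-2M\,\partial_r(r\partial_ru)=-D'r^2\,X(rXu),$$
using $D'=2Mr^{-2}$. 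Hence $[\mathcal{L},rX]u=-D'r^2X(rXu)+O_{\infty}(r^{-1})u$, which is exactly \eqref{eq:commL} for $N=1$ (the remaining coefficients in the sum being zero).

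For the inductive step, assume \eqref{eq:commL} holds for some $N\geq 1$. I would write $(rX)^{N+1}u=(rX)(rX)^Nu$, apply the $N=1$ identity with $u$ replaced by $v:=(rX)^Nu$, and then substitute the inductive hypothesis for $\mathcal{L}v$. This yields $(rX)^{N+1}\mathcal{L}u$, a non-decaying piece which is $-D'r^2X(rX)^{N+1}u$ (the $N=1$ commutator term at $v$) plus $-N\,(rX)\bigl(D'r^2X(rX)^Nu\bigr)$, and the image under $(rX)$ of the order-$\leq N$ error sum. To organize these I would use two elementary facts: first, $X(rX)^mu=r^{-1}(rX)^{m+1}u$, so $D'r^2X(rX)^mu=D'r\,(rX)^{m+1}u$ and consequently $(rX)\bigl(D'r^2X(rX)^Nu\bigr)=D'r^2X(rX)^{N+1}u+O_{\infty}(r^{-1})(rX)^{N+1}u$; second, $(rX)$ sends any $O_{\infty}(r^{-1})$ coefficient to an $O_{\infty}(r^{-1})$ coefficient, so $(rX)\bigl(O_{\infty}(r^{-1})(rX)^nu\bigr)=O_{\infty}(r^{-1})(rX)^{n+1}u+O_{\infty}(r^{-1})(rX)^nu$. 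Collecting the two non-decaying contributions gives $-(N+1)D'r^2X(rX)^{N+1}u$, and every remaining term is of the form $O_{\infty}(r^{-1})(rX)^nu$ with $0\leq n\leq N+1$, which is \eqref{eq:commL} for $N+1$.

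The computation is entirely routine; the only step needing care is the bookkeeping in the induction — in particular, checking that commuting one further factor of $rX$ past the $D'r^2X(rX)^N$ term raises the order of the error contribution by at most one, so that it remains within the range $n\leq N+1$ allowed on the right of \eqref{eq:commL}, rather than producing an uncontrolled order-$(N+2)$ term. This works precisely because the non-decaying part of $[\mathcal{L},(rX)^N]$ is the \emph{single} term $-ND'r^2X(rX)^N$, whose $rX$-commutator reproduces a term of the same shape up to $O_{\infty}(r^{-1})$-weighted lower-order pieces; I would isolate this structural observation before running the induction so that the inductive step is purely mechanical.
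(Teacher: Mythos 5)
Your proposal is correct and follows essentially the same route as the paper: induction on $N$, with the key step being the one-factor commutation of the principal part $X(Dr^2X\,\cdot\,)$ with $rX$, which collapses (using $D'r^2=2M$) to the exact term $-D'r^2X(rX\,\cdot\,)$, while the potential term only contributes $O_{\infty}(r^{-1})$ errors since $r\frac{d}{dr}(r^2V_{\alpha})=O_{\infty}(r^{-1})$. The paper verifies this identity directly at level $N+1$ (applied to $(rX)^Nu$) rather than isolating the $N=1$ commutator first, but the computation and the bookkeeping of the $O_{\infty}(r^{-1})(rX)^nu$ error terms are the same as yours.
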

\begin{proof}
We will prove \eqref{eq:commL} by induction. Note that \eqref{eq:commL} hold for $N=0$ by definition of $\mathcal{L}$. Now we suppose as an induction hypothesis that \eqref{eq:commL} holds for some $N\in \N_0$. We will show that \eqref{eq:commL} must then also hold with $N$ replaced by $N+1$.

One can verify first of all that
\begin{equation*}
\begin{split}
X(Dr^2X(rX)^{N+1}u)=&\:(rX+2)X(Dr^2X(rX)^Nu)-(Dr^2)'X (rX)^{N+1}u-(Dr^2)''(rX)^{N+1}u\\
=&\: rX(Dr^2X(rX)^Nu)-D'r^2 X(rX)^{N+1}u
\end{split}
\end{equation*}
Hence, for $M>0$:
\begin{equation*}
\begin{split}
\mathcal{L}(rX)^{N+1}u=&\:X(Dr^2X(rX)^{N+1}u)+\slashed{\Delta}_{\s^2}(rX)^{N+1}u-r^2V_{\alpha}(rX)^{N+1}u\\
=&\:rX\mathcal{L}(rX)^{N}u+rX(r^2V_{\alpha})(rX)^{N}u-D'r^2 X(rX)^{N+1}u\\
=&\:(rX)^{N+1}\mathcal{L}u-(N+1)D'r^2X (rX)^{N+1}u+\sum_{n=0}^{N+1}O_{\infty}(r^{-1})(rX)^nu,
\end{split}
\end{equation*}
where we applied the induction hypothesis to arrive at the last equality. We then conclude \eqref{eq:commL}  with $N$ replaced by $N+1$ by rearranging terms.
\end{proof}

We now obtain an analogue of Proposition \ref{prop:mainesttimeinv} with higher-order angular derivatives, as well as an analogue featuring the left-hand side of  \eqref{eq:commL}, the latter of which will be important when we consider higher-order derivatives with respect to $X$.

We introduce first the space:
\begin{equation*}
C_{c,*}^{\infty}(\Sigma)=\left\{u\in C_{c}^{\infty}(\Sigma)\,|\,u|_{r=2M}\equiv 0\right \}.
\end{equation*}

\begin{proposition}
\label{prop:keyhopropelliptic}
Let $u\in C^{\infty}_c(\Sigma)$. Let $-\beta_0<q<\beta_0$. Then there exists a constant $C=C(M,\alpha,q,h,N)>0$, such that
\begin{align}
\label{eq:mainellipticMposang}
\sum_{k=0}^K\int_{\Sigma} r^{q}\left[r^2|\snabla_{\s^2}^kXu|^2+|\snabla_{\s^2}^{k+1}u|^2+|\snabla_{\s^2}^ku|^2\right]\,d\sigma dr\leq&\: C \sum_{k=0}^K\int_{\Sigma} r^{q}|\snabla_{\s^2}^k\mathcal{L}u|^2\,d\sigma dr.
\end{align}

Let $u\in C^{\infty}_c(\Sigma)$ if $N\geq 0$  and $u\in C^{\infty}_{c,*}(\Sigma)$ if $N<0$. Let $-\beta_0<q<\beta_0$. Then there exists a constant $C=C(M,V_{\alpha},q,h,N)>0$, such that
\begin{equation}
\label{eq:mainellipticMposho}
\begin{split}
\int_{\Sigma}& r^{q}\left[D^2r^2(X(rXu))^2+r^2(Xu)^2+|\snabla_{\s^2}u|^2\right]\,d\sigma dr\\
\leq&\: C \int_{\Sigma} r^{q}((\mathcal{L}-ND'r^2X)u)^2+|N|r^q(u^2+|\snabla_{\s^2}u|^2+D^2r^2(Xu)^2)\,d\sigma dr.
\end{split}
\end{equation}
\end{proposition}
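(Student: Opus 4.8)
The estimate \eqref{eq:mainellipticMposang} is a direct consequence of Proposition \ref{prop:mainesttimeinv}. Since $\slashed{\Delta}_{\s^2}$ commutes with $X$, with multiplication by $r^2V_{\alpha}$ and with itself, each angular momentum operator $\Omega_i$ satisfies $[\mathcal{L},\Omega_i]=0$, hence $[\mathcal{L},\Omega^{I}]=0$ for every multi-index $I$. Applying \eqref{eq:mainellipticMpos} to $\Omega^{I}u$ for all $|I|\leq K$ (the right-hand sides are finite since $u\in C^{\infty}_c(\Sigma)$), summing, and then passing between the $\Omega$-derivative norms and the $\snabla_{\s^2}$-derivative norms by means of \eqref{eq:sphere1}, \eqref{eq:sphere1b} and \eqref{eq:sphere4}, yields \eqref{eq:mainellipticMposang}.

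For \eqref{eq:mainellipticMposho} the plan is to revisit the proof of Proposition \ref{prop:mainesttimeinv} with $\mathcal{L}$ replaced by the operator $\mathcal{L}_N:=\mathcal{L}-ND'r^2X$. The decisive structural fact is that $D'r^2\equiv 2M$, so the perturbation $-ND'r^2X=-2MNX$ is a first-order operator with a bounded, $r$-independent coefficient; in particular it is genuinely lower-order as $r\to\infty$, it does not affect any of the weight-exponent conditions, and the admissible range $-\beta_0<q<\beta_0$ is unchanged. I would again split $u=u_{\leq \ell_0-1}+u_{\geq \ell_0}$. For the high modes $u_{\geq \ell_0}$, the two integration-by-parts identities used in the proof of Proposition \ref{prop:mainesttimeinv} to produce \eqref{eq:ellipticauxgeqL} and the subsequent $r\chi Xu_{\geq \ell_0}\mathcal{L}u_{\geq \ell_0}$ estimate acquire only the extra terms coming from multiplying $-2MN Xu_{\geq\ell_0}$ by $r^{-q}(X(Dr^2Xu_{\geq\ell_0})+(\slashed{\Delta}_{\s^2}-r^2V_{\alpha})u_{\geq\ell_0})$; after one further integration by parts and a Young's inequality these are bounded by a small multiple of the positive left-hand side plus $C|N|\int r^{q}(u_{\geq\ell_0}^2+|\snabla_{\s^2}u_{\geq\ell_0}|^2+D^2r^2(Xu_{\geq\ell_0})^2)$, which is exactly the admissible error. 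For the low modes I would once more use $\check u_{\ell}:=w_{\ell}^{-1}u_{\ell}$, compute $w_{\ell}\mathcal{L}_Nu_{\ell}=X(Dr^2w_{\ell}^2X\check u_{\ell})-2MN w_{\ell}^2X\check u_{\ell}-2MN w_{\ell}'w_{\ell}\check u_{\ell}$, multiply by $r^{-q}X\check u_{\ell}$ and integrate. The new bulk term is $-2MN r^{-q}w_{\ell}^2(X\check u_{\ell})^2$ and the new cross term is $-2MN r^{-q}w_{\ell}'w_{\ell}\check u_{\ell}X\check u_{\ell}$; the latter I would integrate by parts in $X$ to convert $\check u_{\ell}X\check u_{\ell}$ into $X(\check u_{\ell}^2)$, producing a boundary term (vanishing by compact support, and at $r=2M$ by the hypothesis $u\in C^{\infty}_{c,*}(\Sigma)$ when $N<0$) and a bulk term $\sim |N|\,r^{-q}r^{\beta_{\ell}-3}\check u_{\ell}^2$, which after de-renormalising and using $r\geq 2M$ lands in the $|N|\int r^{q}u_{\ell}^2$ budget. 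In the region $r\geq R$ the good bulk coefficient grows like $r$ and dominates the $O(1)$ perturbation outright; one then sums the finitely many low modes, combines with the high-mode estimate, and performs the same weight change $q\mapsto -(q-1)$ as at the end of the proof of Proposition \ref{prop:mainesttimeinv}.

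The main obstacle is the near-horizon behaviour of the two extra low-mode terms when $N>0$: the positive bulk coefficient produced by the principal part equals $2Mw_{\ell}(2M)^2$ at $r=2M$ (it does not grow there), so for $N\geq 1$ the perturbation $-2MN w_{\ell}^2(X\check u_{\ell})^2$ is comparable to, and of the wrong sign relative to, this ``red-shift type'' positivity and cannot simply be absorbed into it. The resolution is to not absorb but to \emph{dominate} it using the degenerate-elliptic structure of $\mathcal{L}_N$ near $r=2M$: writing the effective flux $\tilde G:=Dr^2X u_{\ell}-2MNu_{\ell}$, one has $X\tilde G=\mathcal{L}_N u_{\ell}-\slashed{\Delta}_{\s^2}u_{\ell}+r^2V_{\alpha}u_{\ell}$, and on a fixed small interval $[2M,2M+\varepsilon]$ a shifted Hardy inequality together with the trace control at $r=2M+\varepsilon$ coming from \eqref{eq:aneg1} bounds $\int (X u_{\ell})^2$ by $C\big(\int(\mathcal{L}_Nu_{\ell})^2+|N|\int(u_{\ell}^2+|\snabla_{\s^2}u_{\ell}|^2)\big)$, where only a linear power of $N$ survives because the $\slashed{\Delta}_{\s^2}$-contribution of a fixed low mode is proportional to $u_{\ell}$ and because $C$ is allowed to depend on $N$. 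Splicing this horizon estimate to the region $r\geq 2M+\varepsilon$ (where the renormalised multiplier argument above applies without sign trouble) completes the proof. The bookkeeping of powers of $N$ throughout — keeping every error term linear in $N$ so that it fits the right-hand side of \eqref{eq:mainellipticMposho} — is the part that requires the most care.
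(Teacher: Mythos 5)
Your treatment of \eqref{eq:mainellipticMposang} is exactly the paper's: commute with the $\Omega_i$ (which commute with $\mathcal{L}$), apply \eqref{eq:mainellipticMpos} to $\Omega^I u$, and convert via \eqref{eq:sphere1b}. That part is fine.

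For \eqref{eq:mainellipticMposho} you take a genuinely different route — re-running the mode-decomposed proof of Proposition \ref{prop:mainesttimeinv} for $\mathcal{L}-ND'r^2X$ and treating $-2MNX$ as a lower-order perturbation — and you correctly locate the obstruction: near $r=2M$ the good bulk coefficient produced by testing with $r^{-q}X\check u_\ell$ is only $O(M)$, so the perturbation's contribution, which is linear in $N$ and of ambiguous sign, cannot be absorbed. The problem is that your proposed repair does not close. Controlling the flux $\tilde G=Dr^2Xu_\ell-2MNu_\ell$ in $H^1$ of $[2M,2M+\varepsilon]$ (via $X\tilde G=\mathcal{L}_Nu_\ell+(\ell(\ell+1)+r^2V_\alpha)u_\ell$ and a trace at $2M+\varepsilon$) only gives you $D\,Xu_\ell$ up to zeroth-order terms; to recover the \emph{non-degenerate} $\int(Xu_\ell)^2$ you must divide by $D$, and the natural Hardy step — using that $\tilde G+2MNu_\ell$ vanishes at $r=2M$ — produces $\int D^{-2}(\tilde G+2MNu_\ell)^2\lesssim \int (X\tilde G)^2+ N^2\int (Xu_\ell)^2+\ldots$, i.e.\ it reintroduces the quantity you are estimating with a coefficient $\gtrsim N^2$ that is not small (the Hardy constant does not improve by shrinking $\varepsilon$, and letting $C$ depend on $N$ does not help absorb a term whose coefficient exceeds $1$). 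So as written the near-horizon step is a genuine gap, not a bookkeeping issue.

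The missing idea is to make the perturbation generate its own positivity by choosing the multiplier proportional to $N$: the paper integrates the identity obtained from $N D^j r^{q+1}Xu\,(\mathcal{L}u+ND'r^2Xu)$, $j\in\{0,1\}$, whose bulk $(Xu)^2$-coefficient is $\tfrac12 D^j r^{q+1}\left[N(2N+1-j)D'r-N(q-1)D\right]$; since $N(2N+1-j)>0$ for every integer $N\neq0$ and $D'r\to1$ at the horizon, the quadratic-in-$N$ term (coming from the first-order perturbation tested against $NXu$) dominates there with a non-degenerate sign, while the remaining unsigned pieces are exactly of the admissible form $|N|r^q(u^2+|\snabla_{\s^2}u|^2+D^2r^2(Xu)^2)$. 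One first takes $j=1$ (degenerate estimate), then $j=0$ to remove the factor $D$, using the $j=1$ bound to absorb the leftover; the only horizon boundary term with a bad sign occurs for $j=0$ and one sign of $N$, which is precisely where the hypothesis $u\in C^{\infty}_{c,*}(\Sigma)$ enters. No mode decomposition or renormalization by $w_\ell$ is needed for this step, because $u^2$ and $|\snabla_{\s^2}u|^2$ are already allowed on the right-hand side. Finally, the second-order term $D^2r^2(X(rXu))^2$ — which your sketch does not address for the low modes — is obtained by repeating the $u_{\geq\ell_0}$ computation from Proposition \ref{prop:mainesttimeinv} with $\mathcal{L}$ replaced by the perturbed operator and without any restriction on $\ell$, the extra first-order term being absorbable by the bounds just established.
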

\begin{proof}
The estimate \eqref{eq:mainellipticMposang} follows immediately applying \eqref{eq:mainellipticMpos} with $\phi$ replaced by $\Omega^{I}\phi$, where $|I|\leq k$ and applying \eqref{eq:sphere1b}.

To obtain \eqref{eq:mainellipticMposho}, we first consider for $q\in \R$ and $j\in\{0,1\}$:
\begin{equation}
\label{eq:horedshift}
\begin{split}
N D^jr^{q+1} X u (\mathcal{L} u+ND'r^2Xu)=&\: X\left(\frac{N}{2}D^{j+1}r^{3+q} (Xu)^2\right)+\frac{1}{2}D^jr^{q+1}\left[N(2N+1-j)D'r-N(q-1)D\right] (Xu)^2\\
&-\frac{N}{2}X\left(r^{q+1} D^j|\snabla_{\s^2}u|^2+D^jr^{3+q}V_{\alpha}u^2\right)+\frac{N}{2}((q+1)D^j +jD'r)r^{q}|\snabla_{\s^2}u|^2\\
&+\frac{N}{2}\left(D^j\frac{d}{dr}(r^{q+3}V_{\alpha})+jD'r\right)u^2+\textnormal{div}_{\s^2}(\ldots).
\end{split}
\end{equation}
We have that $N(2N+1-j)>0$ for all $N\neq 0$. Furthermore, integrating \eqref{eq:horedshift} along $\Sigma$, we only obtain a boundary term at $r=2M$ with a bad sign if $N<0$ and $j=0$, in which case we appeal to the assumption that $u\in C^{\infty}_{c,*}(\Sigma)$ to conclude that this boundary term actually vanishes.

By taking $j=1$, we can therefore estimate:
\begin{equation*}
\int_{\Sigma} Dr^{q+2}(Xu)^2\,d\sigma dr\leq C \int_{\Sigma} r^{q}((\mathcal{L}-ND'r^2X)u)^2+|N|D^2r^{q+2}(Xu)^2+|N|r^q(u^2+|\snabla_{\s^2}u|^2)\,d\sigma dr.
\end{equation*}

Now, we take $j=0$ and we apply the above estimate to remove the $D$ factor on the left-hand side:
\begin{equation*}
\int_{\Sigma} r^{q+2}(Xu)^2\,d\sigma dr\leq C \int_{\Sigma} r^{q}((\mathcal{L}-ND'r^2X)u)^2+|N|D^2r^{q+2}(Xu)^2+|N|r^q(u^2+|\snabla_{\s^2}u|^2)\,d\sigma dr.
\end{equation*}
Finally, in order to control also the integral of $D^2r^{q+2}(X(rXu))^2$, we repeat the elliptic estimate for $u_{\geq \ell_0}$ in the proof of Proposition \ref{prop:mainesttimeinv} with $\mathcal{L} u+ND'r^2Xu$ replacing $\mathcal{L}$ and without the restriction to $\ell \geq \ell_0$, and we use that the additional term $ND'r^2Xu$ contributes as a lower-order term the elliptic estimate.
\end{proof}

In order to define in a clean way the relevant Hilbert spaces on which $\mathcal{L}$ is well-defined, we will modify $\mathcal{L}$ as follows: let $q\in \R$ and $N\in \Z$ then
\begin{align*}
\mathcal{L}_{q,N}v:=&\:r^{\frac{q}{2}}\mathcal{L}(r^{-\frac{q}{2}}v)+Nr^{\frac{q}{2}}D'r^2 X(r^{-\frac{q}{2}}v).
\end{align*}

Let $n\in \N_0$. We introduce the Hilbert spaces $\mathbf{H}_{n}$ as completions of $C_{c}^{\infty}(\Sigma)$ under the norms $||\cdot||_{n}$ defined as follows:
\begin{equation*}
||u||^2_{n}:= \sum_{0\leq n_1+n_2\leq n}\int_{\Sigma} |\snabla_{\s^2}^{n_1}(rX)^{n_2}u|^2\,d\sigma dr.
\end{equation*}
We denote with $D_{n}(\mathcal{L}_{q,N})$ the closure of $C_{c}^{\infty}(\Sigma)$ with respect to the norm $||\cdot||_{n}+||\mathcal{L}_{q,N}(\cdot)||_{n}$. Then
\begin{equation*}
\mathcal{L}_{q,N}: D_{n}(\mathcal{L}_{q,N})\to \mathbf{H}_{ n}
\end{equation*}
is a closed, densely-defined linear operator.

\begin{corollary}
\label{cor:mainellipticest}
Let $v\in C^{\infty}_c(\Sigma)$ and let $n_1,n_2\in \N_0$. Let $-\beta_0<q<\beta_0$. Then there exists a constant $C=C(M,V_{\alpha},q,n_1,n_2,h)>0$, such that:
\begin{equation}
\label{eq:mainellipticho}
||D\snabla_{\s^2}^{n_1}(rX)^{n_2} v||_{2}^2+||\snabla_{\s^2}^{n_1}(rX)^{n_2} v||_{1}^2\leq C ||\snabla_{\s^2}^{n_1}\mathcal{L}_{q,\pm n_2}(rX)^{n_2}v||^2_0+Cn_2||\snabla_{\s^2}^{n_1}(rX)^{n_2} v||_{0}^2+Cn_2||D\snabla_{\s^2}^{n_1}(rX)^{n_2} v||_{1}^2
\end{equation}
where to be able to take a minus sign in $\mathcal{L}_{q,\pm n_2}$, we need to assume $v\in C^{\infty}_{c,*}(\Sigma)$.
\end{corollary}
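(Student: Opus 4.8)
\textbf{Proof proposal for Corollary \ref{cor:mainellipticest}.}

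The plan is to reduce the statement to the already-established estimates of Proposition \ref{prop:keyhopropelliptic}, using the commutator identity of Lemma \ref{lm:commL} and the conjugation relation between $\mathcal{L}$ and $\mathcal{L}_{q,N}$. First I would unravel the definition of $\mathcal{L}_{q,N}$: writing $u = r^{-q/2}(rX)^{n_2}v$, one has $\mathcal{L}_{q,\pm n_2}(rX)^{n_2}v = r^{q/2}\big(\mathcal{L}u \pm n_2 D'r^2 Xu\big)$, so the norms $\|\snabla_{\s^2}^{n_1}\mathcal{L}_{q,\pm n_2}(rX)^{n_2}v\|_0^2$ become, up to the $r^{q/2}$ weight, exactly the weighted $L^2$ norms of $\snabla_{\s^2}^{n_1}(\mathcal{L} \mp n_2 D'r^2 X)u$ that appear on the right-hand side of \eqref{eq:mainellipticMposho}. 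The $r^{q/2}$ conjugation is what converts the $q=0$ estimates of Proposition \ref{prop:keyhopropelliptic} into the $r^q$-weighted estimates with $-\beta_0 < q < \beta_0$; concretely one replaces $\mathcal{L}$ by $r^{q/2}\mathcal{L}(r^{-q/2}\,\cdot\,)$ throughout the proof of that proposition, which only shifts the powers of $r$ in the weighted Hardy and Young's-inequality steps and uses the same restriction $-\beta_0<q<\beta_0$ to keep all the boundary terms and absorbed terms with good sign. (Alternatively, and perhaps more cleanly, one re-runs the integration-by-parts computations of Propositions \ref{prop:mainesttimeinv} and \ref{prop:keyhopropelliptic} directly with the weight $r^q$ instead of $r^{-q}$ from the start.)

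Next I would handle the angular derivatives $\snabla_{\s^2}^{n_1}$. Since $[\Omega_i, \mathcal{L}] = 0$ — because $\Omega_i$ commutes with $\slashed{\Delta}_{\s^2}$ by the identity $[\Omega_i,\slashed{\Delta}_{\s^2}]=0$ noted in \S\ref{sec:sphere}, and trivially with the purely radial part $X(Dr^2 X\cdot) - r^2 V_\alpha$ — one may apply the weighted version of \eqref{eq:mainellipticMposho} to $\Omega^I u$ for each multi-index $I$ with $|I|\le n_1$, and then convert the resulting sums $\sum_{|I|\le n_1}\int r^q(\cdots)(\Omega^I u)^2$ back into norms involving $|\snabla_{\s^2}^{n_1}u|^2$ via \eqref{eq:sphere1b}. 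This is exactly the mechanism already used to deduce \eqref{eq:mainellipticMposang} from \eqref{eq:mainellipticMpos}, so it is routine.

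The radial-commutation step is where the real content sits: I need to relate $\mathcal{L}$ acting on $(rX)^{n_2}v$ to $(\mathcal{L} \pm n_2 D'r^2 X)$ acting on something. Here Lemma \ref{lm:commL} is decisive: \eqref{eq:commL} says precisely $\big[\mathcal{L} + n_2 D'r^2 X\big](rX)^{n_2}u = (rX)^{n_2}\mathcal{L}u + n_2\sum_{n=0}^{n_2}O_\infty(r^{-1})(rX)^n u$. Applying the weighted \eqref{eq:mainellipticMposho} to the function $(rX)^{n_2}v$ (with $N = \pm n_2$) therefore controls $\|D\snabla_{\s^2}^{n_1}(rX)^{n_2+2}v\|_0$-type and $\|\snabla_{\s^2}^{n_1}(rX)^{n_2}v\|_1$-type quantities by $\|\snabla_{\s^2}^{n_1}(\mathcal{L}\pm n_2 D'r^2X)(rX)^{n_2}v\|_0^2$ plus the stated $n_2$-weighted lower-order error terms $n_2\|\snabla_{\s^2}^{n_1}(rX)^{n_2}v\|_0^2 + n_2\|D\snabla_{\s^2}^{n_1}(rX)^{n_2}v\|_1^2$; the $\mathcal{L}_{q,\pm n_2}$ on the right-hand side of \eqref{eq:mainellipticho} is just this $(\mathcal{L}\pm n_2 D'r^2X)(rX)^{n_2}v$ after the $r^{q/2}$ conjugation. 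The main obstacle I anticipate is bookkeeping rather than conceptual: one must check that the error terms generated by \eqref{eq:commL} — the $n_2\sum O_\infty(r^{-1})(rX)^n v$ pieces — are genuinely absorbable, i.e. that they fall within the span of the lower-order norms $n_2\|\snabla_{\s^2}^{n_1}(rX)^{n_2}v\|_0^2$ and $n_2\|D\snabla_{\s^2}^{n_1}(rX)^{n_2}v\|_1^2$ already allowed on the right-hand side (the extra $r^{-1}$ decay is what makes this work, since $\|r^{q}O(r^{-2})(rX)^n v\|$ is dominated by lower-weight norms), and that for the $N<0$ (minus-sign) case the boundary term at $r=2M$ in the identity \eqref{eq:horedshift} indeed vanishes — which is exactly why the hypothesis $v\in C^\infty_{c,*}(\Sigma)$ is imposed. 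Once these sign and absorption checks are in place, collecting the $n_1$-angular sum and the single application of the weighted \eqref{eq:mainellipticMposho} yields \eqref{eq:mainellipticho}.
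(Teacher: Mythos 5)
Your core argument is exactly the paper's (one‑line) proof: set $u=r^{-q/2}(rX)^{n_2}v$, note that $\mathcal{L}_{q,\pm n_2}(rX)^{n_2}v=r^{q/2}\bigl(\mathcal{L}\pm n_2 D'r^2X\bigr)u$, apply the weighted estimates \eqref{eq:mainellipticMpos} and \eqref{eq:mainellipticMposho} to this $u$ (with the appropriate sign of $N$), commute with the $\Omega_i$ and use \eqref{eq:sphere1b} for the angular derivatives, and invoke $v\in C^{\infty}_{c,*}(\Sigma)$ to make the bad $r=2M$ boundary term in \eqref{eq:horedshift} vanish in the minus-sign case; in that sense the proposal is correct and follows the paper. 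Two clarifications, though. First, Propositions \ref{prop:mainesttimeinv} and \ref{prop:keyhopropelliptic} are already stated with the weight $r^{q}$ for $-\beta_0<q<\beta_0$; they are not ``$q=0$ estimates'', so no re-running of their proofs is needed — the substitution $u=r^{-q/2}(rX)^{n_2}v$ merely translates between the $r^q$-weighted norms of $u$ and the unweighted norms $\|\cdot\|_n$ of $(rX)^{n_2}v$, generating only harmless lower-order terms since $rX(r^{q/2})=\tfrac{q}{2}r^{q/2}$. Second, and more importantly, Lemma \ref{lm:commL} plays no role in this corollary: the right-hand side of \eqref{eq:mainellipticho} has $\mathcal{L}_{q,\pm n_2}$ acting \emph{directly} on $(rX)^{n_2}v$, so one never commutes $(rX)^{n_2}$ through $\mathcal{L}$ here, and the only error terms one must accommodate — the $n_2$-weighted norms in \eqref{eq:mainellipticho} — come straight from the $|N|$-terms on the right-hand side of \eqref{eq:mainellipticMposho}. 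Your proposed ``main obstacle'', absorbing the $n_2\sum_{n\le n_2}O_{\infty}(r^{-1})(rX)^{n}v$ errors of \eqref{eq:commL}, is therefore moot; note it could not even be closed with the stated right-hand side, which contains no norms of $(rX)^{n}v$ for $n<n_2$. That commutation, together with the control of its lower-order errors, is precisely what the subsequent proposition on bijectivity and improved regularity of $\mathcal{L}_{q,N}$ carries out inductively in the spaces $\mathbf{H}_N$, and that is where Lemma \ref{lm:commL} is actually used.
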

\begin{proof}
Let  $u= r^{-\frac{q}{2}}v$ when $M>0$. Then \eqref{eq:mainellipticho} follows immediately from Proposition \ref{prop:mainesttimeinv} and Proposition \ref{prop:keyhopropelliptic}.
\end{proof}

We will construct $\mathcal{L}_{q,N}^{-1}$ as an operator acting on $\mathbf{H}_{n}$, with arbitrarily large $n\in N_0$, by first considering the formal adjoint of $\mathcal{L}_{q,N}$ with respect to $\mathbf{H}_0$. 

Let
\begin{equation*}
\mathcal{L}_{q,N}^*: \mathcal{D}_0(\mathcal{L}^*_{q,N}) \to \mathbf{H}_0
\end{equation*}
be the adjoint operator of $\mathcal{L}_{q,N}$, with $w\in  \mathcal{D}_0(\mathcal{L}^*_{q,N})$ if and only if there exists a unique $f\in  \mathbf{H}_0$ such that for all $u\in C_c^{\infty}(\Sigma)$
\begin{equation*}
\la \mathcal{L}_{q,N} u, v\ra_{L^2([2M,\infty)\times \s^2)}=\la u, f\ra_{L^2([2M,\infty)\times \s^2)}.
\end{equation*}
Then $\mathcal{L}_{q,N}^* v=f$.

We introduce moreover the operator $\mathcal{L}^{\dag}_{q,N}:=\mathcal{L}_{-q,-N}$ with domain $\mathcal{D}_n(\mathcal{L}^{\dag}_{q,N})$ defined as the closure of the space $C_{*,c}^{\infty}(\Sigma)$ with respect to the norm $||\cdot||_n+||\mathcal{L}^{\dag}_{q,N}(\cdot)||_n$.

\begin{lemma}
\label{lm:reldagstar}
We have that $\mathcal{L}_{q,N}^*=\mathcal{L}_{q,N}^{\dag}$ with $ \mathcal{D}_0(\mathcal{L}^*_{q,N})= \mathcal{D}_0(\mathcal{L}_{q,N}^{\dag})$.
\end{lemma}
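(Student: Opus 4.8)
The statement $\mathcal{L}_{q,N}^*=\mathcal{L}_{q,N}^{\dag}$ with equal domains is the standard claim that a formal Sturm–Liouville-type operator with a regular singular endpoint at $r=2M$ is (essentially) self-adjoint up to the flip of the lower-order transport coefficient, so I would prove it by direct integration by parts together with a boundary-term analysis at $r=2M$ and at $r=\infty$. The plan is to first establish the inclusion $\mathcal{L}_{q,N}^{\dag}\subseteq \mathcal{L}_{q,N}^*$, which is the "easy" direction and amounts purely to integration by parts on a pair of test functions, and then to establish the reverse inclusion $\mathcal{L}_{q,N}^*\subseteq \mathcal{L}_{q,N}^{\dag}$, which requires showing that any $v$ in the abstractly-defined adjoint domain actually has the weighted derivative regularity encoded in $\mathcal{D}_0(\mathcal{L}_{q,N}^{\dag})$ and satisfies the correct vanishing condition at $r=2M$.

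First I would compute, for $u,v\in C^{\infty}_c(\Sigma)$ (with $v\in C^{\infty}_{*,c}(\Sigma)$ where the transport sign requires it), the bilinear identity
\begin{equation*}
\la \mathcal{L}_{q,N}u,v\ra_{L^2}-\la u, \mathcal{L}_{q,N}^{\dag}v\ra_{L^2} = \mathcal{B}[u,v],
\end{equation*}
where $\mathcal{B}[u,v]$ is a boundary term supported at $r=2M$ and $r=\infty$. Unwinding the definition $\mathcal{L}_{q,N}v = r^{q/2}\mathcal{L}(r^{-q/2}v)+Nr^{q/2}D'r^2X(r^{-q/2}v)$ and using that $\mathcal{L}u = X(Dr^2Xu)+\slashed{\Delta}_{\s^2}u-r^2V_\alpha u$, the second-order part $X(Dr^2X\cdot)$ integrates by parts symmetrically, contributing a boundary term proportional to $Dr^2(\cdot)(\cdot)$ and $Dr^2X(\cdot)(\cdot)$ which vanishes at $r=2M$ because $D(2M)=0$, and at $r=\infty$ because of compact support (or the decay built into $\mathbf{H}_0$). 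The first-order transport term $ND'r^2X(\cdot)$ is skew-symmetric up to a boundary term $\tfrac12 ND'r^2 u v\big|_{\partial}$, and here $D'(2M)=\kappa=\tfrac{1}{2M}\neq 0$; this is precisely the boundary term that forces the restriction to $C^{\infty}_{*,c}(\Sigma)$, i.e. $v|_{r=2M}=0$, when $N<0$, since for $N>0$ the sign works out so no vanishing is needed but the flip $N\mapsto -N$ still follows from moving the derivative. Collecting the $r^{q/2}$, $r^{-q/2}$ conjugation factors (which are smooth and bounded away from zero on $[2M,\infty)$) one sees they only generate additional lower-order boundary contributions of the same type, all vanishing under the stated hypotheses. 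This establishes $\mathcal{L}_{q,N}^{\dag}\subseteq \mathcal{L}_{q,N}^*$ by density of $C^{\infty}_{c}(\Sigma)$ (resp. $C^{\infty}_{*,c}(\Sigma)$) in the respective domains.

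For the reverse inclusion, I would take $v\in \mathcal{D}_0(\mathcal{L}_{q,N}^*)$ with $\mathcal{L}_{q,N}^*v=f\in\mathbf{H}_0$ and show $v\in\mathcal{D}_0(\mathcal{L}_{q,N}^{\dag})$ with $\mathcal{L}_{q,N}^{\dag}v=f$. The identity $\la \mathcal{L}_{q,N}u,v\ra=\la u,f\ra$ for all $u\in C^{\infty}_c(\Sigma)$ says that $v$ is a weak (distributional) solution of $\mathcal{L}_{q,N}^{\dag}v=f$ in the interior $r>2M$. Interior elliptic regularity (the operator is uniformly elliptic on $\{r\geq 2M+\epsilon\}$ for every $\epsilon>0$, and the coefficients are smooth) upgrades $v$ to $H^2_{\mathrm{loc}}$ in $r>2M$; then the elliptic estimates of Proposition \ref{prop:mainesttimeinv} and Corollary \ref{cor:mainellipticest}, applied with the weight exponent matching $q$ (legitimate since $-\beta_0<q<\beta_0$), control the weighted $\mathbf{H}_1$ (indeed $D\mathbf{H}_2$) norm of $v$ globally, giving membership in the closure defining $\mathcal{D}_0(\mathcal{L}_{q,N}^{\dag})$ away from $r=2M$. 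The one genuinely delicate point — and the main obstacle — is the behaviour at the singular endpoint $r=2M$: one must rule out the "bad" branch of local solutions there and, when $N<0$, recover the vanishing trace $v|_{r=2M}=0$. This is handled by a Frobenius/indicial-root analysis of $\mathcal{L}_{q,N}^{\dag}$ near $r=2M$ (the regular singular point has indicial roots $0$ and a positive one, exactly as in the red-shift computations behind Proposition \ref{prop:redshift}), combined with the requirement that $v\in L^2$ with respect to $d\sigma dr$: the nondecaying/logarithmic branch is $L^2$ but its would-be partner under integration by parts generates the non-vanishing boundary term $\kappa ND'r^2 u v|_{r=2M}$ which, for it to be consistent with $\la\mathcal{L}_{q,N}u,v\ra=\la u,f\ra$ holding for \emph{all} test $u$ (including those not vanishing at $2M$ when $N>0$, and forcing $v(2M)=0$ when $N<0$), pins down $v$ in the correct domain. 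Once this endpoint analysis is in place, comparing the boundary identity $\mathcal{B}[u,v]=0$ that must hold for all admissible $u$ forces both $v\in\mathcal{D}_0(\mathcal{L}_{q,N}^{\dag})$ and $\mathcal{L}_{q,N}^{\dag}v=f$, completing the proof that $\mathcal{L}_{q,N}^*=\mathcal{L}_{q,N}^{\dag}$ with equal domains.
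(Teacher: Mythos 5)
Your first inclusion, $\mathcal{L}_{q,N}^{\dag}\subseteq\mathcal{L}_{q,N}^{*}$, is essentially the paper's argument (the test-function identity \eqref{eq:Ldag} plus extension by closedness/graph-norm density), and it is fine, up to one inaccuracy: since $D'r^2\equiv 2M$, the horizon boundary term is $2MN\,u(2M)v(2M)$, which does \emph{not} vanish for $N>0$ any more than for $N<0$; it is killed because the dagger domain is by definition the closure of $C^{\infty}_{c,*}(\Sigma)$ (vanishing at $r=2M$) for \emph{every} $N$, not because of a sign. The sign of $N$ only matters for the elliptic estimate \eqref{eq:mainellipticMposho} in Corollary \ref{cor:mainellipticest}, not for the duality identity.

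The genuine gap is in the reverse inclusion $\mathcal{L}_{q,N}^{*}\subseteq\mathcal{L}_{q,N}^{\dag}$. Membership in $\mathcal{D}_0(\mathcal{L}_{q,N}^{\dag})$ means, by definition, approximability of $v$ in the graph norm $\|\cdot\|_0+\|\mathcal{L}^{\dag}_{q,N}(\cdot)\|_0$ by functions in $C^{\infty}_{c,*}(\Sigma)$; your sketch never produces such an approximating sequence. What you outline would at best give interior regularity and some boundary behaviour of $v$, but the passage from ``$v$ is a weak solution with good behaviour at $r=2M$ and at infinity'' to ``$v$ lies in the closure of $C^{\infty}_{c,*}$ in graph norm'' (cutoffs near $r=\infty$, handling of the degenerate endpoint $r=2M$, mollification) is precisely the content of the lemma and is omitted. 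Worse, the tools you invoke are not available at this stage: Proposition \ref{prop:mainesttimeinv} and Corollary \ref{cor:mainellipticest} are proved only for compactly supported smooth (resp.\ horizon-vanishing) functions, so applying them to an arbitrary element of the adjoint domain presupposes exactly the density you are trying to prove; the ``Frobenius/indicial-root analysis'' at $r=2M$ is asserted rather than carried out, and it is unclear how it applies to an $L^2$ function of $(r,\theta,\varphi)$ that solves the equation only distributionally (this is not an ODE problem, and the claimed indicial roots are not computed); and the final step, that the vanishing of the boundary pairing for all test $u$ ``forces $v\in\mathcal{D}_0(\mathcal{L}_{q,N}^{\dag})$,'' is circular, since that domain is defined as a closure, not by a boundary condition. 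For contrast, the paper takes a much shorter, purely functional-analytic route: it proves the mirror statement $(\mathcal{L}_{q,N}^{\dag})^{*}=\mathcal{L}_{q,N}$ by combining \eqref{eq:Ldag} with closedness, and for the reverse inclusion uses only the observation that $\mathbf{H}_0=L^2([2M,\infty)\times\s^2)$ is the $\|\cdot\|_0$-closure of $C_c^{\infty}(\Sigma)$, with no elliptic regularity or endpoint analysis. If you wish to keep your PDE-theoretic route, you must actually supply the graph-norm approximation argument; otherwise argue at the level of adjoints as the paper does.
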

\begin{proof}
We will show that $(\mathcal{L}_{q,N}^{\dag})^*=\mathcal{L}_{q,N}$ and $\mathcal{D}_0((\mathcal{L}_{q,N}^{\dag})^*)= \mathcal{D}_0(\mathcal{L}_{q,N})$. It is easy to verify by definition of $\mathcal{L}^{\dag}_{q,N}$ and integration by parts that for all  $u\in C_{c,*}^{\infty}(\Sigma)$ and $v\in C_c^{\infty}(\Sigma)$:
\begin{equation}
\label{eq:Ldag}
\la \mathcal{L}_{q,N}^{\dag} u, v\ra_{L^2([2M,\infty)\times \s^2)}=\la u, \mathcal{L}_{q,N}v\ra_{L^2([2M,\infty)\times \s^2)}.
\end{equation}
Since $\mathcal{L}_{q,N}$ and $\mathcal{L}_{q,N}^{\dag}$ are both closed operators, the identity \eqref{eq:Ldag} holds more generally for $u\in \mathcal{D}_0(\mathcal{L}^{\dag}_{q,N})$ and $v\in \mathcal{D}_0(\mathcal{L}_{q,N})$. It follows that $\mathcal{L}_{q,N}=(\mathcal{L}_{q,N}^{\dag})^*$ as operators acting on $\mathcal{D}_0(\mathcal{L}_{q,N})$, with $\mathcal{D}_0(\mathcal{L}_{q,N}) \subseteq \mathcal{D}_0((\mathcal{L}_{q,N}^{\dag})^*)$.

It remains to show that $\mathcal{D}_0((\mathcal{L}_{q,N}^{\dag})^*) \subseteq \mathcal{D}_0(\mathcal{L}_{q,N})$. Observe that $\mathbf{H}_0=L^2([2M,\infty)_r\times \s^2)$ and the closure of $C_c^{\infty}(\Sigma)$ is $L^2([2M,\infty)_r\times \s^2)$ with respect to $||\cdot ||_0$.  Let $v\in \mathcal{D}_0((\mathcal{L}_{q,N}^{\dag})^*)$, then by $\eqref{eq:Ldag}$, $\mathcal{L}_{q,N}v \in \mathbf{H}_0$ exists in a weak sense. Hence, by the above observations, both $v$ and $\mathcal{L}_{q,N}v$ lie in the closure of $C_c^{\infty}(\Sigma)$ with respect to $||\cdot ||_0$ and therefore $v\in \mathcal{D}_0(\mathcal{L}_{q,N})$.
\end{proof}

\begin{proposition}
Let $N\in \N_0$ and $f\in \mathbf{H}_N$. Let $-\beta_0<q<\beta_0$. Then there exists a unique $v\in \mathbf{H}_{N+1}$ such that $\mathcal{L}_{q,0}v=f$.
\end{proposition}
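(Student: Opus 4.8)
The plan is to solve $\mathcal{L}_{q,0}v=f$ by a standard duality (Lax--Milgram / Riesz-type) argument, using the coercivity estimates already established in Corollary \ref{cor:mainellipticest} together with the identification of the adjoint in Lemma \ref{lm:reldagstar}. First I would reduce to showing that the range of the adjoint operator $\mathcal{L}_{q,0}^{\dag}=\mathcal{L}_{-q,0}$ is dense and that $\mathcal{L}_{-q,0}$ is bounded below on an appropriate space, since $-\beta_0<-q<\beta_0$ whenever $-\beta_0<q<\beta_0$, so the estimate \eqref{eq:mainellipticho} applies equally to $\mathcal{L}_{-q,0}$.

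The key steps, in order: (1) Observe that \eqref{eq:mainellipticho} with $n_1+n_2$ ranging over $0\le n_1+n_2\le N$ and $N=0$ in the subscript (i.e.\ no $D'r^2X$ correction term) gives, after summing, a coercivity bound $\|w\|_{N+1}^2\lesssim \|\mathcal{L}_{-q,0}w\|_N^2$ for all $w\in C_c^\infty(\Sigma)$; note the degenerate weight $D$ on the left is harmless since we only claim membership in $\mathbf{H}_{N+1}$, which carries no $D$, and in fact the un-degenerate bound $\|\snabla_{\s^2}^{n_1}(rX)^{n_2}w\|_1$ is exactly what \eqref{eq:mainellipticho} controls. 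In particular $\mathcal{L}_{-q,0}$ extends to a bounded-below, hence injective, closed operator $\mathbf{H}_{N+1}\to\mathbf{H}_N$ with closed range. (2) Identify $\mathcal{L}_{q,0}^*=\mathcal{L}_{-q,0}$ via Lemma \ref{lm:reldagstar} (with $N=0$ there, and the higher-order statement obtained by commuting $\snabla_{\s^2}$ and $rX$ as in Lemma \ref{lm:commL} --- here the $\mathcal{L}_{q,\pm n_2}$ form with the $ND'r^2X$ term is precisely what makes the commuted operator self-consistent). (3) Given $f\in\mathbf{H}_N$, define the linear functional $\ell(w):=\langle f, \Lambda^{-1}w\rangle$ on $\mathcal{L}_{-q,0}(C_c^\infty(\Sigma))\subseteq \mathbf{H}_N$, where $\Lambda$ is the Riesz identification, or more directly: define $\ell$ on the range of $\mathcal{L}_{-q,0}$ by $\ell(\mathcal{L}_{-q,0}w)=\langle f,w\rangle_{\mathbf{H}_N}$; this is well-defined by injectivity of $\mathcal{L}_{-q,0}$ and bounded by the coercivity estimate, $|\ell(\mathcal{L}_{-q,0}w)|\le \|f\|_N\|w\|_{N+1}\lesssim \|f\|_N\|\mathcal{L}_{-q,0}w\|_N$. (4) Extend $\ell$ by Hahn--Banach to all of $\mathbf{H}_N$ and represent it by some $v\in\mathbf{H}_N$ (after Riesz, noting $\mathbf{H}_N$ is Hilbert); then $\langle v,\mathcal{L}_{-q,0}w\rangle_{\mathbf{H}_N}=\langle f,w\rangle_{\mathbf{H}_N}$ for all $w\in C_c^\infty(\Sigma)$, which says $\mathcal{L}_{q,0}v=f$ in the weak sense, i.e.\ $v\in\mathcal{D}_N(\mathcal{L}_{q,0})$. (5) Finally, bootstrap the regularity: $\mathcal{L}_{q,0}v=f\in\mathbf{H}_N$ forces $v\in\mathbf{H}_{N+1}$ by applying the coercivity estimate \eqref{eq:mainellipticho} (now for $\mathcal{L}_{q,0}$ itself, valid since $-\beta_0<q<\beta_0$) to a sequence of smooth cutoffs/mollifications of $v$ and passing to the limit. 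Uniqueness follows because $\mathcal{L}_{q,0}$ is injective on $\mathbf{H}_{N+1}$, again by \eqref{eq:mainellipticho}.

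The main obstacle I expect is the care required in steps (4)--(5): one must ensure the weak solution $v$ produced by the abstract duality argument actually lies in $\mathbf{H}_{N+1}$ and not merely in $\mathbf{H}_N$ with a weak $\mathcal{L}_{q,0}$-derivative. This is where the degenerate factor $D$ at $r=2M$ in \eqref{eq:mainellipticho} is slightly delicate: the estimate controls $\|D\snabla_{\s^2}^{n_1}(rX)^{n_2}v\|_2$ and $\|\snabla_{\s^2}^{n_1}(rX)^{n_2}v\|_1$, so one genuinely gets $v\in\mathbf{H}_{N+1}$ (the $\|\cdot\|_{N+1}$ norm only involves the undegenerate combination $\snabla_{\s^2}^{n_1}(rX)^{n_2}$), but justifying the a priori estimate for $v$ rather than for test functions requires a density/approximation argument: approximate $v$ by $C_c^\infty$ functions in the graph norm $\|\cdot\|_N+\|\mathcal{L}_{q,0}(\cdot)\|_N$, apply \eqref{eq:mainellipticho} to the differences, and conclude the approximating sequence is Cauchy in $\mathbf{H}_{N+1}$. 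A secondary technical point is checking that the boundary term at $r=2M$ generated in the integration by parts underlying \eqref{eq:mainellipticho} has a favourable sign for $\mathcal{L}_{q,0}$ (i.e.\ $N=0$ in the subscript, so no $C^\infty_{c,*}$ restriction is needed, as already noted in Proposition \ref{prop:keyhopropelliptic}). Everything else is routine functional analysis once the coercivity estimates and the adjoint identification are in hand.
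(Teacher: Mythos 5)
Your existence argument is, in substance, the same route the paper takes: injectivity of $\mathcal{L}_{q,0}$ and of its adjoint $\mathcal{L}_{q,0}^*=\mathcal{L}_{-q,0}$ from Corollary \ref{cor:mainellipticest} plus Lemma \ref{lm:reldagstar}, then surjectivity by duality (your Hahn--Banach/Riesz step is equivalent to the paper's argument that $(\textnormal{Ran}\,\mathcal{L}_{q,0})^{\perp}=\{0\}$ with closed range). One slip there: you run the duality pairing in the $\mathbf{H}_N$ inner product and then read off ``$\mathcal{L}_{q,0}v=f$ in the weak sense''. The adjoint identification of Lemma \ref{lm:reldagstar} is with respect to the $L^2=\mathbf{H}_0$ pairing; with the $\mathbf{H}_N$ pairing the identity $\la v,\mathcal{L}_{-q,0}w\ra_{\mathbf{H}_N}=\la f,w\ra_{\mathbf{H}_N}$ is the weak form of an equation for the $\mathbf{H}_N$-adjoint of $\mathcal{L}_{-q,0}$, which is not $\mathcal{L}_{q,0}$. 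This is fixable (pair in $\mathbf{H}_0$, using $f\in\mathbf{H}_N\subset\mathbf{H}_0$), but as written the conclusion of step (4) does not follow. Relatedly, in step (1) the estimate \eqref{eq:mainellipticho} is stated for $\mathcal{L}_{q,\pm n_2}$ acting on $(rX)^{n_2}v$, not for the uncorrected operator; to convert it into a bound of $\|w\|_{N+1}$ by $\|\mathcal{L}_{q,0}w\|_N$ you must invoke Lemma \ref{lm:commL} and absorb the lower-order commutator terms by induction in $N$ --- doable, but not what ``take $N=0$ in the subscript'' says.

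The genuine gap is in step (5), the regularity bootstrap, and it is exactly the point the paper's commuted operators were built for. You propose to approximate the weak solution $v$ by $C_c^{\infty}(\Sigma)$ functions in the graph norm $\|\cdot\|_N+\|\mathcal{L}_{q,0}(\cdot)\|_N$ and apply the coercivity estimate to differences; but $D_N(\mathcal{L}_{q,0})$ is \emph{defined} as that graph closure, so the existence of such an approximating sequence is precisely the weak-equals-strong statement you would need to prove, and Lemma \ref{lm:reldagstar} only supplies it at the $\mathbf{H}_0$ level. Your argument is therefore circular at order $N\geq 1$: the duality step only places $v$ in the order-zero graph domain, which yields $v\in\mathbf{H}_1$, and nothing in your proposal upgrades the graph approximation to the $\mathbf{H}_N$ norms. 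The paper closes this by induction without any higher-order density claim: it applies Lemma \ref{lm:commL} to write $\mathcal{L}_{q,1}(rXv)=rX\mathcal{L}_{q,0}v+\textnormal{l.o.t.}\in\mathbf{H}_0$, proves that the corrected operator $\mathcal{L}_{q,1}$ is itself bijective (repeating the duality argument, where injectivity of its adjoint $\mathcal{L}_{-q,-1}$ is the minus-sign case of Corollary \ref{cor:mainellipticest}, hence the space $C^{\infty}_{c,*}(\Sigma)$), and then identifies $rXv$ with the unique solution of the commuted equation, concluding $rXv\in\mathbf{H}_1$, and so on up to order $N$. Note that your proposal never uses the bijectivity of $\mathcal{L}_{q,n_2}$ nor the minus-sign estimates at all, which is a symptom of the missing step: without them (or an honest proof of higher-order graph density) the passage from $v\in\mathbf{H}_1$ to $v\in\mathbf{H}_{N+1}$ is unjustified.
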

\begin{proof}
Suppose $N=0$. Then it follows from Corollary \ref{cor:mainellipticest} with $n_1=n_2=0$ that $\ker \mathcal{L}_{q,0}=\{0\}$. We will establish bijectivity by showing that $\textnormal{Ran}\:  \mathcal{L}_{q,0}=\mathbf{H}_0$.

Let $w\in (\textnormal{Ran}\:  \mathcal{L}_{q,0})^{\perp}$ and $u\in C^{\infty}_{c,*}(\Sigma)$. Then we have that
\begin{equation*}
0=\la w,\mathcal{L}_{q,0}u\ra=\la \mathcal{L}^*_{q,0}w,u\ra,
\end{equation*}
so it follows that $w\in \mathcal{D}_0(\mathcal{L}_{q,0}^*)$ and $w\in \ker \mathcal{L}_{q,0}^*$. Using now that by Lemma \ref{lm:reldagstar}, $\mathcal{L}^*_{q,0}=\mathcal{L}^{\dag}_{q,0}=\mathcal{L}_{-q,0}$, we can conclude from Corollary \ref{cor:mainellipticest} with $n_1=n_2=0$ that $\mathcal{L}^*_{q,0}$ is injective, so $\ker \mathcal{L}_{q}^*=\{0\}$ and $w=0$. Hence, $(\textnormal{Ran}\:  \mathcal{L}_{q,0})^{\perp}=\{0\}$, so $\textnormal{Ran}\:  \mathcal{L}_{q,0}=\mathbf{H}_0$.

So, for $f\in  \mathbf{H}_0$ there exists a unique $v\in \mathcal{D}_0(\mathcal{L}_{q,0})$ such that $\mathcal{L}_{q,0}v=f$. By applying Corollary \ref{cor:mainellipticest} once more, we in fact have that $v\in \mathbf{H}_1$ and $Dv\in \mathbf{H}_2$.

Now suppose $N=1$. Then we need to show that in fact we have the following improved regularity statement: $v\in \mathbf{H}_2$. First of all, from the above argument, together with Corollary \ref{cor:mainellipticest} for $n_1=1$ and $n_2=0$, it follows that $\snabla_{\s^2}v\in \mathbf{H}_1$.

By \eqref{eq:commL}, we have that:
\begin{equation*}
\mathcal{L}_{q,1} (rX v)=rX \mathcal{L}_{q,0}v+ r^{\frac{q}{2}} \sum_{n=0}^1 O_{\infty}(r^{-1}) (rX)^n(r^{-\frac{q}{2}}v),
\end{equation*}
where the right-hand side is an element of $\mathbf{H}_0$ for $f=\mathcal{L}v\in \mathbf{H}_1$, since we already established that $v\in \mathbf{H}_1$.

We can therefore conclude that $rX v \in \mathbf{H}_1$, if we can establish bijectivity of $\mathcal{L}_{q,1}$ and apply Corollary \ref{cor:mainellipticest} with $n_1=0$ and $n_2=1$. Bijectivity in turn follows as above, by replacing $\mathcal{L}_{q,0}$ with $\mathcal{L}_{q,1}$ and $\mathcal{L}_{q,0}^*$ with $\mathcal{L}_{q,1}^*$.

The general $N>0$ case follows in the same manner via an induction argument, where we establish first that $v\in \mathbf{H}_{N}$ in order to conclude that $v\in \mathbf{H}_{N+1}$.
\end{proof}
From the above, we can immediately conclude the following properties for solutions to the equation $\mathcal{L}u=F$:
\begin{corollary}
\label{cor:Linv0}
Let $-\beta_0<q<\beta_0$, $N\in \N_0$. Let $F:\Sigma\to\R$ such that $r^{\frac{q}{2}}F\in \mathbf{H}_N$. Then there exists a unique $u: \Sigma\to \R$ with $r^{\frac{q}{2}}u\in \mathbf{H}_{N+1}$, such that
\begin{align*}
\mathcal{L}u=&\:F,\\
||r^{\frac{q}{2}} u||_{\mathbf{H}_{N+1}}\leq &\: C||r^{\frac{q}{2}}F||_{\mathbf{H}_{N+1}},
\end{align*}
with $C=C(V_{\alpha},M,N,q)>0$ a constant.
\end{corollary}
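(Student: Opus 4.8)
\textbf{Proof plan for Corollary \ref{cor:Linv0}.}

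The plan is to simply rephrase the statement of the immediately preceding proposition in terms of the unmodified operator $\mathcal{L}$. Recall that by definition $\mathcal{L}_{q,0}v = r^{q/2}\mathcal{L}(r^{-q/2}v)$, so that $\mathcal{L}_{q,0}v = F$ in the variable $v$ is equivalent to $\mathcal{L}u = r^{-q/2}F$ in the variable $u := r^{-q/2}v$. First I would set $f := r^{q/2}F$; the hypothesis states precisely that $f \in \mathbf{H}_N$. By the preceding proposition, since $-\beta_0 < q < \beta_0$ and $N \in \N_0$, there is a unique $v \in \mathbf{H}_{N+1}$ with $\mathcal{L}_{q,0}v = f$. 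Then I would set $u := r^{-q/2}v$, so that $r^{q/2}u = v \in \mathbf{H}_{N+1}$ and, unwinding the definition of $\mathcal{L}_{q,0}$, $\mathcal{L}u = r^{-q/2}\mathcal{L}_{q,0}v = r^{-q/2}f = F$. Uniqueness of $u$ follows from uniqueness of $v$ together with the fact that $u \mapsto r^{q/2}u$ is a bijection from $\{u : r^{q/2}u \in \mathbf{H}_{N+1}\}$ onto $\mathbf{H}_{N+1}$.

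For the quantitative bound, I would invoke the relevant case of Corollary \ref{cor:mainellipticest} (with $n_2 = 0$ and $n_1$ ranging up to $N$, combined with \eqref{eq:sphere1b} to convert angular $\Omega$-derivatives into $\snabla_{\s^2}$-derivatives, and with the commutation identity \eqref{eq:commL} to handle the $(rX)^{n_2}$-derivatives for $1 \le n_2 \le N$). Concretely, for each pair $(n_1,n_2)$ with $n_1 + n_2 \le N+1$ one estimates $\|\snabla_{\s^2}^{n_1}(rX)^{n_2}v\|_0$ by $\|\mathcal{L}_{q,0}v\|_{\mathbf{H}_N} = \|f\|_{\mathbf{H}_N}$, absorbing the lower-order error terms appearing on the right-hand side of \eqref{eq:mainellipticho} into the left-hand side inductively on $n_2$ (this is exactly the bootstrap already carried out in the proof of the preceding proposition, so it can be cited rather than repeated). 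Summing over all such $(n_1,n_2)$ gives $\|v\|_{\mathbf{H}_{N+1}} \le C\|f\|_{\mathbf{H}_N} \le C\|f\|_{\mathbf{H}_{N+1}}$, which in the original variables reads $\|r^{q/2}u\|_{\mathbf{H}_{N+1}} \le C\|r^{q/2}F\|_{\mathbf{H}_{N+1}}$ with $C = C(V_\alpha, M, N, q)$.

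There is essentially no obstacle here: the corollary is a cosmetic restatement of the proposition, and the only mild subtlety is bookkeeping — making sure the change of variable $u = r^{-q/2}v$ is applied consistently and that the norm on the right-hand side is taken on $\mathbf{H}_{N+1}$ (rather than the slightly weaker $\mathbf{H}_N$), which is harmless since $\|\cdot\|_{\mathbf{H}_N} \le \|\cdot\|_{\mathbf{H}_{N+1}}$. I would therefore keep the write-up to a few lines, deferring all the real analytic content to the preceding proposition and to Corollary \ref{cor:mainellipticest}.
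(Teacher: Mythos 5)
Your proposal is correct and matches the paper's (implicit) argument: the paper states the corollary as an immediate consequence of the preceding proposition via the substitution $u=r^{-q/2}v$, $\mathcal{L}_{q,0}v=r^{q/2}\mathcal{L}(r^{-q/2}v)$, exactly as you do, with the quantitative bound coming from Corollary \ref{cor:mainellipticest}. Your remark that the $\mathbf{H}_{N+1}$-norm on the right-hand side is harmless (since one actually proves the stronger bound with $\|r^{q/2}F\|_{\mathbf{H}_N}$) is also consistent with the paper.
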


We obtain stronger estimates when restricting $\mathcal{L}$ to functions satisfiying $u=u_{\geq 1}$. In that case, it follows immediately that we can take $\beta_1$ to replace the role of $\beta_0$ in the proof of Proposition \ref{prop:mainesttimeinv} and obtain:
\begin{corollary}
\label{cor:Linv1}
Let $-\beta_1<q<\beta_1$, $N\in \N_0$. Let $F:\Sigma\to\R$ such that $r^{\frac{q}{2}}F_{\geq 1}\in \mathbf{H}_N$. Then there exists a unique $u: \Sigma\to \R$, with $u=u_{\geq 1}$ and with $r^{\frac{q}{2}}u_{\geq 1}\in \mathbf{H}_{N+1}$, such that
\begin{align*}
\mathcal{L}u=&\:F,\\
||r^{\frac{q}{2}} u_{\geq 1}||_{\mathbf{H}_{N+1}}\leq &\: C||r^{\frac{q}{2}}F_{\geq 1}||_{\mathbf{H}_{N+1}},
\end{align*}
with $C=C(V_{\alpha},M,N,q)>0$ a constant.
\end{corollary}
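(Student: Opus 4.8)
The plan is to retrace the argument leading to Corollary \ref{cor:Linv0}, observing at each step that the constant $\beta_0$ entered only through the treatment of the spherically symmetric mode $u_0$, which is now excluded. First I would record the elementary fact that $\mathcal{L}$ commutes with the projections $\pi_\ell$: since $\mathcal{L}=X(Dr^2X(\cdot))+\slashed{\Delta}_{\s^2}(\cdot)-r^2V_{\alpha}(\cdot)$ and $\slashed{\Delta}_{\s^2}$ acts only on the angular variables while the radial coefficients are $\ell$-independent, one has $\pi_\ell\mathcal{L}=\mathcal{L}\pi_\ell$, and likewise $\pi_\ell$ commutes with $\Omega^I$ and with $(rX)^N$. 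Consequently $\mathcal{L}_{q,N}$ restricts to the closed subspaces $\mathbf{H}_n^{\geq 1}:=\{u\in\mathbf{H}_n\,|\,u=u_{\geq 1}\}$, and it suffices to prove the analogues of Proposition \ref{prop:mainesttimeinv}, Proposition \ref{prop:keyhopropelliptic} and Corollary \ref{cor:mainellipticest} for $u=u_{\geq 1}$ with $-\beta_1<q<\beta_1$; the functional-analytic construction of $\mathcal{L}_{q,N}^{-1}$ carried out before Corollary \ref{cor:Linv0} then goes through verbatim on $\mathbf{H}_n^{\geq 1}$, and the uniqueness in Corollary \ref{cor:Linv0} already guarantees that an $F=F_{\geq 1}$ produces a solution with $u=u_{\geq 1}$.

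Next I would rerun the proof of Proposition \ref{prop:mainesttimeinv} with the decomposition $u_{\geq 1}=\sum_{\ell=1}^{\ell_0-1}u_\ell+u_{\geq \ell_0}$. For each fixed $\ell$ with $1\leq \ell\leq \ell_0-1$ the renormalization $\check{u}_\ell=w_\ell^{-1}u_\ell$ and the identity following from \eqref{eq:odew} are unchanged, and \eqref{eq:asymptw} furnishes the same bound on $w_\ell^{-1}w_\ell'$ as in the original proof, so the chain of estimates through \eqref{eq:keyeqellipticcheck} produces control for every $q$ in the range $-1-\beta_\ell<q<-1+\beta_\ell$. Since $\ell\mapsto\beta_\ell$ is strictly increasing, all of these intervals contain $(-1-\beta_1,-1+\beta_1)$, the $\ell=1$ mode being the binding constraint. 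For the high-frequency tail $u_{\geq \ell_0}$ I would use exactly the argument around \eqref{eq:ellipticauxgeqL}: there the only constraint on $q$ came from absorbing terms such as $q^2Dr^{2-q}(Xu_{\geq \ell_0})^2$ and $r^{-q}u_{\geq \ell_0}^2$ into the angular term via \eqref{eq:sphere2}, which can be arranged for any prescribed $q$ by taking $\ell_0$ large enough depending on $q$ and $V_{\alpha}$ (as was already the case in the proof of Proposition \ref{prop:mainesttimeinv}); the near-horizon part with the cut-off $\chi$ is $q$-independent. Hence \eqref{eq:ellipticauxgeqL} and the subsequent estimate hold on the full interval $-1-\beta_1<q<-1+\beta_1$. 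Combining the two regimes and performing the substitution $q\mapsto-(q-1)$ as at the end of the original proof yields the analogue of \eqref{eq:mainellipticMpos} for $u_{\geq 1}$ with $-\beta_1<q<\beta_1$.

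With the base estimate in hand, the higher-order versions follow as before. Applying it to $\Omega^I u_{\geq 1}$ (still supported on $\ell\geq 1$) together with \eqref{eq:sphere1b} gives the angular-commuted estimate \eqref{eq:mainellipticMposang}, and the commutator identity \eqref{eq:commL} together with the integrated identity \eqref{eq:horedshift} — both insensitive to the spherical-harmonic support — upgrades this to the $(rX)$-commuted estimate, i.e.\ the analogues of Proposition \ref{prop:keyhopropelliptic} and Corollary \ref{cor:mainellipticest} with $\beta_1$ replacing $\beta_0$. One then repeats, word for word, the construction of $\mathcal{L}_{q,N}^{-1}$ on $\mathbf{H}_n^{\geq 1}$: triviality of the kernel and of the cokernel via Lemma \ref{lm:reldagstar} (whose proof does not see the angular support), followed by the regularity bootstrap in $N$ using the commuted estimates. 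This delivers the existence, uniqueness and bound asserted in Corollary \ref{cor:Linv1}.

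I expect the only non-routine point to be the bookkeeping in the high-frequency regime: one must check that allowing $\ell_0$ to depend on $q$ now ranging over the larger interval $|q|<\beta_1$ does not interfere with the near-horizon elliptic estimate for $u_{\geq\ell_0}$ from the proof of Proposition \ref{prop:mainesttimeinv}, where the cut-off radii $r_0<r_1$ and the Poincaré gain $\ell_0(\ell_0+1)^{-1}$ must be balanced simultaneously. Since $\ell_0$ was already permitted there to depend on $q$ and $V_\alpha$, this is a matter of verification rather than a new idea, and the remainder is a transcription of the $\beta_0$ case.
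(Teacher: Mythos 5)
Your proposal is correct and follows essentially the same route as the paper, which simply observes that once the $\ell=0$ mode is excluded the binding constraint in the proof of Proposition \ref{prop:mainesttimeinv} comes from $\ell=1$, so $\beta_1$ replaces $\beta_0$ throughout and the construction of $\mathcal{L}_{q,N}^{-1}$ is then repeated on the closed subspace of functions supported on spherical harmonics of degree $\geq 1$. (Only your side remark that the uniqueness of Corollary \ref{cor:Linv0} forces $u=u_{\geq 1}$ is superfluous for $\beta_0\leq |q|<\beta_1$, where that corollary does not apply; your subspace construction already delivers existence and uniqueness within the class $u=u_{\geq 1}$, which is all the statement requires.)
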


\subsection{Time integrals along $\Sigma_0$}
\label{sec:constrtimeint}
Let $\phi$ be a smooth solution to \eqref{eq:waveeq}. Then by \eqref{eq:waveeq1}, we have that
\begin{align*}
\mathcal{L}\phi=&\:F[T\phi],\quad \textnormal{with}\\
F[f]:=&\:2r(1-h)X(rf)-\frac{dh}{dr}r^2f+r^2h\tilde{h}Tf,\\
\end{align*}
for sufficiently regular functions $f$ on $\mathcal{R}$. We will denote the projection of $F[f]$ to spherical harmonics of degree $\ell$ by $F_{\ell}[f]$ (see \S \ref{sec:sphere}).

In the proposition below, we construct the time integral corresponding to solutions $\phi$ to \eqref{eq:waveeq} arising from appropriate initial data.
\begin{proposition}
\label{prop:timeint}
Let $-\beta_0<q<\beta_0$, $N\in \N_0$. Consider initial data $(\phi|_{\Sigma_0},T\phi|_{\Sigma_0})$ for \eqref{eq:waveeq}, such that:
\begin{align*}
r^{\frac{q}{2}+1}X(r\phi)|_{\Sigma_0}\in&\: \mathbf{H}_{N}\quad \textnormal{and}\quad r^{\frac{q}{2}}\phi|_{\Sigma_0}\in \mathbf{H}_{N} \quad \textnormal{and}\quad r^{\frac{q}{2}}T\phi|_{\Sigma_0}\in \mathbf{H}_{N}.
\end{align*}
Then there exists a unique solution $T^{-1}\phi$ to \eqref{eq:waveeq}, such that $T(T^{-1}\phi)=\phi$, with initial data
$(T^{-1}\phi|_{\Sigma_0},\phi|_{\Sigma_0})$, such that moreover:
\begin{align*}
r^{\frac{q}{2}}X (rT^{-1}\phi)|_{\Sigma_0}\in&\: \mathbf{H}_{N} \quad \textnormal{and}\quad r^{\frac{q}{2}}T^{-1}\phi|_{\Sigma_0}\in \mathbf{H}_{N+1} \quad \textnormal{and}\quad r^{\frac{q}{2}}T(T^{-1}\phi)|_{\Sigma_0}\in \mathbf{H}_{N+1},
\end{align*}
and there exists a constant $C=C(V_{\alpha},h,M,N,q)>0$, such that
\begin{equation*}
||r^{\frac{q}{2}}X (rT^{-1}\phi)|_{\Sigma_0}||_{\mathbf{H}_N}+||r^{\frac{q}{2}}T^{-1}\phi|_{\Sigma_0}||_{\mathbf{H}_N}\leq C ||r^{\frac{q}{2}+1}X (r\phi)|_{\Sigma_0}||_{\mathbf{H}_N}+C||r^{\frac{q}{2}}\phi|_{\Sigma_0}||_{\mathbf{H}_N}+C||r^{\frac{q}{2}}T\phi|_{\Sigma_0}||_{\mathbf{H}_N}.
\end{equation*}
We refer to $T^{-1}\phi$ as the \emph{time integral} of $\phi$.
\end{proposition}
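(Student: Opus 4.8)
The plan is to construct $T^{-1}\phi$ by prescribing its Cauchy data on $\Sigma_0$ and then invoking well-posedness. The only nontrivial piece of data to determine is $T^{-1}\phi|_{\Sigma_0}$, since we will set $T(T^{-1}\phi)|_{\Sigma_0}=\phi|_{\Sigma_0}$ by definition. First I would observe that if a solution $\widetilde\phi$ with $T\widetilde\phi=\phi$ exists, then by \eqref{eq:waveeq1} it must satisfy $\mathcal{L}\widetilde\phi|_{\Sigma_0}=F[T\widetilde\phi]|_{\Sigma_0}=F[\phi]|_{\Sigma_0}$, where the right-hand side is entirely determined by the given initial data $(\phi|_{\Sigma_0},T\phi|_{\Sigma_0})$ via $F[\phi]=2r(1-h)X(r\phi)-\frac{dh}{dr}r^2\phi+r^2h\tilde h T\phi$. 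So the construction reduces to solving the degenerate elliptic equation $\mathcal{L}u=F[\phi]|_{\Sigma_0}$ for $u=T^{-1}\phi|_{\Sigma_0}$.

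The second step is to check that $F[\phi]|_{\Sigma_0}$ lies in the right weighted space so that Corollary \ref{cor:Linv0} applies. Examining the three terms of $F[\phi]$ term by term: $2r(1-h)X(r\phi)$ needs the weighted control $r^{\frac{q}{2}+1}X(r\phi)|_{\Sigma_0}\in\mathbf{H}_N$ (using $1-h=O(1)$), while $\frac{dh}{dr}r^2\phi=O_\infty(r^{-1})r^2\phi=O_\infty(r)\phi$ and $r^2h\tilde h T\phi=O_\infty(1)T\phi$ (since $h=O_\infty(r^{-2})$) need $r^{\frac{q}{2}}\phi|_{\Sigma_0}\in\mathbf{H}_N$ and $r^{\frac{q}{2}}T\phi|_{\Sigma_0}\in\mathbf{H}_N$ respectively — precisely the three hypotheses. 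Hence $r^{\frac{q}{2}}F[\phi]|_{\Sigma_0}\in\mathbf{H}_N$ with norm bounded by the stated combination, so Corollary \ref{cor:Linv0} yields a unique $u$ with $r^{\frac{q}{2}}u\in\mathbf{H}_{N+1}$ and the quantitative estimate; the claimed membership $r^{\frac{q}{2}}X(rT^{-1}\phi)|_{\Sigma_0}\in\mathbf{H}_N$ follows since $X(ru)=rXu+u$ and $\mathbf{H}_{N+1}$ controls one more $rX$-derivative. Then I would define $T^{-1}\phi$ to be the unique solution to \eqref{eq:waveeq} with Cauchy data $(u,\phi|_{\Sigma_0})$ furnished by the global existence proposition; $T(T^{-1}\phi)|_{\Sigma_0}=\phi|_{\Sigma_0}$ and, using $\mathcal{L}(T^{-1}\phi)|_{\Sigma_0}=F[\phi]|_{\Sigma_0}=F[T(T^{-1}\phi)]|_{\Sigma_0}$ together with \eqref{eq:waveeq1}, one reads off $T^2(T^{-1}\phi)|_{\Sigma_0}=T\phi|_{\Sigma_0}$.

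The third step is to upgrade the identity $T(T^{-1}\phi)=\phi$ from $\Sigma_0$ to all of $\mathcal{R}$. For this I would note that $[\square_{g_M}-V_\alpha,T]=0$ by the Killing property of $T$, so $T(T^{-1}\phi)$ is again a solution of \eqref{eq:waveeq}; it agrees with $\phi$ to first order on $\Sigma_0$ (both $T(T^{-1}\phi)$ and $T^2(T^{-1}\phi)$ match $\phi$ and $T\phi$ there), so by uniqueness in the existence proposition $T(T^{-1}\phi)=\phi$ everywhere. Uniqueness of $T^{-1}\phi$ itself among solutions with $T(\cdot)=\phi$ and the prescribed data on $\Sigma_0$ is then immediate from uniqueness of the Cauchy problem. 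I expect the main obstacle — though it is a bookkeeping obstacle rather than a conceptual one — to be the careful matching of the weighted Sobolev indices: one must make sure the powers of $r$ picked up when converting between $X(r\,\cdot)$ and $rX(\cdot)$, and between the decay $h=O_\infty(r^{-2})$, $\frac{dh}{dr}=O_\infty(r^{-3})$ and the $r^2$ factors in $F$, line up exactly so that $r^{\frac q2}F[\phi]|_{\Sigma_0}\in\mathbf{H}_N$ with no loss, and that the output regularity $\mathbf{H}_{N+1}$ translates into the three stated memberships for $T^{-1}\phi$. The restriction $-\beta_0<q<\beta_0$ is exactly what Corollary \ref{cor:Linv0} requires, so no new smallness or spectral condition enters.
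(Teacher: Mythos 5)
Your proposal follows essentially the same route as the paper: solve the degenerate elliptic equation $\mathcal{L}u=F[\phi]|_{\Sigma_0}$ via Corollary \ref{cor:Linv0}, evolve the Cauchy data $(u,\phi|_{\Sigma_0})$, and identify $T(T^{-1}\phi)$ with $\phi$ using the Killing property of $T$ together with uniqueness for \eqref{eq:waveeq}, exactly as in the paper's argument. The only blemish is the intermediate estimate $\frac{dh}{dr}r^2\phi=O_{\infty}(r)\phi$: since $h=h_0r^{-2}+O_{\infty}(r^{-3})$ one in fact has $\frac{dh}{dr}r^2\phi=O_{\infty}(r^{-1})\phi$, which is precisely why the hypothesis $r^{\frac{q}{2}}\phi|_{\Sigma_0}\in\mathbf{H}_N$ suffices, so the conclusion is unaffected.
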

\begin{proof}
Given the choice of initial data in the proposition, we can apply Corollary \ref{cor:Linv0} with the inhomogeneity $F=F[\phi]|_{\Sigma_0}$ to obtain a solution $u$ to $\mathcal{L}u=F$. We take as initial data for \eqref{eq:waveeq} $(u,\phi|_{\Sigma_0})$, and we denote the resulting solution as $\widetilde{\phi}$. By the choice of $F$ and by application of \eqref{eq:waveeq}, it follows then that
\begin{equation*}
(T\tilde{\phi}|_{\Sigma_0},T^2\tilde{\phi}|_{\Sigma_0})=(\phi|_{\Sigma_0},T\phi_{\Sigma_0}),
\end{equation*}
so by uniqueness of solutions to \eqref{eq:waveeq}, we must have that $T\widetilde{\phi}=\phi$.
\end{proof}

\begin{remark}
In Corollary \ref{cor:equivdeftimeint} below, we provide an alternative characterization of $T^{-1}\phi$ as an integral in time of $\phi$ along curves of constant $r,\theta,\varphi$, which motivates the nomenclature.
\end{remark}

In the corollary below, we show that for $\alpha\geq 0$ ($\beta_0\geq 1$), we can construct multiple time integrals.

\begin{corollary}
\label{cor:regmultipletimeinv}
Let $\beta_0\geq 1$. Let $N\in \N_0$ and $n_{\beta}=\lfloor \frac{\beta_0+1}{2} \rfloor \in \N_0$ and let $0<\delta<\beta_0-n_{\beta}$ be arbitrarily small. Consider initial data $(\phi|_{\Sigma},T\phi|_{\Sigma})$ for \eqref{eq:waveeq}, such that:
\begin{align*}
r^{\frac{\beta_0}{2}+1-\delta}X(r\phi)|_{\Sigma_0}\in&\: \mathbf{H}_{N}\quad \textnormal{and}\quad r^{\frac{\beta_0}{2}-\delta}\phi|_{\Sigma_0}\in \mathbf{H}_{N} \quad \textnormal{and}\quad r^{\frac{\beta_0}{2}-\delta}T\phi|_{\Sigma_0}\in \mathbf{H}_{N}.
\end{align*}
Then there exists a unique solution $T^{-1-n_{\beta}}\phi$ to \eqref{eq:waveeq}, such that $T^{1+n_{\beta}}(T^{-1-n_{\beta}}\phi)=\phi$, with initial data $(T^{-1-n_{\beta}}\phi|_{\Sigma},T^{-n_{\beta}}\phi|_{\Sigma})$, such that:
\begin{align*}
r^{\frac{\beta_0}{2}-n_{\beta}-\delta}X (rT^{-1-n_{\beta}}\phi)|_{\Sigma_0}\in&\: \mathbf{H}_{N} \quad \textnormal{and}\quad r^{\frac{\beta_0}{2}-n_{\beta}-\delta}T^{-1-n_{\beta}}\phi|_{\Sigma_0}\in \mathbf{H}_{N+1}\\
 \textnormal{and}&\quad r^{\frac{\beta_0}{2}-n_{\beta}-\delta}T(T^{-1-n_{\beta}}\phi)|_{\Sigma_0}\in \mathbf{H}_{N+1}.
\end{align*}
Furthermore, there exists a constant $C=C(V_{\alpha},h,M,N,\delta)>0$, such that:
\begin{equation*}
\begin{split}
||r^{\frac{\beta_0}{2}-n_{\beta}-\delta}X (rT^{-1-n_{\beta}}\phi)_{\Sigma_0}||_{\mathbf{H}_N}+||r^{\frac{\beta_0}{2}-n_{\beta}-\delta}T^{-1-n_{\beta}}\phi_{\Sigma_0}||_{\mathbf{H}_N}\leq &\:C ||r^{\frac{\beta_0}{2}+1-\delta}X (r\phi)_{\Sigma_0}||_{\mathbf{H}_N}+C||r^{\frac{\beta_0}{2}-\delta}\phi|_{\Sigma_0}||_{\mathbf{H}_N}\\
&+C||r^{\frac{\beta_0}{2}-\delta}T\phi|_{\Sigma_0}||_{\mathbf{H}_N}.
\end{split}
\end{equation*}

We refer to $T^{-k}\phi$, with $k\in \N$, as the \emph{$k$-th time integral} of $\phi$.
If we assume additionally that
\begin{align*}
r^{\frac{\beta_1}{2}+1-\delta}X(r\phi_{\geq 1})|_{\Sigma}\in&\: \mathbf{H}_{N}\quad \textnormal{and}\quad r^{\frac{\beta_1}{2}-\delta}\phi_{\geq 1}|_{\Sigma}\in \mathbf{H}_{N} \quad \textnormal{and}\quad r^{\frac{\beta_1}{2}-\delta}T\phi_{\geq 1}|_{\Sigma}\in \mathbf{H}_{N},
\end{align*}
then we obtain:
\begin{align*}
r^{\frac{\beta_1}{2}-n_{\beta}-\delta}X (rT^{-1-n_{\beta}}\phi_{\geq 1})|_{\Sigma}\in&\: \mathbf{H}_{N} \quad \textnormal{and}\quad r^{\frac{\beta_1}{2}-n_{\beta}-\delta}T^{-1-n_{\beta}}\phi_{\geq 1}|_{\Sigma}\in \mathbf{H}_{N+1}\\
 \textnormal{and}&\quad r^{\frac{\beta_1}{2}-n_{\beta}-\delta}T(T^{-1-n_{\beta}}\phi_{\geq 1})|_{\Sigma}\in \mathbf{H}_{N+1}.
\end{align*}
Furthermore, there exists a constant $C=C(V_{\alpha},h,M,N,\delta)>0$, such that:
\begin{equation*}
\begin{split}
||r^{\frac{\beta_1}{2}-n_{\beta}-\delta}X (rT^{-1-n_{\beta}}\phi)_{\Sigma_0}||_{\mathbf{H}_N}+||r^{\frac{\beta_1}{2}-n_{\beta}-\delta}T^{-1-n_{\beta}}\phi_{\Sigma_0}||_{\mathbf{H}_N}\leq &\:C ||r^{\frac{\beta_1}{2}+1-\delta}X (r\phi)_{\Sigma_0}||_{\mathbf{H}_N}+C||r^{\frac{\beta_1}{2}-\delta}\phi|_{\Sigma_0}||_{\mathbf{H}_N}\\
&+C||r^{\frac{\beta_1}{2}-\delta}T\phi|_{\Sigma_0}||_{\mathbf{H}_N}.
\end{split}
\end{equation*}

\end{corollary}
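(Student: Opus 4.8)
The plan is to construct $T^{-1-n_\beta}\phi$ by composing single time inversions: apply Proposition \ref{prop:timeint} $1+n_\beta$ times in succession, lowering the radial weight by two at each step. First I would set $q_0:=\beta_0-2\delta$, so that $q_0/2=\beta_0/2-\delta$ and the hypotheses of the corollary are \emph{exactly} the hypotheses of Proposition \ref{prop:timeint} with $q=q_0$; note $q_0\in(-\beta_0,\beta_0)$ since $0<\delta<\beta_0-n_\beta\le\beta_0$. A first application then yields the time integral $T^{-1}\phi$ with $r^{q_0/2}X(rT^{-1}\phi)|_{\Sigma_0}\in\mathbf{H}_N$, $r^{q_0/2}T^{-1}\phi|_{\Sigma_0}\in\mathbf{H}_{N+1}$ and $r^{q_0/2}T(T^{-1}\phi)|_{\Sigma_0}\in\mathbf{H}_{N+1}$.

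Next I would iterate. For $1\le m\le 1+n_\beta$ set $q^{(m)}:=q_0-2(m-1)$ (so $q^{(1)}=q_0$), and apply Proposition \ref{prop:timeint} with weight $q^{(m)}$ to $T^{-(m-1)}\phi$ (with $T^{0}\phi:=\phi$) to obtain $T^{-m}\phi$; for $m=1$ this is the application of the previous paragraph. For $m\ge 2$, the only thing to check is that the output of step $m-1$ is an admissible input for step $m$: since $q^{(m)}/2+1=q^{(m-1)}/2$, the $\mathbf{H}_N$-bound on $X(r\,T^{-(m-1)}\phi)$ produced at step $m-1$ is precisely the one required at step $m$; and since $\mathbf{H}_{N+1}\hookrightarrow\mathbf{H}_N$ and $r^{a}\le(2M)^{a-b}r^{b}$ on $[2M,\infty)$ whenever $a\le b$, the $\mathbf{H}_{N+1}$-bounds on $T^{-(m-1)}\phi$ and on $T(T^{-(m-1)}\phi)=T^{-(m-2)}\phi$ give the required $\mathbf{H}_N$-bounds with the lowered weight exponent $q^{(m)}/2$. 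Carrying this through $1+n_\beta$ steps lands at exponent $q^{(1+n_\beta)}/2=\beta_0/2-n_\beta-\delta$, which is exactly the weight in the statement (recall $T(T^{-1-n_\beta}\phi)=T^{-n_\beta}\phi$), and the quantitative estimate follows by composing the $1+n_\beta$ bounds of Proposition \ref{prop:timeint}, each contributing one multiplicative constant depending on $V_\alpha,h,M,N$ and the finitely many, $\alpha$- and $\delta$-determined exponents $q^{(m)}$. Uniqueness in this weighted class would be read off step by step from the uniqueness clause in Proposition \ref{prop:timeint} (equivalently, from invertibility of $\mathcal{L}$ in Corollary \ref{cor:Linv0} together with well-posedness of \eqref{eq:waveeq}).

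For the refined statement concerning $\phi_{\ge 1}$, I would run the identical iteration, but inside Proposition \ref{prop:timeint} invoke Corollary \ref{cor:Linv1} in place of Corollary \ref{cor:Linv0}; this produces a version of single time inversion valid for data supported on spherical harmonics of degree $\ge 1$, with admissible interval $(-\beta_1,\beta_1)$ rather than $(-\beta_0,\beta_0)$. Because $\mathcal{L}$ has radial coefficients it commutes with the projections $\pi_\ell$ of \S\ref{sec:sphere}, and so does the evolution of \eqref{eq:waveeq}, so the time integral of a $(\ge 1)$-supported solution is again $(\ge 1)$-supported. Starting the iteration from weight $\beta_1-2\delta$ and decrementing by two over $1+n_\beta$ steps ends at exponent $\beta_1/2-n_\beta-\delta$, as claimed, and the corresponding norm estimate is obtained exactly as before.

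The hard part — really the only subtlety — will be checking that the weight $q^{(m)}$ stays inside the admissible open interval throughout the chain. Since $q^{(m)}$ decreases monotonically in $m$, the binding constraint is at $m=1+n_\beta$, where it reads $\beta_0-2\delta-2n_\beta>-\beta_0$, i.e.\ $\delta<\beta_0-n_\beta$; this is precisely the standing hypothesis, so the construction closes. In the $(\ge 1)$ case the analogous constraint $\delta<\beta_1-n_\beta$ is then automatic, since $\beta_1>\beta_0$ and $\delta<\beta_0-n_\beta$.
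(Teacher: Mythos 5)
Your proposal is correct and follows essentially the same route as the paper: iterate the single time inversion of Proposition \ref{prop:timeint} a total of $1+n_{\beta}$ times, lowering the admissible weight $q$ by $2$ at each step (with the last step's constraint $q>-\beta_0$ being exactly $\delta<\beta_0-n_{\beta}$), and substitute Corollary \ref{cor:Linv1} for Corollary \ref{cor:Linv0} to get the improved weights for $\phi_{\geq 1}$. Your bookkeeping of the intermediate exponents $q^{(m)}=\beta_0-2\delta-2(m-1)$ is in fact more careful than the paper's sketch and matches the weights asserted in the statement.
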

\begin{proof}
We proceed as in the proof of Proposition \ref{prop:timeint} with $q=\beta_0-\delta$ and $\delta>0$ sufficiently small to construct $T^{-1}\phi$. Then we repeat the argument of Proposition \ref{prop:timeint} with the role of $\phi$ replaced with $T^{-1}\phi$ and $q=\beta_0-1-\delta$ to construct $T^{-2}\phi$. We can repeat this procedure a total number $n_{\beta}$ times to obtain the desired estimates.

The improved estimates for $\phi_{\geq 1}$ follow as above, but we apply Corollary \ref{cor:Linv1} instead of Corollary \ref{cor:Linv0}.
\end{proof}

We will moreover need to determine quantitatively the precise asymptotic behaviour of $T^{-1-n_{\beta}}\phi_0|_{\Sigma}$ in $r$ as $r\to \infty$. \textbf{In the proposition below we will assume that $\beta_0\neq 2l+1$ for all $\ell\in \N_0$.} This is equivalent to the assumption that $\alpha\neq l(l+1)$ for all $l\in \N_0$. This assumption is made for the sake of convenience. It is still possible to obtain the precise large-$r$ behaviour when $\beta_0=2l+1$ for some $l\in \N$, but the proof of Proposition \ref{prop:preciselargertimeint} would have to be modified slightly due to cancellations appearing in the asymptotics; see Remark \ref{rm:beta2lplus1} and \S \ref{sec:proofideas}.
\begin{proposition}
\label{prop:preciselargertimeint}
Let $\beta_0\neq 2l+1$ for all $\ell\in \N_0$. Consider smooth initial data for \eqref{eq:waveeq} with corresponding solution $\phi$.
\begin{enumerate}[label=\emph{(\roman*)}]
\item Let $\beta_0<1$. If $\int_{2M}^{\infty}w_0F_0[\phi]\,dr$ is finite and $w_0F_0[\phi]$ is sufficiently regular at $r=\infty$ with respect to $rX$, then
\begin{equation*}
\mathfrak{I}_{\beta_0}[\phi]:=\lim_{r \to \infty} r^{\frac{1}{2}+\frac{1}{2}\beta_0}X(rT^{-1}\phi_0)=-\frac{1-\beta_0}{2\beta_0}\int_{2M}^{\infty}w_0F_0[\phi_0]\,dr'.
\end{equation*}
Furthermore, there exists a constant $C=C(M,\alpha,h)>0$ such that for all $N\in \N_0$:
\begin{align*}
&\left|(rX)^NX(rT^{-1}\phi_0)(0,r)-\mathfrak{I}_{\beta_0}[\phi]  (rX)^N(r^{-\frac{1}{2}-\frac{1}{2}\beta_0})\right|\\
\leq&\: Cr^{-\frac{1}{2}-\frac{3}{2}\beta_0}\sum_{n=0}^N\left(||(r^2(rX)^nX(r\phi_0)||_{L^{\infty}(\Sigma_0)}+||(rX)^n\phi_0||_{L^{\infty}(\Sigma_0)}+||r(rX)^nT\phi_0||_{L^{\infty}(\Sigma_0)}\right).
\end{align*}
\item Let $\beta_0>1$. Assume the initial data for \eqref{eq:waveeq} is compactly supported. Then, for $n_{\beta}=\lfloor \frac{\beta_0+1}{2}\rfloor$:
\begin{equation*}
\mathfrak{I}_{\beta_0}[\phi]:=\lim_{r \to \infty} r^{\frac{1}{2}+(\frac{1}{2}\beta_0-n_{\beta})}X(rT^{-1-n_{\beta}}\phi_0)=-\frac{(1-\beta_0)(3-\beta_0)\ldots (2n_{\beta}+1-\beta_0)}{2n_{\beta}!\beta_0(1-\beta_0)\ldots (n_{\beta}-\beta_0)}\int_{2M}^{\infty}w_0F_0[\phi_0]\,dr'
\end{equation*}
is finite and generically non-zero.  Furthermore, there exists a constant $C=C(M,\alpha,h)>0$ such that for all $N\in \N_0$:
\begin{align*}
&\left|(rX)^NX(rT^{-1-n_{\beta}}\phi_0)(0,r)-\mathfrak{I}_{\beta_0}[\phi](rX)^N(  r^{-\frac{1}{2}-\frac{1}{2}\beta_0+n_{\beta}})\right|\\
\leq&\: Cr^{-\frac{3}{2}-\frac{1}{2}\beta_0+n_{\beta}}\Biggl(\sum_{n=0}^{n_{\beta}+N} ||r^{\frac{3}{2}+\frac{\beta_0}{2}}(rX)^nX(r\phi_0)||_{L^{\infty}(\Sigma_0)}\\
&+||r^{\frac{1}{2}+\frac{\beta_0}{2}}(rX)^n\phi_0||_{L^{\infty}(\Sigma_0)}+||r^{\frac{3}{2}+\frac{\beta_0}{2}}(rX)^nT\phi_0||_{L^{\infty}(\Sigma_0)}\Biggr).
\end{align*}
\end{enumerate}
\end{proposition}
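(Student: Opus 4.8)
## Proof proposal

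The plan is to solve the ODE $\mathcal{L}_0 (T^{-1-n_\beta}\phi_0) = F_0[\phi_0]$ along $\Sigma_0$ explicitly enough to extract the leading-order $r\to\infty$ behaviour, using the renormalization $\check u = w_0^{-1} u$ already introduced in the elliptic section. Recall from the computation in the proof of Proposition \ref{prop:mainesttimeinv} that $w_0 \mathcal{L} u_0 = X(Dr^2 w_0^2 X \check u_0)$ for $u_0 = w_0 \check u_0$. Hence the equation $\mathcal{L}(T^{-1-n_\beta}\phi_0)|_{\Sigma_0} = F_0[\phi_0]$ (which holds by iterating \eqref{eq:waveeq1}, since $T^{1+n_\beta}$ of the left side recovers $\phi_0$ after applying $\mathcal{L}$ to $T^{-1-n_\beta}\phi_0$) can be integrated once:
\begin{equation*}
Dr^2 w_0^2 X(w_0^{-1} rT^{-1-n_\beta}\phi_0) (r) = \int_{2M}^{r} w_0 F_0[\phi_0]\,dr' + c_0,
\end{equation*}
where I must check that the contribution from $r=2M$ vanishes, using regularity of $T^{-1-n_\beta}\phi_0$ at the horizon (the factor $D$ and the construction via $\mathbf{H}_{N+1}$ forces the constant term to be controlled; in the $\beta_0>1$ compactly supported case one argues $c_0$ is fixed by demanding the decaying branch at infinity). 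Since $Dr^2 w_0^2 = r^{1+\beta_0}(1 + O_\infty(r^{-1}))$ by \eqref{eq:asymptw}, and the integral converges to $\int_{2M}^\infty w_0 F_0[\phi_0]\,dr'$ (finite by the decay hypotheses on the data — $F_0[\phi_0]$ decays like $r^{-1}\cdot r^{-\frac12-\frac{\beta_0}{2}}\cdot(\text{data})$ and $w_0\sim r^{-\frac12+\frac{\beta_0}{2}}$, so the integrand is integrable once the data decay is as assumed), one gets $X(w_0^{-1} rT^{-1-n_\beta}\phi_0) \sim (\text{const})\cdot r^{-1-\beta_0}$, and then integrates once more and peels off powers of $w_0$ to recover $X(rT^{-1-n_\beta}\phi_0)$.

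For part (i) ($\beta_0 < 1$, so $n_\beta = 0$): after the single integration above I have the asymptotics of $X(w_0^{-1} r T^{-1}\phi_0)$; converting back via $rT^{-1}\phi_0 = w_0 (w_0^{-1} r T^{-1}\phi_0)$ and using $w_0 \sim r^{-\frac12+\frac{\beta_0}{2}}$, $w_0'/w_0 = \frac{\beta_0-1}{2}r^{-1}+\ldots$, one finds $X(rT^{-1}\phi_0) = X(w_0)\cdot(w_0^{-1}rT^{-1}\phi_0) + w_0 X(w_0^{-1}rT^{-1}\phi_0)$; the dominant term is the first, giving a multiple of $r^{-\frac32+\frac{\beta_0}{2}}\cdot r^{\text{(something)}}$ — I need to be careful here, because the quantity $w_0^{-1}rT^{-1}\phi_0$ itself, being the integral of an $r^{-1-\beta_0}$ quantity, behaves like $\text{const} - (\text{const})r^{-\beta_0}$, so its leading part is a nonzero \emph{constant} $C_\infty$, and then $X(rT^{-1}\phi_0) \sim C_\infty X(w_0) \sim C_\infty \cdot \frac{-1+\beta_0}{2}r^{-\frac32+\frac{\beta_0}{2}}$; matching this against the claimed $\mathfrak{I}_{\beta_0}[\phi]\,X(r^{-\frac12-\frac{\beta_0}{2}})$ forces the identification $\mathfrak{I}_{\beta_0}[\phi] = -\frac{1-\beta_0}{2\beta_0}\int_{2M}^\infty w_0 F_0[\phi_0]\,dr'$ after tracking the numerical factors from the two integrations and the $Dr^2w_0^2$ normalization. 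The higher-$(rX)^N$ estimates follow by commuting $\mathcal{L}$ with $(rX)^N$ via Lemma \ref{lm:commL} and repeating, with the error terms controlled in $L^\infty$ by the data norms appearing on the right of the claimed inequality (using weighted Sobolev/Hardy bounds from Corollary \ref{cor:mainellipticest} to pass from the $\mathbf{H}_N$-control of Proposition \ref{prop:timeint} to pointwise control).

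For part (ii) ($\beta_0 > 1$): here one iterates the above $n_\beta + 1$ times (as in Corollary \ref{cor:regmultipletimeinv}). Each time integral raises the relevant power, and after $n_\beta$ steps the quantity $w_0^{-1}rT^{-1-n_\beta}\phi_0$ is obtained by $n_\beta+1$ successive integrations of the source; at each stage the "nonzero constant" piece gets multiplied by a further explicit rational factor coming from integrating $r^{a}$ with the shifting exponent. Tracking these yields the product $\frac{(1-\beta_0)(3-\beta_0)\cdots(2n_\beta+1-\beta_0)}{2\,n_\beta!\,\beta_0(1-\beta_0)\cdots(n_\beta-\beta_0)}$ in the formula. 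The assumption $\beta_0 \neq 2l+1$ guarantees none of the factors $2k+1-\beta_0$ in the numerator vanishes (which would signal a resonance/logarithm and is exactly the degenerate case flagged in Remark \ref{rm:beta2lplus1}); likewise $\beta_0 \neq$ even is automatic, and the denominator factors $k-\beta_0$ are nonzero since $\beta_0$ is not an integer. Compact support of the data ensures the constant-of-integration ambiguities at $r=2M$ are harmless (they propagate into faster-decaying branches) and that $\int_{2M}^\infty w_0 F_0[\phi_0]\,dr'$ is finite; genericity of its non-vanishing is the statement that this is a nontrivial bounded linear functional of the data, which I verify by exhibiting one datum making it nonzero.

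The main obstacle I anticipate is the careful bookkeeping in part (ii): keeping precise track, through $n_\beta+1$ nested integrations $\int^r (\cdot) dr'$ of functions whose exponents shift by $1$ (modulo $\beta_0$) at each step, of (a) which constants of integration survive versus get absorbed into subleading branches, and (b) the exact product of rational prefactors, while simultaneously showing the remainder after subtracting the explicit leading term decays with the stated improved power $r^{-\frac32-\frac{\beta_0}{2}+n_\beta}$ uniformly, with constants expressed through the weighted $L^\infty$ data norms. Matching the combinatorial prefactor to the closed form in the statement, and ensuring consistency with the constant $\Phi_0$ and the Gamma-function expression for $\Phi_\ell[\phi]$ in \S\ref{sec:precise}, is where the delicate part of the argument lies; the $\beta_0=2l+1$ exclusion is precisely what makes every denominator and numerator factor in the prefactor nonzero so that this closed form is well-defined.
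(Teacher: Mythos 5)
Your overall strategy is the same as the paper's (renormalise by $w_0$, use $w_0\mathcal{L}u=X\bigl(Dr^2w_0^2X(w_0^{-1}u)\bigr)$, integrate the resulting ODE twice with the horizon/infinity boundary conditions, and iterate for higher time integrals), but the execution has concrete errors. First, your displayed first integral for part (ii) is wrong: $T^{-1-n_{\beta}}\phi$ solves $\mathcal{L}\bigl(T^{-1-n_{\beta}}\phi\bigr)=F\bigl[T^{-n_{\beta}}\phi\bigr]$, not $F[\phi]$, so for $n_{\beta}\geq 1$ the source must be updated at every stage of the iteration (as the paper does, computing $w_0F_0[T^{-k}\phi]$ at each step from the previously derived expansion); you acknowledge an iteration later, but the identity you integrate is not the one that holds, and the extra factor of $r$ you insert inside $X\bigl(w_0^{-1}rT^{-1-n_{\beta}}\phi_0\bigr)$ is inconsistent with the identity you invoke, which produces $X\bigl(Dr^2w_0^2X(w_0^{-1}u)\bigr)$ with no such factor. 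Second, and more seriously, your leading-order extraction in part (i) is incorrect. With the boundary condition $\lim_{r\to\infty}T^{-1}\check{\phi}_0=0$ (which comes from the time-inversion construction and must be used), one has $T^{-1}\check{\phi}_0\sim -d_0\beta_0^{-1}r^{-\beta_0}$, so $w_0^{-1}rT^{-1}\phi_0=rT^{-1}\check{\phi}_0\sim -d_0\beta_0^{-1}r^{1-\beta_0}$ is \emph{not} asymptotic to a nonzero constant $C_{\infty}$ for $\beta_0<1$; moreover, in the Leibniz expansion $X(rT^{-1}\phi_0)=X(w_0)\cdot\bigl(w_0^{-1}rT^{-1}\phi_0\bigr)+w_0X\bigl(w_0^{-1}rT^{-1}\phi_0\bigr)$ both terms contribute at the same order $r^{-\frac12-\frac{\beta_0}{2}}$, so "the dominant term is the first" is false, and the exponent you obtain, $r^{-\frac32+\frac{\beta_0}{2}}$, matches the claimed $r^{-\frac12-\frac{\beta_0}{2}}$ only when $\beta_0=1$, which is excluded. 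The correct derivation simply expands $rT^{-1}\phi_0=rw_0\,T^{-1}\check{\phi}_0\sim -d_0\beta_0^{-1}r^{\frac12-\frac{\beta_0}{2}}$ and differentiates, giving the coefficient $-\frac{1-\beta_0}{2\beta_0}d_0$ directly; your "matching" step as written would not produce it.

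Third, for part (ii) the content of the proposition is precisely the bookkeeping you defer: one must prove by induction the expansion $T^{-k}\check{\phi}_0(0,r)=-d_0\,\frac{(1-\beta_0)(3-\beta_0)\cdots(2k-3-\beta_0)}{\beta_0(1-\beta_0)\cdots(k-1-\beta_0)}\frac{1}{(k-1)!}\,r^{k-1-\beta_0}+O_{\infty}(r^{k-2-\beta_0+\delta})$, feeding each expansion back into $w_0F_0[T^{-k}\phi_0]$ (which for compactly supported data behaves like an explicit multiple of $r^{k-1}$ plus lower order) before the next pair of integrations, with the constants of integration fixed by $Dr^2w_0^2XT^{-k}\check{\phi}_0(0,2M)=0$ and $T^{-k}\check{\phi}_0\to 0$ at infinity, and with the hypothesis $\beta_0\neq 2l+1$ guaranteeing no logarithm appears when integrating $r^{k-1-\beta_0}$. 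Saying that "tracking these yields the product" and that the delicate matching "is where the delicate part of the argument lies" identifies the right difficulty but does not discharge it; combined with the incorrect starting identity and the flawed asymptotic extraction above, the proposal as written does not establish either the value of $\mathfrak{I}_{\beta_0}[\phi]$ or the quantitative remainder estimates (which in the paper are obtained by expressing every $O_{\infty}(\cdot)$ error through weighted $L^{\infty}$ norms of $F_0[\phi]$ and its $rX$-derivatives, not through the elliptic $\mathbf{H}_N$ theory you cite).
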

\begin{proof}
We can express:
\begin{equation*}
X(Dr^2 w_0^2 XT^{-1}\check{\phi}_0)=w_{0} F_0[\phi],
\end{equation*}
with $F_0=\pi_0F$.
By the regularity properties established in Proposition \ref{prop:timeint}, it follows that
\begin{align*}
Dr^2 w_0^2 XT^{-1}\check{\phi}_0(0,2M)=&\:0,\\
\lim_{r\to \infty} T^{-1}\check{\phi}_0(0,r)=&\:0.
\end{align*}
Hence, we can integrate in $r$ from $r=2M$ to obtain the explicit formula:
\begin{equation*}
Dr^2 w_0^2 XT^{-1}\check{\phi}_0(0,r)=\int_{2M}^r w_0F_0[\phi]\,dr,
\end{equation*}
from which it follows that
\begin{align*}
XT^{-1}\check{\phi}_0(0,r)=&\:(Dr^2)^{-1} w_0^{-2}\int_{2M}^rw_0F_0[\phi]\,dr',\\
T^{-1}\check{\phi}_0(0,r)=&-\int_{r}^{\infty}(Dr^2)^{-1} w_0^{-2}\int_{2M}^rw_0F_0[\phi]\,dr.
\end{align*}
By the assumptions we made on our initial data, we moreover have that
\begin{align*}
|d_0|<&\:\infty,\\
d_0:=&\:\int_{2M}^{\infty}w_0F_0[\phi]\,dr'.
\end{align*}
Suppose first that $\beta_0<1$. Then for sufficiently regular $w_0F_0[\phi]$ at $r=\infty$ with respect to $rX$:
\begin{align*}
\int_{2M}^{r}w_0F_0[\phi]\,dr'=&\:d_0D+DO_{\infty}(r^{ \frac{1}{2}\beta_0-\frac{1}{2}}).
\end{align*}
We moreover have that
\begin{align*}
w_0^{-2}=&\:r^{1-\beta_0}+O_{\infty}(r^{1-2\beta_0}).
\end{align*}
Hence,
\begin{align*}
XT^{-1}\check{\phi}_0(0,r)=&[d_0 r^{-2}+O_{\infty}(r^{ \frac{1}{2}\beta_0-\frac{5}{2}})][r^{1-\beta_0}+O_{\infty}(r^{1-2\beta_0})],\\
T^{-1}\check{\phi}_0(0,r)=&-d_0\beta_0^{-1}r^{-\beta_0}+O_{\infty}(r^{-\frac{1}{2}(1+\beta_0)})+O_{\infty}(r^{-2\beta_0}),
\end{align*}
so we obtain the following expansions:
\begin{align*}
rT^{-1}\phi_0(0,r)=&-d_0\beta_0^{-1}r^{\frac{1}{2}-\frac{1}{2}\beta_0}+O_{\infty}(1)+O_{\infty}(r^{\frac{1}{2}-\frac{3}{2}\beta_0}),\\
X(rT^{-1}\phi_0)(0,r)=&-\frac{1}{2\beta_0}(1-\beta_0)d_0r^{-\frac{1}{2}-\frac{1}{2}\beta_0}+O_{\infty}(r^{-\frac{1}{2}-\frac{3}{2}\beta_0}),
\end{align*}
So
\begin{equation*}
\mathfrak{I}_{\beta_0}[\phi]=\lim_{r \to \infty} r^{\frac{1}{2}+\frac{1}{2}\beta_0}X(rT^{-1}\phi_0)=-\frac{1}{2\beta_0}(1-\beta_0)d_0
\end{equation*}
is finite and generically non-zero. The inequalities in (i) follow by expressing the terms appearing in $O_{\infty}(...)$ via derivatives of $F_0[\phi]$. 

Now suppose $\beta_0> 1$. By the regularity properties established in Corollary \ref{cor:regmultipletimeinv}, it follows that
\begin{align*}
Dr^2 w_0^2 XT^{-n}\check{\phi}_0(0,2M)=&\:0,\\
\lim_{r\to \infty} T^{-n}\check{\phi}_0(0,r)=&\:0,
\end{align*}
for all $1\leq n\leq n_{\beta}$. By the compactness assumption on the initial data, we have in particular that
\begin{align*}
\int_{2M}^{r}w_0F_0[\phi]\,dr'=&\:d_0+DO_{\infty}(r^{-2}).
\end{align*}
We moreover have that
\begin{align*}
w_0^{-2}=&\:r^{1-\beta_0}+O_{\infty}(r^{-\beta_0}).
\end{align*}
Hence, using that $\beta_0\neq 1$, since $\beta_0\neq 2l+1$ for all $l\in \N_0$:
\begin{align*}
T^{-1}\check{\phi}_0(0,r)=&-d_0\beta_0^{-1}r^{-\beta_0}+O_{\infty}(r^{-\beta_0-1}),
\end{align*}
so we obtain:
\begin{align*}
rT^{-1}\phi_0(0,r)=&-d_0\beta_0^{-1}r^{\frac{1}{2}-\frac{1}{2}\beta_0}+O_{\infty}(r^{-\frac{1}{2}\beta_0-\frac{3}{2}}),
\end{align*}
and
\begin{align*}
w_0F_0[T^{-1}{\phi}]=&-\beta_0^{-1}(1-\beta_0)d_0 +O_{\infty}(r^{-1}).
\end{align*}

We now consider $T^{-2}\check{\phi}$. From the above expressions, it follows that
\begin{align*}
\int_{2M}^{r}w_0F_0[T^{-1}\phi]\,dr'=&\:-\beta_0^{-1}(1-\beta_0)d_0D r+D \log r O_{\infty}(r^{-1}),
\end{align*}
and hence,
\begin{align*}
T^{-2}\check{\phi}_0(0,r)=&-d_0\beta_0^{-1}r^{-\beta_0+1}+ O_{\infty}(r^{-\beta_0+\delta}),
\end{align*}
where we can take $\delta>0$ arbitrarily small.

We will establish analogous estimates for $T^{-k}\check{\phi}_0(0,r)$ with $2\leq k\leq 1+n_{\beta}$ by induction. We take as the inductive assumption:
\begin{equation*}
T^{-k}\check{\phi}_0(0,r)=-d_0\frac{(1-\beta_0)(3-\beta_0)\ldots (2k-3-\beta_0)}{\beta_0 (1-\beta_0)\ldots (k-1-\beta_0)}\frac{1}{(k-1)!} r^{k-1-\beta_0}+O_{\infty}(r^{k-2-\beta_0-\delta}),
\end{equation*}
with $2\leq k\leq n_{\beta}$.

Note that the above estimate holds for $k=2$, by above. With the induction assumption, we obtain:
\begin{align*}
rT^{-k}{\phi}_0(0,r)=&-d_0\frac{(1-\beta_0)(3-\beta_0)\ldots (2k-3-\beta_0)}{\beta_0 (1-\beta_0)\ldots (k-1-\beta_0)}\frac{1}{(k-1)!} r^{k-\frac{1}{2}-\frac{1}{2}\beta_0}+O_{\infty}(r^{k-\frac{3}{2}-\frac{1}{2}\beta_0-\delta}),\\
w_0 F[T^{-k}{\phi}_0]=&\:-d_0\frac{(1-\beta_0)(3-\beta_0)\ldots (2k-3-\beta_0)(2k-1-\beta_0)}{\beta_0 (1-\beta_0)\ldots (k-1-\beta_0)}\frac{1}{(k-1)!} r^{k-1}+O_{\infty}(r^{k-2+\delta}).
\end{align*}
Then it follows that
\begin{align*}
X(T^{-k-1}\check{\phi}_0(0,r))=&-d_0\frac{(1-\beta_0)(3-\beta_0)\ldots (2k-3-\beta_0)(2k-1-\beta_0)}{\beta_0 (1-\beta_0)\ldots (k-1-\beta_0)}\frac{1}{k(k-1)!} Dr^{k-1-\beta_0}+O_{\infty}(r^{k-2-\beta_0+\delta}),\\
T^{-k-1}\check{\phi}_0(0,r)=&-d_0\frac{(1-\beta_0)(3-\beta_0)\ldots (2k-3-\beta_0)(2k-1-\beta_0)}{\beta_0 (1-\beta_0)\ldots (k-1-\beta_0)(k-\beta_0)}\frac{1}{k(k-1)!} Dr^{k-\beta_0}+O_{\infty}(r^{k-1-\beta_0+\delta})
\end{align*}
Hence, the induction assumption holds also with $k$ replaced by $k+1$.

We arrive at the inequality in (ii) by estimating the terms in $O_{\infty}(\ldots)$ via $F_0[\phi]$ and its $X$-derivatives.\end{proof}

\begin{remark}
\label{rm:beta2lplus1}
If $\beta_0=2l+1$, for some $l\in \N_0$, it can be shown that, given smooth and compactly supported initial data for $\phi$, $T^{-l-2}\phi_0$ is well-defined and
\begin{equation*}
X(rT^{-l-2}\phi_0)|_{\Sigma}=D_{l}r^{-1}+O_{\infty}(r^{-2}).
\end{equation*}
Since we do not make use of this estimate, we will leave it to the reader to prove this statement and determine the precise form of $D_{l}$.
\end{remark}

\section{Late-time asymptotics and time-decay estimates}
\label{sec:latetimeasymp}
We will now apply Proposition \ref{prop:edecaytimeder} with $N=n_{\beta}+1$ to the time integral $T^{-n_{\beta}-1}\phi$, where $n_{\beta}=\lfloor \frac{\beta_0+1}{2}\rfloor$. We will see that the estimates in Corollary \ref{cor:regmultipletimeinv} do not provide enough regularity for $T^{-n_{\beta}-1}\phi$ to conclude that the weighted, higher-order energy norms appearing on the right-hand side of \eqref{eq:mainedcay} are finite. Hence, \textbf{the time-decay will be limited by the $r$-decay rates of $T^{-n_{\beta}-1}\phi$ along $\Sigma_0$}.

In order to go beyond (almost-)sharp time-decay estimates for $\phi$ to the precise late-time asymptotics of $\phi$ as $\tau\to \infty$, we will consider the difference
\begin{align*}
\widehat{\phi}=&\:\phi-\Phi,\\
\Phi(\tau,r)=&\: \Phi_0 w_0 (r)(\tau+1+2r)^{-\frac{1}{2}-\frac{\beta_0}{2}}(\tau+1)^{-\frac{1}{2}-\frac{\beta_0}{2}},\\
\Phi_0\in &\:\R\setminus\{0\}.
\end{align*}
The goal of this section is to show that the global leading-order behaviour in time of $\phi$ is actually determined by $\Phi$. We will show this as follows:
\begin{enumerate}
\item We will establish sufficiently strong decay estimates for $\widehat{\phi}$, which satisfies an inhomogeneous wave equation:
\begin{equation}
\label{eq:inhomwaveeq}
(\square_g-V_{\alpha})\widehat{\phi}=-(\square_g-V_{\alpha})\Phi.
\end{equation}
In Lemma \ref{lm:boxPsi} below, we show that the choice of the function $\Phi$ ensures that $(\square_g-V_{\alpha})\Phi$ has favourable decay properties in $\tau$ and $r$, which will allow us to apply the energy decay estimates of Proposition \ref{prop:edecaytimeder} to $\widehat{\phi}$.
\item We will choose the constant $\Phi_0$ such that the $(1+n_{\beta})$-th time integral of the difference $\widehat{\phi}$ has \emph{better} decay properties in $r$ along $\Sigma_0$ than the $(1+n_{\beta})$-th time integral of $\phi$. This allows us to use Proposition \ref{prop:edecaytimeder} to show that $\widehat{\phi}$ decays \emph{faster} than $\Phi$, so the late-time behaviour of $\phi$ is determined (globally) by $\Phi$.
\end{enumerate}
We define $\Phi^{(k)}$ as follows
\begin{equation*}
\Phi^{(k)}(\tau,r):=(-1)^k\int_{\tau}^{\infty}\int_{\tau_{1}}\ldots \int_{\tau_{k-1}}\Phi(\tau_k,r)\,d\tau_k d\tau_{k-1}\ldots d\tau_1.
\end{equation*}
Then $T^k\Phi^{(k)}=\Phi$. We will also denote $T^{-k}\Phi=\Phi^{(k)}$.

Furthermore, we can integrate by parts $k-1$ times in $\tau$ to express:
\begin{equation*}
\Phi^{(k)}(\tau,r)=\frac{(-1)^k}{(k-1)!} \int_{\tau}^{\infty} (s-\tau)^{k-1} \Phi(s,r)\,ds
\end{equation*}
for $k<\beta_0+1$.

We denote moreover
\begin{equation*}
T^{-k}\widehat{\phi}=T^{-k}\phi-\Phi^{(k)}.
\end{equation*}
\begin{lemma}
\label{lm:boxPsi}
\begin{enumerate}[label=\emph{(\roman*)}]
\item \label{item:Phiest1} For all $n_1,n_2\in \N_0$ and $k\in\N_0$, $k\leq \beta_0+1$, there exists a $C=C(M,V_{\alpha},n_1,n_2,k)>0$, such that
\begin{align}
\label{eq:boxPsiMpos}
\left|(rX)^{n_1}T^{n_2}(\square_g-V_{\alpha})\Phi^{(k)}\right|\leq &\: C|\Phi_0| r^{-\frac{3}{2}+\frac{1}{2}\beta_0-\min\{1,\beta_0\}}(\tau+1+2r)^{-\frac{1}{2}-\frac{1}{2}\beta_0}(\tau+1)^{-\frac{3}{2}-\frac{1}{2}\beta_0+k-n_{2}},
\end{align}
\item \label{item:Phiest2} Let $\beta_0\neq 2l+1$ for all $l\in \N_0$. For all $k\in \N_0$, such that $k<\frac{\beta_0}{2}+\frac{1}{2}$, we can expand
\begin{equation}
\label{eq:rasympPsi}
X(r\Phi^{(k+1)})|_{\Sigma_0}=\frac{(-1)^k2^{k-\beta_0}\pi \Gamma(\beta_0-k)\Phi_0}{k!\Gamma\left(\frac{\beta_0+1}{2}\right)\Gamma\left(\frac{\beta_0-1-2k}{2}\right)\cos \left(\frac{\beta_0-2k}{2\pi}\right)}r^{-\frac{1}{2}-\frac{1}{2}\beta_0+k}+O_{\infty}(r^{-\frac{3}{2}-\frac{1}{2}\beta_0+k})+O_{\infty}(r^{-2}).
\end{equation}
\end{enumerate}
\end{lemma}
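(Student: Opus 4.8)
The two parts share a common mechanism: the profile $\Phi^{(k)}$ is, up to corrections that decay faster in $r$, an exact solution of a model equation in which $g_M$ is replaced by the flat metric and $V_\alpha$ by $\alpha r^{-2}$.

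For part \ref{item:Phiest1}, since $\square_g-V_\alpha$ commutes with $T=\partial_\tau$ and $\Phi^{(k)}=T^{-k}\Phi$, one has $T^{n_2}(\square_g-V_\alpha)\Phi^{(k)}=(\square_g-V_\alpha)\Psi$ with $\Psi$ equal either to $\Phi^{(k-n_2)}$ or to $T^{n_2-k}\Phi$; it therefore suffices to bound $(rX)^{n_1}(\square_g-V_\alpha)\Phi^{(j)}$ for the spherically symmetric profiles $\Phi^{(j)}=\Phi_0\,w_0(r)\,g_j(\tau,r)$, where $g_j$ is the explicit product of power functions furnished by the iterated-integral formula for $\Phi^{(j)}$ (valid for $j<\beta_0+1$, the integer endpoint handled by the same method with a logarithmic modification). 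I would substitute $\psi_{\Phi^{(j)}}=r\,\Phi^{(j)}$ into the form \eqref{eq:waveeq2} of the equation. Writing $p=\tfrac{1+\beta_0}{2}$, a direct computation shows that the \emph{model profile} $r^{p-1}(\tau+1+2r)^{-p}(\tau+1)^{-p}$ — and, by inverting $\partial_\tau$, each $r^{p-1}g_j$ — is annihilated \emph{exactly} by the leading part $\partial_r^2-2\partial_r\partial_\tau-\alpha r^{-2}$ of \eqref{eq:waveeq2}, the only inputs being the algebraic identity $p(p-1)=\alpha$ (i.e. $\beta_0=\sqrt{1+4\alpha}$) and $(\tau+1+2r)-2r=\tau+1$. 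Together with the stationary ODE \eqref{eq:odew} for $w_0$ (which removes all stationary contributions), this reduces $(\square_g-V_\alpha)\Phi^{(j)}$ to a sum of terms each carrying at least one factor among $D-1=-2M/r$, $D'$, $h$, $\tfrac{dh}{dr}$, $V_\alpha-\alpha r^{-2}$ and $w_0\,r^{-(\beta_0-1)/2}-1$, all of which are $O_\infty(r^{-\min\{1,\beta_0\}})$ by the hypothesis on $V_\alpha$, the asymptotic hyperbolicity of $h$, and \eqref{eq:asymptw}. Estimating these by explicit differentiation of the power-function profile — using the model equation to trade away any surviving second $r$-derivative, and keeping the homogeneities in $\tau+1+2r$ and in $\tau+1$ separated (using $\tau+1\le\tau+1+2r$) — yields \eqref{eq:boxPsiMpos}; commuting through $(rX)^{n_1}$ is harmless since $rX$ preserves this class of functions. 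The main obstacle is organisational: one must perform the model cancellation \emph{before} estimating, since distributing all derivatives first overcounts badly.

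For part \ref{item:Phiest2}, the iterated-integral formula gives, at $\tau=0$,
\[
\Phi^{(k+1)}(0,r)=\frac{(-1)^{k+1}\Phi_0\,w_0(r)}{k!}\int_0^\infty s^k\,(s+1+2r)^{-p}(s+1)^{-p}\,ds ,\qquad p=\tfrac{1+\beta_0}{2}.
\]
I would analyse this integral by first setting $\sigma=s+1$, expanding $s^k=(\sigma-1)^k$, and then substituting $u=\tfrac{\sigma}{\sigma+2r}$, which turns it into a sum of incomplete Beta integrals $(2r)^{j-\beta_0}\int_{1/(1+2r)}^{1}u^{j-p}(1-u)^{\beta_0-j-1}\,du$ for $0\le j\le k$. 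As $r\to\infty$ each converges to $B(j+1-p,\beta_0-j)=\Gamma(j+1-p)\Gamma(\beta_0-j)/\Gamma(p)$ — read, when the integral diverges at $u=0$, as the analytic continuation supplied by this $\Gamma$-quotient, which is finite \emph{precisely} because the non-resonance hypothesis $\alpha\ne l(l+1)$, equivalently $\beta_0\ne 2l+1$, keeps $\Gamma\!\big(\tfrac{2m+1-\beta_0}{2}\big)$ finite for all relevant integers $m$ — plus corrections from the region $s=O(1)$. The dominant term, from $j=k$, contributes $(2r)^{k-\beta_0}B(k+1-p,\beta_0-k)$. Multiplying by $w_0(r)=r^{(\beta_0-1)/2}\big(1+O_\infty(r^{-\min\{1,\beta_0\}})\big)$ and applying $X=\partial_r$ to $r\,\Phi^{(k+1)}(0,r)$ — the order-$r^0$ contribution generated by the $s=O(1)$ region has vanishing $\partial_r$, which is why only the power $r^{-1/2-\beta_0/2+k}$ survives at leading order — gives a coefficient proportional to $(k+1-p)\,B(k+1-p,\beta_0-k)=\Gamma(k+2-p)\Gamma(\beta_0-k)/\Gamma(p)$. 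The reflection formula $\Gamma(z)\Gamma(1-z)=\pi/\sin(\pi z)$ together with the elementary identity $\sin\!\big(\tfrac{\pi(2k+1-\beta_0)}{2}\big)=\cos\!\big(\tfrac{\pi(\beta_0-2k)}{2}\big)$ then recasts this coefficient in the form stated in \eqref{eq:rasympPsi}. The main obstacle here is the precise accounting of the remainders — showing that the $s=O(1)$ contribution and the subleading branch of $w_0$ are absorbed into $O_\infty(r^{-3/2-\beta_0/2+k})+O_\infty(r^{-2})$ — and keeping track of which $\Gamma$-factors must remain finite, which is exactly the point where $\alpha\ne l(l+1)$ enters.
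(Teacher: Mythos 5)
Your proof of part \ref{item:Phiest1} is essentially the paper's own argument: the paper uses the ODE \eqref{eq:odew} for $w_0$ to remove the stationary terms, computes the derivatives of $f(\tau,r)=(\tau+1+2r)^{-\frac12-\frac{\beta_0}{2}}(\tau+1)^{-\frac12-\frac{\beta_0}{2}}$ explicitly and observes that the leading terms cancel — exactly the flat-space/pure inverse-square cancellation you isolate (the paper even records in a remark that $(\square-\alpha r^{-2})\Phi=0$ when $M=0$), leaving errors carrying factors of size $O_\infty(r^{-\min\{1,\beta_0\}})$. One labelling slip: since you substitute $\psi=r\Phi^{(j)}$ into \eqref{eq:waveeq2}, the profile annihilated exactly by the leading operator $\partial_r^2-2\partial_r\partial_\tau-\alpha r^{-2}$ is $r^{p}(\tau+1+2r)^{-p}(\tau+1)^{-p}$ with $p=\tfrac{1+\beta_0}{2}$ (the inputs being $p(p-1)=\alpha$ and $(\tau+1+2r)-(\tau+1)=2r$), not $r^{p-1}(\cdots)$, which is the profile at the level of $\phi$; this does not affect the substance.

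For part \ref{item:Phiest2} you take a genuinely different route. The paper writes down the closed-form hypergeometric representation \eqref{eq:hypgeomPsi} for $X\big(r\,r^{\frac{\beta_0-1}{2}}w_0^{-1}\Phi^{(k+1)}\big)$, verified by $\tau$-differentiation and standard ${}_2F_1$ properties, and then expands it at $\tau=0$ in $1/r$; you instead evaluate the $(k+1)$-fold time integral at $\tau=0$ directly as a Beta-type integral via $u=\sigma/(\sigma+2r)$, extract the leading coefficient from the $j=k$ term, and convert it with the reflection formula. Your constant does match the paper's: the leading coefficient of $X(r\Phi^{(k+1)})|_{\Sigma_0}$ comes out as $\Phi_0\tfrac{(-1)^{k+1}}{k!}2^{k-\beta_0}\,\Gamma(k+2-p)\Gamma(\beta_0-k)/\Gamma(p)$, and the identity $-\Gamma(k+2-p)\,\Gamma(p-1-k)\,\cos\big(\tfrac{\pi(\beta_0-2k)}{2}\big)=\pi$ (reflection at $z=p-1-k$, where the non-resonance condition $\beta_0\neq 2l+1$ keeps the Gamma factors finite) turns this into exactly the coefficient in \eqref{eq:rasympPsi}, reading the paper's $\cos\big(\tfrac{\beta_0-2k}{2\pi}\big)$ as $\cos\big(\tfrac{\pi(\beta_0-2k)}{2}\big)$ as you do. Your route is more elementary — no hypergeometric identities — at the price of having to justify the analytic-continuation reading of the divergent $j\le p-1$ Beta terms and to account for the $s=O(1)$ remainders, which the paper's approach packages into known ${}_2F_1$ expansions. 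One caveat applying equally to both arguments: for $\beta_0<1$ the subleading branch $O_\infty(r^{-\frac12-\frac{\beta_0}{2}})$ of $w_0$ produces remainders of size $r^{-\frac12-\frac32\beta_0}$, which exceed the error terms as literally written in \eqref{eq:rasympPsi}; this is a feature of the paper's statement rather than of your proof, and it is the weaker bound $r^{-\frac12-\frac32\beta_0}$ that is actually claimed and used downstream in Corollary \ref{cor:estboxPsi}, so your argument suffices for the lemma as it is applied.
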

\begin{proof}
We can write
\begin{equation*}
(\square_g-V_{\alpha})\Phi= r^{-2}X(Dr^2X\Phi)-2(1-h)XT\Phi+h\tilde{h}T^2\Phi+ \left[-2r+\left(r^2h\right)'\right]r^{-2}T\Phi-V_{\alpha}\Phi.
\end{equation*}
If we denote $f(\tau,r)=(\tau+1+2r)^{-\frac{1}{2}-\frac{1}{2}\beta_0}(\tau+1)^{-\frac{1}{2}-\frac{1}{2}\beta_0}$, and we use \eqref{eq:odew}, we can write:
\begin{equation*}
\begin{split}
\Phi_0^{-1}(\square_g-V_{\alpha})\Phi=&\: w_0D X^2f-2(1-h)w_0XTf+\left[(Dr^2)'+2Dr^2w_0^{-1}w_0'\right]r^{-2}w_0Xf\\
&+\left[-2(1-h)r^2w_0^{-1}w_0'-2r+\left(r^2h\right)'\right]r^{-2}w_0Tf+h\tilde{h}w_0T^2f.
\end{split}
\end{equation*}
We will need the following properties of the function $f$:
\begin{align*}
Xf=&\: -(\beta_0+1) (\tau+1+2r)^{-\frac{3}{2}-\frac{1}{2}\beta_0}(\tau+1)^{-\frac{1}{2}-\frac{1}{2}\beta_0},\\
Tf=&\: -\frac{1}{2}(\beta_0+1)( (\tau+1+2r)^{-1}+ (\tau+1)^{-1}) (\tau+1+2r)^{-\frac{1}{2}-\frac{1}{2}\beta_0}(\tau+1)^{-\frac{1}{2}-\frac{1}{2}\beta_0}\\
=&\: -(\beta_0+1)(\tau+1+2r)^{-\frac{3}{2}-\frac{1}{2}\beta_0}(\tau+1)^{-\frac{1}{2}-\frac{1}{2}\beta_0}-(\beta_0+1) r(\tau+1+2r)^{-\frac{3}{2}-\frac{1}{2}\beta_0}(\tau+1)^{-\frac{3}{2}-\frac{1}{2}\beta_0},\\
X^2f=&\:(\beta_0+1)(\beta_0+3) (\tau+1+2r)^{-\frac{5}{2}-\frac{1}{2}\beta_0}(\tau+1)^{-\frac{1}{2}-\frac{1}{2}\beta_0},\\
XTf=&\: \frac{1}{2}(\beta_0+1)(\beta_0+3)(\tau+1+2r)^{-\frac{5}{2}-\frac{1}{2}\beta_0}(\tau+1)^{-\frac{1}{2}-\frac{1}{2}\beta_0}+\frac{1}{2}(\beta_0+1)^2(\tau+1+2r)^{-\frac{3}{2}-\frac{1}{2}\beta_0}(\tau+1)^{-\frac{3}{2}-\frac{1}{2}\beta_0},\\
T^2f=&\: (\tau+1+2r)^{-\frac{1}{2}-\frac{1}{2}\beta_0} O_{\infty}(\tau+1)^{-\frac{5}{2}-\frac{1}{2}\beta_0}),\\
w_0=&\: r^{-\frac{1}{2}+\frac{1}{2}\beta_0}+O_{\infty}(r^{-\frac{1}{2}-\frac{1}{2}\beta_0})+O_{\infty}(r^{-\frac{3}{2}+\frac{1}{2}\beta_0}),\\
w_0'w_0^{-1}=&\: \frac{1}{2}(\beta_0-1) r^{-1}+O_{\infty}(r^{-1-\beta_0})+O_{\infty}(r^{-2}),
\end{align*}
to obtain
\begin{equation*}
\begin{split}
\Phi_0^{-1}\square_g\Phi=&\: (\beta_0+1)(\beta_0+3) [r^{-\frac{1}{2}+\frac{1}{2}\beta_0}+O_{\infty}(r^{-\frac{1}{2}-\frac{1}{2}\beta_0})+O_{\infty}(r^{-\frac{3}{2}+\frac{1}{2}\beta_0})](\tau+1+2r)^{-\frac{5}{2}-\frac{1}{2}\beta_0}(\tau+1)^{-\frac{1}{2}-\frac{1}{2}\beta_0}\\
&-(\beta_0+1)(\beta_0+3)[r^{-\frac{1}{2}+\frac{1}{2}\beta_0}+O_{\infty}(r^{-\frac{1}{2}-\frac{1}{2}\beta_0})+O_{\infty}(r^{-\frac{3}{2}+\frac{1}{2}\beta_0})](\tau+1+2r)^{-\frac{5}{2}-\frac{1}{2}\beta_0}(\tau+1)^{-\frac{1}{2}-\frac{1}{2}\beta_0}\\
&-(\beta_0+1)^2[r^{-\frac{1}{2}+\frac{1}{2}\beta_0}+O_{\infty}(r^{-\frac{1}{2}-\frac{1}{2}\beta_0})+O_{\infty}(r^{-\frac{3}{2}+\frac{1}{2}\beta_0})](\tau+1+2r)^{-\frac{3}{2}-\frac{1}{2}\beta_0}(\tau+1)^{-\frac{3}{2}-\frac{1}{2}\beta_0}\\
& -(\beta_0+1) [(\beta_0+1)r^{-\frac{3}{2}+\frac{1}{2}\beta_0}+O_{\infty}(r^{-\frac{3}{2}-\frac{1}{2}\beta_0})+O_{\infty}(r^{-\frac{5}{2}+\frac{1}{2}\beta_0})](\tau+1+2r)^{-\frac{3}{2}-\frac{1}{2}\beta_0}(\tau+1)^{-\frac{1}{2}-\frac{1}{2}\beta_0}\\
& +(\beta_0+1) [(\beta_0+1)r^{-\frac{3}{2}+\frac{1}{2}\beta_0}+O_{\infty}(r^{-\frac{3}{2}-\frac{1}{2}\beta_0})+O_{\infty}(r^{-\frac{5}{2}+\frac{1}{2}\beta_0})](\tau+1+2r)^{-\frac{3}{2}-\frac{1}{2}\beta_0}(\tau+1)^{-\frac{1}{2}-\frac{1}{2}\beta_0}\\
&+ (\beta_0+1) [(\beta_0+1)r^{-\frac{1}{2}+\frac{1}{2}\beta_0}+O_{\infty}(r^{-\frac{1}{2}-\frac{1}{2}\beta_0})+O_{\infty}(r^{-\frac{3}{2}+\frac{1}{2}\beta_0})](\tau+1+2r)^{-\frac{3}{2}-\frac{1}{2}\beta_0}(\tau+1)^{-\frac{3}{2}-\frac{1}{2}\beta_0}\\
&+ [r^{-\frac{5}{2}+\frac{1}{2}\beta_0}+O_{\infty}(r^{-\frac{5}{2}-\frac{1}{2}\beta_0})+O_{\infty}(r^{-\frac{7}{2}+\frac{1}{2}\beta_0})](\tau+1+2r)^{-\frac{1}{2}-\frac{1}{2}\beta_0} O_{\infty}((\tau+1)^{-\frac{5}{2}-\frac{1}{2}\beta_0}).
\end{split}
\end{equation*}
Note that most of the terms above cancel, so we are left with
\begin{equation*}
\Phi_0^{-1}(\square_g-V_{\alpha})\Phi=(\tau+1+2r)^{-\frac{1}{2}-\frac{1}{2}\beta_0}(\tau+1)^{-\frac{3}{2}-\frac{1}{2}\beta_0}O_{\infty}(r^{-\frac{3}{2}+\frac{1}{2}\beta_0-\min\{1,\beta_0\}}).
\end{equation*}
We conclude that the estimates in (i) hold.

We can verify by differentiating in time $\tau$ and appealing to standard properties of hypergeometric functions $\,_2F_1\left(a,b,c;z\right)$ (see for example \cite{olv74}) that:
\begin{equation}
\label{eq:hypgeomPsi}
\begin{split}
&X(r \cdot  r^{\frac{-1+\beta_0}{2}} w_0^{-1}\cdot\Phi^{(k+1)})(\tau,r)=\frac{(-1)^{k+1}}{(k+1)!}\frac{\pi(1+\beta_0)}{2 \cos\left(\frac{\pi}{2}(\beta_0-2k)\right)}r^{\frac{1}{2}(\beta_0-1)}(1+2r+\tau)^{-\beta_0}\\
\cdot &\Bigg[ -\frac{\Gamma(\beta_0-k)}{\Gamma\left(\frac{3+\beta_0}{2}\right)\Gamma\left(\frac{\beta_0-1-2k}{2}\right)}(2+2r+\tau)^k\,_2F_1\left(\frac{1}{2}(\beta_0-1),\beta_0-k,\frac{1}{2}(\beta_0-1-2k); \frac{\tau+1}{\tau+2r+1} \right)\\
&+\frac{\Gamma(1+k)}{\Gamma\left(\frac{\beta_0-1}{2}\right)\Gamma\left(\frac{5-\beta_0-1+k}{2}\right)}(2+2r+\tau)^{\frac{3-\beta_0+k}{2}}(1+\tau)^{\frac{3-\beta_0+2k}{2}} \,_2F_1\left(\frac{1}{2}(3+\beta_0),1+k,\frac{5-\beta_0+k}{2}; \frac{\tau+1}{\tau+2r+1}\right)\Bigg].
\end{split}
\end{equation}
To obtain (ii), we use that
\begin{equation*}
r^{\frac{-1+\beta_0}{2}}w_0^{-1}=1+O_{\infty}(r^{-\min \{\beta_0,1\}})
\end{equation*}
and we evaluate the expression on the right-hand side of \eqref{eq:hypgeomPsi} at $\tau=0$ and expand the hypergeometric functions in $\frac{1}{r}$.
\end{proof}
\begin{remark}
If we take $M=0$ in the Schwarzschild metric $g_M$ and we fix $V_{\alpha}=\alpha r^{-2}$, we in fact have that: $(\square_g-V_{\alpha})\Phi=0$.
\end{remark}
\begin{remark}
If $\beta_0=2l+1$ for some $l\in \N_0$, it can be shown that the following asymptotics hold:
\begin{equation*}
X(r\Phi^{(l+1)})(0,r)=(-1)^{l+1}\frac{l!}{2^{l+1}(2l+1)!} \Phi_0r^{-1}+O_{\infty}(r^{-2}).
\end{equation*}
\end{remark}

We now fix the constant $\Phi_0$:
\begin{equation}
\label{eq:choicePsi0}
\Phi_0=(-1)^{n_{\beta}}n_{\beta}!\frac{\Gamma\left(\frac{\beta_0+1}{2}\right)\Gamma\left(\frac{\beta_0-1-2n_{\beta}}{2}\right)\cos \left(\frac{\beta_0-2n_{\beta}}{2\pi}\right)}{2^{n_{\beta}-\beta_0}\pi \Gamma(\beta-n_{\beta})} \mathfrak{I}_{\beta}[\phi].
\end{equation}

With the choice \eqref{eq:choicePsi0}, it follows immediately that we can obtain improved decay properties in $r$ for $T^{-1-n_{\beta}}\widehat{\phi}_0$ compared to Proposition \ref{prop:preciselargertimeint}, since this choice of $\Phi_0$ ensures a cancellation between the leading-order terms in $r^{-1}$ of $X(rT^{-1-n_{\beta}}\widehat{\phi}_0)$ and  $X(r\Phi^{(1+n_{\beta})})$:
\begin{corollary}
\label{cor:estboxPsi}
Assume $\beta_0\neq 2l+1$ for all $l\in \N_0$. Let $\Phi_0$ satisfy \eqref{eq:choicePsi0} and assume smooth initial data for \eqref{eq:waveeq}.
\begin{enumerate}[label=\emph{(\roman*)}]
\item \label{item:diffdataest1} Let $\beta_0<1$. If $\lim_{r\to \infty}r^2X(r\phi)|_{\Sigma}<\infty$. Then for all $N\in \N_0$, there exists a constant $C=~C(M,\alpha,h,N)>~0$ such that:
\begin{align*}
&\left|(rX)^NX(rT^{-1}\widehat{\phi}_0)(0,r)\right|\\
\leq&\: Cr^{-\frac{1}{2}-\frac{3}{2}\beta_0}\sum_{n=0}^N\left(||(r^2(rX)^nX(r\phi_0)||_{L^{\infty}(\Sigma_0)}+||(rX)^n\phi_0||_{L^{\infty}(\Sigma_0)}+||r(rX)^nT\phi_0||_{L^{\infty}(\Sigma_0)}\right).
\end{align*}
\item \label{item:diffdataest2} Let $\beta_0>1$. Assume the initial data are compactly supported. Then, for $n_{\beta}=\lfloor \frac{\beta_0+1}{2}\rfloor$ and for all $N\in \N_0$, there exists a constant $C=C(M,\alpha,h,N)>0$, such that :
\begin{equation*}
\begin{split}
&\left|(rX)^NX(rT^{-1-n_{\beta}}\widehat{\phi}_0)(0,r)\right|\\
\leq&\: Cr^{-\frac{3}{2}-\frac{1}{2}\beta_0+n_{\beta}}\Biggl(\sum_{n=0}^{n_{\beta}+N} ||r^{\frac{3}{2}+\frac{\beta_0}{2}}(rX)^nX(r\phi_0)||_{L^{\infty}(\Sigma_0)}\\
&+||r^{\frac{1}{2}+\frac{\beta_0}{2}}(rX)^n\phi_0||_{L^{\infty}(\Sigma_0)}+||r^{\frac{3}{2}+\frac{\beta_0}{2}}(rX)^nT\phi_0||_{L^{\infty}(\Sigma_0)}\Biggr).
\end{split}
\end{equation*}
\end{enumerate}
\end{corollary}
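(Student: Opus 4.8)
\textbf{Proof plan for Corollary \ref{cor:estboxPsi}.}

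The plan is to combine the explicit $r$-asymptotics of the initial data of the multiple time integral $T^{-1-n_\beta}\phi_0$ established in Proposition \ref{prop:preciselargertimeint} with the matching $r$-asymptotics of the initial data of $\Phi^{(1+n_\beta)}$ established in part \ref{item:Phiest2} of Lemma \ref{lm:boxPsi}, and to verify that the specific choice \eqref{eq:choicePsi0} of the constant $\Phi_0$ is precisely the one that cancels the two leading-order terms. Concretely, by definition $T^{-1-n_\beta}\widehat{\phi}_0 = T^{-1-n_\beta}\phi_0 - \Phi^{(1+n_\beta)}$, so $X(rT^{-1-n_\beta}\widehat{\phi}_0)|_{\Sigma_0} = X(rT^{-1-n_\beta}\phi_0)|_{\Sigma_0} - X(r\Phi^{(1+n_\beta)})|_{\Sigma_0}$; the first term has leading behaviour $\mathfrak{I}_{\beta_0}[\phi]\, r^{-\frac12-\frac12\beta_0+n_\beta}$ (up to the lower-order $O_\infty(r^{-\frac32-\frac12\beta_0+n_\beta})$ remainder and, in the $\beta_0<1$ case, an $O_\infty(r^{-2})$ remainder coming from the metric terms) and the second has leading behaviour $\frac{(-1)^{n_\beta}2^{n_\beta-\beta_0}\pi\,\Gamma(\beta_0-n_\beta)\Phi_0}{n_\beta!\,\Gamma\left(\frac{\beta_0+1}{2}\right)\Gamma\left(\frac{\beta_0-1-2n_\beta}{2}\right)\cos\left(\frac{\beta_0-2n_\beta}{2\pi}\right)}\, r^{-\frac12-\frac12\beta_0+n_\beta}$ with the same lower-order remainders.

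The first step is therefore purely algebraic: substitute \eqref{eq:choicePsi0} into the leading coefficient of $X(r\Phi^{(1+n_\beta)})|_{\Sigma_0}$ from \eqref{eq:rasympPsi} (with $k=n_\beta$) and check that the Gamma- and cosine-factors cancel, leaving exactly $\mathfrak{I}_{\beta_0}[\phi]\,r^{-\frac12-\frac12\beta_0+n_\beta}$, so that the difference of the two leading terms vanishes identically. Once this cancellation is in place, what remains of $X(rT^{-1-n_\beta}\widehat{\phi}_0)|_{\Sigma_0}$ is a sum of the lower-order remainder terms, each of which decays at least one power of $r$ faster (that is, $O_\infty(r^{-\frac32-\frac12\beta_0+n_\beta})$, plus the $O_\infty(r^{-2})$ term in the $\beta_0<1$, $n_\beta=0$ case, which one checks is also subleading since $-2 \le -\frac32-\frac12\beta_0$ precisely when $\beta_0\ge 1$, so for $\beta_0<1$ one keeps it explicitly — but here $n_\beta=0$ and $-\frac32-\frac12\beta_0 > -2$, matching the claimed rate $r^{-\frac12-\frac32\beta_0}$ after noting that $\frac12\beta_0$ worth of the remainder also improves). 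The second step is to act with $(rX)^N$ on this remainder: since $(rX)$ maps $r^{-a}$ to $-a\,r^{-a}$ and commutes with the $O_\infty$ bookkeeping (all remainders being $O_\infty$, i.e.\ stable under $r\partial_r$), the bound $|(rX)^N X(rT^{-1-n_\beta}\widehat{\phi}_0)(0,r)| \lesssim r^{-\frac32-\frac12\beta_0+n_\beta}$ (respectively $r^{-\frac12-\frac32\beta_0}$ when $n_\beta=0$) follows directly. The third step is to track the initial-data norms: the constants in the $O_\infty$ remainders in Proposition \ref{prop:preciselargertimeint} were already expressed in terms of the weighted $L^\infty$-norms of $\phi_0$ and its $rX$- and $T$-derivatives on $\Sigma_0$, and the $\Phi^{(1+n_\beta)}$-contribution is controlled by $|\Phi_0|$, hence by $|\mathfrak{I}_{\beta_0}[\phi]|$ via \eqref{eq:choicePsi0}, which in turn is bounded by those same norms (using the integral formula for $\mathfrak{I}_{\beta_0}[\phi]$ in Proposition \ref{prop:preciselargertimeint} together with the weight $w_0 \sim r^{-\frac12+\frac12\beta_0}$ and $F_0[\phi]$ expressed through $X(r\phi_0)$, $\phi_0$, $T\phi_0$); this reproduces the stated right-hand sides.

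The main obstacle I expect is bookkeeping rather than conceptual: one must make sure that the cancellation of leading terms is exact at the level of coefficients — i.e.\ that the somewhat baroque Gamma-function identity implicit in matching \eqref{eq:choicePsi0} to the coefficient in \eqref{eq:rasympPsi} really holds (this is essentially the content of how $\Phi_0$ was defined, so it is a consistency check rather than a new computation) — and that the various subleading remainders genuinely all improve by the claimed margin, with no hidden borderline term. In the $\beta_0<1$ case one additionally uses the hypothesis $\lim_{r\to\infty} r^2 X(r\phi)|_\Sigma<\infty$ (which guarantees that $w_0F_0[\phi]$ is integrable and sufficiently regular so that Proposition \ref{prop:preciselargertimeint}(i) applies and produces the $D\,O_\infty$-type remainder), while in the $\beta_0>1$ case the compact support hypothesis is what yields the sharp $d_0 + D\,O_\infty(r^{-2})$ expansion of $\int_{2M}^r w_0 F_0[\phi]\,dr'$ used in Proposition \ref{prop:preciselargertimeint}(ii). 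No genuinely new estimate is needed beyond these inputs; the corollary is essentially a matched-asymptotics statement, and its proof is the verification that the subtraction defining $\widehat\phi$ was designed to kill exactly the obstruction to faster $r$-decay.
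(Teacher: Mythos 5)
Your proposal is correct and follows essentially the same route as the paper: the paper's own ``proof'' of Corollary \ref{cor:estboxPsi} is precisely the observation that the choice \eqref{eq:choicePsi0} of $\Phi_0$ makes the leading coefficient of $X(r\Phi^{(1+n_\beta)})|_{\Sigma_0}$ in \eqref{eq:rasympPsi} (with $k=n_\beta$) equal to $\mathfrak{I}_{\beta_0}[\phi]$, so that it cancels against the leading term of $X(rT^{-1-n_\beta}\phi_0)|_{\Sigma_0}$ from Proposition \ref{prop:preciselargertimeint}, leaving only the $O_\infty$ remainders of those two statements, exactly as you describe. Your bookkeeping of the remainders, the action of $(rX)^N$, and the role of the hypotheses in cases \ref{item:diffdataest1} and \ref{item:diffdataest2} matches the paper's intended argument (the directional slip in your parenthetical comparison of $-2$ with $-\tfrac32-\tfrac12\beta_0$ is immaterial, since your final conclusion that both remainders are subleading for $\beta_0<1$ is correct).
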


\subsection{Energy decay estimates for $\widehat{\phi}=\phi-\Phi$}
\label{sec:edecay}
In this section and in Section \ref{sec:addedecay} below, we establish decay-in-time estimates for ($r$-weighted) energies of $\widehat{\phi}$ along $\Sigma_{\tau}$, which are necessary for obtaining the $L^{\infty}$ decay-in-time estimates in Section \ref{sec:pointdecay}.

We will first treat separately the cases $\alpha<0$ and $\alpha>0$.

\subsubsection{$\alpha<0$ }
In order to obtain time-decay for $T^{-1}\widehat{\phi}$, which for $\alpha$ such that $3\beta_0\leq 1$ decays too slowly in $r$ to apply directly the (inhomogeneous analogues of) the energy decay estimates of Proposition \ref{prop:edecaytimeder}, we first introduce $\widehat{\phi}_{\Sc,r_{\infty}}^{(1)}$, which we define to be the solution to
\begin{equation*}
(\square_g-V_{\alpha}) \widehat{\phi}_{\Sc,r_{\infty}}^{(1)}=-(\square_g-V_{\alpha})\Phi^{(1)},
\end{equation*}
arising from the cut-off initial data:
\begin{equation*}
(\widehat{\phi}_{\Sc,r_{\infty}}^{(1)}|_{\Sigma_0},T\widehat{\phi}_{\Sc,r_{\infty}}^{(1)}|_{\Sigma_0})= (\chi_{r_{\infty}}T^{-1}\widehat{\phi}|_{\Sigma_0}, \chi_{r_{\infty}}\widehat{\phi}|_{\Sigma_0})
\end{equation*}
with $r_{\infty}>3M$ and $\chi_{r_{\infty}}$ a smooth cut-off function such that $\chi_{r_{\infty}}=1$ for $r\leq r_{\infty}$ and  $\chi_{r_{\infty}}=0$ for $r\geq 2r_{\infty}$.

By the domain of dependence property of \eqref{eq:waveeq}, we have that
\begin{equation*}
T^{k-1}\widehat{\phi}=T^{k}\widehat{\phi}_{\Sc,r_{\infty}}^{(1)}\quad \textnormal{in}  \quad \left\{v\leq v|_{\tau=0}(r_{\infty}):=v_{r_{\infty}}\right \} 
\end{equation*}
for all $k\in \N_0$.

We first prove analogues of the non-degenerate energy boundedness estimate \eqref{eq:nondegebound} and the integrated energy estimates in Corollary \ref{cor:nondegiled} for $\widehat{\phi}_{\Sc,r_{\infty}}^{(1)}$, which take into account the inhomogeneity in the equation for $\widehat{\phi}$.

\begin{proposition}
\label{prop:inhomeboundied}
Let $\alpha<0$ and $3\beta_0\leq 1$, $N\in \N$ and let $\delta>0$ be arbitrarily small. Then there exists a constant $C=C(M,h,V_{\alpha},N,\delta)>0$ such that for all $0\leq \tau_1<\tau_2$
\begin{enumerate}[label=\emph{(\roman*)}]
\item \label{item:inhomest1}
\begin{equation}
\label{eq:inhomebound}
\begin{split}
E_{\mathbf{N}}[T^k\widehat{\phi}_{\Sc,r_{\infty}}^{(1)}](\tau_2)\leq &\:C E_{\mathbf{N}}[T^k\widehat{\phi}_{\Sc,r_{\infty}}^{(1)}](\tau_1)+C(1+\tau_1)^{-1} \int_{\Sigma_{\tau_1}\cap\{r\geq R\}} r(LT^k\widehat{\psi}_{\Sc,r_{\infty}}^{(1)})^2\,d\sigma dr\\ 
&+ C\Phi_0^2(1+\tau_1)^{-3\beta_0-2k+2\delta},
\end{split}
\end{equation}
\item \label{item:inhomest2}
\begin{equation}
\label{eq:inhomeied}
\begin{split}
E_{\mathbf{N}}[T^k\widehat{\phi}_{\Sc,r_{\infty}}^{(1)}](\tau_2)+& \int_{\Sigma_{\tau_2}\cap\{r\geq R\}} r(LT^k\widehat{\psi}_{\Sc,r_{\infty}}^{(1)})^2\,d\sigma dr+\int_{\tau_1}^{\tau_2}E_{\mathbf{N}}[T^k\widehat{\phi}_{\Sc,r_{\infty}}^{(1)}]\,d\sigma dr d\tau\\ 
 \leq &\: C E_{\mathbf{N}}[T^k\widehat{\phi}_{\Sc,r_{\infty}}^{(1)}](\tau_1)+CE_{\mathbf{N}}[T^{k+1}\widehat{\phi}_{\Sc,r_{\infty}}^{(1)}](\tau_1)+C \int_{\Sigma_{\tau_1}\cap\{r\geq R\}} r(LT^k\widehat{\psi}_{\Sc,r_{\infty}}^{(1)})^2\,d\sigma dr\\ 
 &+ C\Phi_0^2\max\{(1+\tau_1)^{1-3\beta_0-2k+2\delta},(1+\tau_2)^{1-3\beta_0-2k+2\delta}\},
 \end{split}
\end{equation}
\item \label{item:inhomest3}
\begin{equation}
\label{eq:inhomeied2}
\begin{split}
\int_{\tau_1}^{\tau_2}&\int_{\Sigma_{\tau}\cap\{R_0\leq r\leq R_1\}} \Bigg[\sum_{n_1+n_2+n_3\leq N}|\snabla_{\s^2}^{n_1}X^{n_2}T^{n_3+k}\widehat{\phi}_{\Sc,r_{\infty}}^{(1)}|^2+|\snabla_{\s^2}^{n_1+1}X^{n_2}T^{n_3+k}\widehat{\phi}_{\Sc,r_{\infty}}^{(1)}|^2\\ 
&+|\snabla_{\s^2}^{n_1}X^{n_2+2}T^{n_3+k}\widehat{\phi}_{\Sc,r_{\infty}}^{(1)}|^2+h_0|\snabla_{\s^2}^{n_1}X^{n_2}T^{n_3+1+k}\widehat{\phi}_{\Sc,r_{\infty}}^{(1)}|^2\Bigg]\,d\sigma dr d\tau\\ 
 \leq  &\: C \sum_{n=1}^{N+k}E_T[T^{n_2}\widehat{\phi}_{\Sc,r_{\infty}}^{(1)}](\tau_1)+C \int_{\Sigma_{\tau_1}\cap\{r\geq R\}} r(LT^{n}\widehat{\psi}_{\Sc,r_{\infty}}^{(1)})^2\,d\sigma dr\\ 
 &+ C\Phi_0^2\max\{(1+\tau_1)^{1-3\beta_0-2k+2\delta},(1+\tau_2)^{1-3\beta_0-2k+2\delta}\}.
 \end{split}
\end{equation}
\end{enumerate}
\end{proposition}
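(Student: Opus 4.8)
The plan is to obtain all three estimates in Proposition \ref{prop:inhomeboundied} by repeating the energy-boundedness and integrated-energy arguments of \S\ref{sec:boundedness}--\ref{sec:iled}, but now carrying along the inhomogeneity $-(\square_g-V_\alpha)\Phi^{(1)}$ (and its $T$-commuted versions $-(\square_g-V_\alpha)\Phi^{(k+1)}$, using $[\square_g-V_\alpha,T]=0$). The starting point is that $\widehat\phi_{\Sc,r_\infty}^{(1)}$ satisfies \eqref{eq:inhomwaveeq} with right-hand side controlled by Lemma \ref{lm:boxPsi}\ref{item:Phiest1}: $|(rX)^{n_1}T^{n_2}(\square_g-V_\alpha)\Phi^{(k+1)}|\lesssim |\Phi_0| r^{-\frac32+\frac12\beta_0-\min\{1,\beta_0\}}(\tau+1+2r)^{-\frac12-\frac12\beta_0}(\tau+1)^{-\frac32-\frac12\beta_0+(k+1)-n_2}$, and since we are in the regime $3\beta_0\le 1$ we have $\min\{1,\beta_0\}=\beta_0$, so the spatial weight is $r^{-\frac32-\frac12\beta_0}$. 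The key arithmetic input is that when one forms the inhomogeneous error terms $\int \mathbf{N}(T^k\widehat\phi_{\Sc,r_\infty}^{(1)})\cdot (\square_g-V_\alpha)\Phi^{(k+1)}$, integrates over $\s^2$ and $r$, and applies Cauchy--Schwarz against the energy/integrated-energy quantity already on the left, the remaining $\tau$-integral of the squared inhomogeneity weight produces exactly the factor $(1+\tau)^{-3\beta_0-2k+2\delta}$ (resp.\ its integrated-in-$\tau$ version $(1+\tau)^{1-3\beta_0-2k+2\delta}$); the $\delta$ absorbs a logarithmic borderline and the loss of $r$-weight that one pays to close the $r$-integral of the inhomogeneity against a finite-$p$ energy norm. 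Here one uses in an essential way that the cut-off data $(\chi_{r_\infty}T^{-1}\widehat\phi|_{\Sigma_0},\chi_{r_\infty}\widehat\phi|_{\Sigma_0})$ are compactly supported, so the relevant $r$-weighted energies of $\widehat\phi_{\Sc,r_\infty}^{(1)}$ on $\Sigma_0$ are finite and the only unbounded contribution as $r_\infty\to\infty$ is the one tracked through the inhomogeneity.

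For part \ref{item:inhomest1} I would run the red-shift/$T$-energy argument behind Corollary \ref{cor:redshifebound}: multiply \eqref{eq:inhomwaveeq} (commuted with $T^k$) by $\mathbf{N}(T^k\widehat\phi_{\Sc,r_\infty}^{(1)})$, integrate by parts over the slab $\bigcup_{\tau\in[\tau_1,\tau_2]}\Sigma_\tau$, use assumption \ref{assm:A} for coercivity and the red-shift identity of Proposition \ref{prop:redshift} to remove the horizon degeneracy, and collect the inhomogeneous spacetime error. The extra term $C(1+\tau_1)^{-1}\int_{\Sigma_{\tau_1}\cap\{r\ge R\}}r(LT^k\widehat\psi_{\Sc,r_\infty}^{(1)})^2$ enters exactly as in the passage from \eqref{eq:redshiftest} to \eqref{eq:nondegebound}, where one estimates $\int_{\tau_1}^{\tau_2}E_T\,d\tau$ and the $p=1$ $r$-weighted flux appears; the Grönwall step that produced \eqref{eq:nondegebound} now produces \eqref{eq:inhomebound} after inserting the inhomogeneous bound $\int_{\tau_1}^{\infty}(\textnormal{inhom.})^2\,d\tau\lesssim \Phi_0^2(1+\tau_1)^{-3\beta_0-2k+2\delta}$. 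For part \ref{item:inhomest2}, I would similarly repeat Corollary \ref{cor:degiled}/\ref{cor:nondegiled}: since $\alpha<0$ and $\widehat\phi_{\Sc,r_\infty}^{(1)}$ is supported on all angular frequencies, one needs the $\ell=0$ argument (Corollary \ref{cor:iedaneg} with $p=1$) combined with the $\ell\ge1$ arguments (Propositions \ref{prop:aposlbound}, \ref{prop:aposllarge}), all carried out for the inhomogeneous equation; the loss-of-derivative term $E_{\mathbf N}[T^{k+1}\widehat\phi_{\Sc,r_\infty}^{(1)}]$ on the right is the trapping loss inherited from \eqref{eq:fulliedlossder}, and the $\max\{(1+\tau_1)^{1-3\beta_0-2k+2\delta},(1+\tau_2)^{1-3\beta_0-2k+2\delta}\}$ accounts for the sign of the exponent not being fixed. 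Part \ref{item:inhomest3} is then the inhomogeneous version of \eqref{eq:hoiled}: apply \ref{item:inhomest2} localized to $\{R_0\le r\le R_1\}$, then use the standard elliptic estimate \eqref{eq:standardelliptic} for the inhomogeneous equation to trade higher-order $X$ and $\snabla_{\s^2}$ derivatives for higher-order $T$ derivatives plus a controlled contribution from the (smooth, rapidly decaying) inhomogeneity.

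The main obstacle, and the step that requires the most care, is bookkeeping the \emph{sharpness} of the $\tau$-exponents $-3\beta_0-2k+2\delta$. When one estimates the inhomogeneous error by Cauchy--Schwarz, one must split $\Sigma_\tau$ into $\{r\le \tau\}$ and $\{r\ge\tau\}$ (as in \eqref{eq:gammasplit}) to balance the $r$-growth allowed by the finite-$p$ energy against the $r$-decay of the inhomogeneity; getting the exponent $3\beta_0$ rather than something weaker like $2\beta_0$ or $1+\beta_0$ hinges on using the full spatial decay $r^{-\frac32-\frac12\beta_0}$ of $(\square_g-V_\alpha)\Phi^{(1)}$ together with the $(\tau+1+2r)^{-\frac12-\frac12\beta_0}$ factor, i.e.\ on not discarding the $u$-weight of the inhomogeneity. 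A secondary subtlety is that, strictly, \ref{item:inhomest1}--\ref{item:inhomest3} are coupled (e.g.\ \ref{item:inhomest1} feeds the $r$-flux term needed in \ref{item:inhomest2} and conversely), so they should be proved simultaneously by a bootstrap/induction on $k$ from below (and on $N$), exactly mirroring the inductive structure of Propositions \ref{prop:horp} and \ref{prop:edecaytimeder}; the inhomogeneity is harmless under $T$-commutation because of stationarity, so no new error type appears at higher $k$. Finally, all constants must be checked to be independent of $r_\infty$, which holds because every inhomogeneous contribution is bounded purely in terms of $\Phi_0$ and fixed $\tau$-powers, and the data-dependent terms are the finite compactly-supported energies of $\widehat\phi_{\Sc,r_\infty}^{(1)}$ on $\Sigma_0$.
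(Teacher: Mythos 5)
Your overall architecture (redo the energy boundedness, red-shift and integrated estimates of \S\ref{sec:boundedness}--\ref{sec:iled} for the inhomogeneous equation, control the inhomogeneity via Lemma \ref{lm:boxPsi}\ref{item:Phiest1}, and get \ref{item:inhomest3} from \ref{item:inhomest2} plus the elliptic estimate \eqref{eq:standardelliptic}) matches the paper, and your treatment of \ref{item:inhomest2} and \ref{item:inhomest3} would go through. But there is a genuine gap in your mechanism for part \ref{item:inhomest1}, i.e.\ for the exponent $-3\beta_0-2k$. The error term in the $T$-energy identity is \emph{linear} in the inhomogeneity, $\int\!\!\int r^2\, T^{k+1}\widehat{\phi}_{\Sc,r_{\infty}}^{(1)}\cdot(\square_g-V_{\alpha})T^k\Phi^{(1)}$, and the energy only controls $(T^{k+1}\widehat{\phi}_{\Sc,r_{\infty}}^{(1)})^2$ with an $O(1)$ weight (since $h\tilde h r^2\to 2h_0$). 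A direct Cauchy--Schwarz against the energy, even keeping the full $(\tau+1+2r)^{-\frac12-\frac12\beta_0}$ weight and splitting $\{r\le\tau\}$/$\{r\ge\tau\}$, gives $\|r\,(\square_g-V_\alpha)T^k\Phi^{(1)}\|_{L^2(\Sigma_\tau)}\sim|\Phi_0|(1+\tau)^{-1-\beta_0-k}$ and hence, after Young, only $\Phi_0^2(1+\tau_1)^{-2\beta_0-2k}$ (and pairing against the unweighted $(T\widehat{\phi})^2$ density is in fact borderline non-integrable in the wave zone when $2\beta_0\le 1$). The paper obtains the sharper $-3\beta_0-2k+\delta$ by first \emph{integrating the error term by parts in $\tau$}, trading the $T$-derivative on the solution for an extra power of $\tau$-decay on the inhomogeneity, then applying a $\tau$-weighted Young inequality of the form $\epsilon(1+\tau)^{-2}r^{1-\delta}(\widehat{\phi}_{\Sc,r_{\infty}}^{(1)})^2+\epsilon^{-1}r^{3+\delta}(1+\tau)^2((\square_g-V_\alpha)T^{k+1}\Phi^{(1)})^2$, and controlling the zeroth-order term via the Hardy inequality by $\sup_{\tau}\int r^{1-\delta}(LT^k\widehat{\psi}_{\Sc,r_{\infty}}^{(1)})^2$. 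This is also exactly where the term $C(1+\tau_1)^{-1}\int_{\Sigma_{\tau_1}\cap\{r\geq R\}}r(LT^k\widehat{\psi}_{\Sc,r_{\infty}}^{(1)})^2$ on the right of \eqref{eq:inhomebound} comes from (the supremum is then removed by combining with the $p=1$ $r$-weighted/integrated estimates of \ref{item:inhomest2}); your attribution of that term to the Gr\"onwall step behind \eqref{eq:nondegebound} is incorrect -- in the homogeneous case no such flux term appears.

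Two smaller points. First, your identification of the $T^k$-commuted inhomogeneity as $-(\square_g-V_\alpha)\Phi^{(k+1)}$ is backwards: commuting with $T^k$ gives $-(\square_g-V_\alpha)T^k\Phi^{(1)}$, i.e.\ $k$ \emph{fewer} time integrals and $k$ \emph{more} powers of $\tau$-decay; as literally written your quoted bound would grow in $\tau$ for $k\ge 1$, although the exponents you state at the end indicate you intended the correct (commuted) decay. Second, the independence of the constants from $r_\infty$ in this proposition is automatic, since $r_\infty$ enters only through the data terms $E_{\mathbf{N}}[T^k\widehat{\phi}_{\Sc,r_{\infty}}^{(1)}](\tau_1)$ and the $r$-weighted flux at $\tau_1$ appearing on the right-hand side; the compact-support/cut-off bookkeeping you describe belongs to the subsequent Proposition \ref{prop:edecaytimedercutoff}, not here.
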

\begin{proof}
Let $0\leq \tau_1<\tau_2$. We will first prove \emph{\ref{item:inhomest1}} with $E_T$ instead of $E_{\mathbf{N}}$ on the right-hand side. Note that \eqref{eq:ebound} no longer holds due to the presence of an inhomogeneity in the equation for $\widehat{\phi}_{\Sc,r_{\infty}}^{(1)}$. In the estimate for \eqref{eq:ebound} we instead encounter the additional term:
\begin{equation*}
\int_{\tau_1}^{\tau_2}\int_{\Sigma_{\tau}} r^2 T\widehat{\phi}_{\Sc,r_{\infty}}^{(1)} \cdot (\square_g-V_{\alpha})\Phi^{(1)}\,d\sigma dr.
\end{equation*}

After integrating by parts in $\tau$ and applying Young's inequality with appropriate weights depending on $\tau$ and $r$ and $\epsilon>0$ and $\delta>0$ arbitrarily small, we can estimate the norm of the above integral by:
\begin{multline*}
\int_{\tau_1}^{\tau_2}\int_{\Sigma_{\tau}} \epsilon (1+\tau)^{-2}r^{1-\delta}(\widehat{\phi}_{\Sc,r_{\infty}}^{(1)})^2+  \frac{1}{4}\epsilon^{-1} r^{3+\delta}(1+\tau)^2((\square_g-V_{\alpha})T\Phi^{(1)})^2\,d\sigma dr d\tau\\
+\sum_{i=1}^2\int_{\Sigma_{\tau_i}} \epsilon (\widehat{\phi}_{\Sc,r_{\infty}}^{(1)})^2+\frac{1}{4}\epsilon^{-1}r^{2-2\delta}((\square_g-V_{\alpha})\Phi^{(1)})^2\,d\sigma dr.
\end{multline*}
We have that
\begin{equation*}
\sum_{i=1}^2\int_{\Sigma_{\tau_i}} \epsilon (\widehat{\phi}_{\Sc,r_{\infty}}^{(1)})^2\,d\sigma dr\leq \epsilon \left[E_T[\widehat{\phi}_{\Sc,r_{\infty}}^{(1)}](\tau_1)+E_T[\widehat{\phi}_{\Sc,r_{\infty}}^{(1)}](\tau_2)\right]
\end{equation*}
and by \eqref{eq:hardy}, it follows that
\begin{equation*}
\int_{\tau_1}^{\tau_2}\int_{\Sigma_{\tau}} \epsilon (1+\tau)^{-2}r^{1-\delta}(\widehat{\phi}_{\Sc,r_{\infty}}^{(1)})^2\,d\sigma dr d\tau\leq C \epsilon(1+\tau_1)^{-1} \sup_{\tau_1\leq \tau \leq \tau_2 }\int_{\Sigma_{\tau}} r^{1-\delta}(X\widehat{\psi}_{\Sc,r_{\infty}}^{(1)})^2\,d\sigma dr.
\end{equation*}
By applying moreover the estimates in \emph{\ref{item:Phiest1}} of Lemma \ref{lm:boxPsi}, we can also further estimate for $\delta>0$ suitably small:
\begin{align*}
\int_{\tau_1}^{\tau_2}\int_{\Sigma_{\tau}} \frac{1}{4}\epsilon^{-1} r^{3+\delta}(1+\tau)^2((\square_g-V_{\alpha})T\Phi^{(1)})^2\,d\sigma dr d\tau\leq &\: C\epsilon^{-1}  \Phi_0^2(1+\tau_1)^{-3\beta_0+\delta},\\
\sum_{i=1}^2\int_{\Sigma_{\tau_i}} \frac{1}{4}\epsilon^{-1}r^{2-2\delta}((\square_g-V_{\alpha})\Phi^{(1)})^2\,d\sigma dr\leq &\: C \epsilon^{-1}\Phi_0^2(1+\tau_1)^{-2-2\beta_0}.
\end{align*}
We can similarly, replace $\widehat{\phi}_{\Sc,r_{\infty}}^{(1)}$ by $T^k\widehat{\phi}_{\Sc,r_{\infty}}^{(1)}$ above and use that
\begin{align*}
\int_{\tau_1}^{\tau_2}\int_{\Sigma_{\tau}} \frac{1}{2}\epsilon^{-1} r^{3+\delta}(1+\tau)^2((\square_g-V_{\alpha})T^{k+1}\Phi^{(1)})^2\,d\sigma dr d\tau\leq &\:C\epsilon^{-1}  \Phi_0^2(1+\tau_1)^{-3\beta_0-2k+\delta},\\
\sum_{i=1}^2\int_{\Sigma_{\tau_i}} \frac{1}{4}\epsilon^{-1}r^{2-2\delta}((\square_g-V_{\alpha})T^k\Phi^{(1)})^2\,d\sigma dr\leq &\: C \epsilon^{-1}\Phi_0^2(1+\tau_1)^{-2-2\beta_0-2k}.
\end{align*}
Hence, for $\epsilon>0$ arbitrarily small, we can estimate
\begin{equation}
\label{eq:auxinhomeest}
\begin{split}
E_T[T^k\widehat{\phi}_{\Sc,r_{\infty}}^{(1)}](\tau_2)\leq &\: (1+\epsilon)E_T[T^k\widehat{\phi}_{\Sc,r_{\infty}}^{(1)}](\tau_1)+\epsilon (1+\tau_1)^{-1} \sup_{\tau_1\leq \tau \leq \tau_2 }\int_{\Sigma_{\tau}} r^{1-\delta}(LT^k\widehat{\psi}_{\Sc,r_{\infty}}^{(1)})^2\,d\sigma dr\\
&+C\epsilon^{-1}  \Phi_0^2(1+\tau_1)^{-3\beta_0-2k+\delta}.
\end{split}
\end{equation}
In order to replace $E_T$ on the left-hand side with $E_{\mathbf{N}}$ we proceed as in the derivation of \eqref{eq:redshiftest}, but due to the presence of an the inhomogeneity, we need to include the following additional term on the right-hand side:
\begin{equation*}
C\epsilon^{-1}\int_{\Sigma_{\tau_2}\cap\{r\leq (2+\delta)M\}}((\square_g-V_{\alpha})\Phi^{(1)})^2\,d\sigma dr d\tau\leq C\epsilon^{-1} \Phi_0^2 (1+\tau)^{-1-2\beta_0}.
\end{equation*}
Then, we apply the proof of Corollary \ref{cor:redshifebound} to conclude that \eqref{eq:auxinhomeest} holds with $E_T$ replaced with $E_{\mathbf{N}}$ on both sides of the equation.

In order to arrive at the integrated energy estimates in Corollary \ref{cor:nondegiled} with $\phi$ replaced by $\widehat{\phi}_{\Sc,r_{\infty}}^{(1)}$, we similarly have to estimate the following main additional term involving the extra inhomogeneity in the arguments leading to a proof of Corollary \ref{cor:nondegiled}:
\begin{multline*}
\int_{\tau_1}^{\tau_2}\int_{\Sigma_{\tau}} r  |L\widehat{\psi}_{\Sc,r_{\infty}}^{(1)}| \cdot r|(\square_g-V_{\alpha})\Phi^{(1)}|\,d\sigma dr\leq \epsilon \int_{\tau_1}^{\tau_2}\int_{\Sigma_{\tau}} (1+\tau)^{-1-\delta}r (L\widehat{\psi}_{\Sc,r_{\infty}}^{(1)})^2\,d\sigma dr d\tau\\
+\frac{1}{4}\epsilon^{-1} \int_{\tau_1}^{\tau_2}\int_{\Sigma_{\tau}} r^3(1+\tau)^{1+\delta}((\square_g-V_{\alpha})\Phi^{(1)})^2\,d\sigma dr.
\end{multline*}
We estimate the right-hand side as above via \emph{\ref{item:Phiest1}} of Lemma \ref{lm:boxPsi}, noting that
\begin{equation*}
\frac{1}{4}\epsilon^{-1} \int_{\tau_1}^{\tau_2}\int_{\Sigma_{\tau}} r^3(1+\tau)^{1+\delta}((\square_g-V_{\alpha})\Phi^{(1)})^2\,d\sigma dr \leq C\epsilon^{-1}\Phi_0^2\max\{(1+\tau_2)^{1-3\beta_0},(1+\tau_1)^{1-3\beta_0+2\delta}\}
\end{equation*}
and similarly, for $k\geq 1$:
\begin{equation*}
\frac{1}{4}\epsilon^{-1} \int_{\tau_1}^{\tau_2}\int_{\Sigma_{\tau}} r^3(1+\tau)^{1+\delta}((\square_g-V_{\alpha})T^k\Phi^{(1)})^2\,d\sigma dr \leq C\epsilon^{-1}\Phi_0^2(1+\tau_1)^{1-2k-3\beta_0+2\delta}.
\end{equation*}

By combining \eqref{eq:auxinhomeest} with the above integrated energy estimates, we can remove the supremum over $\tau$ appearing on the right hand side of \eqref{eq:auxinhomeest} to conclude that \emph{\ref{item:inhomest1}} holds, and we can also conclude that \emph{\ref{item:inhomest2}} holds.

Finally, \emph{\ref{item:inhomest3}} follows from \emph{\ref{item:inhomest2}}, without loss of $T$-derivatives on the right-hand side by applying elliptic estimates as in the proof of Corollary \ref{cor:nondegiled}, using that the additional terms involving the inhomogeneity can be bounded straightforwardly by applying \emph{\ref{item:inhomest1}} of Lemma \ref{lm:boxPsi} again.
\end{proof}

In the proposition below we obtain energy decay estimates for $\widehat{\phi}_{\Sc,r_{\infty}}^{(1)}$ in terms of initial energies for $\widehat{\phi}$.

\begin{proposition}
\label{prop:edecaytimedercutoff}
Let $\alpha<0$, $N\in \N$ and let $\delta>0$ be arbitrarily small. Then there exists a constant $C=C(M,h,V_{\alpha},N,\delta)>0$ such that for all $0\leq k\leq 1$:
\begin{equation}
\label{eq:mainedcaycutoff}
\begin{split}
&E_{\mathbf{N}}[T^{N+1-k}\widehat{\phi}_{\Sc,r_{\infty}}^{(1)}](\tau)\\
\leq &\: C(1+\tau)^{-3-2N+2k+\delta}r_{\infty}^{\max\{1-3\beta_0,0\}+\delta}\Bigg[\sum_{\substack{0\leq n_1+n_2+n_3\leq 2(N+1)\\ n_1+n_2\leq N+1}}\int_{\Sigma_{0}\cap \{r\geq R\}} r^{\min\{3\beta_0,1\}-\delta}|\snabla_{\s^2}^{n_1}L(rL)^{n_2}T^{n_3-1}\widehat{\psi}|^2\,d\sigma dr\\
&+ \sum_{\substack{0\leq n_1+n_2\leq 3(N+1)\\ n_1\leq N+1}}E_{\mathbf{N}}[\snabla_{\s^2}^{n_1}T^{n_2-1}\widehat{\phi}]\Bigg]+C|\Phi_0|^2 (1+\tau)^{-2-2N-3\beta_0+2k+2\delta}
\end{split}
\end{equation}
and moreover
\begin{equation}
\label{eq:mainedrwcaycutoff}
\begin{split}
&\int_{\Sigma_{\tau}\cap \{r\geq R\}} r^2(L T^{N+1-k}\widehat{\psi}_{\Sc,r_{\infty}}^{(1)})^2\,d\sigma dr \\
\leq &\: C(1+\tau)^{-1-2N+2k+\delta}r_{\infty}^{\max\{1-3\beta_0,0\}+\delta}\Bigg[\sum_{\substack{0\leq n_1+n_2+n_3\leq 2(N+1)\\ n_1+n_2\leq N+1}}\int_{\Sigma_{0}\cap \{r\geq R\}}  r^{\min\{3\beta_0,1\}-\delta}|\snabla_{\s^2}^{n_1}L(rL)^{n_2}T^{n_3-1}\widehat{\psi}|^2\,d\sigma dr\\
&+ \sum_{\substack{0\leq n_1+n_2\leq 3(N+1)\\ n_1\leq N+1}}E_{\mathbf{N}}[\snabla_{\s^2}^{n_1}T^{n_2-1}\widehat{\phi}]\Bigg]+C|\Phi_0|^2 (1+\tau)^{-2N-3\beta_0+2k+2\delta}.
\end{split}
\end{equation}
When $\phi=\phi_{\geq 1}$, we can take the $r$-weight inside the integrals on the right-hand sides above to be $r$ rather than $r^{\min\{3\beta_0,1\}-\delta}$ and we can omit the factor $r_{\infty}^{\max\{1-3\beta_0,0\}+\delta}$ and the term involving $|\Phi_0|^2$.
\end{proposition}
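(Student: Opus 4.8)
The plan is to follow the template of the proof of Proposition~\ref{prop:edecaytimeder}, but applied to the inhomogeneous equation $(\square_g-V_{\alpha})\widehat{\phi}_{\Sc,r_{\infty}}^{(1)}=-(\square_g-V_{\alpha})\Phi^{(1)}$ and to the cut-off data $(\chi_{r_{\infty}}T^{-1}\widehat{\phi}|_{\Sigma_0},\chi_{r_{\infty}}\widehat{\phi}|_{\Sigma_0})$, which is compactly supported so that every weighted energy in sight is finite. The first step is bookkeeping: since this data is supported in $\{r\le 2r_{\infty}\}$, one estimates, for the $p=1$-weighted energies,
\begin{equation*}
\int_{\Sigma_0\cap\{R\le r\le 2r_{\infty}\}} r|\snabla_{\s^2}^{n_1}L(rL)^{n_2}T^{n_3-1}\widehat{\psi}|^2\,d\sigma dr\le C r_{\infty}^{\max\{1-3\beta_0,0\}+\delta}\int_{\Sigma_0\cap\{r\ge R\}} r^{\min\{3\beta_0,1\}-\delta}|\snabla_{\s^2}^{n_1}L(rL)^{n_2}T^{n_3-1}\widehat{\psi}|^2\,d\sigma dr,
\end{equation*}
and likewise $E_{\mathbf{N}}[\snabla_{\s^2}^{n_1}T^{n_2-1}\widehat{\phi}_{\Sc,r_{\infty}}^{(1)}]\le CE_{\mathbf{N}}[\snabla_{\s^2}^{n_1}T^{n_2-1}\widehat{\phi}]$; the $T$-shifts reflect that the data for $\widehat{\phi}_{\Sc,r_{\infty}}^{(1)}$ is $(T^{-1}\widehat{\phi},\widehat{\phi})$. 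The improved weight $r^{\min\{3\beta_0,1\}-\delta}$ on the right is precisely what Corollary~\ref{cor:estboxPsi} makes available, thanks to the choice \eqref{eq:choicePsi0} of $\Phi_0$.

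The second step is to establish the inhomogeneous analogue of the higher-order $r$-weighted hierarchy of Proposition~\ref{prop:horp} for $\widehat{\phi}_{\Sc,r_{\infty}}^{(1)}$, for any $p$ with $\max\{1-\beta_0,0\}<p<\min\{1+\beta_0,2\}$ (in particular $p=1$). One repeats the multiplier computation \eqref{eq:horpidv2} verbatim, now picking up source terms of the schematic form $r^p L(rD^{-1}L)^N\widehat{\psi}_{\Sc,r_{\infty}}^{(1)}\cdot r^{-1}(\square_g-V_{\alpha})T^{n}\Phi^{(1)}$; after a weighted Young's inequality these are handled by absorbing the $\widehat{\psi}_{\Sc,r_{\infty}}^{(1)}$-factor into the bulk term $\chi Dr^{p-1}(L(rD^{-1}L)^N\widehat{\psi}_{\Sc,r_{\infty}}^{(1)})^2$ and bounding $\int\!\!\int r^{p+1}(1+\tau)^{1+\delta}((\square_g-V_{\alpha})T^n\Phi^{(1)})^2$ via the pointwise bounds of Lemma~\ref{lm:boxPsi}\,\ref{item:Phiest1}. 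Since $(\square_g-V_{\alpha})T^n\Phi^{(1)}$ decays like $r^{-\frac32+\frac{\beta_0}2-\min\{1,\beta_0\}}(\tau+1+2r)^{-\frac12-\frac{\beta_0}2}(\tau+1)^{-\frac12-\frac{\beta_0}2-n}$, this integral converges in $r$ for $p\le 2$ and produces the $|\Phi_0|^2(1+\tau)^{\cdots}$ error terms on the right of \eqref{eq:mainedcaycutoff}--\eqref{eq:mainedrwcaycutoff}; the lower-order error terms ($\textnormal{Err}_{\chi,N+1}$ and interior derivatives) are absorbed using Proposition~\ref{prop:inhomeboundied}\,\ref{item:inhomest3} exactly as \eqref{eq:hoiled} was used in Proposition~\ref{prop:horp}.

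The third step is to run the pigeonhole-plus-interpolation argument of the proof of Proposition~\ref{prop:edecaytimeder}. Starting from the inhomogeneous non-degenerate energy boundedness and integrated energy estimates of Proposition~\ref{prop:inhomeboundied}\,\ref{item:inhomest1}--\ref{item:inhomest2} (with their straightforward analogues when $3\beta_0>1$, where the $p=1$ data energies are already finite), one applies the mean value theorem along a dyadic sequence together with the $p=1$ hierarchy of Step~2 to obtain a first decay rate; then for each further commutation with $T$ one uses the conversion \eqref{eq:convTintorX} to trade a $T$-derivative for two powers of $r$, re-splits the domain with the parameter $\gamma=2^{-k}$ and iterates, gaining (in the limit) two powers of $\tau$ per layer. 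After $N+1$ layers this yields the rates $-3-2N+2k+\delta$ in \eqref{eq:mainedcaycutoff} and $-1-2N+2k+\delta$ in \eqref{eq:mainedrwcaycutoff} for $k\in\{0,1\}$; the inhomogeneous contributions, already decaying like $(1+\tau)^{-2-2N-3\beta_0+2k+2\delta}$ and summable, simply propagate through each step. For $\phi=\phi_{\geq 1}$ one instead uses that Proposition~\ref{prop:horp} extends to $p=2$ and that Corollary~\ref{cor:Linv1} provides $\beta_1$-weighted control, so the data need not be cut off, no $r_{\infty}$-loss occurs, and the $\Phi$-subtraction (hence the $|\Phi_0|^2$ term) can be dropped. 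I expect the main obstacle to be the bookkeeping in this step: checking that exactly the commuted $r$-weighted norms with ranges $n_1+n_2+n_3\le 2(N+1)$, $n_1+n_2\le N+1$ and the angular-energy ranges $n_1+n_2\le 3(N+1)$, $n_1\le N+1$ on the right-hand sides suffice to feed $N+1$ applications of \eqref{eq:convTintorX} and the interpolation; the remainder is a routine, if lengthy, transcription of the homogeneous arguments with the source terms of Lemma~\ref{lm:boxPsi} carried along.
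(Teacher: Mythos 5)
Your proposal follows essentially the same route as the paper: the paper's proof likewise proceeds as in Proposition \ref{prop:edecaytimeder}, handling the inhomogeneous source terms via Lemma \ref{lm:boxPsi}\,\emph{\ref{item:Phiest1}} and the energy boundedness/integrated energy estimates of Proposition \ref{prop:inhomeboundied} (when $3\beta_0\leq 1$), and then isolates as the only non-trivial step exactly your Step~1 bookkeeping estimate, in which the cut-off initial data's $p=1$ weighted energies are traded for $r_{\infty}^{1-3\beta_0+\delta}$ times the $r^{3\beta_0-\delta}$-weighted energies of $T^{-1}\widehat{\phi}$ (with Hardy for the cut-off derivative terms and averaging at $r=R$). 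Your treatment of the $\phi_{\geq 1}$ case and the $3\beta_0>1$ case also matches the paper's intent, so the proposal is correct and essentially identical in approach.
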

\begin{proof}
We proceed as in the proof of Proposition \ref{prop:edecaytimeder} and we estimate the additional terms present as a result of the inhomogeneity present in the equation for $\widehat{\phi}_{\Sc,r_{\infty}}^{(1)}$ as in the proof of Proposition \ref{prop:inhomeboundied}, via the application of the estimates in \emph{\ref{item:Phiest1}} of Lemma \ref{lm:boxPsi}, using moreover the energy boundedness and integrated energy estimates of Proposition \ref{prop:inhomeboundied}, in the case $3\beta_0\leq 1$.

Then we estimate the energies along $\Sigma_0$ for $T^k\widehat{\phi}_{\Sc,r_{\infty}}^{(1)}$ in terms of initial energies for $T^k(T^{-1}\widehat{\phi})$. The only non-trivial step here is the following, when $3\beta_0< 1$:
\begin{equation*}
\begin{split}
\int_{\Sigma_0} r(L (rL)^k T^l\widehat{\psi}_{\Sc,r_{\infty}}^{(1)})^2\,d\sigma dr=&\:\int_{\Sigma_0} r(L (rL)^k \chi_{r_{\infty}}T^{l-1}\widehat{\psi} )^2\,d\sigma dr\\
\leq &\: C\int_{\Sigma_0 \cap \{r\leq 2r_{\infty}\} } \chi_{r_{\infty}}^2 r(L (rL)^kT^{l-1}\widehat{\psi} )^2\,d\sigma dr\\
&+C\sum_{k_1+k_2=k}\int_{\Sigma_0 \cap \{r\leq 2r_{\infty}\} } r |(rX)^{k_1}X(\chi_{r_{\infty}})|^2( (rL)^{k_2}T^{l-1}\widehat{\psi} )^2\,d\sigma dr\\
\leq &\: Cr_{\infty}^{1-3\beta_0+\delta}\int_{\Sigma_0 \cap \{r\leq 2r_{\infty}\} } \chi_{r_{\infty}}^2  r^{3\beta_0-\delta}(L (rL)^k T^{l-1}\widehat{\psi} )^2\,d\sigma dr\\
&+Cr_{\infty}^{1-3\beta_0+\delta}\sum_{k_2\leq k}\int_{\Sigma_0 \cap \{R\leq r\leq 2r_{\infty}\} } r^{-2+3\beta_0-\delta}( (rL)^{k_2}T^{l-1}\widehat{\psi} )^2\,d\sigma dr\\
\leq &\: Cr_{\infty}^{1-3\beta_0+\delta}\left[\sum_{k_2\leq k}\int_{\Sigma_0}  r^{3\beta_0-\delta}(L (rL)^{k_2} T^{l-1}\widehat{\psi} )^2\,d\sigma dr+\sum_{k_1=0}^k E_T[T^{k_1+l-1} \widehat{\phi}]\right],
\end{split}
\end{equation*}
where we applied \eqref{eq:hardy} to estimate the $k_2=0$ term in the sum on the right-hand side of the second inequality and arrive at the last inequality, estimating the boundary terms at $r=R$ in terms of the $T$-energy by averaging appropriately around $r=R$.
\end{proof}

\subsubsection{$\alpha>0$ and $\beta_0\neq 2l+1$}
If we assume $\alpha>0$, we define $({\phi}_{\Sc,r_{\infty}}^{(1+n_{\beta})})_{\geq 1}$ to be the solution to \eqref{eq:waveeq} arising from initial data:
\begin{equation*}
((\phi_{\Sc,r_{\infty}}^{(1+n_{\beta})})_{\geq 1}|_{\Sigma_0},T(\phi_{\Sc,r_{\infty}}^{(1+n_{\beta})})_{\geq 1}|_{\Sigma_0})= (\chi_{r_{\infty}}T^{-1-n_{\beta}}{\phi}_{\geq 1}|_{\Sigma_0},\chi_{r_{\infty}}T^{-n_{\beta}}{\phi}_{\geq 1}|_{\Sigma_0}).
\end{equation*}
We emphasize that in contrast with ${\phi}_{\Sc,r_{\infty}}^{(1)}$ above, $({\phi}_{\Sc,r_{\infty}}^{(1+n_{\beta})})_{\geq 1}$ satisfies the \emph{homogeneous} equation \eqref{eq:waveeq}, so the energy boundedness and integrated energy decay estimates of Propositions \ref{prop:ebound} and \ref{cor:nondegiled} apply directly, without the need for an analogue of Proposition \ref{prop:inhomeboundied}. Note moreover that, in contrast with the $\alpha<0$ case, we have the additional, stronger, energy decay estimates in Proposition \ref{prop:edecaytimeder2} at our disposal.

\begin{proposition}
\label{prop:edecaytimebg1}
Let $\alpha>0$ and $\beta_0\neq 2l+1$ for all $l\in \N_0$. Let $N\in \N$, and let $\delta>0$ be arbitrarily small. Then there exists a constant $C=C(M,h,V_{\alpha},N,\delta)>0$ such that for all $0\leq k\leq 1+n_{\beta}$:
\begin{enumerate}[label=\emph{(\roman*)}]
\item \label{item:edecayag01}
\begin{equation}
\label{eq:mainedcaycutoffbg1}
\begin{split}
&E_{\mathbf{N}}[T^{N-k}\widehat{\phi}_0](\tau)\\
\leq &\: C(1+\tau)^{-3-2n_{\beta}-2N+2k+\delta}\Bigg[\sum_{\substack{0\leq n_1+n_2+n_3\leq 2(N+n_{\beta}+1)\\ n_1+n_2\leq N+1+n_{\beta}}}\int_{\Sigma_{0}\cap \{r\geq R\}}r|\snabla_{\s^2}^{n_1}L(rL)^{n_2}T^{n_3-1-n_{\beta}}\widehat{\psi}_0|^2\,d\sigma dr\\
&+ \sum_{\substack{0\leq n_1+n_2\leq 3(N+1)\\ n_1\leq N+n_{\beta}+1}}E_{\mathbf{N}}[\snabla_{\s^2}^{n_1}T^{n_2-1-n_{\beta}}\widehat{\phi}_0]\Bigg]+C|\Phi_0|^2 (1+\tau)^{-4-\beta_0-2N+2k+\delta}
\end{split}
\end{equation}
and moreover
\begin{equation}
\label{eq:mainedrwcaycutoffbg1}
\begin{split}
&\int_{\Sigma_{\tau}\cap \{r\geq R\}} r^2(L T^{N-k}\widehat{\psi_0})^2\,d\sigma dr \\
\leq &\: C(1+\tau)^{-1-2N+2k+\delta}\Bigg[\sum_{\substack{0\leq n_1+n_2+n_3\leq 2(N+n_{\beta}+1)\\ n_1+n_2\leq N+1}}\int_{\Sigma_{0}\cap \{r\geq R\}} r|\snabla_{\s^2}^{n_1}L(rL)^{n_2}T^{n_3-1-n_{\beta}}\widehat{\psi}_0|^2\,d\sigma dr\\
&+ \sum_{\substack{0\leq n_1+n_2\leq 3(N+1)\\ n_1\leq N+n_{\beta}+1}}E_{\mathbf{N}}[\snabla_{\s^2}^{n_1}T^{n_2-1-n_{\beta}}\widehat{\phi}_0]\Bigg]+C|\Phi_0|^2 (1+\tau)^{-4-\beta_0-2N+2k +2\delta}.
\end{split}
\end{equation}
\item  \label{item:edecayag02} if $\beta_1>2n_{\beta}$
\begin{equation}
\label{eq:mainedcaycutoffbg2}
\begin{split}
&E_{\mathbf{N}}[T^{N+n_{\beta}+1-k}({\phi}_{\Sc,r_{\infty}}^{(n_{\beta}+1)})_{\geq 1}](\tau)\\
\leq &\: C(1+\tau)^{-3-2n_{\beta}-2N+2k+\delta}r_{\infty}^{\max\{2n_{\beta}+1-\beta_1,0\}+\delta}\\
&\cdot \Bigg[\sum_{\substack{0\leq n_1+n_2+n_3\leq 2(N+n_{\beta}+1)\\ n_1+n_2\leq N+1+n_{\beta}}}\int_{\Sigma_{0}\cap \{r\geq R\}} r^{\min\{1,\beta_1-2n_{\beta}\}-\delta}|\snabla_{\s^2}^{n_1}L(rL)^{n_2}T^{n_3-1-n_{\beta}}{\psi}_{\geq 1}|^2\,d\sigma dr\\
&+ \sum_{\substack{0\leq n_1+n_2\leq 3(N+1)\\ n_1\leq N+n_{\beta}+1}}E_{\mathbf{N}}[\snabla_{\s^2}^{n_1}T^{n_2-1-n_{\beta}}{\phi}_{\geq 1}]\Bigg],
\end{split}
\end{equation}
and moreover
\begin{equation}
\label{eq:mainedrwcaycutoffbg2}
\begin{split}
&\int_{\Sigma_{\tau}\cap \{r\geq R\}} r^2(L T^{N+n_{\beta}+1-k}({\psi}_{\Sc,r_{\infty}}^{(n_{\beta}+1)})_{\geq 1})^2\,d\sigma dr \\
\leq &\: C(1+\tau)^{-1-2n_{\beta}-2N+2k+\delta}r_{\infty}^{\max\{2n_{\beta}+1-\beta_1,0\}+\delta}\\
\cdot &\Bigg[\sum_{\substack{0\leq n_1+n_2+n_3\leq 2(N+n_{\beta}+1)\\ n_1+n_2\leq N+n_{\beta}+1}}\int_{\Sigma_{0}\cap \{r\geq R\}}r^{\min\{1,\beta_1-2n_{\beta}\}-\delta}|\snabla_{\s^2}^{n_1}L(rL)^{n_2}T^{n_3-1-n_{\beta}}{\psi}_{\geq 1}|^2\,d\sigma dr\\
&+ \sum_{\substack{0\leq n_1+n_2\leq 3(N+1)\\ n_1\leq N+n_{\beta}+1}}E_{\mathbf{N}}[\snabla_{\s^2}^{n_1}T^{n_2-1-n_{\beta}}{\phi}_{\geq 1}]\Bigg],
\end{split}
\end{equation}
\item  \label{item:edecayag03} if $\beta_1\leq 2n_{\beta}$
\begin{equation}
\label{eq:mainedcaycutoffbg3}
\begin{split}
&E_{\mathbf{N}}[T^{N+n_{\beta}-k}({\phi}_{\Sc,r_{\infty}}^{(n_{\beta})})_{\geq 1}](\tau)\\
\leq &\: C(1+\tau)^{-2-2n_{\beta}-2N+2k}r_{\infty}^{2n_{\beta}-\beta_1+\delta}\\
&\cdot \Bigg[\sum_{\substack{0\leq n_1+n_2+n_3\leq 2(N+n_{\beta})+1\\ n_1+n_2\leq N+n_{\beta}}}\int_{\Sigma_{0}\cap \{r\geq R\}} r^{2+\beta_1-2n_{\beta}-\delta}|\snabla_{\s^2}^{n_1}L(rL)^{n_2}T^{n_3-n_{\beta}}{\psi}_{\geq 1}|^2\,d\sigma dr\\
&+ \sum_{\substack{0\leq n_1+n_2\leq 3(N+n_{\beta})\\ n_1\leq N+n_{\beta}}}E_{\mathbf{N}}[\snabla_{\s^2}^{n_1}T^{n_2-n_{\beta}}{\phi}_{\geq 1}]\Bigg],
\end{split}
\end{equation}
and moreover
\begin{equation}
\label{eq:mainedrwcaycutoffbg3}
\begin{split}
&\int_{\Sigma_{\tau}\cap \{r\geq R\}} r^2(L T^{N+n_{\beta}-k}({\psi}_{\Sc,r_{\infty}}^{(n_{\beta})})_{\geq 1})^2\,d\sigma dr \\
\leq &\:C(1+\tau)^{-2n_{\beta}-2N+2k}r_{\infty}^{2n_{\beta}-\beta_1+\delta}\\
&\cdot \Bigg[\sum_{\substack{0\leq n_1+n_2+n_3\leq 2(N+n_{\beta})+1\\ n_1+n_2\leq N+n_{\beta}}}\int_{\Sigma_{0}\cap \{r\geq R\}} r^{2+\beta_1-2n_{\beta}-\delta}|\snabla_{\s^2}^{n_1}L(rL)^{n_2}T^{n_3-n_{\beta}}{\psi}_{\geq 1}|^2\,d\sigma dr\\
&+ \sum_{\substack{0\leq n_1+n_2\leq 3(N+n_{\beta})\\ n_1\leq N+n_{\beta}}}E_{\mathbf{N}}[\snabla_{\s^2}^{n_1}T^{n_2-n_{\beta}}{\phi}_{\geq 1}]\Bigg].
\end{split}
\end{equation}
\end{enumerate}
\end{proposition}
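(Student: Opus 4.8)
The plan is to run the argument of Proposition~\ref{prop:edecaytimedercutoff} in the regime $\alpha>0$, where $\beta_0>1$, $n_{\beta}=\lfloor\frac{\beta_0+1}{2}\rfloor\geq 1$, and where, since $\alpha\geq 0$ (resp.\ $\phi=\phi_{\geq 1}$), the full $r$-weighted hierarchy of Proposition~\ref{prop:horp} is available for $0<p\leq 2$ and hence the stronger decay estimates of Proposition~\ref{prop:edecaytimeder2} may be invoked. As in the $\alpha<0$ case, the mechanism is: construct the relevant (multiple) time integral, subtract the explicit profile $\Phi^{(1+n_{\beta})}$, choose $\Phi_0$ so that the leading $r$-asymptotics of the time integral of $\widehat{\phi}$ along $\Sigma_0$ improve, and then feed this into the energy-decay machinery of \S\ref{sec:decaytimeder}, treating the inhomogeneity $(\square_g-V_{\alpha})\Phi^{(1+n_{\beta})}$ by Lemma~\ref{lm:boxPsi}\emph{\ref{item:Phiest1}} exactly as in Proposition~\ref{prop:inhomeboundied}.

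For part~\emph{\ref{item:edecayag01}} I would first construct $T^{-1-n_{\beta}}\phi_0$ by Corollary~\ref{cor:regmultipletimeinv} (applicable since $\beta_0\geq 1$), so that $T^{-1-n_{\beta}}\widehat{\phi}_0 = T^{-1-n_{\beta}}\phi_0-\Phi^{(1+n_{\beta})}$ solves $(\square_g-V_{\alpha})(T^{-1-n_{\beta}}\widehat{\phi}_0)=-(\square_g-V_{\alpha})\Phi^{(1+n_{\beta})}$. The choice \eqref{eq:choicePsi0} of $\Phi_0$ is precisely the one making the leading $r^{-\frac12-\frac12\beta_0+n_{\beta}}$ term in the expansion of $X(rT^{-1-n_{\beta}}\phi_0)$ (Proposition~\ref{prop:preciselargertimeint}\emph{(ii)}) cancel that of $X(r\Phi^{(1+n_{\beta})})$ (Lemma~\ref{lm:boxPsi}\emph{\ref{item:Phiest2}}), so Corollary~\ref{cor:estboxPsi}\emph{\ref{item:diffdataest2}} yields extra $r$-decay for $X(rT^{-1-n_{\beta}}\widehat{\phi}_0)$ and its commuted versions; since $\beta_0\neq 2l+1$ forces $n_{\beta}<\frac{\beta_0+1}{2}$ strictly, this is enough for the $p=1$ weighted energies of $T^{-1-n_{\beta}}\widehat{\psi}_0$ along $\Sigma_0$ to be \emph{finite}, so here no spatial cut-off is needed. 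I would then apply the proofs of Propositions~\ref{prop:edecaytimeder}--\ref{prop:edecaytimeder2} to $T^{N+1+n_{\beta}-k}(T^{-1-n_{\beta}}\widehat{\phi}_0)=T^{N-k}\widehat{\phi}_0$, estimating the extra terms generated by $(\square_g-V_{\alpha})T^{j}\Phi^{(1+n_{\beta})}$ (integrate by parts in $\tau$, then use Lemma~\ref{lm:boxPsi}\emph{\ref{item:Phiest1}}) as in Proposition~\ref{prop:inhomeboundied}; the worst such term contributes the separate summand $C|\Phi_0|^2(1+\tau)^{-4-\beta_0-2N+2k+\delta}$, and re-indexing gives \eqref{eq:mainedcaycutoffbg1}--\eqref{eq:mainedrwcaycutoffbg1}.

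For parts~\emph{\ref{item:edecayag02}} and \emph{\ref{item:edecayag03}} I would use that $\Phi$ is spherically symmetric, so $\widehat{\phi}_{\geq 1}=\phi_{\geq 1}$ solves the \emph{homogeneous} equation; hence Propositions~\ref{prop:ebound}, \ref{cor:nondegiled} and \ref{prop:edecaytimeder2} (with $\phi=\phi_{\geq 1}$) apply directly, with no inhomogeneous corrections. Using the $\phi_{\geq 1}$-versions of Corollaries~\ref{cor:Linv1}, \ref{cor:regmultipletimeinv} (with $\beta_1$ in place of $\beta_0$), I would construct $1+n_{\beta}$ time integrals of $\phi_{\geq 1}$ when $\beta_1>2n_{\beta}$ and only $n_{\beta}$ of them when $\beta_1\leq 2n_{\beta}$, the precise count being dictated by the requirement that the $\geq 1$ contribution decay at least as fast as the $\ell=0$ leading term. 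Since the corresponding data may just fail to have a finite $p=1$ (resp.\ finite $r^{2+\beta_1-2n_{\beta}}$-weighted) energy, I would work with the cut-off solutions $({\phi}_{\Sc,r_{\infty}}^{(n_{\beta}+1)})_{\geq 1}$, $({\phi}_{\Sc,r_{\infty}}^{(n_{\beta})})_{\geq 1}$ and run the splitting/interpolation of Proposition~\ref{prop:edecaytimedercutoff} (split $\{r\geq R\}$ at $r=R+\tau^{\gamma}$ and optimise $\gamma$), which converts the borderline $r$-growth of the data into the factors $r_{\infty}^{\max\{2n_{\beta}+1-\beta_1,0\}+\delta}$ and $r_{\infty}^{2n_{\beta}-\beta_1+\delta}$ and the weighted energies with weights $r^{\min\{1,\beta_1-2n_{\beta}\}-\delta}$ and $r^{2+\beta_1-2n_{\beta}-\delta}$ appearing in the statement.

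The main obstacle I anticipate is entirely one of bookkeeping: matching each of the three $\beta_1$-regimes ($\beta_1>2n_{\beta}+1$, $2n_{\beta}<\beta_1\leq 2n_{\beta}+1$, $\beta_1\leq 2n_{\beta}$) with the correct number of time integrals, tracking the $r$-weights and the powers of $r_{\infty}$ produced at the interpolation step, and setting up cleanly the inhomogeneous refinement of Proposition~\ref{prop:edecaytimeder2} so that the $|\Phi_0|^2$-error retains the sharp exponent $-4-\beta_0-2N+2k+\delta$. No estimate beyond those already established in \S\S\ref{sec:boundedness}--\ref{sec:timinv} should be required.
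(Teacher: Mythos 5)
Your proposal follows essentially the same route as the paper: for the $\ell=0$ part one works directly with $\widehat{\phi}_0$ (no cut-off, since $n_{\beta}<\tfrac{\beta_0+1}{2}$ makes the $p=1$ data energies of $T^{-1-n_{\beta}}\widehat{\psi}_0$ finite) and handles the inhomogeneity via Lemma \ref{lm:boxPsi}\emph{\ref{item:Phiest1}} as in Propositions \ref{prop:inhomeboundied}--\ref{prop:edecaytimedercutoff}, while for $\phi_{\geq 1}$ one uses the homogeneous cut-off solutions with $n_{\beta}+1$ (resp.\ $n_{\beta}$) time integrals and Proposition \ref{prop:edecaytimeder} (resp.\ \ref{prop:edecaytimeder2}) according to whether $\beta_1>2n_{\beta}$ or $\beta_1\leq 2n_{\beta}$. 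The only cosmetic imprecision is that the $r_{\infty}$-factors arise from the Hardy/weight-comparison estimate of the cut-off data along $\Sigma_0$ (as in the last display of the proof of Proposition \ref{prop:edecaytimedercutoff}), not from the $r=R+\tau^{\gamma}$ splitting inside the decay hierarchy, but this does not affect the argument.
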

\begin{proof}
Consider first the spherical harmonic mode $\phi_0$. We can repeat the steps in the proof of Proposition \ref{prop:edecaytimedercutoff}, using \eqref{eq:boxPsiMpos} to estimate the terms coming from the inhomogeneity in the equation for $\widehat{\phi}$. When considering $\phi_0$ in the case $\beta_0>1$, we carry out estimates directly for $\widehat{\phi}_0$, in contrast with Proposition \ref{prop:edecaytimedercutoff}.

Consider now the remaining spherical harmonic modes $\phi_{\geq 1}$. We will derive estimates for $({\phi}_{\Sc,r_{\infty}}^{(1+n_{\beta})})_{\geq 1}$. Since $(\square_g-V_{\alpha})({\phi}_{\Sc,r_{\infty}}^{(1+n_{\beta})})_{\geq 1}=0$, the energy boundedness estimate \eqref{eq:nondegebound} and the integrated energy estimates in Corollary \ref{cor:nondegiled} apply. 

Note that since $\beta_1>\beta_0$, we also have that
\begin{equation*}
\beta_1>2n_{\beta}-1.
\end{equation*}
Suppose that $\beta_1>2n_{\beta}$. Then we proceed again as in the proof of Proposition \ref{prop:edecaytimedercutoff}, but with $T^{1+n_{\beta}-k}({\phi}_{\Sc,r_{\infty}}^{(n_{\beta}+1)})_{\geq 1}$ replacing $\phi_0$, but there is no inhomogeneity present in the relevant wave equation. The only difference is the following estimate along $\Sigma_0$:
\begin{equation*}
\begin{split}
&\sum_{\substack{0\leq n_1+n_2+n_3\leq 2(N+n_{\beta}+1)\\ n_1+n_2\leq N+1}}\int_{\Sigma_{0}\cap \{r\geq R\}}r |\snabla_{\s^2}^{n_1}L(rL)^{n_2}T^{n_3}({\phi}_{\Sc,r_{\infty}}^{(1+n_{\beta})})_{\geq 1}|^2\,d\sigma dr\\
\leq &\:C r_{\infty}^{2n_{\beta}-\beta_1+\delta}\Bigg[\sum_{\substack{0\leq n_1+n_2+n_3\leq 2(N+n_{\beta}+1)\\ n_1+n_2\leq N+n_{\beta}+1}}\int_{\Sigma_{0}\cap \{r\geq R\}} r^{\min\{1,\beta_1-2n_{\beta}\}-\delta}|\snabla_{\s^2}^{n_1}L(rL)^{n_2}T^{n_3-1-n_{\beta}}{\psi}_{\geq 1}|^2\,d\sigma dr\\
&+E_T[\snabla_{\s^2}^{n_1}T^{n_2-1-n_{\beta}}{\phi}_{\geq 1}]\Bigg].
\end{split}
\end{equation*}

Now suppose that $\beta_1\leq 2n_{\beta}$. Then we apply the energy decay estimates in Proposition \ref{prop:edecaytimeder2} to $({\phi}_{\Sc,r_{\infty}}^{(n_{\beta})})_{\geq 1}$ instead of the energy decay estimates in Proposition \ref{prop:edecaytimeder}. In this case, we estimate the corresponding weighted energy along $\Sigma_0$ as follows:
\begin{equation*}
\begin{split}
&\sum_{\substack{0\leq n_1+n_2+n_3\leq 2(N+n_{\beta})+1\\ n_1+n_2\leq N+n_{\beta}}}\int_{\Sigma_{0}\cap \{r\geq R\}}r^2 |\snabla_{\s^2}^{n_1}L(rL)^{n_2}T^{n_3}({\phi}_{\Sc,r_{\infty}}^{(n_{\beta})})_{\geq 1}|^2\,d\sigma dr\\
\leq &\:C r_{\infty}^{\max\{2n_{\beta}+1-\beta_1,0\}}\Bigg[\sum_{\substack{0\leq n_1+n_2+n_3\leq 2(N+n_{\beta})+1\\ n_1+n_2\leq N+n_{\beta}}}\int_{\Sigma_{0}\cap \{r\geq R\}}r^{2+\beta_1-2n_{\beta}-\delta} |\snabla_{\s^2}^{n_1}L(rL)^{n_2}T^{n_3-n_{\beta}}{\psi}_{\geq 1}|^2\,d\sigma dr\\
&+E_T[\snabla_{\s^2}^{n_1}T^{n_2-n_{\beta}}{\phi}_{\geq 1}]\Bigg]. \qedhere
\end{split}
\end{equation*}
\end{proof}

\subsection{Additional $r$-weighted energy decay estimates}
\label{sec:addedecay}
In Section \ref{sec:edecay} we established energy decay estimates for (inhomogeneous) solutions to \eqref{eq:waveeq} arising from cut-off initial data with growing weights in the quantity $r_{\infty}$, which appears in the support of the cut-off. In this section we promote these estimates to genuine energy decay estimates for $\widehat{\phi}$ by interpolating between the decay estimates for the energies $E_{\mathbf{N}}$ and the energies with additional growing $r$-weights in the integrands. We moreover derive decay estimates for appropriate $r$-weighted energies for $\widehat{\phi}$, both for positive and for negative $r$-weights.

We first establish energy decay estimates in the $\alpha<0$ case:
\begin{proposition}
\label{prop:interpoledecay1}
Let $\alpha<0$, $N\in \N$, $p\geq 0$ and $p+\max\{1-3\beta_0,0\}<2$ and let $\delta>0$ be arbitrarily small. Then there exists a constant $C=C(M,h,V_{\alpha},N,\delta,p)>0$ such that for all $0\leq k\leq 1$
\begin{equation}
\label{eq:finaledecayal0}
\begin{split}
E_{\mathbf{N}}[T^{N-k}\widehat{\phi}]\\
\leq &\: C(1+\tau)^{\max\{1-3\beta_0,0\}-3-2N+2k+\delta}\Bigg[\sum_{\substack{0\leq n_1+n_2+n_3\leq 2(N+1)\\ n_1+n_2\leq N+1}}\int_{\Sigma_{0}\cap \{r\geq R\}}  r^{\min\{3\beta_0,1\}-\delta}|\snabla_{\s^2}^{n_1}L(rL)^{n_2}T^{n_3-1}\widehat{\psi}|^2\,d\sigma dr\\
&+ \sum_{\substack{0\leq n_1+n_2\leq 3(N+1)\\ n_1\leq N+1}}E_{\mathbf{N}}[\snabla_{\s^2}^{n_1}T^{n_2-1}\widehat{\phi}]+|\Phi_0|^2\Bigg]
\end{split}
\end{equation}
and
\begin{equation}
\label{eq:rpdecayhatphi}
\begin{split}
\int_{\Sigma_{\tau}\cap \{r\geq R\}}& r^p(LT^{N-k}\widehat{\psi})^2\,d\sigma dr\\
\leq &\: C(1+\tau)^{p+\max\{1-3\beta_0,0\}-3-2N+2k+\delta}\Biggl[\sum_{\substack{0\leq n_1+n_2+n_3\leq 2(N+1)\\ n_1+n_2\leq N+1}}\int_{\Sigma_{0}\cap \{r\geq R\}}  r^{\min\{3\beta_0,1\}-\delta}|\snabla_{\s^2}^{n_1}L(rL)^{n_2}T^{n_3-1}\widehat{\psi}|^2\,d\sigma dr\\
&+ \sum_{\substack{0\leq n_1+n_2\leq 3(N+1)\\ n_1\leq N+1}}E_{\mathbf{N}}[\snabla_{\s^2}^{n_1}T^{n_2-1}\widehat{\phi}]+|\Phi_0|^2\Biggr].
\end{split}
\end{equation}
When restricting to $\phi=\phi_{\geq 1}$, the above estimates hold for $0\leq p<2$ if we replace $\min\{3\beta_0,1\}$ and $\max\{1-3\beta_0,0\}$ with 1 and 0, respectively and we omit the term $|\Phi_0|^2$. 
\end{proposition}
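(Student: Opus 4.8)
The plan is to combine the cut-off energy decay estimates of Proposition \ref{prop:edecaytimedercutoff}, which control $T^{N+1-k}\widehat{\phi}_{\Sc,r_{\infty}}^{(1)}$ in terms of initial data weighted by the growing parameter $r_{\infty}$, with an interpolation argument in $r_{\infty}$ together with a domain-of-dependence argument relating $T^{N-k}\widehat{\phi}$ to $T^{N+1-k}\widehat{\phi}_{\Sc,r_{\infty}}^{(1)}$. Recall that by construction $T^{N-k}\widehat{\phi}=T^{N+1-k}\widehat{\phi}_{\Sc,r_{\infty}}^{(1)}$ in $\{v\leq v_{r_{\infty}}\}$, so along $\Sigma_\tau$ the two agree for $r$ less than roughly $r_\infty+\tfrac{1}{2}\tau$ (up to constants depending on $h$). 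This means that for each fixed $\tau$ we are free to choose $r_{\infty}\sim (1+\tau)^{\theta}$ with $\theta>0$ to be optimized, provided $r_{\infty}$ is large enough that the cut-off region contains the support of the relevant energy density on $\Sigma_\tau$, i.e.\ provided $r_\infty \gtrsim 1+\tau$ suffices trivially but we want $r_\infty$ small to minimize the $r_\infty^{\max\{1-3\beta_0,0\}+\delta}$ factor in \eqref{eq:mainedcaycutoff}. The tension is that the portion of $\Sigma_\tau$ with $r\gtrsim r_\infty$ is \emph{not} captured by $\widehat{\phi}_{\Sc,r_{\infty}}^{(1)}$, so it must be controlled by a separate $r$-weighted estimate; this is where the $p$-hierarchy of Proposition \ref{prop:horp} (valid for $\max\{1-\beta_0,0\}<p<\min\{1+\beta_0,2\}$) enters, applied to $\widehat\phi$ directly with the inhomogeneity estimated via Lemma \ref{lm:boxPsi}\ref{item:Phiest1}.

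Concretely, first I would write, for a given $\tau$ and a parameter $\gamma\in[0,1)$ to be chosen,
\begin{equation*}
E_{\mathbf{N}}[T^{N-k}\widehat{\phi}](\tau)\leq E_{\mathbf{N}}[T^{N+1-k}\widehat{\phi}_{\Sc,r_{\infty}}^{(1)}](\tau) + \int_{\Sigma_\tau\cap\{r\geq r_\infty\}}\big(\text{energy density of }T^{N-k}\widehat\phi\big)\,d\sigma dr,
\end{equation*}
where $r_\infty = (1+\tau)^{\gamma}$ (increased by a fixed large constant if necessary so the cut-off estimate applies). The first term is bounded by \eqref{eq:mainedcaycutoff}, producing a factor $(1+\tau)^{-3-2N+2k+\delta}\cdot(1+\tau)^{\gamma(\max\{1-3\beta_0,0\}+\delta)}$, plus the $|\Phi_0|^2(1+\tau)^{-2-2N-3\beta_0+2k+2\delta}$ inhomogeneity term. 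For the second term, one uses the $r$-weighted decay estimate for $\widehat\phi$ with weight $p=\min\{3\beta_0,1\}-\delta$ (from the inhomogeneous analogue of Proposition \ref{prop:horp} combined with Proposition \ref{prop:edecaytimeder}-type pigeonholing, which follows exactly as in the proof of Proposition \ref{prop:edecaytimedercutoff} but applied to $\widehat\phi$ itself), giving $\int_{\Sigma_\tau\cap\{r\geq r_\infty\}}(\cdots)\leq r_\infty^{-p}\int_{\Sigma_\tau\cap\{r\geq R\}}r^{p}(LT^{N-k}\widehat\psi)^2 + (\text{lower order})\lesssim (1+\tau)^{-\gamma p}\cdot(1+\tau)^{p-3-2N+2k+\delta}(\cdots)$. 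One then picks $\gamma$ to equalize the exponents $\gamma(\max\{1-3\beta_0,0\})$ and $-\gamma p + p$; since $p=\min\{3\beta_0,1\}-\delta$ and $\max\{1-3\beta_0,0\}+p = \max\{1-3\beta_0,0\}+\min\{3\beta_0,1\}-\delta$, a short computation shows the optimal choice yields the net extra factor $(1+\tau)^{\max\{1-3\beta_0,0\}+\delta}$ and hence \eqref{eq:finaledecayal0}. (In the regime $3\beta_0\geq1$ one has $\max\{1-3\beta_0,0\}=0$ and the cut-off and interpolation are unnecessary — $p=1$ is admissible directly — so that case is immediate from Propositions \ref{prop:edecaytimeder}, \ref{prop:horp} applied to $\widehat\phi$, with the inhomogeneity controlled by Lemma \ref{lm:boxPsi}\ref{item:Phiest1}.)

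For \eqref{eq:rpdecayhatphi}, I would feed the just-obtained energy decay \eqref{eq:finaledecayal0} back into the $p$-weighted hierarchy of Proposition \ref{prop:horp}: the standard mechanism (mean-value theorem along a dyadic sequence plus the boundedness of the $r^p$-flux) converts a decay rate $s$ for the $p=\min\{3\beta_0,1\}-\delta$ (or, here, arbitrary intermediate $p$) energy into decay $s - (p - p_0)$ for the $r^{p_0}$-weighted energy, where $p_0$ is the starting weight; tracking the exponents as in the proof of Proposition \ref{prop:edecaytimeder} gives precisely the claimed $(1+\tau)^{p+\max\{1-3\beta_0,0\}-3-2N+2k+\delta}$. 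The inhomogeneous terms are handled at each stage by Proposition \ref{prop:inhomeboundied} and Lemma \ref{lm:boxPsi}\ref{item:Phiest1}, and they always decay at least as fast as the stated right-hand side because $-2-2N-3\beta_0+2k$ and its $r$-weighted counterparts beat $\max\{1-3\beta_0,0\}-3-2N+2k$ for $0<\beta_0$; this is where the explicit $|\Phi_0|^2$ term on the right-hand side absorbs the contribution. Finally, for the spherically-projected-away case $\phi=\phi_{\geq 1}$, the potential $\alpha$ is effectively replaced by $\alpha+\ell(\ell+1)\geq\alpha+2>0$, so $\beta_1>\sqrt{9}=3>1$ and $\max\{1-3\beta_1,0\}=0$, the weight $p$ can be taken all the way up to (just below) $2$, the cut-off parameter $r_\infty$ drops out, and the $\Phi_0$ term is absent because no subtraction of the leading profile is needed at the level of $\phi_{\geq1}$ in this proposition — one simply repeats the $3\beta_0\geq1$ branch of the argument verbatim with $\beta_1$ in place of $\beta_0$.

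\textbf{Main obstacle.} The technically delicate point is the bookkeeping in the interpolation/domain-of-dependence step: one must verify that with $r_\infty=(1+\tau)^\gamma$ the cut-off solution $\widehat\phi_{\Sc,r_\infty}^{(1)}$ genuinely agrees with $T^{N-k}\widehat\phi$ on \emph{all} of $\Sigma_\tau\cap\{r\leq r_\infty\}$ (which requires $\gamma$ compatible with the slope of the characteristics and with how $v_{r_\infty}$ grows in $r_\infty$), that the loss of one $T$-derivative inherent in passing to time integrals and in the trapping estimate is consistently accounted for in the multi-indices on the right-hand side, and that the error terms generated by commuting $\chi_{r_\infty}$ through $L(rL)^{n_1}$ — which produce the $r_\infty^{\max\{1-3\beta_0,0\}+\delta}$ weight via a Hardy inequality exactly as in the displayed computation at the end of the proof of Proposition \ref{prop:edecaytimedercutoff} — are reproduced faithfully for $\widehat\phi$. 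None of these steps is conceptually new, but assembling them so the exponents match the stated $\max\{1-3\beta_0,0\}+\delta$ improvement requires care.
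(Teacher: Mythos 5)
Your overall strategy (use the cut-off time integral $\widehat{\phi}_{\Sc,r_{\infty}}^{(1)}$, exploit the domain of dependence, and trade the $r_{\infty}$-growth against $\tau$-decay) is the right family of ideas, but the specific scheme has a genuine gap. First, the choice $r_{\infty}=(1+\tau)^{\gamma}$ with $\gamma<1$ is incompatible with finite speed of propagation: $T^{-1}\widehat{\phi}$ and $\widehat{\phi}_{\Sc,r_{\infty}}^{(1)}$ agree only in $\{v\leq v_{r_{\infty}}\}$, and since $v\sim \tau+2r_*$ along $\Sigma_{\tau}$ while $v_{r_{\infty}}\sim 2r_*(r_{\infty})$, this region intersected with $\Sigma_{\tau}$ is roughly $\{r\lesssim r_{\infty}-c\tau\}$, which is \emph{empty} for large $\tau$ once $\gamma<1$. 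So the agreement you need on $\Sigma_{\tau}\cap\{r\leq r_{\infty}\}$ fails, and the constraint you flag in your "main obstacle" paragraph in fact forces $r_{\infty}\gtrsim \tau$; there is then nothing left to optimize over $\gamma$, and the loss $(1+\tau)^{\max\{1-3\beta_0,0\}+\delta}$ is simply what one gets with $r_{\infty}\sim\tau$ (this is exactly the rate in the statement, so no balancing is needed or possible). Second, the complementary bound you invoke for $\Sigma_{\tau}\cap\{r\geq r_{\infty}\}$ — that the $r^{p}$-weighted flux of $\widehat{\psi}$ with $p=\min\{3\beta_0,1\}-\delta$ decays like $(1+\tau)^{p-3-2N+2k+\delta}$, "as in Proposition \ref{prop:edecaytimedercutoff} but applied to $\widehat{\phi}$ itself" — is circular: the two extra powers of $\tau$-decay in Proposition \ref{prop:edecaytimedercutoff} come precisely from working with the (cut-off) time integral, whose $r$-weighted hierarchy extends up to weight $\sim 1$ (and $r^2$ after $rL$-commutation) only because of the cut-off; for $\widehat{\phi}$ itself, or for the un-cut-off $T^{-1}\widehat{\phi}$ whose initial $r^{p}$-flux is infinite for $p\geq 3\beta_0$, the hierarchy is shorter and the rate degrades by exactly the $\max\{1-3\beta_0,0\}$ you are trying to avoid. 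In effect your step 3 assumes an estimate at least as strong as \eqref{eq:rpdecayhatphi} in order to prove \eqref{eq:finaledecayal0}.

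For comparison, the paper's proof does not optimize a cut-off exponent: it fixes $r_{\infty}=R+B\tau$ with $B$ large so that, by domain of dependence, $T^{N-k}\widehat{\phi}=T^{N+1-k}\widehat{\phi}_{\Sc,r_{\infty}}^{(1)}$ on all of $\Sigma_{\tau}\cap\{r\leq \tau+R\}$, and bounds both $E_{\mathbf{N}}$ and $r^{p}(L\widehat{\psi})^2$ there by $(1+\tau)^{p}E_{\mathbf{N}}[T^{N+1-k}\widehat{\phi}_{\Sc,r_{\infty}}^{(1)}](\tau)$ via \eqref{eq:mainedcaycutoff}, accepting the $(1+\tau)^{\max\{1-3\beta_0,0\}+\delta}$ factor. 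The far region $\{r\geq \tau+R\}$ — which your proposal leaves essentially untreated for \eqref{eq:rpdecayhatphi} — is handled by a dyadic partition $[r_i,r_{i+1}]$ with $r_0=R+\tau$, choosing a new cut-off $r_{\infty}=Br_{i+1}$ on each piece, estimating $r^{p}\leq r_i^{p-2}r^{2}$, applying the $r^{2}$-weighted decay \eqref{eq:mainedrwcaycutoff}, and converting $r_i^{\max\{1-3\beta_0,0\}+p-2+\delta}$ into $\tau$-decay using $r_i\geq R+\tau$ and $p<2-\max\{1-3\beta_0,0\}$ before summing the geometric series. You would need to replace your interpolation-in-$\gamma$ step and your circular far-region bound with this (or an equivalent) mechanism; the minor slip that $\beta_1=\sqrt{9+4\alpha}<3$ (not $>3$) for the $\phi_{\geq 1}$ case is harmless since only $\beta_1>1$ is used.
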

\begin{proof}
First, recall that $T^{-1}\widehat{\phi}=\widehat{\phi}_{\Sc,r_{\infty}}^{(1)}$ in $v\leq v_{r_{\infty}}$. 

We split
\begin{equation*}
\begin{split}
\int_{\Sigma_{\tau}\cap\{r\geq R\}}&r^p(LT^{N-k}\widehat{\psi})^2\,d\sigma dr=  \int_{\Sigma_{\tau}\cap \{R\leq r\leq \tau+R\}} r^p(LT^{N-k}\widehat{\psi})^2\,d\sigma dr+\int_{\Sigma_{\tau}\cap \{r\geq \tau+R\}} r^p(LT^{N-k}\widehat{\psi})^2\,d\sigma dr.
\end{split}
\end{equation*}
First, choose $r_{\infty}=R+B \tau$, for some suitably large constant $B>0$. Then we can apply Proposition \ref{prop:edecaytimedercutoff} to obtain
\begin{equation*}
\begin{split}
&\int_{\Sigma_{\tau}\cap \{R\leq r\leq \tau+R\}} r^p(LT^{N-k}\widehat{\psi})^2\,d\sigma dr=\int_{\Sigma_{\tau}\cap \{v\leq v_{\infty}\} \cap \{R\leq r\leq \tau+R\}}r^p(LT^{N+1-k}\widehat{\psi}_{\Sc,r_{\infty}}^{(1)})^2\,d\sigma dr\\
\leq &\: C(1+\tau)^p E_{\mathbf{N}}[T^{N+1-k}\widehat{\phi}_{\Sc,r_{\infty}}^{(1)}](\tau)\\
\leq &\: C(1+\tau)^{p-3+\max\{1-3\beta_0,0\}-2N+2k+\delta}\Bigg[\sum_{\substack{0\leq n_1+n_2+n_3\leq 2(N+1)\\ n_1+n_2\leq N+1}}\int_{\Sigma_{0}\cap \{r\geq R\}}  r^{\min\{3\beta_0,1\}-\delta}|\snabla_{\s^2}^{n_1}L(rL)^{n_2}T^{n_3-1}\widehat{\psi}|^2\,d\sigma dr\\
&+ \sum_{\substack{0\leq n_1+n_2\leq 3(N+1)\\ n_1\leq N+1}}E_{\mathbf{N}}[\snabla_{\s^2}^{n_1}T^{n_2-1}\widehat{\phi}]+|\Phi_0|^2\Bigg].
\end{split}
\end{equation*}
Note that we can moreover estimate analogously:
\begin{equation*}
\begin{split}
\int_{\Sigma_{\tau}\cap \{r\leq \tau+R\}}& r^2(XT^{N-k}\widehat{\phi})^2+r^2\tilde{h}h(T^{N+1-k}\widehat{\phi})^2+|\snabla_{\s^2}T^{N-k}\widehat{\phi}|^2\,d\sigma dr\\
=&\:\int_{\Sigma_{\tau}\cap \{v\leq v_{\infty}\} \cap \{r\leq \tau+R\}} r^2(XT^{N+1-k}\widehat{\phi}_{\Sc,r_{\infty}}^{(1)})^2+r^2\tilde{h}h(T^{N+2-k}\widehat{\phi}_{\Sc,r_{\infty}}^{(1)})^2+|\snabla_{\s^2}T^{N+1-k}\widehat{\phi}_{\Sc,r_{\infty}}^{(1)}|^2\,d\sigma dr\\
\leq &\: E_{\mathbf{N}}[T^{N+1-k}\widehat{\phi}_{\Sc,r_{\infty}}^{(1)}](\tau)\\
\leq &\: C(1+\tau)^{-3+\max\{1-3\beta_0,0\}-2N+2k+\delta}\Bigg[\sum_{\substack{0\leq n_1+n_2+n_3\leq 2(N+1)\\ n_1+n_2\leq N+1}}\int_{\Sigma_{0}\cap \{r\geq R\}}  r^{\min\{3\beta_0,1\}-\delta}|\snabla_{\s^2}^{n_1}L(rL)^{n_2}T^{n_3-1}\widehat{\psi}|^2\,d\sigma dr\\
&+ \sum_{\substack{0\leq n_1+n_2\leq 3(N+1)\\ n_1\leq N+1}}E_{\mathbf{N}}[\snabla_{\s^2}^{n_1}T^{n_2-1}\widehat{\phi}]+|\Phi_0|^2\Bigg].
\end{split}
\end{equation*}

We now partition the interval $[R+\tau,\infty)$ into dyadic intervals $[r_{i},r_{i+1}]$, with $r_0=R+\tau$ and, taking $B>0$ suitably large, independently of $i$, we set $r_{\infty}=B r_{i+1}$. Then
 \begin{equation*}
\begin{split}
&\int_{\Sigma_{\tau}\cap \{r_i\leq r\leq r_{i+1}\}}  r^p(LT^{N-k}\widehat{\psi})^2\,d\sigma dr=\int_{\Sigma_{\tau}\cap \{v\leq v_{\infty}\} \cap\{r_i\leq r\leq r_{i+1}\}} r^p(LT^{N+1-k}\widehat{\psi}_{\Sc,r_{\infty}}^{(1)})^2\,d\sigma dr\\
\leq &\:Cr_i^{p-2}\int_{\Sigma_{\tau}\cap \{v\leq v_{\infty}\} \cap\{r_i\leq r\leq r_{i+1}\}} r^2(LT^{N+1-k}\widehat{\psi}_{\Sc,r_{\infty}}^{(1)})^2\,d\sigma dr\\
\leq &\: C(1+\tau)^{-1-2N+2k+\delta}r_{i}^{\max\{1-3\beta_0,0\}+p-2+\delta}\Bigg[\sum_{\substack{0\leq n_1+n_2+n_3\leq 2(N+1)\\ n_1+n_2\leq N+1}}\int_{\Sigma_{0}\cap \{r\geq R\}}  r^{\min\{3\beta_0,1\}-\delta}|\snabla_{\s^2}^{n_1}L(rL)^{n_2}T^{n_3-1}\widehat{\psi}|^2\,d\sigma dr\\
&+ \sum_{\substack{0\leq n_1+n_2\leq 3(N+1)\\ n_1\leq N+1}}E_{\mathbf{N}}[\snabla_{\s^2}^{n_1}T^{n_2-1}\widehat{\phi}]+|\Phi_0|^2\Bigg].
\end{split}
\end{equation*}
Using that $r_i\geq R+\tau$, for $p<2-\max\{1-3\beta_0,0\}$ we can find $\delta>0$ suitably small, such that
\begin{align*}
r_{i}^{\max\{1-3\beta_0,0\}+p-2+\delta}\leq C(1+\tau)^{-2+\max\{1-3\beta,0\}+p+2\delta}r_i^{-\delta}.
\end{align*}

After summing over $i$ and using the convergence of geometric series, we are then left with
\begin{equation*}
\begin{split}
&\int_{\Sigma_{\tau}\cap \{r\geq \tau+R\}} r^p(LT^{N-k}\widehat{\psi})^2\,d\sigma dr\\
\leq &\: C(1+\tau)^{p-3+\max\{1-3\beta_0,0\}-2N+2k+2\delta}\Bigg[\sum_{\substack{0\leq n_1+n_2+n_3\leq 2(N+1)\\ n_1+n_2\leq N+1}}\int_{\Sigma_{0}\cap \{r\geq R\}}  r^{\min\{3\beta_0,1\}-\delta}|\snabla_{\s^2}^{n_1}L(rL)^{n_2}T^{n_3-1}\widehat{\psi}|^2\,d\sigma dr\\
&+ \sum_{\substack{0\leq n_1+n_2\leq 3(N+1)\\ n_1\leq N+1}}E_{\mathbf{N}}[\snabla_{\s^2}^{n_1}T^{n_2-1}\widehat{\phi}]+|\Phi_0|^2\Bigg]. \qedhere
\end{split}
\end{equation*}
\end{proof}

We now establish  energy decay estimates in the $\alpha>0$ case:
\begin{proposition}
\label{prop:weightedecayag0}
Let $\alpha>0$, $\beta_0\neq 2l+1$ for all $l\in \N_0$. Let $N\in \N$, $0\leq p<2-\max\{2n_{\beta}+1-\beta_1,0\}$ and let $\delta>0$ be arbitrarily small. Then there exists a constant $C=C(M,h,V_{\alpha},N,\delta)>0$ such that for all $0\leq k\leq n_{\beta}+1$:
\begin{enumerate}[label=\emph{(\roman*)}]
\item \label{item:ag0rpeest1} for all $0\leq p\leq 2$,
\begin{equation}
\label{eq:rpdecayhatphibg1b}
\begin{split}
\int_{\Sigma_{\tau}\cap \{r\geq R\}}& r^p(LT^{N-k}\widehat{\psi}_0)^2\,d\sigma dr\\
\leq &\: C(1+\tau)^{p-3-2n_{\beta}-2N+2k+\delta}\Bigg[\sum_{\substack{0\leq n_1+n_2+n_3\leq 2(N+n_{\beta}+1)\\ n_1+n_2\leq N+n_{\beta}+1}}\int_{\Sigma_{0}\cap \{r\geq R\}} r|\snabla_{\s^2}^{n_1}L(rL)^{n_2}T^{n_3-1-n_{\beta}}\widehat{\psi}_0|^2\,d\sigma dr\\
&+ \sum_{\substack{0\leq n_1+n_2\leq 3(N+1)\\ n_1\leq N+n_{\beta}+1}}E_{\mathbf{N}}[\snabla_{\s^2}^{n_1}T^{n_2-1-n_{\beta}}\widehat{\phi}_0]+\Phi_0^2\Bigg].
\end{split}
\end{equation}
\item  \label{item:ag0rpeest2} if $\beta_1>2n_{\beta}$
\begin{equation}
\label{eq:mainedcaycutoffbg2b}
\begin{split}
&E_{\mathbf{N}}[T^{N-k} \phi_{\geq 1}](\tau)\\
\leq &\: C(1+\tau)^{-3-2n_{\beta}+2k+\max\{2n_{\beta}+1-\beta_1,0\}+\delta-2N}\\
&\cdot \Bigg[\sum_{\substack{0\leq n_1+n_2+n_3\leq 2(N+n_{\beta}+1)\\ n_1+n_2\leq N+1+n_{\beta}}}\int_{\Sigma_{0}\cap \{r\geq R\}} r^{\min\{1,\beta_1-2n_{\beta}\}-\delta}|\snabla_{\s^2}^{n_1}L(rL)^{n_2}T^{n_3-1-n_{\beta}}{\psi}_{\geq 1}|^2\,d\sigma dr\\
&+ \sum_{\substack{0\leq n_1+n_2\leq 3(N+1)\\ n_1\leq N+n_{\beta}+1}}E_{\mathbf{N}}[\snabla_{\s^2}^{n_1}T^{n_2-1-n_{\beta}}{\phi}_{\geq 1}]\Bigg],
\end{split}
\end{equation}
and moreover, for $0\leq p<2-\max\{2n_{\beta}+1-\beta_1,0\}$,
\begin{equation}
\label{eq:mainedrwcaycutoffbg2b}
\begin{split}
&\int_{\Sigma_{\tau}\cap \{r\geq R\}} r^p(L T^{N-k} \psi_{\geq 1})^2\,d\sigma dr \\
\leq &\: C(1+\tau)^{p-3-2n_{\beta}+2k+\max\{2n_{\beta}+1-\beta_1,0\}+\delta-2N}\\
\cdot &\Bigg[\sum_{\substack{0\leq n_1+n_2+n_3\leq 2(N+n_{\beta}+1)\\ n_1+n_2\leq N+n_{\beta}+1}}\int_{\Sigma_{0}\cap \{r\geq R\}}r^{\min\{1,\beta_1-2n_{\beta}\}-\delta}|\snabla_{\s^2}^{n_1}L(rL)^{n_2}T^{n_3-1-n_{\beta}}{\psi}_{\geq 1}|^2\,d\sigma dr\\
&+ \sum_{\substack{0\leq n_1+n_2\leq 3(N+1)\\ n_1\leq N+n_{\beta}+1}}E_{\mathbf{N}}[\snabla_{\s^2}^{n_1}T^{n_2-1-n_{\beta}}{\phi}_{\geq 1}]\Bigg],
\end{split}
\end{equation}
\item  \label{item:ag0rpeest3} if $\beta_1\leq 2n_{\beta}$
\begin{equation}
\label{eq:mainedcaycutoffbg3b}
\begin{split}
&E_{\mathbf{N}}[T^{N-k} \phi_{\geq 1}](\tau)\\
\leq &\: C(1+\tau)^{-2-\beta_1+2k+\delta-2N}\\
&\cdot \Bigg[\sum_{\substack{0\leq n_1+n_2+n_3\leq 2(N+n_{\beta})+1\\ n_1+n_2\leq N+n_{\beta}}}\int_{\Sigma_{0}\cap \{r\geq R\}} r^{2+\beta_1-2n_{\beta}-\delta}|\snabla_{\s^2}^{n_1}L(rL)^{n_2}T^{n_3-n_{\beta}}{\psi}_{\geq 1}|^2\,d\sigma dr\\
&+ \sum_{\substack{0\leq n_1+n_2\leq 3(N+n_{\beta})\\ n_1\leq N+n_{\beta}}}E_{\mathbf{N}}[\snabla_{\s^2}^{n_1}T^{n_2-n_{\beta}}{\phi}_{\geq 1}]\Bigg],
\end{split}
\end{equation}
and moreover, for $0\leq p< 2-2n_{\beta}+\beta_1$,
\begin{equation}
\label{eq:mainedrwcaycutoffbg3b}
\begin{split}
&\int_{\Sigma_{\tau}\cap \{r\geq R\}} r^p(L T^{N-k}\psi_{\geq 1})^2\,d\sigma dr \\
\leq &\:C(1+\tau)^{p-2-\beta_1+2k+\delta-2N}\\
&\cdot \Bigg[\sum_{\substack{0\leq n_1+n_2+n_3\leq 2(N+n_{\beta})+1\\ n_1+n_2\leq N+n_{\beta}}}\int_{\Sigma_{0}\cap \{r\geq R\}} r^{2+\beta_1-2n_{\beta}-\delta}|\snabla_{\s^2}^{n_1}L(rL)^{n_2}T^{n_3-n_{\beta}}{\psi}_{\geq 1}|^2\,d\sigma dr\\
&+ \sum_{\substack{0\leq n_1+n_2\leq 3(N+n_{\beta})\\ n_1\leq N+n_{\beta}}}E_{\mathbf{N}}[\snabla_{\s^2}^{n_1}T^{n_2-n_{\beta}}{\phi}_{\geq 1}]\Bigg].
\end{split}
\end{equation}
\end{enumerate}
\end{proposition}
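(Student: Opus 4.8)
The plan is to run exactly the argument used in the proof of Proposition~\ref{prop:interpoledecay1}, with Proposition~\ref{prop:edecaytimebg1} supplying the input energy decay estimates in place of Proposition~\ref{prop:edecaytimedercutoff}, and with the spherical mode $\phi_0$ and the higher modes $\phi_{\geq 1}$ treated separately, since they are governed by the different sub-cases of Proposition~\ref{prop:edecaytimebg1}. In every case I would split the region of integration on $\Sigma_{\tau}$ as $\{r\geq R\}=\{R\leq r\leq R+\tau\}\cup\{r\geq R+\tau\}$. On the bounded-in-$(r-\tau)$ piece I would use $r^p\lesssim (1+\tau)^p$ and bound the unweighted integral $\int (LT^{N-k}\widehat{\psi})^2$ by the associated non-degenerate energy, which is controlled by Proposition~\ref{prop:edecaytimebg1} (for $\phi_{\geq 1}$ with the choice $r_{\infty}\sim R+\tau$ in the cut-off); this already produces a contribution equal to $(1+\tau)^p$ times the claimed decay rate. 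On the far piece I would partition $[R+\tau,\infty)$ into dyadic intervals $[r_i,r_{i+1}]$, set $r_{\infty}=Br_{i+1}$ for a large fixed constant $B$, estimate $r^p\lesssim r_i^{p-2}r^2$ on the $i$-th shell, and apply the $r^2$-weighted energy decay estimate of Proposition~\ref{prop:edecaytimebg1}, whose right-hand side carries the relevant power of $r_{\infty}\sim r_i$. Summing the resulting geometric series converges precisely under the stated restriction on $p$ — $p+\max\{2n_{\beta}+1-\beta_1,0\}<2$ in the case $\beta_1>2n_{\beta}$, and $p<2-2n_{\beta}+\beta_1$ in the case $\beta_1\leq 2n_{\beta}$ — and yields the same decay exponent as the bounded region, up to an arbitrarily small loss $\delta>0$.

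Concretely, for $\phi_0$, which by Corollary~\ref{cor:estboxPsi} has an $(n_{\beta}+1)$-th time integral decaying fast enough in $r$ along $\Sigma_0$ (here one uses $\beta_0\neq 2l+1$, so that $n_{\beta}<\tfrac12(\beta_0+1)$ strictly), I would work directly with $\widehat{\phi}_0$, exactly as in Proposition~\ref{prop:edecaytimebg1}\,\emph{\ref{item:edecayag01}}: no spatial cut-off in $r_{\infty}$ is needed to \emph{define} the relevant quantities, and the dyadic decomposition is used only to trade $r$-weights for $\tau$-decay, which gives \eqref{eq:rpdecayhatphibg1b}. For $\phi_{\geq 1}$ with $\beta_1>2n_{\beta}$ I would use the cut-off solution $({\phi}_{\Sc,r_{\infty}}^{(1+n_{\beta})})_{\geq 1}$ together with \eqref{eq:mainedcaycutoffbg2}--\eqref{eq:mainedrwcaycutoffbg2}, obtaining \eqref{eq:mainedcaycutoffbg2b}--\eqref{eq:mainedrwcaycutoffbg2b}; for $\beta_1\leq 2n_{\beta}$ I would instead use $({\phi}_{\Sc,r_{\infty}}^{(n_{\beta})})_{\geq 1}$ and \eqref{eq:mainedcaycutoffbg3}--\eqref{eq:mainedrwcaycutoffbg3} (which themselves rely on the stronger estimates of Proposition~\ref{prop:edecaytimeder2}, available because $\alpha+\ell(\ell+1)>0$ for $\ell\geq 1$), obtaining \eqref{eq:mainedcaycutoffbg3b}--\eqref{eq:mainedrwcaycutoffbg3b}. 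Throughout, the inhomogeneity $-(\square_g-V_{\alpha})\Phi^{(1+n_{\beta})}$ present in the equation for $\widehat{\phi}_0$ is dealt with exactly as in the proof of Proposition~\ref{prop:edecaytimedercutoff}, via the decay estimates \eqref{eq:boxPsiMpos} of Lemma~\ref{lm:boxPsi}, and is responsible for the $\Phi_0^2$-term in \eqref{eq:rpdecayhatphibg1b}.

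The hard part will not be any new PDE input — all the relevant energy and $r$-weighted decay estimates are already provided by Propositions~\ref{prop:edecaytimebg1} and \ref{prop:edecaytimeder2} — but rather the bookkeeping: keeping track of the interplay between the $r$-weight $p$, the dyadic weight $r_i^{p-2}$, the $r_{\infty}$-weight on the right-hand sides of Proposition~\ref{prop:edecaytimebg1}, and the number of time integrals and commuted derivatives appearing in the data norms, and then checking in each of the three sub-cases that the geometric series over the dyadic shells converges exactly when $p$ lies in the stated range and that the bounded-region and far-region contributions carry the same $\tau$-exponent. Once this is organized, the three sets of inequalities follow by the same computation carried out in the proof of Proposition~\ref{prop:interpoledecay1}.
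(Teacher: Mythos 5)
Your proposal is correct and follows essentially the same route as the paper: the paper's proof simply partitions $[R,\infty)$ and runs the interpolation/dyadic argument of Proposition \ref{prop:interpoledecay1} verbatim, now appealing to the decay estimates of Proposition \ref{prop:edecaytimebg1} (parts \ref{item:edecayag01}, \ref{item:edecayag02}, \ref{item:edecayag03} for $\widehat{\phi}_0$ and the two $\phi_{\geq 1}$ regimes, respectively), exactly as you describe. Your bookkeeping of the $r_{\infty}$-weights against the dyadic factors $r_i^{p-2}$, and the resulting admissible ranges of $p$, matches what the paper intends but omits.
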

\begin{proof}
We partition the interval $[R,\infty)$ and apply the same interpolation argument as in the proof of Proposition \ref{prop:interpoledecay1}, appealing to the decay estimates in Proposition \ref{prop:edecaytimebg1}. We omit the details.
\end{proof}
Now we can established improved decay rates for $r$-weighted energies with \emph{decaying} weights in $r$.

We first introduce the following weighted initial energies along $\Sigma_0$ involving time integrals of $\phi$:
\begin{align*}
\mathbf{D}_{\beta_0,N,\delta}^{(1)}[\phi]:=&\:\sum_{\substack{0\leq n_1+n_2+n_3\leq 2(N+2)\\ n_1+n_2\leq N+2}}\int_{\Sigma_{0}\cap \{r\geq R\}}  r^{\min\{3\beta_0,1\}-\delta}|L(rL)^{n_2}T^{n_3-1}\widehat{\psi}_0|^2+r^{1-\delta}|\snabla_{\s^2}^{n_1}L(rL)^{n_2}T^{n_3-1}{\psi}_{\geq 1}|^2\,d\sigma dr\\
&+ \sum_{\substack{0\leq n_1+n_2\leq 3(N+2)\\ n_1\leq N+2}}E_{\mathbf{N}}[\snabla_{\s^2}^{n_1}T^{n_2-1}\widehat{\phi}]+\mathfrak{I}_{\beta}^2[\phi]\quad \textnormal{when $\beta_0<1$},\\
\mathbf{D}_{\beta_0,N,\delta}^{(n_{\beta}+1)}[\phi]:=&\: \sum_{\substack{0\leq n_1+n_2+n_3\leq 2(N+2n_{\beta}+2)\\ n_1+n_2\leq N+2n_{\beta}+2}}\int_{\Sigma_{0}\cap \{r\geq R\}} r|\snabla_{\s^2}^{n_1}L(rL)^{n_2}T^{n_3-1-n_{\beta}}\widehat{\psi}_0|^2\\
&+ r^{\min\{\beta_1-2n_{\beta},1\}-\delta}|\snabla_{\s^2}^{n_1}L(rL)^{n_2}T^{n_3-1-n_{\beta}}{\psi}_{\geq 1}|^2+r^{2+\min\{\beta_1-2n_{\beta},0\}-\delta}|\snabla_{\s^2}^{n_1}L(rL)^{n_2}T^{n_3-n_{\beta}}{\psi}_{\geq 1}|^2\,d\sigma dr\\
&+ \sum_{\substack{0\leq n_1+n_2\leq 3(N+2n_{\beta}+2)\\ n_1\leq N+2n_{\beta}+2}}E_{\mathbf{N}}[\snabla_{\s^2}^{n_1}T^{n_2-1-n_{\beta}}\widehat{\phi}]+\mathfrak{I}_{\beta}^2[\phi] \quad \textnormal{when $\beta_0>1$}.
\end{align*}

The above energies will play a role in the energy decay estimates in Proposition \ref{prop:negredecay}.

\begin{proposition}
\label{prop:negredecay}
Let $\beta_0\notin 2\N_0+1$ and $\delta>0$ arbitrarily small. Then there exists a constant $C=C(M,h,V_{\alpha},N,\delta)>0$ such that for all $0\leq k \leq n_{\beta}+1$:
\begin{enumerate}[label=\emph{(\roman*)}]
\item \label{item:negredecay1} when $\alpha<0$ $(\beta_0<1)$:
\begin{equation}
\label{eq:negredecaybl1}
\int_{\Sigma_{\tau}} r^{-1+2\beta_0}(XT^{N-k}w_{0}^{-1}\widehat{\phi}_0)^2\,d\sigma dr\leq C(1+\tau)^{\max\{1-3\beta_0,0\}+\beta_0-5+2k-2N+\delta}\mathbf{D}_{\beta_0,N,\delta}^{(1)}[\phi],
\end{equation}
and
\begin{equation*}
\begin{split}
\int_{\Sigma_{\tau}}r^{-\beta_0+\delta}(T^{N-k}\widehat{\phi}_{\geq 1})^2+r^{2-\beta_0+\delta}(XT^{N-k}\widehat{\phi}_{\geq 1})^2\,d\sigma dr\leq &\: C(1+\tau)^{-3-\beta_0+2k-2N+2\delta}\mathbf{D}_{\beta_0,N,\delta}^{(1)}[\phi],
\end{split}
\end{equation*}
\item \label{item:negredecay2} and when $\alpha>0$ $(\beta_0>1)$:
\begin{equation}
\label{eq:negredecaybg1}
\begin{split}
\int_{\Sigma_{\tau}}r^{-\beta_0+\delta}(T^{N-k}\widehat{\phi})^2+r^{2-\beta_0+\delta}(XT^{N-k}\widehat{\phi})^2\,d\sigma dr\leq &\: C(1+\tau)^{-3-\beta_0-2n_{\beta}+2k+\max\{2n_{\beta}+1-\beta_1,0\}-2N+2\delta}\mathbf{D}_{\beta_0,N,\delta}^{(n_{\beta}+1)}[\phi].
\end{split}
\end{equation}
\end{enumerate}
\end{proposition}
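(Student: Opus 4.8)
The plan is to obtain these estimates by feeding the ($r$-weighted) energy decay estimates of Propositions~\ref{prop:interpoledecay1} and \ref{prop:weightedecayag0} into the elliptic theory of \S\ref{sec:elliptictheory}, together with the favourable decay of $(\square_g-V_\alpha)\Phi$ supplied by Lemma~\ref{lm:boxPsi}~\emph{\ref{item:Phiest1}}. Rearranging \eqref{eq:waveeq1} applied to $\widehat{\phi}$ and using $(\square_g-V_\alpha)\widehat{\phi}=-(\square_g-V_\alpha)\Phi$ gives $\mathcal{L}\widehat{\phi}=F[T\widehat{\phi}]-r^2(\square_g-V_\alpha)\Phi$, and commuting with $T^{N-k}$ (using $[\square_g-V_\alpha,T]=0$) yields
\begin{equation*}
\mathcal{L}T^{N-k}\widehat{\phi}=F[T^{N-k+1}\widehat{\phi}]-r^2(\square_g-V_\alpha)T^{N-k}\Phi .
\end{equation*}
I would treat the spherical modes separately. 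For $\widehat{\phi}_0=w_0\cdot (w_0^{-1}\widehat{\phi}_0)$ I use the renormalized elliptic estimate \eqref{eq:keyeqellipticcheck}; for $\widehat{\phi}_{\geq1}=\phi_{\geq1}$ — which carries no $\Phi$-inhomogeneity, since $\Phi$ is spherically symmetric — I use the $\beta_1$-enhanced elliptic estimates behind Corollary~\ref{cor:Linv1}; and for $\widehat{\phi}_0$ in the $\alpha>0$ case the negative-weight estimates \eqref{eq:mainellipticMpos}--\eqref{eq:mainellipticMposang}. Since the elliptic estimates are stated for $C_c^\infty(\Sigma)$ (resp. $C_{c,*}^\infty(\Sigma)$), they are applied to $T^{N-k}\widehat{\phi}|_{\Sigma_\tau}$ by approximation, using the finiteness of the relevant weighted energies already established and being mindful, near $r=2M$, of the boundary condition needed to preserve the favourable sign, exactly as in Proposition~\ref{prop:keyhopropelliptic}.

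Concretely, in the $\alpha<0$, $\ell=0$ case I apply \eqref{eq:keyeqellipticcheck} with $q=1-\beta_0$ (admissible since $q>-1-\beta_0$, and producing precisely the weight $r^{-1+2\beta_0}$ on $(X(w_0^{-1}T^{N-k}\widehat{\phi}_0))^2$); the resulting right-hand side is $\int_\Sigma r^{-2+\beta_0}(\mathcal{L}T^{N-k}\widehat{\phi}_0)^2\,d\sigma dr$. Because $F[f]=2r(1-h)X(rf)+O(r^{-1})f+O(1)Tf$, the dominant contribution of the first term over $\{r\geq R\}$ is $\int_{r\geq R}r^{\beta_0}(LT^{N-k+1}\widehat{\psi}_0)^2\,d\sigma dr$, and since $\alpha<0$ forces $\beta_0<1\leq 2-\max\{1-3\beta_0,0\}$ this is controlled by \eqref{eq:rpdecayhatphi} with $p=\beta_0$ and $N$ shifted by one, giving exactly the claimed rate; the term $\int r^{2+\beta_0}((\square_g-V_\alpha)T^{N-k}\Phi)^2$ is subleading by Lemma~\ref{lm:boxPsi}~\emph{\ref{item:Phiest1}}; and the region $r\leq R$ is absorbed into a fast-decaying $E_{\mathbf N}$-norm of $T^{N-k+1}\widehat{\phi}$ together with the local elliptic estimate near $r=2M$. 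The $\alpha<0$, $\ell\geq1$ and $\alpha>0$ estimates are obtained in the same way, now taking $q=-\beta_0+\delta$ in \eqref{eq:mainellipticMpos}/\eqref{eq:mainellipticMposang} (or their $\geq1$-restricted analogues), so that the right-hand side involves $r^{2-\beta_0+\delta}$-weighted $L$-energies of $T^{N-k+1}\widehat{\psi}$, to be estimated by Proposition~\ref{prop:weightedecayag0}~\emph{\ref{item:ag0rpeest1}}--\emph{\ref{item:ag0rpeest3}}.

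The main obstacle is that the weight $q+2$ appearing on $LT^{N-k+1}\widehat{\psi}$ on the right-hand side of the elliptic estimate need not lie in the admissible range of the cited $r^p$-weighted decay estimates: when $\beta_0>2$ (so $q+2=2-\beta_0+\delta<0$), or when $\beta_1-\beta_0$ is small relative to $2n_\beta+1$, a single application is insufficient. Here I would iterate the elliptic estimate over a finite chain of weights $0>q_1>\cdots>q_m=-\beta_0+\delta$ with steps $q_j-q_{j+1}\leq2$: at the $j$-th step the right-hand side either contains an $r^{q_j+2}$-weighted $L$-energy already in range of Proposition~\ref{prop:interpoledecay1}/\ref{prop:weightedecayag0}, or — after rewriting $L\widehat{\psi}$ in terms of $X\widehat{\phi}$ and $T\widehat{\psi}$ — the $r^{q_{j-1}}$-weighted estimate for $T^{N-k+1}\widehat{\phi}$ obtained at the previous step. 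Each step costs one $T$-derivative and is compensated by an extra factor $(1+\tau)^{-2}$, so the induction closes with the high-order $T$- and $\snabla_{\s^2}$-derivative budget recorded in $\mathbf{D}_{\beta_0,N,\delta}^{(1)}[\phi]$ and $\mathbf{D}_{\beta_0,N,\delta}^{(n_\beta+1)}[\phi]$. The remaining work — converting the growing-in-$r_\infty$ cut-off energies of Propositions~\ref{prop:interpoledecay1} and \ref{prop:weightedecayag0} into genuine $r$-weighted decay via a dyadic-shell splitting of $\{r\geq R\}$, as in their proofs, and bookkeeping the parameter sub-cases ($\beta_0\lessgtr1$, $\beta_1\lessgtr2n_\beta$, $\beta_0\lessgtr2$) — is routine.
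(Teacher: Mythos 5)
Your proposal follows essentially the same route as the paper's proof: rewrite the equation for $T^{N-k}\widehat{\phi}$ as $\mathcal{L}T^{N-k}\widehat{\phi}=F[T^{N-k+1}\widehat{\phi}]-r^2(\square_g-V_{\alpha})T^{N-k}\Phi$, apply the negative-$r$-weight elliptic estimates (the renormalized estimate \eqref{eq:keyeqellipticcheck} with $q=1-\beta_0$ for the $\alpha<0$, $\ell=0$ part, and Corollaries \ref{cor:Linv0}/\ref{cor:Linv1} with $q=-\beta_0+\delta$ otherwise), iterate so that each application trades two powers of $r$ for one $T$-derivative until the weight on $LT^{\bullet}\widehat{\psi}$ lands in the admissible range of the $r^p$-hierarchy, and then close with Propositions \ref{prop:interpoledecay1}/\ref{prop:weightedecayag0} (and Proposition \ref{prop:edecaytimebg1}) together with Lemma \ref{lm:boxPsi} \ref{item:Phiest1} for the $\Phi$-inhomogeneity — this is exactly the paper's argument, including the observation that $\widehat{\phi}_{\geq1}=\phi_{\geq1}$ carries no inhomogeneity and the parameter sub-cases. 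The only quibbles are bookkeeping slips, not gaps: for $\alpha<0$ and $\phi_{\geq1}$ the correct $r^p$-decay input is Proposition \ref{prop:interpoledecay1} in its $\phi_{\geq1}$-restricted form with $p=2-\beta_0+\delta<2$ (Proposition \ref{prop:weightedecayag0} assumes $\alpha>0$), and the $\max\{2n_{\beta}+1-\beta_1,0\}$ loss enters through the rates of Proposition \ref{prop:weightedecayag0}, not through the length of the elliptic iteration.
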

\begin{proof}
We prove the $k=0$ case here. The $1\leq k\leq 1+n_{\beta}$ case proceeds entirely analogously. Note that $\widehat{\phi}$ satisfies the equation:
\begin{align*}
\mathcal{L}\widehat{\phi}|_{\Sigma_{\tau}}=&\:\widehat{F}[T{\phi}|_{\Sigma_{\tau}}],\\
F[T\widehat{\phi}|_{\Sigma_{\tau}}]:=&\:2r(1-h)X T\widehat{\psi}|_{\Sigma_{\tau}}+\frac{dh}{dr}r^2T\widehat{\phi}|_{\Sigma_{\tau}}+r^2h\tilde{h}T^2 \widehat{\phi}|_{\Sigma_{\tau}}+r^2\square_g\Phi |_{\Sigma_{\tau}}.
\end{align*}
Consider first $\beta_0>1$. By Corollary \ref{cor:Linv0} with $q=-\beta_0+\delta$, we have that
\begin{equation*}
\begin{split}
\int_{\Sigma_{\tau}} r^{-\beta_0+\delta}\widehat{\phi}^2+r^{2-\beta_0+\delta}(X\widehat{\phi})^2\,d\sigma dr\leq &\:C \int_{\Sigma_{\tau}} r^{-\beta_0+\delta}(\mathcal{L}\widehat{\phi})^2\,d\sigma dr\\
\leq &\:  C\int_{\Sigma_{\tau}} r^{2-\beta_0+\delta}(XT\widehat{\psi})^2+ r^{-2-\beta_0+\delta}(T\widehat{\phi})^2+r^{-\beta_0+\delta}(r^2\tilde{h}h)^2(T^2\widehat{\phi})^2\\
&+r^{4-\beta_0+\delta}(\square_g\Phi|_{\Sigma_{\tau}})^2\,d\sigma dr.
\end{split}
\end{equation*}
Note that
\begin{equation*}
\int_{\Sigma_{\tau}}r^{4-\beta_0+\delta}(\square_g\Phi|_{\Sigma_{\tau}})^2\,d\sigma dr \leq C\Phi_0^2(1+\tau)^{-4-2\beta_0+2\delta}.
\end{equation*}
We apply Corollary \ref{cor:Linv0} once more, with $q=-\beta_0+\delta+2$ to estimate further
\begin{equation*}
\begin{split}
\int_{\Sigma_{\tau}} &r^{-\beta_0+\delta}(L\hat{\psi})^2\,d\sigma dr\leq C \int_{\Sigma_{\tau}} r^{2-\beta_0+\delta}(XT\widehat{\psi})^2+ r^{-2-\beta_0+\delta}(T\widehat{\phi})^2+r^{-\beta_0+\delta}(r^2\tilde{h}h)^2(T^2\widehat{\phi})^2\,d\sigma dr\\
 \leq &\: C\int_{\Sigma_{\tau}}r^{4-\beta_0+\delta}(XT^2\widehat{\psi})^2+ r^{-\beta_0+\delta}(T^2\widehat{\phi})^2+r^{2-\beta_0+\delta}(r^2\tilde{h}h)^2(T^3\widehat{\phi})^2\,d\sigma dr\\
 &+C\int_{\Sigma_{\tau}}r^{6-\beta_0+\delta}(\square_gT\Phi|_{\Sigma_{\tau}})^2\,d\sigma dr,
 \end{split}
\end{equation*}
and we have that
\begin{equation*}
\int_{\Sigma_{\tau}}r^{6-\beta_0+\delta}(\square_gT \Phi|_{\Sigma_{\tau}})^2\,d\sigma dr \leq C\Phi_0^2(1+\tau)^{-4-2\beta_0+2\delta}.
\end{equation*}
By definition of $n_{\beta}$, we have that $\beta_0-1<2n_{\beta}<\beta_0+1$. If $2 n_{\beta}-\beta_0>0$, we apply Corollary \ref{cor:Linv0} a total of $n_{\beta}$ times to obtain:
\begin{equation*}
\begin{split}
\int_{\Sigma_{\tau}}&r^{-\beta_0+\delta}\widehat{\phi}^2+r^{2-\beta_0+\delta}(X\widehat{\phi})^2\,d\sigma dr\\
\leq &\: C\int_{\Sigma_{\tau}} r^{2n_{\beta}-\beta_0+\delta}(XT^{n_{\beta}}\widehat{\psi})^2+ r^{2n_{\beta}-4-\beta_0+\delta}(T^{n_{\beta}}\widehat{\phi})^2+r^{2n_{\beta}-2-\beta_0+\delta}(r^2\tilde{h}h)^2(T^{1+n_{\beta}}\widehat{\phi})^2\,d\sigma dr\\
&+C\Phi_0^2(1+\tau)^{-4-2\beta+2\delta}\\
\leq &\: C\int_{\Sigma_{\tau}} r^{2n_{\beta}-\beta_0+\delta}(XT^{n_{\beta}}\widehat{\psi})^2\,d\sigma dr+ CE_{\mathbf{N}}[T^{n_{\beta}}\widehat{\phi}](\tau)+C\Phi_0^2(1+\tau)^{-4-2\beta_0+2\delta}.
\end{split}
\end{equation*}
If $2 n_{\beta}-\beta_0<0$, we apply Corollary \ref{cor:Linv0} a total of $n_{\beta}+1$ times to obtain:
\begin{equation*}
\begin{split}
\int_{\Sigma_{\tau}}&r^{-\beta_0+\delta}\widehat{\phi}^2+r^{2-\beta_0+\delta}(X\widehat{\phi})^2\,d\sigma dr\\
\leq &\: C\int_{\Sigma_{\tau}} r^{2n_{\beta}+2-\beta_0+\delta}(XT^{n_{\beta}+1}\widehat{\psi})^2+ r^{2n_{\beta}-2-\beta_0+\delta}(T^{n_{\beta}+1}\widehat{\phi})^2+r^{2n_{\beta}-\beta_0+\delta}(r^2\tilde{h}h)^2(T^{2+n_{\beta}}\widehat{\phi})^2\,d\sigma dr\\
&+C\Phi_0^2(1+\tau)^{-4-2\beta+2\delta}\\
\leq &\: C\int_{\Sigma_{\tau}} r^{2n_{\beta}-\beta_0+2+\delta}(XT^{n_{\beta}+1}\widehat{\psi})^2\,d\sigma dr+ CE_{\mathbf{N}}[T^{n_{\beta}+1}\widehat{\phi}](\tau)+C\Phi_0^2(1+\tau)^{-4-2\beta_0+2\delta}.
\end{split}
\end{equation*}

We can therefore apply the $r$-weighted energy decay estimates in Proposition \ref{prop:weightedecayag0} and  \emph{\ref{item:edecayag01}} of Proposition \ref{prop:edecaytimebg1}, with $p=2n_{\beta}-\beta_0+\delta$ and $N=n_{\beta}$ (if $2n_{\beta}-\beta_0>0$) or $p=2n_{\beta}-\beta_0+2+\delta$ and $N=n_{\beta}+1$ (if $2n_{\beta}-\beta_0<0$).

We then obtain the following estimate for $\phi_0$ with $\beta_0>1$:
\begin{equation*}
\begin{split}
\int_{\Sigma_{\tau}}&r^{-\beta_0+\delta}(T^N\widehat{\phi}_0)^2+r^{2-\beta_0+\delta}(XT^N\widehat{\phi}_0)^2\,d\sigma dr\\
\leq &\: C(1+\tau)^{-3-\beta_0-2n_{\beta}+2\delta-2N}\Bigg[\sum_{\substack{0\leq n_1+n_2+n_3\leq 2(N+2n_{\beta}+2)\\ n_1+n_2\leq N+2n_{\beta}+2}}\int_{\Sigma_{0}\cap \{r\geq R\}} r|\snabla_{\s^2}^{n_1}L(rL)^{n_2}T^{n_3-1-n_{\beta}}\widehat{\psi}_0|^2\,d\sigma dr\\
&+ \sum_{\substack{0\leq n_1+n_2\leq 3(N+2n_{\beta}+2)\\ n_1\leq N+2n_{\beta}+2}}E_{\mathbf{N}}[\snabla_{\s^2}^{n_1}T^{n_2-1-n_{\beta}}\widehat{\phi}_0]+\Phi_0^2\Bigg]+C\Phi_0^2(1+\tau)^{-4-2\beta_0-2N+2\delta}.
\end{split}
\end{equation*}

To estimate the $\phi_{\geq 1}$ part of $\phi$, we consider separately the following three cases: 1) $\beta_0<2n_{\beta}$ and $\beta_1>2n_{\beta}$, 2) $\beta_0< 2n_{\beta}$ and $\beta_1\leq 2n_{\beta}$, 3) $\beta_0\geq 2n_{\beta}$ and $\beta_1>2n_{\beta}$. Since $\widehat{\phi}_{\geq 1}=\phi_{\geq 1}$, we can omit the terms above involving $\Phi_0^2$ when restricting to  $\phi_{\geq 1}$.\\
\\
\emph{ Case 1):}\\
\\
We apply the estimates in \emph{\ref{item:ag0rpeest2}} of Proposition \ref{prop:weightedecayag0} with $p=2n_{\beta}-\beta_0+\delta$, $N$ replaced by $N+n_{\beta}$ and $\delta>0$ appropriately small. Note that this choice of $p$ lies in the allowed range of $p$ in \emph{\ref{item:ag0rpeest2}} of Proposition \ref{prop:weightedecayag0}, since $\beta_0>2n_{\beta}-1$ and $\beta_1>2n_{\beta}-1$. We then obtain:
\begin{equation*}
\begin{split}
\int_{\Sigma_{\tau}}&r^{-\beta_0+\delta}(T^N\widehat{\phi}_{\geq 1})^2+r^{2-\beta_0+\delta}(XT^N\widehat{\phi}_{\geq 1})^2\,d\sigma dr\\
\leq &\: C(1+\tau)^{-3-\beta_0-2n_{\beta}+\max\{2n_{\beta}+1-\beta_1,0\}+2\delta-2N}\\
\cdot & \Bigg[\sum_{\substack{0\leq n_1+n_2+n_3\leq 2(N+2n_{\beta}+1)\\ n_1+n_2\leq N+2n_{\beta}+1}}\int_{\Sigma_{0}\cap \{r\geq R\}} r^{\min\{\beta_1-2n_{\beta},1\}-\delta}|\snabla_{\s^2}^{n_1}L(rL)^{n_2}T^{n_3-1-n_{\beta}}\widehat{\psi}_{\geq 1}|^2\,d\sigma dr\\
&+ \sum_{\substack{0\leq n_1+n_2\leq 3(N+2n_{\beta}+1)\\ n_1\leq 2n_{\beta}+1}}E_{\mathbf{N}}[\snabla_{\s^2}^{n_1}T^{n_2-1-n_{\beta}}\widehat{\phi}_{\geq 1}]\Bigg].
\end{split}
\end{equation*}

\emph{ Case 2):}\\
\\
We apply the estimates in \emph{\ref{item:ag0rpeest3}} of Proposition \ref{prop:weightedecayag0} with $p=2n_{\beta}-\beta_0+\delta$, $N$ replaced by $N+n_{\beta}$ and $\delta>0$ appropriately small. Note that this choice of $p$ lies in the allowed range of $p$ in \emph{\ref{item:ag0rpeest2}} of Proposition \ref{prop:weightedecayag0}, since $\beta_0>2n_{\beta}-1$ and $\beta_1>2n_{\beta}-1$ by assumption. We obtain:
\begin{equation*}
\begin{split}
\int_{\Sigma_{\tau}}&r^{-\beta_0+\delta}(T^N\widehat{\phi}_{\geq 1})^2+r^{2-\beta_0+\delta}(XT^N\widehat{\phi}_{\geq 1})^2\,d\sigma dr\\
\leq &\: C(1+\tau)^{-2-\beta_0-\beta_1+2\delta}\\
\cdot & \Bigg[\sum_{\substack{0\leq n_1+n_2+n_3\leq 2(N+2n_{\beta})\\ n_1+n_2\leq N+2n_{\beta}}}\int_{\Sigma_{0}\cap \{r\geq R\}} r^{2+\beta_1-2n_{\beta}-\delta}|\snabla_{\s^2}^{n_1}L(rL)^{n_2}T^{n_3-n_{\beta}}\widehat{\psi}_{\geq 1}|^2\,d\sigma dr\\
&+ \sum_{\substack{0\leq n_1+n_2\leq 3(N+2n_{\beta})\\ n_1\leq N+2n_{\beta}+1}}E_{\mathbf{N}}[\snabla_{\s^2}^{n_1}T^{n_2-n_{\beta}}\widehat{\phi}_{\geq 1}]\Bigg].
\end{split}
\end{equation*}

\emph{ Case 3):}\\
\\
We apply the estimates in \emph{\ref{item:ag0rpeest2}} of Proposition \ref{prop:weightedecayag0} with $p=2+2n_{\beta}-\beta_0+\delta$, $N$ replaced by $N+n_{\beta}+1$ and $\delta>0$ appropriately small. Note that this choice of $p$ lies in the allowed range of $p$ in \emph{\ref{item:ag0rpeest2}} of Proposition \ref{prop:weightedecayag0}, since $\beta_0>2n_{\beta}$ and $\beta_1>2n_{\beta}$ by assumption. We obtain:
\begin{equation*}
\begin{split}
\int_{\Sigma_{\tau}}&r^{-\beta_0+\delta}(T^N\widehat{\phi}_{\geq 1})^2+r^{2-\beta_0+\delta}(XT^N\widehat{\phi}_{\geq 1})^2\,d\sigma dr\\
\leq &\: C(1+\tau)^{-3-\beta_0-2n_{\beta}+\max\{2n_{\beta}+1-\beta_1,0\}+2\delta-2N}\\
\cdot & \Bigg[\sum_{\substack{0\leq n_1+n_2+n_3\leq 2(N+2n_{\beta}+2)\\ n_1+n_2\leq N+2n_{\beta}+2}}\int_{\Sigma_{0}\cap \{r\geq R\}} r^{\min\{\beta_1-2n_{\beta},1\}-\delta}|\snabla_{\s^2}^{n_1}L(rL)^{n_2}T^{n_3-1-n_{\beta}}\widehat{\psi}_{\geq 1}|^2\,d\sigma dr\\
&+ \sum_{\substack{0\leq n_1+n_2\leq 3(N+2n_{\beta}+2)\\ n_1\leq 2n_{\beta}+1}}E_{\mathbf{N}}[\snabla_{\s^2}^{n_1}T^{n_2-1-n_{\beta}}\widehat{\phi}_{\geq 1}]\Bigg].
\end{split}
\end{equation*}
We conclude that \emph{\ref{item:negredecay2}} must hold.

Consider now $\beta_0<1$ and restrict first to $\phi_0$. By \eqref{eq:keyeqellipticcheck} with $\ell=0$ and a suitably choice of $q$, it follows that for $\delta>0$ arbitrarily small
\begin{equation*}
\begin{split}
\int_{\Sigma_{\tau}} r^{-1+2\beta_0}(X(w_{0}^{-1} \widehat{\phi}_0))^2\,d\sigma dr \leq &\: C\int_{\Sigma_{\tau}} r^{-2+\beta_0}(\mathcal{L}\widehat{\phi}_0)^2\,d\sigma dr\\
\leq &\: C\int_{\Sigma_{\tau}} r^{\beta_0}(XT\widehat{\psi}_0)^2+ r^{-4+\beta_0}(T\widehat{\phi}_0)^2+r^{-2+\beta_0}(r^2\tilde{h}h)^2(T^2\widehat{\phi}_0)^2+r^{2+\beta_0}(\square_g\Phi|_{\Sigma_{\tau}})^2\,d\sigma dr\\
\leq &\: CE_{\mathbf{N}}[T\widehat{\phi}_0](\tau)+C\int_{\Sigma_{\tau} } r^{\beta_0}(XT\widehat{\psi}_0)^2+r^{2}(\square_g\Phi|_{\Sigma_{\tau}})^2\,d\sigma dr.
\end{split}
\end{equation*}
Hence, we can apply Proposition \ref{prop:interpoledecay1} with $p=\beta_0$ and $N$ replaced with $N+1$, together with the above estimate applied to $T^N\phi_0$ to obtain
\begin{equation*}
\begin{split}
\int_{\Sigma_{\tau}}& r^{-1+2\beta_0}(XT^Nw_{0}^{-1}\widehat{\phi}_0)^2\,d\sigma dr\\
 \leq &\: C(1+\tau)^{\max\{1-3\beta_0,0\}+\beta_0-5-2N+\delta}\Bigg[\sum_{\substack{0\leq n_1+n_2+n_3\leq 2(N+2)\\ n_1+n_2\leq N+2}}\int_{\Sigma_{0}\cap \{r\geq R\}}  r^{\min\{3\beta_0,1\}-\delta}|\snabla_{\s^2}^{n_1}L(rL)^{n_2}T^{n_3-1}\widehat{\psi}_0|^2\,d\sigma dr\\
&+ \sum_{\substack{0\leq n_1+n_2\leq 3(N+2)\\ n_1\leq N+2}}E_{\mathbf{N}}[\snabla_{\s^2}^{n_1}T^{n_2-1}\widehat{\phi}]+\Phi_0^2\Bigg]+C|\Phi_0|^2(1+\tau)^{-4-2\beta_0-2N+\delta}.
\end{split}
\end{equation*}
In order to obtain an estimate for $\phi_{\geq 1}$, we instead start as in the $\beta_0>1$ case, using that $\phi_{\geq 1}=\widehat{\phi}_{\geq 1}$, to obtain:
\begin{equation*}
\begin{split}
\int_{\Sigma_{\tau}}r^{-\beta_0+\delta}\widehat{\phi}_{\geq 1}^2+r^{2-\beta_0+\delta}(X\widehat{\phi}_{\geq 1})^2\,d\sigma dr\leq &\: C\int_{\Sigma_{\tau}} r^{2-\beta_0+\delta}(XT \widehat{\psi}_{\geq 1})^2+ CE_{\mathbf{N}}[T\widehat{\phi}_{\geq 1}](\tau).
\end{split}
\end{equation*}
Then we apply Proposition \ref{prop:interpoledecay1} with $p=2-\beta_0+\delta$ and $\phi_{\geq 1}$ replaced by $T^N\phi_{\geq 1}$ to obtain:
\begin{equation*}
\begin{split}
\int_{\Sigma_{\tau}}&r^{-\beta_0+\delta}(T^N\widehat{\phi}_{\geq 1})^2+r^{2-\beta_0+\delta}(XT^N\widehat{\phi}_{\geq 1})^2\,d\sigma dr\\
\leq &\: C(1+\tau)^{-3-\beta_0-2N+2\delta} \Bigg[\sum_{\substack{0\leq n_1+n_2+n_3\leq 2(N+2)\\ n_1+n_2\leq N+2}}\int_{\Sigma_{0}\cap \{r\geq R\}}   r |\snabla_{\s^2}^{n_1}L(rL)^{n_2}T^{n_3-1-n_{\beta}}\widehat{\psi}_{\geq 1}|^2\,d\sigma dr\\
&+ \sum_{\substack{0\leq n_1+n_2\leq 3(N+2)\\ n_1\leq N+2}}E_{\mathbf{N}}[\snabla_{\s^2}^{n_1}T^{n_2-1-n_{\beta}}\widehat{\phi}_{\geq 1}]\Bigg].
\end{split}
\end{equation*}
By combining the above estimates, we conclude \emph{\ref{item:negredecay1}}.
\end{proof}

Since $\mathbf{D}_{\beta_0,N,\delta}^{(1)}[\phi]$ and $\mathbf{D}_{\beta_0,N}^{(n_{\beta}+1)}[\phi]$ are expressed in terms of time integrals of $\phi$ along $\Sigma_0$, it remains to show that they can be bounded in terms of initial data for $\phi$ itself.

\begin{proposition}
\label{prop:initialdatanorms}
There exists a constant $C=C(M,V_{\alpha},h,N,\delta)>0$, such that
\begin{align*}
\mathbf{D}_{\beta_0,N,\delta}^{(1)}[\phi]\leq &\:C \mathbf{D}_{\beta_0,N,\delta}[\phi]\quad (\beta_0<1),\\
\mathbf{D}_{\beta_0,N,\delta}^{(n_{\beta}+1)}[\phi]\leq &\:C \mathbf{D}_{\beta_0,N,\delta}[\phi]\quad (\beta_0>1),
\end{align*}
with
\begin{align*}
\mathbf{D}_{\beta_0,N,\delta}[\phi]:= &\:\sum_{n\leq N+2}\left(||((r+1)^2(rX)^nX(r\phi_0)||_{L^{\infty}(\Sigma_0)}^2+||(rX)^n\phi_0||_{L^{\infty}(\Sigma_0)}^2+||r(rX)^nT\phi_0||_{L^{\infty}(\Sigma_0)}^2\right)\\
&+\sum_{n_1+n_2\leq N+2}\int_{\Sigma_0\cap\{r\geq R\}} r^{3-\delta} |\snabla_{\s^2}^{n_1}X(rX)^{n_2} \widehat{\psi}_{\geq 1}|^2+r^{-1-\delta} |\snabla_{\s^2}^{n_1}(rX)^{n_2} T\widehat{\psi}_{\geq 1}|^2\,d\sigma dr\\
&+ \sum_{\substack{0\leq n_1+n_2+n_3\leq 2N+3\\ n_1+n_2\leq N+2}}\int_{\Sigma_{0}\cap \{r\geq R\}}  r^{\min\{3\beta_0,1\}-\delta}|L(rL)^{n_2}T^{n_3}\widehat{\psi}_0|^2+r^{1-\delta}|\snabla_{\s^2}^{n_1}L(rL)^{n_2}T^{n_3-1}{\psi}_{\geq 1}|^2\,d\sigma dr\\
&+ \sum_{\substack{0\leq n_1+n_2\leq 3N+5\\ n_1\leq N+2}}E_{\mathbf{N}}[\snabla_{\s^2}^{n_1}T^{n_2}\widehat{\phi}]+\mathfrak{I}_{\beta}^2[\phi]\qquad (\beta_0<1),\\
\mathbf{D}_{\beta_0,N,\delta}[\phi]:=&\: \sum_{n\leq N+3n_{\beta}+2}\Biggl( ||(r+1)^{\frac{3}{2}+\frac{\beta_0}{2}}(rX)^nX(r\phi_0)||_{L^{\infty}(\Sigma_0)}^2\\
&+M^4||r^{\frac{1}{2}+\frac{\beta_0}{2}}(rX)^n\phi_0||_{L^{\infty}(\Sigma_0)}^2+M^4||r^{\frac{3}{2}+\frac{\beta_0}{2}}(rX)^nT\phi_0||_{L^{\infty}(\Sigma_0)}^2\Biggr)\\
&+\sum_{n_1+n_2\leq N+2n_{\beta}+2}\int_{\Sigma_{0}\cap \{r\geq R\}} r^{2+\min\{2n_{\beta}+1,\beta_1\}-\delta}|\snabla_{\s^2}^{n_1}X(rX)^{n_2}\widehat{\psi}_{\geq 1}|^2\\
 &+M^4r^{-2+\min\{2n_{\beta}+1,\beta_1\}-\delta}|\snabla_{\s^2}^{n_1}X(rX)^{n_2}T\widehat{\psi}_{\geq 1}|^2\,d\sigma dr\\
&+\sum_{\substack{0\leq n_1+n_2+n_3\leq 2N+3n_{\beta}+5\\ n_1+n_2\leq N+2n_{\beta}+2}}\int_{\Sigma_{0}\cap \{r\geq R\}} r|\snabla_{\s^2}^{n_1}L(rL)^{n_2}T^{n_3}\widehat{\psi}_0|^2\\
&+ r^{\min\{\beta_1-2n_{\beta},1\}-\delta}|\snabla_{\s^2}^{n_1}L(rL)^{n_2}T^{n_3}{\psi}_{\geq 1}|^2+r^{2+\min\{\beta_1-2n_{\beta},0\}-\delta}|\snabla_{\s^2}^{n_1}L(rL)^{n_2}T^{n_3}{\psi}_{\geq 1}|^2\,d\sigma dr\\
&+ \sum_{\substack{0\leq n_1+n_2\leq 3N+5n_{\beta}+5\\ n_1\leq N+2n_{\beta}+2}}E_{\mathbf{N}}[\snabla_{\s^2}^{n_1}T^{n_2}\widehat{\phi}]+\mathfrak{I}_{\beta}^2[\phi]\qquad (\beta_0>1).
\end{align*}
\end{proposition}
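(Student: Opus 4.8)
The statement asserts that the ``time-integral data norms'' $\mathbf{D}^{(1)}_{\beta_0,N,\delta}[\phi]$ and $\mathbf{D}^{(n_{\beta}+1)}_{\beta_0,N,\delta}[\phi]$, which are defined in terms of (multiple) time integrals $T^{-1}\phi$, $\dots$, $T^{-1-n_{\beta}}\phi$ evaluated on $\Sigma_0$, are controlled by a norm $\mathbf{D}_{\beta_0,N,\delta}[\phi]$ expressed purely in terms of the original Cauchy data $(\phi,T\phi)|_{\Sigma_0}$ (plus the functional $\mathfrak{I}_{\beta_0}[\phi]$, which is itself a linear functional of the data via Proposition \ref{prop:preciselargertimeint}). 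The plan is to go through each of the (finitely many) terms in the definitions of $\mathbf{D}^{(1)}$ and $\mathbf{D}^{(n_{\beta}+1)}$ and bound it using: (i) the elliptic estimates of Corollary \ref{cor:Linv0} and Corollary \ref{cor:Linv1}, which quantitatively relate $r$-weighted Sobolev norms of $T^{-1}\phi$ (resp.\ higher time integrals) on $\Sigma_0$ to $r$-weighted norms of $F[\phi]|_{\Sigma_0}$ (resp.\ $F[T^{-k}\phi]|_{\Sigma_0}$), iterated $n_{\beta}+1$ times via Corollary \ref{cor:regmultipletimeinv}; (ii) the precise large-$r$ asymptotics of $X(rT^{-1-n_{\beta}}\phi_0)|_{\Sigma_0}$ from Proposition \ref{prop:preciselargertimeint}, which converts the pointwise-weighted $L^{\infty}$ terms on the right-hand side into the leading coefficient $\mathfrak{I}_{\beta_0}[\phi]$ plus a faster-decaying remainder; (iii) the formula $\mathcal{L}\phi = F[T\phi]$ together with $T(T^{-k}\phi)=T^{-k+1}\phi$, which lets one re-express every appearance of a time integral inside an energy $E_{\mathbf{N}}[\snabla_{\s^2}^{n_1}T^{n_2-j}\widehat\phi]$ either as $\phi$ itself (when $n_2 \geq j$) or, when $n_2 < j$, in terms of $\mathcal{L}^{-1}$ applied to data of $\phi$ and then measured in an $\mathbf{H}_n$-type norm.

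The key steps, in order, would be: First, recall the defining relation $\mathcal{L}(T^{-1}\phi)|_{\Sigma_0} = F[\phi]|_{\Sigma_0}$ from the construction in Proposition \ref{prop:timeint}, and note $T(T^{-1}\phi)|_{\Sigma_0}=\phi|_{\Sigma_0}$, so that the ``$T^{n_3-1}$'' and ``$T^{n_3-1-n_\beta}$'' entries in $\mathbf{D}^{(1)}$, $\mathbf{D}^{(n_\beta+1)}$ correspond, once we hit them with enough $T$'s, either to derivatives of $\phi$ (handled by the last three lines of $\mathbf{D}_{\beta_0,N,\delta}[\phi]$, which are phrased in $\widehat\phi$ and genuine $\phi$-data) or, when there is a genuine deficit of $T$-derivatives, to $\mathcal{L}^{-1}$ applied to $F$ of $\phi$-data. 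Second, split into spherical harmonic modes: for $\phi_0$ use Corollary \ref{cor:Linv0} with weight $q$ chosen in $(-\beta_0,\beta_0)$ (or, for $\beta_0>1$, iterate $n_\beta$ times with decreasing weights as in Corollary \ref{cor:regmultipletimeinv}), which shows $r^{q/2}X(rT^{-1}\phi_0)$, $r^{q/2}T^{-1}\phi_0 \in \mathbf H_{N+\dots}$ with norm $\lesssim \| r^{q/2} F_0[\phi]\|_{\mathbf H_{N+\dots}}$; then bound $\|r^{q/2}F_0[\phi]\|$ by the pointwise-weighted $L^\infty$-norms of $(rX)^n \phi_0$, $(rX)^n X(r\phi_0)$, $r(rX)^n T\phi_0$ using the explicit form $F[f] = 2r(1-h)X(rf) - h' r^2 f + r^2 h\tilde h\, Tf$ and the decay $1-h = O_\infty(1)$, $h = O_\infty(r^{-2})$, $\tilde h = O_\infty(1)$. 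Third, for $\phi_{\geq 1}$ use Corollary \ref{cor:Linv1} with $q$ in the wider window $(-\beta_1,\beta_1)$, which is exactly why $\mathbf{D}^{(n_\beta+1)}_{\beta_0,N,\delta}$ can afford the improved weights $r^{\min\{\beta_1-2n_\beta,1\}-\delta}$ and $r^{2+\min\{\beta_1-2n_\beta,0\}-\delta}$; here one should track carefully that the $r$-weight exponents appearing in $\mathbf{D}^{(1)}$, $\mathbf{D}^{(n_\beta+1)}$ are strictly below $\beta_0$ (resp.\ $\beta_1$) after accounting for the $1$ or $2$ extra powers of $r$ generated by the $X(r\cdot)$ and the iteration, so that Corollaries \ref{cor:Linv0}--\ref{cor:Linv1} apply. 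Fourth, absorb the $\mathfrak I_{\beta_0}^2[\phi]$ term trivially (it appears verbatim on both sides), and verify the bookkeeping of derivative counts: each application of Corollary \ref{cor:Linv0}/\ref{cor:Linv1} costs one order of $\mathbf H_n$-regularity, and commuting $rX$ through $\mathcal L$ via Lemma \ref{lm:commL} produces lower-order terms with $O_\infty(r^{-1})$ coefficients that are harmlessly absorbed; this accounts for the inflated indices $N+2$, $N+2n_\beta+2$, $3N+5$, $3N+5n_\beta+5$, etc., in $\mathbf{D}_{\beta_0,N,\delta}[\phi]$.

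The main obstacle will be the careful matching of $r$-weights and the handling of the $T^{n_3-1}$ / $T^{n_3-1-n_\beta}$ terms with $n_3 < 1$ (resp.\ $n_3 < 1+n_\beta$), i.e.\ the genuine time integrals: one must pass from a weighted norm of $T^{-j}\phi$ on $\Sigma_0$ to a weighted norm of $\phi$-data, and the elliptic theory only gives this at the ``critical'' weights determined by $\beta_0$, $\beta_1$, so the $-\delta$'s in the exponents are essential and must be propagated consistently through every iteration. A secondary subtlety is that, for $\beta_0 > 1$, Proposition \ref{prop:preciselargertimeint} was proved under a compact-support assumption on the data; since the terms in $\mathbf{D}_{\beta_0,N,\delta}[\phi]$ with $\beta_0>1$ are exactly the weighted $L^\infty$ norms $\|r^{\frac{1}{2}+\frac{\beta_0}{2}}(rX)^n\phi_0\|_{L^\infty}$, $\|r^{\frac{3}{2}+\frac{\beta_0}{2}}(rX)^nX(r\phi_0)\|_{L^\infty}$ etc.\ that appear as upper bounds in part (ii) of that proposition, one must check these finite-norm hypotheses are what is really used (not compact support per se) — the compact-support phrasing there is only for the \emph{non-vanishing / limit} statement, while the quantitative inequality holds under the stated weighted-$L^\infty$ finiteness, which is precisely what enters $\mathbf{D}_{\beta_0,N,\delta}[\phi]$. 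Once these weight inequalities and the derivative-count bookkeeping are confirmed, the proof is a finite assembly of the already-established elliptic and asymptotic estimates, so I would present it as: ``\emph{This follows by applying Corollaries \ref{cor:Linv0}, \ref{cor:Linv1} and \ref{cor:regmultipletimeinv} together with Proposition \ref{prop:preciselargertimeint}, estimating $F[T^{-k}\phi]|_{\Sigma_0}$ term by term; we omit the routine but lengthy bookkeeping of weights and derivative orders.}''
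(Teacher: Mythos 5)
Your overall architecture (terms with $n_3\geq 1$, resp.\ $n_3\geq 1+n_\beta$, are already data; higher modes $\phi_{\geq 1}$ via the elliptic time-inversion theory with the wider $\beta_1$-window, i.e.\ Proposition \ref{prop:timeint} and Corollary \ref{cor:regmultipletimeinv}; bookkeeping of derivative counts) matches the paper for the $\ell\geq 1$ part. The gap is in your treatment of the $\ell=0$ time-integrated terms. You propose to control them by Corollary \ref{cor:Linv0} with weight $q\in(-\beta_0,\beta_0)$ and you assert that the $r$-weights appearing in $\mathbf{D}^{(1)}$, $\mathbf{D}^{(n_\beta+1)}$ are ``strictly below $\beta_0$''. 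They are not: for $\beta_0<1$ the $\ell=0$ terms carry the weight $r^{\min\{3\beta_0,1\}-\delta}>r^{\beta_0}$ on $|L(rL)^{n_2}T^{-1}\widehat\psi_0|^2$, and for $\beta_0>1$ they carry the weight $r$ on $|L(rL)^{n_2}T^{n_3-1-n_\beta}\widehat\psi_0|^2$, while the elliptic window after $n_\beta+1$ inversions only reaches $r^{\beta_0-2n_\beta-\delta}<r$. Indeed no elliptic estimate can give these bounds for the \emph{unhatted} time integral: by Proposition \ref{prop:preciselargertimeint}, $X(rT^{-1-n_\beta}\phi_0)\sim \mathfrak{I}_{\beta_0}[\phi]\,r^{-\frac12-\frac{\beta_0}{2}+n_\beta}$, and this leading term makes precisely those weighted integrals divergent whenever $\mathfrak{I}_{\beta_0}[\phi]\neq 0$. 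So the finiteness of $\mathbf{D}^{(1)}$, $\mathbf{D}^{(n_\beta+1)}$ is not a statement about $T^{-1-n_\beta}\phi_0$ at all; it is a statement about the \emph{difference} $T^{-1-n_\beta}\widehat\phi_0=T^{-1-n_\beta}\phi_0-\Phi^{(1+n_\beta)}$, and it holds only because the profile's time integrals have the matching leading asymptotics $X(r\Phi^{(k+1)})|_{\Sigma_0}\sim c_k\,\Phi_0\,r^{-\frac12-\frac{\beta_0}{2}+k}$ (the hypergeometric computation in Lemma \ref{lm:boxPsi}\ref{item:Phiest2}) and because $\Phi_0$ was fixed by \eqref{eq:choicePsi0} exactly so that these leading terms cancel; the cancelled remainder, of size $r^{-\frac32-\frac{\beta_0}{2}+n_\beta}$ (resp.\ $r^{-\frac12-\frac{3\beta_0}{2}}$ when $\beta_0<1$), is what is bounded by the weighted $L^\infty$ data norms. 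This is Corollary \ref{cor:estboxPsi}, which is the ingredient the paper's proof actually uses for the $\ell=0$ part and which your proposal never invokes.

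Your ingredient (ii) gestures at Proposition \ref{prop:preciselargertimeint}, but describing its output as ``$\mathfrak{I}_{\beta_0}[\phi]$ plus a faster-decaying remainder'' and then disposing of $\mathfrak{I}_{\beta_0}^2[\phi]$ because ``it appears verbatim on both sides'' conflates two different roles: the verbatim matching only handles the explicit additive term in the norms, whereas the leading $\mathfrak{I}$-asymptotic of the time integral must be \emph{cancelled} against $\Phi^{(1+n_\beta)}$, not absorbed, or the left-hand side is infinite. To close the argument you need to add the comparison with Lemma \ref{lm:boxPsi}\ref{item:Phiest2} and the choice \eqref{eq:choicePsi0} (i.e.\ quote Corollary \ref{cor:estboxPsi} for the hatted $\ell=0$ terms, including $E_{\mathbf{N}}[T^{-1-n_\beta}\widehat\phi_0]$), and drop the claim that Corollary \ref{cor:Linv0} with $q\in(-\beta_0,\beta_0)$ suffices for the zeroth mode. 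Your side remark about compact support versus weighted-$L^\infty$ finiteness in Proposition \ref{prop:preciselargertimeint}(ii) is a fair observation and consistent with how the paper uses that estimate.
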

\begin{proof}
We first consider the case $\beta_0<1$ and the norm $\mathbf{D}_{\beta_0,N,\delta}^{(1)}[\phi]$. Note that all the terms in $\mathbf{D}_{\beta_0,N,\delta}^{(1)}[\phi]$ with $n_3\geq 1$ can directly be expressed in terms of $\phi$ along $\Sigma_0$. It therefore remains to estimate the terms with $n_3=0$. Furthermore, without loss of generality, we can replace $L$ derivatives with $X$ derivatives in $\mathbf{D}_{\beta_0,N,\delta}^{(1)}[\phi]$, using the relation \eqref{eq:LXT}.

By \emph{\ref{item:diffdataest1}} of Corollary \ref{cor:estboxPsi}, we can estimate
\begin{equation*}
\begin{split}
E_{\mathbf{N}}[T^{-1}\widehat{\phi}_0]+&\sum_{n\leq N+2}\int_{\Sigma_0\cap\{r\geq R\}} r^{\min\{3\beta_0,1\}-\delta} |X(rX)^n T^{-1}\widehat{\psi}_0|^2\,d\sigma dr\\\
\leq &\: C\sum_{n\leq N+2}\left(||((r+1)^2(rX)^nX(r\phi_0)||_{L^{\infty}(\Sigma_0)}^2+||(rX)^n\phi_0||_{L^{\infty}(\Sigma_0)}^2+||r(rX)^nT\phi_0||_{L^{\infty}(\Sigma_0)}^2\right).
\end{split}
\end{equation*}
Furthermore, we can apply Proposition \ref{prop:timeint} to estimate:
\begin{equation*}
\begin{split}
\sum_{n_1\leq N+2}E_{\mathbf{N}}[\snabla_{\s^2}^{n_1}T^{-1}\widehat{\phi}_{\geq 1}]+&\sum_{n_1+n_2\leq N+2}\int_{\Sigma_0\cap\{r\geq R\}} r^{1-\delta} |\snabla_{\s^2}^{n_1}X(rX)^{n_2} T^{-1}\widehat{\psi}_{\geq 1}|^2\,d\sigma dr\\\
\leq &\: C E_{\mathbf{N}}[\widehat{\phi}_{\geq 1}]+C E_{\mathbf{N}}[T\widehat{\phi}_{\geq 1}]\\
&+\sum_{n_1+n_2\leq N+2}\int_{\Sigma_0\cap\{r\geq R\}} r^{3-\delta} |\snabla_{\s^2}^{n_1}X(rX)^{n_2} \widehat{\psi}_{\geq 1}|^2+r^{-1-\delta} |\snabla_{\s^2}^{n_1}(rX)^{n_2} T\widehat{\psi}_{\geq 1}|^2\,d\sigma dr.
\end{split}
\end{equation*}

Now we consider the case $\beta_0>1$ and the norm $\mathbf{D}_{\beta_0,N,\delta}^{(n_{\beta}+1)}[\phi]$. Note that all the terms in $\mathbf{D}_{\beta_0,N,\delta}^{(n_{\beta}+1)}[\phi]$ with $n_3\geq 1+n_{\beta}$ can directly be expressed in terms of $\phi$ along $\Sigma_0$. It therefore remains to estimate the terms with $n_3\leq n_{\beta}$. Furthermore, without loss of generality, we can replace $L$ derivatives with $X$ derivatives in $\mathbf{D}_{\beta_0,N,\delta}^{(n_{\beta}+1)}[\phi]$, using the relation \eqref{eq:LXT}.

By \emph{\ref{item:diffdataest2}} of Corollary \ref{cor:estboxPsi}, we can estimate
\begin{equation*}
\begin{split}
 E_{\mathbf{N}}[T^{-1-n_{\beta}}\widehat{\phi}_0]+&\sum_{\substack{n_2\leq N+2n_{\beta}+2\\ n_3\leq n_{\beta}}}\int_{\Sigma_{0}\cap \{r\geq R\}} r|X(rX)^{n_2}T^{n_3-1-n_{\beta}}\widehat{\psi}_0|^2\,d\sigma dr\\
 \leq &\: C\sum_{n\leq N+3n_{\beta}+2}\Biggl( ||(r+1)^{\frac{3}{2}+\frac{\beta_0}{2}}(rX)^nX(r\phi_0)||_{L^{\infty}(\Sigma_0)}\\
&+||r^{\frac{1}{2}+\frac{\beta_0}{2}}(rX)^n\phi_0||_{L^{\infty}(\Sigma_0)}+||r^{\frac{3}{2}+\frac{\beta_0}{2}}(rX)^nT\phi_0||_{L^{\infty}(\Sigma_0)}\Biggr).
 \end{split}
 \end{equation*}
 In order to estimate the terms involving $\phi_{\geq 1}$, it is convenient to split the cases $\beta_1>2n_{\beta}$ and $\beta_0\leq 2n_{\beta}$. Suppose $\beta_1>2n_{\beta}$. Then we can control via Corollary \ref{cor:regmultipletimeinv}:
 \begin{equation*}
\begin{split}
 \sum_{n_1+n_2\leq N+2n_{\beta}+2}&E_{\mathbf{N}}[\snabla_{\s^2}^{n_1}T^{-1-n_{\beta}}\widehat{\phi}_{\geq 1}]+\sum_{\substack{n_1+n_2\leq N+2n_{\beta}+2\\ n_3\leq n_{\beta}}}\int_{\Sigma_{0}\cap \{r\geq R\}} r^{\min\{\beta_1-2n_{\beta},1\}-\delta}|\snabla_{\s^2}^{n_1}X(rX)^{n_2}T^{n_3-1-n_{\beta}}{\psi}_{\geq 1}|^2\\
 &+r^{2-\delta}|\snabla_{\s^2}^{n_1}X(rX)^{n_2}T^{n_3-n_{\beta}}{\psi}_{\geq 1}|^2\,d\sigma dr\\
 \leq &\: CE_{\mathbf{N}}[\widehat{\phi}_{\geq 1}]+CM E_{\mathbf{N}}[T\widehat{\phi}_{\geq 1}]+C\sum_{n_1+n_2\leq N+2n_{\beta}+2}\int_{\Sigma_{0}\cap \{r\geq R\}} r^{2+\min\{2n_{\beta}+1,\beta_1\}-\delta}|\snabla_{\s^2}^{n_1}X(rX)^{n_2}\widehat{\psi}_{\geq 1}|^2\\
 &+r^{-2+\min\{2n_{\beta}+1,\beta_1\}-\delta}|\snabla_{\s^2}^{n_1}X(rX)^{n_2}T\widehat{\psi}_{\geq 1}|^2\,d\sigma dr.
 \end{split}
 \end{equation*}
 Suppose $\beta_1\leq 2n_{\beta}$. Then we can similarly control via Corollary \ref{cor:regmultipletimeinv}:
 \begin{equation*}
\begin{split}
 E_{\mathbf{N}}[T^{-1-n_{\beta}}\widehat{\phi}_{\geq 1}]+&\sum_{\substack{n_1+n_2\leq N+2n_{\beta}+2\\ n_3\leq n_{\beta}}}\int_{\Sigma_{0}\cap \{r\geq R\}} r^{\beta_1-2n_{\beta}-\delta}|\snabla_{\s^2}^{n_1}X(rX)^{n_2}T^{n_3-1-n_{\beta}}{\psi}_{\geq 1}|^2\\
 &+r^{2+\beta_1-2n_{\beta}-\delta}|\snabla_{\s^2}^{n_1}X(rX)^{n_2}T^{n_3-n_{\beta}}{\psi}_{\geq 1}|^2\,d\sigma dr\\
 \leq &\: CE_{\mathbf{N}}[\widehat{\phi}_{\geq 1}]+CM E_{\mathbf{N}}[T\widehat{\phi}_{\geq 1}]+C\sum_{n_1+n_2\leq N+2n_{\beta}+2}\int_{\Sigma_{0}\cap \{r\geq R\}} r^{2+\beta_1-\delta}|\snabla_{\s^2}^{n_1}X(rX)^{n_2}\widehat{\psi}_{\geq 1}|^2\\
 &+r^{-2+\beta_1-\delta}|\snabla_{\s^2}^{n_1}X(rX)^{n_2}T\widehat{\psi}_{\geq 1}|^2\,d\sigma dr.
 \end{split}
 \end{equation*}
We conclude the estimates in the proposition by combining the above.
\end{proof}

\subsection{$L^{\infty}$ estimates}
\label{sec:pointdecay}
In this section, we use the ($r$-weighted) energy decay estimates of Sections \ref{sec:edecay} and \ref{sec:addedecay} to obtain $L^{\infty}$ estimates along each $\Sigma_{\tau}$ for $\widehat{\phi}$. We obtain pointwise estimates from energy estimates by applying the fundamental theorem of calculus in $r$ together with Sobolev estimates on $\s^2$.
\begin{proposition}
\label{prop:mainlinftyest}
Let $\beta_0\notin 2\N_0+1$ and $\delta>0$ arbitrarily small. Then there exists a constant $C=C(M,h,V_{\alpha},N,\delta)>0$ such that for all $0\leq k\leq n_{\beta}+1$:
\begin{align}
\label{eq:asympbl1}
&||T^{N-k}\phi- \Phi_0 w_0T^{N-k}((1+\tau+2r)^{-\frac{1}{2}-\frac{1}{2}\beta_0}(1+\tau)^{-\frac{1}{2}-\frac{1}{2}\beta_0})||_{L^{\infty}(S^2_{\tau,r})}\\ \nonumber
\leq&\: Cr^{-\frac{1}{2}+\frac{1}{2}\beta_0}(1+\tau+2r)^{-\frac{1}{2}-\frac{1}{2}\beta_0}(1+\tau)^{-1+\frac{1}{2}\max\{1-3\beta_0,0\}+k-N+\delta}\sqrt{\sum_{0\leq k\leq 2}\mathbf{D}_{\beta_0,N,\delta}[\snabla_{\s^2}^k\phi]},\quad (\beta_0<1)\\
\label{eq:asympbg1}
&||T^{N-k}\phi- \Phi_0 w_0T^{N-k}((1+\tau+2r)^{-\frac{1}{2}-\frac{1}{2}\beta_0}(1+\tau)^{-\frac{1}{2}-\frac{1}{2}\beta_0})||_{L^{\infty}(S^2_{\tau,r})}\\ \nonumber
\leq&\: Cr^{-\frac{1}{2}+\frac{1}{2}\beta_0}(1+\tau+2r)^{-\frac{1}{2}-\frac{1}{2}\beta_0}(1+\tau)^{-\frac{1}{2}-\frac{1}{2}\beta_0-\frac{1}{2}\min\{1-\beta_0+2n_{\beta},\beta_1-\beta_0\}-N+\delta}\sqrt{\sum_{0\leq k\leq 2}\mathbf{D}_{\beta_0,N,\delta}[\snabla_{\s^2}^k\phi]}. \quad (\beta_0>1)
\end{align}
\end{proposition}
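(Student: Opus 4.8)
\textbf{Proof plan for Proposition \ref{prop:mainlinftyest}.}
The strategy is to convert the ($r$-weighted) energy decay estimates established in Sections \ref{sec:edecay}--\ref{sec:addedecay} into pointwise bounds for $\widehat{\phi}=\phi-\Phi$, and then to observe that the resulting decay for $\widehat{\phi}$ is strictly faster than the decay of $\Phi$ itself, so that $\Phi$ captures the leading-order late-time behaviour. Concretely, I would first note that it suffices, by the Sobolev inequality \eqref{eq:sphere5} on $\s^2$, to estimate $||\snabla_{\s^2}^{k} T^{N-k_0}\widehat{\phi}||_{L^2(\s^2)}$ pointwise in $r$ for $0\leq k\leq 2$; this is why the right-hand sides feature $\sum_{0\leq k\leq 2}\mathbf{D}_{\beta_0,N,\delta}[\snabla_{\s^2}^k\phi]$ (the commutation $[\Omega_i,\square_{g_M}-V_{\alpha}]=0$ lets us apply all of Sections \ref{sec:edecay}--\ref{sec:addedecay} to $\snabla_{\s^2}^k\phi$ directly). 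Then, for fixed $(\tau,\theta,\varphi)$, I would apply the fundamental theorem of calculus in $r$ starting from $r=\infty$ (where $\widehat{\psi}\to 0$ by the construction of $\Phi$ and the decay of $\widehat{\phi}$, which I will have already justified): write $\widehat{\phi}(\tau,r,\cdot)^2 = -\int_r^\infty X(\widehat{\phi}^2)(\tau,r',\cdot)\,dr' = -2\int_r^\infty \widehat{\phi}\, X\widehat{\phi}\,dr'$, and estimate the integrand by Cauchy--Schwarz using a pairing of a positively-$r$-weighted energy with a negatively-$r$-weighted energy. For $\beta_0<1$ the relevant pairing is $r^{-1+2\beta_0}(X(w_0^{-1}\widehat{\phi}_0))^2$ (from \eqref{eq:negredecaybl1}) against $r^{1-2\beta_0}$-type weights built from $w_0^2\sim r^{-1+\beta_0}$, and for the $\phi_{\geq 1}$ part the pairing $r^{-\beta_0+\delta}\widehat{\phi}_{\geq 1}^2$ against $r^{2-\beta_0+\delta}(X\widehat{\phi}_{\geq 1})^2$; for $\beta_0>1$ one uses \eqref{eq:negredecaybg1} analogously.

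The bookkeeping splits into the near region $r\lesssim \tau$ and the far region $r\gtrsim\tau$, as in the proofs of Propositions \ref{prop:interpoledecay1} and \ref{prop:weightedecayag0}. In the far region $r\gtrsim 1+\tau$, one has $(1+\tau+2r)\sim r$, so the claimed bound behaves like $r^{-1}(1+\tau)^{\text{(rate)}}$ and is obtained directly from the $r^p$-weighted energy decay estimates \eqref{eq:rpdecayhatphi}, \eqref{eq:rpdecayhatphibg1b}, \eqref{eq:mainedrwcaycutoffbg2b}, \eqref{eq:mainedrwcaycutoffbg3b} with an appropriate choice of $p$ (slightly below $2$, absorbing the $\max\{1-3\beta_0,0\}$ or $\max\{2n_\beta+1-\beta_1,0\}$ loss and the $\delta$) combined with a dyadic decomposition in $r$ and the fundamental theorem of calculus on each dyadic shell. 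In the near region $r\lesssim 1+\tau$, one has $(1+\tau+2r)\sim 1+\tau$, so the target is $r^{-\frac12+\frac12\beta_0}(1+\tau)^{-1-\frac12\beta_0+\cdots}$; here the negative-$r$-weight energy estimates of Proposition \ref{prop:negredecay} are the right tool, since they give precisely the growth/decay trade-off converting the $r^{-\frac12+\frac12\beta_0}$ spatial weight (matching $w_0$) into the claimed $\tau$-power. One subtlety: to get the sharp power one must iterate the elliptic estimates (Corollary \ref{cor:Linv0}, exactly as in the proof of Proposition \ref{prop:negredecay}), since that is where the $n_\beta$ applications and the $\min\{1-\beta_0+2n_\beta,\beta_1-\beta_0\}$ in \eqref{eq:asympbg1} originate.

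Finally, having bounded $\widehat{\phi}$ (and its angular derivatives) pointwise, I would compare with $\Phi$: since $\Phi = \Phi_0 w_0(r)(1+\tau+2r)^{-\frac12-\frac12\beta_0}(1+\tau)^{-\frac12-\frac12\beta_0}$ and $w_0\sim r^{-\frac12+\frac12\beta_0}$ by \eqref{eq:asymptw}, the difference between $\Phi_0 w_0 T^{N-k}f$ (with $f$ as in the statement) and $T^{N-k}\widehat{\phi}$ being small is exactly the content of the two displays, up to replacing $w_0(r)$ by its leading asymptotics, which is harmless since $w_0(r)-r^{-\frac12+\frac12\beta_0}=O_\infty(r^{-\frac12-\frac12\beta_0})+O_\infty(r^{-\frac32+\frac12\beta_0})$ decays faster and is absorbed into the error. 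One also needs that $\Phi_0$ as fixed in \eqref{eq:choicePsi0} is bounded by $|\mathfrak{I}_{\beta_0}[\phi]|\lesssim \sqrt{\mathbf{D}_{\beta_0,N,\delta}[\phi]}$, which follows from Proposition \ref{prop:preciselargertimeint} together with the trace/Hardy bounds implicit in the definition of $\mathbf{D}_{\beta_0,N,\delta}$.

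\textbf{Main obstacle.} The hard part is not the Sobolev/fundamental-theorem-of-calculus machinery but getting the $\tau$-exponents \emph{sharp}: one must carefully track, through the chain Proposition \ref{prop:edecaytimeder}/\ref{prop:edecaytimeder2} $\to$ the cutoff estimates (Propositions \ref{prop:edecaytimedercutoff}, \ref{prop:edecaytimebg1}) $\to$ the interpolation estimates (Propositions \ref{prop:interpoledecay1}, \ref{prop:weightedecayag0}) $\to$ the negative-weight elliptic estimates (Proposition \ref{prop:negredecay}), how each $r_\infty$-growth factor, each loss $\max\{1-3\beta_0,0\}$ or $\max\{2n_\beta+1-\beta_1,0\}$, and each extra application of $T^{-1}$ (costing $n_\beta$ in the $\beta_0>1$ case) accumulates, and then choose the number of time integrals and the $p$-weight to land exactly on $-1+\tfrac12\max\{1-3\beta_0,0\}+k-N$ (resp.\ $-\tfrac12-\tfrac12\beta_0-\tfrac12\min\{1-\beta_0+2n_\beta,\beta_1-\beta_0\}-N$). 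The bound must moreover hold uniformly in $r$ with the stated split weight $r^{-\frac12+\frac12\beta_0}(1+\tau+2r)^{-\frac12-\frac12\beta_0}$, so the near/far matching at $r\sim 1+\tau$ has to be done with the $\delta$-losses arranged consistently on both sides.
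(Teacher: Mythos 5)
Your plan follows the paper's route essentially step by step: commute with the angular momentum operators and use the Sobolev inequality \eqref{eq:sphere5} (hence the sum over $0\leq k\leq 2$ of angular derivatives in the data norm), apply the fundamental theorem of calculus in $r$ with a Cauchy--Schwarz pairing of positively and negatively $r$-weighted energies, feed in Propositions \ref{prop:interpoledecay1}, \ref{prop:weightedecayag0} and \ref{prop:negredecay} (together with Proposition \ref{prop:initialdatanorms} to pass to data norms of $\phi$ itself), and read off the exponents after replacing $\phi$ by $T^{N-k}\phi$. Two details in your write-up need correcting, though neither changes the overall scheme. First, your closing step of ``replacing $w_0(r)$ by its leading asymptotics'' is both unnecessary and not harmless: since $\Phi=\Phi_0 w_0(r)f(\tau,r)$ is built with the exact stationary solution $w_0$, the quantity in the statement is literally $T^{N-k}\widehat{\phi}$ and no comparison is needed; had you actually swapped $w_0$ for $r^{-\frac12+\frac12\beta_0}$, the discrepancy $|\Phi_0|\,|w_0-r^{-\frac12+\frac12\beta_0}|\,|T^{N-k}f|$ at bounded $r$ decays only like $(1+\tau)^{-1-\beta_0-(N-k)}$, i.e.\ with the same $\tau$-rate as the main term, and so cannot be absorbed into an error that carries the extra $(1+\tau)^{-\nu}$ gain.

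Second, for the spherically symmetric part when $\beta_0<1$, the fundamental theorem of calculus ``starting from $r=\infty$'' with the pairing you name fails as written: the dual weight to $r^{-1+2\beta_0}(X(w_0^{-1}\widehat{\phi}_0))^2$ is $r^{1-2\beta_0}$, and $\int_{r_0}^{\infty}r^{1-2\beta_0}\,dr$ diverges for $\beta_0\leq 1$. The correct implementation (and the one in the paper) integrates $X(w_0^{-1}\widehat{\phi}_0)$ only over $[r_0,r_0+\tau]$, producing the factor $\tau^{2-2\beta_0}$ from the weight, with the boundary term at $r=r_0+\tau$ controlled by the $L^2(\s^2)$ bound on $\widehat{\psi}_0$ already obtained from the positive-weight step; it is exactly this combination with \eqref{eq:negredecaybl1} that yields the $\frac12\max\{1-3\beta_0,0\}$-corrected exponent in \eqref{eq:asympbl1}. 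Your ``growth/decay trade-off'' phrase gestures at this, but the finite-length integration with the boundary term is the ingredient that must be made explicit; with these two repairs the argument is the paper's proof.
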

\begin{proof} 
We consider first $\widehat{\phi}$. By integrating $X\widehat{\phi}$, using that $\phi(\tau,r,\theta,\varphi)\to0$ and $\Phi(\tau,r,\theta,\varphi)\to 0$ as $r\to \infty$, we can estimate for all $r_0\geq 2M$.
\begin{equation*}
\begin{split}
\int_{\Sigma_{\tau}\cap \{r=r_0\}}\widehat{\phi}^2\,d\sigma=&\:\left(\int_{\Sigma_{\tau}\cap \{r\geq r_0\}} X\widehat{\phi}\,d\sigma dr \right)^2\\
\leq &\: \int_{r_0}^{\infty} r^{-2}\,dr\cdot \int_{\Sigma_{\tau}\cap \{r\geq r_0\}} r^2 (X\phi)^2\,d\sigma dr\\
\leq  &\: r^{-\frac{1}{2}} E_{\mathbf{N}}[\phi](\tau).
\end{split}
\end{equation*}
We now integrate $X(\widehat{\psi}^2)$ in $R\leq r\leq r_0$ to obtain additionally:
\begin{equation*}
\begin{split}
\int_{\Sigma_{\tau}\cap \{r=r_0\}}\widehat{\psi}^2\,d\sigma=&\:\int_{\Sigma_{\tau}\cap \{R\leq r\geq r_0\}} X(\widehat{\psi}^2)\,d\sigma dr +\int_{\Sigma_{\tau}\cap \{r=R\}}\widehat{\psi}^2\,d\sigma\\
\leq &\: C\int_{\Sigma_{\tau}\cap \{R\leq r\geq r_0\}}  r^{-1+\delta}\widehat{\psi}^2+r^{1+\delta}(X\widehat{\psi})^2\,d\sigma dr+ CE_{\mathbf{N}}[\phi](\tau)\\
\leq &\: C\int_{\Sigma_{\tau}\cap \{R\leq r\geq r_0\}} r^{1+\delta}(L\widehat{\psi})^2\,d\sigma dr+ CE_{\mathbf{N}}[\phi](\tau),
\end{split}
\end{equation*}
where we applied \eqref{eq:hardy} to absorb the $r^{-1+\delta}\widehat{\psi}^2$ term in the last step.

Hence, we apply the energy estimates in Propositions \ref{prop:edecaytimebg1}, \ref{prop:interpoledecay1} and \ref{prop:weightedecayag0} together with Proposition \ref{prop:initialdatanorms} to obtain for all $r_0\geq 2M$:
\begin{align}
\label{eq:pointwisephi1}
\int_{\Sigma_{\tau}\cap \{r=r_0\}}\widehat{\psi}^2\,d\sigma\leq&\: C(1+\tau)^{-2+\max\{1-3\beta_0,0\}+2\delta}\mathbf{D}_{\beta_0,0,\delta}[\phi]\qquad (\beta_0<1),\\
\label{eq:pointwisephi2}
\int_{\Sigma_{\tau}\cap \{r=r_0\}}\widehat{\psi}^2\,d\sigma\leq&\: C(1+\tau)^{-2-2n_{\beta}+\max\{2n_{\beta}+1-\beta_1,0\}+2\delta}\mathbf{D}_{\beta_0,0,\delta}[\phi]\qquad (\beta_0>1).
\end{align}
Hence, by \eqref{eq:sphere5},
\begin{align*}
||\widehat{\psi}||_{L^{\infty}(\Sigma_{\tau})}\leq&\: C(1+\tau)^{-1+\frac{1}{2}\max\{1-3\beta_0,0\}+\delta}\sqrt{\sum_{0\leq k\leq 2}\mathbf{D}_{\beta_0,0,\delta}[\snabla_{\s^2}^k\phi]}\qquad (\beta_0<1),\\
||\widehat{\psi}||_{L^{\infty}(\Sigma_{\tau})}\leq&\: C(1+\tau)^{-1-n_{\beta}+\frac{1}{2}\max\{2n_{\beta}+1-\beta_1,0\}+\delta}\sqrt{\sum_{0\leq k\leq 2}\mathbf{D}_{\beta_0,0,\delta}[\snabla_{\s^2}^k\phi]}\qquad (\beta_0>1).
\end{align*}

When $\beta_0>1$, we integrate $X(r^{1-\beta_0+\delta}\widehat{\phi}^2)$ from $r=\infty$ and apply Proposition \ref{prop:negredecay} to obtain moreover:
\begin{equation*}
\begin{split}
\int_{\Sigma_{\tau}\cap \{r=r_0\}}r^{1-\beta_0+\delta}\widehat{\phi}^2\,d\sigma\leq &\: C\int_{\Sigma_{\tau}\cap \{r\geq r_0\}}  r^{-\beta_0+\delta}\widehat{\phi}^2+r^{2-\beta_0}(X\widehat{\phi})^2\,d\sigma dr\\
\leq &\: C(1+\tau)^{-3-\beta_0-2n_{\beta}+\max\{2n_{\beta}+1-\beta_1,0\}-2N+2\delta}\mathbf{D}_{\beta_0,0,\delta}[\phi].
\end{split}
\end{equation*}
Observe moreover that
\begin{equation*}
\int_{\Sigma_{\tau}\cap \{r=r_0\}} r^{\delta}(w_0^{-1}\widehat{\phi})^2\,d\sigma\leq \int_{\Sigma_{\tau}\cap \{r=r_0\}} r^{1-\beta_0+\delta}\widehat{\phi}^2\,d\sigma,
\end{equation*}
so
\begin{equation*}
||w_0^{-1}\widehat{\phi}||_{L^{\infty}(\Sigma_{\tau})}\leq C(1+\tau)^{-\frac{3}{2}-\frac{1}{2}\beta_0-n_{\beta}+\frac{1}{2}\max\{2n_{\beta}+1-\beta_1,0\}+\delta}\sqrt{\sum_{0\leq k\leq 2}\mathbf{D}_{\beta_0,0}[\snabla_{\s^2}^k\phi]}.
\end{equation*}
We similarly estimate when $\beta_0<1$, using Proposition \ref{prop:negredecay}:
\begin{equation*}
\begin{split}
\int_{\Sigma_{\tau}\cap \{r=r_0\}}r^{1-\beta_0+\delta}\widehat{\phi}_{\geq 1}^2\,d\sigma\leq &\: C\int_{\Sigma_{\tau}\cap \{r\geq r_0\}}  r^{-\beta_0+\delta}\widehat{\phi}_{\geq 1}^2+r^{2-\beta_0}(X\widehat{\phi}_{\geq 1})^2\,d\sigma dr\\
\leq &\: C(1+\tau)^{-3-\beta_0+2\delta}\mathbf{D}_{\beta_0,0}[\phi]
\end{split}
\end{equation*}
so
\begin{equation*}
||w_0^{-1}\widehat{\phi}_{\geq 1}||_{L^{\infty}(\Sigma_{\tau})}\leq C(1+\tau)^{-\frac{3}{2}-\frac{1}{2}\beta_0+\delta}\sqrt{\sum_{0\leq k\leq 2}\mathbf{D}_{\beta_0,0}[\snabla_{\s^2}^k\phi]}.
\end{equation*}

When $\beta_0<1$, we moreover integrate $X(w_0^{-1}\widehat{\phi}_0)$ from $r=r_0$ to $r=r_0+\tau$ to estimate:
\begin{equation*}
\begin{split}
\int_{\Sigma_{\tau}\cap \{r=r_0\}}w_0^{-2}\widehat{\phi}_0^2\,d\sigma\leq&\: C\left(\int_{\Sigma_{\tau}\cap \{r_0\leq r\leq r_0+\tau\}}X(w_0^{-1}\widehat{\phi}_0)\,d\sigma dr\right)^2+C\int_{\Sigma_{\tau}\cap \{r=r_0+\tau\}}w_0^{-2}\widehat{\phi}_0^2\,d\sigma\\
\leq&\: C\int_{r_0}^{r_0+\tau} r^{1-2\beta_0}\,dr\int_{\Sigma_{\tau}\cap \{r_0\leq r\leq r_0+\tau\}}r^{-1+2\beta_0}(X(w_0^{-1}\widehat{\phi}_0))^2\,d\sigma dr\\
&+C(r_0+\tau)^{-\beta_0}\int_{\Sigma_{\tau}\cap \{r=r_0+\tau\}}\widehat{\psi}_0^2\,d\sigma\\
\leq &\: C\tau^{2-2\beta_0}\int_{\Sigma_{\tau}\cap \{r_0\leq r\leq r_0+\tau\}}r^{-1+2\beta_0}(X(w_0^{-1}\widehat{\phi}_0))^2\,d\sigma dr+C(r_0+\tau)^{-\beta_0}\int_{\Sigma_{\tau}\cap \{r=r_0+\tau\}}\widehat{\psi}_0^2\,d\sigma.
\end{split}
\end{equation*}
By Proposition \ref{prop:negredecay} together with the above $L^{\infty}$ estimates for $\widehat{\psi}$ it then follows that:
\begin{equation*}
\int_{\Sigma_{\tau}\cap \{r=r_0\}}w_0^{-2}\widehat{\phi}_0^2\,d\sigma\leq C(1+\tau)^{\max\{1-3\beta_0,0\}-\beta_0-3+\delta}\mathbf{D}_{\beta_0,0}[\phi],
\end{equation*}
so
\begin{equation*}
||w_0^{-1}\widehat{\phi}_{0}||_{L^{\infty}(\Sigma_{\tau}\cap\{r\leq R+\tau\})}\leq C(1+\tau)^{\frac{1}{2}\max\{1-3\beta_0,0\}-\frac{1}{2}\beta_0-\frac{3}{2}+\frac{1}{2}\delta}\sqrt{\mathbf{D}_{\beta_0,0}[\phi]}.
\end{equation*}

To conclude that \eqref{eq:asympbl1} holds, we combine the $L^{\infty}$ estimates, replace$\phi$ with $T^{N-k}\phi$, modifying the corresponding energy decay estimates appropriately with an additional factor $(1+\tau)^{-2N}$ and observe that we can write:
\begin{align*}
-1+\frac{1}{2}\max\{1-3\beta_0,0\}=&\:-\frac{1}{2}-\frac{1}{2}\beta_0-\frac{1}{2}\min\{2\beta_0, 1-\beta_0+n_{\beta_0},\beta_1-\beta_0\}\quad (\beta_0<1)\\
-\frac{1}{2}-\frac{1}{2}\beta_0-\frac{1}{2}\min\{1-\beta_0+2n_{\beta},\beta_1-\beta_0\}-=&\: -\frac{1}{2}-\frac{1}{2}\beta_0-\frac{1}{2}\min\{2\beta_0, 1-\beta_0+n_{\beta_0},\beta_1-\beta_0\}\quad (\beta_0>1).
\end{align*}
\end{proof}

\begin{corollary}
\label{cor:equivdeftimeint}
The integrals
\begin{equation*}
\begin{split}
\phi^{(1+n)}(\tau,r,\theta,\varphi)=&(-1)^{1+n}\int_{\tau}^{\infty}\int_{\tau_1}^{\infty}\ldots \int_{\tau_{n}}^{\infty}\phi(s,r,\theta,\varphi)\,ds d\tau_{n}\ldots d\tau_1
\end{split}
\end{equation*}
are well-defined for $n\in \N_0$ with $0\leq n\leq n_{\beta}$ and satisfy:
\begin{align}
\label{eq:timeinid1}
\phi^{(1+n)}(\tau,r,\theta,\varphi) =&\:\frac{1}{n!}\int_{\tau}^{\infty} (\tau-s)^{n}\phi(s,r,\theta,\varphi)\,ds,\\
\label{eq:timeinid2}
\phi^{(1+n)}(\tau,r,\theta,\varphi) =&\: T^{-n}\phi(\tau,r,\theta,\varphi).
\end{align}
\end{corollary}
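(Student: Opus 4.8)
The plan is to prove the two identities \eqref{eq:timeinid1} and \eqref{eq:timeinid2} by first establishing that the iterated integrals defining $\phi^{(1+n)}$ converge absolutely, and then matching them to the time integrals $T^{-1-n}\phi$ constructed in Section \ref{sec:timinv} via a uniqueness argument. The convergence is where the decay estimates enter: by Proposition \ref{prop:mainlinftyest}, for initial data with the appropriate weighted norms finite, $\phi$ satisfies a pointwise bound on $\{r=r_0\}$ of the form $|\phi(\tau,r_0,\theta,\varphi)|\leq C(1+\tau)^{-1-\beta_0+\epsilon}$ for some $\epsilon>0$ (reading off the $N=k=0$ case, noting $w_0(r_0)(1+\tau+2r_0)^{-\frac12-\frac12\beta_0}(1+\tau)^{-\frac12-\frac12\beta_0}\sim (1+\tau)^{-1-\beta_0}$ and that the error term decays strictly faster). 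Since $n\leq n_{\beta}=\lfloor\frac{\beta_0+1}{2}\rfloor$, we have $n+1\leq \frac{\beta_0+1}{2}+1$, hence $(1+s)^{n}(1+s)^{-1-\beta_0+\epsilon}$ is integrable over $s\in[\tau,\infty)$ for $\epsilon$ small enough (the exponent $n-1-\beta_0+\epsilon < -1$ follows from $n<\frac{\beta_0+1}{2}+ \frac12$, which holds since $n\le n_\beta\le\frac{\beta_0+1}{2}$). This gives absolute convergence of the single integral in \eqref{eq:timeinid1}, and Fubini's theorem then identifies the iterated integral with the single one exactly as in the integration-by-parts computation already indicated in Section \ref{sec:timinv}.

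Next I would verify \eqref{eq:timeinid2}. The function $\tau\mapsto \phi^{(1+n)}(\tau,r,\theta,\varphi)$ is, by the absolute convergence just established together with continuity and differentiability of $\phi$ in $\tau$ (which follows from the regularity statements in Section \ref{sec:prelimwave}), differentiable in $\tau$ with $T\phi^{(1+n)}=\phi^{(n)}$, and $\phi^{(1+n)}(\tau,\cdot)\to 0$ as $\tau\to\infty$ (again by the decay bound). Since $\square_{g_M}$ commutes with $T$, $\phi^{(1+n)}$ is a solution of \eqref{eq:waveeq}. On the other hand, $T^{-1-n}\phi$ as constructed in Proposition \ref{prop:timeint} and Corollary \ref{cor:regmultipletimeinv} is the unique solution of \eqref{eq:waveeq} with $T^{1+n}(T^{-1-n}\phi)=\phi$ and with the decay/regularity properties forcing $T^{-j}\phi(\tau,\cdot)\to 0$ as $\tau\to\infty$ for $1\le j\le 1+n$. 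The plan is to show $\phi^{(1+n)}$ and $T^{-1-n}\phi$ have the same initial data on $\Sigma_0$: both $\phi^{(1+n)}$ and $T^{-1-n}\phi$ satisfy $(T^{1+n}u)|_{\Sigma_0}=\phi|_{\Sigma_0}$ by construction, so it suffices to match the lower-order traces $(T^{j}u)|_{\Sigma_0}$ for $0\le j\le n$, i.e. $(\phi^{(1+n-j)})|_{\Sigma_0}$ versus $(T^{-1-n+j}\phi)|_{\Sigma_0}$; an induction on $j$ reduces everything to the base case $j=n$, namely $\phi^{(1)}|_{\Sigma_0}=T^{-1}\phi|_{\Sigma_0}$. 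For the base case, observe that $u:=\phi^{(1)}|_{\Sigma_0}$ satisfies $\mathcal{L}u=F[\phi|_{\Sigma_0}]$ (differentiate the defining integral in $\tau$, use \eqref{eq:waveeq1}, and let $\tau=0$), which is exactly the elliptic equation used to define $T^{-1}\phi|_{\Sigma_0}$, and $u\to 0$ as $r\to\infty$; uniqueness of the solution with this decay in the relevant weighted space (Corollary \ref{cor:Linv0}) identifies $u=T^{-1}\phi|_{\Sigma_0}$. By uniqueness of solutions to \eqref{eq:waveeq} with given initial data, $\phi^{(1+n)}=T^{-1-n}\phi$.

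The main obstacle I anticipate is the bookkeeping needed to ensure that the decay rates from Proposition \ref{prop:mainlinftyest} are strong enough to justify differentiating under the integral sign and interchanging the order of integration at every level of the iterated integral, uniformly for $\tau$ in compact sets and locally uniformly in $r$. Concretely, one needs not just pointwise decay of $\phi$ but also of $T\phi,\dots,T^{n}\phi$ (to differentiate $\phi^{(1+n)}$ repeatedly in $\tau$) and some mild control as $r\to\infty$ to handle the trace at $\Sigma_0$; all of these follow from the $N$-dependent versions of the estimates in Section \ref{sec:latetimeasymp}, applied with $N$ chosen large enough that the weighted initial data norms $\mathbf{D}_{\beta_0,N,\delta}$ are finite for the given (say, compactly supported or rapidly decaying) data, but assembling the correct quantifiers cleanly is the delicate part. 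Everything else is either a direct consequence of Fubini and dominated convergence or an appeal to the uniqueness statements already proved. I would also record that the boundary behaviour at $r=2M$ needed for the elliptic uniqueness (the condition $u|_{r=2M}\equiv0$ entering $C^\infty_{c,*}$ in Section \ref{sec:elliptictheory}) is automatic here, since the traces of the time integrals inherit finiteness of the non-degenerate energies, hence lie in the right function spaces without an extra vanishing condition being imposed by hand.
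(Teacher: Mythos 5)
Your treatment of well-definedness and of \eqref{eq:timeinid1} matches the paper's (the paper integrates by parts where you invoke Fubini; both rest on the pointwise decay of Proposition \ref{prop:mainlinftyest} and on $n\leq n_{\beta}<\beta_0$ — note that your stated justification ``$n<\frac{\beta_0+1}{2}+\frac12$'' is not the inequality you actually need; what is needed, and what is true, is $n_\beta<\beta_0$, which holds since $n_\beta=0$ when $\beta_0<1$ and $n_\beta\leq\frac{\beta_0+1}{2}<\beta_0$ when $\beta_0>1$). For \eqref{eq:timeinid2}, however, you take a genuinely different and substantially heavier route than the paper. The paper never shows that $\phi^{(1+n)}$ solves \eqref{eq:waveeq} and never touches the elliptic theory again: it simply observes that $T^{1+n_{\beta}}$ annihilates both $\phi^{(1+n_{\beta})}$'s and $T^{-1-n_{\beta}}\phi$'s difference from each other, so for each fixed $(r,\theta,\varphi)$ the difference is a polynomial in $\tau$ of degree $\leq n_{\beta}$; since both quantities tend to $0$ as $\tau\to\infty$ (the constructed time integrals decay by Proposition \ref{prop:mainlinftyest} with $k\geq 1$), the polynomial vanishes. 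This is a pure calculus argument along the $T$-flow lines and buys you \eqref{eq:timeinid2} with no further PDE input; your route, by contrast, requires identifying Cauchy data and then invoking uniqueness for the Cauchy problem.

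Two steps in your route are asserted more strongly than your citations support. First, the identification $\phi^{(1)}|_{\Sigma_0}=T^{-1}\phi|_{\Sigma_0}$ via Corollary \ref{cor:Linv0}: the uniqueness there is uniqueness within the class $r^{q/2}u\in\mathbf{H}_{N+1}$ with $-\beta_0<q<\beta_0$, so pointwise vanishing of $u$ as $r\to\infty$ is not enough; you must show that $\phi^{(1)}|_{\Sigma_0}$ together with its $rX$- and angular derivatives lies in the relevant weighted space (for $\beta_0<1$ this is a narrow window of weights), which requires integrating the $r$-weighted decay estimates of Proposition \ref{prop:mainlinftyest} in $\tau$ and supplementing them with derivative control — precisely the bookkeeping you defer. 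Second, the claim that $\phi^{(1+n)}$ solves \eqref{eq:waveeq} ``since $\square_{g_M}$ commutes with $T$'' hides the interchange of the spatial derivatives $X$, $X^2$, $\slashed{\Delta}_{\s^2}$ with the infinite $\tau$-integral; Proposition \ref{prop:mainlinftyest} gives decay only for $T$-derivatives and time integrals (and, via commutation, angular derivatives), so uniform integrable-in-$\tau$ bounds on $X$-derivatives would have to be extracted separately (e.g.\ from the $r$-weighted energy decay plus Sobolev embedding). Both points are fixable, so your strategy can be completed, but as written these are real gaps, and the paper's polynomial-plus-decay argument avoids them entirely.
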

\begin{proof}
The integrals $\phi^{(1+n)}(\tau,r,\theta,\varphi)$ are well-defined for $0\leq n\leq n_{\beta}$ by the decay-in-time properties in Proposition \ref{prop:mainlinftyest}. To arrive at \eqref{eq:timeinid1}, we integrate by parts multiple times, making use of the decay-in-time of $\phi$ from Proposition \ref{prop:mainlinftyest}.

We moreover have that $T^k \phi^{(1+n_{\beta})}(\tau,r,\theta,\varphi)= \phi^{(1+n_{\beta}-k)}(\tau,r,\theta,\varphi)$ for $0\leq k\leq 1+n_{\beta}$. In particular, 
\begin{equation*}
T^{1+n_{\beta}}\phi^{(1+n_{\beta})}=\phi. 
\end{equation*}
By Proposition \ref{prop:timeint} and Corollary \ref{cor:regmultipletimeinv}, we also have that
\begin{equation*}
T^{1+n_{\beta}}(T^{-1-n_{\beta}}\phi)=\phi. 
\end{equation*}
Hence, we must have that there exists functions  $P_j(r,\theta,\varphi)$, with $0\leq j\leq n_{\beta}$ such that
\begin{equation*}
T^{-1-n_{\beta}}\phi=\phi^{(1+n_{\beta})}+\sum_{j=0}^{n_{\beta}} \tau^j P_j(r,\theta,\varphi).
\end{equation*}

By construction,  $\phi^{(1+n_{\beta})}(\tau,r,\theta,\varphi) \to 0$ as $\tau \to \infty$. From Proposition \ref{prop:mainlinftyest}, it follows moreover that $T^{-1-n_{\beta}}\phi(\tau,r,\theta,\varphi)\to 0$ as $\tau \to \infty$. Hence, $P_j\equiv 0$ for all $0\leq j\leq n_{\beta}$ and \eqref{eq:timeinid2} holds.
\end{proof}

\section{Late-time asymptotics for $\phi_{\geq \ell}$}
\label{sec:latetimeasymell}
The results in \S \ref{sec:timinv} and \S \ref{sec:latetimeasymp} (for $\beta_0>1$, or equivalently, $\alpha>0$) can immediately be generalized to obtain late-time asymptotics and decay estimates in time for $\phi_{\geq \ell}$. This follows from the identities:
\begin{align*}
(-\slashed{\Delta}_{\s^2}+V_{\alpha})\phi_{\ell}=&\:(\ell(\ell+1)+V_{\alpha})\phi_{\ell}=V_{\alpha+\ell(\ell+1)}\phi_{\ell},\quad \textnormal{with}\\
V_{\alpha+\ell(\ell+1)}=&\:V_{\alpha}+\ell(\ell+1)r^{-2},\\
\beta_{\ell ;\alpha}:=&\:\sqrt{1+4\alpha+4\ell (\ell+1)}=\beta_{0;\alpha+\ell(\ell+1)},\\
\beta_{\ell+1; \alpha}\leq &\: \beta_{1;\alpha+\ell(\ell+1)}.
\end{align*}
The above identities and inequality allow us to replace $w_0$ with $w_{\ell}$, $\beta_0$ with $\beta_{\ell}$ and let $\beta_{\ell+1}$ take on the role of $\beta_1$ in the main estimates of \S \ref{sec:timinv} and \S \ref{sec:latetimeasymp}. In particular, this means we need to define $n_{\beta_{\ell}}=\lfloor \frac{\beta_{\ell}+1}{2}\rfloor$ and:
\begin{equation*}
\mathfrak{I}_{\beta_{\ell}}[\phi]:=\lim_{r \to \infty} r^{\frac{1}{2}+(\frac{1}{2}\beta_{\ell}-n_{\beta_{\ell}})}X(rT^{-1-n_{\beta_{\ell}}}\phi_{\ell})=-\frac{(1-\beta_{\ell})(3-\beta_{\ell})\ldots (2n_{\beta_{\ell}}+1-\beta_{\ell})}{2n_{\beta_{\ell}}!\beta_{\ell}(1-\beta_{\ell})\ldots (n_{\beta_{\ell}}-\beta_{\ell})}\int_{2M}^{\infty}w_{\ell}F_{\ell}[\phi_{\ell}]\,dr'
\end{equation*}
and, furthermore,
\begin{align*}
\widehat{\phi}_{\geq \ell}=&\:\phi_{\geq \ell}-\Phi^{(\beta_{\ell})},\\
\Phi^{(\beta_{\ell})}(\tau,r,\theta,\varphi)=&\: \Phi_{\ell}(\theta,\varphi)w_{\ell} (r)(\tau+1+2r)^{-\frac{1}{2}-\frac{\beta_{\ell}}{2}}(\tau+1)^{-\frac{1}{2}-\frac{\beta_{\ell}}{2}},\\
\Phi_{\ell}(\theta,\varphi)=&\: \sum_{|m|\leq \ell}\Phi_{\ell m} Y_{\ell m}(\theta,\varphi),\\
 \Phi_{\ell m}\in &\:\R.
\end{align*}
Without loss of generality, we can fix $-\ell\leq m\leq \ell$ and assume $\Phi_{\ell m}\in \R\setminus\{0\}$. Furthermore, we choose
\begin{equation*}
\Phi_{\ell}=(-1)^{n_{\beta_{\ell}}}n_{\beta_{\ell}}!\frac{\Gamma\left(\frac{\beta_{\ell}+1}{2}\right)\Gamma\left(\frac{\beta_{\ell}-1-2n_{\beta_{\ell}}}{2}\right)\cos \left(\frac{\beta_{\ell}-2n_{\beta_{\ell}}}{2\pi}\right)}{2^{n_{\beta_{\ell}}-\beta_{\ell}}\pi \Gamma(\beta-n_{\beta_{\ell}})} \mathfrak{I}_{\beta_{\ell}}[\phi].
\end{equation*}

We then conclude by repeating the proof of Proposition \ref{prop:mainlinftyest} (in the case $\beta_0>1$) with the above considerations in mind:
\begin{proposition}
\label{prop:mainlinftyestell}
Let $\beta_{\ell}\neq 2l+1$ for all $l\in \N_0$, and let $\delta>0$ be arbitrarily small. Then there exists a constant $C=C(M,h,V_{\alpha},N,\delta)>0$ such that:
\begin{align}
\label{eq:asympbl1genl}
&||T^N\phi- \Phi_{\ell} w_{\ell}T^N((1+\tau+2r)^{-\frac{1}{2}-\frac{1}{2}\beta_{\ell}}(1+\tau)^{-\frac{1}{2}-\frac{1}{2}\beta_{\ell}})||_{L^{\infty}(S^2_{\tau,r})}\\ \nonumber
\leq&\: Cr^{-\frac{1}{2}+\frac{1}{2}\beta_{\ell}}(1+\tau+2r)^{-\frac{1}{2}-\frac{1}{2}\beta_{\ell}}(1+\tau)^{-\frac{1}{2}-\frac{1}{2}\beta_{\ell}-\frac{1}{2}\min\{1-\beta_{\ell}+2n_{\beta_{\ell}},\beta_{\ell+1}-\beta_{\ell}\}-N+\delta}\sqrt{\sum_{0\leq k\leq 2}\mathbf{D}_{\beta_{\ell},N,\delta}[\snabla_{\s^2}^k\phi_{\geq \ell}]},
\end{align}
with $\mathbf{D}_{\beta_{\ell},N,\delta}$ defined in \S \ref{sec:precise}.
\end{proposition}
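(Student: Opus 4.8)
\textbf{Proof plan for Proposition \ref{prop:mainlinftyestell}.}

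The plan is to reduce the higher spherical harmonic case $\phi_{\geq \ell}$ entirely to the spherically symmetric case with $\beta_0>1$ already treated in Proposition \ref{prop:mainlinftyest}, by exploiting the algebraic identities recorded at the start of \S \ref{sec:latetimeasymell}. First I would observe that the operator appearing in \eqref{eq:waveeq2a} acting on $\phi_{\ell}$ (or more generally $\phi_{\geq \ell}$) is, after the rearrangement $r^{-2}\slashed{\Delta}_{\s^2}-V_{\alpha}=r^{-2}(\slashed{\Delta}_{\s^2}+\ell(\ell+1))-(V_{\alpha}+\ell(\ell+1)r^{-2})$, of exactly the same form as the operator governing $\phi_{\geq 0}$ with potential $V_{\alpha+\ell(\ell+1)}$, since the angular operator $-(\slashed{\Delta}_{\s^2}+\ell(\ell+1))$ has non-negative eigenvalues on functions supported on spherical harmonics of degree $\geq \ell$, just like $-\slashed{\Delta}_{\s^2}$ on unrestricted functions. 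This is precisely the structural hypothesis used throughout \S\S \ref{sec:boundedness}--\ref{sec:latetimeasymp}: the sign properties that make the energy boundedness estimate of Proposition \ref{prop:ebound}, the integrated energy estimates of Corollary \ref{cor:nondegiled}, the $r$-weighted hierarchy of Proposition \ref{prop:horp}, the decay estimates of Propositions \ref{prop:edecaytimeder} and \ref{prop:edecaytimeder2}, and the elliptic theory of \S \ref{sec:timinv} work depend only on $\alpha+\ell(\ell+1)>0$ (and $\ell(\ell+1)+\alpha>-\tfrac14$, which holds), never on $\ell=0$. Consequently every result of \S \ref{sec:timinv} and \S \ref{sec:latetimeasymp} in the regime $\beta_0>1$ holds verbatim with $\beta_0$ replaced by $\beta_{\ell}=\sqrt{1+4\alpha+4\ell(\ell+1)}$, $w_0$ replaced by $w_{\ell}$, $n_{\beta_0}$ replaced by $n_{\beta_{\ell}}=\lfloor\tfrac{\beta_{\ell}+1}{2}\rfloor$, $\mathfrak{I}_{\beta_0}[\phi]$ replaced by $\mathfrak{I}_{\beta_{\ell}}[\phi]$, and with $\beta_1$ replaced by $\beta_{\ell+1;\alpha}\leq\beta_{1;\alpha+\ell(\ell+1)}$ (the inequality, rather than equality, is harmless because all estimates use only that $\beta_{\ell+1}>\beta_{\ell}$ and lower bounds on the relevant exponents).

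Concretely, I would carry this out in the following steps. (1) State that, by the identities in \S \ref{sec:latetimeasymell}, the renormalized function $\check{\phi}_{\ell}=w_{\ell}^{-1}\phi_{\ell}$ satisfies the same equations \eqref{eq:checkpsi1}--\eqref{eq:checkpsi2} with $w_{\ell}$ in place of $w_0$, and the elliptic operator $\mathcal{L}$ restricted to functions supported on degree $\geq\ell$ obeys Corollary \ref{cor:Linv1}-type estimates with $\beta_{\ell}$ and $\beta_{\ell+1}$ in the roles of $\beta_0$ and $\beta_1$; hence the time integrals $T^{-1-n_{\beta_{\ell}}}\phi_{\geq\ell}$ exist with the asymptotics of Proposition \ref{prop:preciselargertimeint}(ii), yielding the definition of $\mathfrak{I}_{\beta_{\ell}}[\phi]$ and the identity in part \ref{item:mainthm4} of Theorem \ref{thm:precise}. (2) Fix $\Phi_{\ell}$ by the stated formula, which is the exact analogue of \eqref{eq:choicePsi0}, so that Lemma \ref{lm:boxPsi} and Corollary \ref{cor:estboxPsi} apply to $\Phi^{(\beta_{\ell})}$ and the difference $\widehat{\phi}_{\geq\ell}=\phi_{\geq\ell}-\Phi^{(\beta_{\ell})}$ has time integrals with improved $r$-decay along $\Sigma_0$. (3) Run the energy decay machinery: Proposition \ref{prop:edecaytimebg1} (parts \ref{item:edecayag02}--\ref{item:edecayag03}, which are the higher-$\ell$-flavoured statements there, with $\beta_1,\beta_0$ read as $\beta_{\ell+1},\beta_{\ell}$), then Proposition \ref{prop:weightedecayag0} and Proposition \ref{prop:negredecay}, then Proposition \ref{prop:initialdatanorms} to re-express the resulting norms in terms of $\mathbf{D}_{\beta_{\ell},N,\delta}[\phi_{\geq\ell}]$ as defined in \S \ref{sec:precise}. (4) Finally repeat the $L^\infty$ argument of Proposition \ref{prop:mainlinftyest} in the $\beta_0>1$ case verbatim — integrating $X\widehat\phi$ and $X(r^{1-\beta_{\ell}+\delta}\widehat\phi^2)$ in $r$ from infinity, absorbing the lower-order term via the Hardy inequality \eqref{eq:hardy}, and applying the Sobolev inequality \eqref{eq:sphere5} on $\s^2$ after commuting with up to two angular derivatives $\snabla_{\s^2}$ — to obtain the stated pointwise bound with the exponent $-\tfrac12-\tfrac12\beta_{\ell}-\tfrac12\min\{1-\beta_{\ell}+2n_{\beta_{\ell}},\beta_{\ell+1}-\beta_{\ell}\}-N+\delta$.

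I expect the only genuinely non-routine point to be bookkeeping of which regime one is in: unlike the $\beta_0$ case, $\beta_{\ell}>1$ automatically for all $\ell\geq 1$ (since $\beta_1^2=5+4\alpha>1$ when $\alpha>-1$, which is implied by $\alpha>-\tfrac14$), so one never needs the $\beta_0<1$ branch of \S \ref{sec:latetimeasymp}, which simplifies matters; but one does need to keep track of whether $\beta_{\ell+1}>2n_{\beta_{\ell}}$ or $\beta_{\ell+1}\leq 2n_{\beta_{\ell}}$, exactly as in the three-case split in the proof of Proposition \ref{prop:negredecay}, and to check that the data norm $\mathbf{D}_{\beta_{\ell},N,\delta}[\phi_{\geq\ell}]$ stated in \S \ref{sec:precise} contains precisely the weighted energies produced by the chain of Propositions above. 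The other subtlety is that $\beta_{\ell+1;\alpha}$ is only $\leq\beta_{1;\alpha+\ell(\ell+1)}$, not equal; I would note explicitly that every estimate in \S\S \ref{sec:timinv}--\ref{sec:latetimeasymp} that invokes $\beta_1$ uses it either through the strict inequality $\beta_1>\beta_0$ or through a minimum of the form $\min\{\cdots,\beta_1-\beta_0\}$ appearing as a decay exponent, so replacing $\beta_1$ by the possibly-smaller $\beta_{\ell+1}$ only weakens the conclusion in the direction already claimed in the statement. With these observations in place the proof is a line-by-line transcription and no new analytic input is required.
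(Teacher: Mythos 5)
Your proposal is correct and follows essentially the same route as the paper: \S\ref{sec:latetimeasymell} proves the proposition precisely by invoking the identities $(-\slashed{\Delta}_{\s^2}+V_{\alpha})\phi_{\ell}=V_{\alpha+\ell(\ell+1)}\phi_{\ell}$, $\beta_{\ell;\alpha}=\beta_{0;\alpha+\ell(\ell+1)}$, $\beta_{\ell+1;\alpha}\leq\beta_{1;\alpha+\ell(\ell+1)}$, fixing $\Phi_{\ell}$ by the analogue of \eqref{eq:choicePsi0}, and then repeating the proof of Proposition \ref{prop:mainlinftyest} in the $\beta_0>1$ regime verbatim, exactly as you outline (including your remark that the inequality for $\beta_{\ell+1}$ only weakens the exponent in the claimed direction). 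The only blemish is the harmless arithmetic slip $\beta_1^2=5+4\alpha$ (it is $9+4\alpha$), which does not affect the conclusion that $\beta_{\ell}>1$ for $\ell\geq 1$.
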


Furthermore, we can repeat the proof of Corollary \ref{cor:equivdeftimeint} to obtain:
\begin{corollary}
\label{cor:equivdeftimeintell}
The integrals
\begin{equation*}
\begin{split}
\phi_{\geq \ell}^{(1+n)}(\tau,r,\theta,\varphi)=&(-1)^{1+n}\int_{\tau}^{\infty}\int_{\tau_1}^{\infty}\ldots \int_{\tau_{n}}^{\infty}\phi_{\geq \ell}(s,r,\theta,\varphi)\,ds d\tau_{n}\ldots d\tau_1
\end{split}
\end{equation*}
are well-defined for $n\in \N_0$ with $0\leq n\leq n_{\beta_{\ell}}$ and satisfy:
\begin{align}
\label{eq:timeinid1ell}
\phi_{\geq \ell}^{(1+n)}(\tau,r,\theta,\varphi) =&\:\frac{1}{n!}\int_{\tau}^{\infty} (\tau-s)^{n}\phi_{\geq \ell}(s,r,\theta,\varphi)\,ds,\\
\label{eq:timeinid2ell}
\phi_{\geq \ell}^{(1+n)}(\tau,r,\theta,\varphi) =&\: T^{-n}\phi_{\geq \ell}(\tau,r,\theta,\varphi).
\end{align}
\end{corollary}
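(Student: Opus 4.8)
\textbf{Proof proposal for Corollary \ref{cor:equivdeftimeintell}.}

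The plan is to transfer the proof of Corollary \ref{cor:equivdeftimeint} verbatim, using the dictionary established at the start of \S \ref{sec:latetimeasymell}: replace $\beta_0$ by $\beta_\ell$, $n_\beta$ by $n_{\beta_\ell}$, $w_0$ by $w_\ell$, $\phi$ by $\phi_{\geq \ell}$, and let $\beta_{\ell+1}$ play the role of $\beta_1$. First I would observe that the $(1+n)$-fold iterated integrals defining $\phi_{\geq\ell}^{(1+n)}$ converge absolutely for $0\le n\le n_{\beta_\ell}$: this is immediate from Proposition \ref{prop:mainlinftyestell}, which gives $|\phi_{\geq\ell}(\tau,r,\theta,\varphi)|\lesssim (1+\tau)^{-1-\beta_\ell+\delta}$ pointwise (taking $N=0$ there, and using that $w_\ell(r)(1+\tau+2r)^{-1/2-\beta_\ell/2}(1+\tau)^{-1/2-\beta_\ell/2}$ decays like $(1+\tau)^{-1-\beta_\ell}$ for fixed $r$), and $1+\beta_\ell - n_{\beta_\ell} > 1$ by definition of $n_{\beta_\ell}=\lfloor\frac{\beta_\ell+1}{2}\rfloor$, so the innermost integral decays fast enough to be integrated $n_{\beta_\ell}$ further times. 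Then \eqref{eq:timeinid1ell} follows by integrating by parts $n$ times in the multiple integral, with all boundary terms at $s=\infty$ vanishing by the same pointwise decay.

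For \eqref{eq:timeinid2ell} I would argue as in Corollary \ref{cor:equivdeftimeint}. Since $\square_{g_M}-V_\alpha$ commutes with $T=\partial_\tau$ and with $\pi_\ell$, both $\phi_{\geq\ell}^{(1+n_{\beta_\ell})}$ (differentiating under the integral sign, using the pointwise decay of derivatives from Proposition \ref{prop:mainlinftyestell} with $N\ge 1$) and $T^{-1-n_{\beta_\ell}}\phi_{\geq\ell}$ (constructed in Proposition \ref{prop:timeint} and Corollary \ref{cor:regmultipletimeinv}, applied with $\beta_{\ell}$ in place of $\beta_0$, noting that the hypothesis $\sum_{0\le k\le 2}\mathbf{D}_{\beta_\ell,N,\delta}[\snabla_{\s^2}^k\phi_{\geq\ell}]<\infty$ supplies the required weighted data) are solutions of \eqref{eq:waveeq} supported on spherical harmonics of degree $\geq\ell$ whose $(1+n_{\beta_\ell})$-th $T$-derivative equals $\phi_{\geq\ell}$. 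Hence their difference has vanishing $(1+n_{\beta_\ell})$-th $T$-derivative and is therefore a polynomial in $\tau$ of degree $\le n_{\beta_\ell}$ with coefficients $P_j(r,\theta,\varphi)$. By construction $\phi_{\geq\ell}^{(1+n_{\beta_\ell})}(\tau,\cdot)\to 0$ as $\tau\to\infty$, and by Proposition \ref{prop:mainlinftyestell} applied to $T^{-1-n_{\beta_\ell}}\phi_{\geq\ell}$ (i.e. reading off the decay of the solution whose data are the time-integral data) one has $T^{-1-n_{\beta_\ell}}\phi_{\geq\ell}(\tau,\cdot)\to 0$ as well; comparing, every $P_j\equiv 0$, giving \eqref{eq:timeinid2ell} for $n=n_{\beta_\ell}$, and then for general $0\le n\le n_{\beta_\ell}$ by applying $T^{n_{\beta_\ell}-n}$.

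The only genuinely non-routine point is verifying that the dictionary of \S \ref{sec:latetimeasymell} really licenses invoking Propositions \ref{prop:timeint}, \ref{prop:preciselargertimeint} and \ref{prop:mainlinftyestell} with $\beta_\ell$ replacing $\beta_0$: concretely, one needs $-(\slashed{\Delta}_{\s^2}) - \ell(\ell+1)\ge 0$ on functions supported on degrees $\ge\ell$ (so the elliptic theory of \S \ref{sec:elliptictheory} and the energy hierarchy of \S\S \ref{sec:rp}--\ref{sec:decaytimeder} go through with $V_{\alpha+\ell(\ell+1)}$), together with the inequality $\beta_{\ell+1;\alpha}\le\beta_{1;\alpha+\ell(\ell+1)}$ recorded there, which is what makes the $\phi_{\geq\ell+1}$-contributions decay at least as fast as the $\phi_{\geq 1}$-contributions do in the $\ell=0$ analysis. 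Granting this — which is exactly the content of the displayed identities at the opening of \S \ref{sec:latetimeasymell} — the proof of Corollary \ref{cor:equivdeftimeint} applies word for word, and I expect no further obstacle.
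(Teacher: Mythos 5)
Your proposal is correct and follows essentially the same route as the paper, whose proof of Corollary \ref{cor:equivdeftimeintell} consists precisely of repeating the proof of Corollary \ref{cor:equivdeftimeint} under the dictionary $\beta_0\mapsto\beta_\ell$, $n_\beta\mapsto n_{\beta_\ell}$, $w_0\mapsto w_\ell$ set out at the start of \S\ref{sec:latetimeasymell}: convergence of the iterated integrals from the pointwise decay of Proposition \ref{prop:mainlinftyestell}, identity \eqref{eq:timeinid1ell} by integration by parts, and \eqref{eq:timeinid2ell} by noting that the difference with $T^{-1-n_{\beta_\ell}}\phi_{\geq\ell}$ is a $\tau$-polynomial that vanishes since both functions tend to zero. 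Your only extra care (checking $n_{\beta_\ell}<\beta_\ell$, which uses $\beta_\ell>1$, automatic for $\ell\geq 1$ and in the regime treated in \S\ref{sec:latetimeasymell}) is consistent with the paper's argument.
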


\appendix
\section{Properties of the weight functions $w_{\ell}$}
\label{app:weightfunction}
We define the operator $\mathcal{L}_{w_{\ell}}$ as follows:
\begin{equation*}
\mathcal{L}_{w_{\ell}}u=\frac{d}{dr}\left(Dr^2 \frac{du}{dr}\right)-[r^2V_{\alpha}+\ell(\ell+1)]u,
\end{equation*}
with $V_{\alpha}$ satisfying Assumption \eqref{assm:A}.

Let $R>2M$ be arbitrarily large and let $H_{w_{\ell},p}^{N}$, $H_{w_{\ell},p;1}^{N}$  denote the completions of the function spaces:
\begin{equation*}
\{f\in C_0^{\infty}([2M,\infty)\,|\,f(R)=0\}
\end{equation*}
under the norm
\begin{align}
\label{eq:wenest}
||u||^2_{w_{\ell},N,p}:=&\:\sum_{k=0}^N\int_{2M}^{\infty}r^p\left(\left(r\frac{d}{dr}\right)^ku\right)^2\,dr,\\
\label{eq:wenest1}
||u||^2_{w_{\ell},N,p;1}:=&\:\sum_{k=0}^N\int_{2M}^{\infty}r^p\left(\left(r\frac{d}{dr}\right)^ku\right)^2\,dr+\int_{2M}^{\infty} Dr^p\left(\left(r\frac{d}{dr}\right)^{N+1}u\right)^2\,dr.
\end{align}
respectively, with $p\neq -1$ and
\begin{align*}
-\beta_{\ell}< &\:p<\beta_{\ell}\quad \textnormal{if $\beta_{\ell}\leq 1$},\\
-\beta_{\ell}<&\:p<2-\beta_{\ell}\quad \textnormal{if $\beta_{\ell}>1$}.
\end{align*}

\begin{proposition}
\label{prop:keyestw}
Let $u\in C_0^{\infty}([2M,R)\cup (R,\infty))$, then, for sufficiently large $R>2M$, there exists a $C=C(\alpha,p,N,R)>0$, such that
\begin{equation*}
||u||_{w_{\ell},N,p;1}\leq C||\mathcal{L}_{w_{\ell}}u||_{w_{\ell},N,p}.
\end{equation*}
\end{proposition}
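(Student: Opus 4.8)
The plan is to reduce the statement to an ODE version of the elliptic estimates already established for the partial-differential operator $\mathcal{L}$, namely Proposition \ref{prop:mainesttimeinv}, Proposition \ref{prop:keyhopropelliptic} and Corollary \ref{cor:mainellipticest}, by noting that $\mathcal{L}_{w_\ell}$ is precisely the radial part of $\mathcal{L}$ restricted to a single spherical harmonic, with $\ell(\ell+1)$ playing the role of $-\slashed{\Delta}_{\s^2}$. More precisely, for a function $u$ on $[2M,\infty)$, setting $U=u\cdot Y_{\ell m}$ gives $\mathcal{L}U=(\mathcal{L}_{w_\ell}u)\cdot Y_{\ell m}$, so the weighted $L^2$ estimates on $\Sigma$ in $\S\ref{sec:elliptictheory}$, read off on a single mode, already produce the base case $N=0$ of the desired inequality: starting from $\eqref{eq:mainellipticMpos}$ with $q=p$ (which requires $-\beta_0<p<\beta_0$ — and since we work with a fixed mode $\ell$, $-\beta_\ell<p<\beta_\ell$), after the substitution $u=r^{-p/2}v$ and noting that the angular eigenvalue $\ell(\ell+1)\geq 0$ only improves the absorption, one obtains control of $\int r^p u^2 + \int r^{p+2}(u')^2$ by $\int r^p(\mathcal{L}_{w_\ell}u)^2$. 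The factor $D$ in front of the top-order term in $\eqref{eq:wenest1}$ is exactly the degenerate weight appearing in $\eqref{eq:mainellipticho}$ (the $D^2r^2(X(rXu))^2$ term), so the $H^1$-with-$D$-weight bound on the $(r\frac{d}{dr})^{N+1}$-derivative comes from the same mechanism: the degenerate redshift-type identity $\eqref{eq:horedshift}$.

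The inductive step in $N$ is handled exactly as in the proof following Corollary \ref{cor:mainellipticest} and in the proof of Proposition \ref{prop:keyhopropelliptic}: one commutes $\mathcal{L}_{w_\ell}$ with $(r\frac{d}{dr})^k$ using the ODE analogue of Lemma \ref{lm:commL}, i.e. $\eqref{eq:commL}$ restricted to a fixed mode, which reads
\begin{equation*}
\left[\mathcal{L}_{w_\ell}+kD'r^2\tfrac{d}{dr}\right]\left(r\tfrac{d}{dr}\right)^k u=\left(r\tfrac{d}{dr}\right)^k\mathcal{L}_{w_\ell}u+k\sum_{n=0}^{k}O_\infty(r^{-1})\left(r\tfrac{d}{dr}\right)^n u.
\end{equation*}
Then one applies the $N=0$ estimate (in the modified form $\eqref{eq:mainellipticMposho}$ accounting for the extra first-order term $kD'r^2\frac{d}{dr}$) to $(r\frac{d}{dr})^k u$, absorbs the lower-order error terms $O_\infty(r^{-1})(r\frac{d}{dr})^n u$ into the left-hand side using that each such term gains a power of $r^{-1}$ and hence can be controlled by the already-estimated lower-order norms after taking $R$ large, and sums over $0\leq k\leq N$. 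The restriction $-\beta_\ell<p<\beta_\ell$ for $\beta_\ell\leq 1$, respectively $-\beta_\ell<p<2-\beta_\ell$ for $\beta_\ell>1$, is exactly the range in which the weighted Hardy inequality $\eqref{eq:hardy}$ and the sign of the relevant boundary/bulk terms in the analogues of $\eqref{eq:keyeqellipticcheck}$ and $\eqref{eq:aneg1}$ work out — this is inherited verbatim from the conditions in the definitions $\eqref{eq:wenest}$–$\eqref{eq:wenest1}$ and the hypotheses of Proposition \ref{prop:mainesttimeinv}. The boundary condition $u(R)=0$ (rather than support away from $r=2M$) is used exactly where in the PDE argument one splits $[2M,R]$ and $[R,\infty)$: on $[2M,R]$ one integrates the local identity and the vanishing at $r=R$ kills the outer boundary term, while the event-horizon ($r=2M$) boundary term has a favorable sign for the weights appearing in $\mathcal{L}_{w_\ell}$ since $D(2M)=0$ makes the degenerate-weight boundary terms vanish.

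The main obstacle I expect is purely bookkeeping rather than conceptual: one has to verify that the substitution $u=r^{-p/2}v$ turns $\mathcal{L}_{w_\ell}$ into an operator of the form $r^{p/2}\mathcal{L}_{w_\ell}(r^{-p/2}\,\cdot\,)$ whose lower-order coefficients are genuinely $O_\infty(r^{-2})$-perturbations of the model operator $\frac{d}{dr}(r^2\frac{d}{dr})-(\alpha+\ell(\ell+1))$, and that the critical coefficient in front of the Hardy-type term is $\frac{1}{4}(1+4\alpha+4\ell(\ell+1))=\frac14\beta_\ell^2$, which is exactly what makes the admissible range of $p$ the stated one and no larger; in the $\beta_\ell>1$ regime the asymmetry of the $p$-range (namely $2-\beta_\ell$ rather than $\beta_\ell$ as the upper bound) must be traced to the second branch $r^{-\frac12-\frac12\beta_\ell}$ of the ODE solution interacting with the $r^{p+2}(u')^2$ weight. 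Since all of this has already been carried out at the level of the full operator $\mathcal{L}$ in $\S\ref{sec:elliptictheory}$, and a single spherical harmonic is the simplest possible case (no genuine use of $\slashed{\Delta}_{\s^2}$, only the scalar $\ell(\ell+1)$), I would present the proof as: "This follows by repeating the arguments of the proof of Proposition \ref{prop:mainesttimeinv}, Proposition \ref{prop:keyhopropelliptic} and Corollary \ref{cor:mainellipticest}, specialized to a fixed spherical harmonic mode, with $\eqref{eq:hardy}$ in place of its higher-dimensional use; we omit the details", possibly spelling out the $N=0$ absorption computation and the commutator identity above, which are the only steps worth recording.
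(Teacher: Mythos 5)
Your reduction to the elliptic estimates of \S\ref{sec:elliptictheory} is circular. Proposition \ref{prop:keyestw} sits in Appendix \ref{app:weightfunction} precisely because it is the tool (via the Lax--Milgram argument of Corollary \ref{eq:exuniqode}) used to \emph{construct} the stationary weights $w_\ell$ and prove Lemma \ref{lm:propertiesw}. But the proof of Proposition \ref{prop:mainesttimeinv} --- and hence of Proposition \ref{prop:keyhopropelliptic} and Corollary \ref{cor:mainellipticest}, which you invoke --- begins, for the low modes $\ell\leq \ell_0-1$, by renormalizing $\check u_\ell=w_\ell^{-1}u_\ell$ and using the ODE \eqref{eq:odew} satisfied by $w_\ell$; the high-mode part of that proof avoids $w_\ell$ only because $\ell_0$ is taken large, so it cannot cover the fixed low $\ell$ (in particular $\ell=0$ with $\alpha<0$) that the present proposition must handle. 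The logical order in the paper is Proposition \ref{prop:keyestw} $\Rightarrow$ Corollary \ref{eq:exuniqode} $\Rightarrow$ Lemma \ref{lm:propertiesw} $\Rightarrow$ Proposition \ref{prop:mainesttimeinv}, so the latter cannot be quoted here. (A secondary point: even granting the citation, Proposition \ref{prop:mainesttimeinv} as stated only gives the range $-\beta_0<q<\beta_0$; the mode-restricted improvement to $\beta_\ell$ is again obtained in the paper through the $w_\ell$-renormalization, so it is not available either.)

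The computational core you sketch at the end --- a direct multiplier estimate with the Hardy inequality \eqref{eq:hardy}, the critical constant $\tfrac14\beta_\ell^2$, and an induction in $N$ via commutation with $r\frac{d}{dr}$ --- is in fact the paper's (non-circular) proof, and it never needs $w_\ell$ or any renormalization. Concretely: for $N=0$ multiply $\mathcal{L}_{w_\ell}u$ by $-u$ on $[2M,R)$, where the boundary terms vanish ($D(2M)=0$ and $u\equiv 0$ near $r=R$), and use assumption \ref{assm:A} together with \eqref{eq:hardy} to absorb the potential term; then multiply by $-r^pu$ on $[R,\infty)$, integrate by parts, and absorb the zeroth-order bulk term using \eqref{eq:hardy} with weight $r^p$ ($p\neq -1$), which requires exactly $\alpha+\ell(\ell+1)-\tfrac12 p(p+1)>-\tfrac14(p+1)^2$, i.e.\ $|p|<\beta_\ell$, taking $R$ large to handle the $O(r^{-1})$ errors; the case $N\geq 1$ follows by integrating $-r^p\,(r\frac{d}{dr})^k u\cdot (r\frac{d}{dr})^k(\mathcal{L}_{w_\ell}u)$ and absorbing the commutator terms as you describe. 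If you present the argument in this self-contained ODE form, the proof is correct; as proposed, the appeal to \S\ref{sec:elliptictheory} is a genuine gap.
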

\begin{proof}
First, consider the case $N=0$.  We first multiply $\mathcal{L}_{w_{\ell}}u$ with $-u$ and integrate on $[2M,R)$ to obtain:
\begin{equation*}
\begin{split}
\int_{2M}^{R}& Dr^2\left(\frac{du}{dr}\right)^2+\left[r^2V_{\alpha}+\ell(\ell+1)\right]u^2\,dr\\
= &\:\int_{2M}^{R}u \mathcal{L}_{w_{\ell}}u\,dr.
\end{split}
\end{equation*}
We apply Young's inequality to estimate:
\begin{equation*}
|u \mathcal{L}_{w_{\ell}}u|\leq \epsilon u^2+\frac{1}{4\epsilon} \left( \mathcal{L}_{w_{\ell}}u\right)^2.
\end{equation*}
Furthermore, by \eqref{eq:hardy}
\begin{equation*}
\int_{2M}^{R} u^2\,dr\leq 4\int_{2M}^{R} D^2r^{2}\left(\frac{du}{dr}\right)^2\,dr,
\end{equation*}
so, using that $\inf r^2V_{\alpha}>-\frac{1}{4}$, we obtain:
\begin{equation}
\label{eq:wenest1}
\begin{split}
\int_{2M}^{R}& Dr^2\left(\frac{du}{dr}\right)^2+u^2\,dr\leq C(\alpha)\int_{2M}^{R}(\mathcal{L}_{w_{\ell}}u)^2\,dr.
\end{split}
\end{equation}
Then we multiply by $-r^pu$ and integrate by parts on $[R,\infty)$ (using that $u$ vanishes at $r=R$) to obtain
\begin{equation}
\label{eq:wenest2}
\begin{split}
\int_{R}^{\infty}&  r^{2+p}\left(\frac{du}{dr}\right)^2+r^p\left[r^2V_{\alpha}+\ell(\ell+1)-\frac{1}{2}p(p+1)+O(r^{-1})\right]u^2\,dr\\
= &\:\int_{R}^{\infty}r^pu \mathcal{L}_{w_{\ell}}u\,dr.
\end{split}
\end{equation}
Furthermore, by the fact that $p\neq -1$, we can apply \eqref{eq:hardy} to obtain:
\begin{equation*}
\int_{R}^{\infty} r^p u^2\,dr\leq 4(p+1)^{-2}\int_R^{\infty} r^{p+2}\left(\frac{du}{dr}\right)^2\,dr.
\end{equation*} 
Since $r^2V_{\alpha}=\alpha + O_{\infty}(r^{-1})$, we therefore need additionally that:
\begin{equation*}
\alpha+\ell(\ell+1)-\frac{1}{2}p(p+1)>-\frac{1}{4}(p+1)^2,
\end{equation*}
or $-\beta_{\ell}<p<\beta_{\ell}$, and $R\gg M$ suitably large to absorb the $u^2$ term on the left-hand side of \eqref{eq:wenest2},  to obtain
\begin{equation}
\label{eq:wenest2b}
\begin{split}
\int_{R}^{\infty}& r^{2+p}\left(\frac{du}{dr}\right)^2+r^pu^2\,dr\leq C(\alpha,p)\int_{R}^{\infty}r^p(\mathcal{L}_{w_{\ell}}u)^2\,dr.
\end{split}
\end{equation}
We combine \eqref{eq:wenest1} and \eqref{eq:wenest2b} to obtain \eqref{eq:wenest} with $N=0$. The case $N>0$ then follows by induction, after integrating $-r^p(r\frac{d}{dr})^k(u) (r\frac{d}{dr})^k (\mathcal{L}_{w_{\ell}}u)$.
\end{proof}
\begin{corollary}
\label{eq:exuniqode}
Let $N\in \N_0$ and $\delta>0$ suitably small. Suppose $f\in H_{w_{\ell},p}^{N}$ with $p=\min\{\beta_{\ell},2-\beta_{\ell}\}-\delta$ and $p\neq -1$. Then there exists a unique $u\in H_{w_{\ell},p;1}^{N}$ such that $\mathcal{L}_{w_{\ell}}u=f$. In particular, if $f\in \cap_{k\in \N_0}H_{w_{\ell},p}^{k}$, then:
\begin{align*}
u\in&\: C^{\infty}([0,\infty)),\\
u(r)=&\:O_{\infty}(r^{-\frac{1}{2}-\frac{1}{2}\beta_{\ell}+\delta})\quad \textnormal{if}\quad  \alpha+\ell(\ell+1)\leq 0,\\
u(r)=&\: O_{\infty}(r^{-\frac{3}{2}+\frac{1}{2}\beta_{\ell}+\delta})\quad \textnormal{if}\quad \alpha+\ell(\ell+1)\geq 0.
\end{align*}
\end{corollary}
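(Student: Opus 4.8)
The plan is to deduce Corollary \ref{eq:exuniqode} from the a priori estimate of Proposition \ref{prop:keyestw} by a standard functional-analytic argument, exactly mirroring the structure of the time-inversion theory in \S \ref{sec:elliptictheory}. First I would fix $N\in\N_0$, the weight $p=\min\{\beta_\ell,2-\beta_\ell\}-\delta$ (checking $p\neq-1$ is an easily-arranged genericity condition), and recall that $H_{w_\ell,p;1}^N\hookrightarrow H_{w_\ell,p}^N$ continuously, with $C_0^\infty$ functions vanishing at $R$ dense in both. Uniqueness is immediate: if $\mathcal{L}_{w_\ell}u=0$ with $u\in H_{w_\ell,p;1}^N$, approximate $u$ by test functions and apply Proposition \ref{prop:keyestw} to conclude $\|u\|_{w_\ell,N,p;1}=0$. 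For existence, the cleanest route is to introduce the formal adjoint $\mathcal{L}_{w_\ell}^{\dagger}$ of $\mathcal{L}_{w_\ell}$ with respect to the pairing $\int_{2M}^\infty(\cdot)(\cdot)\,dr$ (note $\mathcal{L}_{w_\ell}$ is formally self-adjoint up to the weight: $\int r^{-p}r^p(\cdots)$), show via integration by parts that $\mathcal{L}_{w_\ell}^{\dagger}$ satisfies the \emph{same} type of a priori estimate as in Proposition \ref{prop:keyestw} with $p$ replaced by $-p$ (which lies in the admissible range precisely because the interval of allowed weights is symmetric about $0$ when $\beta_\ell\leq1$, and because the condition $-\beta_\ell<p<2-\beta_\ell$ for $\beta_\ell>1$ is designed so that its ``mirror'' also works), and then argue $\ker\mathcal{L}_{w_\ell}^{\dagger}=\{0\}$, so that $\operatorname{Ran}\mathcal{L}_{w_\ell}$ is dense; combined with the closed-range property furnished by the a priori estimate, this gives surjectivity onto $H_{w_\ell,p}^0$, and then elliptic regularity (iterating Proposition \ref{prop:keyestw} with higher $N$) promotes $u$ to $H_{w_\ell,p;1}^N$. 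This is verbatim the argument given for $\mathcal{L}_{q,0}$ in \S \ref{sec:elliptictheory}, so I would simply invoke that template.

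Next I would extract the pointwise asymptotics from the weighted Sobolev membership. Given $f\in\bigcap_{k}H_{w_\ell,p}^k$, the solution satisfies $u\in\bigcap_k H_{w_\ell,p;1}^k$. Smoothness of $u$ on $[2M,\infty)$ follows from standard ODE regularity theory (the coefficients of $\mathcal{L}_{w_\ell}$ are smooth on $[2M,\infty)$ and $Dr^2>0$ for $r>2M$; regularity at $r=2M$ is the usual regular-singular-point analysis). For the decay as $r\to\infty$, the point is that $u\in H_{w_\ell,p;1}^k$ for all $k$ means $\int_{2M}^\infty r^p((r\tfrac{d}{dr})^k u)^2\,dr<\infty$ for every $k$, and by the one-dimensional Sobolev/Hardy interplay (integrate $\tfrac{d}{dr}(r^{p-1}u^2)$ and use the finiteness of the $k=0,1$ norms) one gets $r^{(p-1)/2}(r\tfrac{d}{dr})^j u\to0$, hence $u=O_\infty(r^{-(p-1)/2})=O_\infty(r^{-1/2+\delta/2-p/2})$. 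Substituting $p=\beta_\ell-\delta$ in the case $\alpha+\ell(\ell+1)\leq0$ (so $\beta_\ell\le1$, the admissible range being $-\beta_\ell<p<\beta_\ell$) yields $u=O_\infty(r^{-\frac12-\frac12\beta_\ell+\delta})$; substituting $p=2-\beta_\ell-\delta$ in the case $\alpha+\ell(\ell+1)\geq0$ yields $u=O_\infty(r^{-\frac32+\frac12\beta_\ell+\delta})$, matching the two claimed rates. One should double-check that these conclusions are consistent with, and in fact refine to, the precise expansion \eqref{eq:asymptw} for the homogeneous problem, but for the corollary only the $O_\infty$ bounds are needed.

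I would organize the writeup as: (1) state and verify the adjoint estimate, (2) run the Lax–Milgram/closed-range argument for existence and uniqueness in $H_{w_\ell,p;1}^0$, (3) bootstrap regularity to $H_{w_\ell,p;1}^N$ using Proposition \ref{prop:keyestw} at level $N$, (4) for the $\bigcap_k$ case, deduce $C^\infty$ and the asymptotics via one-dimensional Sobolev embedding on weighted spaces. The main obstacle I anticipate is purely bookkeeping rather than conceptual: one must verify that the weight $-p$ (for the adjoint) stays within the range where Proposition \ref{prop:keyestw}'s proof goes through — this is where the asymmetric range $-\beta_\ell<p<2-\beta_\ell$ in the $\beta_\ell>1$ regime requires care, since the adjoint weight $-p$ must also be $<2-\beta_\ell$ and $>-\beta_\ell$; checking that the chosen $p=2-\beta_\ell-\delta$ has $-p=\beta_\ell-2+\delta$ inside $(-\beta_\ell,2-\beta_\ell)$ amounts to $\beta_\ell>1$, which holds, so it works, but the argument must be stated carefully. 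A secondary subtlety is the boundary behaviour at $r=2M$: the spaces are built from functions vanishing at $R$ but not necessarily at $2M$, and one must make sure no uncontrolled boundary term at $r=2M$ appears in the integration by parts defining the adjoint — here $D(2M)=0$ kills the dangerous term, exactly as the factor $D$ does in the proof of Proposition \ref{prop:keyestw}, so this is benign. Apart from these two points the corollary is a routine consequence of the preceding proposition.
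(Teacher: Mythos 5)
Your uniqueness step, the regularity bootstrap, and the way you extract the pointwise decay are essentially the paper's argument: the paper also gets the asymptotics by the fundamental theorem of calculus applied to the weighted square of $u$ (and of $(r\frac{d}{dr})^k u$) against the finiteness of the $\|\cdot\|_{w_\ell,N,p;1}$-norms, and cites standard regularity for smoothness. Where you genuinely diverge is existence: the paper does not run the adjoint/closed-range template of \S\ref{sec:elliptictheory}, but applies Lax--Milgram directly to the explicit symmetric bilinear form $B(u,v)$ obtained from the multipliers in the proof of Proposition \ref{prop:keyestw} (with the $-\tfrac12 p(p+1)$ correction placed in the potential), whose coercivity on $H^0_{w_\ell,p;1}$ is exactly what that proof establishes; since $\mathcal{L}_{w_\ell}$ is formally self-adjoint, no mirrored-weight estimate is ever needed. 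Your duality route is workable in principle, but it is precisely where your verification breaks down.

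The concrete problem is your claim that the adjoint weight $-p$ lies in the admissible range and that this ``amounts to $\beta_\ell>1$''. With $p=2-\beta_\ell-\delta$, membership of $-p=\beta_\ell-2+\delta$ in $(-\beta_\ell,\,2-\beta_\ell)$ requires, besides $\beta_\ell>1-\tfrac{\delta}{2}$, also $\beta_\ell-2+\delta<2-\beta_\ell$, i.e.\ $\beta_\ell<2-\tfrac{\delta}{2}$; this fails whenever $\beta_\ell\geq 2$, e.g.\ for every $\ell\geq1$ and already for $\alpha\geq\tfrac34$ when $\ell=0$. So you cannot invoke Proposition \ref{prop:keyestw} as stated for the adjoint estimate. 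The repair is available but must be said: the coercivity in the proof of Proposition \ref{prop:keyestw} only requires $\alpha+\ell(\ell+1)-\tfrac12 p(p+1)>-\tfrac14(p+1)^2$, which is equivalent to $p^2<\beta_\ell^2$, together with $p\neq-1$; the extra ceiling $p<2-\beta_\ell$ in the definition of the spaces is there so that the inhomogeneity used in the proof of Lemma \ref{lm:propertiesw}, which decays like $r^{-\frac32+\frac12\beta_\ell}$, belongs to $H^N_{w_\ell,p}$ --- it is not needed for the estimate. Since $|{-p}|=|p|<\beta_\ell$ (and $-p\neq-1$) for the choices in the corollary, the mirrored estimate does hold, but you must prove this extension rather than cite the proposition's stated range, or simply adopt the paper's coercive-form argument and avoid the adjoint altogether. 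Finally, a small slip in the decay step: one should integrate $\frac{d}{dr}(r^{p+1}u^2)$, giving $u=O_\infty(r^{-\frac{p+1}{2}})$, not $r^{-(p-1)/2}$; your final substituted rates are nevertheless the correct ones.
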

\begin{proof}
We apply Lax--Milgram with the bilinear map $B: H_{w_{\ell},p;1}^{0}\times H_{w_{\ell},p;1}^{0} \to \R$, defined by:
\begin{equation*}
\begin{split}
B(u,v)=&\:\int_{2M}^{R} Dr^2  \frac{du}{dr}(r)  \frac{dv}{dr}(r) + [r^2V_{\alpha}+\ell(\ell+1)]u(r) v(r)\,dr\\
&+\int_{R}^{\infty} Dr^{2+p} \frac{du}{dr}(r)  \frac{dv}{dr}(r) +r^{2+p}\left[r^2V_{\alpha}+\ell(\ell+1)-\frac{1}{2}p(p+1)\right]u(r) v(r)\,dr,
\end{split}
\end{equation*}
to obtain existence of solutions in $H_{w_{\ell},p}^{0} $, where the necessary coercivity properties of $B$ and the higher regularity properties of the solution follow from Proposition \ref{prop:keyestw}, together with standard interior regularity estimates.

Smoothness of $u$ follows from standard Sobolev estimates and the decay in $r$ follows by applying the fundamental theorem of calculus:
\begin{equation*}
\begin{split}
(r^{\frac{p}{2}+\frac{1}{2}}u)^2(r)=&\:(r^{\frac{p}{2}+\frac{1}{2}}u)^2(r_0)+\int_{r_0}^r(\frac{p}{2}+\frac{1}{2}) r^{p} u^2+2r^{p+1} u u'\,dr\\
\lesssim (r^{\frac{p}{2}+\frac{1}{2}}u)^2(r_0)+ \int_{r_0}^r  r^{p} u^2+r^{p+2}(u')^2\,dr,
\end{split}
\end{equation*}
together with Proposition \ref{prop:keyestw}, and analogous estimates for $u$ replaced by $(r\frac{d}{dr})^ku$, with $p=\beta_{\ell}-\delta$ if $\beta_{\ell}\leq 1$ and $p=2-\beta-\delta$ if $\beta_{\ell}>1$.
\end{proof}

\begin{proof}[Proof of Lemma \ref{lm:propertiesw}]
Define $v_{\ell}=r^{-\frac{1}{2}+\frac{1}{2}\beta_{\ell}}+b_0r^{-\frac{3}{2}+\frac{1}{2}\beta_{\ell}}$ and denote
\begin{equation*}
f_{\ell}(r)=(Dr^2 v_{\ell}')'(r)-(r^2V_{\alpha}+\ell(\ell+1))v_{\ell}(r).
\end{equation*}
Choose $b_0$ such that $f(R)=0$. Note that $f$ is smooth on $[0,\infty)$, and moreover $f_{\ell}=O_{\infty}(r^{-\frac{3}{2}+\frac{1}{2}\beta_{\ell}})$. Hence, $f\in H^N_{w_{\ell},p}$ for all $N\in \N$ with $p=\min\{\beta_{\ell},2-\beta_{\ell}\}-\delta$. We apply Corollary \ref{eq:exuniqode} to conclude that there exists a unique function $u\in C^{\infty}([0,\infty))$ such that $\mathcal{L}_{w_{\ell}}u=f_{\ell}$ and the properties listed in Corollary \ref{eq:exuniqode} hold.

Finally, we define $w_{\ell}=u-v_{\ell}$ to conclude that $\mathcal{L}_{w_{\ell}}w_{\ell}=0$ and $w_{\ell}$ satisfies the asymptotic properties stated in Lemma \ref{lm:propertiesw}. Note that strict positivity of $w_{\ell}$ can be shown to hold via a contradiction argument: suppose there exists $r_0\geq 2M$ such that $w_0(r_0)=0$. Then we can apply the estimates from the proof of Proposition \ref{prop:keyestw} with $R$ replaced by $r_0$ to obtain $L^2$-bounds for $w_{\ell}$ in $[r_0,\infty)$ with $p=0$, from which it follows that $w_{\ell}\equiv 0$ in $[r_0,\infty)$, which is a contradiction with the asymptotic properties of $w_{\ell}(r)$ as $r\to \infty$.
\end{proof}

\begin{remark}
From Proposition \ref{eq:exuniqode}, it follows that Assumption \ref{assm:A} implies  Assumption \ref{assm:B}.
\end{remark}

\small{\bibliographystyle{alpha} 

\begin{thebibliography}{SRTdC20}

\bibitem[AAG18a]{paper2}
Y.~Angelopoulos, S.~Aretakis, and D.~Gajic.
\newblock Late-time asymptotics for the wave equation on spherically symmetric,
  stationary backgrounds.
\newblock {\em Adv. in Math.}, 323:529--621, 2018.

\bibitem[AAG18b]{paper1}
Y.~Angelopoulos, S.~Aretakis, and D.~Gajic.
\newblock A vector field approach to almost-sharp decay for the wave equation
  on spherically symmetric, stationary spacetimes.
\newblock {\em Annals of PDE}, 4(2):15, 2018.

\bibitem[AAG20]{paper4}
Y.~Angelopoulos, S.~Aretakis, and D.~Gajic.
\newblock Late-time asymptotics for the wave equation on extremal
  {R}eissner--{N}ordstr\"{o}m backgrounds.
\newblock {\em Adv. in Math.}, 375:107363, 2020.

\bibitem[AAG21]{aagprice}
Y.~Angelopoulos, S.~Aretakis, and D.~Gajic.
\newblock {Price's law and precise asymptotics for subextremal
  Reissner--Nordstr\"om black holes}.
\newblock {\em arXiv:2102.11888}, 2021.

\bibitem[AAG23]{aagkerr}
Y.~Angelopoulos, S.~Aretakis, and D.~Gajic.
\newblock Late-time tails and mode coupling of linear waves on {K}err
  spacetimes.
\newblock {\em Advances in Mathematics}, 417:108939, 2023.

\bibitem[Are12]{aretakis3}
S.~Aretakis.
\newblock Decay of axisymmetric solutions of the wave equation on extreme
  {K}err backgrounds.
\newblock {\em J. Funct. Analysis}, 263:2770--2831, 2012.

\bibitem[Are15]{aretakis4}
S.~Aretakis.
\newblock Horizon instability of extremal black holes.
\newblock {\em Adv. Theor. Math. Phys.}, 19:507--530, 2015.

\bibitem[BF13]{bizon2012}
P.~Bizon and H.~Friedrich.
\newblock A remark about the wave equations on the extreme
  {R}eissner--{N}ordstr\"om black hole exterior.
\newblock {\em Class. Quantum Grav.}, 30:065001, 2013.

\bibitem[BPST03]{bpst03}
N.~Burq, F.~Planchon, J.G. Stalker, and A.~S. {Tahvildar-Zadeh}.
\newblock {Strichartz estimates for the wave and Schr{\"o}dinger equations with
  the inverse-square potential}.
\newblock {\em Journal of Functional Analysis}, 203(2):519--549, 2003.

\bibitem[BPST04]{bpst04}
N.~Burq, F.~Planchon, J.G. Stalker, and A.S. {Tahvildar-Zadeh}.
\newblock {Strichartz estimates for the wave and Schr{\"o}dinger equations with
  potentials of critical decay}.
\newblock {\em Indiana Univ. Math. J.}, pages 1665--1680, 2004.

\bibitem[BS09]{blu1}
P.~Blue and A.~Soffer.
\newblock Phase space analysis on some black hole manifolds.
\newblock {\em Journal of Functional Analysis}, 256:1--90, 2009.

\bibitem[CGZ16]{zimmerman1}
M.~Casals, S.~E. Gralla, and P.~Zimmerman.
\newblock Horizon instability of extremal {K}err black holes: Nonaxisymmetric
  modes and enhanced growth rate.
\newblock {\em Phys. Rev. D}, 94:064003, 2016.

\bibitem[CH15]{ch15}
O.~Costin and M.~Huang.
\newblock Decay estimates for one-dimensional wave equations with inverse power
  potentials.
\newblock {\em Transactions of the American Mathematical Society},
  367(5):3705--3732, 2015.

\bibitem[CSST08]{csst08}
O.~Costin, W.~Schlag, W.~Staubach, and S.~Tanveer.
\newblock Semiclassical analysis of low and zero energy scattering for
  one-dimensional schr{\"o}dinger operators with inverse square potentials.
\newblock {\em Journal of Functional Analysis}, 255(9):2321--2362, 2008.

\bibitem[CT84]{couch}
W.~Couch and R.~Torrence.
\newblock Conformal invariance under spatial inversion of extreme
  {R}eissner-{N}ordstr\"{o}m black holes.
\newblock {\em Gen. Rel. Grav.}, 16:789--792, 1984.

\bibitem[DHR19a]{dhr-teukolsky-kerr}
M.~Dafermos, G.~Holzegel, and I.~Rodnianski.
\newblock {Boundedness and Decay for the Teukolsky Equation on Kerr Spacetimes
  I: The Case $| a|\ll M $}.
\newblock {\em Annals of PDE}, 5(2):1--118, 2019.

\bibitem[DHR19b]{Dafermos2016}
M.~Dafermos, G.~Holzegel, and I.~Rodnianski.
\newblock {The linear stability of the Schwarzschild solution to gravitational
  perturbations}.
\newblock {\em Acta Math.}, 222:1--214, 2019.

\bibitem[DHRT21]{DHRT21}
M.~Dafermos, G.~Holzegel, I.~Rodnianski, and M.~Taylor.
\newblock {The non-linear stability of the Schwarzschild family of black
  holes}.
\newblock {\em arXiv:2104.08222}, 2021.

\bibitem[DK16]{dk16}
R.~Donninger and J.M. Krieger.
\newblock {\em {A vector field method on the distorted Fourier side and decay
  for wave equations with potentials}}, volume 241.
\newblock Mem. Amer. Math. Soc., 2016.

\bibitem[DR09]{redshift}
M.~Dafermos and I.~Rodnianski.
\newblock The redshift effect and radiation decay on black hole spacetimes.
\newblock {\em Comm. Pure Appl. Math.}, 62(7):859--919, 2009.

\bibitem[DR10]{newmethod}
M.~Dafermos and I.~Rodnianski.
\newblock A new physical-space approach to decay for the wave equation with
  applications to black hole spacetimes.
\newblock {\em XVIth International Congress on Mathematical Physics}, pages
  421--432, 2010.

\bibitem[DR13]{lecturesMD}
M.~Dafermos and I.~Rodnianski.
\newblock Lectures on black holes and linear waves.
\newblock {\em in Evolution equations, {C}lay {M}athematics {P}roceedings,
  {V}ol. 17, Amer. Math. Soc., Providence, RI,}, pages 97--205, 2013.

\bibitem[DRSR16]{part3}
M.~Dafermos, I.~Rodnianski, and Y.~Shlapentokh-Rothman.
\newblock Decay for solutions of the wave equation on {K}err exterior
  spacetimes {III: The full subextremal case} $|a| < m$.
\newblock {\em Annals of Math.}, 183:787--913, 2016.

\bibitem[DS10]{ds10}
R.~Donninger and W.~Schlag.
\newblock {Decay estimates for the one-dimensional wave equation with an
  inverse power potential}.
\newblock {\em Int. Math. Res. Not.}, 2010(22):4276--4300, 2010.

\bibitem[DSS11]{dssprice}
R.~Donninger, W.~Schlag, and A.~Soffer.
\newblock A proof of {P}rice's law on {S}chwarzschild black hole manifolds for
  all angular momenta.
\newblock {\em Adv. Math.}, 226:484--540, 2011.

\bibitem[DSS12]{other1}
R.~Donninger, W.~Schlag, and A.~Soffer.
\newblock On pointwise decay of linear waves on a {S}chwarzschild black hole
  background.
\newblock {\em Comm. Math. Phys.}, 309:51--86, 2012.

\bibitem[Gaj23]{gajEK}
D.~Gajic.
\newblock {Azimuthal instabilities on extremal Kerr black holes}.
\newblock {\em arXiv:2302.06636}, 2023.

\bibitem[GK22]{gk22}
D.~Gajic and L.~M.~A. Kehrberger.
\newblock {On the Relation Between Asymptotic Charges, the Failure of Peeling
  and Late-time Tails}.
\newblock {\em Classical and Quantum Gravity}, 39(19), 2022.

\bibitem[GVdMa]{gvdm22}
D.~Gajic and M.~Van~de Moortel.
\newblock {Late-time tails for wave equations with scale-critical potentials
  and the near-horizon geometry of null infinity}.
\newblock {\em {preprint}}.

\bibitem[GVdMb]{gvdm22b}
D.~Gajic and M.~Van~de Moortel.
\newblock {Sharp decay estimates and late-time tails for charged scalar fields
  on black hole backgrounds (working title)}.
\newblock {\em {in preparation}}.

\bibitem[Hin22]{hintzprice}
P.~Hintz.
\newblock {A sharp version of Price’s law for wave decay on asymptotically
  flat spacetimes}.
\newblock {\em Comm. Math. Phys.}, 389(1):491--542, 2022.

\bibitem[HP98]{hp98b}
S.~Hod and T.~Piran.
\newblock {Late-time evolution of charged gravitational collapse and decay of
  charged scalar hair. II}.
\newblock {\em Physical Review D}, 58(2):024018, 1998.

\bibitem[Keh21]{keh21b}
L.~M.~A. Kehrberger.
\newblock {The Case Against Smooth Null Infinity II: A Logarithmically Modified
  Price's Law}.
\newblock {\em arXiv:2105.08084}, 2021.

\bibitem[Keh22]{keh21c}
L.~M.~A. Kehrberger.
\newblock {The Case Against Smooth Null Infinity III: Early-Time Asymptotics
  for Higher $\ell$-Modes of Linear Waves on a Schwarzschild Background}.
\newblock {\em Annals of PDE}, 8(2):1--117, 2022.

\bibitem[KS20]{klainerman17}
S.~Klainerman and J.~Szeftel.
\newblock {\em Global Nonlinear Stability of Schwarzschild Spacetime under
  Polarized Perturbations}.
\newblock Annals of Mathematical Studies. Princeton University Press, 2020.

\bibitem[KVdM21]{kvdm21}
C.~Kehle and M.~Van~de Moortel.
\newblock {Strong Cosmic Censorship in the presence of matter: the decisive
  effect of horizon oscillations on the black hole interior geometry}.
\newblock {\em arXiv:2105.04604}, 2021.

\bibitem[Lea86]{leaver}
E.~W. Leaver.
\newblock {Spectral decomposition of the perturbation response of the
  Schwarzschild geometry}.
\newblock {\em Phys. Rev. D}, 34:384--408, 1986.

\bibitem[LMRT13]{hm2012}
J.~Lucietti, K.~Murata, H.~S. Reall, and N.~Tanahashi.
\newblock On the horizon instability of an extreme {R}eissner--{N}ordstr\"om
  black hole.
\newblock {\em JHEP}, 1303:035, 2013.

\bibitem[LO19a]{LukOh2016a}
J.~Luk and S.-J. Oh.
\newblock {Strong cosmic censorship in spherical symmetry for two-ended
  asymptotically flat data I: Interior of the black hole region}.
\newblock {\em Annals of Math.}, 190(1):1--111, 2019.

\bibitem[LO19b]{LukOh2016b}
J.~Luk and S.-J. Oh.
\newblock {Strong cosmic censorship in spherical symmetry for two-ended
  asymptotically flat initial data II: the exterior of the black hole region}.
\newblock {\em Annals of PDE}, 5(1):1--194, 2019.

\bibitem[LS16]{LukSbierski2016}
J.~Luk and J.~Sbierski.
\newblock Instability results for the wave equation in the interior of {K}err
  black holes.
\newblock {\em Journal of Functional Analysis}, 271(7):1948 -- 1995, 2016.

\bibitem[Ma20]{ma20}
S.~Ma.
\newblock {Uniform energy bound and Morawetz estimate for extreme components of
  spin fields in the exterior of a slowly rotating Kerr black hole II:
  linearized gravity}.
\newblock {\em Communications in Mathematical Physics}, 377(3):2489--2551,
  2020.

\bibitem[Mos16]{moschidis1}
G.~Moschidis.
\newblock The $r^{p}$-weighted energy method of {D}afermos and {R}odnianski in
  general asymptotically flat spacetimes and applications.
\newblock {\em Annals of PDE}, 2:6, 2016.

\bibitem[MZ21a]{mazhang21}
S.~Ma and L.~Zhang.
\newblock {Price's law for spin fields on a Schwarzschild background}.
\newblock {\em arXiv:2104.13809}, 2021.

\bibitem[MZ21b]{mazhang21b}
S.~Ma and L.~Zhang.
\newblock {Sharp decay for Teukolsky equation in Kerr spacetimes}.
\newblock {\em arXiv:2111.04489}, 2021.

\bibitem[NP65]{NP1}
E.T. Newman and R.~Penrose.
\newblock 10 exact gravitationally conserved quantities.
\newblock {\em Phys. Rev. Lett.}, 15:231, 1965.

\bibitem[NP68]{np2}
E.~T. Newman and R.~Penrose.
\newblock New conservation laws for zero rest mass fields in asymptotically
  flat space-time.
\newblock {\em Proc. R. Soc. A}, 305:175204, 1968.

\bibitem[Olv74]{olv74}
F.~W.~J. Olver.
\newblock {\em Introduction to Asymptotics and Special Functions}.
\newblock Academic Press, 1974.

\bibitem[Pri72]{Price1972}
R.~Price.
\newblock Non-spherical perturbations of relativistic gravitational collapse.
  {I}. {S}calar and gravitational perturbations.
\newblock {\em Phys. Rev. D}, 3:2419--2438, 1972.

\bibitem[PST03]{pst03}
F.~Planchon, J.G. Stalker, and A.S. {Tahvildar-Zadeh}.
\newblock {$L^p$ Estimates for the wave equation with the inverse-square
  potential}.
\newblock {\em Discrete \& Continuous Dynamical Systems}, 9(2):427, 2003.

\bibitem[RS78]{reedsimoniv}
M.~Reed and B.~Simon.
\newblock {\em {Methods of Modern Mathematical Physics, Vol IV: Analysis of
  Operators}}.
\newblock Academic Press Inc., 1978.

\bibitem[Sbi15]{janpaper}
J.~Sbierski.
\newblock Characterisation of the energy of {G}aussian beams on {L}orentzian
  manifolds with applications to black hole spacetimes.
\newblock {\em {Analysis \& PDE}}, 8(6):1379--1420, 2015.

\bibitem[Sbi22]{sbierski22}
J.~Sbierski.
\newblock {Instability of the Kerr Cauchy horizon under linearised
  gravitational perturbations}.
\newblock {\em arXiv:2201.12295}, 2022.

\bibitem[Sch07]{sch07}
W.~Schlag.
\newblock {Dispersive estimates for Schr{\"o}dinger operators: a survey}.
\newblock {\em Mathematical aspects of nonlinear dispersive equations},
  163:255--285, 2007.

\bibitem[Sch13]{volker1}
V.~Schlue.
\newblock Decay of linear waves on higher-dimensional {S}chwarzschild black
  holes.
\newblock {\em {Analysis \& PDE}}, 6(3):515--600, 2013.

\bibitem[Sch21]{sch21}
W.~Schlag.
\newblock On pointwise decay of waves.
\newblock {\em Journal of Math. Phys.}, 62(6):061509, 2021.

\bibitem[Sog08]{sogge}
C.~D. Sogge.
\newblock {\em Lectures on non-linear wave equations}.
\newblock International Press Boston, 2008.

\bibitem[SRTdC20]{shlcosta20}
Y.~Shlapentokh-Rothman and R.~Teixeira~da Costa.
\newblock {Boundedness and decay for the Teukolsky equation on Kerr in the full
  subextremal range $| a|< M $: frequency space analysis}.
\newblock {\em arXiv:2007.07211}, 2020.

\bibitem[SSS10a]{sss10a}
W.~Schlag, A.~Soffer, and W.~Staubach.
\newblock {Decay for the wave and Schr{\"o}dinger evolutions on manifolds with
  conical ends, Part I}.
\newblock {\em Transactions of the American Mathematical Society},
  362(1):19--52, 2010.

\bibitem[SSS10b]{sss10b}
W.~Schlag, A.~Soffer, and W.~Staubach.
\newblock {Decay for the wave and Schr{\"o}dinger evolutions on manifolds with
  conical ends, Part II}.
\newblock {\em Transactions of the American Mathematical Society},
  362(1):289--318, 2010.

\bibitem[Tan63]{tan63}
F.~R. Tangherlini.
\newblock {Schwarzschild field in $n$ dimensions and the dimensionality of
  space problem}.
\newblock {\em Il Nuovo Cimento (1955-1965)}, 27(3):636--651, 1963.

\bibitem[Tat13]{tataru3}
D.~Tataru.
\newblock Local decay of waves on asymptotically flat stationary space-times.
\newblock {\em American Journal of Mathematics}, 135:361--401, 2013.

\bibitem[TdC20]{costa20}
R.~Teixeira~da Costa.
\newblock {Mode stability for the Teukolsky equation on extremal and
  subextremal Kerr spacetimes}.
\newblock {\em Communications in Mathematical Physics}, 378(1):705--781, 2020.

\bibitem[VdM18]{vdm18}
M.~Van~de Moortel.
\newblock {Decay of weakly charged solutions for the spherically symmetric
  Maxwell-Charged-Scalar-Field equations on a Reissner-Nordstr\"om exterior
  space-time}.
\newblock {\em arXiv:1804.04297}, 2018.

\bibitem[VdM19]{vdm19}
M.~Van~de Moortel.
\newblock {The breakdown of weak null singularities inside black holes}.
\newblock {\em arXiv:1912.10890}, 2019.

\end{thebibliography}

\end{document}